\def\bR{\mathbb{R}}
\def\bN{\mathbb{N}}
\def\NN{\mathbb{N}}
\def\bZ{\mathbb{Z}}
\def\cC{\mathcal{C}}
\def\cA{\mathcal{A}}
\def\cM{\mathcal{M}}
\def\cV{\mathcal{V}}
\def\cF{\mathcal{F}}
\def\cG{\mathcal{G}}
\def\cL{\mathcal{L}}
\def\cJ{\mathcal{J}}
\def\cN{\mathcal{N}}
\def\cE{\mathcal{E}}
\def\cK{\mathcal{K}}
\def\cH{\mathcal{H}}
\def\eps{\varepsilon}
\def\ph{\varphi}
\def\wt{\widetilde}
\def\indic{\hbox{\raise-2pt \hbox{\indbf 1}}}
\def\*{{\hfill\break\null\hfill\break}}
\def\tende#1{\,\vtop{\ialign{##\crcr\rightarrowfill\crcr
             \noalign{\kern-1pt\nointerlineskip}
             \hskip3.pt${\scriptstyle #1}$\hskip3.pt\crcr}}\,}
\def\otto{\,{\kern-1.truept\leftarrow\kern-5.truept\to\kern-1.truept}\,}
\def\tr{{\rm tr}}
\def\Re{{\rm Re}\,}
\newtheorem{theorem}{Theorem}[section]  
\newtheorem{cor}[theorem]{Corollary}
\newtheorem{prop}[theorem]{Proposition}
\newtheorem{lemma}[theorem]{Lemma}
\numberwithin{equation}{section}
\def\be{\begin{equation}}
\def\ee{\end{equation}}
\newcommand{\hc}{\mbox{h.c.}}
\let\a=\alpha \let\b=\beta    \let\g=\gamma     \let\d=\delta     
             \let\l=\lambda
                  \let\p=\pi        \let\r=\rho
\let\s=\sigma \let\t=\tau         \let\ph=\varphi   
   \let\o=\omega     
        \let\L=\Lambda
\begin{document}

\title{Bose-Einstein Condensation Beyond \\the Gross-Pitaevskii Regime}

\author{Arka Adhikari$^1$, Christian Brennecke$^2$, Benjamin Schlein$^3$ \\
\\
Department of Mathematics, Harvard University, \\
One Oxford Street, Cambridge MA 02138, USA$^{1,2}$ \\
\\
Institute of Mathematics, University of Zurich, \\
Winterthurerstrasse 190, 8057 Zurich, Switzerland$^3$}

\maketitle

\begin{abstract}
We consider N bosons in a box with volume one, interacting through a two-body potential with scattering length of the order $N^{-1+\kappa}$, for $\kappa>0$. Assuming that  $\kappa\in (0;1/43)$, we show that low-energy states exhibit Bose-Einstein condensation and we provide bounds on the expectation and on higher moments of the number of excitations. 
\end{abstract} 

\section{Introduction}
\label{sec:intro}

We consider systems of $N\in\NN$ bosons trapped in the box $\Lambda = [0;1]^3$ with periodic boundary conditions (the three dimensional torus with volume one) and interacting through a repulsive potential with scattering length of the order $N^{-1+\kappa}$, for $\kappa \in (0; 1/43)$. We are  interested in the limit of large $N$. The Hamilton operator has the form 
\begin{equation}\label{eq:HN} H_N = \sum_{i=1}^N -\Delta_{x_i} + \sum_{1\leq i<j\leq N} N^{2-2\kappa} V (N^{1-\kappa}(x_i -x_j)) \end{equation}
and acts on a dense subspace of $L^2_s (\Lambda^N)$, the Hilbert space consisting of functions in $L^2 (\L^N)$ that are invariant with respect to permutations of the $N\in\NN$ particles. Here we assume the interaction potential $V \in L^3 (\bR^3)$ to have compact support and to be non-negative, ie. $V(x) \geq 0$ for almost all $x \in \bR^3$. 

\smallskip

For $\kappa = 0$, the Hamilton operator (\ref{eq:HN}) describes bosons in the so-called Gross-Pitaevskii limit. This regime is frequently used to model trapped Bose gases observed in recent experiments. Another important regime is the thermodynamic limit, where $N$ bosons interacting through a fixed potential $V$ (independent of $N$) are trapped in the box $\Lambda_L = [0;L]^3$ and where the limits $N,L \to \infty$ are taken, keeping the density $\rho = N/ L^3$ fixed. After rescaling lengths (introducing new coordinates $x' = x / L$) the Hamilton operator of the Bose gas in the thermodynamic limit is given (up to a multiplicative constant) by (\ref{eq:HN}), with $\kappa = 2/3$. Choosing $0 < \kappa < 2/3$, we are interpolating therefore between the Gross-Pitaevskii and the thermodynamic limits. 

\smallskip

The goal of this paper is to show that low-energy states of (\ref{eq:HN}) exhibit Bose-Einstein condensation in the zero-momentum mode $\ph_0 \in L^2 (\Lambda)$ defined by $\ph_0 (x) = 1$ for all $x \in \L$ and to give bounds on the number of excitations of the condensate. To achieve this goal, it is convenient to switch to an equivalent representation of the bosonic system, removing the condensate and focusing instead on its orthogonal excitations. To this end, we notice that every $\psi_N \in L^2_s (\Lambda^N)$ can be uniquely decomposed as 
\[ \psi_N = \alpha_0 \ph_0^{\otimes N} + \alpha_1 \otimes_s \ph_0^{\otimes (N-1)} + \alpha_2 \otimes_s \ph_0^{\otimes (N-2)} + \dots + \alpha_N \]
where $\otimes_s$ denotes the symmetric tensor product and $\alpha_j \in L^2_\perp (\Lambda)^{\otimes_s j}$ for all $j = 0, \dots , N$, with $L^2_\perp (\Lambda)$ the orthogonal complement in $L^2 (\Lambda)$ of $\ph_0$. This observation allows us to define a unitary map $U_N : L^2_s (\Lambda^N) \to \cF_+^{\leq N} = \bigoplus_{j=0}^N L^2_\perp (\Lambda)^{\otimes_s j}$ by setting \begin{equation}\label{eq:def-UN} 
U_N \psi_N = \{ \alpha_0, \alpha_1, \dots , \alpha_N \}.\end{equation} 
The truncated Fock space $\cF_+^{\leq N} = \bigoplus_{j=0}^N L^2_\perp (\Lambda)^{\otimes_s j}$ is used to describe orthogonal excitations of the condensate (some properties of the map $U_N$ will be discussed in Section \ref{sec:fock} below). On $\cF_+^{\leq N}$, we introduce the number of particles  operator, defining $(\cN_+ \xi)^{(n)} = n \xi^{(n)}$ for every $\xi = \{ \xi^{(0)} , \dots \xi^{(N)} \} \in \cF_+^{\leq N}$. 

\smallskip

We are now ready to state our main theorem, which provides estimates of the expectation and on higher moments of the number of orthogonal excitations of the Bose-Einstein condensate for low-energy states of (\ref{eq:HN}). 

\begin{theorem}\label{thm:main} 
Let $ V\in L^3(\mathbb{R}^3)$ be pointwise non-negative and spherically symmetric. Let $\frak{a}_0 > 0$ denote the scattering length of $V$. Let $H_N$ be defined as in \eqref{eq:HN} with $0 < \kappa < 1/43$. Then, for every $\eps > 0$, there exists a constant $C>0$ such that 
\begin{equation}\label{eq:gs-est} \big| E_N -4\pi\mathfrak{a}_0 N^{1+\kappa}\big| \leq C N^{43\kappa + \eps} .\end{equation}
for all $N \in \bN$ large enough. 

Let $\psi_N \in L^2_s (\Lambda^N)$ with $\| \psi_N \| = 1$ and \begin{equation}\label{eq:variance} \langle \psi_N, (H_N - E_N)^2 \psi_N \rangle \leq \zeta^2,\end{equation} for a $\zeta > 0$. Then, for every $\eps > 0$ there exists a constant $C > 0$ such that  \begin{equation}\label{eq:BEC0} \langle U_N \psi_N , \cN_+ \, U_N \psi_N \rangle \leq C \left[ \zeta + \zeta^2 N^{13\kappa + \eps - 1} + N^{43 \kappa + 4\eps} \right] \end{equation}
for all $N \in \bN$ large enough. If moreover $\psi_N  = \chi (H_N \leq E_N + \zeta) \psi_N$, then for all $k \in \bN$ and all $\eps > 0$ there exists $C > 0$ such that  
\begin{equation}\label{eq:BEC} \langle U_N \psi_N , \cN_+^k \, U_N \psi_N \rangle \leq C \left[ N^{20\kappa + \eps}  \zeta^2  + N^{44 \kappa + 2 \eps} \right]^{k} \end{equation} 
for all $N \in \bN$ large enough. 
\end{theorem}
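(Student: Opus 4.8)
The plan is to pass to the excitation Fock space, renormalize the resulting Hamiltonian through a chain of (generalized) Bogoliubov transformations, and extract from it the leading energy $4\pi\mathfrak{a}_0 N^{1+\kappa}$ together with a coercive lower bound of order $\cN_+$; the condensation and moment bounds then follow from the variance and spectral hypotheses, using that all transformations involved are unitary. \textbf{Step 1 (excitation Hamiltonian).} I would first conjugate with $U_N$: the operator $\cL_N := U_N H_N U_N^*$ acts on $\cF_+^{\leq N}$, and, using the algebraic properties of $U_N$ recalled in Section \ref{sec:fock}, one organizes $\cL_N = \cL_N^{(0)} + \cL_N^{(1)} + \cL_N^{(2)} + \cL_N^{(3)} + \cL_N^{(4)}$ by the number of creation/annihilation operators of excitations (modes orthogonal to $\ph_0$). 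Here $\cL_N^{(2)}$ contains the kinetic energy $\cK = \sum_{p\neq 0} p^2 a_p^* a_p$ together with a large, singular off-diagonal quadratic part coming from the potential, and $\cL_N^{(4)} \geq 0$ is the rescaled quartic interaction. A first rough bound $\langle \xi, \cN_+ \xi\rangle = o(N)$ for normalized $\xi$ with $\langle \xi, \cL_N \xi\rangle \leq E_N + o(N)$ — obtained from $\cL_N^{(4)} \ge 0$, the relative boundedness of $\cL_N^{(1)}, \cL_N^{(3)}$ with respect to $\cK + \cN_+ + 1$, and the easy upper bound on $E_N$ from a factorized trial state — makes the subsequent commutator expansions summable.

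\textbf{Step 2 (quadratic and cubic renormalization).} Next I would apply the generalized Bogoliubov transformation $e^B$, with $B = \tfrac12 \sum_{p \neq 0} \eta_p (a_p^* a_{-p}^* - a_p a_{-p})$, where the coefficients $\eta_p$ are built, for $|p|$ below a momentum cutoff, from the solution of the Neumann problem for the zero-energy scattering equation of $V(N^{1-\kappa}\,\cdot\,)$ on a ball of radius $\sim N^{-\kappa}$. The choice of $\eta$ is dictated by the requirement that the singular quadratic part of $\cL_N^{(2)}$ be cancelled by terms produced, through the commutators with $B$, out of $\cL_N^{(1)}, \cL_N^{(3)}$ and $\cL_N^{(4)}$, so that $\cG_N := e^{-B} \cL_N e^B = 4\pi\mathfrak{a}_0 N^{1+\kappa} + \cH_N + (\text{cubic}) + (\text{errors})$ with $\cH_N = \cK + \cL_N^{(4)} \ge 0$. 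Unlike in the Gross--Pitaevskii case $\kappa = 0$, the leftover cubic term is not yet negligible, so I would conjugate once more with a cubic operator $A \sim \sum \eta_p\, a^*_{p+q} a^*_{-p} a_q$, suitably truncated in momentum and in $\cN_+$, using the improved control on $\cN_+$ coming from the quadratic stage, to obtain $\wt\cG_N := e^{-A} \cG_N e^A$ with the cubic term absorbed up to further controlled remainders.

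\textbf{Step 3 (lower bound, energy, moments).} The core estimate is then
\[ \wt\cG_N \;\geq\; 4\pi\mathfrak{a}_0 N^{1+\kappa} + c\, \cN_+ - C N^{43\kappa + \eps}, \]
proved by bounding each remainder generated in the iterated expansion of $e^{-A} e^{-B} \cL_N e^B e^A$ by an explicit $N^{a\kappa + \eps}(\cH_N + N^{\kappa})$-type quantity, possibly after a Lewin--Nam--Serfaty--Solovej-style localization in $\cN_+$. Combined with the matching upper bound from an explicit trial state, this yields \eqref{eq:gs-est}. For $\psi_N$ satisfying \eqref{eq:variance}, Cauchy--Schwarz gives $\langle \psi_N, (H_N - E_N) \psi_N\rangle \leq \zeta$, and, since $\wt\xi := e^{-A} e^{-B} U_N \psi_N$ satisfies $\langle \wt\xi, \wt\cG_N \wt\xi\rangle = \langle \psi_N, H_N \psi_N\rangle$, the lower bound together with the control of $\cN_+$ under $e^B, e^A$ gives \eqref{eq:BEC0}. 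For $\psi_N = \chi(H_N \leq E_N + \zeta) \psi_N$ one obtains \eqref{eq:BEC} by induction on $k$: testing the lower bound against $\cN_+^{k-1} \wt\xi$, commuting $\cN_+^{k-1}$ through the error operators, and using the spectral cutoff to keep $\langle \wt\xi, (\wt\cG_N - E_N) \cN_+^{k-1} \wt\xi\rangle$ under control, which closes the recursion with the stated exponent.

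\textbf{Main obstacle.} The delicate part is the bookkeeping in Step 2: one must track the exact $\kappa$-power of every error term generated by commuting $B$ and, especially, the cubic $A$ through $\cL_N$ — in particular the terms produced from the quartic $\cL_N^{(4)}$ and from the momentum and $\cN_+$ cutoffs — and verify that all of them are genuinely subleading; the threshold that emerges from balancing the worst contributions is precisely $\kappa < 1/43$. Making the cubic conjugation interact correctly with the quartic interaction while preserving the coercive $c\, \cN_+$ in the lower bound of Step 3 is the crux of the argument.
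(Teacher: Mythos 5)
Your Steps 1 and 2 follow the paper's route (excitation map $U_N$, generalized Bogoliubov transformation $e^B$ with $\eta$ built from the Neumann scattering solution, then a cubic conjugation $e^A$), up to minor inaccuracies: the paper's $\eta$ is restricted to \emph{high} momenta $|p|\geq N^\alpha$, not below a cutoff, and $B$ is quadratic in the modified fields $b_p,b_p^*$ (so that $e^B$ preserves $\cF_+^{\leq N}$), not in $a_p,a_p^*$; also $\cL_N^{(1)}=0$ here by translation invariance. The genuine gap is in Step 3: after the cubic conjugation the quartic term is still the original singular interaction $N^{-1+\kappa}\widehat V(p/N^{1-\kappa})$, and the route you propose to absorb it --- keep $\cV_N\geq 0$, insert the renormalized mean-field quartic term, and control the discrepancy by localization in $\cN_+$ --- is exactly the strategy of \cite{BBCS} that the paper discusses and rejects for $\kappa>0$. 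The error made in that substitution is of size $N^{3\alpha+\kappa-1}\cN_+^2$, and with the only available a priori bound $\cN_+\lesssim N^{(15+20\kappa)/17}$ one needs $3\alpha+\tfrac{37}{17}\kappa<\tfrac{2}{17}$; combined with the structural constraints on $\alpha$ (here $\alpha>3\beta+2\kappa$ with $\beta>3\kappa$, and in fact $\alpha=14\kappa+4\eps$), this forces $\kappa$ far below $1/43$ and produces different rates. So your argument, as written, cannot reach the threshold $1/43$ or the exponents $43\kappa$, $13\kappa$, $20\kappa$, $44\kappa$ in the statement.

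What the paper actually does --- and what is missing from your proposal --- is a fourth, \emph{quartic} renormalization: one conjugates $\cJ_N^{\text{eff}}$ with $e^D$, where $D=\tfrac1{2N}\sum_{r\in P_H,\,p,q\in P_L}\eta_r\,a^*_{p+r}a^*_{q-r}a_pa_q - \text{h.c.}$, which replaces the singular interaction by the mean-field potential $8\pi\frak{a}_0N^\kappa{\bf 1}(|p|<N^\alpha)$ in all relevant terms, so that condensation follows by completing a square (in the operators $g_r=b_r+c_r+e_r$) with no localization at all. This step is the paper's main novelty (Section 5 and Prop.\ \ref{prop:MN}), and controlling its errors requires new observables --- $\cN_{\geq N^\gamma}$, its powers, and products $\cK_L\cN_{\geq N^\gamma}$ --- whose growth under $e^A$ and $e^D$ must be tracked (Lemmas \ref{lm:NresgrowA}--\ref{lm:KresgrowA}, \ref{lm:NresgrowD}--\ref{lm:KresgrowD}), together with commutator bounds such as \eqref{eq:errComm}. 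Your induction scheme for the higher moments \eqref{eq:BEC} is in the right spirit, but it can only close once the coercive lower bound $\cM_N\geq 4\pi\frak{a}_0N^{1+\kappa}+\tfrac14\cK+\cE_{\cM_N}$ with the specific error structure \eqref{eq:propMNerrorbnd} is available, i.e.\ after the quartic step.
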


The convergence $E_N/4\pi \frak{a}_0 N^{1+\kappa} \to 1$, as $N \to \infty$, has been first established, for Bose gases trapped by an external potential, in \cite{LSY} (the choice $\kappa > 0$ corresponds, in the terminology of \cite{LSY}, to the Thomas-Fermi limit). 

\smallskip

It follows from (\ref{eq:BEC0}) that the one-particle density matrix $\gamma_N = \tr_{2,\dots , N} |\psi_N \rangle \langle \psi_N |$ associated with a normalized $\psi_N \in L^2_s (\L^N)$ satisfying 
(\ref{eq:variance}) is such that  
\begin{equation}\label{eq:BEC2}
\begin{split} 
 1- \langle \ph_0 , \gamma_N \ph_0 \rangle = \; &\frac{1}{N} \left[ N - \langle \psi_N, a^* (\ph_0) a(\ph_0) \psi_N \rangle \right] \\ = \; &\frac{1}{N} \langle U_N \psi_N , \cN_+ \, U_N \psi_N 
 \rangle\\  \leq \; & 
 C \left[ \zeta N^{-1} + \zeta^2 N^{13\kappa + \eps - 2} + N^{43 \kappa + 4\eps-1} \right] \end{split} \end{equation}
as $N \to \infty$. Here we used the formula $U_N  a^* (\ph_0) a(\ph_0) U_N = N - \cN_+$; see (\ref{eq:U-rules}) below. Eq. (\ref{eq:BEC2}) implies that low-energy states of (\ref{eq:HN}) exhibit  complete Bose-Einstein condensation, {\bf if $\kappa < 1/43$}. 

\smallskip

We remark that the estimate (\ref{eq:BEC}) follows, in our analysis, from a stronger bound controlling not only the number but also the energy of the excitations of the condensate. As we will explain in Section \ref{sec:quadra}, in order to estimate the energy of excitations in low-energy states, we first need to remove (at least part of) their correlations. If we choose, as we do in (\ref{eq:BEC}), $\psi_N  \in L^2_s (\Lambda^N)$ with $\| \psi_N \| = 1$ and $\psi_N  = \chi (H_N \leq E_N + \zeta) \psi_N$, we can introduce the corresponding renormalized excitation vector $\xi_N = e^B U_N \psi_N \in \cF_+^{\leq N}$, with the antisymmetric operator $B$ defined as in \eqref{eq:genBog} (the unitary operator $e^B$ will be referred to as a generalized Bogoliubov transformation). We will show in Section \ref{sec:proof}, that for every $k \in \bN$, there exists $C > 0$ such that 
\begin{equation}\label{eq:HNk} \langle \xi_N , (\cH_N + 1) ( \cN_+ + 1)^{2k} \xi_N \rangle \leq C \left[ N^{20\kappa + \eps}  \zeta^2  + N^{44 \kappa + 2 \eps} \right]^{2k+1} \end{equation}
for all $N$ large enough. Here $\cH_N = \cK + \cV_N$, where 
\begin{equation}\label{eq:KcVN}   
\cK = \sum_{p \in \Lambda_+^*} p^2 a_p^* a_p , \quad \text{and} \quad \cV_N = 
\frac{1}{2N} \sum_{\substack{p,q \in \Lambda_+^*, r \in \Lambda^* : \\ r \not = -p, -q}} N^{\kappa} \widehat{V} (r/N^{1-\kappa})  a_{p+r}^* a_q^* a_{q+r} a_p \end{equation}
are the kinetic and potential energy operators, restricted to $\cF_+^{\leq N}$ (here $\widehat{V}$ is the Fourier transform of the potential $V$, defined as in (\ref{eq:FoT})). Eq. (\ref{eq:BEC}) follows then from (\ref{eq:HNk}), because $\cN_+$ commutes with $\cH_N$, $\cN_+ \leq \cK \leq \cH_N$ and  because conjugation with the generalized Bogoliubov transformation $e^B$ does not change the number of particles substantially; see Lemma \ref{lm:Ngrow} (for $k \in \bN$ even, we also use simple interpolation). 

\smallskip

In the Gross-Pitaevskii regime corresponding to $\kappa = 0$ the convergence $\gamma_N \to |\ph_0 \rangle \langle \ph_0|$ has been first established in \cite{LS1,LS2,LSSY}  and later, using a different approach, in \cite{NRS} \footnote{Going through the proof of \cite[Theorem 5.1]{LSSY}, one can observe that the authors actually show that $1 - \langle \ph_0 , \gamma_N \ph_0 \rangle \leq C N^{-2/17}$.}. In this case (ie. $\kappa = 0$), the bounds (\ref{eq:gs-est}), (\ref{eq:BEC0}) and (\ref{eq:BEC}) with $\varepsilon = 0$ (which are optimal in their $N$-dependence) have been shown in \cite{BBCS}. Previously, they have been established in \cite{BBCS0}, under the additional assumption of small potential. A simpler proof of the results of \cite{BBCS0}, extended also to systems of bosons trapped by an external potential, has been recently given in \cite{NNRT}. The result of \cite{BBCS} was used in \cite{BBCS2} to determine the second order corrections to the ground state energy and the low-energy excitation spectrum of the Bose gas in the Gross-Pitaevskii regime. Note that our approach in the present paper could be easily extended to the case $\kappa = 0$, leading to the same bounds obtained in \cite{BBCS}. We exclude the case $\kappa = 0$ because we would have to modify certain definitions, making the notation more complicated (for example, the sets $P_H$ in (\ref{eq:PellHPellL1}) and $P_L$ in (\ref{eq:defPHPL}) would have to be defined in terms of cutoffs independent of $N$). 
 
\smallskip

The methods of \cite{LS1,LS2,LSSY} can also be extended to show Bose-Einstein condensation for low-energy states of (\ref{eq:HN}), for some $\kappa > 0$. In fact, following the proof of \cite[Theorem 5.1]{LSSY}, it is possible to show that, for a normalized $\psi_N \in L^2_s (\Lambda^N)$ with $\| \psi_N \| =1$ and such that $\langle \psi_N, H_N \psi_N \rangle \leq E_N + \zeta$, the expectation of the number of excitations is bounded by 
\begin{equation}\label{eq:BEC-LSSY} \langle U_N \psi_N , \cN_+ U_N \psi_N \rangle \leq C \left[ N^{\frac{15 +20\kappa}{17}} + \zeta \right]   \end{equation}
which implies complete Bose-Einstein condensation for low-energy states, for all $\kappa < 1/10$. For sufficiently small $\kappa > 0$, Theorem \ref{thm:main} improves (\ref{eq:BEC-LSSY}) because it gives a better rate\footnote{For $\kappa > 0$, the rate (\ref{eq:BEC}) is not expected to be optimal. Bogoliubov theory predicts that the number of excitations of the Bose-Einstein condensate in a Bose gas with density $\rho$ is of the order $N \rho^{1/2}$; see \cite{bog}. In our regime, this corresponds to $N^{3\kappa/2}$ excitations.} (if $\kappa < 15/711$) and because, through (\ref{eq:BEC}), it also provides (under stronger conditions on $\psi_N$) bounds for higher moments of the number of excitations $\cN_+$.

\smallskip

In \cite{FS}, in a slightly different setting, the authors obtain a bound of the form (\ref{eq:BEC}) for $k=1$, for the choice $\kappa = 1/(55+1/3) $ (for normalized $\psi_N \in L^2_s (\L^N)$ that satisfy $ \langle \psi_N, H_N\psi_N\rangle\leq E_N + \zeta$). They use this 
result to show a lower bound on the ground state energy of the dilute Bose gas in the thermodynamic limit matching the prediction of Lee-Yang and Lee-Huang-Yang \cite{LY, LHY}. 

\smallskip

After completion of our work, two more papers have appeared whose results are related with Theorem \ref{thm:main}. Based on localization arguments from \cite{BFS, FS}, a bound for the expectation of $\cN_+$ in low-energy states has been shown in \cite{F}, establishing Bose-Einstein condensation for all $\kappa < 2/5$  (as pointed out there, using a refined analysis similar to that of \cite{FS}, the range of $\kappa$ can be slightly improved). On the other hand, following an approach similar to \cite{BBCS0}, but with substantial simplifications (partly due to the fact that the author works in the grand canonical, rather than the canonical, ensemble), a new proof of Bose-Einstein condensation was obtained in \cite{H}, in the Gross-Pitaevskii regime, under the assumption of small potential. There is hope that the approach of \cite{H} can be extended beyond the Gross-Pitaevskii regime, providing a simplified proof of Theorem \ref{thm:main}, potentially allowing for larger values of $\kappa$. 

\smallskip

The derivation of the bounds (\ref{eq:BEC0}), (\ref{eq:BEC}), (\ref{eq:HNk}) is crucial to resolve the low-energy spectrum of the Hamiltonian (\ref{eq:HN}). The extension of estimates on the ground state energy and on the excitation spectrum obtained in \cite{BBCS2} for the Gross-Pitaevskii limit, to regimes with $\kappa > 0$ small enough will be addressed in a separate paper \cite{BCS}, using the results of Theorem \ref{thm:main}. With our techniques, it does not seem possible to obtain such precise information on the spectrum of (\ref{eq:HN}) using only previously available bounds like (\ref{eq:BEC-LSSY}). 

\smallskip

Let us now briefly explain the strategy we use to prove Theorem \ref{thm:main}. The first part of our analysis follows closely \cite{BBCS}. We start in Section \ref{sec:fock} by introducing the excitation Hamiltonian $\cL_N = U_N H_N U_N^*$, acting on the truncated Fock space $\cF_+^{\leq N}$; the result is given in (\ref{eq:cLN}), (\ref{eq:cLNj}). The vacuum expectation $\langle \Omega, \cL_N \Omega \rangle = N^{1+\kappa} \widehat{V} (0)/2$ is still very far from the correct ground state energy of $\cL_N$ (and thus of $H_N$); the difference is of order $N^{1+\kappa}$. This is a consequence of the definition (\ref{eq:def-UN}) of the unitary map $U_N$, whose action removes products of the condensate wave function $\ph_0$, leaving however all correlations among particles in the wave functions $\alpha_j \in L^2_\perp (\L)^{\otimes_s j}$, $j=1, \dots , N$. 

\smallskip

To factor out correlations, we introduce in Section \ref{sec:quadra} a renormalized excitation Hamiltonian $\cG_N = e^{-B} \cL_N e^B$, defined through unitary conjugation of $\cL_N$ with a generalized Bogoliubov transformation $e^B$. The antisymmetric operator $B: \cF_+^{\leq N} \to \cF_+^{\leq N}$ is quadratic in the modified creation and annihilation operators $b_p, b_p^*$ defined, for every momentum $p \in \L^*_+ = 2\pi \bZ^3 \backslash \{0 \}$, in (\ref{eq:bp-de}) ($b^*_p$ creates a particle with momentum $p$ annihilating, at the same time, a particle with momentum zero; in other words, $b_p^*$ creates an excitation, moving a particle out of the condensate). The properties of $\cG_N$ are listed in Prop. \ref{prop:GN}. In particular, Prop. \ref{prop:GN} implies that, to leading order, $\langle \Omega, \cG_N \Omega \rangle \simeq 4\pi \frak{a}_0 N^{1+\kappa}$, if $\kappa$ is small enough. 

\smallskip

Unfortunately, $\cG_N$ is not coercive enough to prove directly that low-energy states exhibit condensation (in the sense that it is not clear how to estimate the difference between $\cG_N$ and its vacuum expectation from below by the number of particle operator $\cN_+$). For this reason, in Section \ref{sec:cubic}, we define yet another renormalized excitation Hamiltonian $\cJ_N = e^{-A} \cG_N e^A$, where now $A$ is the antisymmetric operator (\ref{eq:defA}), cubic in (modified) creation and annihilation operators (to be more precise, we only conjugate the main part of $\cG_N$ with $e^A$; see (\ref{eq:defJN})). Important properties of $\cJ_N$ are stated in Prop. \ref{prop:JN}. Up to negligible errors, the conjugation with $e^A$ completes the renormalization of quadratic and cubic terms; in (\ref{eq:defJNeff}), these terms have the same form they would have for particles interacting through a mean-field potential with Fourier transform $8\pi \frak{a}_0 N^\kappa {\bf 1} (|p| < N^\alpha)$, with a parameter $\alpha > 0$ that will be chosen small enough, depending on $\kappa$ (in other words, the renormalization procedure allows us to replace, in all quadratic and cubic terms, the original interaction with Fourier transform $N^{-1+\kappa} \widehat{V} (p/ N^{1-\kappa})$ decaying only for momenta $|p| > N^{1-\kappa}$, with a potential whose Fourier transform already decays on scales $N^{\alpha} \ll N^{1-\kappa}$). 

\smallskip

The main problem with $\cJ_N$ is that its quartic terms (the restriction of the initial potential energy on the orthogonal complement of the condensate wave function) are still proportional to the local interaction with Fourier transform $N^{-1+\kappa} \widehat{V} (p / N^{1-\kappa})$. 

\smallskip 

One possibility to solve this problem is to neglect the original quartic terms (they are positive) and insert instead quartic terms proportional to the renormalized mean-field potential $8\pi \frak{a}_0 N^\kappa {\bf 1} (|p| < N^\alpha)$, so that Bose-Einstein condensation follows as it does for mean-field systems (see \cite{Sei}). Since (with the notation $\check{\chi}$ for the inverse Fourier transform of the characteristic function on the ball of radius one) 
\[ \begin{split} \frac{8\pi \frak{a}_0 N^\kappa}{N} \sum_{|r| < N^\alpha} a_{p+r}^* a_q^* a_{q+r} a_p  &= 8\pi \frak{a}_0 N^{3\alpha + \kappa -1} \int \check{\chi} (N^\alpha (x-y)) \check{a}_x^* \check{a}_y^* \check{a}_y \check{a}_x \, dx dy  \\ &\leq C N^{3\alpha + \kappa - 1} \cN_+^2 \end{split} \]
and since we know from (\ref{eq:BEC-LSSY}), that $\cN_+ \lesssim N^{\frac{15 + 20 \kappa}{17}}$ in low-energy states, the insertion of the renormalized quartic terms produces an error that can be controlled by localization in the number of particles, if 
\[ 3 \alpha + \kappa - 1 + \frac{15 + 20 \kappa}{17}  = 3 \alpha + \frac{37}{17} \kappa - \frac{2}{17} < 0 \]
This strategy was used in \cite{BBCS} to prove Bose-Einstein condensation with optimal rate in the Gross-Pitaevskii regime $\kappa = 0$ (in this case, one can choose $\alpha = 0$). 

\smallskip

Here, we follow a different approach. We perform a last renormalization step, conjugating $\cJ_N$ through a unitary operator $e^D$, with $D$  quartic in creation and annihilation operators. This leads to a new Hamiltonian 
$\cM_N= e^{-D} \cJ_N e^D$ (in fact, it is more convenient to conjugate only the main part of $\cJ_N$, ignoring small contributions that can be controlled by other means; see (\ref{eq:defMN})), where the original interaction $N^{-1+\kappa} \widehat{V} (p/ N^{1-\kappa})$ is replaced by the mean-field potential $8\pi \frak{a}_0 N^\kappa {\bf 1} (|p| < N^\alpha)$ in all relevant terms \footnote{Observe that the renormalized potential with Fourier transform $8\pi \frak{a}_0N^{-1+\kappa} {\bf 1} (|p| < N^\alpha)$ that emerges in our rigorous analysis after a series of unitary transformations is reminiscent of the interaction that appears through an ad-hoc substitution in the pseudo-potential method of \cite{HY,LHY}.}. Condensation can then be shown as it is done for mean-field systems, with no need for localization.   
This is the main novelty of our analysis, compared with \cite{BBCS}. In Section \ref{sec:quartic}, we define the final Hamiltonian $\cM_N$ and in  Prop. \ref{prop:MN} we bound it from below. The proof of Prop. \ref{prop:MN},  which is technically the main part of our paper, is deferred to Section~\ref{sec:MN}. In Section~\ref{sec:proof}, we combine the results of the previous sections to conclude the proof of Theorem~\ref{thm:main}. 

\smallskip

The results we prove with our new technique are stronger than what we would obtain using the approach of \cite{BBCS} in the sense that they allow for larger values of $\kappa$ and  better rates. More importantly, we believe that the approach we propose here is more natural and that it leaves more space for extensions.  In particular, with the final quartic renormalization step, we map the original Hamilton operator (\ref{eq:HN}), with an interaction varying on momenta of order $N^{1-\kappa}$, into a new Hamiltonian having the same form, but now with an interaction restricted to momenta smaller than $N^{\alpha}$. If $\alpha < 1-\kappa$, this leads to an effective regularization of the potential and it suggests that further improvements may be achieved by iteration; we plan to follow this strategy, which bears some similarities to the renormalization group analysis developed in \cite{BFKT}, in future work.

\smallskip

In order to control errors arising from the quartic conjugation, it is important to use observables that were not employed in \cite{BBCS}. In particular, the expectation of the number of excitations with large momenta \[ \cN_{\geq N^\gamma} = \sum_{p\in \L^*_+ : |p| \geq N^\gamma} a_p^* a_p \] and of its powers $\cN^2_{\geq N^\gamma}, \cN^3_{\geq N^\gamma}$, as well as the expectation of products of the form $\cK_L \cN_{\geq N^\gamma}$ and $\cK_L \cN_{\geq N^\gamma}^2$, involving the kinetic energy operator restricted to low momenta $\cK_L$, will play a crucial role in our analysis. It will therefore be important to establish bounds for the growth of these observables through all steps of the renormalization procedure (Lemma \ref{lm:NresgrowA}, Lemma \ref{lm:KresgrowA}, Lemma \ref{lm:NresgrowD}, Lemma \ref{lm:KresgrowD}). In Section \ref{sec:proof}, an important step in the proof of Theorem \ref{thm:main} will consist in controlling the expectation of these observables on low-energy states of the renormalized Hamiltonian $\cG_N$. 

\medskip

\emph{Acknowledgements.} We would like to thank C. Boccato and S. Cenatiempo for many helpful discussions with regards to the quartic renormalization. B. Schlein gratefully acknowledges partial support from the NCCR SwissMAP, from the Swiss National Science Foundation through the Grant ``Dynamical and energetic properties of Bose-Einstein condensates'' and from the European Research Council through the ERC-AdG CLaQS. 

\section{The Excitation Hamiltonian}
\label{sec:fock}

We denote by $\cF =  \bigoplus_{n \geq 0} L^2 (\Lambda)^{\otimes_s n}$ the bosonic Fock space over the one-particle space $L^2 (\L)$ and by $\Omega= \{ 1, 0, \dots \}$ the vacuum vector. We can define the number of particles operator $\cN$ by setting $(\cN \psi)^{(n)} = n \psi^{(n)}$ for all $\psi = \{ \psi^{(0)}, \psi^{(1)}, \dots \}$ in a dense subspace of $\cF$. For every one-particle wave function $g \in L^2 (\L)$, we define the creation operator $a^* (g)$ and its hermitian conjugate, the annihilation operator $a (g)$, through 
\[ \begin{split} 
(a^* (g) \Psi)^{(n)} (x_1, \dots , x_n) &= \frac{1}{\sqrt{n}} \sum_{j=1}^n g (x_j) \Psi^{(n-1)} (x_1, \dots , x_{j-1}, x_{j+1} , \dots , x_n) 
\\
(a (g) \Psi)^{(n)} (x_1, \dots , x_n) &= \sqrt{n+1} \int_\Lambda  \bar{g} (x) \Psi^{(n+1)} (x,x_1, \dots , x_n) \, dx \end{split} \]
Creation and annihilation operators are defined on the domain of $\cN^{1/2}$, where they satisfy the bounds
\[ \| a (f) \psi \| \leq \| f \| \| \cN^{1/2} \psi \|, \qquad  \| a^* (f) \psi \| \leq \| f \| \| (\cN_+ + 1)^{1/2} \psi \| \]
and the canonical commutation relations 
\begin{equation}\label{eq:ccr} 
[ a (g), a^* (h) ] = \langle g,h \rangle , \quad [ a(g), a(h)] = [a^* (g), a^* (h) ] = 0 
\end{equation}
for all $g,h \in L^2 (\Lambda)$ ($\langle . , . \rangle$ denotes here the inner product on $L^2 (\Lambda)$). For $p \in \Lambda^* = 2\pi \bZ^3$, we define the plane wave $\ph_p \in L^2 (\L)$ through $\ph_p (x) = e^{-i p \cdot x}$ for all $x \in \L$, and the operators $a^*_p = a (\ph_p)$ and $a_p = a (\ph_p)$ creating and, respectively, annihilating a particle with momentum $p$. It is sometimes convenient to switch to position space, introducing operator valued distributions $\check{a}_x, \check{a}_x^*$ such that  
\begin{equation*}
 a(f) = \int_\L \bar{f} (x) \,  \check{a}_x \, dx , \quad a^* (f) = \int_\L  f(x) \, \check{a}_x^* \, dx  \end{equation*}
In terms of creation and annihilation operators, the number of particles operator can be written as 
\[ \cN = \sum_{p \in \L^*} a_p^* a_p = \int a_x^* a_x \, dx \]

We will describe excitations of the Bose-Einstein condensate on the truncated Fock space 
\[ \cF_+^{\leq N} = \bigoplus_{j=0}^N L_\perp^2 (\Lambda)^{\otimes_s j} \]
constructed over the orthogonal complement $L^2_\perp (\Lambda)$ of the condensate wave function $\ph_0$. On $\cF_+^{\leq N}$, we denote the number of particles operator by $\cN_+$. It is given by $\cN_+ =  \sum_{p \in \Lambda^*_+} a_p^* a_p  $, where $\Lambda^*_+ = \Lambda^* \backslash \{ 0 \} = 2\pi \bZ^3 \backslash \{ 0 \}$ is the momentum space for excitations. Given $\Theta \geq 0$, we also introduce the restricted number of particles operators 
\begin{equation}\label{eq:defcN>}
		\cN_{\geq \Theta} = \sum_{p\in \Lambda_+^*: |p| \geq \Theta} a^*_p a_p, 		\end{equation}
measuring the number of excitations with momentum larger or equal to $\Theta$, and $ \cN_{< \Theta} = \cN_+ - \cN_{\geq \Theta}$. 
		
Consider the operator $U_N : L^2_s (\L^N) \to \cF_+^{\leq N}$ defined in (\ref{eq:def-UN}).  
Identifying $\psi_N \in L^2_s (\L^N)$ with the Fock space vector $\{ 0, \dots , 0,  \psi_N, 0, \dots \}$, 
we can also express $U_N$ in terms of creation and annihilation operators; we obtain  
\[ U_N  = \bigoplus_{n=0}^N  (1-|\ph_0 \rangle \langle \ph_0|)^{\otimes n} \frac{a(\ph_0)^{N-n}}{\sqrt{(N-n)!}} \] 
It is then easy to check that $U_N^* : \cF_{+}^{\leq N} \to L^2_s (\Lambda^N)$ is given by 
\[ U_N^* \, \{ \alpha^{(0)}, \dots , \alpha^{(N)} \} = \sum_{n=0}^N \frac{a^* (\ph_0)^{N-n}}{\sqrt{(N-n)!}} \, \alpha^{(n)} \]
and that $U_N^* U_N = 1$, ie. $U_N$ is unitary. 

Using $U_N$, we can define the excitation Hamiltonian $\cL_N := U_N H_N U_N^*$, acting on a dense subspace of $\cF_+^{\leq N}$. To compute $\cL_N$, we first write the Hamiltonian (\ref{eq:HN}) in momentum space, in terms of creation and annihilation operators. We find 
\begin{equation}\label{eq:Hmom} H_N = \sum_{p \in \Lambda^*} p^2 a_p^* a_p + \frac{1}{2N^{1-\kappa}} \sum_{p,q,r \in \Lambda^*} \widehat{V} (r/N^{1-\kappa}) a_{p+r}^* a_q^* a_{p} a_{q+r} 
\end{equation}
where 
\begin{equation}\label{eq:FoT} \widehat{V} (k) = \int_{\bR^3} V (x) e^{-i k \cdot x} dx \end{equation}
is the Fourier transform of $V$, defined for all $k \in \bR^3$ (in fact, (\ref{eq:HN}) is the restriction of (\ref{eq:Hmom}) to the $N\in\NN$-particle sector of the Fock space $\cF$). We can now determine the excitation Hamiltonian $\cL_N$ using the following rules, describing the action of the unitary operator $U_N$ on products of a creation and an annihilation operator (products of the form $a_p^* a_q$ can be thought of as operators mapping $L^2_s (\Lambda^N)$ to itself). For any $p,q \in \Lambda^*_+ = 2\pi \bZ^3 \backslash \{ 0 \}$, we find (see \cite{LNSS}):
\begin{equation}\label{eq:U-rules}
\begin{split} 
U_N \, a^*_0 a_0 \, U_N^* &= N- \cN_+  \\
U_N \, a^*_p a_0 \, U_N^* &= a^*_p \sqrt{N-\cN_+ } \\
U_N \, a^*_0 a_p \, U_N^* &= \sqrt{N-\cN_+ } \, a_p \\
U_N \, a^*_p a_q \, U_N^* &= a^*_p a_q 
\end{split} \end{equation}
We conclude that 
\begin{equation}\label{eq:cLN} \cL_N =  \cL^{(0)}_{N} + \cL^{(2)}_{N} + \cL^{(3)}_{N} + \cL^{(4)}_{N} \end{equation}
with
\begin{equation}\label{eq:cLNj} \begin{split} 
\cL_{N}^{(0)} =\;& \frac{N-1}{2N}N^{\kappa}\widehat{V} (0) (N-\cN_+ ) + \frac{N^{\kappa}\widehat{V} (0)}{2N} \cN_+  (N-\cN_+ ) \\
\cL^{(2)}_{N} =\; &\sum_{p \in \Lambda^*_+} p^2 a_p^* a_p + \sum_{p \in \Lambda_+^*} N^{\kappa}\widehat{V} (p/N^{1-\kappa}) \left[ b_p^* b_p - \frac{1}{N} a_p^* a_p \right] \\ &+ \frac{1}{2} \sum_{p \in \Lambda^*_+} N^{\kappa}\widehat{V} (p/N^{1-\kappa}) \left[ b_p^* b_{-p}^* + b_p b_{-p} \right] \\
\cL^{(3)}_{N} =\; &\frac{1}{\sqrt{N}} \sum_{p,q \in \Lambda_+^* : p+q \not = 0} N^{\kappa}\widehat{V} (p/N^{1-\kappa}) \left[ b^*_{p+q} a^*_{-p} a_q  + a_q^* a_{-p} b_{p+q} \right] \\
\cL^{(4)}_{N} =\; & \frac{1}{2N} \sum_{\substack{p,q \in \Lambda_+^*, r \in \Lambda^*: \\ r \not = -p,-q}} N^{\kappa}\widehat{V} (r/N^{1-\kappa}) a^*_{p+r} a^*_q a_p a_{q+r} 
\end{split} \end{equation}
where we introduced generalized creation and annihilation operators  
\begin{equation}\label{eq:bp-de} 
b^*_p = a^*_p \, \sqrt{\frac{N-\cN_+}{N}} , \qquad \text{and } \quad  b_p = \sqrt{\frac{N-\cN_+}{N}} \, a_p 
\end{equation}
for all $p \in \Lambda^*_+$. Observe that, by (\ref{eq:U-rules}), 
\[ U_N^* b_p^* U_N = a^*_p  \frac{a_0}{\sqrt{N}}, \qquad U_N^* b_p U_N = \frac{a_0^*}{\sqrt{N}} a_p \]
In other words, $b_p^*$ creates a particle with momentum $p \in \Lambda^*_+$ but, at the same time, it annihilates a particle from the condensate; it creates an excitation, preserving the total number of particles in the system. On states exhibiting complete Bose-Einstein condensation in the zero-momentum mode $\ph_0$, we have $a_0 , a_0^* \simeq \sqrt{N}$ and we can therefore expect that $b_p^* \simeq a_p^*$ and that $b_p \simeq a_p$. Modified creation and annihilation operators satisfy the commutation relations 
\begin{equation}\label{eq:comm-bp} \begin{split} [ b_p, b_q^* ] &= \left( 1 - \frac{\cN_+}{N} \right) \delta_{p,q} - \frac{1}{N} a_q^* a_p 
\\ [ b_p, b_q ] &= [b_p^* , b_q^*] = 0 
\end{split} \end{equation}
Furthermore, we find 
\begin{equation}\label{eq:comm2} [ b_p, a_q^* a_r ] = \delta_{pq} b_r, \qquad  [b_p^*, a_q^* a_r] = - \delta_{pr} b_q^* \end{equation}
for all $p,q,r \in \Lambda_+^*$; this implies in particular that $[b_p , \cN_+] = b_p$, $[b_p^*, \cN_+] = - b_p^*$. It is also useful to notice that the operators $b^*_p, b_p$, like the standard creation and annihilation operators $a_p^*, a_p$, can be bounded by the square root of the number of particles operators; we find
\begin{equation*}
\begin{split} 
\| b_p \xi \| &\leq \Big\| \cN_+^{1/2} \Big( \frac{N+1-\cN_+}{N} \Big)^{1/2} \xi \Big\| \leq \| \cN_+^{1/2} \xi \|  \\ 
\| b^*_p \xi \| &\leq \Big\| (\cN_+ +1)^{1/2} \Big( \frac{N-\cN_+ }{N} \Big)^{1/2} \xi \Big\| \leq  \| (\cN_+ + 1)^{1/2} \xi \| 
\end{split} \end{equation*}
for all $\xi \in \cF^{\leq N}_+$. Since $\cN_+  \leq N$ on $\cF_+^{\leq N}$, the operators $b_p^* , b_p$ are bounded, with $\| b_p \|, \| b^*_p \| \leq (N+1)^{1/2}$. 

We can also define modified operator valued distributions 
\[ \check{b}_x = \sqrt{\frac{N-\cN_+}{N}} \, \check{a}_x, \qquad \text{and } \quad  \check{b}^*_x = \check{a}^*_x \, \sqrt{\frac{N-\cN_+}{N}} \]
in position space, for $x \in \Lambda$. The commutation relations (\ref{eq:comm-bp}) take the form 
\begin{equation*}
\begin{split}  [ \check{b}_x, \check{b}_y^* ] &= \left( 1 - \frac{\cN_+}{N} \right) \delta (x-y) - \frac{1}{N} \check{a}_y^* \check{a}_x \\ 
[ \check{b}_x, \check{b}_y ] &= [ \check{b}_x^* , \check{b}_y^*] = 0 
\end{split} \end{equation*}
Moreover, (\ref{eq:comm2}) translates to 
\begin{equation*}
\begin{split}
[\check{b}_x, \check{a}_y^* \check{a}_z] &=\delta (x-y)\check{b}_z, \qquad 
[\check{b}_x^*, \check{a}_y^* \check{a}_z] = -\delta (x-z) \check{b}_y^*
\end{split} \end{equation*}
which also implies that $[ \check{b}_x, \cN_+ ] = \check{b}_x$, $[ \check{b}_x^* , \cN_+ ] = - \check{b}_x^*$. 

\section{Renormalized Excitation Hamiltonian} 
\label{sec:quadra}

Conjugation with $U_N$ extracts, from the original quartic interaction in (\ref{eq:Hmom}), some constant and some quadratic contributions, collected in $\cL^{(0)}_N$ and $\cL^{(2)}_N$ in (\ref{eq:cLNj}). For bosons described by the Hamiltonian (\ref{eq:HN}), this is not enough; there are still large contributions to the energy that are hidden in $\cL^{(3)}_N$ and $\cL^{(4)}_N$. 

To extract the missing energy, we have to take into account correlations. To this end, we consider the ground state solution $f_\ell$ of the Neumann problem 
\begin{equation}\label{eq:scatl} \left[ -\Delta + \frac{1}{2} V \right] f_{\ell} = \lambda_{\ell} f_\ell \end{equation}
on the  ball $|x| \leq N^{1-\kappa}\ell$ (we omit the $N\in\NN$-dependence in the notation for $f_\ell$ and for $\lambda_\ell$; notice that $\lambda_\ell$ scales as $N^{3\kappa-3}$), with the normalization $f_\ell (x) = 1$ if $|x| = N^{1-\kappa} \ell$. By scaling, we observe that $f_\ell (N^{1-\kappa}.)$ satisfies the equation 
\[ \left[ -\Delta + \frac12 N^{2-2\kappa} V (N^{1-\kappa}x) \right] f_\ell (N^{1-\kappa}x) = N^{2-2\kappa} \lambda_\ell f_\ell (N^{1-\kappa}x) \]
on the ball $|x| \leq \ell$. From now on, we fix some $0 < \ell < 1/2$, so that the ball of radius $\ell$ is contained in the box $\Lambda= [-1/2 ; 1/2]^3$. 
We then extend $f_\ell (N^{1-\kappa}.)$ to $\Lambda$, by setting $f_{N} (x) = f_\ell (N^{1-\kappa}x)$, if $|x| \leq \ell$ and $f_{N} (x) = 1$ for $x \in \Lambda$, with $|x| > \ell$. As a consequence,  
\begin{equation}\label{eq:scatlN}
 \left[ -\Delta + \frac12N^{2-2\kappa} V (N^{1-\kappa}.) \right] f_{N} = N^{2-2\kappa} \lambda_\ell f_{N} \chi_\ell, 
\end{equation}
where $\chi_\ell$ denotes the characteristic function of the ball of radius $\ell$. The Fourier coefficients of the function $f_{N}$ are given by 
\begin{equation}\label{eq:fellN} \widehat{f}_{N} (p) = \int_\Lambda f_\ell (N^{1-\kappa}x) e^{-i p \cdot x} dx \end{equation}
for all $p \in \L^*$. Next, we define $w_\ell (x) = 1- f_\ell (x)$ for $|x| \leq N^{1-\kappa} \ell$ and $w_\ell (x) = 0$ for all $|x| > N^{1-\kappa} \ell$. Its rescaled version $w_{N} : \Lambda \to \bR$ is defined through $w_{N} (x) = w_{\ell} (N^{1-\kappa}x)$ if $|x| \leq \ell$ and $w_{N} (x) = 0$ if $x \in \L$ with $|x| > \ell$.  The Fourier coefficients of $w_{N}$ are given by  
\[  \widehat{w}_{N} (p) = \int_{\Lambda} w_\ell (N^{1-\kappa}x) e^{-i p \cdot x} dx = \frac{1}{N^{3-3\kappa}} \widehat{w}_\ell (p/N^{1-\kappa}), \]
where \[ \widehat{w}_\ell (k) = \int_{\bR^3} w_\ell (x) e^{-ik \cdot x} dx \] denotes the  Fourier transform of the (compactly supported) function $w_\ell$. We find $\widehat{f}_{N} (p) = \delta_{p,0} - N^{3\kappa-3} \widehat{w}_\ell (p/N^{1-\kappa})$. {F}rom (\ref{eq:scatlN}), we obtain  
\begin{equation}\label{eq:wellp}
\begin{split}  
- p^2 \widehat{w}_\ell (p/N^{1-\kappa}) +  \frac{N^{2-2\kappa}}{2} \sum_{q \in \L^*} \widehat{V} ((p-q)/N^{1-\kappa}) \widehat{f}_{N} (q) = N^{5-5\kappa} \lambda_\ell \sum_{q \in \L^*} \widehat{\chi}_\ell (p-q) \widehat{f}_{N} (q).
\end{split} 
\end{equation}
The next lemma summarizes important properties of the functions $w_\ell $ and $ f_\ell$. Its proof can be found in \cite[Appendix A]{BBCS} (replacing $N\in\NN$ by $N^{1-\kappa}$ and noting that still $ N^{1-\kappa}\ell \gg 1$ for $N\in\NN$ sufficiently large and fixed $ \ell \in (0;1/2) $).
\begin{lemma} \label{sceqlemma}
Let $V \in L^3 (\bR^3)$ be non-negative, compactly supported and spherically symmetric. Fix $\ell > 0$ and let $f_\ell$ denote the solution of \eqref{eq:scatl}. For $N\in\NN$ large enough the following properties hold true.
\begin{enumerate}
\item [i)] We have 
\begin{equation}\label{eq:lambdaell} 
  \lambda_\ell = \frac{3\frak{a}_0 }{ N^{3-3\kappa}\ell^3} \left(1 +\mathcal{O} \big(\frak{a}_0  / \ell N^{1-\kappa}\big) \right).
\end{equation}
\item[ii)] We have $0\leq f_\ell, w_\ell\leq1$. Moreover there exists a constant $C > 0$ such that 
\begin{equation} \label{eq:Vfa0} 
\left|  \int  V(x) f_\ell (x) dx - 8\pi \frak{a}_0   \right| \leq \frac{C \frak{a}_0^2}{\ell N^{1-\kappa}}.
\end{equation}
\item[iii)] There exists a constant $C>0 $ such that 
	\begin{equation}\label{3.0.scbounds1} 
	w_\ell(x)\leq \frac{C}{|x|+1} \quad\text{ and }\quad |\nabla w_\ell(x)|\leq \frac{C }{x^2+1}. 
	\end{equation}
for all $x \in \bR^3$ and all $N\in\NN$ large enough. 
\item[iv)] There exists a constant $C > 0$ such that 
\[ |\widehat{w}_{N} (p)| \leq \frac{C}{N^{1-\kappa} p^2} \]
for all $p \in \bR^3$ and all $N\in\NN$ large enough (such that $N^{1-\kappa} \geq \ell^{-1}$).  
\end{enumerate}        
\end{lemma}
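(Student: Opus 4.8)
The plan is to reduce everything to the already-established properties of the Gross-Pitaevskii scattering function by the rescaling $N \mapsto N^{1-\kappa}$, as the remark before the statement suggests, and then read off the $N^{1-\kappa}$-dependence of the constants. First I would recall the standard analysis of the zero-energy scattering equation $[-\Delta + \tfrac12 V] f_0 = 0$ on $\bR^3$, whose solution decays like $f_0(x) = 1 - \frak{a}_0/|x|$ at infinity, and then compare $f_\ell$, the solution of the Neumann problem \eqref{eq:scatl} on the ball of radius $N^{1-\kappa}\ell$, to $f_0$. In \cite[Appendix A]{BBCS} this comparison is carried out for a ball of radius $N\ell$; the only feature of the radius that enters the estimates is that it tends to infinity, so the same arguments apply verbatim with $N\ell$ replaced by $N^{1-\kappa}\ell$, which still diverges for fixed $\ell \in (0;1/2)$ and $N \to \infty$. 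Part (i), the asymptotics $\lambda_\ell = 3\frak{a}_0/(N^{3-3\kappa}\ell^3)(1 + \cO(\frak{a}_0/\ell N^{1-\kappa}))$, then follows from the quantitative solvability of the Neumann problem: the eigenvalue is essentially the ratio $\int V f_\ell / (2 |B_{N^{1-\kappa}\ell}|)$ up to a correction governed by how much $f_\ell$ deviates from the infinite-volume profile near the boundary, and the volume of the ball is $\frac{4\pi}{3}(N^{1-\kappa}\ell)^3$.

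Next, for part (ii), non-negativity and the bound $f_\ell, w_\ell \le 1$ are a consequence of the maximum principle applied to \eqref{eq:scatl} together with $V \ge 0$ and the boundary normalization $f_\ell = 1$ on $|x| = N^{1-\kappa}\ell$. The estimate \eqref{eq:Vfa0} for $\int V f_\ell$ follows by writing $\int V f_\ell = 2\lambda_\ell \int f_\ell \chi_\ell$ using \eqref{eq:scatl}, or more directly by comparing $f_\ell$ to $f_0$ on the support of $V$ — on that (fixed, compact) set $\|f_\ell - f_0\|_\infty \lesssim \frak{a}_0/(N^{1-\kappa}\ell)$ since the two solve the same equation and differ only through the boundary condition imposed at radius $N^{1-\kappa}\ell$, and $\int V f_0 = 8\pi\frak{a}_0$ is the classical identity relating the scattering length to the potential. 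Part (iii): since $w_\ell = 1 - f_\ell$ is supported in $|x| \le N^{1-\kappa}\ell$ and solves $-\Delta w_\ell = \tfrac12 V(1-w_\ell) - \lambda_\ell f_\ell \chi_{N^{1-\kappa}\ell}$ with zero boundary data, representing it via the Green's function of the ball and using $0 \le w_\ell \le 1$, $V \in L^3$ with compact support, and the smallness of $\lambda_\ell$ yields the decay $w_\ell(x) \le C/(|x|+1)$ and $|\nabla w_\ell(x)| \le C/(x^2+1)$ uniformly in $N$; this is exactly the content of \cite[Appendix A]{BBCS} after the rescaling. Part (iv) then follows from (iii): we have $\widehat{w}_N(p) = N^{3\kappa-3}\widehat{w}_\ell(p/N^{1-\kappa})$, and from the pointwise bounds on $w_\ell$ and $\nabla w_\ell$ one gets $|\widehat{w}_\ell(k)| \le C/k^2$ for all $k$ (integrating by parts once and using the $1/|x|$, $1/x^2$ decay to control the resulting integrals), so $|\widehat{w}_N(p)| \le C N^{3\kappa-3} \cdot N^{2-2\kappa}/p^2 = C/(N^{1-\kappa}p^2)$.

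The one point requiring genuine care — and the main obstacle — is tracking uniformity in $N$ of all the implicit constants through the rescaling, i.e.\ making sure that the estimates in \cite[Appendix A]{BBCS} that were stated for the scale $N$ really depend only on the largeness of the scattering ball radius and on fixed data ($V$, $\ell$), not on $N$ through some hidden route. Concretely, one must check that $N^{1-\kappa}\ell \ge \ell^{-1}$ (equivalently $N^{1-\kappa} \ge \ell^{-2}$) is the only smallness-of-$1/(\text{radius})$ hypothesis used, which holds for all $N$ large enough at fixed $\ell$, and that the comparison of $f_\ell$ with the infinite-volume profile $f_0$ is quantitative with rate $\cO(1/(N^{1-\kappa}\ell))$ — this is where the error terms in (i) and (ii) come from. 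Once this bookkeeping is done, every assertion is a direct transcription of the corresponding statement in \cite[Appendix A]{BBCS}, and no new analytic input is needed.
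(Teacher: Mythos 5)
Your proposal matches the paper's own treatment exactly: the paper's entire ``proof'' of this lemma is the one-line remark that it follows from \cite[Appendix A]{BBCS} after replacing $N$ by $N^{1-\kappa}$, using only that $N^{1-\kappa}\ell \gg 1$ for fixed $\ell$ and large $N$ --- which is precisely your reduction, and your sketches of the underlying arguments (integrating the Neumann problem for (i)--(ii), the Green's function representation for (iii), the rescaling identity for (iv)) are the standard ones. The only small imprecision is in (iv): the uniform bound $|\widehat{w}_\ell(k)| \leq C/k^2$ comes from the scattering equation itself, $k^2\,\widehat{w}_\ell(k) = \widehat{(-\Delta w_\ell)}(k)$ with $\int|\Delta w_\ell| \leq C$ uniformly in $N$, rather than from a single integration by parts against the pointwise decay of $\nabla w_\ell$, which would only yield $C N^{1-\kappa}\ell/|k|$.
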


We define $\eta: \L^* \to \bR$ through
\begin{equation}\label{eq:defeta}
\eta_p = -N \widehat{w}_{N} (p) = - \frac {N^\kappa} {N^{2-2\kappa}} \widehat{\o}_\ell(p/N^{1-\kappa}).
\end{equation}
In position space, this means that for $ x\in \Lambda$, we have 
		\begin{equation}\label{eq:defetax} \check{\eta}(x) = - N w_\ell( N^{1-\kappa} x),\end{equation}
so that we have in particular the $L^\infty$-bound
		\begin{equation}\label{eq:etainfbnd} \|\check{\eta}\|_\infty \leq CN.
		\end{equation}
Lemma \ref{sceqlemma} also implies
\be \label{eq:modetap}
|\eta_p| \leq \frac{CN^{\kappa}}{|p|^2}
\ee
for all $p \in \L_+^*=2\pi \bZ^3 \backslash \{0\}$, and  for some constant $C>0$ independent of $N\in\NN$ (for $N\in\NN$ large enough). {F}rom (\ref{eq:wellp}), we find the relation  
\begin{equation}\label{eq:eta-scat0}
\begin{split} 
p^2 \eta_p + \frac{1}{2} N^{\kappa}(\widehat{V} (./N^{1-\kappa}) *\widehat{f}_{N}) (p) = N^{3-2\kappa} \l_\ell (\widehat{\chi}_\ell * \widehat{f}_{N}) (p)
\end{split} \end{equation}
or equivalently, expressing the r.h.s. through the coefficients $\eta_p$, 
\begin{equation}\label{eq:eta-scat}
\begin{split} 
p^2 \eta_p + \frac{1}{2} N^{\kappa}\widehat{V} (p/N^{1-\kappa}) & + \frac{1}{2N} \sum_{q \in \Lambda^*} N^{\kappa}\widehat{V} ((p-q)/N^{1-\kappa}) \eta_q \\ &\hspace{1cm} = N^{3-2\kappa} \lambda_\ell \widehat{\chi}_\ell (p) + N^{2-2\kappa} \lambda_\ell \sum_{q \in \Lambda^*} \widehat{\chi}_\ell (p-q) \eta_q.
\end{split} \end{equation}
In our analysis, it is useful to restrict $\eta$ to high momenta. To this end, let $\alpha > 0$ and \begin{equation}\label{eq:PellHPellL1} P_{H}= \{p\in \Lambda_+^*: |p|\geq N^{\alpha}\}.		\end{equation}
We define $ \eta_H\in \ell^2(\Lambda_+^*)$ by
\be \label{eq:defetaH}
\eta_H (p)=\eta_p\, \chi(p \in P_H) = \eta_p \chi (|p| \geq N^{\alpha}) \,.
\ee
Eq. \eqref{eq:modetap} implies that 
\begin{equation}\label{eq:etaHL2}
\| \eta_H \| \leq C N^{\kappa -\a/2}
\end{equation}
and we assume from now on that $\a > 2\kappa$ such that in particular
		\begin{equation}\label{eq:L2eta} \lim_{N\to \infty}\| \eta_H \| = 0. \end{equation}
Notice, on the other hand, that the $H^1$-norm of $\eta$ and $\eta_{H}$ diverge, as $N \to \infty$. From (\ref{eq:defetax}) and Lemma \ref{sceqlemma}, part iii), we find 
\be \label{eq:H1eta}
\sum_{p \in P_H} p^2 |\eta_p|^2  \leq \sum_{p \in \L_+^*} p^2 |\eta_p|^2  = \int |\nabla \check{\eta} (x)|^2 dx \leq C N^{1+\kappa}
\ee
for all $N \in \bN$ large enough. We will mostly use the coefficients $\eta_p$ with $p\neq 0$. Sometimes, however, it will be useful to have an estimate on $\eta_0$ (because Eq. \eqref{eq:eta-scat} involves $\eta_0$). From Lemma \ref{sceqlemma}, part iii), we obtain
\begin{equation}\label{eq:wteta0}  |\eta_0| \leq N^{3\kappa-2} \int_{\bR^3} w_\ell (x) dx \leq C N^{\kappa}\ell^2 \end{equation}

It will also be useful to have bounds for the function $\check{\eta}_H : \L \to \bR$, having Fourier coefficients $\eta_H (p)$ as defined in (\ref{eq:defetaH}). Writing $\eta_H (p) = \eta_p - \eta_p \chi (|p| \leq N^{\alpha})$, we obtain 
\[ \check{\eta}_H (x) = \check{\eta} (x) - \sum_{\substack{p \in \L^* :\\  |p| \leq N^{\alpha}}} \eta_p e^{i p \cdot x} = -N w_\ell (N^{1-\kappa}x) - \sum_{\substack{p \in \L^* :\\  |p| \leq N^{\alpha}}} \eta_p e^{i p \cdot x} \]
so that 
\begin{equation}\label{eq:etax}
|\check{\eta}_H (x)| \leq C N + CN^{\kappa}\sum_{\substack{p \in \L^* :\\  |p| \leq N^{\alpha}}} |p|^{-2} \leq C (N + N^{\alpha+\kappa}) \leq C (N + N^{\alpha+\kappa})
\end{equation}
for all $x \in \L$, if $N \in \bN$ is large enough. 

With the coefficients (\ref{eq:defetaH}), we define the antisymmetric operator 
\begin{equation}\label{eq:genBog} B = \frac{1}{2} \sum_{p \in P_H} \left( \eta_p b_p^* b_{-p}^* - \bar{\eta}_p b_{-p} b_p \right) \end{equation}
and the generalized Bogoliubov transformation $e^B : \cF_+^{\leq N} \to \cF_+^{\leq N}$. A first important observation is that conjugation with this unitary operator does not change the number of particles by too much. The proof of the following Lemma can be found in \cite[Lemma 3.1]{BS} (a similar result has been previously established in \cite{Sei}).
\begin{lemma}\label{lm:Ngrow}
Assume $B$ is defined as in (\ref{eq:genBog}), with the coefficients $\eta_p$ as in (\ref{eq:defeta}), satisfying (\ref{eq:L2eta}). For every $n \in \bN$, there exists a constant 
$C > 0$ such that 
\begin{equation}\label{eq:bd-Beta} e^{-B} (\cN_+ +1)^{n} e^{B} \leq C (\cN_+ +1)^{n}  
\end{equation}
as an operator inequality on $\cF_+^{\leq N}$ (the constant depends only on $\| \eta_H \|$ and on $n \in \bN$). 
\end{lemma}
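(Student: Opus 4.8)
\textbf{Proof plan for Lemma \ref{lm:Ngrow}.}

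The plan is to use the standard Gr\"onwall-type argument for generalized Bogoliubov transformations. Fix $n \in \bN$ and $\xi \in \cF_+^{\leq N}$, and for $s \in [0,1]$ define the function
\[ \varphi_\xi (s) = \langle \xi, e^{-sB} (\cN_+ + 1)^n e^{sB} \xi \rangle. \]
I would differentiate in $s$ to obtain $\varphi_\xi'(s) = \langle \xi, e^{-sB} [(\cN_+ + 1)^n, B] e^{sB} \xi \rangle$, and the goal is to bound this by $C \varphi_\xi(s)$ with a constant depending only on $\| \eta_H \|$ and $n$; the claim \eqref{eq:bd-Beta} then follows from Gr\"onwall's inequality, since $\varphi_\xi(0) = \langle \xi, (\cN_+ + 1)^n \xi \rangle$.

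The key computational step is to control the commutator $[(\cN_+ + 1)^n, B]$. Since $B = \frac12 \sum_{p \in P_H} (\eta_p b_p^* b_{-p}^* - \bar\eta_p b_{-p} b_p)$ and each term $b_p^* b_{-p}^*$ raises $\cN_+$ by $2$ while $b_{-p} b_p$ lowers it by $2$ (using $[b_p^\#, \cN_+] = \mp b_p^\#$), one writes $[(\cN_+ + 1)^n, b_p^* b_{-p}^*] = \big[ (\cN_+ + 1)^n - (\cN_+ + 3)^n \big] b_p^* b_{-p}^*$ and similarly for the conjugate term; the bracketed factor is bounded in operator norm by $C_n (\cN_+ + 1)^{n-1}$ on $\cF_+^{\leq N}$. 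After this, the task reduces to estimating expressions of the form $\sum_{p \in P_H} |\eta_p| \, \| (\cN_+ + 1)^{(n-1)/2} b_p e^{sB} \xi \| \, \| (\cN_+ + 1)^{(n-1)/2} b_{-p} e^{sB} \xi \|$ (and the analogous ones with creation operators), which by Cauchy--Schwarz in $p$, the bounds $\| b_p \zeta \| \leq \| \cN_+^{1/2} \zeta \|$ together with the fact that the $b_p$'s roughly commute with $(\cN_+ + 1)^{k}$ up to harmless shifts, and $\sum_p |\eta_p|^2 = \| \eta_H \|^2 < \infty$, are controlled by $C \| \eta_H \| \, \langle e^{sB} \xi, (\cN_+ + 1)^n e^{sB} \xi \rangle = C \| \eta_H \| \, \varphi_\xi(s)$. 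One has to be slightly careful that the shifts in $\cN_+$ produced by moving operators past powers of $(\cN_+ + 1)$ stay bounded, which is automatic on the truncated Fock space $\cF_+^{\leq N}$ where $\cN_+ \leq N$, but the resulting constant must remain $N$-independent — this is ensured because the shifts are by fixed amounts and $(\cN_+ + c)^k \leq C_{c,k} (\cN_+ + 1)^k$ uniformly.

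The main obstacle is purely bookkeeping: keeping track of the various $\pm$ shifts in $\cN_+$ when commuting the modified operators $b_p, b_p^*$ through powers of $(\cN_+ + 1)$, and verifying that all error constants depend only on $\| \eta_H \|$ and $n$, not on $N$. Since this Lemma is quoted from \cite[Lemma 3.1]{BS} (with an earlier version in \cite{Sei}), I would not reproduce the full estimate here but rather indicate that it follows the scheme above; the only point requiring a remark is that $\| \eta_H \|$ plays the role of the relevant small (or at least bounded) parameter, which is guaranteed by \eqref{eq:L2eta}.
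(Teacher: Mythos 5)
Your Gr\"onwall argument is precisely the approach of the proof the paper defers to (\cite[Lemma 3.1]{BS}); the paper itself does not reproduce it. The plan is correct — the only bookkeeping slip is that the shift goes the other way, $[(\cN_++1)^n, b_p^*b_{-p}^*] = \big[(\cN_++1)^n-(\cN_+-1)^n\big]\,b_p^*b_{-p}^*$ (equivalently $b_p^*b_{-p}^*\big[(\cN_++3)^n-(\cN_++1)^n\big]$), which still yields the $C_n(\cN_++1)^{n-1}$ bound you need, so the estimate closes as described.
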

 
With the generalized Bogoliubov transformation $e^B$, we can now define the renormalized excitation Hamiltonian $\cG_{N} : \cF^{\leq N}_+ \to \cF^{\leq N}_+$ by setting 
\begin{equation}\label{eq:GN} 
\cG_{N} = e^{-B} \cL_N e^{B} = e^{-B} U_N H_N U_N^* e^{B}.
\end{equation}

In the next propositions, we collect important properties of $\cG_{N}$. Recall the notation $\cH_N = \cK + \cV_N$, introduced in (\ref{eq:KcVN}). 
\begin{prop} \label{prop:GN} Let $V\in L^3(\bR^3)$ be compactly supported, pointwise non-negative and spherically symmetric. Let $\cG_N$ be defined as in (\ref{eq:GN}). Assume that the exponent $\alpha$ introduced in (\ref{eq:PellHPellL1}) is such that 
\begin{equation}\label{eq:alp} 
\alpha > 6 \kappa, \qquad 2\alpha + 3 \kappa < 1 
\end{equation} 
Then 
\be \label{eq:GN-prel}
\cG_{N}  = 4 \pi \frak{a}_0 N^{1+\kappa} + \cH_N + \theta_{\cG_{N}}
\ee
and there exists $C>0$ such that, for all $\delta > 0$ and all $N \in \bN$ large enough, we have
\begin{equation}\label{eq:Gbd0}
\pm \theta_{\cG_{N}} \leq \delta \cH_N + C \delta^{-1}N^{\alpha+2\kappa} \cN_+ + CN^{\alpha+2\kappa}
\end{equation}
and the improved lower bound 
\begin{equation}\label{eq:theta-err}
\theta_{\cG_{N}} \geq  - \delta \cH_N - C \delta^{-1} N^{\kappa} \cN_+ - C N^{\alpha+2\kappa}.  
\end{equation}  
Furthermore, for $ \beta>0$, denote by $ \cG_N^{\text{eff}}$ the excitation Hamiltonian
\begin{equation}\begin{split}
		\label{eq:GNeff} \cG^{\text{eff}}_{N}  =&\;4\pi \frak{a}_0 N^{\kappa} (N-\cN_+) + \big[\widehat V(0)-4\pi \frak{a}_0\big]N^{\kappa}\cN_+\frac{(N-\cN_+)}{N}\\
		&   + N^{\kappa}\widehat V(0) \sum_{p\in P_H^c}   a^*_pa_p (1-\cN_+/N) + 4\pi \frak{a}_0 N^{\kappa}\sum_{p\in P_H^c}  \big[ b^*_p b^*_{-p} + b_p b_{-p} \big] \\
		& +  \frac{1}{\sqrt N}\sum_{\substack{ p,q\in\Lambda_+^*: |q|\leq N^{\beta},\\ p+q\neq 0 }} N^{\kappa}\widehat V(p/N^{1-\kappa})\big[ b^*_{p+q}a^*_{-p}a_q+ \emph{h.c.}\big]+\cH_N
		\end{split}\end{equation} 
Then there exists $C > 0$ such that $\cE_{\cG_{N}} = \cG_{N} - \cG^{\text{eff}}_{N}$ is bounded by 
\begin{equation}\label{eq:GeffE} \pm \cE_{\cG_{N}} \leq  C (N^{3\kappa-\alpha/2} + N^{\alpha+3\kappa/2-1/2} + N^{\kappa/2-\beta}) \cH_N + CN^{\alpha+2\kappa}\end{equation}
for all $N\in\NN$ sufficiently large. 

Furthermore, there exists a constant $C > 0$ such that 
\begin{equation} \begin{split}\label{eq:errComm}
\pm i[\cN_{\geq cN^{\gamma}}, \cG_{N}], \;\pm i [\cN_{< cN^{\gamma}}, \cG_{N}]&\leq C ( N^{\kappa +\alpha/2 -\gamma} +N^{\kappa +\gamma/2 } )  (\cH_N+1) \end{split}\end{equation}
for all $\alpha\geq \gamma>0$, $c>0$ fixed (independent of $N\in \mathbb{N}$) and $N\in \bN$ large enough.

Finally, for every $k\in\NN$, there exists a constant $C>0$ such that
		\begin{equation}\label{eq:adjGN}
		\pm \operatorname{ad}^{(k)}_{i\cN_+}(\cG_N) = \pm \big[ i\cN_+, \dots \big[i\cN_+, \cG_N\big]\dots\big]\leq C N^{\kappa +\alpha/6 } (\cH_N+1).
		\end{equation}
\end{prop}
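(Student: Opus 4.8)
The plan is to prove Proposition~\ref{prop:GN} — in particular the three families of bounds \eqref{eq:GeffE}, \eqref{eq:errComm} and \eqref{eq:adjGN} — by the same kind of term-by-term analysis that produces \eqref{eq:GN-prel}--\eqref{eq:theta-err}. Throughout I would work from the identity $\cG_N = e^{-B}\cL_N e^B$ and expand each of the four pieces $\cL^{(0)}_N,\cL^{(2)}_N,\cL^{(3)}_N,\cL^{(4)}_N$ using the standard Duhamel/commutator-series representation $e^{-B}Xe^B = X + \int_0^1 e^{-sB}[X,B]e^{sB}\,ds$, iterated as needed, controlling each nested commutator with $B = \tfrac12\sum_{p\in P_H}(\eta_p b_p^*b_{-p}^* - \bar\eta_p b_{-p}b_p)$. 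The crucial a priori inputs are Lemma~\ref{lm:Ngrow} (so that conjugation does not inflate powers of $\cN_++1$), the $\ell^2$- and $H^1$-bounds \eqref{eq:etaHL2}, \eqref{eq:H1eta} on the restricted scattering coefficients, the pointwise bound \eqref{eq:modetap}, and the commutation relations \eqref{eq:comm-bp}, \eqref{eq:comm2}. Since \eqref{eq:GN-prel} is already granted, one effectively only has to localize \emph{which} terms in $\theta_{\cG_N}$ survive into $\cG_N^{\mathrm{eff}}$ and which can be absorbed into $\cE_{\cG_N}$.

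For \eqref{eq:GeffE}: $\cG_N^{\mathrm{eff}}$ is, by inspection, the part of $\cG_N$ that is either a renormalized constant/quadratic term carrying the full kinetic and potential energy $\cH_N$, plus the ``$b^*b^*+bb$'' pairing on low momenta $P_H^c$, plus the low-momentum truncation $|q|\le N^\beta$ of the cubic term $\cL^{(3)}_N$. So the error $\cE_{\cG_N} = \cG_N - \cG_N^{\mathrm{eff}}$ collects: (i) the commutator remainders generated when $e^{-B}\cdot e^B$ acts on $\cL^{(2)}_N$ and on the constant terms — these are quadratic/quartic in $b$'s with a coefficient that is either $\eta_H\ast\eta_H$-type (hence of size $\|\eta_H\|^2 \lesssim N^{2\kappa-\alpha}$, giving the $N^{3\kappa-\alpha/2}$ gain after a Cauchy--Schwarz against $\cH_N^{1/2}$) or $\eta_H\ast \widehat V$-type; (ii) the part of $\cL^{(3)}_N$ with $|q|>N^\beta$, which is estimated using the kinetic energy $q^2 a_q^*a_q$ to pay for the loss, producing the $N^{\kappa/2-\beta}$ factor; and (iii) genuinely higher-order remainders in the Duhamel expansion, which come with an extra $N^{-1/2}$ per $b$-pair beyond the leading ones and one power of $\|\eta_H\|_{H^1}\lesssim N^{(1+\kappa)/2}$, yielding the $N^{\alpha+3\kappa/2-1/2}$ term. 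Each is bounded by Cauchy--Schwarz, splitting creation/annihilation operators, estimating $a_p$-factors by $\cN_+^{1/2}$ or by $(p^2 a_p^*a_p)^{1/2}\le \cH_N^{1/2}$ as appropriate, and using $\sum_p|\eta_p|^2 p^2 \lesssim N^{1+\kappa}$ to pass from the $\eta$-weighted sums to $\cK\le\cH_N$; the leftover $\cN_+$'s are bounded by $\cK$ via $\cN_+\le\cK$, and the pure number contributions by $CN^{\alpha+2\kappa}$ as in \eqref{eq:Gbd0}.

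For \eqref{eq:errComm} and \eqref{eq:adjGN}: here I would compute the commutators directly using that $\cN_{\ge cN^\gamma}$ (resp.\ $\cN_+$) only fails to commute with terms in $\cG_N$ that change the momentum-occupation, i.e.\ with $\cL^{(3)}_N$, with the off-diagonal pairing in $\cL^{(2)}_N$, and with the $B$-generated terms. Each such term has the schematic form (coefficient)$\times b^\#b^\#$ or (coefficient)$\times b^\# a^\# a^\#$; commuting with $\cN_{\ge cN^\gamma}$ replaces it by the same term restricted so that at least one momentum index lies in the transition region near $N^\gamma$ (for $\cL^{(2)}$-type terms, momenta $\ge N^\gamma$ — but in $\cG_N$ these already have $|p|\ge N^\alpha\ge N^\gamma$, so the full pairing contributes, giving the $N^{\kappa+\alpha/2-\gamma}$ piece after pairing $\eta_p/|p|$ with $|p|a_p$ and Cauchy--Schwarz), and for the cubic term it forces $|q|$ or $|p|$ or $|p+q|$ comparable to $N^\gamma$, which one pays for with $N^\gamma$ of kinetic energy, producing the $N^{\kappa+\gamma/2}$ piece. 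The identical mechanism, iterated $k$ times, gives \eqref{eq:adjGN}: each of the $k$ nested commutators with $i\cN_+$ reproduces a term of the same structure with an $O(1)$ multiplicative combinatorial factor (since $[b_p,\cN_+]=b_p$ etc.\ just reproduce the operators up to signs), so no growth in $N$ beyond the single factor already present; optimizing the split between the quadratic and cubic contributions over the free cutoff used inside the estimate (balancing $N^{\kappa+\alpha/2-\gamma}$ against $N^{\kappa+\gamma/2}$ at $\gamma=\alpha/3$) yields the stated exponent $\kappa+\alpha/6$.

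The main obstacle, I expect, is the bookkeeping of the higher-order Duhamel remainders in \eqref{eq:GeffE}: one must check that every term produced by repeated commutation of $B$ with $\cL^{(3)}_N$ and $\cL^{(4)}_N$ — which can carry the singular $H^1$-weight $\sum p^2|\eta_p|^2\sim N^{1+\kappa}$ — is nevertheless controlled either by an extra $N^{-1/2}$ from a surplus $b$-pair or by extracting a genuine factor of $\cH_N$ rather than just $\cN_+$, so that the net coefficient decays. This is precisely the balance encoded in the three exponents $N^{3\kappa-\alpha/2}$, $N^{\alpha+3\kappa/2-1/2}$, $N^{\kappa/2-\beta}$, and getting all constants to line up requires carefully tracking the splitting of each operator monomial into a ``kinetic'' half and a ``number'' half before applying Cauchy--Schwarz. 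The commutator bounds \eqref{eq:errComm}--\eqref{eq:adjGN}, by contrast, are comparatively mechanical once one has the structural description of the non-commuting terms in hand.
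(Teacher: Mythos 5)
Your proposal follows essentially the same route as the paper's proof in Appendix \ref{sec:GN}: expand $\cG_N$ through the action of $e^B$ on $b_p^{\#}$ (hyperbolic leading terms plus remainders $d_p$, controlled as in Lemma \ref{lm:dp} and, for the commutator bounds, Lemma \ref{lm:commdpNres}), identify $\cG_N^{\text{eff}}$, bound the errors by Cauchy--Schwarz against $\cH_N$ using \eqref{eq:etaHL2}, \eqref{eq:H1eta} and the scattering equation, and obtain \eqref{eq:errComm}--\eqref{eq:adjGN} by commuting the explicit expansion term by term, with the optimization at $\gamma=\alpha/3$ yielding $N^{\kappa+\alpha/6}$ exactly as you describe. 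One small correction: after renormalization the surviving pairing term lives on $P_H^c$ (the high-momentum pairing is cancelled via the scattering equation), so the $N^{\kappa+\alpha/2-\gamma}$ contribution to \eqref{eq:errComm} comes from momenta $cN^\gamma\leq |p|<N^\alpha$ rather than from $|p|\geq N^\alpha$ as your parenthetical suggests.
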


The proof of Prop. \ref{prop:GN} is very similar to the proof of \cite[Prop. 4.2]{BBCS} and \cite[Prop. 3.2]{BBCS2}, with the appropriate modifications dictated by the different scaling of the interaction. The main novelty in Prop. \ref{prop:GN} is the bound (\ref{eq:errComm}) involving commutators of the restricted number of particles operator $\cN_{\geq c N^\gamma}$. This can be obtained similarly to the bounds for $\cE_{\cG_N}$ and for $i [ \cN_+ , \cG_N ]$, because we have a full expansion of the operator 
$\cG_N$ in a sum of terms whose commutators with $\cN_+$ and with $\cN_{\geq cN^\gamma}$ retains essentially the same form. We give a complete proof of Prop. \ref{prop:GN} in  Appendix \ref{sec:GN}. 


\section{Cubic Renormalization} \label{sec:cubic}

From Eq. (\ref{eq:GNeff}), we observe that the cubic terms in $\cG^\text{eff}_N$ still depend on the original interaction, which decays slowly in momentum (in contrast to the quadratic terms in the second line of (\ref{eq:GNeff}), where the sum is now restricted to $P_H^c = \{ p \in \L^*_+ : |p| < N^\alpha \}$). 

To renormalize the cubic terms in (\ref{eq:GNeff}), we are going to conjugate $\cG^\text{eff}_N$ with a unitary operator $e^A$, where the antisymmetric operator $A:\cF_+^{\leq N}\to \cF_+^{\leq N}$ is defined by
		\begin{equation}\label{eq:defA}
		\begin{split}
 		A&= A_1-A_1^*,\hspace{0.6cm} \text{with } \hspace{0.4cm}A_1=\frac1{\sqrt N} \sum_{r\in P_{H}, p \in P_{L}} \eta_r b^*_{r+p}a^*_{-r}a_p .
		\end{split}
		\end{equation} 
The high-momentum set $P_H = \{ p \in \Lambda^*_+ : |p| \geq N^\alpha \}$ is as in (\ref{eq:PellHPellL1}). The low-momentum set $P_L$ is defined by  
\begin{equation}\label{eq:defPHPL} P_L = \{ p \in \L_+^* : |p| \leq N^{\b} \} \end{equation}
with exponent $\beta>0$, that will be chosen as in \eqref{eq:GNeff}. 

Using the unitary operator $e^A$, we define $\cJ_N:\cF_+^{\leq N}\to \cF_+^{\leq N}$ by
\begin{equation}\label{eq:defJN} \cJ_N = e^{-A} \cG_N^{\text{eff}}e^A. 
\end{equation}
Observe here that we only conjugate the main part $\cG_N^\text{eff}$ of the renormalized excitation Hamiltonian $\cG_N$; this makes the analysis a bit simpler (the difference $\cG_N - \cG_N^\text{eff}$ is small and can be estimated before applying the cubic conjugation).
 
The next proposition summarizes important properties of $ \cJ_N$; it can be shown similarly to \cite[Prop. 5.2]{BBCS}, of course with the appropriate changes of the scaling of the interaction. For completeness, we provide a proof in Appendix \ref{sec:JN}. 
\begin{prop} \label{prop:JN} Suppose the exponents $\alpha$ and $\beta$ are such that 
\begin{equation}\label{eq:alphabeta} 
i) \; \alpha > 3 \beta + 2\kappa, \qquad ii) \; 3\alpha /2 + 2 \kappa < 1, \qquad iii) \; \alpha < 5 \beta, \qquad iv) \; \beta > 3\kappa /2, \qquad  v) \; \beta < 1/2 
\end{equation}
Let $ \cJ_N$ be defined as in \eqref{eq:defJN}, let  
		\begin{equation}\begin{split}
		\label{eq:defJNeff} \cJ_{N}^{\text{eff}} =&\;4\pi \frak{a}_0 N^{1+\kappa}  - 4\pi \frak{a}_0 N^{\kappa}\cN_+^2/N +  8\pi \mathfrak{a}_0 N^{\kappa} \sum_{p\in P_H^c} \Big[  b^*_pb_p + \frac12  b^*_p b^*_{-p} + \frac12 b_p b_{-p} \Big] \\
		& +  \frac{8\pi\mathfrak{a}_0N^{\kappa}}{\sqrt N}\sum_{\substack{ p \in P_H^c,q\in P_L: \\ p+q\neq 0 }} \big[ b^*_{p+q}a^*_{-p}a_q+ \emph{h.c.}\big]+\cH_N,
		\end{split}\end{equation} 
and set $\mu = \max (3\alpha/2 +2\kappa-1, 3\kappa/2-\beta)$ ($\mu < 0$ follows from (\ref{eq:alphabeta})). Then, there exists a constant $C > 0$ such that the self-adjoint operator $\cE_{\cJ_{N}}  = \cJ_N - \cJ_N^{\text{eff} }$ satisfies the operator inequality
\begin{equation}\label{eq:propJNerrorbnd}\pm e^A\cE_{\cJ_N}e^{-A} \leq  C(N^{-\beta/2}+ N^{\mu} )\cK + CN^{\mu}\cV_N  + CN^{\mu-\kappa}\cN_+ +CN^{\alpha+2\kappa}(1+ N^{\alpha+\beta/2-1})\end{equation}
in $ \cF_+^{\leq N}$ for all $N\in\NN$ sufficiently large. 
\end{prop}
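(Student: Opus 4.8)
The plan is to expand $\cJ_N = e^{-A}\cG_N^{\text{eff}}e^A$ by applying the Duhamel identity $e^{-A}Xe^A = X + \int_0^1 e^{-sA}[X,A]\,e^{sA}\,ds$ (iterated when the first commutator is not yet negligible) to each of the five pieces of $\cG_N^{\text{eff}}$ in (\ref{eq:GNeff}): the $\cN_+$-dependent ``constant'' part, the diagonal quadratic term $N^\kappa\widehat V(0)\sum_{p\in P_H^c}a_p^*a_p(1-\cN_+/N)$, the off-diagonal quadratic term $4\pi\frak a_0 N^\kappa\sum_{p\in P_H^c}[b_p^*b_{-p}^*+b_pb_{-p}]$, the cubic term $T_3 = N^{-1/2}\sum_{p,|q|\le N^\beta}N^\kappa\widehat V(p/N^{1-\kappa})[b_{p+q}^*a_{-p}^*a_q+\hc]$, and $\cH_N=\cK+\cV_N$. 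Because $A$ is cubic in (modified) creation/annihilation operators, commuting it with a quadratic operator yields a cubic one, with a quartic operator a quintic one, with $T_3$ a quadratic one, etc.; one isolates the pieces that survive to leading order — these assemble $\cJ_N^{\text{eff}}$ — and collects everything else into $\cE_{\cJ_N}$. It is technically convenient to prove the estimate in the conjugated form $\pm e^A\cE_{\cJ_N}e^{-A}\le\cdots$, since $e^A\cE_{\cJ_N}e^{-A} = \cG_N^{\text{eff}} - e^A\cJ_N^{\text{eff}}e^{-A}$; in particular the untouched term $e^{-A}\cH_N e^A$ of $\cJ_N$ then contributes $\cH_N - e^A\cH_N e^{-A} = -\int_0^1 e^{sA}[\cH_N,A]e^{-sA}ds$, which is exactly the object one wants to estimate.

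\textbf{The renormalization.} The crucial computation is $[\cK,A]$ together with the degree-$3$ part of $[\cV_N,A]$. Writing $A_1 = N^{-1/2}\sum_{r\in P_H,\,p\in P_L}\eta_r b_{r+p}^*a_{-r}^*a_p$ and using $[\cK,b_k^*]=k^2b_k^*$, one finds $[\cK,A_1] = N^{-1/2}\sum_{r\in P_H,\,p\in P_L}(2r^2+2r\cdot p)\eta_r\,b_{r+p}^*a_{-r}^*a_p$ plus lower-order corrections from the commutators (\ref{eq:comm-bp})--(\ref{eq:comm2}). The scattering equation (\ref{eq:eta-scat}), in the form $2r^2\eta_r + N^\kappa\widehat V(r/N^{1-\kappa}) + N^{-1}(N^\kappa\widehat V(\cdot/N^{1-\kappa})*\eta)(r) = 2N^{3-2\kappa}\lambda_\ell(\widehat\chi_\ell*\widehat f_N)(r)$, together with (\ref{eq:lambdaell}) (which renders the right side of order $N^{\kappa-2\alpha}$ for $r\in P_H$), shows that for $r\in P_H$ the $-N^\kappa\widehat V(r/N^{1-\kappa})$ contribution of $[\cK,A_1]$ cancels the matching part of $T_3$, the convolution contribution combines with the degree-$3$ part of $[\cV_N,A_1]$, and the leftover cubic operator is negligible; while for $r\in P_H^c$ (where $[\cK,A_1]$ does not contribute) the degree-$3$ part of $[\cV_N,A_1]$ combines with $T_3$, and since $\widehat V(0)+N^{-1}\sum_q\widehat V(q/N^{1-\kappa})\eta_q = \int Vf_\ell + \mathcal{O}(\frak a_0^2/\ell N^{1-\kappa}) = 8\pi\frak a_0 + \mathcal{O}(\cdots)$ by (\ref{eq:Vfa0}), the surviving cubic operator is precisely $8\pi\frak a_0 N^\kappa N^{-1/2}\sum_{p\in P_H^c,\,q\in P_L}[b_{p+q}^*a_{-p}^*a_q+\hc]$, the cubic term of $\cJ_N^{\text{eff}}$. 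In the same spirit, the degree-$2$ part of $[T_3,A]$, combined with the quadratic and $\cN_+$-dependent terms of $\cG_N^{\text{eff}}$ and again using (\ref{eq:eta-scat}), reproduces the remaining terms of $\cJ_N^{\text{eff}}$: the renormalized quadratic term $8\pi\frak a_0 N^\kappa\sum_{p\in P_H^c}[b_p^*b_p+\tfrac12 b_p^*b_{-p}^*+\tfrac12 b_pb_{-p}]$, the constant $4\pi\frak a_0 N^{1+\kappa}$, and the correction $-4\pi\frak a_0 N^\kappa\cN_+^2/N$.

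\textbf{Controlling $\cE_{\cJ_N}$.} Everything not absorbed above — the degree-$5$ part of $[\cV_N,A]$, the degree-$4$ part of $[T_3,A]$, the $2r\cdot p$ and $\widehat\chi_\ell*\widehat f_N$ corrections in $[\cK,A_1]$, the lower-order pieces from (\ref{eq:comm-bp})--(\ref{eq:comm2}), the nested commutators $\operatorname{ad}_A^{(k)}(\cdot)$ with $k\ge2$, and the $s$-integrated conjugated terms $e^{-sA}(\cdots)e^{sA}$ — is collected into $\cE_{\cJ_N}$. To estimate these one uses the pointwise bound $|\eta_r|\le CN^\kappa|r|^{-2}$ from (\ref{eq:modetap}), the $\ell^2$-bound (\ref{eq:etaHL2}) and the $H^1$-bound (\ref{eq:H1eta}) on $\eta_H$; the momentum restrictions $|r|\ge N^\alpha$ (gaining $N^{-\alpha}$-powers) and $|q|\le N^\beta$ (costing $N^\beta$-powers); Cauchy--Schwarz in the momentum sums together with $\|a_p\xi\|,\|b_p\xi\|\le\|\cN_+^{1/2}\xi\|$ and $\cN_+\le\cK$ and $\cV_N$-relative bounds for the interaction; and the growth of $\cN_+$, $\cK$ and $\cV_N$ under conjugation with $e^{\pm A}$ (Lemma \ref{lm:Ngrow} for $\cN_+$, and for $\cK$, $\cV_N$ the cubic structure of $A$ and (\ref{eq:H1eta}), exactly as in the corresponding lemmas of \cite{BBCS}). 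Collecting exponents gives bounds of the form stated in (\ref{eq:propJNerrorbnd}); the hypotheses (\ref{eq:alphabeta}) enter precisely here — (ii) and (iv) make $\mu<0$, (i) controls the degree-$5$ and sub-leading degree-$3$ contributions, (iii) controls the error from restricting the inner momenta to $P_L$, and (v) keeps $\check\chi_\ell$, $\check\eta$ and their rescalings bounded.

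\textbf{Main obstacle.} The hardest part is the combinatorial bookkeeping of $[\cV_N,A]$: its degree-$5$ part must be shown to be controlled by $\delta\cK+\delta\cV_N+CN^{\mu-\kappa}\cN_+ + C(\text{const})$, which requires carefully splitting the coupled momentum sums and exploiting both the $H^1$-bound on $\eta$ and the gain from $|r|\ge N^\alpha$. A closely related difficulty is that the sub-leading pieces of $[\cK,A_1]$ and of the degree-$3$ part of $[\cV_N,A_1]$ — in particular those built from the convolutions $\widehat\chi_\ell*\widehat f_N$ and $N^\kappa\widehat V(\cdot/N^{1-\kappa})*\eta$, which do not decay in the outgoing momentum — must be shown, via (\ref{eq:eta-scat0})--(\ref{eq:eta-scat}) and (\ref{eq:lambdaell}), to be consistent with the $\mathcal{O}(\frak a_0/\ell N^{1-\kappa})$ error already built into the scattering data, hence negligible at the order claimed. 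Finally, establishing the growth of $\cK$ and $\cV_N$ (not just $\cN_+$) under $e^{\pm A}$ is delicate because $A$ does not commute with the kinetic energy; this is exactly where the support restrictions in the definition (\ref{eq:defA}) of $A$ and the bound (\ref{eq:H1eta}) become essential.
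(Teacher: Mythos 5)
Your proposal follows essentially the same route as the paper's proof in Appendix B: a Duhamel/commutator expansion of each piece of $\cG_N^{\text{eff}}$, the cancellation of the original cubic term through the scattering equation applied to $[\cK,A]+[\cV_N,A]$ with the renormalized coefficient $\widehat{Vf_\ell}\approx 8\pi\frak{a}_0$ surviving on $P_H^c$, the quadratic renormalization coming from the degree-two part of the commutator of the cubic term with $A$, and Gronwall-type control of $\cN_+$, $\cK$, $\cV_N$ under $e^{\pm A}$. The strategy and the key cancellations are correctly identified (up to an immaterial sign in the expository Duhamel identity for $\cH_N-e^{A}\cH_Ne^{-A}$), so this is the paper's argument in outline.
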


The bounds for $\cJ_N$ given in Prop. \ref{prop:JN} are still not enough to show Theorem \ref{thm:main}. As we will discuss in the next section, the main problem is the quartic interaction term, contained in $\cH_N$, which still depends on the singular interaction potential (in all other terms on the r.h.s. of (\ref{eq:defJNeff}), the singular potential has been replaced by the regular mean-field type potential, with Fourier transform $8\pi \frak{a}_0 N^\kappa {\bf 1}_{P_H^c} (p)$, supported on momenta $|p| < N^\alpha$). To renormalize the quartic interaction, we will have to conjugate $\cJ^\text{eff}_N$ with yet another unitary operator, this time quartic in creation and annihilation operators. This last conjugation (which will be performed in the next section), will produce error 
terms. These errors will controlled in terms of the observables $\cN_+$, $\cK$ and $\cV_N$ (as in (\ref{eq:propJNerrorbnd})) but also, as we stressed at the end of Section \ref{sec:intro}, in terms of observables having the form $\cN_{\geq N^\gamma}$ (the number of excitations having momentum larger or equal to $N^\gamma$), $\cN^2_{\geq N^\gamma}$, $\cN_{\geq N^\gamma}^3$, $\cK_{\leq N^\gamma}$ (the kinetic energy of excitations with momentum below $N^\gamma$), $\cK_L \cN_{\geq N^\gamma}$. For this reason, we need to control the action of $e^A$ on all these observables. 

First of all, we bound the action of the cubic phase on the restricted number of particles operators $\cN_{\geq \theta} = \sum_{p \in \L^*_+ : |p| \geq \theta} a_p^* a_p$. We will make use of the  pull-through formula $a_p \cN_{\geq \theta} = (\cN_{\geq \theta} + {\bf 1}_{[\theta, \infty)} (p)) a_p$, which in particular implies that  
\begin{equation}\label{eq:Ntheta-bds}
\begin{split}
\| (\cN_{\geq \theta} +1)^{1/2} a_p \xi \| & \leq C \| a_p (\cN_{\geq \theta} +1)^{1/2}\xi \|, \\
\| (\cN_{\geq \theta} +1)^{-1/2} a_p \xi \| & \leq C \| a_p (\cN_{\geq \theta} +1)^{-1/2}\xi \| \, .
\end{split} 
\end{equation} 
\begin{lemma}\label{lm:NresgrowA} Assume the exponents $\alpha, \beta$ satisfy (\ref{eq:alphabeta}) (in fact, here it is enough to assume that $\alpha > 2\kappa$). Let $ k\in\NN_0$, $m=0,1,2$, $0 < \gamma\leq\alpha$, $c \geq 0$ (and $c < 1$ if $\gamma=\alpha$). Then, there exists a constant $C>0$ such that the operator inequalities
		\begin{equation}\label{eq:NresgrowA}
		\begin{split}
		e^{-sA} (\cN_+ + 1)^{k }(\cN_{\geq c N^{\gamma}}+1)^m e^{sA} \leq &\; C(\cN_+ + 1)^{k } (\cN_{\geq  cN^{\gamma}}+1)^m 	
		\end{split}
		\end{equation}
for all $s\in [-1;1] $ and all $N\in\NN$. 
\end{lemma}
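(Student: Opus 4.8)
\textbf{Proof proposal for Lemma \ref{lm:NresgrowA}.}

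The plan is to use Gronwall's inequality with respect to the parameter $s$. For a fixed vector $\xi \in \cF_+^{\leq N}$ (in the appropriate domain), set $\xi_s = e^{sA} \xi$ and define $\Phi(s) = \langle \xi_s, (\cN_+ + 1)^{k}(\cN_{\geq cN^\gamma}+1)^m \xi_s \rangle$. Differentiating in $s$ gives $\Phi'(s) = \langle \xi_s, [(\cN_+ + 1)^{k}(\cN_{\geq cN^\gamma}+1)^m, A] \xi_s \rangle$, so the task reduces to bounding this commutator by $C (\cN_+ + 1)^{k}(\cN_{\geq cN^\gamma}+1)^m$ (up to the usual $\pm$). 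Since $A = A_1 - A_1^*$ with $A_1 = \frac{1}{\sqrt N} \sum_{r\in P_H, p\in P_L} \eta_r b_{r+p}^* a_{-r}^* a_p$, and since the only momenta appearing in $A_1$ with large modulus are $r$ and $r+p$ (both of which have modulus at least $N^\alpha - N^\beta \geq N^\gamma$ for $N$ large, using $\gamma \leq \alpha$ and $\beta < \alpha$), the operator $A_1$ shifts $\cN_{\geq cN^\gamma}$ by at most a bounded amount. More precisely, one checks via the commutation relations \eqref{eq:comm-bp}, \eqref{eq:comm2} and the pull-through formula \eqref{eq:Ntheta-bds} that $[\cN_{\geq cN^\gamma}, A_1]$ and $[\cN_+, A_1]$ are again of the same structural form as $A_1$ (cubic, with one $\eta_r$, summed over the same index sets), hence all commutators of $(\cN_+ + 1)^{k}(\cN_{\geq cN^\gamma}+1)^m$ with $A$ can be written as a finite sum of such cubic expressions sandwiched between powers of $(\cN_+ + 1)$ and $(\cN_{\geq cN^\gamma}+1)$.

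The core estimate is then the following: for $\xi, \xi'$ in the domain,
\[ \Big| \frac{1}{\sqrt N} \sum_{r\in P_H, p\in P_L} \eta_r \langle \xi', b_{r+p}^* a_{-r}^* a_p \xi \rangle \Big| \leq C \| (\cN_+ + 1)^{1/2} \xi' \| \, \| (\cN_+ + 1) \xi \|, \]
obtained by Cauchy--Schwarz in $r, p$ after inserting $1 = |r|/|r| $ and using $\| \eta_H \|^2 = \sum_{r\in P_H} |\eta_r|^2 < \infty$ together with $|P_L| \leq C N^{3\beta}$ controlled against the $1/\sqrt N$ prefactor; here the bound $\| b_{r+p}^* a_{-r}^* a_p \xi \| \leq C \| (\cN_+ + 1) \xi\|$ (localized in the relevant momenta) is used. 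Combined with the pull-through formula \eqref{eq:Ntheta-bds} to move factors of $(\cN_{\geq cN^\gamma}+1)^{1/2}$ past $a_{-r}, a_p$ at the cost of harmless constants (since $-r$ has modulus $\geq N^\gamma$ while $p$ has modulus $\leq N^\beta < N^\gamma$, so only the $a_{-r}$ contributes an index shift, and that by exactly one), this yields $\pm [(\cN_+ + 1)^{k}(\cN_{\geq cN^\gamma}+1)^m, A] \leq C (\cN_+ + 1)^{k}(\cN_{\geq cN^\gamma}+1)^m$. Gronwall's lemma then gives $\Phi(s) \leq e^{C|s|} \Phi(0)$ for $s \in [-1,1]$, which is \eqref{eq:NresgrowA}; the operator inequality follows since $\xi$ was arbitrary.

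The main obstacle I anticipate is purely bookkeeping: one must verify carefully that every commutator generated when expanding $[(\cN_+ + 1)^{k}(\cN_{\geq cN^\gamma}+1)^m, A]$ — using the Leibniz rule repeatedly and the relations $[b_p, \cN_+] = b_p$, $[\check{b}_x, \cN_+]=\check b_x$, etc. — stays within the class of cubic operators whose $\cN_+$- and $\cN_{\geq cN^\gamma}$-relative bound is uniform in $N$, with no hidden factor of $N^{\text{positive}}$ sneaking in. The boundary case $\gamma = \alpha$ with $c<1$ needs slight care: one must confirm that $|r+p| \geq cN^\alpha$ still holds for all $r \in P_H$, $p \in P_L$ and $N$ large, which follows from $|r+p| \geq |r| - |p| \geq N^\alpha - N^\beta$ and $\beta < \alpha$, so that the factor $a_{-r}^*$ (resp. $a_{r+p}^*$ in $A_1^*$) indeed acts only on high-momentum modes counted by $\cN_{\geq cN^\alpha}$. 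Everything else is a routine application of Cauchy--Schwarz, the bounds $\| b_p \xi\| \leq \|\cN_+^{1/2}\xi\|$, $|\eta_r| \leq C N^\kappa |r|^{-2}$ from \eqref{eq:modetap}, and $\|\eta_H\| \to 0$.
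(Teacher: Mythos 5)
Your proposal follows essentially the same route as the paper's proof: Gronwall's lemma for $\varphi_\xi(s)=\langle\xi,e^{-sA}(\cN_++1)^k(\cN_{\geq cN^\gamma}+1)^m e^{sA}\xi\rangle$, the observation that $[\cN_{\geq cN^\gamma},A_1]$ and $[(\cN_++1)^k,A_1]$ reproduce cubic operators of the same form as $A_1$, and Cauchy--Schwarz with $\|\eta_H\|\leq CN^{\kappa-\alpha/2}$ plus the pull-through formula \eqref{eq:Ntheta-bds}. One caution: in your core estimate the $1/\sqrt N$ is not there to beat $|P_L|\leq CN^{3\beta}$ (which would need $3\beta<1/2$, not guaranteed by \eqref{eq:alphabeta}); rather, the $p$-sum is handled by $\sum_{p\in P_L}\|a_p\,\cdot\|^2\leq\langle\cdot,\cN_+\cdot\rangle$, and the $1/\sqrt N$ absorbs the extra half-power of $(\cN_++1)$ coming from the double annihilation $a_{r+p}a_{-r}$ via $\cN_+\leq N$ --- taken literally, your displayed bound $C\|(\cN_++1)^{1/2}\xi'\|\,\|(\cN_++1)\xi\|$ with $\xi'=\xi$ would only give $\Phi'(s)\leq CN^{1/2}\Phi(s)$ and Gronwall would not close uniformly in $N$; with the powers arranged as in the paper's \eqref{eq:NresgrowlemA4} the argument goes through exactly as you intend.
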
		

\begin{proof} The case $m=0$ follows from $m=1$. We start therefore with the case $m=1$. For $\xi \in \cF_+^{\leq N}$, we define the function $ \varphi_\xi:\mathbb{R}\to \mathbb{R}$ by 	
		\[\varphi_\xi (s)  = \langle \xi, e^{-sA} (\cN_+ + 1)^{k }(\cN_{\geq c N^{\gamma}}+1) e^{sA}  \xi \rangle\]
which has derivative
		\begin{equation} \label{eq:NresgrowlemA1}
		\begin{split}
		\partial_s \varphi_\xi (s)& =2 \Re\langle e^{sA} \xi, (\cN_+ + 1)^{k}\big[\cN_{\geq c N^{\gamma}}, A_1 \big]  e^{sA}  \xi \rangle\\
		&\hspace{0.4cm} + 2\Re\langle e^{sA} \xi, \big[(\cN_+ + 1)^{k}, A_1 \big](\cN_{\geq c N^{\gamma}}+1)  e^{sA}  \xi \rangle, 
		\end{split}
		\end{equation}
where $A_1$ as in \eqref{eq:defA}. By the assumptions on $\gamma$ and $c$, we have $N^\alpha\geq  N^{\alpha}-N^{\beta} \geq cN^{\gamma}$ for $N\in \NN$ large enough. This implies in particular that 
		\[ [\cN_{\geq cN^{\gamma} }, b^*_{p+r}] =  b^*_{p+r}, \hspace{0.2cm} [\cN_{\geq cN^{\gamma} }, a^*_{-r}] =  a^*_{-r}, \hspace{0.2cm} [\cN_{\geq cN^\gamma}, a_p] = \chi(|p|\geq cN^\gamma) a_p\]
for $ r\in P_H$ and $p\in P_L$, by \eqref{eq:ccr} and \eqref{eq:comm2}. We then obtain
		\begin{equation}\label{eq:NresgrowlemA2}
		\begin{split}
		\big[\cN_{\geq c N^{\gamma}}, A_1 \big] & = \frac2{\sqrt N} \sum_{\substack{ r\in P_{H}, p \in P_{L} }} \eta_r b^*_{r+p}a^*_{-r}a_p - \frac1{\sqrt N} \sum_{\substack{ r\in P_{H}, p \in P_{L},\\ |p| \geq cN^{\gamma} }} \eta_r b^*_{r+p}a^*_{-r}a_p
		\end{split}
		\end{equation}
as well as
		\begin{equation}\label{eq:NresgrowlemA3}
		\begin{split}
		\big[(\cN_++1)^k, A_1 \big] & = \frac k{\sqrt N} \sum_{r\in P_{H}, p \in P_{L}}  \eta_r b^*_{r+p}a^*_{-r}a_p  (\cN_++\Theta(\cN_+)+1)^{k-1},
		\end{split}
		\end{equation}
for some function $ \Theta: \NN\to (0;1)$ by the mean value theorem. Using the pull-through formula $ \cN_+ a^*_p = a^*_p (\cN_++1)$ and Cauchy-Schwarz, we estimate
		\[\begin{split}
		&\bigg| \frac1{\sqrt N} \sum_{\substack{ r\in P_{H}, p \in P_{L} }} \eta_r  \langle e^{sA}\xi, (\cN_+ + 1)^{k} b^*_{r+p}a^*_{-r}a_p   e^{sA}\xi\rangle\bigg| \\
		& \leq  \frac1{\sqrt N} \bigg(  \sum_{\substack{ r\in P_{H}, p \in P_{L} }}  \| (\cN_{\geq cN^{\gamma} }+1)^{-1/2} a_{r+p}a_{-r}  (\cN_+ + 1)^{k/2}e^{sA}\xi \|^2  \bigg)^{1/2}\\
		&\hspace{1cm} \times  \bigg(  \sum_{\substack{ r\in P_{H}, p \in P_{L} }} \eta_r^2\|(\cN_{\geq cN^{\gamma} }+1)^{1/2} a_p   (\cN_+ + 1)^{k/2} e^{sA}\xi \|^2\bigg)^{1/2}
		\end{split}\]
With the operator inequality $\cN_{\geq cN^{\gamma} } \geq  \cN_{\geq N^{\alpha}} $ and with (\ref{eq:Ntheta-bds}), we find that
		\begin{equation}\label{eq:NresgrowlemA4}
		\begin{split}
		&\bigg| \frac1{\sqrt N} \sum_{\substack{ r\in P_{H}, p \in P_{L} }} \eta_r  \langle e^{sA}\xi, (\cN_+ + 1)^{k} b^*_{r+p}a^*_{-r}a_p   e^{sA}\xi\rangle\bigg| \\ 
		& \leq  \frac C{\sqrt N} \bigg(  \sum_{\substack{ r\in P_{H}, p \in P_{L}:|p+r|\geq cN^{\gamma} }}  \| a_{p+r} (\cN_{\geq cN^{\gamma} }+1)^{-1/2} a_{-r}  (\cN_+ + 1)^{k/2}e^{sA}\xi \|^2  \bigg)^{1/2}\\
		&\hspace{1cm} \times \|\eta_H\| \bigg(  \sum_{\substack{ p \in P_{L} } } \|a_p(\cN_{\geq cN^{\gamma} }+1)^{1/2}    (\cN_+ + 1)^{k/2} e^{sA}\xi \|^2\bigg)^{1/2} \\
		&\leq \frac {CN^{\kappa-\alpha/2}}{\sqrt N} \|(\cN_{\geq N^{\alpha} }+1)^{1/2}   (\cN_+ + 1)^{k/2} e^{sA}\xi \| \|(\cN_{\geq cN^{\gamma} }+1)^{1/2}   (\cN_+ + 1)^{(k+1)/2} e^{sA}\xi \|\\
		&\leq CN^{\kappa-\alpha/2}\|(\cN_{\geq cN^{\gamma} }+1)^{1/2}   (\cN_+ + 1)^{k/2} e^{sA}\xi \|^2  = CN^{\kappa-\alpha/2} \varphi_\xi(s).
		\end{split}\end{equation}
The same arguments show that
		\begin{equation}\label{eq:NresgrowlemA5}\begin{split}
		&\bigg| \frac1{\sqrt N} \sum_{\substack{ r\in P_{H}, p \in P_{L},\\ |p|\geq cN^{\gamma} }} \eta_r  \langle e^{sA}\xi, (\cN_+ + 1)^{k} b^*_{r+p}a^*_{-r}a_p   e^{sA}\xi\rangle\bigg| \\ 
		& \leq  \frac C{\sqrt N} \bigg(  \sum_{\substack{ r\in P_{H}, p \in P_{L}:|p+r|\geq cN^{\gamma} }}  \| a_{p+r} (\cN_{\geq cN^{\gamma} }+1)^{-1/2} a_{-r}  (\cN_+ + 1)^{k/2}e^{sA}\xi \|^2  \bigg)^{1/2}\\
		&\hspace{1cm} \times \|\eta_H\| \bigg(  \sum_{\substack{ p \in P_{L} } } \|a_p(\cN_{\geq cN^{\gamma} }+1)^{1/2}    (\cN_+ + 1)^{k/2} e^{sA}\xi \|^2\bigg)^{1/2} \\
		&\leq CN^{\kappa-\alpha/2} \varphi_\xi(s).
		\end{split}\end{equation}
Finally, we have that 
		\begin{equation}\label{eq:NresgrowlemA7}\begin{split}
		&\bigg | \frac k{\sqrt N} \sum_{r\in P_{H}, p \in P_{L}}  \eta_r \langle e^{sA}\xi,b^*_{r+p}a^*_{-r}a_p  (\cN_++\Theta(\cN_+)+1)^{k-1}(\cN_{\geq cN^{\gamma}} +1)e^{sA}\xi\rangle\bigg| \\ 
		& \leq  \frac C{\sqrt N} \bigg(  \sum_{\substack{ r\in P_{H}, p \in P_{L}:|p+r|\geq cN^{\gamma} }}  \|  a_{r+p}a_{-r}  (\cN_+ + 1)^{(k-1)/2}e^{sA}\xi \|^2  \bigg)^{1/2}\\
		&\hspace{1cm} \times  \bigg(  \sum_{\substack{ r\in P_{H}, p \in P_{L}}} \eta_r^2\| a_p   (\cN_+ + 1)^{(k-1)/2}(\cN_{\geq cN^{\gamma}} +1) e^{sA}\xi \|^2\bigg)^{1/2}\\
		&\leq CN^{\kappa-\alpha/2} \|(\cN_{\geq cN^{\gamma} }+1)^{1/2}   (\cN_+ + 1)^{k/2} e^{sA}\xi \|^2  = CN^{\kappa-\alpha/2} \varphi_\xi(s).
		\end{split}\end{equation}
Recalling \eqref{eq:NresgrowlemA1}, \eqref{eq:NresgrowlemA2} and that $ \alpha\geq 2\kappa$, the bounds \eqref{eq:NresgrowlemA4} to \eqref{eq:NresgrowlemA7} show that
		\[\partial_s\varphi_\xi(s) \leq CN^{\kappa-\alpha/2} \varphi_\xi(s) \leq C \varphi_\xi(s). \]
Since the bounds are independent of $\xi \in \cF_+^{\leq N}$ and the same bounds hold true replacing $ A$ by $-A$ in the definition of $ \varphi_\xi$, the first inequality in \eqref{eq:NresgrowA} follows by Gronwall's Lemma. 

To prove \eqref{eq:NresgrowA} with $m=2$, we proceed similarly. Given $\xi\in\cF_+^{\leq N}$, we define the function $ \psi_\xi:\mathbb{R}\to\mathbb{R}$ by 
		\[\psi_\xi(s) = \langle \xi, e^{-sA} (\cN_+ + 1)^{k }(\cN_{\geq c N^{\gamma}}+1)^2 e^{sA}  \xi \rangle. \]
Its derivative is equal to
		\begin{equation}\label{eq:NresgrowlemA8}\begin{split}
		\partial_s \psi_\xi(s)&=  2\Re \langle e^{sA} \xi, (\cN_+ + 1)^{k}\big[(\cN_{\geq c N^{\gamma}}+1)^2, A_1\big]  e^{sA}  \xi \rangle\\
		&\hspace{0.4cm} + 2\Re \langle e^{sA} \xi, \big[(\cN_+ + 1)^{k}, A_1 \big](\cN_{\geq c N^{\gamma}}+1)^2  e^{sA}  \xi \rangle \\
		&=  2\Re\langle e^{sA} \xi, (\cN_+ + 1)^{k}\big[\cN_{\geq c N^{\gamma}}, \big[\cN_{\geq c N^{\gamma}}, A_1\big]\big]  e^{sA}  \xi \rangle\\
		& \hspace{0.4cm} +4\Re \langle e^{sA} \xi, (\cN_+ + 1)^{k}\big[\cN_{\geq c N^{\gamma}}, A_1\big](\cN_{\geq c N^{\gamma}}+1)  e^{sA}  \xi \rangle \\ 
		&\hspace{0.4cm} + 2\Re\langle e^{sA} \xi, \big[(\cN_+ + 1)^{k}, A_1 \big](\cN_{\geq c N^{\gamma}}+1)^2  e^{sA}  \xi \rangle.
		\end{split}\end{equation}
Comparing the contribution containing the double commutator in the last line on the r.h.s. of the last equation with \eqref{eq:NresgrowlemA2} and using once again that $N^\alpha\geq  N^{\alpha}-N^{\beta} \geq cN^{\gamma}$ for $N\in \NN$ large enough, we observe that 
		\begin{equation}\label{eq:NresgrowlemA9}
		\begin{split}
		\big[\cN_{\geq c N^{\gamma}}, \big[\cN_{\geq c N^{\gamma}}, A_1 \big] \big]& = \frac4{\sqrt N} \sum_{\substack{ r\in P_{H}, p \in P_{L}}} \eta_r b^*_{r+p}a^*_{-r}a_p  - \frac3{\sqrt N} \sum_{\substack{ r\in P_{H}, p \in P_{L},\\ |p| \geq cN^{\gamma} }} \eta_r b^*_{r+p}a^*_{-r}a_p.
		\end{split}
		\end{equation}
Hence, the bounds \eqref{eq:NresgrowlemA4} and \eqref{eq:NresgrowlemA5} prove that
		\[\big| \langle e^{sA} \xi, (\cN_+ + 1)^{k}\big[\cN_{\geq c N^{\gamma}}, \big[\cN_{\geq c N^{\gamma}}, A_1\big]\big]  e^{sA}  \xi \rangle\big|\leq C\varphi_\xi(s) \leq C \psi_\xi(s). \]
To bound the second contribution on the r.h.s. in \eqref{eq:NresgrowlemA8}, we recall \eqref{eq:NresgrowlemA2} and we estimate
		\[\begin{split}
		&\bigg| \frac1{\sqrt N} \sum_{\substack{ r\in P_{H}, p \in P_{L} }} \eta_r  \langle e^{sA}\xi, (\cN_+ + 1)^{k} b^*_{r+p}a^*_{-r}a_p (\cN_{\geq cN^\gamma} +1)   e^{sA}\xi\rangle\bigg| \\ 
		&\hspace{0.4cm}+\bigg| \frac1{\sqrt N} \sum_{\substack{ r\in P_{H}, p \in P_{L},\\ |p|\geq cN^{\gamma} }} \eta_r  \langle e^{sA}\xi, (\cN_+ + 1)^{k} b^*_{r+p}a^*_{-r}a_p (\cN_{\geq cN^\gamma} +1)   e^{sA}\xi\rangle\bigg| \\ 
		& \leq  \frac C{\sqrt N} \bigg(  \sum_{\substack{ r\in P_{H}, p \in P_{L}:|p+r|\geq cN^{\gamma} }}  \| a_{p+r}  a_{-r}  (\cN_+ + 1)^{k/2}e^{sA}\xi \|^2  	\bigg)^{1/2}\\
		&\hspace{1cm} \times \|\eta_H\| \bigg(  \sum_{\substack{ p \in P_{L} } } \|a_p    (\cN_+ + 1)^{k/2} (\cN_{\geq cN^\gamma} +1) e^{sA}\xi \|^2\bigg)^{1/2} \\
		 &\leq CN^{\kappa-\alpha/2}\|(\cN_{\geq cN^{\gamma} }+1)  (\cN_+ + 1)^{k/2} e^{sA}\xi \|^2 = CN^{\kappa-\alpha/2} \psi_\xi(s)
		\end{split}\]		
Finally, the last contribution in \eqref{eq:NresgrowlemA8} can be bounded as in \eqref{eq:NresgrowlemA7}, using \eqref{eq:NresgrowlemA3}. We have
		\[\begin{split}
		&\bigg | \frac k{\sqrt N} \sum_{r\in P_{H}, p \in P_{L}}  \eta_r \langle e^{sA}\xi,b^*_{r+p}a^*_{-r}a_p  (\cN_++\Theta(\cN_+)+1)^{k-1}(\cN_{\geq cN^{\gamma}} +1)^2e^{sA}\xi\rangle\bigg| \\ 
		& \leq  \frac C{\sqrt N} \bigg(  \sum_{\substack{ r\in P_{H}, p \in P_{L}:|p+r|\geq cN^{\gamma} }}  \|  a_{r+p}a_{-r}  (\cN_+ + 1)^{k/2}e^{sA}\xi \|^2  \bigg)^{1/2}\\
		&\hspace{1cm} \times  \bigg(  \sum_{\substack{ r\in P_{H}, p \in P_{L}}} \eta_r^2\| a_p   (\cN_+ + 1)^{(k-2)/2}(\cN_{\geq cN^{\gamma}} +1)^2 e^{sA}\xi \|^2\bigg)^{1/2}\\
		&\leq CN^{\kappa-\alpha/2} \|(\cN_{\geq cN^{\gamma} }+1)   (\cN_+ + 1)^{k/2} e^{sA}\xi \|^2  = CN^{\kappa-\alpha/2} \psi_\xi(s),
		\end{split}\]
where, in the last step, we used that $ \cN_{\geq cN^{\gamma}}\leq \cN_+ $. In conclusion, we have proved that
		\[\partial_s \psi_\xi (s)\leq CN^{\kappa-\alpha/2} \psi_\xi(s)\leq C \psi_\xi(s).\]
Since the bounds are independent of $\xi\in \cF_+^{\leq N}$ and the same bounds hold true replacing $-A$ by $A$ in the definition $\psi_\xi$, Gronwall's lemma implies the last inequality in \eqref{eq:NresgrowA}. 
\end{proof}

We denote the kinetic energy restricted to low momenta by 
\begin{equation}\label{eq:defKres}
		\begin{split}
		\cK_{\leq cN^{\gamma}} = \sum_{p\in \Lambda_+^*: |p|\leq cN^{\gamma}} p^2a^*_pa_p. 
		\end{split}\end{equation}	
We will need the following estimates for the growth of the restricted kinetic energy. 
\begin{lemma}\label{lm:KresgrowA} Assume the exponents $\alpha,\beta$ satisfy (\ref{eq:alphabeta}) (here we only need $\alpha \geq 2\kappa$ and $\alpha > \beta$). Let $0 < \gamma_1, \gamma_2 \leq \alpha$, and $c_1, c_2 \geq 0$ (and also $c_j < 1$, if $\gamma_j = \alpha$, for $j=1,2$). Then, there exists a constant $C>0$ such that the operator inequalities		
		\begin{equation}\label{eq:KresgrowA}
		\begin{split}		
		e^{-sA} \cK_{\leq c_1 N^{\gamma_1}} e^{sA} \leq&\; \cK_{\leq c_1 N^{\gamma_1}}  + N^{2\beta+2\kappa-\alpha-1} (\cN_{\geq \frac12 N^{\alpha}} +1)^2,\\
		e^{-sA} \cK_{\leq c_1 N^{\gamma_1}} (\cN_{\geq c_2 N^{\gamma_2}}+1)e^{sA} \leq&\; \cK_{\leq c_1 N^{\gamma_1}}(\cN_{\geq c_2 N^{\gamma_2}} +1)  \\
		& + N^{2\beta+2\kappa -\alpha-1} (\cN_{\geq c_2 N^{\gamma_2}} +1)^2(\cN_{\geq \frac12 N^{\alpha}} +1)\\
		\end{split}
		\end{equation}
for all $s\in [-1;1] $ and all $N\in\NN$ sufficiently large. 
\end{lemma}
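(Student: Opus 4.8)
The plan is to prove both operator inequalities in \eqref{eq:KresgrowA} by the Gronwall scheme already used for Lemma~\ref{lm:NresgrowA}: for $\xi\in\cF_+^{\leq N}$ one differentiates in $s$ the relevant quadratic form, bounds its derivative by the form itself plus an $s$-independent error supplied by Lemma~\ref{lm:NresgrowA}, and integrates on $[-1,1]$ (the estimate with $-A$ replacing $A$ being identical).

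For the first inequality, put $\varphi_\xi(s)=\langle\xi,e^{-sA}\cK_{\leq c_1N^{\gamma_1}}e^{sA}\xi\rangle$, so that $\partial_s\varphi_\xi(s)=\langle e^{sA}\xi,[\cK_{\leq c_1N^{\gamma_1}},A]\,e^{sA}\xi\rangle$. The first step is the commutator: since $\cK_{\leq c_1N^{\gamma_1}}$ commutes with $\cN_+$ and involves only momenta $|p|\leq c_1N^{\gamma_1}$, whereas in $A_1$ (see \eqref{eq:defA}) the operators $b^*_{r+p}$ and $a^*_{-r}$ carry momenta $|r+p|\geq N^{\alpha}-N^{\beta}\geq\frac12N^{\alpha}$ and $|r|\geq N^{\alpha}$ -- both strictly above $c_1N^{\gamma_1}$ for $N$ large, by $\alpha>\beta$, $\gamma_1\leq\alpha$ and $c_1<1$ when $\gamma_1=\alpha$ -- only the low-momentum operator $a_p$ contributes and
\[
[\cK_{\leq c_1N^{\gamma_1}},A]=-\frac1{\sqrt N}\sum_{r\in P_H,\,p\in P_L}p^2\,\chi(|p|\leq c_1N^{\gamma_1})\,\eta_r\,b^*_{r+p}a^*_{-r}a_p+\mathrm{h.c.}
\]
I then estimate this with Cauchy--Schwarz, pairing the high-momentum pair $a_{r+p}a_{-r}$ against the single low-momentum operator $|p|\,a_p$: in the first factor I use $p^2\leq N^{2\beta}$ (valid since $p\in P_L$) and reduce $\sum_{r,p}\|a_{r+p}a_{-r}e^{sA}\xi\|^2$ to $\|(\cN_{\geq\frac12N^{\alpha}}+1)e^{sA}\xi\|^2$ exactly as in the proof of Lemma~\ref{lm:NresgrowA} (the momenta $r+p$ all exceed $\frac12N^{\alpha}$, and the $r$-sum uses the pull-through formula \eqref{eq:Ntheta-bds}); in the second factor the remaining $p^2$ rebuilds $\cK_{\leq c_1N^{\gamma_1}}$ while $\sum_r\eta_r^2=\|\eta_H\|^2\leq CN^{2\kappa-\alpha}$ (by \eqref{eq:etaHL2}) supplies the rest. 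This gives, for every $\delta>0$,
\[
|\partial_s\varphi_\xi(s)|\leq\frac{CN^{\beta+\kappa-\alpha/2}}{\sqrt N}\,\big\|(\cN_{\geq\frac12N^{\alpha}}+1)e^{sA}\xi\big\|\,\varphi_\xi(s)^{1/2}\leq\delta\,\varphi_\xi(s)+C\delta^{-1}N^{2\beta+2\kappa-\alpha-1}\big\|(\cN_{\geq\frac12N^{\alpha}}+1)e^{sA}\xi\big\|^2.
\]
By Lemma~\ref{lm:NresgrowA} (with $k=0$, $m=2$, $\gamma=\alpha$, $c=\frac12$) the last norm is at most $C\langle\xi,(\cN_{\geq\frac12N^{\alpha}}+1)^2\xi\rangle$, uniformly in $s\in[-1,1]$, and Gronwall's lemma gives the first inequality in \eqref{eq:KresgrowA}.

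For the second inequality I set $\psi_\xi(s)=\langle\xi,e^{-sA}\cK_{\leq c_1N^{\gamma_1}}(\cN_{\geq c_2N^{\gamma_2}}+1)e^{sA}\xi\rangle\geq0$ (the two operators commute) and use the Leibniz rule to write
\[
[\cK_{\leq c_1N^{\gamma_1}}(\cN_{\geq c_2N^{\gamma_2}}+1),A]=\cK_{\leq c_1N^{\gamma_1}}[\cN_{\geq c_2N^{\gamma_2}},A]+[\cK_{\leq c_1N^{\gamma_1}},A](\cN_{\geq c_2N^{\gamma_2}}+1),
\]
where $[\cN_{\geq c_2N^{\gamma_2}},A]$ is the cubic operator of \eqref{eq:NresgrowlemA2} (with $A$ in place of $A_1$; the extra cutoff $\chi(|p|\geq c_2N^{\gamma_2})$ only helps) and $[\cK_{\leq c_1N^{\gamma_1}},A]$ the one above. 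Each contribution is estimated by the same Cauchy--Schwarz, now carrying one additional factor $(\cN_{\geq c_2N^{\gamma_2}}+1)^{1/2}$ through the creation/annihilation operators by \eqref{eq:Ntheta-bds}; the $|p|\,a_p$ factor is again matched with a $p$ to reconstruct $\cK_{\leq c_1N^{\gamma_1}}^{1/2}(\cN_{\geq c_2N^{\gamma_2}}+1)^{1/2}=\psi_\xi(s)^{1/2}$, and the restricted number operators are distributed onto the high-momentum factors so as to land on $(\cN_{\geq c_2N^{\gamma_2}}+1)$ (from $a_{r+p}$) and $(\cN_{\geq\frac12N^{\alpha}}+1)^{1/2}$ (from $a_{-r}$). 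One obtains
\[
|\partial_s\psi_\xi(s)|\leq\delta\,\psi_\xi(s)+C\delta^{-1}N^{2\beta+2\kappa-\alpha-1}\,\big\|(\cN_{\geq c_2N^{\gamma_2}}+1)(\cN_{\geq\frac12N^{\alpha}}+1)^{1/2}e^{sA}\xi\big\|^2,
\]
whose error equals $\langle e^{sA}\xi,(\cN_{\geq c_2N^{\gamma_2}}+1)^2(\cN_{\geq\frac12N^{\alpha}}+1)e^{sA}\xi\rangle$; its $s$-dependence is controlled by Lemma~\ref{lm:NresgrowA} (or, since two distinct restricted families appear, by the obvious analogue of its Gronwall proof for a product of two such operators). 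Gronwall's lemma then yields the second inequality.

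The step I expect to be the main obstacle is the bookkeeping in the Cauchy--Schwarz estimate. The single genuinely low-momentum operator $a_p$ (with $|p|\leq c_1N^{\gamma_1}$, which is controlled neither by $\cN_{\geq\frac12N^{\alpha}}$ nor by $\cN_{\geq c_2N^{\gamma_2}}$) must always be paired with a factor $|p|$ to rebuild $\cK_{\leq c_1N^{\gamma_1}}^{1/2}$, while the two genuinely high-momentum operators $a_{r+p}$, $a_{-r}$ supply the restricted number operators and the leftover $p^2$ is absorbed by the crude bound $|p|\leq N^{\beta}$ -- this is exactly where the exponent $N^{2\beta}$ in the error terms originates. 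A secondary point, as in Lemma~\ref{lm:NresgrowA}, is to ensure the $s$-dependence of every error term is tamed using the previously established growth estimates for number operators rather than by the quantity under study, so the differential inequalities remain linear and Gronwall applies.
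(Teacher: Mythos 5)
Your proposal is correct and follows essentially the same route as the paper: the same Gronwall scheme, the same observation that $b^*_{r+p}$ and $a^*_{-r}$ commute with $\cK_{\leq c_1N^{\gamma_1}}$ so that only $a_p$ contributes to the commutator, the same Cauchy--Schwarz pairing (one $|p|$ rebuilding $\cK_{\leq c_1N^{\gamma_1}}^{1/2}$, the crude bound $|p|\leq N^{\beta}$ on the other factor, and $\|\eta_H\|\leq CN^{\kappa-\alpha/2}$), and the same Leibniz splitting plus invocation of Lemma~\ref{lm:NresgrowA} for the second inequality. Your side remark that the error term $(\cN_{\geq c_2N^{\gamma_2}}+1)^2(\cN_{\geq\frac12N^{\alpha}}+1)$ involves two distinct restricted families and hence, strictly, the obvious extension of Lemma~\ref{lm:NresgrowA} is what is needed, is accurate and is handled in the paper in exactly the same (tacit) way.
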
	
\begin{proof} Like the previous Lemma \ref{lm:NresgrowA}, this is an application of Gronwall's lemma. Let us start to prove the first inequality in \eqref{eq:KresgrowA}. Fix $\xi \in \cF_+^{\leq N}$ and define $\varphi_\xi:\mathbb{R}\to \mathbb{R} $ by $\varphi_\xi (s) = \langle \xi, e^{-sA}  \cK_{\leq c_1 N^{\gamma_1}} e^{sA}\xi \rangle $ such that
		\[\begin{split} 
		\partial_s \varphi_\xi(s) &= 2\Re\langle \xi, e^{-sA} [  \cK_{\leq c_1 N^{\gamma_1}} , A_1 ]   e^{sA}\xi \rangle.\\
		\end{split}\]
We notice first that 
		\[ \big[ \cK_{\leq c_1 N^{\gamma_1}}, b^*_{p+r} \big] = \big[ \cK_{\leq c_1 N^{\gamma_1}}, a^*_{-r} \big]=0 \]
if $ r\in P_H $ and $p\in P_L$, because $ |r|, |p+r| \geq N^{\alpha}- N^{\beta} > c_1 N^{\gamma_1}$ for all $N\in\NN$. 
Using the commutation relations \eqref{eq:ccr}, we then compute
		\begin{equation}\label{eq:KresgrowA3}
		\begin{split}
		[ \cK_{\leq c_1 N^{\gamma_1}}  , A_1 ] & =  - \frac1{\sqrt N} \sum_{\substack{ r\in P_{H}, p \in P_{L}: |p|\leq c_1 N^{\gamma_1} }} p^2 \eta_r b^*_{r+p}a^*_{-r}a_p.
		\end{split}
		\end{equation}
With \eqref{eq:KresgrowA3} and $|p|\leq N^{\beta}$ for $p\in P_L$, we then find that
		\begin{equation}\label{eq:KresgrowA4}
		\begin{split}
		&\big| \langle \xi, e^{-sA} [ \cK_{\leq c_1 N^{\gamma_1}}, A_1  ]  e^{sA}\xi \rangle\big| \\
		&\leq \frac {CN^{\beta}}{\sqrt N} \sum_{\substack{ r\in P_{H}, p \in P_{L}: |p|\leq c_1 N^{\gamma_1}  }} |p| |\eta_r| \|  a_{r+p}a_{-r} e^{sA}\xi \| \|  a_pe^{sA}\xi \| \\
		&\leq \frac {CN^{\beta+\kappa-\alpha/2}}{\sqrt N} \| (\cN_{\geq \frac12 N^\alpha }+1) e^{sA}\xi \| \| \cK_{\leq c_1 N^{\gamma_1}}^{1/2}e^{sA}\xi \|.
		\end{split}
		\end{equation}
Finally, using Lemma \ref{lm:NresgrowA} (with $ c=\frac12$, $ \gamma=\alpha$ and $N\in\NN$ sufficiently large), we conclude
		\begin{equation*}
		\begin{split}
		 \partial_s\varphi_\xi(s)  &\leq CN^{\beta+\kappa-\alpha/2-1/2} \| (\cN_{\geq \frac12 N^\alpha }+1) e^{sA}\xi \| \| \cK_{\leq c_1 N^{\gamma_1}}^{1/2}e^{sA}\xi \| \\
		&\leq CN^{2\beta+2\kappa-\alpha-1} \langle \xi, (\cN_{\geq \frac12 N^\alpha }+1)^2 \xi \rangle + C \varphi_\xi(s).
		\end{split}
		\end{equation*}
This proves the first inequality in \eqref{eq:KresgrowA}, by Gronwall's lemma. 

Next, let us prove the second inequality in \eqref{eq:KresgrowA}. We define $ \psi_\xi:\mathbb{R}\to \mathbb{R}$ by 
		\[ \psi_\xi(s) = \langle \xi, e^{-sA}  \cK_{\leq c_1 N^{\gamma_1}} (\cN_{\geq c_2 N^{\gamma_2}}+1) e^{sA}\xi \rangle, \]
and we compute
		\[\begin{split}
		\partial_s \psi_\xi(s) &= 2\Re \langle \xi, e^{-sA}  \big[ \cK_{\leq c_1 N^{\gamma_1}}, A_1\big] (\cN_{\geq c_2 N^{\gamma_2}}+1) e^{sA}\xi \rangle\\
		& \hspace{0.4cm} + 2\Re \langle \xi, e^{-sA}  \cK_{\leq c_1 N^{\gamma_1}} \big[ \cN_{\geq c_2 N^{\gamma_2}}, A_1 \big] e^{sA}\xi \rangle.
		\end{split}\]
First, we proceed as in \eqref{eq:KresgrowA4} and obtain with (\ref{eq:Ntheta-bds}) that
		\begin{equation}\label{eq:KresgrowA5}
		\begin{split}
		&\big| \langle \xi, e^{-sA} [ \cK_{\leq c_1 N^{\gamma_1}}, A_1 ](\cN_{\geq c_2 N^{\gamma_2}}+1)  e^{sA}\xi \rangle\big| \\
		&\leq \frac {CN^{\beta}}{\sqrt N} \sum_{\substack{ r\in P_{H}, p \in P_{L}:\\ |p|\leq c_1 N^{\gamma_1}  }} |p| |\eta_r| \| a_{r+p}(\cN_{\geq c_2 N^{\gamma_2}}+1)^{1/2} a_{-r} e^{sA}\xi \| \|a_p (\cN_{\geq c_2 N^{\gamma_2}}+1)^{1/2}e^{sA}\xi \| \\
		&\leq  \frac {CN^{\beta+\kappa-\alpha/2}}{\sqrt N} \| (\cN_{\geq c_2 N^{\gamma_2} }+1)(\cN_{\geq \frac12 N^\alpha }+1)^{1/2}e^{sD}\xi \| \| \cK_{\leq c_1 N^{\gamma_1}}^{1/2}(\cN_{\geq c_2 N^{\gamma_2}}+1)^{1/2}e^{sA}\xi \|.
		\end{split}
		\end{equation}
Eq. \eqref{eq:KresgrowA5} and Lemma \ref{lm:NresgrowA} then imply
		\begin{equation}\label{eq:KresgrowA6}\begin{split}
		&\big| \langle \xi, e^{-sA} [ \cK_{\leq c_1 N^{\gamma_1}}, A_1 ](\cN_{\geq c_2 N^{\gamma_2}}+1)  e^{sA}\xi \rangle\big| \\
		&\hspace{1.5cm}\leq CN^{2\beta+2\kappa-\alpha-1}  \langle \xi, (\cN_{\geq c_2 N^{\gamma_2} }+1)^2(\cN_{\geq \frac12 N^\alpha }+1)\xi \rangle +  C\psi_\xi(s).
		\end{split} \end{equation}
Next, we recall the identity in \eqref{eq:NresgrowlemA2} and that 
		\[ \big[ \cK_{\leq c_1 N^{\gamma_1}}, b^*_{p+r} \big] = \big[ \cK_{\leq c_1 N^{\gamma_1}}, a^*_{-r} \big]=0 \]
whenever $ r\in P_H, p \in P_L$ and $N\in\NN$, by assumption on $ c_1$ and $\gamma_1$. We then estimate
		\begin{equation}\label{eq:KresgrowA7}\begin{split}
		&\big| \langle \xi, e^{-sA}  \cK_{\leq c_1 N^{\gamma_1}}\big[ \cN_{\geq c_2 N^{\gamma_2}}, A_1\big]  e^{sA}\xi \rangle\big| \\
		&\leq \frac {C}{\sqrt N} \sum_{\substack{ r\in P_{H}, p \in P_{L},\\ v\in \Lambda_+^*:  |v|\leq c_1N^{\gamma_1} }} |v|^2 |\eta_r| \| a_{r+p}(\cN_{\geq c_2 N^{\gamma_2} }+1)^{-1/2} a_{-r} a_v e^{sD}\xi \|\\
		&\hspace{5cm}\times \|a_p (\cN_{\geq c_2 N^{\gamma_2} }+1)^{1/2}a_v e^{sD}\xi \| \\
		&\leq C N^{\kappa - \alpha/2} \langle e^{sA}\xi,  \cK_{\leq c_1 N^{\gamma_1}} (\cN_{\geq c_2 N^{\gamma_2}}+1) e^{sA}\xi \rangle \leq  C \psi_\xi(s).
		\end{split}\end{equation} 
Hence, putting \eqref{eq:KresgrowA6} and \eqref{eq:KresgrowA7} together, we have proved that
		\[ \partial_s \psi_\xi(s) \leq  CN^{2\beta+2\kappa-\alpha-1}  \langle \xi, (\cN_{\geq c_2 N^{\gamma_2} }+1)^2(\cN_{\geq \frac12 N^\alpha }+1)\xi \rangle+C \psi_\xi(s).\]
This implies the second bound in \eqref{eq:KresgrowA}, by Gronwall's lemma.
\end{proof}	

Next, we seek a bound for the growth of the potential energy operator. To this end, we first compute the commutator of $\cV_N$ with the antisymmetric operator $A$. We introduce here the shorthand notation for the low-momentum part of the kinetic energy 
\begin{equation}\label{eq:defKL} \cK_L= \sum_{p\in\Lambda_+^*: |p|\leq N^{\beta} } p^2 a^*_pa_p = \sum_{p\in P_L } p^2 a^*_pa_p.
\end{equation}

\begin{prop}\label{prop:comm1VN} Assume the exponents $\alpha,\beta$ satisfy (\ref{eq:alphabeta}). There exists a constant $C>0$ such that 
		\begin{equation} \label{eq:propcomm1VN}
		\begin{split}
		[\cV_N, A] & = \frac{1}{\sqrt N} \sum_{\substack{ u\in \Lambda_+^*, p\in P_L:\\ p+u\neq 0 }} N^{\kappa} (\widehat{V}(./N^{1-\kappa})\ast  \eta/N)(u) \big[ b^*_{p+u} a^*_{-u}a_p +\emph{h.c.}\big] + \cE_{[\cV_N,A]}
		\end{split}
		\end{equation}
where the self-adjoint operator $\cE_{[\cV_N,A]} $ satisfies 
		\begin{equation}\label{eq:propcommVNAerror}\begin{split}
		\pm \cE_{[\cV_N,A]}&\;\leq \delta \cV_N   + \delta^{-1}CN^{\kappa-2\beta -1} \cK_L (\cN_{\geq \frac12 N^{\alpha}}+1)+ \delta^{-1}C N^{2\alpha  +3\kappa-2} \cN_+\\
		&\;\hspace{0.4cm} +  \delta^{-1}CN^{\kappa -1} (\cN_{\geq \frac12 N^{\alpha}}+1)^2 
		\end{split}\end{equation}	
for all $\delta>0$ and for all $N\in\NN$ sufficiently large.
\end{prop}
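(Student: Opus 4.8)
Since $\cV_N=\cV_N^*$ and $A=A_1-A_1^*$ we have $[\cV_N,A]=[\cV_N,A_1]+[\cV_N,A_1]^*$, so it suffices to analyse $[\cV_N,A_1]$. Writing $b^*_{r+p}=a^*_{r+p}\sqrt{(N-\cN_+)/N}$, I split $A_1=\widetilde A_1+(A_1-\widetilde A_1)$, with $\widetilde A_1=\frac1{\sqrt N}\sum_{r\in P_H,\,p\in P_L}\eta_r\,a^*_{r+p}a^*_{-r}a_p$; the remainder carries a factor $\sqrt{(N-1-\cN_+)/N}-1=\mathcal O(\cN_+/N)$ and $[\cV_N,A_1-\widetilde A_1]$ will be absorbed in the error. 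To compute $[\cV_N,\widetilde A_1]$, I would expand each product of a monomial of $\cV_N$ with a monomial of $\widetilde A_1$ by moving the $\widetilde A_1$-monomial to the left through the $\cV_N$-monomial, using \eqref{eq:ccr}. Since the $\cV_N$-monomial has two creation and two annihilation operators and the $\widetilde A_1$-monomial has two creation and one annihilation operator, all normal-ordered contributions to $[\cV_N,\widetilde A_1]$ are quintic in creation and annihilation operators, \emph{except} the ones in which both annihilation operators of $\cV_N$ get contracted with the two creation operators of $\widetilde A_1$; those are cubic, and no linear term occurs.

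\textbf{Main term.} In the cubic contributions only the annihilation operator $a_p$, $p\in P_L$, of $\widetilde A_1$ survives, and the two creation operators of $\cV_N$ survive as $a^*_{p+u}a^*_{-u}$; the momentum left free after the two Kronecker deltas produces a convolution. Summing the two contributions corresponding to the two ways of pairing the two annihilation operators of $\cV_N$ with the two creation operators of $\widetilde A_1$ (which turns the prefactor $1/(2N)$ of $\cV_N$ into $1/N$), the cubic part of $[\cV_N,\widetilde A_1]$ equals
\[
\frac1{\sqrt N}\sum_{\substack{u\in\Lambda_+^*,\,p\in P_L:\\ p+u\neq0}}N^\kappa\big(\widehat V(\cdot/N^{1-\kappa})*\eta_H/N\big)(u)\,a^*_{p+u}a^*_{-u}a_p .
\]
Replacing $\eta_H$ by $\eta$ (the difference involves $\widehat V(\cdot/N^{1-\kappa})*(\eta-\eta_H)$ with $\eta-\eta_H=\eta\,\chi(|\cdot|<N^\alpha)$), replacing $a^*_{p+u}$ by $b^*_{p+u}$ (the difference carries a factor $1-\sqrt{(N-\cN_+)/N}=\mathcal O(\cN_+/N)$), and adding the adjoint coming from $[\cV_N,A_1]^*$, I obtain exactly the main term of \eqref{eq:propcomm1VN}. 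Up to the factor $2$, its coefficient is one of the terms of the scattering equation \eqref{eq:eta-scat}; this is the combination which, together with $[\cK,A]$ and the cubic part of $\cG_N^{\text{eff}}$, will renormalize the cubic interaction.

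\textbf{Error bounds.} The operator $\cE_{[\cV_N,A]}$ thus consists of (i) the quintic part of $[\cV_N,\widetilde A_1]$ and its adjoint, and (ii) the corrections coming from $\eta_H\mapsto\eta$, from $a^*\mapsto b^*$, and from $[\cV_N,A_1-\widetilde A_1]$, each with an extra factor $\cN_+/N$ or an $\ell^2$-gain. For each of these one uses that, because $V\geq0$,
\[
\langle\xi,\cV_N\xi\rangle=\tfrac{N^{2-2\kappa}}2\int_{\Lambda^2}V(N^{1-\kappa}(x-y))\,\|\check a_y\check a_x\xi\|^2\,dx\,dy\geq0,
\]
so a $\widehat V$-weighted pair of annihilation operators is controlled by $\cV_N^{1/2}$ via Cauchy-Schwarz; the remaining factors are estimated with $|\eta_p|\leq CN^\kappa|p|^{-2}$ \eqref{eq:modetap}, $\|\eta_H\|\leq CN^{\kappa-\alpha/2}$ \eqref{eq:etaHL2}, $\sum_p p^2|\eta_p|^2\leq CN^{1+\kappa}$ \eqref{eq:H1eta}, the pull-through relations \eqref{eq:Ntheta-bds}, $\|a_p\phi\|\leq\|\cN_+^{1/2}\phi\|$, $|\widehat V(k)|\leq\|V\|_1$, and $\|\widehat V(\cdot/N^{1-\kappa})\|_{\ell^2(\Lambda^*)}\leq CN^{(3-3\kappa)/2}$ (from $V\in L^2$). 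The momentum cut-offs are essential: the contracted creation operator of $\widetilde A_1$ has momentum of modulus $\geq N^\alpha-N^\beta\geq\frac12N^\alpha$, which forces a surviving annihilation operator of $\cV_N$ onto a mode with momentum $\geq\frac12N^\alpha$ and thus brings in $\cN_{\geq\frac12N^\alpha}$, while $|p|\leq N^\beta$ for $p\in P_L$ allows one to trade a factor $p^2\leq N^{2\beta}$ for $\cK_L$. Applying $ab\leq\delta a^2+\delta^{-1}b^2$ to each factor $\cV_N^{1/2}$ produces the $\delta\cV_N+\delta^{-1}(\cdots)$ structure of \eqref{eq:propcommVNAerror}: the term $N^{\kappa-2\beta-1}\cK_L(\cN_{\geq\frac12N^\alpha}+1)$ comes from the quintic pieces where a factor $p^2$, $p\in P_L$, can be extracted, and $N^{2\alpha+3\kappa-2}\cN_+$ and $N^{\kappa-1}(\cN_{\geq\frac12N^\alpha}+1)^2$ from the remaining pieces, in which one pays instead powers of $N^\alpha$ from the $\eta$-sums.

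\textbf{Main difficulty.} The delicate step is the estimate of the quintic terms: one must distribute the decay $|\eta_r|\leq CN^\kappa|r|^{-2}$ and the constraint $|r|\geq N^\alpha$ between an $\ell^2$-sum and a pointwise bound in the optimal way in order to reach \emph{precisely} the exponents appearing in \eqref{eq:propcommVNAerror}, and one must be careful that the annihilation pair paired with $\cV_N^{1/2}$ is a genuinely uncontracted $\cV_N$-pair. Keeping track of the numerous small corrections produced by the factors $\sqrt{(N-\cN_+)/N}$ and by the exclusion $r\neq-p,-q$ in $\cV_N$, each of which is down by a power of $N$, is the remaining (routine) bookkeeping.
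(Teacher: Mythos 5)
Your proposal is correct and follows essentially the same route as the paper: reduce to $[\cV_N,A_1]+\mathrm{h.c.}$, extract the cubic main term (with the convolution $\widehat V(\cdot/N^{1-\kappa})\ast\eta/N$ arising from the double contraction, and the $\eta_H\to\eta$ correction restricted to $P_H^c\cup\{0\}$ as a cubic error), and bound the remaining quintic terms in position space by pairing an uncontracted $\widehat V$-weighted annihilation pair with $\cV_N^{1/2}$ and using $|\eta_r|\leq CN^{\kappa}|r|^{-2}$, $\|\eta_H\|\leq CN^{\kappa-\alpha/2}$, the cut-offs $|r|\geq N^\alpha$, $|p|\leq N^\beta$, and a weighted Cauchy--Schwarz to produce the $\delta\cV_N+\delta^{-1}(\cdots)$ structure. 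The only (cosmetic) difference is that you first strip the $\sqrt{(N-\cN_+)/N}$ factor from $A_1$, whereas the paper keeps the $b^*$ operators throughout its error terms $\Theta_1,\dots,\Theta_4$.
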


\begin{proof} From \eqref{eq:defA} we have 
		\[[\cV_N, A] = [\cV_N, A_1] +\text{h.c.}\]
Following \cite[Prop. 8.1]{BBCS}, we find
		\begin{equation}\label{eq:commVNA11}\begin{split}
		[\cV_N, A_1]+\text{h.c.} =&\; \frac{1}{\sqrt N} \sum^*_{u\in \Lambda_+^*, v\in P_L} N^{\kappa} (\widehat{V}(./N^{1-\kappa})\ast  \eta/N)(u) b^*_{u+v} a^*_{-u}a_v \\
		& + \Theta_{1}+\Theta_{2}+\Theta_{3}+\Theta_{4} +\text{h.c.},
		\end{split}\end{equation}
where
		\begin{equation}\label{eq:commVNA12}\begin{split}
		\Theta_{1} &= - \frac{1}{N^{3/2}}\sum^*_{\substack{u \in\Lambda^*, v \in P_L, \\ r \in P_H^c\cup\{0\}}} N^{\kappa} \widehat{V}((u-r)/N^{1-\kappa})\eta_r b^*_{u+v} a_{-u}^* a_v,\\
		 \Theta_{2} &= \frac{1}{N^{3/2}}\sum^*_{\substack{u\in \Lambda^*,p\in \Lambda_+^*,\\ r\in P_{H} , v\in P_{L} }} N^{\kappa} \widehat{V} (u/N^{1-\kappa}) \eta_r b_{p+u}^* a_{v+r-u}^*a_{-r}^*a_{p}a_{v} ,\\
		 \Theta_{3} &= \frac{1}{N^{3/2}}\sum^*_{\substack{u\in \Lambda^*,p\in \Lambda_+^*,\\ r\in P_{H}, v\in P_{L} }}N^{\kappa}  \widehat{V} (u/N^{1-\kappa}) \eta_r b_{v+r}^* a_{p+u}^*a_{-r-u}^*a_{p}a_{v} ,\\
		 \Theta_{4} &= -\frac{1}{N^{3/2}}\sum^*_{\substack{u\in \Lambda^*,p\in \Lambda_+^*,\\ r\in P_{H} , v\in P_{L} }}N^{\kappa}  \widehat{V} (u/N^{1-\kappa}) \eta_r  b_{v+r}^* a_{-r}^*a_{p+u}^*a_{p}a_{v+u} .
		\end{split}\end{equation}
Here and in the following the notation $ \sum^*$ indicates that we only sum over those momenta for which the arguments of the creation and annihilation operators are non-zero. The first term on the r.h.s. of \eqref{eq:commVNA11} appears explicitly in \eqref{eq:propcomm1VN}, so let us estimate next the size of the operators $ \Theta_{1}$ to $\Theta_{4}$, defined in \eqref{eq:commVNA12}. The bounds can be obtained similarly as in the proof of \cite[Prop. 8.1]{BBCS}. 

Consider first $\Theta_{1}$. For $\xi\in \cF_+^{\leq N}$, we switch to position space and find 
\begin{equation}\label{eq:Theta11} \begin{split}
	|\langle \xi, \Theta_{1}\xi \rangle |\leq &\; \frac{1}{N^{1/2}} \sum_{ r\in P_H^c}  |\eta_r|\bigg( \int_{\Lambda^2}dxdy\; N^{2-2\kappa}V(N^{1-\kappa}(x-y))  \|\check{b}_x \check{a}_y \xi \|^2 \bigg)^{1/2}\\
	&\;\hspace{1.5cm} \times \bigg( \int_{\Lambda^2}dxdy\; N^{2-2\kappa}V(N^{1-\kappa}(x-y))   \Big\| \sum_{v\in P_L} e^{ivx}a_v \xi \Big\|^2 \bigg)^{1/2}\\
	\leq &\; CN^{\alpha  +3\kappa/2-1} \| \cV_N^{1/2}\xi \| \bigg( \int_{\Lambda}dx\; e^{i(v-v')x} \sum_{v, v'\in P_L} \langle\xi, a^*_{v'}a_{v} \xi\rangle \bigg)^{1/2}\\
	\leq&\;  CN^{\alpha  +3\kappa/2-1}\| \cV_N^{1/2}\xi \|\| \cN_{\leq N^{\beta}}^{1/2}\xi\|.
	\end{split}\end{equation}
The term $\Theta_{2}$ on the r.h.s. of (\ref{eq:commVNA12}) can be controlled by 
\[\begin{split}
	|\langle \xi, \Theta_{2}\xi \rangle |=&\; \bigg| \frac{1}{ N^{1/2} } \int_{\Lambda^2}dxdy \;N^{2-2\kappa}V(N^{1-\kappa}(x-y))\!\!\!\!\!\sum_{r\in P_H,  v\in P_{L} }\!\!\! e^{ivy}e^{iry} \eta_r \langle \xi, \check{b}_{x}^*\check{a}_y^*a^*_{-r} \check{a}_x a_v\xi \rangle \bigg|  \\
	\leq &\; \frac{ \|{\eta}_{H}\|  }{N^{1/2}} \bigg [ \int_{\Lambda^2}dxdy\; N^{2-2\kappa}V(N^{1-\kappa}(x-y)) \sum_{  v\in P_{L} }|v|^{-2}  \| \check{b}_x \check{a}_y  \xi \|^2 \bigg)^{1/2}\\
	&\;\hspace{0cm} \times \bigg( \int_{\Lambda^2}dxdy\; N^{2-2\kappa}V(N^{1-\kappa}(x-y)) \sum_{  v\in P_{L} } |v|^{2} \|(\cN_{\geq \frac12 N^{\alpha}}+1)^{1/2}\check{a}_x a_v \xi \|^2 \bigg)^{1/2}\\	
	\leq&\;  CN^{ \beta/2+3\kappa/2 -\alpha/2-1/2}\| \cV_N^{1/2}\xi \|\| \cK_L^{1/2}(\cN_{\geq \frac12 N^{\alpha}}+1)^{1/2}\xi\|.
	\end{split}\]
In the last step we used (\ref{eq:Ntheta-bds}) to estimate 
		\begin{equation}\label{eq:Nresax}\begin{split} \int_\Lambda dx\; \|(\cN_{\geq \frac12 N^{\alpha}}+1)^{1/2}\check{a}_x \xi\|^2 &= \sum_{p\in\Lambda_+^*}  \|(\cN_{\geq \frac12 N^{\alpha}}+1)^{1/2} a_p \xi\|^2\\
		&\leq C \sum_{p\in\Lambda_+^*}  \| a_p(\cN_{\geq \frac12 N^{\alpha}}+1)^{1/2}  \xi\|^2 \\ &= C\|\cN_+^{1/2}(\cN_{\geq \frac12 N^{\alpha}}+1)^{1/2} \xi\|^2 
		\end{split} \end{equation}
for any $\xi \in \cF_+^{\leq N}$. The contributions $\Theta_{3}$ and $\Theta_{4}$ can be bounded similarly. We find
	\[\begin{split}
	|\langle \xi, \Theta_{3}\xi \rangle |=&\; \bigg| \frac{1}{ N^{1/2} } \int_{\Lambda^2}dxdy \;N^{2-2\kappa}V(N^{1-\kappa}(x-y))\!\!\!\sum_{r \in P_H, v\in P_L } \!\!\! e^{-iry} \eta_r\langle \xi, b_{v+r}^*\check{a}_x^*\check{a}^*_y \check{a}_x a_v \xi \rangle\bigg|  \\
	\leq &\; \frac{C\|\eta_H\|}{N^{1/2}} \bigg( \int_{\Lambda^2}dxdy\; N^{2-2\kappa}V(N^{1-\kappa}(x-y))   \sum_{ v\in P_{L} }|v|^{-2} \|   \check{a}_x \check{a}_y  \xi \|^2 \bigg)^{1/2}\\
	&\;\hspace{0cm} \times \bigg( \int_{\Lambda^2}dxdy\; N^{2-2\kappa}V(N^{1-\kappa}(x-y))\sum_{ v\in P_{L} }  |v|^{2} \|(\cN_{\geq \frac12 N^{\alpha}}+1)^{1/2}\check{a}_x a_v \xi \|^2 \bigg)^{1/2}\\	
	\leq&\; CN^{\beta/2 + 3\kappa/2 -\alpha/2-1/2}  \| \cV_N^{1/2}\xi \|\| \cK_L^{1/2}(\cN_{\geq \frac12 N^{\alpha}}+1)^{1/2}\xi\|
	\end{split}\]
as well as
	\[ \begin{split}
	|\langle \xi, \Theta_{4}\xi \rangle |=&\; \bigg| \frac{1}{ N^{1/2} } \int_{\Lambda^2}dxdy \;N^{2-2\kappa}V(N^{1-\kappa}(x-y))\!\!\!\!\sum_{r\in P_H, v\in P_L} \!\!\!\!\eta_r e^{-ivy} \langle \xi, b_{v+r}^*a^*_{-r}\check{a}^*_x \check{a}_x \check{a}_y \xi \rangle\bigg|  \\
	\leq &\; \frac{C\|\eta_H\|}{N^{1/2}} \bigg [ \int_{\Lambda^2} dx dy \; N^{2-2\kappa}V(N^{1-\kappa}(x-y))\sum_{v\in P_{L}}\| \check{a}_x \check{a}_y  \xi \|^2 \bigg)^{1/2}\\
	&\;\hspace{0.4cm} \times \bigg [ \int_{\Lambda^2}dxdy \; N^{2-2\kappa}V(N^{1-\kappa}(x-y)) \sum_{r\in P_H,  v\in P_{L}} 
	\| \check{a}_x a_{v+r}a_{-r} \xi \|^2 \bigg)^{1/2}\\	
	\leq&\;  CN^{3\beta/2 + 3\kappa/2 -\alpha/2-1/2} \| \cV_N^{1/2}\xi \|\| (\cN_{\geq \frac12 N^{\alpha}}+1)\xi\|.
	\end{split} \]
Summarizing (using $\alpha > 3\beta + 2 \kappa$) we proved that 
		\begin{equation} \label{eq:commVNA13}
		\begin{split}
		\pm \sum_{i=1}^4(\Theta_{i} +\text{h.c.}) &\;\leq \delta \cV_N +\delta^{-1}C N^{2\alpha  +3\kappa-2} \cN_+ + \delta^{-1}CN^{\kappa-2\beta -1} \cK_L (\cN_{\geq \frac12 N^{\alpha}}+1)\\
		&\;\hspace{0.4cm} +  \delta^{-1}CN^{\kappa -1} (\cN_{\geq \frac12 N^{\alpha}}+1)^2 
		\end{split}
		\end{equation}
for any $\delta>0$. Setting $\cE_{[\cV_N,A]} = \sum_{i=1}^4(\Theta_{i} +\text{h.c.}) $, this proves the claim.
\end{proof}

From Proposition \ref{prop:comm1VN} we immediately get a bound for the action of $e^A$ on $\cV_N$. 
\begin{cor}\label{cor:VNgrowA} Assume the exponents $\alpha,\beta$ satisfy (\ref{eq:alphabeta}).  Then there exists a constant $C>0$ such that 
		\begin{equation} \label{eq:corVNgrowA}
		\begin{split}
		e^{-sA} \cV_N e^{sA} &\leq C \cV_N +C(N^\kappa +N^{2\alpha+3\kappa-2}) (\cN_++1) \\
		& \hspace{0.4cm}+CN^{\kappa-2\beta-1} \cK_L (\cN_{\geq \frac12 N^{\alpha} }+1) + C N^{\kappa-3\beta-2} (\cN_{\geq \frac12 N^{\alpha} }+1)^3.
		\end{split}\end{equation}	
for all $s\in [-1;1]$ and $N\in\NN$ large enough.
\end{cor}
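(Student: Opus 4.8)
The plan is to run the Gronwall scheme used repeatedly in this section. I would fix $\xi\in\cF_+^{\leq N}$, set $\varphi_\xi(s)=\langle\xi,e^{-sA}\cV_N e^{sA}\xi\rangle$, and compute $\partial_s\varphi_\xi(s)=\langle e^{sA}\xi,[\cV_N,A]\,e^{sA}\xi\rangle$. By Proposition \ref{prop:comm1VN}, $[\cV_N,A]$ splits into the main cubic term
\[ M=\frac{1}{\sqrt N}\sum_{\substack{u\in\Lambda_+^*,\,p\in P_L:\\ p+u\neq0}}N^\kappa\big(\widehat V(./N^{1-\kappa})\ast\eta/N\big)(u)\big[b^*_{p+u}a^*_{-u}a_p+\text{h.c.}\big] \]
plus the error $\cE_{[\cV_N,A]}$ already controlled by \eqref{eq:propcommVNAerror}, so the only genuinely new input is an a priori upper bound on $\pm M$. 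For this I would pass to position space: by \eqref{eq:defetax} one has $\check\eta(x)=-Nw_\ell(N^{1-\kappa}x)$, hence, in position space, the coefficient $N^\kappa(\widehat V(./N^{1-\kappa})\ast\eta/N)$ becomes the function $h=-N^{3-2\kappa}\,V(N^{1-\kappa}\cdot)\,w_\ell(N^{1-\kappa}\cdot)$, so that $M$ takes the form $\tfrac{1}{\sqrt N}\int_{\Lambda^2}h(x-y)\big[\check b_x^*\check a_y^*\big(\sum_{p\in P_L}e^{ip\cdot x}a_p\big)+\text{h.c.}\big]\,dx\,dy$, up to irrelevant signs in the exponent. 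Since $0\leq w_\ell\leq1$ by Lemma \ref{sceqlemma} ii), we have $|h(z)|\leq N^{3-2\kappa}V(N^{1-\kappa}z)$; splitting this weight by Cauchy--Schwarz so that one factor $N^{2-2\kappa}V(N^{1-\kappa}(x-y))$ reconstitutes $\cV_N$ (using $\|\check b_x\check a_y\xi\|\leq C\|\check a_x\check a_y\xi\|$) while the leftover $N^{4-2\kappa}V(N^{1-\kappa}(x-y))$ is integrated in $y$ (which produces a factor $N^{1+\kappa}$ and the operator $\cN_{\leq N^\beta}$), I expect to get $|\langle\xi,M\xi\rangle|\leq CN^{\kappa/2}\langle\xi,\cV_N\xi\rangle^{1/2}\langle\xi,\cN_+\xi\rangle^{1/2}$, hence $\pm M\leq\tfrac14\cV_N+CN^\kappa(\cN_++1)$.

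Combining this with \eqref{eq:propcommVNAerror} (with $\delta$ fixed) gives a bound of the shape
\[ \pm[\cV_N,A]\leq\tfrac12\cV_N+C\big(N^\kappa+N^{2\alpha+3\kappa-2}\big)(\cN_++1)+CN^{\kappa-2\beta-1}\cK_L(\cN_{\geq\frac12N^\alpha}+1)+CN^{\kappa-1}(\cN_{\geq\frac12N^\alpha}+1)^2. \]
Inserting this into $\partial_s\varphi_\xi(s)$, the $\tfrac12\cV_N$ piece produces $\tfrac12\varphi_\xi(s)$ (the Gronwall term), and each remaining contribution has the form $\langle e^{sA}\xi,T\,e^{sA}\xi\rangle=\langle\xi,e^{-sA}Te^{sA}\xi\rangle$, which I would bound uniformly for $s\in[-1;1]$ using the growth estimates already in hand: Lemma \ref{lm:NresgrowA} for $e^{-sA}(\cN_++1)e^{sA}$ and for $e^{-sA}(\cN_{\geq\frac12N^\alpha}+1)^2e^{sA}$, and the second inequality of Lemma \ref{lm:KresgrowA} (applied with $\gamma_1=\beta$, $\gamma_2=\alpha$, $c_1=1$, $c_2=\tfrac12$, legitimate since $\beta<\alpha$) for $e^{-sA}\cK_L(\cN_{\geq\frac12N^\alpha}+1)e^{sA}$, the latter producing an extra term $N^{2\beta+2\kappa-\alpha-1}(\cN_{\geq\frac12N^\alpha}+1)^3$. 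Two elementary reductions then match the output to \eqref{eq:corVNgrowA}: $N^{\kappa-2\beta-1}\cdot N^{2\beta+2\kappa-\alpha-1}=N^{3\kappa-\alpha-2}\leq N^{\kappa-3\beta-2}$ exactly because $\alpha>3\beta+2\kappa$, and $N^{\kappa-1}(\cN_{\geq\frac12N^\alpha}+1)^2\leq CN^\kappa(\cN_++1)$ because $\cN_+\leq N$ on $\cF_+^{\leq N}$. This gives $\partial_s\varphi_\xi(s)\leq\tfrac12\varphi_\xi(s)+\langle\xi,F\xi\rangle$ with $F$ equal to the right-hand side of \eqref{eq:corVNgrowA} minus its $C\cV_N$ term, and Gronwall's lemma on $[-1;1]$ (and the same argument with $-A$ for $s<0$) finishes the proof.

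I do not expect a real obstacle here, since everything rests on Proposition \ref{prop:comm1VN} and Lemmas \ref{lm:NresgrowA}--\ref{lm:KresgrowA}. The only step requiring some care is the estimate of the main term $M$: one must arrange the Cauchy--Schwarz split so that precisely one copy of the interaction weight regenerates $\cV_N$ and the leftover integrates to the asserted power $N^\kappa$ (not something larger), and then keep track of the various $N$-powers so that they collapse exactly onto the right-hand side of \eqref{eq:corVNgrowA}.
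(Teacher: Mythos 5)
Your proposal is correct and follows essentially the same route as the paper: Gronwall applied to $\varphi_\xi(s)=\langle\xi,e^{-sA}\cV_N e^{sA}\xi\rangle$, the main cubic term of $[\cV_N,A]$ estimated in position space via the $L^\infty$ bound on $\check\eta$ (equivalently $0\le w_\ell\le 1$) and a Cauchy--Schwarz split yielding $CN^{\kappa/2}\|\cV_N^{1/2}\cdot\|\,\|\cN_+^{1/2}\cdot\|$, the remainder handled by \eqref{eq:propcommVNAerror} with fixed $\delta$, and the conjugated observables controlled by Lemmas \ref{lm:NresgrowA} and \ref{lm:KresgrowA}. Your explicit tracking of the exponents (e.g.\ $N^{\kappa-2\beta-1}\cdot N^{2\beta+2\kappa-\alpha-1}\le N^{\kappa-3\beta-2}$ via $\alpha>3\beta+2\kappa$) correctly fills in the details the paper leaves implicit.
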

\begin{proof} We apply Gronwall's lemma. Given $\xi\in\cF_+^{\leq N}$, we define $\varphi_\xi(s) = \langle\xi, e^{-sA }\cV_N e^{sA}\xi\rangle $
and compute its derivative s.t.
		\[\partial_s \varphi_\xi(s) =\langle\xi, e^{-sA }[\cV_N,A] e^{sA}\xi\rangle.  \]
Hence, we can apply \eqref{eq:propcomm1VN} and estimate 
		\[\begin{split} 
		&\bigg|\frac{1}{\sqrt N} \sum_{\substack{ u\in \Lambda_+^*, v\in P_L:\\ v+u\neq 0 }} N^{\kappa} \langle e^{sA }\xi, (\widehat{V}(./N^{1-\kappa})\ast  \eta/N)(u)b^*_{v+u} a^*_{-u}a_v e^{sA}\xi\rangle\bigg| \\
		& \leq \frac{N^{\kappa/2}\|\check{\eta}\|_\infty}{N} \bigg( \int_{\Lambda^2}dxdy\; N^{2-2\kappa}V(N^{1-\kappa}(x-y)) \|\check{a}_x \check{a}_y e^{sA}\xi\|^2 \bigg)^{1/2}\\
		&\hspace{3cm} \times \bigg( \int_{\Lambda^2}dxdy\; N^{3-3\kappa}V(N^{1-\kappa}(x-y)) \Big\|\sum_{v\in P_L} e^{ivx}a_v e^{sA}\xi\|^2 \bigg)^{1/2}\\		
		&\leq CN^{\kappa/2}\|\cV_N^{1/2} e^{sA}\xi\| \| \cN_{\leq N^\beta} e^{sA}\xi \| \leq   CN^{\kappa} \langle\xi, e^{-sA}\cN_+ e^{sA}\xi\rangle + C \varphi_\xi(s).
		\end{split}\]
Here, we used \eqref{eq:etainfbnd}, which shows that $\|\check{\eta}\|_\infty\leq CN $. Using Lemma \ref{lm:NresgrowA}, this simplifies to
		\begin{equation}\label{eq:corVN1}\begin{split} 
		&\bigg|\frac{1}{\sqrt N} \sum_{\substack{ u\in \Lambda_+^*, v\in P_L:\\ v+u\neq 0 }} N^{\kappa} \langle e^{sD }\xi, (\widehat{V}(./N^{1-\kappa})\ast  \eta/N)(u)b^*_{u+v} a^*_{-u}a_v e^{sD}\xi\rangle\bigg| \\
		&\leq C \varphi_\xi(s)  +  CN^\kappa\langle\xi,  (\cN_++1) \xi\rangle.
		\end{split}\end{equation}
Together with \eqref{eq:propcomm1VN}, the bound \eqref{eq:propcommVNAerror} (choosing $\delta=1$) and an application of Lemma \ref{lm:NresgrowA} as well as of Lemma \ref{lm:KresgrowA}, the claim follows from Gronwall's lemma.
\end{proof}


\section{Quartic Renormalization}
\label{sec:quartic} 

To explain why the bounds for $\cJ_N$ obtained in Prop. \ref{prop:JN} are not enough to show Theorem \ref{thm:main}, we introduce, for $ r\in \Lambda_+^*$, the operators 		
\begin{equation}\label{eq:defcrdr}
		c^*_r = \frac{1}{\sqrt N} \sum_{\substack{v\in\Lambda_+^*:v\neq -r,\\v\in P_L, v+r\in P_L^c }} a^*_{v+r}a_v, \hspace{1cm} e^*_r = \frac{1}{2 \sqrt N} \sum_{\substack{v\in\Lambda_+^*:v\neq -r,\\v\in P_L, v+r\in P_L }}a^*_{v+r}a_v.
		\end{equation}
We denote the adjoints of $ c^*_r$ and $e^*_r$ by $c_r$ and $e_r$, respectively. Notice in particular that $ e^*_r = e_{-r}$ for all $r\in\Lambda_+^*$. A straightforward computation shows that 
		\begin{equation}\label{eq:cubicbcd}\begin{split}
		 &\frac{8\pi\mathfrak{a}_0N^{\kappa}}{\sqrt N}\sum_{\substack{ p \in P_H^c,q\in P_L: \\ p+q\neq 0 }} \big[ b^*_{p+q}a^*_{-p}a_q+ \text{h.c.}\big]\\
		 & \hspace{1cm}= 8\pi\mathfrak{a}_0N^{\kappa} \sum_{p\in P_H^c} \Big[ b^*_{-p}e_{-p} + e^*_{-p}b_{-p} + b^*_{-p}e^*_{p} + e_{p}b_{-p} + b^*_{-p}c^*_{p} + c_p b_{-p} \Big].
		 \end{split} \end{equation}
Together with \eqref{eq:defJNeff}, this suggests to bound the Hamiltonian $ \cJ_N$ from below by completing the square in the operators $g_r^* := b^*_r + c^*_r + e^*_r$ and $g_r := b_r + c_r + e_r$, for $ r\in P_H^c\subset \Lambda_+^*$. A better look at (\ref{eq:defJNeff}) reveals, however, that several terms that are needed to complete the square are still hidden in the energy $\cH_N$. Since these terms are not small, we need to extract them from $\cH_N$ by conjugation with a unitary operator $e^{D}$, with 		
\begin{equation}\label{eq:defD}
\begin{split}
 D&= D_1 -D_1^*, \hspace{0.6cm} \text{ where }\hspace{0.4cm} D_1 = \frac1{2 N} \sum_{r\in P_{H}, p, q \in P_{L}} \eta_r a^*_{p+r}a^*_{q-r}a_pa_q. 
\end{split}
\end{equation} 
Since $[D,\cN_+] = 0$, we have the identity
\begin{equation}\label{eq:NgrowD} e^{-sD} (\cN_++1)^k e^{sD}  = (\cN_++1)^k
\end{equation}	
for all $k \in \NN$. 

Using $e^D$, we define the final excitation Hamiltonian 
\begin{equation}\label{eq:defMN} \cM_N = e^{-D} \cJ_N^{\text{eff}} e^D.
\end{equation}
The next proposition provides an important lower bound for $\cM_N$. Its proof is given in Section \ref{sec:MN}. 
\begin{prop} \label{prop:MN} Suppose the exponents $\alpha$ (in the definition of the set $P_H$ in (\ref{eq:PellHPellL1})) and $\beta$ (in the definition of the set $P_L$ in (\ref{eq:defPHPL})) are such that 
\begin{equation}\label{eq:condab} i) \hspace{0.2cm} \alpha> 3\beta+2\kappa ,\hspace{0.5cm} ii)\hspace{0.2cm} 1> \alpha+\beta +2\kappa ,\hspace{0.5cm} iii)\hspace{0.2cm} 5\beta >\alpha , \hspace{0.5cm} iv)\hspace{0.2cm} \beta > 3\kappa,  \hspace{0.5cm} v)\hspace{0.2cm} 1/2>\beta, \end{equation}
Set $ \gamma= \min(\alpha, 1-\alpha-\kappa)$ ($\gamma > 0$ from (\ref{eq:condab})) and let $m_0\in \mathbb{R}$ be s.t. $ m_0\beta=\alpha$. Let $V\in L^3(\bR^3)$ be compactly supported, pointwise non-negative and spherically symmetric. Then, $ \cM_N$, as defined as in \eqref{eq:defMN}, is bounded from below by
		\begin{equation}
		\begin{split}\label{eq:MN} 
		\cM_{N} \geq &\;4\pi \frak{a}_0 N^{1+\kappa}+ \frac14 \cK  + \cE_{\cM_N}
		\end{split}
		\end{equation} 
for a self-adjoint operator $ \cE_{\cM_N}$ satisfying 
		\begin{equation}\label{eq:propMNerrorbnd} 
		\begin{split}
		e^{A}e^D \cE_{\cM_{N}} &e^{-D} e^{-A}\\
		\geq \; &- C N^{-\beta} \cK - C N^{-\beta-\kappa}\cV_N - C N^{\beta + 2\kappa -1} \cK \cN_{\geq N^\beta} - C N^{\alpha+\beta+ 2\kappa-1}  \cK \cN_{\geq N^{\lfloor m_0 \rfloor \beta}}  \\
		&- C \sum_{j=3}^{2\lfloor m_0\rfloor -1}N^{ j\beta/2 + \beta/2+2\kappa-1}\cK \cN_{\geq \frac12 N^{j\beta/2}} - CN^{3\alpha+\kappa} 
		\end{split}  
		\end{equation}
for all $N\in\NN$ sufficiently large. 
\end{prop}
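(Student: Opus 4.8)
The plan is to conjugate $\cJ_N^{\text{eff}}$ by the quartic transformation $e^D$ and to show that, modulo controllable remainders, this extracts from $\cH_N=\cK+\cV_N$ precisely the quadratic terms in the operators $c_r,e_r$ of \eqref{eq:defcrdr} that are missing in \eqref{eq:defJNeff} for completing a square. Writing $g_r=b_r+c_r+e_r$ for $r\in P_H^c$, the intermediate target is
\[ \cM_N \;\geq\; 4\pi\frak{a}_0 N^{1+\kappa}+\cH_N+8\pi\frak{a}_0 N^{\kappa}\sum_{p\in P_H^c}\Big[ g_p^* g_p+\tfrac12 g_p^* g_{-p}^*+\tfrac12 g_p g_{-p}\Big]-4\pi\frak{a}_0 N^{\kappa}\frac{\cN_+^2}{N}+\cE_1 , \]
with $\cE_1$ an error of the type on the right of \eqref{eq:propMNerrorbnd}, after which the square is completed mode–pair by mode–pair. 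I would write $\cM_N=\cJ_N^{\text{eff}}+\int_0^1 e^{-sD}[\cJ_N^{\text{eff}},D]e^{sD}\,ds$ and iterate Duhamel. Since $[D,\cN_+]=0$, the constant and the $\cN_+^2/N$ term are untouched; the quadratic and cubic $b$–terms of \eqref{eq:defJNeff} are already regularized, so their commutators with $D$ only produce small errors (estimated with $\|\eta_H\|$, $\|\check\eta\|_\infty$ from \eqref{eq:etainfbnd}, \eqref{eq:modetap} and growth estimates for $e^D$ analogous to Lemmas \ref{lm:NresgrowA} and \ref{lm:KresgrowA}). The essential contribution is $[\cH_N,D]$.

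For the kinetic part one computes $[\cK,D_1]=\tfrac1N\sum_{r\in P_H,\,p,q\in P_L}\big(r^2+r\cdot(p-q)\big)\eta_r\,a_{p+r}^* a_{q-r}^* a_p a_q$; here $r\cdot(p-q)\eta_r$ is negligible since $|p|,|q|\le N^\beta\ll N^\alpha\le|r|$, while $r^2\eta_r$ is combined, via the scattering equation \eqref{eq:eta-scat} (with the coefficient $\int Vf_\ell\simeq 8\pi\frak{a}_0$ fixed by Lemma \ref{sceqlemma}), with the leading term of $[\cV_N,D_1]$. The latter is computed exactly as in Proposition \ref{prop:comm1VN}: its main term has the form $\tfrac1N\sum N^{\kappa}(\widehat V(\cdot/N^{1-\kappa})\ast\eta/N)(u)\,a_{p+u}^* a_{q-u}^* a_p a_q$, plus $\Theta$–type operators as in \eqref{eq:commVNA12}. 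Adding the $r^2\eta_r$ contribution, normal ordering and separating the diagonal from the off–diagonal pieces reproduces exactly $8\pi\frak{a}_0 N^{\kappa}\sum_{p\in P_H^c}[c_p^*c_p+e_p^*e_p+c_p^*e_p^*+\cdots+\text{h.c.}]$, i.e. the terms needed to complete the square together with the cubic contribution of \eqref{eq:cubicbcd}. The $\Theta$–terms, the double and higher commutators in the Duhamel expansion, and the $\eta_0$–contributions are then bounded by $\delta\cK+\delta\cV_N$ plus terms of the restricted form $\cK\,\cN_{\geq N^\gamma}$, $\cK\,\cN_{\geq N^{j\beta/2}}$ and $\cN_{\geq N^\alpha}^2$; the thresholds and exponents are dictated by the fact that each action of $D$ shifts a low momentum ($\le N^\beta$) by an $r\in P_H$ while inserting a factor $\eta_r$, so that iterating generates the chain of scales $N^{j\beta/2}$, $j=3,\dots,2\lfloor m_0\rfloor-1$, and the scale $N^{\lfloor m_0\rfloor\beta}$.

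It then remains to complete the square. Grouping $P_H^c$ into pairs $\{p,-p\}$, using $[b_p,b_{-p}]=0$ and the near–commutation of the $c,e$ operators (defect $O(N^{-1})$ times number operators), one has $g_p^* g_p+g_{-p}^* g_{-p}+g_p^* g_{-p}^*+g_p g_{-p}=(g_p^*+g_{-p})(g_p+g_{-p}^*)-[g_{-p},g_{-p}^*]+(\text{defect})$, so, dropping the positive square,
\[ 8\pi\frak{a}_0 N^{\kappa}\sum_{p\in P_H^c}\Big[g_p^* g_p+\tfrac12 g_p^* g_{-p}^*+\tfrac12 g_p g_{-p}\Big]\;\geq\;-4\pi\frak{a}_0 N^{\kappa}\sum_{p\in P_H^c}[g_{-p},g_{-p}^*]+(\text{errors}). \]
Since $\sum_{p\in P_H^c}[g_{-p},g_{-p}^*]=|P_H^c|(1-\cN_+/N)+O(N^{-1})\cN_++(c,e\text{ commutators})$ and $|P_H^c|=O(N^{3\alpha})$, this yields the constant $-CN^{3\alpha+\kappa}$ in \eqref{eq:propMNerrorbnd}; the $\cN_+$–dependent pieces here, together with the expansions $b_p=\sqrt{(N-\cN_+)/N}\,a_p$ inside the $g$'s, reconstruct (up to lower order) the term $+4\pi\frak{a}_0 N^{\kappa}\cN_+^2/N$, which cancels the negative $\cN_+^2/N$ contribution of \eqref{eq:defJNeff}. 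Keeping only $\tfrac14\cK$ and spending $\tfrac34\cK+\cV_N$, using $\cN_{<N^\beta}\le(4\pi^2)^{-1}\cK_L$ and $N^{2\beta}\cN_{\geq N^\beta}\le\cK$, one absorbs the remaining errors and, conjugating back, applies Lemmas \ref{lm:NresgrowA}, \ref{lm:KresgrowA} and their $D$–analogues (this is why $\cE_{\cM_N}$ is estimated after conjugation by $e^Ae^D$ in \eqref{eq:propMNerrorbnd}) to reach the stated bound.

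The hard part is the error accounting in the quartic conjugation, which is the technical core of the paper: the remainders must be forced into the narrow list of restricted observables on the right of \eqref{eq:propMNerrorbnd}, since only for those is the growth under the earlier unitaries $e^A,e^D$ controlled, and this requires tracking through every commutator and every Duhamel term how momenta are redistributed between the regions $|p|\le N^\beta$, $|p|\ge N^\alpha$, and in between, so as to identify both the iterated structure producing the sum over $j$ and the precise thresholds $N^{\lfloor m_0\rfloor\beta}$, $N^{j\beta/2}$. The second delicate point is $\cV_N$: it is not comparable to $\cK$, it may appear only with the small prefactors $N^{-\beta}$, $N^{-\beta-\kappa}$, and one has to be sure that the quartic part of $\cH_N$ — which still carries the singular potential — is never estimated by anything larger than $\cV_N$ itself.
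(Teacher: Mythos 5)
Your overall architecture coincides with the paper's: Duhamel expansion of $e^{-D}\cJ_N^{\text{eff}}e^{D}$, computation of $[\cK,D]$ and $[\cV_N,D]$, use of the scattering equation, rewriting of the surviving mean-field quartic term through $c_r,e_r$, completion of the square in $g_r=b_r+c_r+e_r$, and the bound $-4\pi\frak{a}_0N^\kappa\sum_{r\in P_H^c}[g_r,g_r^*]\geq -CN^{3\alpha+\kappa}$. The genuine gap is the step where you claim that combining $r^2\eta_r$ with the leading term of $[\cV_N,D]$ and then ``normal ordering and separating the diagonal from the off-diagonal pieces'' reproduces the mean-field quartic term over $P_H^c$. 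Because $\widehat f_N-\eta/N=\delta_{\cdot,0}$, that combination actually produces the \emph{negative} operator $-\frac{1}{2N}\sum_{r}N^\kappa\widehat V(r/N^{1-\kappa})\big(a^*_{v+r}a^*_{w-r}a_va_w+\text{h.c.}\big)$ with $v,w\in P_L$ but $r$ ranging over \emph{all} of $\Lambda^*$: a quartic term still carrying the singular potential, with coefficient one, hence comparable to $\cV_N$ itself and not absorbable by $\delta\cK$, $\delta\cV_N$ or any of the restricted observables on the right of \eqref{eq:propMNerrorbnd}. The paper neutralizes it by the exact algebraic step \eqref{eq:proofpropMN4}--\eqref{eq:proofpropMN6}, writing $\cV_N$ minus this term in position space as an integral of $V\geq 0$ against a manifestly nonnegative operator and discarding it. The mean-field term $\frac{8\pi\frak{a}_0N^\kappa}{2N}\sum_{r\in P_H^c}(\cdots)$ does not come from that combination at all; it is the separate low-momentum remainder $\Sigma_1$ in \eqref{eq:commKD2}, where $(\widehat V(\cdot/N^{1-\kappa})\ast\widehat f_N)(r)\to 8\pi\frak{a}_0$ because $|r|\leq N^\alpha\ll N^{1-\kappa}$. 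Without the positivity identity your intermediate target (which moreover retains the full $\cH_N$ \emph{and} the completed square, more than the paper's \eqref{eq:proofpropMN11b} yields, since $\cV_N$ is consumed there) is out of reach.

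Two further attributions are off, though less damaging. The cancellation of $-4\pi\frak{a}_0N^\kappa\cN_+^2/N$ does not come from $\sum_r[g_r,g_r^*]$ or from expanding $b_p=\sqrt{(N-\cN_+)/N}\,a_p$ --- those contributions are all of size $N^{3\alpha+\kappa}$; it comes from the $r=0$ mode of the mean-field quartic term, $\frac{(\widehat V\ast\widehat f_N)(0)}{2N}\sum_{v,w\in P_L}a^*_va^*_wa_va_w\approx 4\pi\frak{a}_0N^\kappa\cN_+^2/N$ (the passage from \eqref{eq:proofpropMN7} to \eqref{eq:proofpropMN8}), which is why the sum runs over $P_H^c\cup\{0\}$ and $r=0$ is peeled off before introducing $c_r,e_r$. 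Likewise, the scales $N^{j\beta/2}$, $j=3,\dots,2\lfloor m_0\rfloor-1$, are not generated by iterating the Duhamel expansion (used only to first order with remainder); they arise from a single shell decomposition of the intermediate region $N^{3\beta/2}\leq|r|\leq N^\alpha$ in the estimate \eqref{eq:commKD3} of $\Sigma_1$, pairing each shell with $\cN_{\geq\frac12N^{j\beta/2}}$, whose growth under $e^A$ and $e^D$ is what Lemmas \ref{lm:NresgrowA} and \ref{lm:NresgrowD} control. Since you yourself flag the error accounting as the hard part and leave it open, and these two mechanisms are precisely what make it close, the proposal as written does not yet constitute a proof.
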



\section{Proof of Theorem \ref{thm:main}}
\label{sec:proof}

For $\varepsilon>0 $ sufficiently small, we define 
\begin{equation}\label{eq:defalphabeta} \alpha = 14\kappa +4\varepsilon , \hspace{1cm} \beta = 4\kappa + \varepsilon \, . 
\end{equation}
The choice $\kappa < 1/43$ guarantees, if $\eps > 0$ is small enough, that all conditions in (\ref{eq:condab}) (and thus also in (\ref{eq:alp}) and (\ref{eq:alphabeta})) are satisfied.

From (\ref{eq:GN-prel}) and (\ref{eq:Gbd0}), we obtain the upper bound 
\begin{equation} \label{eq:up-pr} E_N \leq 4 \pi \frak{a}_0 N^{1+\kappa} + C N^{16\kappa + 4 \eps} \end{equation}
for the ground state energy of $H_N$. From (\ref{eq:GN-prel}) and (\ref{eq:theta-err}), on the other hand, we obtain 
\[ \cH_N \leq 2 (\cG_N \textendash 4 \pi \frak{a}_0 N^{1+\kappa}) + C N^\kappa \cN_+ + C N^{16\kappa + 4 \eps} \]
With (\ref{eq:up-pr}) and setting $\cG'_N = \cG_N - E_N$, we deduce that 
\begin{equation}\label{eq:prop34} \cH_N \leq 2 \cG'_N + C N^\kappa \cN_+ + C N^{16\kappa + 4 \eps} \end{equation}
 
Next, we prove (\ref{eq:BEC0}). From (\ref{eq:GeffE}) and (\ref{eq:prop34}) we arrive  at 
\[\begin{split}
\cG_N &=  \cG_N^{\text{eff}}+ \cE_{\cG_N} \geq \cG_N^\text{eff} - C N^{-(7\kappa+2\eps)/2} \cG'_N
- C N^{-(5\kappa+2\eps)/2} \cN_+ - C N^{16\kappa + 4 \eps} 
\end{split} \]
Writing $\cG_\text{eff} = e^A \cJ_N e^{-A}$ and recalling that $\kappa < 1/43$ (and that $\eps > 0$ is small enough), Prop. \ref{prop:JN} and  (\ref{eq:prop34}) imply that 
\[\begin{split}
\cG_N \geq \; & e^A \cJ_N^{\text{eff}}  e^{-A} + e^A \cE_{\cJ_N}  e^{-A} - C N^{- (7\kappa + 2 \eps) /2} \cG'_N  - CN^{-(5\kappa+2\eps)/2}  \cN_+ - C N^{16\kappa + 4 \eps} \\ \geq \; &e^A \cJ_N^{\text{eff}}  e^{-A} - C N^{-(5 \kappa + 2\eps)/2}  \cG'_N -  C N^{- (3\kappa + 2\eps)/2}  \cN_+  - C N^{16\kappa+4\eps} \end{split} \]
Inserting $\cJ_\text{eff} = e^D \cM_N e^{-D}$ and applying Prop. \ref{prop:MN}, we obtain  
\begin{equation}\label{eq:low-k0} \begin{split}
\cG_N \geq \; & 4 \pi \frak{a}_0 N^{1+\kappa} + \frac{1}{4} e^A e^D \cK e^{-D} e^{-A} + e^A e^D \cE_{\cM_N} e^{-D} e^{-A}  \\ & - C N^{-(5 \kappa + 2\eps)/2}  \cG'_N -  C N^{- (3\kappa + 2\eps)/2}  \cN_+  - C N^{16\kappa+4\eps}  \end{split} \end{equation} 
With $\cK \geq (2\pi)^2 \cN_+$ and Lemma \ref{lm:NresgrowA} (with $m=0$ and  $k=1$) we have 
\begin{equation}\label{eq:low-K}\begin{split} e^A e^D \cK e^{-D} e^{-A} &\geq (2\pi)^2  e^A e^D \cN_+ e^{-D} e^{-A}  = (2\pi)^2  e^A \cN_+ e^{-A}  \geq c \cN_+  \end{split} \end{equation} 
for a constant $c > 0$ small enough (but independent of $N$). If $N$ is large enough, we conclude (using also the upper bound (\ref{eq:up-pr})), that 
\begin{equation}\label{eq:cN+bd} \cN_+ \leq C \cG'_N - C e^A e^D \cE_{\cM_N} e^{-D} e^{-A} + C N^{16 \kappa + 4 \eps} \end{equation}
To bound the error term $e^A e^D \cE_{\cM_N} e^{-D} e^{-A}$, we need (according to (\ref{eq:propMNerrorbnd})) to control observables of the form $N^{-1} \cK \cN_{\geq c N^\gamma}$. To this end, we observe, first of all, that, by Cauchy-Schwarz and by (\ref{eq:prop34}), 
\begin{equation}\label{eq:commbnd1}
		\begin{split}
		N^{-1}\cK\cN_{\geq cN^{\gamma}}\leq&\;  \delta^{-1} N^{\kappa-2\gamma }\cK +   \delta N^{2\gamma-\kappa-2}\cK\cN_{\geq cN^{\gamma}}^2\\
		\leq &\; \delta^{-1} N^{\kappa-2\gamma} \cK + 2 \delta N^{2\gamma-\kappa-2}\cN_{\geq cN^{\gamma}} \cG_N' \cN_{\geq cN^{\gamma}} + C \delta  N^{-1}  \cK\cN_{\geq cN^{\gamma}}.
		\end{split}
		\end{equation}
Choosing $\delta>0$ sufficiently small, we thus have
		\begin{equation}\label{eq:commbnd2}
		\begin{split}
		&N^{-1}\cK \cN_{\geq cN^{\gamma}} \leq  C N^{\kappa-2\gamma} \cK + C N^{2\gamma-\kappa-2} \cN_{\geq cN^{\gamma}}\cG_N' \cN_{\geq cN^{\gamma}}.
		\end{split}
		\end{equation}
We write 
\begin{equation}\label{eq:commbnd3}
\cN_{\geq cN^{\gamma}} \cG_N' \cN_{\geq cN^{\gamma}} = \cN_{\geq cN^{\gamma}}^2 \cG_N'  +  \cN_{\geq cN^{\gamma}} [\cG_N', \cN_{\geq cN^{\gamma}}] .
		\end{equation}
Using \eqref{eq:prop34} (similarly as we did in (\ref{eq:commbnd1})) and $\cN_{\geq cN^\gamma} \leq N$, $\cN_{\geq cN^{\gamma}}\leq CN^{-2\gamma}\cK $, we can bound the expectation of the first term on the 
r.h.s. of the last equation, for an arbitrary $\xi \in \cF_+^{\leq N}$, by  
		\begin{equation}\label{eq:commbnd4}\begin{split}
		&|\langle\xi,  \cN_{\geq cN^{\gamma}}^2\cG_N'  \xi \rangle | \\
		&\;\;\; \leq \langle\xi,  \cN_{\geq cN^{\gamma}}^3 \xi \rangle^{1/2}\langle\xi, \cG_N' \cN_{\geq cN^{\gamma}}\cG_N' \xi \rangle^{1/2}\\
		&\;\;\; \leq CN^{1/2-\gamma} \langle\xi,  \cK\cN_{\geq cN^{\gamma}}^2 \xi \rangle^{1/2} \langle \xi, \cG^{'2}_N \xi \rangle^{1/2} \\
		&\;\;\;\leq  CN^{1/2-\gamma} \langle \xi, \cG^{'2}_N \xi \rangle^{1/2}\langle\xi,  \cN_{\geq cN^{\gamma}} \cG_N' \cN_{\geq cN^{\gamma}} \xi \rangle^{1/2} \\ &\hspace{2cm}+  CN^{1+\kappa/2-2\gamma}\langle \xi, \cG^{'2}_N \xi \rangle^{1/2} \langle\xi,  \cK\cN_{\geq cN^{\gamma}} \xi \rangle^{1/2}\\
		&\;\;\;\leq  \delta \langle\xi,  \cN_{\geq cN^{\gamma}} \cG_N' \cN_{\geq cN^{\gamma}} \xi \rangle + C \delta^{-1} N^{1-2\gamma} \langle \xi, \cG^{'2}_N \xi \rangle \\ &\hspace{2cm}+  C \delta N^{1+\kappa-2\gamma}  \langle\xi,  \cK\cN_{\geq cN^{\gamma}} \xi \rangle^{1/2}.
		\end{split}\end{equation}
On the other hand, to estimate the commutator term in equation \eqref{eq:commbnd3}, we notice that $ \cA := (\cH_N+1)^{-1/2} i [\cG_N',\cN_{\geq cN^{\gamma}}  ] (\cH_N+1)^{-1/2} $ is a bounded, self-adjoint operator with $ \| \cA \| \leq CN^{\kappa +\alpha/2-\gamma} + CN^{\kappa +\gamma/2}$, by \eqref{eq:errComm}. Setting $ \mu = \max(\alpha, 3\gamma)$, this implies, with (\ref{eq:prop34}), 
		\begin{equation}\label{eq:commbnd5}
		\begin{split}
		&| \langle\xi,  \cN_{\geq cN^{\gamma}}[\cG_N', \cN_{\geq cN^{\gamma}}]\xi \rangle | \\
		& \leq \delta \langle\xi,  \cN_{\geq cN^{\gamma}}(\cH_N +1)\cN_{\geq cN^{\gamma}}\xi \rangle  + C \delta^{-1} N^{2\kappa-2\gamma+\mu}\langle\xi,  (\cH_N +1) \xi \rangle\\
		& \leq 2\delta \langle\xi,  \cN_{\geq cN^{\gamma}}\cG_N' \cN_{\geq cN^{\gamma}}\xi \rangle + C \delta N^{1+\kappa-2\gamma} \langle\xi,  \cK \cN_{\geq cN^{\gamma}}\xi \rangle \\
		&\hspace{0.4cm}+  C \delta^{-1} N^{3\kappa-2\gamma+\mu} \langle \xi, \cN_+ \xi \rangle + C \delta^{-1} N^{3\kappa+\alpha -2\gamma+\mu} \| \xi \|^2 \, 
		\end{split}
		\end{equation}
for all $\xi \in \cF_+^{\leq N}$. Plugging \eqref{eq:commbnd4} and \eqref{eq:commbnd5} into \eqref{eq:commbnd3}, we find that, for sufficiently small $\delta > 0$,  
		\begin{equation}\label{eq:commbnd6}\begin{split}
		 \cN_{\geq cN^{\gamma}} \cG_N' \cN_{\geq cN^{\gamma}}  \leq \; &C \delta  N^{1+\kappa-2\gamma}  \cK\cN_{\geq cN^{\gamma}}  + C \delta^{-1} N^{1- 2\gamma} \cG^{'2}_N \\
		&+ C \delta^{-1} N^{3\kappa-2\gamma+\mu} \cN_+  + C \delta^{-1} N^{3\kappa -2\gamma+\mu +\alpha} 
		\end{split}\end{equation}
Inserting into \eqref{eq:commbnd2} and choosing $\delta > 0$ small enough, we obtain 
		\begin{equation}\label{eq:first-low} \begin{split} 
		&N^{-1} \cK \cN_{\geq cN^{\gamma}} \leq CN^{\kappa-2\gamma} \cK + C N^{-\kappa-1} \cG^{'2}_N + C N^{2\kappa +\mu-2} \cN_+ + C N^{2\kappa + \mu +\alpha-2} \end{split} \end{equation}
Applying (\ref{eq:first-low}) to the r.h.s. of  (\ref{eq:propMNerrorbnd}) we find, using also (\ref{eq:prop34}), (\ref{eq:defalphabeta}), and the choice $\kappa < 1/43$,  
\begin{equation}\label{eq:bnd-cEMN} \begin{split} e^A e^D \cE_{\cM_N} e^{-D} e^{-A}  &\geq - C N^{-\eps}  \cN_+  - 
C N^{-(\kappa+\eps)} \cG'_N - C N^{13\kappa + 3 \eps -1} \cG^{'2}_N - C N^{43 \kappa + 12 \eps} \end{split} \end{equation} 
Inserting the last equation into (\ref{eq:cN+bd}) and using (\ref{eq:up-pr}), we conclude that, for $N$ large enough,  
\[ \cN_+ \leq C \cG'_N + C N^{13\kappa + 3\eps-1} \cG^{'2}_N + C N^{43 \kappa + 12\eps} \]
For $\psi_N \in L^2_s (\Lambda^N)$ with $\| \psi_N \| = 1$ and $\langle \psi_N, (H_N - E_N)^2 \psi_N \rangle \leq \zeta^2$, the corresponding excitation vector $\xi_N = e^{B} U_N \psi_N$ is such that $\langle \xi_N, \cG^{'2}_N \xi_N \rangle \leq \zeta^2$ and thus 
\[ \langle \xi_N, \cN_+ \xi_N \rangle \leq C \left[ \zeta + \zeta^2 N^{13\kappa+3\eps-1} +  N^{43 \kappa + 12\eps} \right] \]
which proves (\ref{eq:BEC0}), using Lemma \ref{lm:Ngrow}. From (\ref{eq:prop34}), we obtain also 
\begin{equation}\label{eq:k0-case}  \langle \xi_N, \cH_N \xi_N \rangle \leq C \left[ \zeta N^{\kappa} + \zeta^2 N^{14\kappa + 3\eps-1}  +  N^{44 \kappa + 12\eps} \right],  \end{equation} 
 an estimate that will be needed to arrive at (\ref{eq:BEC}).
 
Evaluating (\ref{eq:bnd-cEMN}) on a normalized ground state $\xi_N$ of $\cG_N$ and inserting the result in (\ref{eq:low-k0}) we also deduce that 
\[ E_N \geq 4 \pi \frak{a}_0 N^{1+\kappa} - C N^{43 \kappa + 12\eps} \]
Together with the upper bound (\ref{eq:up-pr}), this concludes the proof of (\ref{eq:gs-est}). 

We still have to show (\ref{eq:BEC}) for $k > 0$. To this end, we will prove the stronger bound (\ref{eq:HNk}); Eq. (\ref{eq:BEC}) follows then immediately from $\cN_+ \leq \cH_N$ and by 
Lemma \ref{lm:Ngrow}. We denote by $Q_\zeta$ the spectral subspace of $\cG_N$ associated with energies below $E_N + \zeta$. We use induction to show that, for all $k \in \bN$, there exists a constant $C > 0$ (depending on $k$) such that  
\begin{equation}\label{eq:ind-assu} \sup_{\xi \in  Q_{\zeta}} \frac{\langle \xi , (\cH_N +1)  (\cN_++1)^{2k} \xi \rangle}{ \| \xi \|^2} \leq C \left[ N^{44\kappa+12\eps} + \zeta^2 N^{20\kappa + 5\eps}  \right]^{2k+1} \end{equation}
for all $k \in \bN$. This proves (\ref{eq:HNk}) and thus, with the bound $ \cN_+\leq \cH_N$ and with Lemma \ref{lm:Ngrow}, also (\ref{eq:BEC}). The case $k=0$ follows from (\ref{eq:k0-case}). From now on, we assume (\ref{eq:ind-assu}) to hold true and we prove the same bound, with $k$ replaced by $(k+1)$ (and with a new constant $C$). To this end, we start by observing that, combining (\ref{eq:prop34}) and \eqref{eq:cN+bd}, 
\[ \cH_N +1 \leq  C N^\kappa \cG'_N - C N^\kappa e^A e^D \cE_{\cM_N} e^{-D} e^{-A} + C N^{17 \kappa + 4 \eps} \] 
Hence
\begin{equation}\label{eq:HNk+1}  \begin{split} (\cN_+ +1)^{2(k+1)} (\cH_N +1) &= (\cN_+ +1)^{k+1} (\cH_N +1) (\cN_+ +1)^{k+1} \\ & \leq C N^\kappa  (\cN_+ +1)^{k+1} \cG'_N (\cN_+ +1)^{k+1} \\ &\hspace{.3cm} - C N^\kappa (\cN_+ +1)^{k+1} e^A e^D \cE_{\cM_N} e^{-D} e^{-A} (\cN_+ +1)^{k+1}  \\ &\hspace{.3cm} + C N^{17\kappa+4\eps} (\cN_+ +1)^{2(k+1)}  \end{split} \end{equation}
We estimate the first term on the r.h.s. by 
\[ \begin{split} 
N^\kappa &(\cN_+ +1)^{k+1} \cG'_N (\cN_+ +1)^{k+1} \\ &\leq N^\kappa (\cN_+ +1)^{2(k+1)} \cG'_N + N^\kappa (\cN_+ +1)^{k+1} [ \cG'_N, (\cN_+ +1)^{k+1} ] \\ &=  N^\kappa (\cN_+ +1)^{2(k+1)} \cG'_N + N^\kappa  \sum_{j=1}^{k+1} {k+1 \choose j} (\cN_+ +1)^{k+1} \text{ad}_{\cN_+}^{(j)} (\cG_N) (\cN_+ +1)^{k+1-j} \end{split} \]
By Cauchy-Schwarz, we find 
\[ \begin{split} N^\kappa &(\cN_+ +1)^{k+1} \cG'_N (\cN_+ +1)^{k+1}  \\ &\leq N^\kappa  (\cN_+ +1)^{2(k+1)} + N^\kappa  \cG'_N (\cN_+ +1)^{2(k+1)} \cG'_N  \\ &\hspace{.4cm} + N^\kappa  \sum_{j=1}^{k+1} {k+1 \choose j} (\cN_+ +1)^{k+1} \text{ad}_{\cN_+}^{(j)} (\cG_N) (\cN_+ +1)^{k+1-j} \end{split} \]
With $(\cN_+ +1)^{2(k+1)} \leq (\cN_+ +1)^{2k+1} (\cH_N+1)$ and with the estimate 
\begin{equation}\label{eq:est-Aj} \| (\cH_N + 1)^{-1/2} \text{ad}_{\cN_+}^{(j)} (\cG_N) (\cH_N +1)^{-1/2} \| \leq C N^{7\kappa/3 + 2 \eps/3} \end{equation} 
from (\ref{eq:adjGN}) we obtain, using again Cauchy-Schwarz,  
\[ \begin{split} 
N^\kappa \langle \xi, &(\cN_+ +1)^{k+1} \cG'_N (\cN_+ +1)^{k+1}  \xi \rangle  \\ \leq \; &C \left[ N^\kappa \zeta^2 + N^{7\kappa/3 + 2 \eps/3} \right] \| \xi \|^2 \\ & \times \left[ \sup_{\xi \in Q_{\zeta}} \frac{\langle \xi, (\cN_+ +1)^{2(k+1)} (\cH_N+1) \xi \rangle}{\| \xi \|^2} \right]^{1/2}  \left[ \sup_{\xi \in Q_{\zeta}} \frac{\langle \xi, (\cN_+ +1)^{2k} (\cH_N+1) \xi \rangle}{\| \xi \|^2} \right]^{1/2} \end{split} \]
for every $\xi \in Q_{\zeta}$. Hence, for any $\delta > 0$, we have 
\begin{equation}\label{eq:NkG'Nk-bd} \begin{split} 
N^\kappa &\frac{\langle \xi, (\cN_+ +1)^{k+1} \cG'_N (\cN_+ +1)^{k+1}  \xi \rangle}{\| \xi \|^2} \\ &\leq \delta  \sup_{\xi \in Q_{\zeta}} \frac{\langle \xi, (\cN_+ +1)^{2(k+1)} (\cH_N+1) \xi \rangle}{\| \xi \|^2}  \\ &\hspace{2cm}  + C \delta^{-1} \left[ N^\kappa \zeta^2 + N^{7\kappa/3 + 2 \eps/3} \right]^2  \sup_{\xi \in Q_{\zeta}} \frac{\langle \xi, (\cN_+ +1)^{2k} (\cH_N +1) \xi \rangle}{\| \xi \|^2} \end{split} \end{equation}
To bound the contribution proportional to $e^A e^D \cE_{\cM_N} e^{-D} e^{-A}$ on the r.h.s. of (\ref{eq:HNk+1}), we have to control, according to (\ref{eq:commbnd2}), terms of the form
\[ \begin{split} (\cN_+ &+1)^{k+1} \cN_{\geq cN^\gamma} \cG'_N \cN_{\geq cN^\gamma} (\cN_+ +1)^{k+1} \\ & = 
((\cN_++1)^{k+1} \cN_{\geq cN^\gamma})^2 \cG'_N + (\cN_+ +1)^{k+1} \cN_{\geq cN^\gamma} \left[ \cG'_N, (\cN_+ +1)^{k+1} \cN_{\geq cN^\gamma} \right] \\ & =: \text{A} + \text{B} \end{split} \]
For an arbitrary $\xi\in Q_{\zeta}$, we can bound the expectation of $\text{A}$ by Cauchy-Schwarz as 
\begin{equation}\label{eq:A-bd} \begin{split}  \frac{\langle \xi, \text{A} \xi \rangle}{\| \xi \|^2} &\leq \frac{\langle \xi, ((\cN_+ +1)^{k+1} \cN_{\geq cN^\gamma})^2 \xi \rangle}{\| \xi \|^2}   + \frac{\langle \cG'_N \xi, ((\cN_+ +1)^{k+1} \cN_{\geq cN^\gamma})^2 \cG'_N \xi \rangle}{\| \xi \|^2}  \\ &\leq N^2 (1+ \zeta^2) \, \sup_{\xi \in Q_{\zeta}} \frac{\langle \xi, (\cN_+ +1)^{2k} \cN_{\geq cN^\gamma}^2 \xi \rangle}{\| \xi \|^2}  \\ & \leq N^{2-2\gamma} (1+ \zeta^2) \, \sup_{\xi \in Q_{\zeta}} \frac{\langle \xi, (\cN_+ +1)^{2k+1} \cK  \xi \rangle}{\| \xi \|^2} \\  & \leq N^{2-2\gamma} (1+ \zeta^2) \, \left[ \sup_{\xi \in Q_{\zeta}} \frac{\langle \xi, (\cN_+ +1)^{2k} \cK  \xi \rangle}{\| \xi \|^2} \right]^{1/2} \\&\hspace{5cm} \times \left[ \sup_{\xi \in Q_{\zeta}} \frac{\langle \xi, (\cN_+ +1)^{2(k+1)} \cK  \xi \rangle}{\| \xi \|^2} \right]^{1/2}  \end{split} \end{equation}
As for the term $\text{B}$, we can write 
\[ \begin{split}  \text{B} = \; &(\cN_+ +1)^{k+1} \cN_{\geq c N^\gamma}^2  \left[ \cG'_N, (\cN_+ +1)^{k+1} \right] \\ &+ (\cN_+ +1)^{k+1} \cN_{\geq c N^\gamma}  \left[ \cG'_N, \cN_{\geq cN^\gamma}  \right] (\cN_+ +1)^{k+1}  \\
= \;  & \sum_{j=1}^{k+1} {k+1 \choose j} (\cN_+ +1)^{k+1} \cN_{\geq c N^\gamma}^2  \text{ad}^{(j)}_{\cN_+} (\cG'_N) (\cN_+ +1)^{k+1-j} \\ &+ 
(\cN_+ +1)^{k+1} \cN_{\geq c N^\gamma}  \left[ \cG'_N, \cN_{\geq cN^\gamma}  \right] (\cN_+ +1)^{k+1} 
\end{split} \]
From (\ref{eq:est-Aj}) and using (\ref{eq:errComm}) to estimate 
\[ \| (\cH_N + 1)^{-1/2} [ \cN_{\geq c N^\gamma} , \cG_N ] (\cH_N + 1)^{-1/2} \| \leq C N^{8\kappa+2\eps-\gamma}+C N^{\kappa+\gamma/2},  \]
we obtain for every $\xi \in Q_{\zeta}$ that 
\[\begin{split}  |\langle \xi, \text{B} \xi \rangle | \leq \; &CN^{7\kappa/3+2\eps/3} \| (\cH_N+1)^{1/2} \cN_{\geq c N^\gamma}^2 (\cN_+ +1)^{k+1} \xi \| \| (\cH_N + 1)^{1/2} (\cN_+ +1)^{k} \xi \| \\ 
&+ C N^{8\kappa+2\eps-\gamma}\| (\cH_N+1)^{1/2} \cN_{\geq c N^\gamma} (\cN_+ +1)^{k+1} \xi \| \| (\cH_N + 1)^{1/2} (\cN_+ +1)^{k+1} \xi \|
\\ &+C N^{\kappa+\gamma/2}\| (\cH_N+1)^{1/2} \cN_{\geq c N^\gamma} (\cN_+ +1)^{k+1} \xi \| \| (\cH_N + 1)^{1/2} (\cN_+ +1)^{k+1} \xi \|.
\end{split} \]
Applying the bounds $ \cN_+\leq N$, $ \cN_{\geq c N^\gamma}\leq C N^{-2\gamma}\cK$ and \eqref{eq:prop34} yields on the one hand
		\[\begin{split}
		&\| (\cH_N+1)^{1/2} \cN_{\geq c N^\gamma} (\cN_+ +1)^{k+1} \xi \| \| (\cH_N + 1)^{1/2} (\cN_+ +1)^{k+1} \xi \|\\ 
		&\hspace{1cm}\leq C \| \cG_N'\cN_{\geq  cN^\gamma}  (\cN_++1)^{k+1} \xi\|\| (\cH_N + 1)^{1/2} (\cN_+ +1)^{k+1} \xi \|\\
		&\hspace{1.5cm} + CN^{1+\kappa/2-\gamma}\| (\cH_N + 1)^{1/2} (\cN_+ +1)^{k+1} \xi \|^2\\
		& \hspace{1cm}\leq\delta \langle \xi,  (\cN_++1)^{k+1}\cN_{\geq  cN^\gamma} \cG_N'\cN_{\geq  cN^\gamma}  (\cN_++1)^{k+1} \xi\rangle \\
		&\hspace{1.5cm} +C(\delta^{-1}+N^{1+\kappa/2-\gamma})\| (\cH_N + 1)^{1/2} (\cN_+ +1)^{k+1} \xi \|^2
		\end{split}\]
for any $\delta> 0$. Since $ 8\kappa+2\eps-\gamma \leq  1+\kappa/2-\gamma$ and $ \kappa+\gamma/2 \leq  1+\kappa/2-\gamma$ for all $\gamma\leq \alpha$ if $ \kappa<1/43$, this implies with the choice $ \delta =\frac14 (N^{8\kappa+2\eps-\gamma} + N^{ \kappa+\gamma/2} )^{-1} $ that
		\begin{equation}
		\begin{split}\label{eq:cGNNgNg}
		 |\langle \xi, \text{B} \xi \rangle | \leq \;& CN^{7\kappa/3+2\eps/3} \| (\cH_N+1)^{1/2} \cN_{\geq c N^\gamma}^2 (\cN_+ +1)^{k+1} \xi \| \| (\cH_N + 1)^{1/2} (\cN_+ +1)^{k} \xi \|\\
		 & +C(N^{1+17\kappa/2+2\eps-\gamma}+N^{1+3\kappa/2-\gamma/2})\| (\cH_N + 1)^{1/2} (\cN_+ +1)^{k+1} \xi \|^2\\
		 & + \frac 14 \langle \xi,  (\cN_++1)^{k+1}\cN_{\geq  cN^\gamma} \cG_N'\cN_{\geq  cN^\gamma}  (\cN_++1)^{k+1} \xi\rangle.
		\end{split}
		\end{equation}
On the other hand, we can estimate 
\begin{equation}\label{eq:cHNNg} 
\begin{split}& \| (\cH_N + 1)^{1/2} \cN_{\geq cN^\gamma}^2 (\cN_+ +1)^{k+1} \xi \| \\ &\hspace{1cm}\leq  N \|  (\cK + 1)^{1/2} \cN_{\geq cN^\gamma}(\cN_+ +1)^{k+1} \xi \| + \| \cV_N^{1/2} \cN_{\geq cN^\gamma}^2 (\cN_+ +1)^{k+1} \xi \|. \end{split} \end{equation} 
Expressing $\cV_N$ in position space, we find, with $\phi= \cN_{\geq cN^\gamma}(\cN_+ +1)^{k+1} \xi$, 
\begin{equation}\label{eq:cVNNg} \| \cV_N^{1/2} \cN_{\geq cN^\gamma} \phi \|^2 = \int dx dy \, N^{2-2\kappa} V (N^{1-\kappa} (x-y)) \| \check a_x \check a_y \cN_{\geq cN^\gamma} \phi \|^2 \end{equation} 
We have 
\[ \check a_x \cN_{\geq cN^\gamma} = (\cN_{\geq cN^\gamma} + 1) \check a_x - a (\check{\chi}_x) \]
where \[ \check{\chi}_x (y) = \check{\chi} (y-x) = \sum_{p \in \L^*_+ : |p| \leq c N^\gamma} e^{i p \cdot (x-y)} \,  \] 
is such that $\| \check{\chi}_x \| = \| \chi \| \leq  C N^{3\gamma/2}$. Hence, we find 
\[ \begin{split} 
 \| \check a_x \check a_y &\cN_{\geq cN^\gamma} \phi \| \leq N \| \check a_x \check a_y \phi \| + N^{1/2} \| \check{\chi}_x \| \| \check a_y \phi \| + N^{1/2} \| \check{\chi}_y \| \| \check a_x \phi \|.\end{split} \]
Inserting in (\ref{eq:cVNNg}), we find 
\[ \| \cV_N^{1/2} \cN_{\geq cN^\gamma} \phi \|^2 \leq C N^2 \| \cV_N^{1/2}  \phi \|^2 + C N^{3\gamma +\kappa} \| \cN_+^{1/2} \phi \|^2 . \]
 From (\ref{eq:cHNNg}), we conclude that
 \[ \| (\cH_N + 1)^{1/2} \cN_{\geq cN^\gamma}^2 \cN_+^{k+1} \xi \| \leq N   \| (\cH_N + 1)^{1/2}\cN_{\geq cN^\gamma} (\cN_++1)^{k+1} \xi \| \]
 for all $\gamma \leq \alpha = 14 \kappa + 4\eps$, if $\kappa < 1/43$. Using now similar arguments as before \eqref{eq:cGNNgNg}, we conclude that, together with \eqref{eq:cGNNgNg}, we have
 		\[\begin{split}
		|\langle \xi, \text{B} \xi \rangle | \leq&\; \frac12  \langle \xi,  (\cN_++1)^{k+1}\cN_{\geq  cN^\gamma} \cG_N'\cN_{\geq  cN^\gamma}  (\cN_++1)^{k+1} \xi\rangle \\
		&\; +C N^{2+10\kappa/3+2\eps/3-\gamma} \| (\cH_N+1)^{1/2} (\cN_+ +1)^{k+1} \xi \| \| (\cH_N + 1)^{1/2} (\cN_+ +1)^{k} \xi \| \\
		&\; +C N^{2+14\kappa/3+4\eps/3}  \| (\cH_N + 1)^{1/2} (\cN_+ +1)^{k} \xi \|^2 \\
		&\; + C(N^{1+17\kappa/2+2\eps-2\gamma}+N^{1+3\kappa/2-\gamma/2})\| (\cH_N + 1)^{1/2} (\cN_+ +1)^{k+1} \xi \|^2\\
		\end{split}\]
Combining this with (\ref{eq:A-bd}), we arrive at 
\[ \begin{split} &\frac{\langle \xi, (\cN_+ +1)^{k+1} \cN_{\geq c N^\gamma} \cG'_N \cN_{\geq c N^\gamma} (\cN_+ +1)^{k+1} \xi \rangle}{\| \xi \|^2} \\ 
& \leq \left[ N^{2-2\gamma} \zeta^2 + N^{2+10\kappa/3+2\eps/3-\gamma} \right]  \left[ \sup_{\xi \in Q_{\zeta}}  \frac{\langle \xi, (\cN_+ +1)^{2k} (\cH_N +1) \xi \rangle}{\| \xi \|^2} \right]^{1/2} \\ &\hspace{5cm} \times \left[  \sup_{\xi \in Q_{\zeta}}  \frac{\langle \xi, (\cN_+ +1)^{2(k+1)} (\cH_N +1) \xi \rangle}{\| \xi \|^2} \right]^{1/2} \\
& \hspace{0.5cm} + CN^{2+14\kappa/3+4\eps/3}\left[ \sup_{\xi \in Q_{\zeta}}  \frac{\langle \xi, (\cN_+ +1)^{2k} (\cH_N +1) \xi \rangle}{\| \xi \|^2} \right]\\
& \hspace{0.5cm} + C(N^{1+17\kappa/2+2\eps-2\gamma}+N^{1+3\kappa/2-\gamma/2})\left[ \sup_{\xi \in Q_{\zeta}}  \frac{\langle \xi, (\cN_+ +1)^{2(k+1)} (\cH_N +1) \xi \rangle}{\| \xi \|^2} \right]
\end{split} \]
for all $\xi \in Q_z$. With (\ref{eq:commbnd2}), we obtain
\[ \begin{split} 
&\frac{N^{-1} \langle \xi,  (\cN_+ +1)^{k+1}  \cK \cN_{\geq c N^\gamma} (\cN_+ +1)^{k+1} \xi \rangle}{\| \xi \|^2} \\ & \hspace{.5cm}  \leq \; C N^{\kappa-2\gamma} \frac{\langle \xi,  (\cN_+ +1)^{k+1}  \cK  (\cN_+ +1)^{k+1} \xi \rangle}{\| \xi \|^2} \\ & \hspace{.9cm}+ C \left[ N^{-\kappa} \zeta^2 + N^{\gamma+7\kappa/3+2\eps/3} \right]  \left[ \sup_{\xi \in Q_{\zeta}}  \frac{\langle \xi, (\cN_+ +1)^{2k} (\cH_N +1) \xi \rangle}{\| \xi \|^2} \right]^{1/2} \\ &\hspace{5cm} \times \left[  \sup_{\xi \in Q_{\zeta}}  \frac{\langle \xi, (\cN_+ +1)^{2(k+1)} (\cH_N +1) \xi \rangle}{\| \xi \|^2} \right]^{1/2}\\
& \hspace{0.9cm} + CN^{2\gamma+11\kappa/3+4\eps/3}\left[ \sup_{\xi \in Q_{\zeta}}  \frac{\langle \xi, (\cN_+ +1)^{2k} (\cH_N +1) \xi \rangle}{\| \xi \|^2} \right]\\
& \hspace{0.9cm}+ C(N^{15\kappa/2+2\eps-1}+N^{\kappa/2+3\gamma/2-1})\left[ \sup_{\xi \in Q_{\zeta}}  \frac{\langle \xi, (\cN_+ +1)^{2(k+1)} (\cH_N +1) \xi \rangle}{\| \xi \|^2} \right].
 \end{split} \]
Applying this bound to (\ref{eq:propMNerrorbnd}) and recalling that $ \kappa<1/43$, we conclude that 
\[ \begin{split} 
&\frac{N^\kappa \langle \xi,  (\cN_+ +1)^{k+1} e^A e^D \cE_{\cM_N} e^{-D} e^{-A} (\cN_+ +1)^{k+1} \xi \rangle}{\| \xi \|^2} \\ & \hspace{.3cm}  \geq \; - C N^{-\eps} \left[ \sup_{\xi \in Q_{\zeta}}\frac{\langle \xi , (\cH_N+1) (\cN_+ +1)^{2(k+1)} \xi \rangle}{\| \xi \|^2}\right] \\ &\hspace{.7cm} - C \left[ N^{20\kappa + 5\eps} \zeta^2 + N^{44\kappa + 12 \eps} \right] \left[ \sup_{\xi \in Q_{\zeta}}  \frac{\langle \xi, (\cN_+ +1)^{2k} (\cH_N+1) \xi \rangle}{\| \xi \|^2} \right]^{1/2} \\ &\hspace{5cm} \times  \left[  \sup_{\xi \in Q_{\zeta}}  \frac{\langle \xi, (\cN_+ +1)^{2(k+1)} (\cH_N +1) \xi \rangle}{\| \xi \|^2} \right]^{1/2}.\end{split} \]
Therefore, for any $\delta > 0$, we find (if $N$ is large enough)  
\[ \begin{split} 
&\frac{N^\kappa \langle \xi,  (\cN_+ +1)^{k+1} e^A e^D \cE_{\cM_N} e^{-D} e^{-A} (\cN_+ +1)^{k+1} \xi \rangle}{\| \xi \|^2} \\ & \hspace{.5cm}  \geq -\delta \sup_{\xi \in Q_{\zeta}} \frac{\langle \xi, (\cH_N +1) (\cN_+ +1)^{2(k+1)} \xi \rangle}{\| \xi \|^2} \\ &\hspace{1cm} - C \delta^{-1} \left[ N^{20\kappa+5\eps} \zeta^2 + N^{44 \kappa + 12 \eps} \right]^2 \sup_{\xi \in Q_{\zeta}} \frac{\langle \xi, (\cH_N +1) (\cN_+ +1)^{2k} \xi \rangle}{\| \xi \|^2}.
\end{split} \]

From the last bound, \eqref{eq:NkG'Nk-bd} and (\ref{eq:HNk+1}), we obtain 
\[  \begin{split} &\frac{\langle \xi ,  (\cN_+ +1)^{2(k+1)} (\cH_N +1) \xi \rangle}{ \| \xi \|^2}  \\ &\leq \; \delta \sup_{\xi \in Q_{\zeta}} \frac{\langle \xi ,  (\cN_+ +1)^{2(k+1)} (\cH_N +1) \xi \rangle}{ \| \xi \|^2} \\ &\hspace{2cm} + C \delta^{-1} \left[ N^{20\kappa+5\eps}  \zeta^2  + N^{44 \kappa + 12 \eps} \right]^2   \sup_{\xi \in Q_{\zeta}} \frac{\langle \xi.,  (\cN_+ +1)^{2k} (\cH_N +1) \xi \rangle}{ \| \xi \|^2} \end{split} \]
for any $\xi \in Q_{\zeta}$. Taking the supremum over all $\xi \in Q_{\zeta}$, and choosing $\delta > 0$ small enough, we arrive at 
\[ \begin{split}  \sup_{\xi \in Q_{\zeta}} & \frac{\langle \xi ,  (\cN_+ +1)^{2(k+1)} (\cH_N +1) \xi \rangle}{ \| \xi \|^2}  \\ &\leq C  \left[ N^{20\kappa + 5 \eps}  \zeta^2  + N^{44 \kappa + 12 \eps} \right]^2  \sup_{\xi \in Q_{\zeta}} \frac{\langle \xi,  (\cN_+ +1)^{2k} (\cH_N +1) \xi \rangle}{ \| \xi \|^2} \\ & \leq C \left[ N^{20\kappa + 5 \eps}  \zeta^2  + N^{44 \kappa + 12 \eps} \right]^{2k+1} \end{split} \]
by the induction assumption. 

\qed


\section{Analysis of $ \cM_N $}\label{sec:MN}

This section is devoted to the proof of Proposition \ref{prop:MN}. In Subsection \ref{sec:MNNKres} we establish bounds on the  growth of the number of excitations and of their energy with respect to the action of $e^D$, with the quartic operator $D = D_1 - D_1^*$ with 
\begin{equation}\label{eq:def-D12} D_1 = \frac{1}{2N} \sum_{r \in P_H, p,q \in P_L} \eta_r a_{p+r}^* a_{q-r}^* a_p a_q \end{equation}
as defined in (\ref{eq:defD}). In Subsection \ref{sub:MNparts} we compute the different parts of the excitation Hamiltonian $\cM_N$, introduced in (\ref{eq:defMN}). Finally, in Subsection \ref{sub:MNproof}, we conclude the proof of Prop. \ref{prop:MN}. 

\subsection{Growth of Number and Energy of Excitations}
\label{sec:MNNKres}

The first lemma of this section controls the growth of the number of excitations with high momentum. 
\begin{lemma}\label{lm:NresgrowD} Assume the exponents $\alpha, \beta$ satisfy (\ref{eq:condab}). Let $k\in\NN_0$, $m=1,2,3$, $0< \gamma\leq \alpha$ and $c>0$ ($c  < 1$ if $\gamma = \alpha$). Then, there exists a constant $C>0$ such that 
		\begin{equation}\label{eq:NresgrowD}
		\begin{split}
		e^{-sD} (\cN_+ + 1)^{k }(\cN_{\geq c N^{\gamma}}+1)^m e^{sD} \leq &\; C(\cN_+ + 1)^{k } (\cN_{\geq  cN^{\gamma}}+1)^m ,		
		\end{split}
		\end{equation}
for all $s\in [-1;1] $ and all $N\in\NN$ large enough. 
\end{lemma}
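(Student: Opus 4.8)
The plan is to mimic the Gronwall-type argument used for the cubic phase in Lemma \ref{lm:NresgrowA}, but now adapted to the quartic operator $D = D_1 - D_1^*$ with $D_1$ as in \eqref{eq:def-D12}. As in the proof of Lemma \ref{lm:NresgrowA}, the case $m=0$ (not listed here, but harmless) would follow from $m=1$, and we treat $m=1,2,3$ in turn by induction on $m$. For fixed $\xi\in\cF_+^{\leq N}$ and $k\in\NN_0$, set
\[
\varphi_\xi(s) = \big\langle \xi, e^{-sD}(\cN_++1)^k(\cN_{\geq cN^\gamma}+1)^m e^{sD}\xi\big\rangle,
\]
and compute $\partial_s\varphi_\xi(s)$. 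Since $[D,\cN_+]=0$ (as noted around \eqref{eq:NgrowD}), the factor $(\cN_++1)^k$ commutes through and only the commutator $[(\cN_{\geq cN^\gamma}+1)^m, D_1]$ produces terms; expanding the $m$-th power gives a finite sum of nested commutators $\operatorname{ad}^{(j)}_{\cN_{\geq cN^\gamma}}(D_1)$ for $j=1,\dots,m$. The key structural observation is that, because $\alpha > \beta$ (indeed $\alpha > 3\beta + 2\kappa$ by \eqref{eq:condab}), for $r\in P_H$ and $p,q\in P_L$ we have $|p+r|,|q-r|\geq N^\alpha - N^\beta \geq cN^\gamma$ for $N$ large (using $\gamma\leq\alpha$ and $c<1$ when $\gamma=\alpha$), so $a^*_{p+r}, a^*_{q-r}$ each carry one unit of $\cN_{\geq cN^\gamma}$-charge while $a_p,a_q$ carry at most none (none if $|p|,|q|<cN^\gamma$, which holds once $N^\beta < cN^\gamma$ — and otherwise one each, handled exactly as the extra terms in \eqref{eq:NresgrowlemA2}). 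Hence each such commutator is again of the form $N^{-1}\sum_{r\in P_H, p,q\in P_L}\eta_r\,(\text{const.})\,a^*_{p+r}a^*_{q-r}a_p a_q$, up to lower-order analogues — exactly the same shape as $D_1$ itself.

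Next I would estimate the expectation of each resulting term by Cauchy--Schwarz after splitting the four operators symmetrically: bound
\[
\frac1N\Big|\sum_{\substack{r\in P_H\\ p,q\in P_L}}\eta_r\,\big\langle e^{sD}\xi, (\cN_++1)^k a^*_{p+r}a^*_{q-r}a_p a_q\,(\cN_{\geq cN^\gamma}+1)^{m-1}e^{sD}\xi\big\rangle\Big|
\]
by inserting $(\cN_{\geq cN^\gamma}+1)^{\pm 1/2}$ factors between the creation and annihilation halves and using the pull-through relations \eqref{eq:Ntheta-bds}. The sum over $r\in P_H$ of $\eta_r$-weighted terms is controlled by $\|\eta_H\|^2 \leq CN^{2\kappa-\alpha}$ (from \eqref{eq:etaHL2}), the sum over $p,q\in P_L$ of $\|a_p a_q \cdots\|^2$ gives $\|(\cN_{<N^\beta})(\cdots)\|^2 \leq N^2\|\cdots\|^2$ or, sharper, is absorbed into $(\cN_++1)$ factors; combined with the $N^{-1}$ prefactor this yields a bound of the form $CN^{2\kappa-\alpha}\varphi_\xi(s)$ (or, after using $\cN_{\geq cN^\gamma}\leq\cN_+$ and $N^{2\kappa-\alpha}\leq 1$ since $\alpha > 2\kappa$, simply $C\varphi_\xi(s)$). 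The same bound holds with $D$ replaced by $-D$, so Gronwall's lemma gives $\varphi_\xi(1)\leq C\varphi_\xi(0)$, which is \eqref{eq:NresgrowD}; since the constant is independent of $\xi$ this is an operator inequality.

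The main obstacle is bookkeeping rather than conceptual: one has to verify carefully that \emph{all} the nested commutators $\operatorname{ad}^{(j)}_{\cN_{\geq cN^\gamma}}(D_1)$ as well as $[(\cN_++1)^k, D_1]$ (which vanishes since $[D,\cN_+]=0$, simplifying matters compared to Lemma \ref{lm:NresgrowA}) preserve the ``$D_1$-shape,'' including the bookkeeping of the extra terms that appear when $N^\beta \geq cN^\gamma$ (i.e.\ small $\gamma$), where $a_p$ or $a_q$ can also carry $\cN_{\geq cN^\gamma}$-charge; this is precisely parallel to the distinction between the two sums in \eqref{eq:NresgrowlemA2} and \eqref{eq:NresgrowlemA9} and is handled identically. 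A secondary point requiring care is that for $m=2,3$ the double and triple commutators $\operatorname{ad}^{(2)}_{\cN_{\geq cN^\gamma}}(D_1)$, $\operatorname{ad}^{(3)}_{\cN_{\geq cN^\gamma}}(D_1)$ produce only finitely many such terms with bounded integer coefficients (since $[\cN_{\geq cN^\gamma}, a^*_{p+r}a^*_{q-r}a_pa_q] = (2 - \chi(|p|\geq cN^\gamma) - \chi(|q|\geq cN^\gamma)) a^*_{p+r}a^*_{q-r}a_pa_q$, an operator with bounded spectrum), so the iteration closes at each level with a uniform constant. Once this is in place the Gronwall step is routine and identical in spirit to Lemmas \ref{lm:NresgrowA} and \ref{lm:KresgrowA}.
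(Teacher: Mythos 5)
Your proposal is correct and follows essentially the same route as the paper: Gronwall's lemma applied to $\varphi_\xi(s)$, the observation that $[D,\cN_+]=0$ reduces everything to the commutator with $\cN_{\geq cN^\gamma}$, the fact that $|p+r|,|q-r|\geq N^\alpha-N^\beta\geq cN^\gamma$ so the nested commutators retain the shape of $D_1$ with bounded coefficients, and a Cauchy--Schwarz estimate with pull-through and $\|\eta_H\|$ yielding a bound $\leq C\varphi_\xi(s)$. The only cosmetic difference is that the paper treats $m=3$ directly and deduces $m=2$ from operator monotonicity of $x\mapsto x^{2/3}$, whereas you handle each $m$ by expanding the nested commutators; both work.
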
		

\begin{proof} Since $[\cN_+ , \cN_{\geq cN^\gamma} ] = 0$ and $[ \cN_+ , D] =0$, it is enough to prove the lemma for $k=0$. We consider first $m=1$. For $\xi \in \cF_+^{\leq N}$, we define the function $ \varphi_\xi:\mathbb{R}\to \mathbb{R}$ by 	
		\[\varphi_\xi (s)  = \langle \xi, e^{-sD} (\cN_{\geq c N^{\gamma}}+1) e^{sD}  \xi \rangle\]
so that differentiating yields
		\begin{equation} \label{eq:NresgrowlemD1}
		\begin{split}
		\partial_s \varphi_\xi (s)& =2 \Re\langle e^{sD} \xi, \big[\cN_{\geq c N^{\gamma}}, D_1 \big]  e^{sD}  \xi \rangle
		\end{split}
		\end{equation}
with $D_1$ as in \eqref{eq:def-D12}. By assumption,  $N^\alpha\geq  N^{\alpha}-N^{\beta} \geq cN^{\gamma}$ for sufficiently large $N\in \NN$. This implies that 
		\[   [\cN_{\geq cN^{\gamma} }, a^*_{p+r}] =  a^*_{p+r}, \hspace{0.2cm}  [\cN_{\geq cN^{\gamma} }, a^*_{q-r}] =  a^*_{q-r}\]
for $ r\in P_H$ and $p, q\in P_L$, by \eqref{eq:ccr} and \eqref{eq:comm2}. We then compute
		\begin{equation}\label{eq:NresgrowlemD2}
		\begin{split}
		\big[\cN_{\geq c N^{\gamma}}, D_1 \big] & = \frac1{ N} \sum_{\substack{ r\in P_{H}, p, q \in P_{L}}} \eta_r a^*_{p+r}a^*_{q-r}a_pa_q  -  \frac1{ N} \sum_{\substack{ r\in P_{H}, p, q \in P_{L}, \\ |p|\geq cN^{\gamma}  }} \eta_r a^*_{p+r}a^*_{q-r}a_pa_q.
		\end{split}
		\end{equation}
and apply Cauchy-Schwarz to obtain
		\begin{equation}\label{eq:NresgrowlemD6}\begin{split}
		| \partial_s \varphi_\xi (s)|		& \leq  \frac C{N} \bigg(  \sum_{\substack{ r\in P_{H}, p, q \in P_{L}, \\ |p+r|\geq cN^{\gamma}, |q-r|\geq cN^{\gamma} }}  \| a_{p+r} (\cN_{\geq cN^{\gamma} }+1)^{-1/2} a_{q-r}  e^{sD}\xi \|^2  \bigg)^{1/2}\\
		&\hspace{1cm} \times \|\eta_H\| \bigg(  \sum_{\substack{ p,q  \in P_{L} } } \|a_p(\cN_{\geq cN^{\gamma} }+1)^{1/2} a_q  e^{sD}\xi \|^2\bigg)^{1/2} \\
		 &\leq CN^{\kappa+3\beta/2-\alpha/2} \varphi_\xi(s)\leq C\varphi_\xi(s).
		\end{split}\end{equation}
Since the bound is independent of $\xi \in \cF_+^{\leq N}$ and it also holds true if we replace $ D$ by $-D$ in the definition of $ \varphi_\xi$, this proves \eqref{eq:NresgrowD}, for $m=1$. 

For $m=3$, we define  
		\[\psi_\xi(s) = \langle \xi, e^{-sD} (\cN_{\geq c N^{\gamma}}+1)^3 e^{sD}  \xi \rangle \]
with derivative 
\[ \partial_s \psi_\xi(s) =  2\Re\langle e^{sD} \xi, [ (\cN_{\geq c N^{\gamma}} + 1)^3, D_1 ]  e^{sD} \xi \rangle \]		
We have 
\begin{equation}\label{eq:3-comm} \begin{split} [ (\cN_{\geq c N^{\gamma}}+1)^3 , D_1 ] = \; &3 ( \cN_{\geq c N^{\gamma}}+1) [ \cN_{\geq c N^{\gamma}} , D_1 ] (\cN_{\geq c N^{\gamma}}+1) \\ &+ [ \cN_{\geq c N^{\gamma}} , [ \cN_{\geq c N^{\gamma}} , [ \cN_{\geq c N^{\gamma}} , D_1 ] ] ]. \end{split} \end{equation} 
The contribution of the first term on the r.h.s. of (\ref{eq:3-comm}) can be controlled as in \eqref{eq:NresgrowlemD6} (replacing $e^{sD} \xi$ with $(\cN_{\geq  c N^{\gamma}}+1) e^{sD} \xi$). With \eqref{eq:NresgrowlemD2} and using again that $N^\alpha\geq  N^{\alpha}-N^{\beta} \geq cN^{\gamma}$, we obtain that  
\[  \begin{split} [ \cN_{\geq c N^{\gamma}} , [ \cN_{\geq c N^{\gamma}} , & [ \cN_{\geq c N^{\gamma}} , D_1 ] ] ] \\ = \; & \frac4{ N} \!\!\sum_{\substack{ r\in P_{H}, p, q \in P_{L} }} \eta_r a^*_{p+r}a^*_{q-r}a_pa_q  -  \frac7{ N} \!\!\sum_{\substack{ r\in P_{H}, p, q \in P_{L}, \\ |p|\geq cN^{\gamma}  }} \eta_r a^*_{p+r}a^*_{q-r}a_pa_q \\ &+  \frac{3}{ N} \!\!\sum_{\substack{ r\in P_{H}, p, q \in P_{L}, \\ |p|,|q| \geq cN^{\gamma}  }} \eta_r a^*_{p+r}a^*_{q-r} a_p a_q .
		\end{split} \]
All these contributions can be controlled like those in \eqref{eq:NresgrowlemD2}. We conclude that 
\[ | \partial_s \psi_\xi (s)| \leq C \psi_\xi (s) \]
This proves \eqref{eq:NresgrowD} with $m=3$. The case $m=2$ follows by operator monotonicity of the function $x \mapsto x^{2/3}$. 
\end{proof}

Next, we prove bounds for the growth of the low-momentum part of the kinetic energy, defined as in  (\ref{eq:defKres}).
\begin{lemma}\label{lm:KresgrowD} Assume the exponents $\alpha, \beta$ satisfy (\ref{eq:condab}). Let $0< \gamma_1, \gamma_2 \leq \alpha$, $c_1, c_2 \geq 0$ (and $c_j \leq 1$ if $\gamma_j = \alpha$, for $j=1,2$). Then, there exists a constant $C>0$ such that 
\begin{equation}\label{eq:KresgrowD}
\begin{split}		
e^{-sD} \cK_{\leq c_1 N^{\gamma_1}} e^{sD} \leq&\; \cK_{\leq c_1 N^{\gamma_1}}  + N^{2\beta -1} (\cN_{\geq \frac12 N^{\alpha}} +1)^2,\\
		e^{-sD} \cK_{\leq c_1 N^{\gamma_1}} (\cN_{\geq c_2 N^{\gamma_2}}+1)e^{sD} \leq&\; \cK_{\leq c_1 N^{\gamma_1}}(\cN_{\geq c_2 N^{\gamma_2}} +1)  \\
		& + N^{2\beta -1} (\cN_{\geq c_2 N^{\gamma_2}} +1)^2(\cN_{\geq \frac12 N^{\alpha}} +1)\\
		\end{split}
		\end{equation}
for all $s\in [-1;1] $ and all $N\in\NN$ sufficiently large. 
\end{lemma}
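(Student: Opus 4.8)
The plan is to follow exactly the Gronwall strategy used in the proof of Lemma \ref{lm:KresgrowA}, only now commuting the restricted kinetic energy through the \emph{quartic} operator $D = D_1 - D_1^*$ instead of the cubic operator $A$. For the first inequality, I would fix $\xi \in \cF_+^{\leq N}$, set $\varphi_\xi(s) = \langle \xi, e^{-sD}\cK_{\leq c_1 N^{\gamma_1}} e^{sD}\xi\rangle$, and compute $\partial_s\varphi_\xi(s) = 2\,\Re\langle e^{sD}\xi, [\cK_{\leq c_1 N^{\gamma_1}}, D_1]\,e^{sD}\xi\rangle$. The crucial algebraic observation is the same as in Lemma \ref{lm:KresgrowA}: since $r \in P_H$ and $p,q \in P_L$, one has $|p+r|, |q-r| \geq N^\alpha - N^\beta > c_1 N^{\gamma_1}$, so $[\cK_{\leq c_1 N^{\gamma_1}}, a^*_{p+r}] = [\cK_{\leq c_1 N^{\gamma_1}}, a^*_{q-r}] = 0$, and only the two annihilation operators $a_p, a_q$ contribute. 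Using $[\cK_{\leq c_1 N^{\gamma_1}}, a_p] = -p^2\chi(|p|\leq c_1 N^{\gamma_1})a_p$ (and likewise for $a_q$), one finds
\[
[\cK_{\leq c_1 N^{\gamma_1}}, D_1] = -\frac{1}{N}\sum_{\substack{r\in P_H,\, p,q\in P_L:\\ |p|\leq c_1 N^{\gamma_1}}} p^2\, \eta_r\, a^*_{p+r}a^*_{q-r}a_p a_q \;-\; (p\leftrightarrow q\text{ term}).
\]
Since $|p| \leq N^\beta$ on $P_L$, the factor $p^2$ costs $N^{2\beta}$; distributing it as $|p|\cdot|p|$ and applying Cauchy--Schwarz (pairing one $a_{p+r}a_{q-r}$ factor against one $|p||\eta_r| a_p a_q$ factor, inserting $(\cN_{\geq \frac12 N^\alpha}+1)^{\pm 1/2}$ as in \eqref{eq:KresgrowA4}, and using $\|\eta_H\| \leq CN^{\kappa - \alpha/2}$), one gets $|\partial_s\varphi_\xi(s)| \leq C N^{2\beta + \kappa - \alpha/2 - 1}\,\|(\cN_{\geq \frac12 N^\alpha}+1)e^{sD}\xi\|\,\|\cK^{1/2}_{\leq c_1 N^{\gamma_1}} e^{sD}\xi\|$. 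Splitting this product and invoking Lemma \ref{lm:NresgrowD} (with $c = \tfrac12$, $\gamma = \alpha$, $m=2$) to control the $\cN_{\geq \frac12 N^\alpha}$ factor yields $\partial_s\varphi_\xi(s) \leq C N^{2\beta-1}\langle\xi,(\cN_{\geq \frac12 N^\alpha}+1)^2\xi\rangle + C\varphi_\xi(s)$ (note $2\beta+2\kappa-\alpha-1 \leq 2\beta-1$ since $\alpha > 2\kappa$), and Gronwall's lemma gives the first bound.

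For the second inequality I would set $\psi_\xi(s) = \langle\xi, e^{-sD}\cK_{\leq c_1 N^{\gamma_1}}(\cN_{\geq c_2 N^{\gamma_2}}+1)e^{sD}\xi\rangle$ and differentiate, producing two terms: one with $[\cK_{\leq c_1 N^{\gamma_1}}, D_1]$ acting and the $\cN_{\geq c_2 N^{\gamma_2}}+1$ factor as spectator, and one with $\cK_{\leq c_1 N^{\gamma_1}}$ spectator and $[\cN_{\geq c_2 N^{\gamma_2}}, D_1]$ acting. The first is handled exactly as above, now inserting $(\cN_{\geq c_2 N^{\gamma_2}}+1)^{\pm 1/2}$ via the pull-through formula \eqref{eq:Ntheta-bds} and applying Lemma \ref{lm:NresgrowD} to get the bound $CN^{2\beta-1}\langle\xi,(\cN_{\geq c_2 N^{\gamma_2}}+1)^2(\cN_{\geq \frac12 N^\alpha}+1)\xi\rangle + C\psi_\xi(s)$, mirroring \eqref{eq:KresgrowA5}--\eqref{eq:KresgrowA6}. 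The second term uses the explicit form of $[\cN_{\geq c_2 N^{\gamma_2}}, D_1]$ from \eqref{eq:NresgrowlemD2}; since $\cK_{\leq c_1 N^{\gamma_1}}$ commutes with $a^*_{p+r}, a^*_{q-r}$, the kinetic factor $|v|^2$ acts only on the spectator modes and, proceeding as in \eqref{eq:KresgrowA7} with Cauchy--Schwarz and $\|\eta_H\| \leq CN^{\kappa-\alpha/2}$, one bounds it by $CN^{\kappa + 3\beta/2 - \alpha/2}\psi_\xi(s) \leq C\psi_\xi(s)$ (using $\alpha > 3\beta + 2\kappa$, hence $\kappa + 3\beta/2 - \alpha/2 < 0$). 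Combining the two estimates gives $\partial_s\psi_\xi(s) \leq CN^{2\beta-1}\langle\xi,(\cN_{\geq c_2 N^{\gamma_2}}+1)^2(\cN_{\geq \frac12 N^\alpha}+1)\xi\rangle + C\psi_\xi(s)$, and Gronwall's lemma finishes the proof. Since all bounds are uniform in $\xi$ and unchanged under $D \mapsto -D$, the statement holds for all $s \in [-1;1]$.

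The main obstacle I anticipate is purely bookkeeping: one must be careful that the $N^{2\beta}$ loss coming from $p^2$ on $P_L$ is genuinely absorbed by the gain $N^{-1}$ from the prefactor of $D_1$ and by $\|\eta_H\| \leq CN^{\kappa-\alpha/2}$, so that the net power of $N$ multiplying the $(\cN_{\geq \frac12 N^\alpha}+1)^2$ (resp.\ $(\cN_{\geq c_2 N^{\gamma_2}}+1)^2(\cN_{\geq \frac12 N^\alpha}+1)$) term is exactly $2\beta - 1$ as claimed, and that the degenerate factors of $(\cN_{\geq\cdot}+1)$ introduced by pull-through formulas are correctly tracked so that Lemma \ref{lm:NresgrowD} applies with the right value of $m \in \{1,2,3\}$. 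There is no new conceptual difficulty compared with Lemma \ref{lm:KresgrowA}; the quartic structure of $D$ merely means one pairs off \emph{two} annihilation operators rather than one, which is why $\|\eta_H\|$ enters with a single power and the $P_L$-momentum constraint supplies the extra room.
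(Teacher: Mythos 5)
Your proposal follows essentially the same route as the paper's proof: the same Gronwall setup, the same observation that $\cK_{\leq c_1N^{\gamma_1}}$ commutes with $a^*_{p+r},a^*_{q-r}$, the same commutator formula \eqref{eq:KresgrowD3}, Cauchy--Schwarz with $(\cN_{\geq\frac12N^\alpha}+1)^{\pm1/2}$ insertions, and Lemma \ref{lm:NresgrowD} to close the estimate. The only slip is the intermediate exponent: the Cauchy--Schwarz step actually yields $N^{5\beta/2+\kappa-\alpha/2-1/2}$ rather than your $N^{2\beta+\kappa-\alpha/2-1}$ (the free sum over the second low momentum costs $N^{3\beta/2}$ and the extra annihilation operator contributes $\cN_+^{1/2}\leq N^{1/2}$), so after squaring one gets $N^{5\beta+2\kappa-\alpha-1}\leq N^{2\beta-1}$, which uses condition $i)$ of \eqref{eq:condab}, $\alpha>3\beta+2\kappa$, not merely $\alpha>2\kappa$; the stated conclusion is unaffected.
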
	
\begin{proof} Fix $\xi \in \cF_+^{\leq N}$ and define $\varphi_\xi:\mathbb{R}\to \mathbb{R} $ by $\varphi_\xi (s) = \langle \xi, e^{-sD}  \cK_{\leq c_1 N^{\gamma_1}} e^{sD}\xi \rangle $ such that
		\[\begin{split} 
		\partial_s \varphi_\xi(s) &= 2\Re\langle \xi, e^{-sD} [  \cK_{\leq c_1 N^{\gamma_1}} , D_1  ]   e^{sD}\xi \rangle.\\
		\end{split}\]
We notice that
		\[ \big[ \cK_{\leq c_1 N^{\gamma_1}}, a^*_{p+r} \big]=\big[ \cK_{\leq c_1 N^{\gamma_1}}, a^*_{q-r} \big]=0 \]
if $ r\in P_H $ and $p,q\in P_L$, because $ |r|, |p+r|, |q-r|\geq N^{\alpha}- N^{\beta} > c_1 N^{\gamma_1}$ for $N\in\NN$ large enough. 
Using \eqref{eq:ccr}, we then compute
		\begin{equation}\label{eq:KresgrowD3}
		\begin{split}
		[ \cK_{\leq c_1 N^{\gamma_1}} , D_1 ]  & =  -  \frac1{ N} \sum_{\substack{ r\in P_{H}, p, q \in P_{L}: |p|\leq c_1 N^{\gamma_1}   }} p^2 \eta_r a^*_{p+r}a^*_{q-r}a_pa_q.
		\end{split}
		\end{equation}
and, using that $|p|\leq N^{\beta}$ for $p\in P_L$, we obtain with Cauchy-Schwarz
		\begin{equation}\label{eq:KresgrowD4}
		\begin{split}
		&\big| \langle \xi, e^{-sD} [ \cK_{\leq c_1 N^{\gamma_1}}, D_1 ]  e^{sD}\xi \rangle\big| \\
		&\leq  \frac {CN^{\beta}}{ N} \sum_{\substack{ r\in P_{H}, p, q \in P_{L}: |p|\leq c_1 N^{\gamma_1}   }} |p| |\eta_r| \|  a_{r+p}a_{q-r} e^{sD}\xi \| \|  a_p a_qe^{sD}\xi \| \\
		&\leq CN^{5\beta/2+\kappa-\alpha/2-1/2} \| (\cN_{\geq \frac12 N^\alpha }+1)e^{sD}\xi \| \| \cK_{\leq c_1 N^{\gamma_1}}^{1/2}e^{sD}\xi \|.
		\end{split}
		\end{equation}
With Lemma \ref{lm:NresgrowD} choosing $ c=\frac12$ and $ \gamma=\alpha$, this implies for $N\in\NN$ large enough that
		\begin{equation*}
		\begin{split}
		 \partial_s\varphi_\xi(s)  &\leq CN^{5\beta/2+\kappa-\alpha/2-1/2} \| (\cN_{\geq \frac12 N^\alpha }+1) e^{sD}\xi \| \| \cK_{\leq c_1 N^{\gamma_1}}^{1/2}e^{sD}\xi \| \\
		&\leq CN^{2\beta-1} \langle \xi, (\cN_{\geq \frac12 N^\alpha }+1)^2 \xi \rangle + C \varphi_\xi(s).
		\end{split}
		\end{equation*}
This proves the first inequality in \eqref{eq:KresgrowD}, by Gronwall's lemma and $ \alpha>   3\beta+2\kappa\geq 0$. 

Next, let us prove the second inequality in \eqref{eq:KresgrowD}. We define $ \psi_\xi:\mathbb{R}\to \mathbb{R}$ by 
		\[ \psi_\xi(s) = \langle \xi, e^{-sD}  \cK_{\leq c_1 N^{\gamma_1}} (\cN_{\geq c_2 N^{\gamma_2}}+1) e^{sD}\xi \rangle, \]
and we compute
		\[\begin{split}
		\partial_s \psi_\xi(s) &= 2\Re \langle \xi, e^{-sD}  \big[ \cK_{\leq c_1 N^{\gamma_1}}, D_1\big] (\cN_{\geq c_2 N^{\gamma_2}}+1) e^{sD}\xi \rangle\\
		& \hspace{0.4cm} + 2\Re \langle \xi, e^{-sD}  \cK_{\leq c_1 N^{\gamma_1}} \big[ \cN_{\geq c_2 N^{\gamma_2}}, D_1 \big] e^{sD}\xi \rangle.
		\end{split}\]
First, we proceed as in \eqref{eq:KresgrowD4} and obtain with (\ref{eq:Ntheta-bds}) that 
		\begin{equation*}
		\begin{split}
		&\big| \langle \xi, e^{-sD} [ \cK_{\leq c_1 N^{\gamma_1}}, D_1 ](\cN_{\geq c_2 N^{\gamma_2}}+1)  e^{sD}\xi \rangle\big| \\
		&\leq \frac {CN^{\beta}}{ N} \sum_{\substack{ r\in P_{H}, p, q \in P_{L}: \\|p|\leq c_1 N^{\gamma_1}   }}|p| |\eta_r| \| a_{r+p} a_{q-r} (\cN_{\geq c_2 N^{\gamma_2}}+1)^{1/2} e^{sD}\xi \| \\
		&\hspace{4.5cm}\times\|a_q a_p (\cN_{\geq c_2 N^{\gamma_2}}+1)^{1/2} e^{sD}\xi \| \\
		&\leq  CN^{5\beta/2+\kappa-\alpha/2-1/2}\| (\cN_{\geq c_2 N^{\gamma_2} }+1)(\cN_{\geq \frac12 N^\alpha }+1)^{1/2}e^{sD}\xi \| \\
		&\hspace{4.5cm}\times\| \cK_{\leq c_1 N^{\gamma_1}}^{1/2}(\cN_{\geq c_2 N^{\gamma_2}}+1)^{1/2}e^{sD}\xi \|.
		\end{split}
		\end{equation*}
Here, we used in the last step that $ [a_{q-r}, \cN_{\geq c_2N^{\gamma_2}}] = a_{q-r}$ for $r\in P_H$, $q\in P_L$ and that $ \cN_{c_2 N^{\gamma_2}} \geq \cN_{ N^{\alpha}-N^{\beta}}$ for $N\in\NN$ large enough. The last bound and Lemma \ref{lm:NresgrowD} imply that
		\begin{equation}\label{eq:KresgrowD6}\begin{split}
		&\big| \langle \xi, e^{-sD} [ \cK_{\leq c_1 N^{\gamma_1}}, D_1 ](\cN_{\geq c_2 N^{\gamma_2}}+1)  e^{sD}\xi \rangle\big| \\
		&\hspace{2cm}\leq CN^{2\beta-1}  \langle \xi, (\cN_{\geq c_2 N^{\gamma_2} }+1)^2(\cN_{\geq \frac12 N^\alpha }+1)\xi \rangle +  C\psi_\xi(s).
		\end{split} \end{equation}
Next, we recall the identity \eqref{eq:NresgrowlemD2} and that 
		\[ \big[ \cK_{\leq c_1 N^{\gamma_1}}, a^*_{p+r} \big]=\big[ \cK_{\leq c_1 N^{\gamma_1}}, a^*_{q-r} \big]=0 \]
whenever $ r\in P_H, p,q \in P_L$ and $N\in\NN$ is sufficiently large. We then obtain
		\begin{equation}\label{eq:KresgrowD7}\begin{split}
		&\big| \langle \xi, e^{-sD}  \cK_{\leq c_1 N^{\gamma_1}}\big[ \cN_{\geq c_2 N^{\gamma_2}}, D_1 \big]  e^{sD}\xi \rangle\big| \\
		&\leq  \frac {C}{N} \sum_{\substack{ r\in P_{H}, p,q \in P_{L},\\ v\in \Lambda_+^*:  |v|\leq c_1N^{\gamma_1} }} |v|^2 |\eta_r| \| a_{r+p}(\cN_{\geq c_2 N^{\gamma_2} }+1)^{-1/2} a_{q-r} a_v e^{sD}\xi \|\\
		&\hspace{5cm}\times \|a_pa_q (\cN_{\geq c_2 N^{\gamma_2} }+1)^{1/2}a_v e^{sD}\xi \| \\
		&\leq C N^{3\beta/2 +\kappa - \alpha/2} \langle e^{sD}\xi,  \cK_{\leq c_1 N^{\gamma_1}} (\cN_{\geq c_2 N^{\gamma_2}}+1) e^{sD}\xi \rangle \leq  C \psi_\xi(s).
		\end{split}\end{equation} 
Hence, putting \eqref{eq:KresgrowD6} and \eqref{eq:KresgrowD7} together, we have proved that
		\[ \partial_s \psi_\xi(s) \leq  CN^{2\beta-1}  \langle \xi, (\cN_{\geq c_2 N^{\gamma_2} }+1)^2(\cN_{\geq \frac12 N^\alpha }+1)\xi \rangle+C \psi_\xi(s),\]
which implies the second bound in \eqref{eq:KresgrowD}, by Gronwall's lemma.
\end{proof}	

It will also be important to control the potential energy operator, restricted to low momenta. We define 
		\begin{equation}\label{eq:defVNL}
		\cV_{N,L} = \frac1{2N}\sum_{\substack{ u\in \Lambda^*, p,q\in \Lambda_+^*:\\ p+u, q+u, p,q  \in P_L}} N^{\kappa}\widehat{V}(u/N^{1-\kappa})a^*_{p+u}a^*_{q}a_pa_{q+u}.
		\end{equation}
Notice that $ \cV_{N,L} = \cV_{N,L}^*$ by symmetry of the momentum restrictions. To calculate $e^D \cV_{N,L} e^{-D}$, we will use the next lemma, which will also be useful in the next subsections. 
\begin{lemma}\label{lm:VFresgrowD} Assume the exponents $\alpha, \beta$ satisfy (\ref{eq:condab}).  Let $ F = (F_p)_{p\in\Lambda_+^*}\in \ell^\infty(\Lambda_+^*)$ and define 		
\begin{equation}\label{eq:defZ}  \emph{Z} = \frac1{2N}\sum_{\substack{ u\in \Lambda^*, p,q\in \Lambda_+^*:\\ p+u, q+u, p,q  \in P_L}} F_u a^*_{p+u}a^*_{q}a_pa_{q+u}  \end{equation} 
Then, there exists a constant $C>0$ such that  	
		\begin{equation}\label{eq:VFresgrowD}
		\begin{split}		
		\pm \big( e^{-sD} \emph{Z} e^{sD}-Z\big) &\leq  C\|F\|_\infty N^{\beta -1}  \cK_L(\cN_{\geq \frac12 N^{\alpha} }+1) +  C\|F\|_\infty N^{3\beta -2}  (\cN_{\geq \frac12 N^{\alpha} }+1)^3
		\end{split}
		\end{equation}
for all $s\in [-1;1] $, and for all $N\in\NN$ sufficiently large. 
\end{lemma}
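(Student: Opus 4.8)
The plan is to prove Lemma~\ref{lm:VFresgrowD} by Gronwall's argument, exactly parallel to the proofs of Lemma~\ref{lm:KresgrowD} and Lemma~\ref{lm:NresgrowD}, differentiating the relevant matrix element in the conjugation parameter and bounding the resulting commutator $[\mathrm{Z},D_1]$ by the quantities on the right-hand side of \eqref{eq:VFresgrowD}. First I would compute $[\mathrm{Z},D_1]$. Writing $D_1 = \frac{1}{2N}\sum_{r\in P_H, p,q\in P_L}\eta_r a^*_{p+r}a^*_{q-r}a_p a_q$, the commutator arises from contracting one of the two annihilation operators $a_p,a_q$ in $D_1$ against one of the two creation operators $a^*_{p+u},a^*_q$ in $\mathrm{Z}$ (and h.c.). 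Since the momenta $p+r,q-r$ in $D_1$ lie outside $P_L$ (they satisfy $|p+r|,|q-r|\geq N^\alpha-N^\beta$), while all four momenta in $\mathrm{Z}$ lie in $P_L$, the creation operators in $D_1$ do \emph{not} contract with the annihilation operators in $\mathrm{Z}$; only the $a^*$'s of $\mathrm{Z}$ contract with the $a$'s of $D_1$. The result is a sum of terms of the schematic form $\frac{1}{N^2}\sum F_u\,\eta_r\, a^*_\bullet a^*_\bullet a^*_\bullet a_\bullet a_\bullet a_\bullet$, where two of the six momenta are forced into $P_H$-type sets (carrying the $\eta_r$ weight) and the remaining momenta are constrained by $P_L$.

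The core estimate is then a Cauchy--Schwarz bound on $\langle e^{sD}\xi, [\mathrm{Z},D_1] e^{sD}\xi\rangle$. I would split the six operators into a ``left'' group of three and a ``right'' group of three, inserting a factor $(\cN_{\geq \frac12 N^\alpha}+1)^{\pm 1/2}$ between them and using the pull-through bounds \eqref{eq:Ntheta-bds}. The $\eta_r$-sum over $r\in P_H$ contributes a factor $\|\eta_H\|\leq C N^{\kappa-\alpha/2}$ by \eqref{eq:etaHL2}; the $u$-sum is controlled using $\|F\|_\infty$ together with the $P_L$-restriction $|p|,|q|,|p+u|,|q+u|\leq N^\beta$, which localizes $u$ to a ball of radius $2N^\beta$, producing powers of $N^\beta$. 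Tracking the bookkeeping, one pairs off the operators so that on one side a factor $\cK_L^{1/2}$ (from one $|p|\leq N^\beta$ giving a weight $|p|$, then bounding $\sum |p|^0\leq$ (number of modes) $\lesssim N^{3\beta}$, or more efficiently keeping $|p|^2$ to form $\cK_L$) appears and on the other side factors of $\cN_{\geq \frac12 N^\alpha}$ appear. After applying Lemma~\ref{lm:KresgrowD} and Lemma~\ref{lm:NresgrowD} to move the conjugations $e^{sD}$ past these observables, one arrives at $|\partial_s \varphi_\xi(s)| \leq C\|F\|_\infty N^{\beta-1}\langle\xi,\cK_L(\cN_{\geq \frac12 N^\alpha}+1)\xi\rangle + C\|F\|_\infty N^{3\beta-2}\langle\xi,(\cN_{\geq\frac12 N^\alpha}+1)^3\xi\rangle + C\varphi_\xi(s)$ for an appropriate auxiliary function $\varphi_\xi$, and Gronwall closes the argument; replacing $D$ by $-D$ gives the other sign.

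The main obstacle, as usual in these lemmas, is the combinatorial bookkeeping: correctly enumerating all contraction terms in $[\mathrm{Z},D_1]$ (including the normal-ordering corrections that arise when a contraction produces a Kronecker delta, leaving a quartic rather than sextic term), and then, for each term, choosing the right split into left/right groups and the right placement of the weights $|p|$ (to build $\cK_L$) and $(\cN_{\geq \frac12 N^\alpha}+1)^{\pm 1/2}$ so that the exponents of $N$ come out as $N^{\beta-1}$ and $N^{3\beta-2}$ and not worse. In particular one must be careful that the ``extra'' powers of $N^\beta$ from the unconstrained low-momentum sums are exactly absorbed either into a $\cK_L$ (each such sum costing at most $N^\beta$ if one keeps a $|p|$-weight, or $N^{3\beta}$ if one does not, which is why the $(\cN_{\geq\frac12 N^\alpha}+1)^3$ term carries $N^{3\beta-2}$) and that the normal-ordered lower-order terms obey the same or better bounds. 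Since Lemma~\ref{lm:KresgrowD} and Lemma~\ref{lm:NresgrowD} are already available, the conjugation-growth inputs are for free, so the only genuine work is this single commutator computation and the Cauchy--Schwarz estimate; I expect it to be routine but lengthy, mirroring the proof of \cite[Prop.~8.1]{BBCS} restricted to low momenta.
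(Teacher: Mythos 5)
Your proposal follows essentially the same route as the paper: the key observation that the creation operators of $D_1$ commute with the annihilation operators of $\mathrm{Z}$ (since $|r|,|v+r|,|w-r|>N^\beta$ for $r\in P_H$, $v,w\in P_L$), so $[\mathrm{Z},D_1]$ consists only of a quartic double-contraction term and a sextic single-contraction term, each estimated by Cauchy--Schwarz with inserted factors $(\cN_{\geq\frac12 N^\alpha}+1)^{\pm1/2}$ and momentum weights building $\cK_L$, followed by Lemmas~\ref{lm:NresgrowD} and~\ref{lm:KresgrowD}. The only cosmetic difference is that the resulting derivative bound contains no $C\varphi_\xi(s)$ term, so one simply integrates in $s$ rather than invoking Gronwall — which is what yields the stated two-sided bound on the \emph{difference} $e^{-sD}\mathrm{Z}e^{sD}-\mathrm{Z}$ rather than merely $e^{-sD}\mathrm{Z}e^{sD}\lesssim \mathrm{Z}+\dots$.
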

\begin{proof} Given $ \xi\in \cF_+^{\leq N}$, we define $ \varphi_\xi:\mathbb{R}\to \mathbb{R}$ by
		\[ \varphi_\xi(s) = \langle \xi, e^{-sD} \text{Z} e^{sD}\xi\rangle, \]
which has derivative
		\[\begin{split} \partial_s \varphi_\xi(s) &= 2\Re \langle \xi, e^{-sD} [ \text{Z}, D_1] e^{sD}\xi\rangle.\\
		\end{split}\]
By assumption, we have $ \alpha> 3\beta + 2\kappa$ so that $ |r|, |v+r|, |w-r| \geq N^\alpha-N^\beta > N^\beta $ if $ r\in P_H$ and $ v,w\in P_L$, for sufficiently large $N\in\NN$. This implies in particular that
		\[[a_{p}a_{q+u}, a^*_{v+r}a^*_{w-r}] = 0\]
whenever $q+u, p\in P_L $ and $ r\in P_H$, $ v,w\in P_L$. As a consequence, we find
		\begin{equation}\label{eq:commVNLD1}
		\begin{split}
		[\text{Z}, D_1] &= - \frac1{2N^{2}}\sum_{\substack{ u\in \Lambda^*, r\in P_H, v,w \in P_L: \\ w-u, v+u  \in P_L}} F_u \eta_r a^*_{v+r}a^*_{w-r}  a_{w-u} a_{v+u}\\
		&\hspace{0.4cm} - \frac1{N^{2}}\sum_{\substack{ u\in \Lambda^*, r\in P_H, v, w, p\in P_L:\\ p+u, v+u  \in P_L}}F_u \eta_r a^*_{v+r}a^*_{w-r} a^*_{p+u} a_w a_{v+u} a_p.
		\end{split}
		\end{equation}
With (\ref{eq:Ntheta-bds}) and $ N^\alpha-N^\beta> \frac 12 N^\alpha$ for $N\in \NN$ large enough, we can bound 
		\[\begin{split}
		&\bigg| \frac1{N^{2}}\sum_{\substack{ u\in \Lambda^*, r\in P_H, v,w \in P_L: \\ w-u, v+u  \in P_L}}F_u \eta_r \langle e^{sD}\xi,a^*_{v+r}a^*_{w-r}  a_{w-u} a_{v+u}e^{sD}\xi\rangle\bigg| \\
		&\leq \frac{C\|F\|_\infty }{N^{2}}\bigg(\sum_{\substack{ u\in \Lambda^*, r\in P_H, v,w \in P_L: \\ w-u, v+u  \in P_L}} |v+u|^{-2}\| a_{v+r} (\cN_{\geq \frac12 N^{\alpha} }+1)^{-1/2}a_{w-r} e^{sD}\xi  \|^2 \bigg)^{1/2}\\
		&\hspace{1cm}\times \bigg(\sum_{\substack{ u\in \Lambda^*, r\in P_H, v,w \in P_L: \\ w-u, v+u  \in P_L}}\eta_r^2 |v+u|^{2} \| a_{w-u} (\cN_{\geq \frac12 N^{\alpha} }+1)^{1/2}a_{v+u}e^{sD}\xi  \|^2 \bigg)^{1/2}\\
		&\leq C\|F\|_\infty N^{7\beta/2 + \kappa-\alpha/2-3/2} \| (\cN_{\geq \frac12 N^{\alpha} }+1)^{1/2}  e^{sD}\xi\| \| \cK_L^{1/2}(\cN_{\geq \frac12 N^{\alpha} }+1)^{1/2}  e^{sD}\xi\|.
		\end{split}\]
and 
		\[\begin{split}
		&\bigg| \frac1{N^{2}}\sum_{\substack{ u\in \Lambda^*, r\in P_H, v, w, p\in P_L:\\ p+u, v+u  \in P_L}} F_u \eta_r \langle e^{sD}\xi, a^*_{v+r}a^*_{w-r} a^*_{p+u} a_w a_{v+u} a_pe^{sD}\xi\rangle\bigg| \\
		&\leq \frac{C\|F\|_\infty}{N^{2}}\bigg(\sum_{\substack{ u\in \Lambda^*, r\in P_H, v, w, p\in P_L:\\ p+u, v+u  \in P_L}} |p+u|^2|p|^{-2} \| a_{v+r} (\cN_{\geq \frac12 N^{\alpha} }+1)^{-1/2}a_{w-r} a_{p+u} e^{sD}\xi  \|^2 \bigg)^{1/2}\\
		&\hspace{1cm}\times \bigg(\sum_{\substack{ u\in \Lambda^*, r\in P_H, v, w, p\in P_L:\\ p+u, v+u  \in P_L}}\eta_r^2 |p|^{2}|p+u|^{-2} \| a_w (\cN_{\geq \frac12 N^{\alpha} }+1)^{1/2}a_{v+u}a_pe^{sD}\xi  \|^2 \bigg)^{1/2}\\
		&\leq C\|F\|_\infty N^{5\beta/2 + \kappa-\alpha/2-1} \langle \xi, e^{-sD} \cK_L(\cN_{\geq \frac12 N^{\alpha} }+1)e^{sD}\xi\rangle.
		\end{split}\]
Lemma \ref{lm:NresgrowD}, Lemma \ref{lm:KresgrowD} and the assumption $ \alpha> 3\beta + 2\kappa\geq 0$ implies 
		\[\begin{split}
		\pm \partial_s \varphi_s(\xi) & \leq C\|F\|_\infty N^{\beta -1} \langle \xi,  \cK_L(\cN_{\geq \frac12 N^{\alpha} }+1)\xi\rangle +  C\|F\|_\infty N^{3\beta -2}\langle \xi, (\cN_{\geq \frac12 N^{\alpha} }+1)^3\xi\rangle.
		\end{split}\]
Hence, integrating the last equation from zero to $ s\in [-1; 1]$ proves the lemma.
\end{proof}
With $\sup_{p\in\Lambda^*}| N^{\kappa}\widehat{V}(p/N^{1-\kappa})|\leq CN^\kappa$, we obtain immediately the following result.
\begin{cor} \label{cor:VNresgrowD}
Assume the exponents $\alpha, \beta$ satisfy (\ref{eq:condab}). Then there exists a constant $C>0$ such that 		
\begin{equation*}
\begin{split}		
\pm \big( e^{-sD} \cV_{N,L} e^{sD} -\cV_{N,L} \big)&\leq    CN^{\beta + \kappa-1} \cK_L(\cN_{\geq \frac12 N^{\alpha} }+1)  +  CN^{3\beta + \kappa-2}  (\cN_{\geq \frac12 N^{\alpha} }+1)^3
\end{split}
\end{equation*}
for all $s\in [-1;1] $, and for all $N\in\NN$ sufficiently large. 
\end{cor}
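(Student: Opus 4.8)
\textbf{Proof plan for Corollary \ref{cor:VNresgrowD}.} The claim is a direct specialization of Lemma \ref{lm:VFresgrowD}. First I would observe that the operator $\cV_{N,L}$ defined in \eqref{eq:defVNL} is precisely the operator $\mathrm{Z}$ of \eqref{eq:defZ} associated to the sequence $F = (F_u)_{u \in \Lambda^*}$ with $F_u = N^{\kappa}\widehat{V}(u/N^{1-\kappa})$: the momentum restrictions $p+u, q+u, p, q \in P_L$ in \eqref{eq:defZ} and \eqref{eq:defVNL} coincide, and the coefficients match by construction. So no new computation is needed beyond identifying the two expressions.

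The only point to verify is that $F \in \ell^\infty(\Lambda^*_+)$ with the right bound on its norm. Since $V \in L^3(\bR^3)$ has compact support, it lies in $L^1(\bR^3)$, and hence $\|\widehat{V}\|_\infty \leq \|V\|_{L^1} < \infty$; this gives $\|F\|_\infty = \sup_{u \in \Lambda^*}|N^\kappa \widehat V(u/N^{1-\kappa})| \leq C N^\kappa$ uniformly in $N$. (This is the estimate already recorded in the paper as $\sup_{p\in\Lambda^*}|N^\kappa\widehat V(p/N^{1-\kappa})| \leq CN^\kappa$.)

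Plugging $\|F\|_\infty \leq CN^\kappa$ into the conclusion \eqref{eq:VFresgrowD} of Lemma \ref{lm:VFresgrowD} then yields
\[ \pm\big(e^{-sD}\cV_{N,L}e^{sD} - \cV_{N,L}\big) \leq CN^{\beta+\kappa-1}\cK_L(\cN_{\geq\frac12 N^\alpha}+1) + CN^{3\beta+\kappa-2}(\cN_{\geq\frac12 N^\alpha}+1)^3 \]
for all $s \in [-1;1]$ and all $N$ large enough, which is exactly the asserted bound.

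There is really no obstacle here: all the analytic work — the commutator computation \eqref{eq:commVNLD1}, the Cauchy--Schwarz estimates, and the Gronwall argument using Lemmas \ref{lm:NresgrowD} and \ref{lm:KresgrowD} — is carried out in the proof of Lemma \ref{lm:VFresgrowD}, and the corollary is obtained by a bookkeeping substitution. If anything deserves a line of care, it is only the (routine) remark that compact support plus $L^3$ integrability of $V$ gives the uniform bound on $\widehat V$, which is what makes $\|F\|_\infty \leq CN^\kappa$.
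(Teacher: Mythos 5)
Your proposal is correct and is exactly the paper's own derivation: the paper obtains the corollary from Lemma \ref{lm:VFresgrowD} by the identification $F_u = N^{\kappa}\widehat{V}(u/N^{1-\kappa})$ together with the uniform bound $\sup_{p}|N^{\kappa}\widehat{V}(p/N^{1-\kappa})| \leq CN^{\kappa}$, which follows from $V\in L^1$. Nothing is missing.
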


We also need rough bounds for the conjugation of the full potential energy operator $\cV_N$. To this end, we will make use of the following estimate for the commutator of $\cV_N$ with $D = D_1 - D_1^*$, with $D_1$ defined in \eqref{eq:def-D12}.	
\begin{prop}\label{prop:commVND} 
Assume the exponents $\alpha, \beta$ satisfy (\ref{eq:condab}). Then   
		\begin{equation} \label{eq:propcommVND}
		\begin{split}
		[\cV_N, D] & =  \frac{1}{2N} \sum_{\substack{ u\in \Lambda_+^*, p,q\in P_L:\\ p+u, q-u\neq 0 }} N^{\kappa} (\widehat{V}(./N^{1-\kappa})\ast  \eta/N)(u) \big( a^*_{p+u} a^*_{q-u}a_pa_q +\emph{h.c.}\big) \\
		&\hspace{0.4cm}+ \cE_{[\cV_N,D]}
		\end{split}
		\end{equation}
and there exists a constant $C>0$ such that 
		\begin{equation}\label{eq:propcommVNDerror}\begin{split}
		\pm \cE_{[\cV_N,D]} &\;\hspace{0cm}\leq \delta  \cV_N + CN^{\alpha+\kappa-1}\cV_N+ CN^{\alpha+\kappa-1}\cV_{N,L} \\
		&\;\hspace{0.4cm} +  \delta^{-1}C N^{\beta +\kappa-1} \cK_L (\cN_{\geq \frac12 N^{\alpha} }+1)  + \delta^{-1}C N^{3\beta +\kappa-1} (\cN_{\geq \frac12 N^{\alpha} }+1)^2
		\end{split}\end{equation}	
for all $\delta >0$ and for all $N\in\NN$ sufficiently large.
\end{prop}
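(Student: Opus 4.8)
The plan is to follow the strategy of \cite[Prop.~8.2]{BBCS}, adapting all estimates to the modified scaling of the interaction (replacing $N$ by $N^{1-\kappa}$ in the spatial scale and tracking the extra factors $N^\kappa$ coming from $N^\kappa\widehat V(\cdot/N^{1-\kappa})$ in place of $\widehat V(\cdot/N)$), and also keeping track of the restricted observables $\cN_{\geq\frac12N^\alpha}$ and $\cK_L$ which will be needed later. First I would write $[\cV_N,D]=[\cV_N,D_1]+\mathrm{h.c.}$, expand $D_1$ as in \eqref{eq:def-D12}, and use the canonical commutation relations \eqref{eq:ccr} to compute $[\cV_N,D_1]$ explicitly. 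Carrying out the contractions, one creation/annihilation pair of $\cV_N$ can contract with the two creation operators $a^*_{p+r},a^*_{q-r}$ of $D_1$ in the two possible ways; the fully contracted term (where both contractions land on the $D_1$ creation operators) produces, after summing over the internal momentum $r$, precisely the quadratic-looking main term
\[
\frac{1}{2N}\sum_{\substack{u\in\Lambda_+^*,\,p,q\in P_L:\\ p+u,\,q-u\neq 0}} N^\kappa(\widehat V(\cdot/N^{1-\kappa})\ast\eta/N)(u)\,a^*_{p+u}a^*_{q-u}a_pa_q
\]
displayed in \eqref{eq:propcommVND}. All remaining terms — those with only one contraction, hence sextic in creation/annihilation operators, plus the terms where the summation over $r$ would normally be unrestricted but $\eta_r$ is supported on $P_H$ — are collected into $\cE_{[\cV_N,D]}$, together with a few lower-order commutator remainders (e.g.\ the normalization term $-a_p^*a_p/N$ type contributions, and terms where $u$ ranges over $P_H^c\cup\{0\}$).

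Next I would bound these error terms one by one, working in position space as in the proof of Proposition~\ref{prop:comm1VN} and of \cite[Prop.~8.1--8.2]{BBCS}. For the sextic terms, I would write $\cV_N$ as $\frac12\int dxdy\,N^{2-2\kappa}V(N^{1-\kappa}(x-y))\,\check a_x^*\check a_y^*\check a_y\check a_x$, insert the Fourier representation of $\eta_H$ and of the $P_L$-cutoffs, and estimate by Cauchy--Schwarz, splitting the $\|\eta_H\|$-factor off, using $\|\eta_H\|\leq CN^{\kappa-\alpha/2}$ from \eqref{eq:etaHL2}, the bound $|p|\leq N^\beta$ on $P_L$-momenta, $\|\check\eta\|_\infty\leq CN$ from \eqref{eq:etainfbnd}, and the pull-through estimates \eqref{eq:Ntheta-bds} to move factors of $(\cN_{\geq\frac12N^\alpha}+1)^{\pm1/2}$ past the annihilation operators as in \eqref{eq:Nresax}. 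Each term should then be controlled by $\cV_N^{1/2}$ times either $\cK_L^{1/2}(\cN_{\geq\frac12N^\alpha}+1)^{1/2}$ or $(\cN_{\geq\frac12N^\alpha}+1)$, with an $N$-power prefactor; an application of Young's inequality $\pm XY\le\delta X^2+\delta^{-1}Y^2$ then converts these into the terms on the right-hand side of \eqref{eq:propcommVNDerror}. The terms quartic in $a^\#$ that survive with $u\in P_H^c\cup\{0\}$ (analogues of $\Theta_1$ in the cubic case) are bounded by $\delta\cV_N+CN^{\alpha+\kappa-1}\cV_N$ and the $\cV_{N,L}$-piece arises precisely from those terms where all four momenta lie in $P_L$, which is why $CN^{\alpha+\kappa-1}\cV_{N,L}$ appears separately on the right of \eqref{eq:propcommVNDerror}; here one uses $\#\{r\in P_H:\cdots\}\leq CN^{3\alpha}$-type counting or, better, directly the $L^2$-norm of $\eta_H$ together with $\sum_{|r|\geq N^\alpha}|r|^{-2}\lesssim N^\alpha$ to produce the $N^{\alpha+\kappa-1}$ prefactor.

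The main obstacle I expect is bookkeeping rather than conceptual: there are many error terms (roughly a dozen after all contractions and after separating the $P_L$ vs.\ general momentum ranges), and for each one must choose the right way to split factors between the two halves of the Cauchy--Schwarz inequality so that the $N$-powers come out as stated and, crucially, so that the observables on the right are only $\cV_N$, $\cV_{N,L}$, $\cK_L(\cN_{\geq\frac12N^\alpha}+1)$, and $(\cN_{\geq\frac12N^\alpha}+1)^2$ — no bare $\cK$ and no high-momentum kinetic energy, since those are not available at this stage. The delicate point is the sextic term whose natural bound involves $\cK_L^{1/2}(\cN_{\geq\frac12N^\alpha}+1)$ on one side and $\cV_N^{1/2}$ on the other: one has to verify, using $\alpha>3\beta+2\kappa$ and $\alpha+\beta+2\kappa<1$ from \eqref{eq:condab}, that the resulting exponent is $\leq\beta+\kappa-1$ (up to the $\delta^{-1}$), and similarly that the purely-counting terms give exponent $\leq 3\beta+\kappa-1$ for the $(\cN_{\geq\frac12N^\alpha}+1)^2$ contribution; these are exactly the inequalities that constrain the admissible range of $\kappa$. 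Once all individual estimates are assembled and the worst $N$-powers identified, summing them and absorbing constants yields \eqref{eq:propcommVNDerror}, completing the proof.
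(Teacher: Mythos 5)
Your proposal follows essentially the same route as the paper's proof: the decomposition $[\cV_N,D]=[\cV_N,D_1]+\mathrm{h.c.}$, extraction of the convolution term from the fully contracted contributions, and Cauchy--Schwarz bounds (partly in position space) on the remaining quartic and sextic errors in terms of $\cV_N$, $\cV_{N,L}$, $\cK_L(\cN_{\geq\frac12N^\alpha}+1)$ and $(\cN_{\geq\frac12N^\alpha}+1)^2$, with the exponent checks $4\beta+3\kappa-\alpha-1<\beta+\kappa-1$ and $6\beta+3\kappa-\alpha-1<3\beta+\kappa-1$ exactly as in the paper. The only slip is cosmetic: the counting bound for the $r\in P_H^c\cup\{0\}$ term should read $\sum_{|r|\leq N^\alpha}|r|^{-2}\lesssim N^\alpha$ (a sum over low momenta), not $|r|\geq N^\alpha$.
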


\begin{proof} We have 
		\[[\cV_N, D] = [\cV_N, D_1] +\text{h.c.}\]
To compute the commutator $ [ \cV_N, D_1 ]$, we compute first of all that 
		\[\begin{split}
		&[a^*_{p+u}a^*_q a_p a_{q+u}, a^*_{v+r} a^*_{w-r} a_v a_w] \\
	 	&\hspace{2.5cm}= a^*_{p+u}a^*_q a_{q+u}  a^*_{w-r} a_v a_w\delta_{p, v+r} + a^*_{p+u}a^*_q a_p a^*_{w-r} a_v a_w \delta_{q+u, v+r}\\
		&\hspace{3cm} + a^*_{p+u}a^*_q a^*_{v+r}a_{q+u}   a_v a_w\delta_{p, w-r} + a^*_{p+u}a^*_q a^*_{v+r}  a_p  a_v a_w \delta_{q+u, w-r}\\
		&\hspace{3cm} - a^*_{v+r} a^*_{w-r}a^*_q a_w  a_{p} a_{q+u}\delta_{p+u, v} -  a^*_{v+r} a^*_{w-r}a^*_{p+u} a_w  a_{p} a_{q+u} \delta_{q, v}\\
		&\hspace{3cm} -  a^*_{v+r} a^*_{w-r} a_v  a^*_q  a_{p} a_{q+u}\delta_{p+u, w} -  a^*_{v+r} a^*_{w-r} a_v  a^*_{p+u}   a_{p} a_{q+u} \delta_{q, w}.
		\end{split}\]
Putting the terms in the first and last line on the r.h.s. into normal order, we obtain 
		\begin{equation}\label{eq:commVND1}\begin{split}
		[\cV_N, D_1]+\text{h.c.} =&\; \frac{1}{2N} \sum^*_{u\in \Lambda^*, v, w \in P_L} N^{\kappa} (\widehat{V}(./N^{1-\kappa})\ast  \eta/N)(u) a^*_{v+u} a^*_{w-u}a_va_w \\
		& +\Phi_{1}+\Phi_{2}+\Phi_{3}+\Phi_{4}  +\text{h.c.},
		\end{split}\end{equation}
where
		\begin{equation}\label{eq:commVND2}\begin{split}
		\Phi_{1} &= - \frac{1}{2N^{2}}\sum^*_{\substack{u \in\Lambda^*, v,w \in P_L, \\ r \in P_H^c\cup\{0\}}} N^{\kappa} \widehat{V}((u-r)/N^{1-\kappa})\eta_r a^*_{v+u} a_{w-u}^* a_va_w,\\
		 \Phi_{2} &= -\frac{1}{2N^{2}}\sum^*_{\substack{u\in \Lambda^*, r\in P_{H},\\ v,w\in P_{L} }}N^{\kappa}  \widehat{V} (u/N^{1-\kappa}) \eta_r a_{v+r}^* a_{w-r}^* a_{w-u}a_{v+u}  ,\\
		 \Phi_{3} &=  \frac{1}{N^{2}}\sum^*_{\substack{u\in \Lambda^*,q\in \Lambda_+^*,\\ r\in P_{H}, v,w\in P_{L} }}N^{\kappa}  \widehat{V} (u/N^{1-\kappa}) \eta_r a_{w-r+u}^*  a_{v+r}^*a_{q}^*a_{q+u}a_{v}a_{w} ,\\
		\Phi_{4} &= - \frac{1}{N^{2}}\sum^*_{\substack{u\in \Lambda^*,q\in \Lambda_+^*,\\ r\in P_{H}, v,w\in P_{L} }}N^{\kappa}  \widehat{V} (u/N^{1-\kappa}) \eta_r a_{v+r}^* a_{w-r}^* a_{q}^*a_{w}a_{v-u}a_{q+u}.
		\end{split}\end{equation}
The first term on the r.h.s. in \eqref{eq:commVND1} appears explicitly in \eqref{eq:propcommVND}. Hence, let us estimate the size of the operators $ \Phi_{1}$ to $\Phi_{4}$, defined in \eqref{eq:commVND2}. 

Starting with $\Phi_1$, we switch to position space and find
		\begin{equation}\label{eq:commVND3} \begin{split}
	|\langle \xi, \Phi_1\xi \rangle |\leq &\; \frac{1}{N} \sum_{ r\in P_H^c\cup\{0\}}  |\eta_r|\bigg( \int_{\Lambda^2}dxdy\; N^{2-2\kappa}V(N^{1-\kappa}(x-y))\|\check{b}_x \check{a}_y \xi \|^2 \bigg)^{1/2}\\
	&\;\hspace{0cm} \times \bigg( \int_{\Lambda^2}dxdy\; N^{2-2\kappa}V(N^{1-\kappa}(x-y))  \Big\| \sum_{w,v\in P_L} e^{ivx+iwy}a_v a_w\xi \Big\|^2 \bigg)^{1/2}\\
	\leq&\;  CN^{\alpha +\kappa-1} \| \cV_N^{1/2}\xi \|\| \cV_{N,L}^{1/2}\xi\|.
	\end{split}\end{equation}
The term $\Phi_2$ on the r.h.s. of (\ref{eq:commVND2}) can be controlled by 
	\[\begin{split}
	|\langle \xi, \Phi_{2}\xi \rangle |=&\; \bigg| \frac{1}{ N } \int_{\Lambda^2}dxdy \;N^{2-2\kappa}V(N^{1-\kappa}(x-y))\sum^*_{\substack{ r\in P_H, \\ v,w \in P_{L} } } e^{-iwx}e^{-ivy} \eta_r \langle \xi, a^*_{v+r}a^*_{w-r}\check{a}_x\check{a}_y\xi \rangle \bigg|  \\
	\leq &\; \frac{ CN^{3
\beta}  \|\eta_H\|}{N} \bigg ( \int_{\Lambda^2}dxdy\; N^{2-2\kappa}V(N^{1-\kappa}(x-y))  \| \check{a}_x \check{a}_y  \xi \|^2 \bigg)^{1/2}\\
	&\;\hspace{0.4cm} \times \bigg( \int_{\Lambda^2}dxdy\; N^{2-2\kappa}V(N^{1-\kappa}(x-y)) \sum_{r\in P_H,  v,w\in P_{L} }  \|a_{v+r}a_{w-r} \xi \|^2 \bigg)^{1/2}\\	
	\leq&\;  CN^{9\beta/2 +3\kappa/2 -\alpha/2 -3/2}\| \cV_N^{1/2}\xi \|\| (\cN_{\geq \frac12 N^{\alpha}}+1)\xi\|.
	\end{split}\]
Finally, the contributions $\Phi_3$ and $\Phi_{4}$ can be bounded as follows. We obtain
	\[\begin{split}
	|\langle \xi, \Phi_{3}\xi \rangle |\leq &\; \frac{1}{ N } \int_{\Lambda^2}dxdy \;N^{2-2\kappa}V(N^{1-\kappa}(x-y))\sum_{\substack{ r\in P_H, \\ v,w \in P_{L} }} |\eta_r| |\langle \xi, a_{v+r}^*\check{a}_x^*\check{a}^*_y \check{a}_y a_v a_w \xi \rangle |  \\
	\leq &\; \frac{CN^{3\beta/2}\|\eta_H\|}{N} \bigg( \int_{\Lambda^2}dxdy\; N^{2-2\kappa}V(N^{1-\kappa}(x-y))   \sum_{ v\in P_{L} }|v|^{-2} \|   \check{a}_x \check{a}_y  \xi \|^2 \bigg)^{1/2}\\
	&\;\hspace{0cm} \times \bigg( N^{\kappa-1}\int_{\Lambda}dx\; \sum_{ v,w\in P_{L} }  |v|^{2} \|(\cN_{\geq \frac12 N^{\alpha}}+1)^{1/2}\check{a}_x a_wa_v \xi \|^2 \bigg)^{1/2}\\	
	\leq&\; CN^{2\beta + 3\kappa/2 -\alpha/2-1/2}  \| \cV_N^{1/2}\xi \|\| \cK_L^{1/2}(\cN_{\geq \frac12 N^{\alpha}}+1)^{1/2}\xi\|
	\end{split}\]
as well as
	\[ \begin{split}
	|\langle \xi, \Phi_{4}\xi \rangle |\leq &\;  \frac{1}{ N } \int_{\Lambda^2}dxdy \;N^{2-2\kappa}V(N^{1-\kappa}(x-y))\!\!\!\!\sum_{r\in P_H, v,w\in P_L}\!\!\!\! |\eta_r|  | \langle \xi, a_{v+r}^*a^*_{w-r}\check{a}^*_y a_w \check{a}_x \check{a}_y \xi \rangle |  \\
	\leq &\; \frac{CN^{3\beta/2}\|\eta_H\|}{N} \bigg [ \int_{\Lambda^2} dx dy \; N^{2-2\kappa}V(N^{1-\kappa}(x-y))\| \check{a}_x \check{a}_y  \xi \|^2 \bigg)^{1/2}\\
	&\;\hspace{0cm} \times \bigg ( N^{\kappa-1}\int_{\Lambda}dy \;  \sum_{\substack{ r\in P_H, \\ v,w \in P_{L} }} 
	\| \check{a}_y a_{v+r}a_{w-r} (\cN_++1)^{1/2} \xi \|^2 \bigg)^{1/2}\\	
	\leq&\;  CN^{ 3\beta +3\kappa/2 -\alpha/2-1/2} \| \cV_N^{1/2}\xi \|\| (\cN_{\geq \frac12 N^{\alpha}}+1)\xi\|.
	\end{split} \]
In conclusion, the previous bounds imply with the assumption (\ref{eq:condab}) (in particular, since $ \alpha> 3\beta +2\kappa$ and $3\beta -2 < 0$) that
		\begin{equation} \label{eq:commVNA23}
		\begin{split}
		&\pm ( \Phi_{1} + \Phi_{2}+\Phi_{3}+\Phi_{4}+\text{h.c.} )\\
		&\;\hspace{0cm}\leq \delta  \cV_N + CN^{\alpha+\kappa-1}\cV_N+ CN^{\alpha+\kappa-1}\cV_{N,L}  +  \delta^{-1}C N^{\beta +\kappa-1} \cK_L (\cN_{\geq \frac12 N^{\alpha} }+1) \\
		&\;\hspace{0.4cm} + \delta^{-1}C N^{3\beta +\kappa-1} (\cN_{\geq \frac12 N^{\alpha} }+1)^2 
		\end{split}
		\end{equation}
holds true in $ \cF_+^{\leq N}$ for any $\delta>0$. This concludes the proof.	
\end{proof}
With Prop. \ref{prop:commVND}, we obtain a bound for the growth of $\cV_N$. 
\begin{cor}\label{cor:VNgrowD} 
Assume the exponents $\alpha, \beta$ satisfy (\ref{eq:condab}). Then there exists a constant $C>0$ such that the operator inequality
		\begin{equation*}
		\begin{split}
		e^{-sD} \cV_N e^{sD} &\leq C \cV_N + C \cV_{N,L} + CN^{\beta + \kappa-1}  \cK_L(\cN_{\geq \frac12 N^{\alpha} }+1) +  CN^{3\beta + \kappa}(\cN_{\geq \frac12 N^{\alpha} }+1).
		\end{split}\end{equation*}	
for all $ s\in [-1;1]$ and for all $N\in\NN$ sufficiently large.
\end{cor}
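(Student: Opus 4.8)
The plan is to run a Gronwall argument, exactly as in the proofs of Corollary~\ref{cor:VNgrowA} and Corollary~\ref{cor:VNresgrowD}. For $\xi \in \cF_+^{\leq N}$ I would put $\varphi_\xi(s) = \langle \xi, e^{-sD} \cV_N e^{sD} \xi \rangle$, so that $\partial_s \varphi_\xi(s) = \langle \xi, e^{-sD} [\cV_N, D] e^{sD} \xi \rangle$, and insert the decomposition of $[\cV_N,D]$ from Proposition~\ref{prop:commVND}.

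The explicit quartic main term of \eqref{eq:propcommVND} I would estimate by passing to position space: by \eqref{eq:defetax} and the convolution identity, its kernel $N^\kappa(\widehat V(\cdot/N^{1-\kappa})\ast\eta/N)(u)$ corresponds, in position space, to $-N^{3-2\kappa}V(N^{1-\kappa}z)\,w_\ell(N^{1-\kappa}z)$, whose absolute value is at most $N^{3-2\kappa}V(N^{1-\kappa}z)$ since $0\le w_\ell\le1$ by Lemma~\ref{sceqlemma}. Writing the two remaining annihilation operators $a_pa_q$ (with $p,q\in P_L$) as a $P_L$-restricted field and applying Cauchy--Schwarz, the expectation of this term in $e^{sD}\xi$ is bounded by $C\|\cV_N^{1/2}e^{sD}\xi\|\,\|\cV_{N,L}^{1/2}e^{sD}\xi\| \le C\varphi_\xi(s) + C\langle\xi, e^{-sD}\cV_{N,L}e^{sD}\xi\rangle$; crucially no spurious power of $N$ is produced, because the weight $N^{3-2\kappa}V(N^{1-\kappa}\cdot)$ together with the prefactor $1/(2N)$ reproduces exactly the weights defining $\cV_N$ and $\cV_{N,L}$. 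For the remainder $\cE_{[\cV_N,D]}$ I would invoke \eqref{eq:propcommVNDerror} with $\delta=1$ and use $N^{\alpha+\kappa-1}\le1$ (from $1>\alpha+\beta+2\kappa$ in \eqref{eq:condab}) to bound $\pm\cE_{[\cV_N,D]}$ by a constant times $\cV_N + \cV_{N,L} + N^{\beta+\kappa-1}\cK_L(\cN_{\geq\frac12 N^\alpha}+1) + N^{3\beta+\kappa-1}(\cN_{\geq\frac12 N^\alpha}+1)^2$.

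Collecting these contributions, absorbing $\langle e^{sD}\xi,\cV_N e^{sD}\xi\rangle=\varphi_\xi(s)$ into the Gronwall term, one is left with conjugated copies of $\cV_{N,L}$, of $\cK_L(\cN_{\geq\frac12 N^\alpha}+1)$ and of $(\cN_{\geq\frac12 N^\alpha}+1)^m$, which are controlled by the growth estimates already established: Corollary~\ref{cor:VNresgrowD} for $e^{-sD}\cV_{N,L}e^{sD}$, Lemma~\ref{lm:KresgrowD} (with $c_1N^{\gamma_1}=N^\beta$, $c_2N^{\gamma_2}=\frac12 N^\alpha$) for $e^{-sD}\cK_L(\cN_{\geq\frac12 N^\alpha}+1)e^{sD}$, and Lemma~\ref{lm:NresgrowD} for $e^{-sD}(\cN_{\geq\frac12 N^\alpha}+1)^m e^{sD}$. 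Using $\cN_+\le N$ on $\cF_+^{\leq N}$ to reduce $(\cN_{\geq\frac12 N^\alpha}+1)^2\le 2N(\cN_{\geq\frac12 N^\alpha}+1)$ and $(\cN_{\geq\frac12 N^\alpha}+1)^3\le 4N^2(\cN_{\geq\frac12 N^\alpha}+1)$, every error term collapses onto one of the four terms on the right-hand side of the asserted inequality, and one obtains
\[ \partial_s\varphi_\xi(s) \le C\varphi_\xi(s) + C\big\langle \xi, \big[\cV_{N,L}+N^{\beta+\kappa-1}\cK_L(\cN_{\geq\frac12 N^\alpha}+1)+N^{3\beta+\kappa}(\cN_{\geq\frac12 N^\alpha}+1)\big]\xi\big\rangle \]
with constants uniform in $s\in[-1;1]$, the same bound holding with $D$ replaced by $-D$. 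Since $\varphi_\xi(0)=\langle\xi,\cV_N\xi\rangle$, Gronwall's lemma yields the claimed operator inequality.

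The substance of the argument is not in any single estimate but in the bookkeeping: one must check that every term produced by Proposition~\ref{prop:commVND} and by the three growth lemmas, after the crude replacement $\cN_+\le N$, is dominated by one of the four admissible terms, and that the $N$-exponents line up — this is exactly where the constraints of \eqref{eq:condab} (most notably $\alpha+\kappa<1$ and $3\beta+2\kappa<\alpha$) are used. The position-space estimate of the main quartic term is routine once $w_\ell\le1$ is invoked and presents no real difficulty.
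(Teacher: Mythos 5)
Your proposal is correct and follows essentially the same route as the paper: Gronwall on $\varphi_\xi(s)=\langle\xi,e^{-sD}\cV_Ne^{sD}\xi\rangle$, the position-space Cauchy--Schwarz bound of the main quartic term of $[\cV_N,D]$ against $\|\cV_N^{1/2}\cdot\|\,\|\cV_{N,L}^{1/2}\cdot\|$ (the paper uses $\|\check\eta\|_\infty\le CN$, which is just \eqref{eq:defetax} plus your $0\le w_\ell\le 1$), then \eqref{eq:propcommVNDerror} with $\delta=1$ together with Corollary~\ref{cor:VNresgrowD}, Lemma~\ref{lm:KresgrowD}, Lemma~\ref{lm:NresgrowD} and $\cN_+\le N$ to collapse all conjugated errors onto the four admissible terms. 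The exponent bookkeeping you describe matches the paper's.
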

\begin{proof} We apply Gronwall's lemma. Given a normalized vector $\xi\in\cF_+^{\leq N}$, we define $\varphi_\xi(s) = \langle\xi, e^{-sD }\cV_N e^{sD}\xi\rangle $
and compute its derivative s.t.
		\[\partial_s \varphi_\xi(s) =\langle\xi, e^{-sD }[\cV_N,D] e^{sD}\xi\rangle.  \]
Hence, we can apply \eqref{eq:propcommVND} and estimate 
		\begin{equation}\label{eq:corVNgrowD1}\begin{split} 
		& \bigg| \frac{1}{2N} \sum_{\substack{ u\in \Lambda_+^*, v,w\in P_L:\\ v+u, w-u\neq 0 }} N^{\kappa} (\widehat{V}(./N^{1-\kappa})\ast  \eta/N)(u) \langle e^{sD }\xi, a^*_{v+u} a^*_{w-u}a_va_w e^{sD}\xi\rangle \bigg|\\
		& \leq \frac{\|\check{\eta}\|_\infty}{N} \bigg( \int_{\Lambda^2}dxdy\; N^{2-2\kappa}V(N^{1-\kappa}(x-y)) \|\check{a}_x \check{a}_y e^{sD}\xi\|^2 \bigg)^{1/2}\\
		&\hspace{2cm} \times \bigg( \int_{\Lambda^2}dxdy\; N^{2-2\kappa}V(N^{1-\kappa}(x-y)) \Big\|\sum_{v,w\in P_L}e^{ivx+iwy} a_va_w e^{sD}\xi\Big\|^2 \bigg)^{1/2}\\		
		&\leq C \| \cV_N^{1/2} e^{sD}\xi\| \|\cV_{N,L}^{1/2}e^{sD}\xi\| \leq C \varphi_\xi(s) + C\langle\xi,  e^{-sD}\cV_{N,L}e^{sD}\xi\rangle.
		\end{split}\end{equation}
Here, we used \eqref{eq:etainfbnd}, which shows that $\|\check{\eta}\|_\infty\leq CN $. Using Corollary \ref{cor:VNresgrowD} (recalling that $ \alpha> 3\beta +2\kappa$ and $ 2\beta\leq 1$) and $ \cN_{\geq \frac12 N^{\alpha} }\leq N$ in $ \cF_+^{\leq N}$, this simplifies to
		\[\begin{split} 
		&\bigg| \frac{1}{2N} \sum_{\substack{ u\in \Lambda_+^*, v,w\in P_L:\\ v+u, w-u\neq 0 }} N^{\kappa} (\widehat{V}(./N^{1-\kappa})\ast  \eta/N)(u) \langle e^{sD }\xi, a^*_{v+u} a^*_{w-u}a_va_w e^{sD}\xi\rangle \bigg|\\
		&\leq C \varphi_\xi(s) + C \langle\xi,  \cV_{N,L}\xi\rangle + CN^{\beta + \kappa-1} \langle \xi,  \cK_L(\cN_{\geq \frac12 N^{\alpha} }+1)\xi\rangle +  CN^{3\beta + \kappa} \langle \xi, (\cN_{\geq \frac12 N^{\alpha} }+1)\xi\rangle.
		\end{split}\]
Together with \eqref{eq:propcommVND}, the bound \eqref{eq:propcommVNDerror} (choosing $\delta=1$) and an application of Lemma~\ref{lm:NresgrowD} and of Lemma \ref{lm:KresgrowD}, the claim follows now from Gronwall's lemma.
\end{proof}

Finally, we need control for the growth of the full kinetic energy operator $\cK$. To this end, we need to estimate its commutator with $D$. 
\begin{prop}\label{prop:commKD}  Assume the exponents $\alpha, \beta$ satisfy (\ref{eq:condab}).
Let $m_0 \in\mathbb{R}$ be such that $m_0\beta  = \alpha$ (from (\ref{eq:condab}) it follows that $3 < m_0 < 5$). Then 
		\begin{equation} \label{eq:propcommKD}
		\begin{split}
		[\cK, D] & = -  \frac{1}{2N} \sum_{\substack{ u\in \Lambda^*, p,q\in P_L:\\ p+u, q-u\neq 0 }} N^{\kappa} (\widehat{V}(./N^{1-\kappa})\ast  \widehat{f}_{N})(u) \big( a^*_{p+u} a^*_{q-u}a_pa_q +\emph{h.c.}\big) \\
		&\hspace{0.4cm}+ \cE_{[\cK,D]},
		\end{split}
		\end{equation}
where the self-adjoint operator $\cE_{[\cK,D]} $ satisfies  
		\begin{equation}\label{eq:propcommKDerror}\begin{split}
		\pm \cE_{[\cK,D]} \leq&\;  CN^{5\beta/4+\kappa}\cK_{\leq 2 N^{3\beta/2}} +\delta  \cK + C \delta^{-1}\sum_{j=3}^{2\lfloor m_0\rfloor -1}N^{ j\beta/2 + 3\beta/2+2\kappa-1}\cK_L(\cN_{\geq \frac12 N^{j\beta/2}}+1) \\
		&\; +C \delta^{-1}N^{ \alpha +\beta +2\kappa-1}\cK_L(\cN_{\geq \frac12 N^{\lfloor m_0 \rfloor \beta}} +1)  +C \\ 
		\end{split}\end{equation}	
for all $\delta>0$ and for all $N\in\NN$ sufficiently large.
\end{prop}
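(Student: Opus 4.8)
The plan is to follow the same scheme as in the proofs of Proposition~\ref{prop:comm1VN} and Proposition~\ref{prop:commVND}: first compute $[\cK, D]$ explicitly, isolating the main term displayed in \eqref{eq:propcommKD}, and then bound the remainders in position space by Cauchy--Schwarz, using the pull-through estimates \eqref{eq:Ntheta-bds}, the bounds on $\eta$ and $\check\eta$ collected after Lemma~\ref{sceqlemma}, and Lemmas~\ref{lm:NresgrowD}--\ref{lm:KresgrowD}. Since $\cK$ is self-adjoint and $D = D_1 - D_1^*$, we have $[\cK, D] = [\cK, D_1] + \mathrm{h.c.}$, so it is enough to treat $[\cK, D_1]$ with $D_1$ as in \eqref{eq:def-D12}. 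Because $[\cK, a_k^*] = k^2 a_k^*$, $[\cK, a_k] = -k^2 a_k$, and the summand $a_{p+r}^* a_{q-r}^* a_p a_q$ is already normal-ordered, no further contractions arise and one obtains
\[ [\cK, D_1] = \frac1N \sum_{r \in P_H,\, p,q \in P_L} \eta_r \,\big( r^2 + r\cdot(p-q) \big)\, a_{p+r}^* a_{q-r}^* a_p a_q , \]
using $(p+r)^2 + (q-r)^2 - p^2 - q^2 = 2r^2 + 2 r\cdot(p-q)$.

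For the contribution proportional to $r^2$ I would insert the scattering relation \eqref{eq:eta-scat0}, which yields $r^2 \eta_r = -\tfrac12 N^\kappa (\widehat V(\cdot/N^{1-\kappa}) \ast \widehat{f}_{N})(r) + N^{3-2\kappa}\lambda_\ell (\widehat{\chi}_\ell \ast \widehat{f}_{N})(r)$. Extending the resulting $r$-summation from $P_H$ to all of $\Lambda^*$ (and re-imposing $p+r, q-r \neq 0$) produces exactly the main term of \eqref{eq:propcommKD} once the hermitian conjugate is added; the difference, namely the sum over $r \in P_H^c \cup \{0\}$, together with the $\lambda_\ell$-term (harmless since $\lambda_\ell = \mathcal{O}(N^{3\kappa-3})$ by Lemma~\ref{sceqlemma}~i)), are error contributions bounded exactly as the pieces $\Phi_1,\dots,\Phi_4$ in the proof of Proposition~\ref{prop:commVND}: in position space, estimating one factor by $\cV_N^{1/2}$ or $\cK^{1/2}$ and the other by low-momentum operators and factors of $\|\eta_H\|$ and $\|\check\eta\|_\infty \le CN$. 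These feed into the $\delta\cK$ term, the constant $C$, and remainders of the low-momentum type already present in \eqref{eq:propcommVNDerror}.

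The delicate part, which has no counterpart in the quadratic and cubic estimates, is the term proportional to $r\cdot(p-q)$. A crude Cauchy--Schwarz placing the full kinetic weight on the high-momentum legs $a_{p+r}^*, a_{q-r}^*$ would produce $\cK\,\cN_{\ge \frac12 N^\alpha}$, which is not controlled by the right-hand side of \eqref{eq:propcommKDerror}. Instead I would decompose the low momenta $p,q \in P_L$ into dyadic shells of size $\sim N^{j\beta/2}$, with $j$ ranging over $3,\dots,2\lfloor m_0\rfloor-1$, a top shell near $N^{\lfloor m_0\rfloor\beta}$, and a bottom shell where $|p|,|q|\lesssim N^{3\beta/2}$. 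On each shell I would apply Cauchy--Schwarz so that the factor $|p-q|\le 2\max(|p|,|q|)$ is absorbed into $\cK_L^{1/2}$ acting on $a_p a_q$, while the legs $a_{p+r}^*, a_{q-r}^*$, whose momenta have modulus $\ge \tfrac12 N^\alpha \ge \tfrac12 N^{j\beta/2}$, contribute the restricted operator $\cN_{\ge \frac12 N^{j\beta/2}}$ via \eqref{eq:Ntheta-bds}; the $r$-sums are then controlled by $\|\eta_H\|^2 \le CN^{2\kappa-\alpha}$ (from \eqref{eq:etaHL2}) and $\sum_{r\in P_H} r^2|\eta_r|^2 \le CN^{1+\kappa}$ (from \eqref{eq:H1eta}). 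Balancing the powers of $N^{\beta/2}$ from the shell size, from $|p-q|$, and from the $\eta$-sums produces the terms $N^{j\beta/2+3\beta/2+2\kappa-1}\cK_L(\cN_{\ge\frac12 N^{j\beta/2}}+1)$ and $N^{\alpha+\beta+2\kappa-1}\cK_L(\cN_{\ge N^{\lfloor m_0\rfloor\beta}}+1)$, while the bottom shell, where only $\cK_{\leq 2N^{3\beta/2}}$ is available, yields the $N^{5\beta/4+\kappa}\cK_{\leq 2N^{3\beta/2}}$ term.

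I expect this last step to be the main obstacle: organising the dyadic bookkeeping so that the kinetic weight is always spent on a leg whose momentum dominates the threshold of the number operator it is paired with (so $\cK\,\cN_{\ge\theta}$ never appears with $\theta$ too small), and so that the finite geometric sum over $j$ closes with the stated exponents; identifying the displayed main term also requires minor care with the constraints $p+u,q-u\neq 0$ and with the hermitian conjugate, but this is routine. Once \eqref{eq:propcommKD}--\eqref{eq:propcommKDerror} are established, the corresponding growth bound for $e^{-sD}\cK e^{sD}$ follows by Gronwall's lemma together with Lemmas~\ref{lm:NresgrowD} and~\ref{lm:KresgrowD}, exactly as in Corollary~\ref{cor:VNgrowD}.
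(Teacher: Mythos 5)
Your computation of $[\cK,D_1]$, the use of the scattering equation \eqref{eq:eta-scat0} on the $r^2\eta_r$ part, and the resulting identification of the main term plus three remainders (the sum over $r\in P_H^c\cup\{0\}$ with coefficient $N^\kappa(\widehat V(\cdot/N^{1-\kappa})\ast\widehat f_N)(r)$, the $\lambda_\ell$-term, and the cross term proportional to $r\cdot(p-q)$) all match the paper. However, you have the easy and the hard remainders backwards, and this is a genuine gap. The cross term is \emph{not} the delicate one: since $|r|\le 2|p+r|$ for $r\in P_H$, $p\in P_L$, a single Cauchy--Schwarz that spends the factor $|r|$ on the high leg $a_{p+r}$ (producing $\cK^{1/2}$) and $|p|$ on the low leg $a_p$ (producing $\cK_L^{1/2}$), with the normalizing weights $(\cN_{\geq\frac12 N^\alpha}+1)^{\mp 1/2}$ moved across as in \eqref{eq:Ntheta-bds}, already gives $CN^{-1/2}\|\cK^{1/2}\xi\|\,\|\cK_L^{1/2}(\cN_{\geq\frac12 N^\alpha}+1)^{1/2}\xi\|$, hence $\delta\cK+C\delta^{-1}N^{-1}\cK_L(\cN_{\geq\frac12 N^\alpha}+1)$, which is dominated by the $N^{\alpha+\beta+2\kappa-1}\cK_L(\cN_{\geq\frac12 N^{\lfloor m_0\rfloor\beta}}+1)$ term already present in \eqref{eq:propcommKDerror}. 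Worse, your proposed dyadic decomposition of $p,q\in P_L$ into shells of size $N^{j\beta/2}$ with $j\geq 3$ is vacuous: $|p|\leq N^\beta<N^{3\beta/2}$, so every shell above the bottom one is empty, and the high legs always carry momenta $\geq\frac12 N^\alpha$ regardless of the shell, so nothing is gained.

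The genuinely delicate term is the one you dismiss as routine, namely $\frac{1}{2N}\sum_{r\in P_H^c\cup\{0\},\,v,w\in P_L}N^\kappa(\widehat V(\cdot/N^{1-\kappa})\ast\widehat f_N)(r)\,a^*_{v+r}a^*_{w-r}a_va_w$. It cannot be bounded ``exactly as $\Phi_1,\dots,\Phi_4$'' of Prop.~\ref{prop:commVND}: those terms retain a free convolution variable $u$ that reconstitutes $\cV_N^{1/2}$ in position space, whereas here the coefficient is an essentially constant function of the transfer momentum $r$ on $P_H^c$ (close to $8\pi\frak a_0$), so there is no potential structure to exploit; and the legs $a^*_{v+r},a^*_{w-r}$ carry momenta anywhere in $[0,N^\alpha+N^\beta]$, so they cannot be paired with a fixed restricted number operator. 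This is precisely where the dyadic decomposition must enter, and it is a decomposition in $|r|$, not in $|p|,|q|$: the region $|r|\leq N^{3\beta/2}$ (all four legs low) yields $CN^{5\beta/4+\kappa}\cK_{\leq 2N^{3\beta/2}}$; on a shell $N^{j\beta/2}\leq|r|\leq N^{(j+1)\beta/2}$ with $3\leq j\leq 2\lfloor m_0\rfloor-1$ one has $|v+r|,|w-r|\geq|r|-N^\beta\geq\frac12 N^{j\beta/2}$, giving $\delta\cK+C\delta^{-1}N^{j\beta/2+3\beta/2+2\kappa-1}\cK_L(\cN_{\geq\frac12 N^{j\beta/2}}+1)$; and the top region $N^{\lfloor m_0\rfloor\beta}\leq|r|\leq N^\alpha$ gives the $N^{\alpha+\beta+2\kappa-1}$ term. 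Without this step your argument does not produce \eqref{eq:propcommKDerror}.
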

\begin{proof} Using that $[\cK, D]  =  [\cK, D_1] +\text{h.c.}$, a straight forward computation shows that	
		\begin{equation}\label{eq:commKD1}
		\begin{split}
		[\cK, D_1] + \text{h.c.}&= -\frac{1}{2N} \sum_{\substack{ r\in \Lambda^*, v,w\in P_L:\\ v+r,w-r\neq 0 }} N^{\kappa} (\widehat{V}(./N^{1-\kappa})\ast  \widehat{f}_{N})(r)  a^*_{v+r} a^*_{w-r}a_va_w\\
		&\hspace{0.4cm} + \Sigma_{1} +\Sigma_{2} + \Sigma_{3}  +\text{h.c.}, 
		\end{split}
		\end{equation}
where
		\begin{equation}\label{eq:commKD2}
		\begin{split}
		\Sigma_{1} =&\; \frac{1}{2N} \sum_{\substack{ r\in P_H^c\cup\{0\}, v,w\in P_L:\\ v+r, w-r\neq 0 }} N^{\kappa} (\widehat{V}(./N^{1-\kappa})\ast  \widehat{f}_{N})(r)  a^*_{v+r} a^*_{w-r}a_va_w  , \\ 
		\Sigma_{2} =&\; \frac{1}{2N} \sum_{\substack{ r\in P_H, v,w \in P_L:\\ v+r, w-r\neq 0 }} N^{3-2\kappa}\lambda_\ell (\widehat{\chi}_\ell\ast \widehat{f}_N)(r)  a^*_{v+r} a^*_{w-r}a_va_w  , \\ 
		\Sigma_{3} =&\;  \frac{2}{N } \sum_{\substack{ r\in P_H, v,w \in P_L:\\ v+r, w-r\neq 0 } } r\cdot v\,\eta_r a^*_{v+r}a^*_{w-r}a_va_w.
		\end{split}
		\end{equation}	
Let us estimate the size of the operators $ \Sigma_{1}, \Sigma_{2}$ and $\Sigma_{3}$. Using $ \big|(\widehat{V}(./N^{1-\kappa})\ast  \widehat{f}_{N})(r)\big|\leq C$, we control the operator $ \Sigma_1$ by
		\begin{equation}\begin{split}\label{eq:commKD3}
		&|\langle\xi, \Sigma_1\xi\rangle| =\bigg| \frac{1}{2N} \sum_{\substack{ r\in P_H^c\cup\{0\}, v,w \in P_L:\\ v+r,w-r\neq 0 }} N^{\kappa} (\widehat{V}(./N^{1-\kappa})\ast  \widehat{f}_{N})(r)  \langle\xi,  b^*_{v+r} a^*_{w-r}a_va_w \xi\rangle\bigg|\\
		&\leq \frac{CN^\kappa}{ N}  \sum_{\substack{r\in\Lambda^*, v,w\in P_L:  |r|\leq N^{3\beta/2}, \\ v+r,w-r\neq 0 }} \|a_{w-r}  a_{v+r}\xi\| \|a_va_w \xi \|  \\
		&\hspace{0.4cm}+ \frac{CN^\kappa}{N} \sum_{j=3}^{2\lfloor m_0\rfloor-1} \!\!\!\!\!\!\!\!\sum_{\substack{r\in P_H^c\cup\{0\}, v,w\in P_L: \\ N^{j\beta/2}\leq |r|\leq N^{(j+1)\beta/2}, \\ v+r,w-r\neq 0 }} \hspace{-0.95cm}\|a_{w-r} (\cN_{\geq \frac12 N^{j\beta/2}}+1)^{-1/2} a_{v+r}\xi\| \|a_v(\cN_{\geq \frac12 N^{j\beta/2}}+1)^{1/2}a_w\xi \| \\
		&\hspace{0.4cm}+ \frac{CN^\kappa}{ N} \sum_{\substack{r\in P_H^c\cup\{0\}, v,w\in P_L: \\ N^{\lfloor m_0 \rfloor \beta}\leq |r|\leq N^{\alpha}, \\ v+r,w-r\neq 0 }} \hspace{-0.75cm}\|a_{w-r} (\cN_{\geq \frac12 N^{\lfloor m_0 \rfloor\beta}}+1)^{-1/2} a_{v+r}\xi\| \|a_v(\cN_{\geq \frac12 N^{\lfloor m_0 \rfloor\beta}}+1)^{1/2}a_w\xi \|.
		\end{split}\end{equation}	
By Cauchy-Schwarz, the first term on the r.h.s. of \eqref{eq:commKD3} can be controlled by 
		\[\frac{CN^\kappa}{ N}  \sum_{\substack{r\in\Lambda^*, v, w\in P_L:  |r|\leq N^{3\beta/2}, \\ v+r,w-r\neq 0 }} \|a_{w-r}  a_{v+r}\xi\| \|a_va_w \xi \| \leq CN^{5\beta/4+\kappa} \langle\xi, \cK_{\leq 2 N^{3\beta/2} } \xi\rangle. \]
The second contribution on the r.h.s. of \eqref{eq:commKD3} can be bounded by
		\begin{equation}\label{eq:commKD31}\begin{split}
		&\frac{CN^\kappa}{N} \sum_{j=3}^{2\lfloor m_0\rfloor-1} \!\!\!\!\!\!\!\!\sum_{\substack{r\in P_H^c\cup\{0\}, v,w\in P_L: \\ N^{j\beta/2}\leq |r|\leq N^{(j+1)\beta/2}, \\ v+r,w-r\neq 0 }} \hspace{-0.95cm}\|a_{w-r} (\cN_{\geq \frac12 N^{j\beta/2}}+1)^{-1/2} a_{v+r}\xi\| \|a_w(\cN_{\geq \frac12 N^{j\beta/2}}+1)^{1/2}a_v\xi \|  \\
		&\;\leq  C \sum_{j=3}^{2\lfloor m_0\rfloor-1}N^{ j\beta/4 + 3\beta/4+\kappa-1/2} \| \cK^{1/2}\xi\|  \| \cK_L^{1/2} (\cN_{\geq \frac12 N^{j\beta/2}}+1)^{1/2}\xi\|.
		\end{split}\end{equation}
Similarly, we find that 
		\begin{equation}\label{eq:commKD32}\begin{split}
		&\frac{CN^\kappa}{ N} \sum_{\substack{r\in P_H^c\cup\{0\}, v,w\in P_L: \\ N^{\lfloor m_0 \rfloor \beta}\leq |r|\leq N^{\alpha}, \\ v+r,w-r\neq 0 }} \hspace{-0.75cm}\|a_{w-r} (\cN_{\geq \frac12 N^{\lfloor m_0 \rfloor\beta}}+1)^{-1/2} a_{v+r}\xi\| \|a_w(\cN_{\geq \frac12 N^{\lfloor m_0 \rfloor\beta}}+1)^{1/2}a_v\xi \| \\
		&\;\leq  C N^{ \alpha/2 +\beta/2+\kappa-1/2} \| \cK^{1/2}\xi\|  \| \cK_L^{1/2}(\cN_{\geq \frac12 N^{\lfloor m_0 \rfloor \beta}}+1)^{1/2}\xi\|.
		\end{split}\end{equation}
In summary, the previous three bounds imply that  
		\begin{equation}\label{eq:commKD4}
		\begin{split}
		\pm \Sigma_1\leq &\;   CN^{5\beta/4+\kappa}\cK_{\leq 2 N^{3\beta/2}} +\delta  \cK  + C \delta^{-1}N^{ \alpha +\beta +2\kappa-1}\cK_L(\cN_{\geq \frac12 N^{\lfloor m_0 \rfloor \beta}} +1)  \\
		&\;+ C \delta^{-1}\sum_{j=3}^{2\lfloor m_0\rfloor -1}N^{ j\beta/2 + 3\beta/2+2\kappa-1}\cK_L(\cN_{\geq \frac12 N^{j\beta/2}}+1) 
		\end{split}
		\end{equation}
for some constant $C>0$ and all $\delta>0$.

Next, let us switch to $ \Sigma_2$ and $\Sigma_3$, defined in \eqref{eq:commKD2}. Using Lemma \ref{sceqlemma} $i)$ and the bound \ref{eq:chiastfNbnd}, Cauchy-Schwarz and $ \alpha>3\beta +2\kappa$ impy
		\begin{equation}\label{eq:commKD5}
		\begin{split}
		|\langle\xi, \Sigma_2\xi\rangle| \leq &\; \frac{CN^{\kappa} }{N} \sum_{\substack{ r\in P_H, v,w\in P_L:\\ v+r,w-r\neq 0 }} |r|^{-2} \|  a_{v+r} (\cN_{\geq \frac12 N^{\alpha} }+1)^{-1/2}a_{w-r} \xi\| \|a_v (\cN_{\geq \frac12 N^{\alpha}}+1)^{1/2}a_w\xi\|\\
		\leq &\;CN^{-\beta-1/2}\| (\cN_{\geq \frac12 N^{\alpha}}+1)^{1/2}\xi\| \| \cK_L^{1/2}(\cN_{\geq \frac12 N^{\alpha}}+1)^{1/2}\xi\| .
		\end{split}
		\end{equation}
Similarly, we obtain
		\begin{equation}\label{eq:commKD6}
		\begin{split}
		|\langle\xi, \Sigma_3\xi\rangle| \leq \;&  \frac{C}{N } \sum_{r\in P_H, v,w\in P_L } |r| | v|  |\eta_r| \|  a_{v+r} (\cN_{\geq \frac12 N^{\alpha} }+1)^{-1/2}a_{w-r} \xi\| \|a_v a_w(\cN_{\geq \frac12 N^{\alpha}}+1)^{1/2}\xi\|\\
		\leq &\; CN^{-1/2} \| \cK^{1/2}\xi\|  \| \cK_L^{1/2}(\cN_{\geq \frac12 N^{\alpha}}+1)^{1/2} \xi\|,
		\end{split}
		\end{equation}
where we used that $ |r|/|v+r|\leq 2$ for $r\in P_H$, $v\in P_L$ and $N\in\NN$ large enough. Combining \eqref{eq:commKD4}, \eqref{eq:commKD5} and \eqref{eq:commKD6} and defining $ \cE_{[\cK,D]} = \sum_{i=1}^3(\Sigma_i+\text{h.c.})$ proves the claim.
\end{proof}

\begin{cor}\label{cor:KgrowD} Assume the exponents $\alpha, \beta$ satisfy (\ref{eq:condab}). Let $ m_0\in\mathbb{R}$ be such that $m_0 \beta =\alpha$ ($3 < m_0 < 5$ from (\ref{eq:condab})). Then, there exists a constant $C>0$ such that 
		\begin{equation} \label{eq:corKgrowD}
		\begin{split}
		e^{-sD} \cK e^{sD} &\leq  C \cK +C\cV_N+ C\cV_{N,L} + CN^{5\beta/4+\kappa}\cK_{\leq N^{3\beta/2}} \\
		&\hspace{0.4cm}+ C \sum_{j=3}^{2\lfloor m_0\rfloor -1}N^{ j\beta/2 + 3\beta/2+2\kappa-1}\Big[\cK_L + N^{2\beta}(\cN_{\geq \frac12 N^\alpha} +1)\Big](\cN_{\geq \frac12 N^{j\beta/2}}+1)\\
		&\hspace{0.4cm}+C N^{ \alpha +\beta +2\kappa-1}\Big[\cK_L + N^{2\beta}(\cN_{\geq \frac12 N^\alpha} +1)\Big](\cN_{\geq \frac12 N^{\lfloor m_0 \rfloor \beta}} +1) + CN^{13\beta /4 +\kappa} 
		\end{split}\end{equation}	
for all $ s\in [-1;1]$ and for all $N\in\NN$ sufficiently large.
\end{cor}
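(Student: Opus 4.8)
\textbf{Proof proposal for Corollary \ref{cor:KgrowD}.}

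The plan is to run the same Gronwall argument as in Corollaries \ref{cor:VNgrowD} and \ref{cor:VNresgrowD}. Fix $\xi\in\cF_+^{\leq N}$ and set $\varphi_\xi(s)=\langle\xi,e^{-sD}\cK e^{sD}\xi\rangle$. Since $D=D_1-D_1^*$ and $\cK$ is self-adjoint, we have $[\cK,D]=[\cK,D_1]+\text{h.c.}$ and hence
\[
\partial_s\varphi_\xi(s)=\langle e^{sD}\xi,[\cK,D]\,e^{sD}\xi\rangle .
\]
Insert the identity \eqref{eq:propcommKD} of Proposition \ref{prop:commKD}, which splits $[\cK,D]$ into the ``renormalized potential'' main term
\[
M=-\frac{1}{2N}\sum_{\substack{u\in\Lambda^*,\,p,q\in P_L:\\ p+u,q-u\neq0}}N^{\kappa}(\widehat V(\cdot/N^{1-\kappa})\ast\widehat f_{N})(u)\big(a^*_{p+u}a^*_{q-u}a_pa_q+\text{h.c.}\big)
\]
and the error $\cE_{[\cK,D]}$ controlled by \eqref{eq:propcommKDerror}.

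To estimate $M$ I would use $\widehat f_{N}(p)=\delta_{p,0}+\eta_p/N$ (from \eqref{eq:defeta}) to write $(\widehat V(\cdot/N^{1-\kappa})\ast\widehat f_N)(u)=\widehat V(u/N^{1-\kappa})+(\widehat V(\cdot/N^{1-\kappa})\ast\eta/N)(u)$. The contribution of the second summand is, up to a sign, exactly the main term of $[\cV_N,D]$ appearing in \eqref{eq:propcommVND}, so its expectation is bounded precisely as in the proof of Corollary \ref{cor:VNgrowD} (pass to position space, use $\|\check\eta\|_\infty\leq CN$ from \eqref{eq:etainfbnd} and Cauchy--Schwarz against $\cV_N^{1/2}$ and $\cV_{N,L}^{1/2}$), giving $C\langle e^{sD}\xi,(\cV_N+\cV_{N,L})e^{sD}\xi\rangle$ plus lower-order terms of the form $N^{\beta+\kappa-1}\cK_L(\cN_{\geq\frac12N^{\alpha}}+1)$ and $N^{3\beta+\kappa}(\cN_{\geq\frac12N^{\alpha}}+1)$. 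The first summand — which is $\cV_N$ with the two \emph{incoming} legs forced into $P_L$ but the outgoing legs free — I would again treat in position space, applying Cauchy--Schwarz with the free legs placed on $\cV_N^{1/2}$ and the $P_L$-restricted legs estimated by means of the (convergent, but $N^\beta$-growing in $3$d) momentum sums over $P_L$; after optimizing one obtains a bound of the form $C\langle e^{sD}\xi,(\cV_N+\cV_{N,L})e^{sD}\xi\rangle+CN^{13\beta/4+\kappa}$.

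For the error term, \eqref{eq:propcommKDerror} with a small fixed $\delta$ bounds $\pm\cE_{[\cK,D]}$ by $CN^{5\beta/4+\kappa}\cK_{\leq2N^{3\beta/2}}+\delta\cK+C\sum_{j}N^{j\beta/2+3\beta/2+2\kappa-1}\cK_L(\cN_{\geq\frac12N^{j\beta/2}}+1)+CN^{\alpha+\beta+2\kappa-1}\cK_L(\cN_{\geq\frac12N^{\lfloor m_0\rfloor\beta}}+1)+C$. Conjugating the right-hand side by $e^{sD}$: the $\delta\cK$ term returns $\delta\varphi_\xi(s)$; $\cK_{\leq2N^{3\beta/2}}$ and the products $\cK_L(\cN_{\geq\cdot}+1)$ are controlled by Lemma \ref{lm:KresgrowD}, the solitary $(\cN_{\geq\cdot}+1)$-factors by Lemma \ref{lm:NresgrowD}, and $\cV_N$, $\cV_{N,L}$ by Corollaries \ref{cor:VNgrowD} and \ref{cor:VNresgrowD}; the extra contributions $N^{2\beta-1}(\cN_{\geq\frac12N^{\alpha}}+1)^2$ produced by Lemma \ref{lm:KresgrowD} are absorbed into the $N^{\alpha+3\beta+2\kappa-1}(\cN_{\geq\frac12N^{\alpha}}+1)(\cN_{\geq\frac12N^{\lfloor m_0\rfloor\beta}}+1)$-term using $\cN_{\geq\frac12N^{\lfloor m_0\rfloor\beta}}\geq\cN_{\geq\frac12N^{\alpha}}$. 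Collecting everything and absorbing the $\delta\varphi_\xi(s)$ and the $C\varphi_\xi(s)$ coming from the main term, we arrive at $\partial_s\varphi_\xi(s)\leq C\varphi_\xi(s)+\langle\xi,R\,\xi\rangle$ with $R$ a finite sum of the operators displayed on the right-hand side of \eqref{eq:corKgrowD} (trading $(\cN_{\geq\cdot}+1)^2$ for $N(\cN_{\geq\cdot}+1)$ via $\cN_{\geq\cdot}\leq N$ wherever needed). Integrating from $0$ to $s\in[-1;1]$ and applying Gronwall's lemma, together with the fact that the same bounds hold with $D$ replaced by $-D$, yields \eqref{eq:corKgrowD}.

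The main obstacle is the estimate of $M$: unlike $\cV_{N,L}$, the outgoing momenta $p+u,q-u$ are unrestricted, so $M$ cannot simply be dominated by $\cV_N$; one must pass to position space and exploit the momentum restriction on the incoming legs carefully enough to keep the constant error at the claimed level $N^{13\beta/4+\kappa}$ and to avoid generating operators other than those permitted on the right-hand side of \eqref{eq:corKgrowD}. The remainder is routine bookkeeping with the growth lemmas of this subsection.
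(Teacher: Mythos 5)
Your proposal is correct and follows essentially the same route as the paper: Gronwall's lemma for $\varphi_\xi(s)=\langle\xi,e^{-sD}\cK e^{sD}\xi\rangle$, the decomposition of $[\cK,D]$ from Proposition \ref{prop:commKD}, and then Lemmas \ref{lm:NresgrowD}, \ref{lm:KresgrowD} together with Corollaries \ref{cor:VNresgrowD} and \ref{cor:VNgrowD} to conjugate the resulting operators. The only (harmless) deviation is in the main term $M$: rather than splitting $\widehat f_N=\delta_{\cdot,0}+\eta/N$, the paper uses $0\le f_N\le 1$ pointwise and a single Cauchy--Schwarz exactly as in \eqref{eq:corVNgrowD1} to obtain $\pm M\le C\cV_N+C\cV_{N,L}$ directly, so the additive constant $CN^{13\beta/4+\kappa}$ does not come from $M$ but from conjugating $N^{5\beta/4+\kappa}\cK_{\le 2N^{3\beta/2}}$ via Lemma \ref{lm:KresgrowD} and then using $\cN_{\ge\frac12 N^{\alpha}}\le 4N^{-2\alpha}\cK$.
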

\begin{proof} Given $ \xi\in \cF_+^{\leq N}$, we define $ \varphi_\xi(s)  = \langle\xi, e^{-sD} \cK e^{sD}\xi\rangle$. Differentiation yields
		\[ \partial_s \varphi_\xi(s) = \langle\xi, e^{-sD} [\cK, D]e^{sD}\xi\rangle, \]
s.t., to bound the derivative of $ \varphi_\xi$, we can apply Proposition \ref{prop:commKD}. Arguing exactly as in \eqref{eq:corVNgrowD1}, we obtain with $ \sup_{x\in\Lambda} |f_N(x)|\leq 1$ the operator inequality 
		\[\begin{split}
		&\pm\frac{1}{2N} \sum_{\substack{ u\in \Lambda_+^*, v,w\in P_L:\\ v+u, w-u\neq 0 }} N^{\kappa} (\widehat{V}(./N^{1-\kappa})\ast  \widehat{f}_N)(u) a^*_{v+u} a^*_{w-u}a_va_w  \leq C  \cV_N + C\cV_{N,L} .
		\end{split}\]

Now, the claim follows from the bound \eqref{eq:propcommKDerror} (choosing $\delta=1$), the previous bound and an application of Corollary \ref{cor:VNgrowD}, Corollary \ref{cor:VNresgrowD}, Lemma \ref{lm:NresgrowD}, Lemma \ref{lm:KresgrowD} and the operator bound $ \cN_{\geq \frac12 N^\alpha}\leq 4 N^{-2\alpha}\cK$, by Gronwall's Lemma.
\end{proof}


\subsection{Action of Quartic Renormalization on Excitation Hamiltonian} 
\label{sub:MNparts}

We compute now the main contributions to $\cM_N= e^{-D} \cJ_{N}^\text{eff} e^{D}$. From (\ref{eq:defJNeff}) and recalling that $[\cN_+ , D] = 0$, we can decompose  
\begin{equation}\label{eq:decoMN} \cM_{N}= 4\pi \frak{a}_0 N^{1+\kappa}  -4\pi \frak{a}_0 N^{\kappa-1}\cN_+^2/N +\cM_{N}^{(2)}+\cM_{N}^{(3)}+\cM_{N}^{(4)} \end{equation}
where the operators $ \cM_{N}^{(i)}, i= 2,3,4,$ are defined by
\begin{equation}\begin{split}\label{eq:MN0to4} 
\cM_{N}^{(2)} =&\; 8\pi\mathfrak{a}_0N^\kappa\sum_{p\in P_H^c}   e^{-D}b^*_pb_p  e^D + 4\pi \frak{a}_0 N^{\kappa}\sum_{p\in P^c_H} e^{-D} \big[ b^*_p b^*_{-p} + b_p b_{-p} \big] e^D \\
\cM_{N}^{(3)}  =&\;   \frac{8\pi\mathfrak{a}_0N^\kappa}{\sqrt N}\sum_{\substack{ p\in P_H^c , q\in P_L:\\ p+q\neq 0}} e^{-D}\big[ b^*_{p+q}a^*_{-p}a_q+ \text{h.c.}\big]e^D ,\\
\cM_{N}^{(4)} =&\; e^{-D}\cH_N e^D = e^{-D}\cK e^D +e^{-D}\cV_N e^D.		
\end{split}\end{equation}

\subsubsection{Analysis of $ \cM_N^{(2)}$}\label{sec:MN2}

In this section, we determine the main contributions to $ \cM_N^{(2)}$, defined in \eqref{eq:MN0to4} by
		\begin{equation}\label{eq:defMN2}
		\cM_N^{(2)} = 8\pi\mathfrak{a}_0N^\kappa\sum_{p\in P_H^c}   e^{-D}b^*_pb_p  e^D + 4\pi \frak{a}_0 N^{\kappa}\sum_{p\in P^c_H} e^{-D} \big[ b^*_p b^*_{-p} + b_p b_{-p} \big] e^D
		\end{equation}
The main result of this section is the following proposition.
\begin{prop}\label{prop:MN2} Assume the exponents $\alpha, \beta$ satisfy (\ref{eq:condab}). Then 	\begin{equation}\label{eq:propMN2id} \cM_N^{(2)}=  8\pi\mathfrak{a}_0N^\kappa\sum_{p\in P_H^c} \Big[  b^*_pb_p  +\frac12  b^*_p b^*_{-p} +\frac12  b_p b_{-p} \Big] + \cE_{\cM_N}^{(2)} 
		\end{equation}
and there exists a constant $C>0$ such that   
		\begin{equation}\label{eq:propMN2errorbnd}
		\begin{split}
		\pm e^{A}e^{D} \cE_{\cM_N}^{(2)}  e^{-D} e^{-A}&\leq C N^{-\beta-2\kappa} \cK  + CN^\kappa
		\end{split}
		\end{equation}
for all $N\in\NN$ sufficiently large.
\end{prop}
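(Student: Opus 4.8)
The plan is to conjugate the quadratic terms in $\cM_N^{(2)}$ by $e^D$ term by term, isolating the term that survives and collecting the remainder into $\cE_{\cM_N}^{(2)}$. Writing, for any operator $\cO$, $e^{-D}\cO e^D = \cO + \int_0^1 e^{-sD}[\cO,D]e^{sD}\,ds$, the first step is to compute the commutators $[b_p^*b_p, D]$ and $[b_p^*b_{-p}^* + b_p b_{-p}, D]$ with $D = D_1 - D_1^*$, $D_1$ as in \eqref{eq:def-D12}. Since $D$ is quartic in creation and annihilation operators with two of the momenta in $P_H$ (large) and two in $P_L$ (small), while $p$ ranges over $P_H^c = \{|p| < N^\alpha\}$, each such commutator produces terms that carry at least one factor $\eta_r$ with $r \in P_H$, hence a smallness $\|\eta_H\| \leq C N^{\kappa - \alpha/2}$, together with volume factors $|P_L| \sim N^{3\beta}$ and powers of $N^{-1}$ from the normalization of $D_1$. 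The key observation is that no term in these commutators reproduces $b_p^* b_p$ or $b_p^* b_{-p}^*$ with an $O(1)$ coefficient: the "diagonal" term $8\pi\frak{a}_0 N^\kappa \sum_{p \in P_H^c}[b_p^* b_p + \tfrac12 b_p^* b_{-p}^* + \tfrac12 b_p b_{-p}]$ in \eqref{eq:propMN2id} is simply the $s=0$ value, and everything generated by the $\int_0^1 \cdots ds$ correction goes into $\cE_{\cM_N}^{(2)}$.

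Second, I would bound $\cE_{\cM_N}^{(2)}$. The natural route is to estimate, for each commutator term $T_s := e^{-sD}[\cdots, D]e^{sD}$ and each $\xi \in \cF_+^{\leq N}$, the quantity $|\langle \xi, T_s \xi\rangle|$ by switching the relevant operators to position space or by direct Cauchy–Schwarz in momentum space, splitting the four (or more) creation/annihilation operators into a product of two $\ell^2$-norms — one involving $\|\eta_H\|$, the other involving a power of $\cK$ or $\cN_{\geq N^\alpha}$ — exactly in the style of the proofs of Proposition \ref{prop:comm1VN}, Proposition \ref{prop:commVND}, and Lemma \ref{lm:VFresgrowD}. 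Because $p \in P_H^c$ forces $|p| < N^\alpha$, factors $b_p^*, b_p$ contribute at most $\cN_+^{1/2} \leq \cK^{1/2}/(2\pi)$ but more usefully, when paired with a low-momentum index, contribute a factor $N^{\alpha}$ only through counting, not through kinetic energy. Collecting the powers of $N$ — $N^\kappa$ from the prefactor, $N^{\kappa - \alpha/2}$ from $\|\eta_H\|$, $N^{3\beta}$ or $N^{3\beta/2}$ from the low-momentum sums, $N^{-1}$ from $D_1$ — and using $\cN_+ \leq N$, $\cN_{\geq \frac12 N^\alpha} \leq C N^{-2\alpha}\cK$, together with the conditions \eqref{eq:condab} (in particular $\alpha > 3\beta + 2\kappa$ and $\alpha + \beta + 2\kappa < 1$), should yield a bound of the schematic form $\pm \cE_{\cM_N}^{(2)} \leq C N^{-\beta-2\kappa}\cK + C N^\kappa$ at the level of $\cF_+^{\leq N}$ before conjugation.

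Third, the statement \eqref{eq:propMN2errorbnd} is actually for $e^A e^D \cE_{\cM_N}^{(2)} e^{-D} e^{-A}$, so after obtaining the bound for $\cE_{\cM_N}^{(2)}$ in terms of $\cK$ and constants, I would propagate it through the conjugations $e^{\pm D}$ and $e^{\pm A}$ using the growth lemmas already established: Corollary \ref{cor:KgrowD} (growth of $\cK$ under $e^D$), Lemma \ref{lm:KresgrowA}, Lemma \ref{lm:NresgrowA}, Lemma \ref{lm:NresgrowD}, and Lemma \ref{lm:KresgrowD}. One must check that all the auxiliary observables ($\cK_L$, $\cK_{\leq N^{3\beta/2}}$, $\cN_{\geq \frac12 N^{j\beta/2}}$, etc.) that appear on the right-hand sides of these growth estimates are, after multiplication by their $N$-power prefactors and use of \eqref{eq:condab}, absorbed into $C N^{-\beta - 2\kappa}\cK + C N^\kappa$ — this is where the precise numerology of the exponents matters and is somewhat delicate, but it is the same bookkeeping that appears throughout Section \ref{sec:MN}.

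The main obstacle, I expect, is not any single commutator computation but the combination of two things: first, writing out $[b_p^* b_{-p}^*, D]$ correctly, since the $b$-operators carry the $\sqrt{(N-\cN_+)/N}$ factor and one must be careful with the modified commutation relations \eqref{eq:comm-bp}–\eqref{eq:comm2} (producing, besides the "clean" quartic terms, lower-order terms with extra $1/N$ that are harmless but must be tracked); and second, ensuring that the final exponent count closes, i.e. that every remainder term is indeed $\leq C N^{-\beta - 2\kappa}\cK + C N^\kappa$ and not merely $\leq C N^{\text{something}}\cK$ with a barely-negative or non-negative power — this forces one to use the full strength of $\alpha > 3\beta + 2\kappa$ and $\alpha + \beta + 2\kappa < 1$ rather than cruder estimates, and is the step most likely to require care in how one distributes kinetic-energy factors between the two halves of each Cauchy–Schwarz split.
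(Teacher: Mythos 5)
Your plan coincides with the paper's proof: the same Duhamel expansion $e^{-D}\cO e^{D}=\cO+\int_0^1 e^{-tD}[\cO,D_1]e^{tD}\,dt+\mathrm{h.c.}$, the same explicit commutator computation (producing five remainder terms tracked with the modified commutation relations of the $b$-operators), the same Cauchy--Schwarz bounds exploiting $\|\eta_H\|\leq CN^{\kappa-\alpha/2}$ to arrive at $CN^{2\kappa+3\beta/2-1}\cK_L+CN^{\kappa}(\cN_{\geq\frac12 N^{\alpha}}+1)$, and the same propagation through $e^{\pm D}$, $e^{\pm A}$ via Lemmas \ref{lm:NresgrowA}, \ref{lm:KresgrowA}, \ref{lm:NresgrowD}, \ref{lm:KresgrowD} together with $\cN_{\geq\frac12 N^{\alpha}}\leq 4N^{-2\alpha}\cK$ and the conditions (\ref{eq:condab}). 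No gaps; the exponent bookkeeping you flag as delicate does close exactly as you anticipate.
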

\begin{proof} We start with the identity
		\begin{equation}\label{eq:propMN21}
		\begin{split}
		&\cM_N^{(2)} -  8\pi\mathfrak{a}_0N^\kappa\sum_{p\in P_H^c} \Big[  b^*_pb_p  +\frac12  b^*_p b^*_{-p} +\frac12  b_p b_{-p} \Big] \\
		& = 8\pi\mathfrak{a}_0N^\kappa \int_0^1dt \;\sum_{p\in P_H^c}  e^{-tD}\big[   b^*_pb_p  +\frac12  b^*_p b^*_{-p} +\frac12  b_p b_{-p} , D_1 \big]e^{tD} +\text{h.c.}
		\end{split}
		\end{equation}
and a straight-forward computation shows that  
		\[\begin{split}  
		& \big[  b^*_pb_p  +\frac12  b^*_p b^*_{-p} +\frac12  b_p b_{-p}, a_{v+r}^* a_{w-r}^* a_w a_v  \big] \\
		&= b^*_{v+r}a^*_{w-r}a_v b_w \big( \delta_{p,v+r} + \delta_{p,w-r}- \delta_{p,v} - \delta_{p,w}\big) \\
		& \hspace{0.4cm} -  \frac12 b^*_{v+r}b^*_{w-r} \big(\delta_{p,w}\delta_{-p,v} + \delta_{-p,w}\delta_{p,v}) +  \frac12b_{v}b_{w} (\delta_{p,w-r}\delta_{-p,v+r} + \delta_{-p,w-r}\delta_{p,v+r}\big) \\
		&\hspace{0.4cm} -  \frac12b^*_{v+r}b^*_{w-r}\big( a^*_{-p}a_w\delta_{p,v} + a^*_{p}a_w\delta_{-p,v} + a^*_{-p}a_v\delta_{p,w} + a^*_{p}a_v\delta_{-p,w}  \big)\\
		&\hspace{0.4cm} + \frac12\big( a^*_{w-r}a_{-p}\delta_{p,v+r} + a^*_{v+r}a_{-p}\delta_{p,w-r} + a^*_{w-r}a_{p}\delta_{-p,v+r} + a^*_{v+r}a_{p}\delta_{-p,w-r} \big) b_v b_w.
		\end{split}\]
As a consequence, we find that 
		\begin{equation}\label{eq:propMN22}
		\begin{split}
		&\cM_N^{(2)} -  8\pi\mathfrak{a}_0N^\kappa\sum_{p\in P_H^c} \Big[  b^*_pb_p  +\frac12  b^*_p b^*_{-p} +\frac12  b_p b_{-p} \Big]  =  \int_0^1\!\!dt\;  e^{-tD}\sum_{j =1}^5\big( \text{V}_j + \text{h.c.}\big )e^{tD},
		\end{split}
		\end{equation}
where 
		\begin{equation}\label{eq:propMN23}
		\begin{split}
		\text{V}_1 &= - \frac{8\pi\mathfrak{a}_0N^\kappa}{2N}\sum_{\substack{ r\in P_H, v \in P_L }} \eta_r b^*_{v+r}b^*_{-v-r}, \\
		\text{V}_2 &= \frac{8\pi\mathfrak{a}_0N^\kappa}{2N}\sum_{\substack{ r\in P_H, v\in P_L: \\ v+r \in P_H^c , v+r\neq 0  }}\eta_r  b_v b_{-v} , \\
		\text{V}_3 &=\frac{8\pi\mathfrak{a}_0N^\kappa}{2N}\sum_{\substack{ r\in P_H, v,w\in P_L: \\v+r,w-r\neq 0  }}\eta_r \big(-2 + \chi_{\{ r+v\in P_H^c\} }+ \chi_{\{ w-r\in P_H^c\} }\big) b^*_{v+r}a^*_{w-r}a_v b_w , \\
		\text{V}_4 &=- \frac{8\pi\mathfrak{a}_0N^\kappa}{N}\sum_{\substack{ r\in P_H, v,w\in P_L: \\v+r,w-r\neq 0  }}\eta_r b^*_{v+r}b^*_{w-r} a^*_{-v}a_w , \\
		\text{V}_5 &= \frac{8\pi\mathfrak{a}_0N^\kappa}{N}\sum_{\substack{ r\in P_H, v,w\in P_L: \\r-w\in P_H^c, v+r,w-r\neq 0  }}\eta_r a^*_{v+r}a_{r-w}  b_v b_w . \\
		\end{split}
		\end{equation}
Here $ \chi_{\{ p \in S\}}$ denotes as usual the characteristic function for the set $S\subset \Lambda_+^*$, evaluated at $ p\in \Lambda_+^*$. Let us briefly explain how to bound the different contributions $ \text{V}_1$ to $\text{V}_5$, defined in \eqref{eq:propMN23}. Using Cauchy-Schwarz, the first two contributions are bounded by 
		\[\begin{split}
		 \pm (\text{V}_1 + \text{V}_2) & \leq C N^{2\kappa+3\beta -\alpha/2 -1} (\cN_{\geq \frac12 N^\alpha}+1) + CN^{2\kappa+3\beta/2-1}(\cK_L+1)
		 \end{split}\]
where, for $ \text{V}_2$, we used that $v+r\in P_H^c$ implies that $ |r|\leq N^{\alpha}+N^{\beta}$ and furthermore that $ \sum_{N^\alpha\leq |r|\leq N^\alpha+N^\beta}|\eta_r|\leq N^{\kappa +\beta}$.
The contributions $ \text{V}_3$ to $\text{V}_5$, on the other hand, can be controlled by
		\[\begin{split}
		&|\langle\xi, (\text{V}_3 + \text{V}_4+\text{V}_5)\xi\rangle|\\
		&\leq \frac{CN^{\kappa}}N\sum_{\substack{ r\in P_H, v,w\in P_L: \\v+r,w-r\neq 0  }} |\eta_r| \| a_{v+r}(\cN_{\geq \frac12 N^\alpha}+1)^{-1/2}a_{w-r} \xi \|   \|a_v (\cN_{\geq \frac12 N^\alpha}+1)^{1/2} a_w\xi\|\\
		&\hspace{0.4cm} + \frac{CN^{\kappa}}N\sum_{\substack{ r\in P_H, v,w\in P_L: \\v+r,w-r\neq 0  }}  |\eta_r| \| a_{v+r}(\cN_{\geq \frac12 N^\alpha}+1)^{-1/2}a_{w-r}a_w \xi \|   \|a_v (\cN_{\geq \frac12 N^\alpha}+1)^{1/2} \xi\|\\
		&\hspace{0.4cm} + \frac{CN^{\kappa}}N\sum_{\substack{ r\in P_H, v,w\in P_L: \\v+r,w-r\neq 0  }}|\eta_r| \| a_{v+r}\xi \|   \|a_va_w a_{w-r}\xi\|\\
		&\leq CN^{2\kappa+3\beta/2-\alpha/2} \langle\xi, (\cN_{\geq \frac12 N^\alpha}+1)\xi\rangle\leq C N^\kappa \langle\xi, (\cN_{\geq \frac12 N^\alpha}+1)\xi\rangle
		\end{split}\]
for any $ \xi \in \cF_+^{\leq N}$. In conclusion (since $2\kappa + 3\beta-\alpha/2-1 < \kappa$ from (\ref{eq:condab})), we have proved that 
		\[\pm \sum_{j =1}^5\big( \text{V}_j + \text{h.c.}\big )\leq  CN^{2\kappa+3\beta/2-1} \cK_L + CN^\kappa (\cN_{\geq \frac12 N^\alpha} + 1) .\]
Now, applying this bound together with \eqref{eq:propMN22}, Lemma \ref{lm:NresgrowA}, Lemma \ref{lm:KresgrowA}, Lemma \ref{lm:NresgrowD}, Lemma \ref{lm:KresgrowD} and the operator inequality $ \cN_{\geq \frac12 N^\alpha}\leq 4 N^{-2\alpha}\cK$ proves the claim.
\end{proof}


\subsubsection{Analysis of $  \cM_N^{(3)}$} \label{sec:MN3}

In this section, we determine the main contributions to $  \cM_N^{(3)}$, defined in \eqref{eq:MN0to4} by
		\begin{equation}\label{eq:defMN3} \cM_N^{(3)}=   \frac{8\pi \mathfrak{a}_0 N^\kappa}{\sqrt N}\sum_{\substack{ p\in P_H^c, q\in P_L: \\p+q\neq 0}} e^{-D}\big(b^*_{p+q}a^*_{-p}a_q+ \text{h.c.}\big)e^D . 
		\end{equation}
\begin{prop}\label{prop:MN3} Assume the exponents $\alpha, \beta$ satisfy (\ref{eq:condab}). Then we have that
		\begin{equation}\label{eq:propMN3id} \cM_N^{(3)}=   \frac{8\pi \mathfrak{a}_0 N^\kappa}{\sqrt N}\sum_{\substack{ p\in P_H^c, q\in P_L: \\p+q\neq 0}} \big(b^*_{p+q}a^*_{-p}a_q+ \emph{h.c.}\big) + \cE_{\cM_N}^{(3)} 
		\end{equation}
and there exists a constant $C>0$ such that 
		\begin{equation}
		\begin{split}
		\pm &e^{A}e^{D} \cE_{\cM_N}^{(3)}  e^{-D} e^{-A} \\
		& \hspace{0.5cm}\leq CN^{-\beta}\cK +  CN^{\alpha + \beta/2+2\kappa-1 }\cK (\cN_{\geq \frac12 N^\alpha}+1) + CN^{\alpha + \beta/2+2\kappa}  
		\end{split}
		\end{equation} 
for all $N\in\NN$ sufficiently large.
\end{prop}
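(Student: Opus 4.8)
The argument will follow the pattern of the proof of Proposition \ref{prop:MN2}. I write
\[ X \;=\; \frac{8\pi\mathfrak{a}_0 N^\kappa}{\sqrt N}\sum_{\substack{p\in P_H^c,\ q\in P_L:\\ p+q\neq 0}}\big(b^*_{p+q}a^*_{-p}a_q+\text{h.c.}\big) \]
for the un-renormalized cubic operator (exactly the term kept for the square completion in the proof of Proposition \ref{prop:MN}). By the fundamental theorem of calculus, and since $[\cN_+,D]=0$ so that the factors $\sqrt{(N-\cN_+)/N}$ hidden in the $b$-operators are inert under $D=D_1-D_1^*$,
\[ \cE_{\cM_N}^{(3)} \;=\; \cM_N^{(3)}-X \;=\; \int_0^1 ds\; e^{-sD}[X,D]\,e^{sD}, \]
with $D_1$ as in \eqref{eq:def-D12}. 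First I would compute $[X,D]=[X,D_1]-[X,D_1^*]$ by bringing all products into normal order.

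Next I would discard the vanishing contractions. In $[X,D_1]$, contracting the low-momentum annihilation operator $a_q$ (with $|q|\leq N^\beta$) of $X$ against a creation operator $a^*_{v+r}$ or $a^*_{w-r}$ of $D_1$ (with $r\in P_H$, $v,w\in P_L$) would force $q$ to equal a momentum of size $\geq N^\alpha-N^\beta$, impossible for large $N$; hence only the contractions of the creation operators $b^*_{p+q},a^*_{-p}$ of $X$ against the annihilation operators $a_v,a_w$ of $D_1$ survive, together with one genuinely new term, quintic in $a^*,a$. The commutator $[X,D_1^*]$ contributes analogous terms with the momentum cutoffs exchanged. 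This leaves a finite list of operators, each of the form $N^{-3/2}\sum_r\eta_r(\cdots)$ with explicit restrictions to $P_H$ and $P_L$, plus lower-order pieces containing a generated $\big(\widehat V(\cdot/N^{1-\kappa})\ast 1\big)$-type momentum sum.

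Then I would estimate each of these by Cauchy-Schwarz, passing to position space where convenient, inserting factors $(\cN_{\geq\frac12 N^\alpha}+1)^{\pm1/2}$ on the high-momentum legs and using the pull-through relations \eqref{eq:Ntheta-bds}, controlling the $\eta$-sums via \eqref{eq:modetap} and $\|\eta_H\|\leq CN^{\kappa-\alpha/2}$ from \eqref{eq:etaHL2}, and extracting a factor $\cK_L^{1/2}$ from the low-momentum legs whenever an extra power $|q|\leq N^\beta$ is available. This should yield
\[ \pm[X,D]\ \leq\ CN^{-\beta}\cK\;+\;CN^{\alpha+\beta/2+2\kappa-1}\cK_L(\cN_{\geq\frac12 N^\alpha}+1)\;+\;CN^{\alpha+\beta/2+2\kappa}, \]
the first term from the quintic piece, the second (dangerous) term from the contraction in which the low-momentum pair carries a $|q|^2$-weight and the high-momentum pair a factor $\cN_{\geq\frac12 N^\alpha}$. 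Finally I would propagate this through the conjugations: $e^{-sD}[X,D]e^{sD}$ is controlled by Lemma \ref{lm:NresgrowD} and Lemma \ref{lm:KresgrowD}, and the outer conjugation $e^A e^D(\cdot)e^{-D}e^{-A}$ by Lemma \ref{lm:NresgrowA}, Lemma \ref{lm:KresgrowA}, Lemma \ref{lm:NresgrowD} and Lemma \ref{lm:KresgrowD}; combined with $\cK_L\leq\cK$, $\cN_{\geq\frac12 N^\alpha}\leq 4N^{-2\alpha}\cK$ and the constraints \eqref{eq:condab}, this converts every $\cK_L(\cN_{\geq\cdot}+1)$ and $(\cN_{\geq\cdot}+1)^m$ term into the asserted right-hand side.

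\textbf{Main obstacle.} The delicate point, relative to the quadratic case of Proposition \ref{prop:MN2}, is the bookkeeping in the cubic commutator: contracting the cubic $X$ against the quartic $D_1$ and $D_1^*$ produces more terms than before --- a quintic remainder and several distinct quartic ones --- and one must check that each "dangerous" piece, which scales like $\cK\cN_{\geq N^\alpha}$ rather than merely $\cK$ or $\cN_+$, genuinely carries the exponent $\alpha+\beta/2+2\kappa-1$. Keeping the $\cK_L$-versus-$\cK$ distinction throughout (the restriction to $P_L$ is what produces the favourable $N^{\beta/2}$ in place of $N^{1/2}$), and verifying that it survives the $e^{\pm D}$ and $e^{\pm A}$ conjugations through the growth lemmas, is where the real care is required.
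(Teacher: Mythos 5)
Your proposal follows essentially the same route as the paper's proof: a Duhamel expansion $\cM_N^{(3)}-X=\int_0^1 e^{-sD}[X,D_1]e^{sD}\,ds+\text{h.c.}$, explicit normal-ordering of the commutator using that $[a^*_{-p}a_q,a^*_{v+r}a^*_{w-r}]=0$ on the relevant momentum sets, Cauchy--Schwarz with $(\cN_{\geq\frac12 N^\alpha}+1)^{\pm1/2}$ insertions term by term (the paper's $\Psi_6$ is exactly your "dangerous" piece with exponent $\alpha+\beta/2+2\kappa-1$), and propagation through $e^{\pm D},e^{\pm A}$ via Lemmas \ref{lm:NresgrowA}, \ref{lm:KresgrowA}, \ref{lm:NresgrowD}, \ref{lm:KresgrowD} together with $\cN_{\geq\frac12 N^\alpha}\leq 4N^{-2\alpha}\cK$. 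The plan is correct; only minor bookkeeping details differ (e.g.\ no $\widehat V$-convolution actually arises here since $X$ has the constant coefficient $8\pi\mathfrak{a}_0N^\kappa$, and the $N^{-\beta}\cK$ term absorbs the double-contraction cubic pieces rather than only the quintic remainder).
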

\begin{proof} Let us define the operator $ \text{Y}:\cF_+^{\leq N}\to \cF_+^{\leq N} $ by 
		\begin{equation}\label{eq:defopY} \text{Y} = \frac{8\pi \mathfrak{a}_0 N^\kappa}{\sqrt N}\sum_{\substack{ p\in P_H^c, q\in P_L: \\p+q\neq 0}} \big(b^*_{p+q}a^*_{-p}a_q+ \text{h.c.}\big),
		\end{equation}
so that $ \cM_N^{(3)} = e^{-D}\text{Y}e^D$. We recall the definition \eqref{eq:def-D12} and observe that
		\begin{equation}\label{eq:propMN31}e^{-D}\text{Y}e^D- \text{Y} = \int_0^1 ds\; e^{-sD} [\text{Y}, D_1] e^{sD} + \text{h.c.}.\end{equation}
This implies that it is enough to control the commutator $  [\text{Y}, D_1]$ after conjugation with $ e^{tD}$, for any $t\in [-1;1] $. Note that, if $ p\in P_H^c, q\in P_L, r\in P_H$ and $v,w \in P_L$, we have $ |v+r|\geq N^\alpha-N^\beta>\frac12 N^\alpha>N^\beta$ s.t. $[a^*_{-p}a_q, a^*_{v+r}a^*_{w-r}] = 0 $, for $N\in\NN$ large enough. Then, a lengthy, but straight-forward calculation shows that 
		\[\begin{split}
		 [b^*_{p+q}a^*_{-p}a_q, a^*_{v+r}a^*_{w-r}a_va_w] &= -b^*_{v+r}a^*_{w-r}a_q (\delta_{-p, w}\delta_{p+q,v} + \delta_{-p,v}\delta_{p+q,w})\\
		 &\hspace{0.4cm} -b^*_{p+q}a^*_{v+r}a^*_{w-r}a_q (a_w \delta_{-p,v}+ a_v\delta_{-p,w} ) \\
		 &\hspace{0.4cm}  -b^*_{-p}a^*_{v+r}a^*_{w-r}a_q (a_w\delta_{p+q,v}+a_v\delta_{p+q,w}) 
		\end{split}\]
and 
		\[\begin{split}
		[a^*_qa_{-p}b_{p+q}, a^*_{v+r}a^*_{w-r}a_va_w] &= a^*_q a_v b_w \delta_{-p,w-r}\delta_{p+q,v+r} + a^*_q a_v b_w \delta_{-p,v+r}\delta_{p+q,w-r}\\
		&\hspace{0.4cm}+ a^*_qa^*_{w-r}a_va_wb_{p+q}\delta_{-p,v+r}+ a^*_qa^*_{v+r}a_va_wb_{p+q}\delta_{-p,w-r}\\
		&\hspace{0.4cm}- a^*_{v+r}a^*_{w-r}a_wa_{-p}b_{p+q}\delta_{q,v}- a^*_{v+r}a^*_{w-r}a_va_{-p}b_{p+q}\delta_{q,w}\\
		&\hspace{0.4cm}+ a^*_qa^*_{w-r}a_{-p}a_vb_{w}\delta_{p+q,v+r}+ a^*_qa^*_{v+r}a_{-p}a_vb_{w}\delta_{p+q,w-r}.
		\end{split} \]
As a consequence, we conclude that
		\begin{equation}\label{eq:propMN32}
		\begin{split}
		[\text{Y}, D_1]+\text{h.c.}& =\sum_{i=1}^6 (\Psi_{i}+\text{h.c.}) ,
		\end{split}
		\end{equation}
where 
		\begin{equation}\label{eq:defPsi1to6}
		\begin{split}
		\Psi_{1} &= - \frac{8\pi\mathfrak{a}_0 N^\kappa}{N^{3/2}} \sum_{\substack{r\in P_H,  v,w\in P_L : \\ v+w \in P_L}}^* \eta_r b^*_{v+r}a^*_{w-r}a_{v+w} , \\
		\Psi_{2}&=  \frac{8\pi\mathfrak{a}_0 N^\kappa}{N^{3/2}} \sum_{\substack{ r\in P_H,  v,w\in P_L:\\ v+r, r - w \in P_H^c, v+w \in P_L}}^* \eta_r a^*_{v+w} a_v b_w,\\
		\Psi_{3} &=  -\frac{16\pi\mathfrak{a}_0 N^\kappa}{N^{3/2}} \sum_{r\in P_H, q, v,w\in P_L}^*\eta_r b^*_{q-v}a^*_{v+r}a^*_{w-r}a_qa_w, \\
		\Psi_{4} &= \frac{8\pi\mathfrak{a}_0 N^\kappa}{N^{3/2}} \sum_{\substack{ r\in P_H,  q,v,w\in P_L:\\ v+r\in P_H^c}}^* \eta_r a^*_qa^*_{w-r}a_va_wb_{q-v-r}, \\
		\Psi_{5}&=\frac{8\pi\mathfrak{a}_0 N^\kappa}{N^{3/2}} \sum_{\substack{ r\in P_H,  q,v,w\in P_L:\\ v+r-q \in P_H^c}}^* \eta_r a^*_qa^*_{w-r}a_va_wb_{q-v-r},  \\	
		\Psi_{6} &= - \frac{8\pi\mathfrak{a}_0 N^\kappa}{N^{3/2}} \sum_{p\in P_H^c, r\in P_H,  v,w\in P_L}^* \eta_r  a^*_{v+r}a^*_{w-r}a_wa_{-p}b_{p+v}. \\			
		\end{split}
		\end{equation}
Let us explain how to control the operators $ \Psi_1$ to $\Psi_6$, defined in \eqref{eq:defPsi1to6}. We start with $\Psi_1$. Given $ \xi\in \cF_+^{\leq N}$, we find that 
		\[\begin{split}
		|\langle\xi, \Psi_1\xi\rangle | &= \bigg| \frac{8\pi\mathfrak{a}_0 N^\kappa}{N^{3/2}} \sum_{r\in P_H,  v,w\in P_L}^* \eta_r \langle\xi, b^*_{v+r}a^*_{w-r}a_{v+w}\xi\rangle\bigg| \\
		&\leq \frac{C N^\kappa}{N^{3/2}}  \sum_{r\in P_H,  v,w\in P_L}^*|\eta_r| \| (\cN_{\geq \frac12 N^\alpha}+1)^{-1/2}a_{v+r}a_{w-r} \xi \| \| (\cN_{\geq \frac12 N^\alpha}+1)^{1/2}a_{v+w} \xi \|\\
		&\leq C N^{3\beta + 2\kappa-\alpha/2-1} \langle\xi, (\cN_{\geq \frac12 N^\alpha}+1)\xi\rangle \leq C N^{3\beta/2 + \kappa-1} \langle\xi, (\cN_{\geq \frac12 N^\alpha}+1)\xi\rangle.
		\end{split}\]
The contribution $ \Psi_2$ can be bounded by
		\[\begin{split}
		|\langle\xi, \Psi_2\xi\rangle|  &= \bigg| \frac{8\pi\mathfrak{a}_0 N^\kappa}{N^{3/2}} \sum_{\substack{ r\in P_H,  v,w\in P_L:\\ v+r\in P_H^c}}^* \eta_r \langle\xi a^*_{v+w} a_v b_w\xi\rangle\bigg| \\
		&\leq C N^{\beta/2 + \kappa-1} \langle\xi, \cK_{\leq 2N^\beta}\xi\rangle \sum_{\substack{N^\alpha \leq |r|\leq N^{\alpha}+N^\beta } } |\eta_r|\leq C N^{3\beta/2 + 2\kappa-1} \langle\xi, \cK_{\leq 2N^\beta}\xi\rangle. 
		\end{split}\]
Notice here, that we used that $ |r|\leq N^\alpha+N^\beta$ if $ r+v\in P_H^c$ and $v\in P_L$. Next, we apply as usual Cauchy-Schwarz to estimate the terms $ \Psi_3$ to $\Psi_5  $ by 
		\[\begin{split} 
		&| \langle\xi, \Psi_3\xi\rangle +  \langle\xi, \Psi_4\xi\rangle + \langle\xi, \Psi_5\xi\rangle | \\
		&\hspace{2cm}\leq  C N^{3\beta  + 2\kappa-\alpha/2} \langle\xi, (\cN_{\geq \frac12 N^\alpha}+1)\xi\rangle \leq C N^{3\beta/2  + \kappa} \langle\xi, (\cN_{\geq \frac12 N^\alpha}+1)\xi\rangle
		 \end{split}\]
for all $ \alpha>3\beta +2\kappa$. Finally, the term $ \Psi_6$ can be controlled by 
		\[\begin{split}
		| \langle\xi, \Psi_6\xi\rangle | &= \bigg|  \frac{8\pi\mathfrak{a}_0 N^\kappa}{N^{3/2}} \sum_{p\in P_H^c, r\in P_H,  v,w\in P_L}^* \eta_r  \langle\xi, a^*_{v+r}a^*_{w-r}a_wa_{-p}b_{p+v} \xi\rangle  \bigg| \\
		&\leq CN^{\kappa-3/2} \sum_{p\in P_H^c, r\in P_H,  v,w\in P_L}^*|w|^{-1} \| (\cN_{\geq N^\alpha/2} + 1)^{-1/2}a_{v+r}a_{w-r} \xi \|   \\
		&\hspace{5cm}\times|w||\eta_r|\| a_wa_{-p}b_{p+v} (\cN_{\geq N^\alpha/2} + 1)^{1/2} \xi \| \\
		&\leq  CN^{\alpha + \beta/2+2\kappa-1 } \langle\xi, \cK_L (\cN_{\geq \frac12 N^\alpha}+1)\xi\rangle + CN^{\alpha + \beta/2+2\kappa}\langle\xi, (\cN_{\geq \frac12 N^\alpha}+1)\xi\rangle.
		\end{split}\]
In conclusion, the previous estimates show that
		\[\begin{split}
		 \pm \bigg[ \sum_{i=1}^6 (\Psi_{i}+\text{h.c.}) \bigg] & \leq C N^{3\beta/2 + 2\kappa-1}\cK_{\leq 2N^\beta} + CN^{\alpha + \beta/2+2\kappa-1 }\cK_L (\cN_{\geq \frac12 N^\alpha}+1)\\
		 &\hspace{0.4cm} + CN^{\alpha + \beta/2+2\kappa}(\cN_{\geq \frac12 N^\alpha}+1),
		 \end{split}\]
so that, together with \eqref{eq:propMN31} and \eqref{eq:propMN32}, an application of the Lemmas  \ref{lm:NresgrowA},  \ref{lm:KresgrowA}, \ref{lm:NresgrowD}, \ref{lm:KresgrowD} and the operator bound $ \cN_{\geq \frac12 N^\alpha}\leq 4N^{-2\alpha}\cK $ proves the claim. 	
\end{proof}


\subsubsection{Analysis of $ \cM_N^{(4)}$} \label{sec:MN4}

In this section, we determine the main contributions to $ \cM_N^{(4)} = e^{-D}\cH_N e^D$, defined in \eqref{eq:MN0to4}. To this end, we start with the observation that
		\begin{equation}\label{eq:MN41}
		\cM_N^{(4)}  = \cH_N + \int_0^1ds\; e^{-sD} \Big( [\cK, D_1] + [\cV_N, D_1]\Big) e^{sD} +\text{h.c.},
		\end{equation}
with $D_1$ defined in \eqref{eq:def-D12}. By Proposition \ref{prop:commVND} and Proposition \ref{prop:commKD}, this implies that  
		\begin{equation}\label{eq:MN42}
		\begin{split}
		\cM_N^{(4)}  &= \cH_N - \frac{N^\kappa}{2N}\sum_{\substack{ r\in \Lambda^*, v,w\in P_L: \\ v+r, w-r\neq 0}}\int_0^1 ds\; \widehat{V}(r/N^{1-\kappa}) e^{-sD}\big(a^*_{v+r}a^*_{w-r}a_va_w+\text{h.c.}\big)e^{sD}\\
		&\hspace{0.4cm} + \int_0^1ds\; e^{-sD} \Big( \cE_{ [\cK, D] }+ \cE_{[\cV_N, D]}\Big) e^{sD},
		\end{split}
		\end{equation}
where we used that $ \widehat{V}(\cdot/N^{1-\kappa})\ast (\widehat{f}_N-\eta/N)(r) = \widehat{V}(\cdot/N^{1-\kappa})(r)$ for all $r\in\Lambda_+^*$. Moreover, the operators $ \cE_{[\cV_N,D]}$ and $ \cE_{[\cK, D]}$ are explicitly given by
		\begin{equation}\label{eq:MN43}
		\begin{split}
		\cE_{[\cV_N,D]} = \sum_{i=1}^4 \big(\Phi_i+\text{h.c.} \big) , \hspace{0.5cm} \cE_{[\cK,D]} =  \sum_{j=1}^3 \big(\Sigma_j+\text{h.c.}\big)
		\end{split}
		\end{equation}
where we recall the definitions \eqref{eq:commVND2} and \eqref{eq:commKD2}. Let us analyse the different contributions in \eqref{eq:MN42}, separately. We start with the second term on the r.h.s. of \eqref{eq:MN42}. 
\begin{prop}\label{prop:secondcommVND} Assume the exponents $\alpha, \beta$ satisfy (\ref{eq:condab}). Then we have 
		\begin{equation}\label{eq:propsecondcommVND}
		\begin{split}
		&\frac{1}{2N}\sum_{\substack{ u\in \Lambda^*, p, q\in P_L: \\ p+u, q-u\neq 0}}  N^\kappa\widehat{V}(r/N^{1-\kappa}) e^{-sD}\big(a^*_{p+u}a^*_{q-u}a_p a_q+\emph{h.c.}\big)e^{sD} \\
		&= \frac{1}{2N}\sum_{\substack{ u\in \Lambda^*, p, q\in P_L: \\ p+u, q-u\neq 0}}  N^\kappa\widehat{V}(r/N^{1-\kappa}) \big(a^*_{p+u}a^*_{q-u}a_p a_q+\emph{h.c.}\big)\\
		&\hspace{0.4cm} + \frac{s}{N} \sum^*_{\substack{ u\in \Lambda^*, v, w \in P_L:\\v+u, w-u\in P_L}} N^{\kappa} (\widehat{V}(./N^{1-\kappa})\ast  \eta/N)(u) a^*_{v+u} a^*_{w-u}a_va_w+ \cE_1(s) + \cE_2(s)
		\end{split}
		\end{equation}
and there exists a constant $C>0$ s.t. $ \cE_1(s) $ and $ \cE_2(s)$ satisfy
		\begin{equation}\label{eq:propsecondcommVNDerrors}\begin{split}\pm \cE_1(s) &\leq C(N^{\alpha+\beta+2\kappa-1} +N^{-3\beta-3\kappa}) \cK + CN^{2\beta+\kappa},\\
		\pm \cE_2(s) &\leq  CN^{\beta+\kappa-1}\cK_L(\cN_{\geq \frac12N^\alpha}+1) + C(N^{-\beta-\kappa} + CN^{3\beta/2 + \kappa/2-1}) \int_0^s dt\;  e^{-tD}   \cV_N  e^{tD}\\
		&\hspace{0.4cm} + CN^{2\beta+2\kappa-1}\int_0^s dt\;  e^{-tD}  \cK_{\leq 2N^\beta}(\cN_{\geq \frac12 N^\alpha}+1)e^{tD},
		\end{split}\end{equation}
for all $ \delta>0$, $ s\in [-1;1]$ and for all $N\in\NN$ sufficiently large.
\end{prop}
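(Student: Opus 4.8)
The plan is to prove Proposition~\ref{prop:secondcommVND} by expanding the conjugation $e^{-sD}(\cdot)e^{sD}$ in a Duhamel series around $s=0$ and isolating the first-order term, which after using the scattering equation will produce the advertised $s$-linear correction, while all higher-order corrections and the remainders from commutators with $D_1$ go into $\cE_1(s)$ and $\cE_2(s)$.

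First I would write, with $\cW := \tfrac{1}{2N}\sum^* N^\kappa \widehat V(u/N^{1-\kappa})(a_{p+u}^*a_{q-u}^*a_pa_q+\text{h.c.})$ restricted to $p,q\in P_L$,
\[
e^{-sD}\cW e^{sD} = \cW + \int_0^s dt\; e^{-tD}[\cW,D]e^{tD}.
\]
The commutator $[\cW,D]$ is computed exactly as in Proposition~\ref{prop:commVND} (the operator $\cW$ differs from $\cV_N$ only in that its external momenta are restricted to $P_L$, which does not affect the normal-ordering combinatorics since the $D_1$-momenta force $|r|\geq N^\alpha$ anyway): it equals the ``main'' quadratic-in-$(a^*a)$ term $\tfrac{1}{2N}\sum^* N^\kappa(\widehat V(\cdot/N^{1-\kappa})*\eta/N)(u)\,a^*_{v+u}a^*_{w-u}a_va_w$ with $v,w\in P_L$ (and also $v+u,w-u\in P_L$, since $\cW$ carries the $P_L$ restriction on its annihilated momenta), plus a sum of the type-$\Phi_i$ remainders bounded by the right-hand side of \eqref{eq:propcommVNDerror}. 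Then I would replace the $t$-integral of the main term by $s$ times the $t=0$ value plus $\int_0^s dt\,\int_0^t\! (\partial/\partial\tau)(\dots)$; the $t=0$ value is precisely the claimed $s$-linear term, and the double integral is a second commutator, hence $O(N^{\text{something}<0})$ after the bounds of Lemma~\ref{lm:KresgrowD}, Lemma~\ref{lm:NresgrowD} and Corollary~\ref{cor:VNresgrowD}. Everything not accounted for so far is grouped: contributions coming from conjugating the type-$\Phi_i$ error and from the second-order term involving $\cV_N$, $\cV_{N,L}$ and $\cK_L(\cN_{\ge\frac12 N^\alpha}+1)$ go into $\cE_2(s)$ (this is why $\cE_2(s)$ carries an explicit $\int_0^s e^{-tD}\cV_N e^{tD}$ and an $\int_0^s e^{-tD}\cK_{\le 2N^\beta}(\cN_{\ge\frac12 N^\alpha}+1)e^{tD}$), while the genuinely small, energy-controlled pieces (those of order $N^{\alpha+\beta+2\kappa-1}$, $N^{-3\beta-3\kappa}$, $N^{2\beta+\kappa}$) go into $\cE_1(s)$. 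To turn the bounds into the stated inequalities one uses $\cN_{\geq\frac12 N^\alpha}\leq 4N^{-2\alpha}\cK$, $\cK_L\le\cK$, $\cN_+\le N$, and $\cV_{N,L}\le\cV_N$ together with $\|N^\kappa\widehat V(\cdot/N^{1-\kappa})\|_\infty\le CN^\kappa$; the separation $3\beta-2<0$, $\alpha>3\beta+2\kappa$ from \eqref{eq:condab} is exactly what makes the constant exponents come out negative (or at worst $2\beta+\kappa$, which is absorbed as an additive $CN^{2\beta+\kappa}$).

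I expect the main obstacle to be bookkeeping rather than anything deep: one must verify carefully that the $P_L$-restrictions on $\cW$ are inherited correctly through the commutator (so that the $s$-linear term indeed has all four external momenta $v,w,v+u,w-u$ in $P_L$, matching the statement) and that the decomposition of $[\cW,D]$ into a main term plus $\Phi_i$-type remainders is literally the same as in Proposition~\ref{prop:commVND} up to these momentum cutoffs. A secondary subtlety is that $\cE_2(s)$ is allowed to depend on the conjugated operators $e^{-tD}\cV_Ne^{tD}$ and $e^{-tD}\cK_{\le 2N^\beta}(\cdot)e^{tD}$ themselves --- so one should not try to bound these by $\cV_N$ and $\cK$ directly here (that is deferred to when Corollary~\ref{cor:VNgrowD} and Corollary~\ref{cor:KgrowD} are applied in the assembly of $\cM_N^{(4)}$), but simply keep them as they stand; recognizing that this is the natural stopping point keeps the proof short. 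Once the commutator identity and these groupings are in place, the estimates themselves are routine Cauchy--Schwarz arguments of the form already carried out for $\Phi_1,\dots,\Phi_4$ in the proof of Proposition~\ref{prop:commVND}, combined with the Gronwall-type growth lemmas of Section~\ref{sec:MNNKres}.
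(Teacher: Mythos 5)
Your proposal follows essentially the same route as the paper: Duhamel expansion of $e^{-sD}\,\text{W}\,e^{sD}$, explicit computation of $[\text{W},D_1]+\text{h.c.}$ into a main term (with all four external momenta $v,w,v+u,w-u$ in $P_L$) plus Cauchy--Schwarz-controllable remainders, and freezing the $t$-integral of the main term at its $t=0$ value via a second-order commutator bound, which is exactly what Lemma~\ref{lm:VFresgrowD} packages, with the errors grouped into $\cE_1(s)$ and $\cE_2(s)$ as you describe. The one imprecision is that the commutator is not literally the $\Phi_i$-decomposition of Proposition~\ref{prop:commVND}: since $\text{W}$ contains its Hermitian conjugate explicitly and carries $P_L$-restrictions, the paper derives a new (analogous) set of six remainders $\zeta_1,\dots,\zeta_6$, some requiring extra momentum-localization observations --- a discrepancy you correctly flag as the bookkeeping to be checked.
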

\begin{proof} For definiteness, let's denote by $ \text{W}:\cF_+^{\leq N}\to\cF_+^{\leq N} $ the operator 
		\begin{equation} \text{W} = \frac{1}{2N}\sum_{\substack{ u\in \Lambda^*, p, q\in P_L: \\ p+u, q-u\neq 0}} N^\kappa\widehat{V}(u/N^{1-\kappa}) \big(a^*_{p+u}a^*_{q-u}a_p a_q+\text{h.c.}\big)
		\end{equation}
and consider the identity 
		\begin{equation}
		\begin{split}\label{eq:secondcommVND0}
		& e^{-sD}\text{W} e^{sD} - \text{W} \\ &= \int_0^s dt\; e^{-tD} [ \text{W}, D_1] e^{tD} +\text{h.c.}\\
		& = \frac{1}{2N}\int_0^s dt\;\sum_{\substack{ u\in \Lambda^*, p, q\in P_L: \\ p+u, q-u\neq 0}}  N^\kappa \widehat{V}(r/N^{1-\kappa}) e^{-tD} \Big[\big(a^*_{p+u}a^*_{q-u}a_p a_q+\text{h.c.}\big), D_1\Big]e^{tD} +\text{h.c.}
		\end{split}
		\end{equation}
Now, observe that 
		\[ [a_p, a^*_{v+r} ] = [a_q, a^*_{v+r} ] = [a_p, a^*_{w-r} ] =[a_q, a^*_{w-r} ] =0  \]
for all $  p,q \in P_L$ and $ r\in P_H$, $v,w\in P_L$ and $N\in\NN$ sufficiently large. Then, proceeding as in the proof of Proposition \ref{prop:commVND}, we obtain
		\begin{equation}\label{eq:secondcommVND01}\begin{split}
		&[a^*_{p+u}a^*_{q-u} a_p a_{q}, a^*_{v+r} a^*_{w-r} a_v a_w] \\
	 	&\hspace{2.5cm}=  - a^*_{v+r} a^*_{w-r}a^*_{q-u} a_w  a_{p} a_{q}\delta_{p+u, v} -  a^*_{v+r} a^*_{w-r}a^*_{p+u} a_w  a_{p} a_{q} \delta_{q-u, v}\\
		&\hspace{3cm} -  a^*_{v+r} a^*_{w-r} a_v  a^*_{q-u}  a_{p} a_{q}\delta_{p+u, w} -  a^*_{v+r} a^*_{w-r} a_v  a^*_{p+u}   a_{p} a_{q} \delta_{q-u, w}.
		\end{split}\end{equation}
and 
		\begin{equation}\label{eq:secondcommVND02}
		\begin{split}
		&[a^*_{p}a^*_q a_{p-u} a_{q+u},  a^*_{v+r} a^*_{w-r} a_v a_w] \\
	 	&\hspace{2.5cm}= a^*_{p}a^*_q a_{q+u}  a^*_{w-r} a_v a_w\delta_{p-u, v+r} + a^*_{p}a^*_q a_{p-u} a^*_{w-r} a_v a_w \delta_{q+u, v+r}\\
		&\hspace{3cm} + a^*_{p}a^*_q a^*_{v+r}a_{q+u}   a_v a_w\delta_{p-u, w-r} + a^*_{p}a^*_q a^*_{v+r}  a_{p-u}  a_v a_w \delta_{q+u, w-r}\\
		&\hspace{3cm} - a^*_{v+r} a^*_{w-r}a^*_q a_w  a_{p-u} a_{q+u}\delta_{p, v} -  a^*_{v+r} a^*_{w-r}a^*_{p} a_w  a_{p-u} a_{q+u} \delta_{q, v}\\
		&\hspace{3cm} -  a^*_{v+r} a^*_{w-r} a_v  a^*_q  a_{p-u} a_{q+u}\delta_{p, w} -  a^*_{v+r} a^*_{w-r} a_v  a^*_{p}   a_{p-u} a_{q+u} \delta_{q, w}.
		\end{split}\end{equation}
Combining the last two identities and putting non-normally ordered contributions into normal order, we find that 
		\begin{equation}\label{eq:secondcommVND1}\begin{split}
		[\text{W}, D_1]+\text{h.c.} =&\; \frac{1}{N} \sum^*_{\substack{ u\in \Lambda^*, v, w \in P_L:\\v+u, w-u\in P_L}} N^{\kappa} (\widehat{V}(./N^{1-\kappa})\ast  \eta/N)(u) a^*_{v+u} a^*_{w-u}a_va_w \\
		& +\sum_{j=1}^6 \big( \zeta_{j}+\text{h.c.}\big),
		\end{split}\end{equation}
where
		\begin{equation}\label{eq:secondcommVND2}\begin{split}
		\zeta_{1} &= - \frac{1}{2N^{2}}\sum^*_{\substack{u \in\Lambda^*, v,w \in P_L: \\ v+u, w-u\in P_L,\\ r \in P_H^c\cup\{0\}}} N^{\kappa} \widehat{V}((u-r)/N^{1-\kappa})\eta_r a^*_{v+u} a_{w-u}^* a_va_w,\\
		\zeta_{2} & = -\frac{1}{2N^{2}}\sum^*_{\substack{u\in \Lambda^*, r\in P_{H},\\ v,w\in P_{L}:\\ w-u, v+u\in P_L }}N^\kappa\widehat{V}(u/N^{1-\kappa})  \eta_r a^*_{v+r} a^*_{w-r}   a_{w-u} a_{v+u} ,\\
		 \zeta_{3} &= -\frac{1}{2N^{2}}\sum^*_{\substack{u\in \Lambda^*, r\in P_{H},\\ v,w\in P_{L} }}N^\kappa\widehat{V}(u/N^{1-\kappa})  \eta_r  a^*_{v+r} a^*_{w-r}  a_{w-u} a_{v+u} ,\\
		 \zeta_{4} &= -\frac{1}{N^{2}}\sum^*_{\substack{u\in \Lambda^*, r\in P_{H},\\ v,w,q\in P_L:\\ v-u\in P_{L} }}N^\kappa\widehat{V}(u/N^{1-\kappa})  \eta_r a^*_{v+r} a^*_{w-r}a^*_{q-u} a_w  a_{v-u} a_{q},\\
		 \zeta_{5} &= \frac{1}{N^{2}}\sum^*_{\substack{u\in \Lambda^*, r\in P_{H},\\ v,w,q \in P_L:\\ v+r+u\in P_{L} }}N^\kappa\widehat{V}(u/N^{1-\kappa})  \eta_r a^*_{v+r+u}a^*_q a^*_{w-r}a_{q+u}   a_v a_w ,\\
		 \zeta_{6} &= -\frac{1}{N^{2}}\sum^*_{\substack{u\in \Lambda^*, r\in P_{H},\\ v,w,q\in P_{L} }}N^\kappa\widehat{V}(u/N^{1-\kappa})   \eta_r a^*_{v+r} a^*_{w-r}a^*_q a_w  a_{v-u} a_{q+u}.
		 \end{split}\end{equation}
		 
Let us briefly explain how to control the operators $ \zeta_{1}$ to $\zeta_{6}$, defined in \eqref{eq:secondcommVND2}. 

Noting that $ v+u\in P_L $ implies $ |u|\leq 2N^\beta$ whenever $v\in P_L$, the first two contributions $ \zeta_1$ and $\zeta_2$ in \eqref{eq:secondcommVND2} can be controlled by
		\begin{equation}\label{eq:secondcommVND333}\begin{split}
		&|\langle\xi, \zeta_1\xi\rangle| +|\langle\xi, \zeta_2\xi\rangle|\\
		&\leq \frac{CN^\kappa}{2N^{2}}\sum^*_{\substack{u \in\Lambda^*, v,w \in P_L: \\ v+u, w-u\in P_L,\\ r \in P_H^c\cup\{0\}}} | \eta_r| \frac{|w-u|}{|v|}\|  a_{v+u} a_{w-u} \xi\| \frac{|v|}{|w-u|}\|  a_va_w \xi \| \\
		&\hspace{0.4cm} +\frac{C N^\kappa}{2N^{2}}\sum^*_{\substack{u\in \Lambda^*, r\in P_{H},\\ v,w\in P_{L}:\\ w-u, v+u\in P_L }}  |\eta_r| \| a_{v+r} (\cN_{\geq \frac12N^\alpha}+1)^{-1/2}a_{w-r}  \xi\| \| a_{w-u} (\cN_{\geq \frac12N^\alpha}+1)^{1/2}a_{v+u} \xi\|\\
		&\leq CN^{\alpha+\beta+2\kappa-1} \langle \xi,  \cK_{\leq 2N^\beta}\xi\rangle + N^{7\beta/2+2\kappa-\alpha/2-1}\langle\xi, (\cN_{\geq \frac12N^\alpha}+1)\xi\rangle\\
		&\hspace{0.4cm}  + N^{7\beta/2+2\kappa-\alpha/2-2}\langle\xi, \cK_{\leq 2N^\beta} (\cN_{\geq \frac12N^\alpha}+1)\xi\rangle\\ 
		&\leq CN^{\alpha+\beta+2\kappa-1} \langle \xi,  \cK_{\leq 2N^\beta}\xi\rangle + CN^{2\beta+\kappa-1}\langle\xi, (\cN_{\geq \frac12N^\alpha}+1)\xi\rangle.
		\end{split}\end{equation}
By switching to position space, the term $\zeta_3$ can be bounded by
		\[\begin{split}
		& |\langle\xi, \zeta_3\xi\rangle | \leq CN^{3\beta/2+ \kappa-\alpha/2-1} \bigg(\int_{\Lambda^2}dxdy\; N^{2-2\kappa}V(N^{1-\kappa}(x-y)) \| \check{a}_x\check{a}_y\xi\|^2\bigg)^{1/2}\\
		 &\hspace{0.5cm} \times \bigg(\int_{\Lambda^2}dxdy\; N^{2-2\kappa}V(N^{1-\kappa}(x-y)) \sum_{r\in P_H, w\in P_L}\Big\| \sum_{v\in P_L} e^{ivx} a_{v+r}a_{w-r}\xi\Big\|^2\bigg)^{1/2}\\
		 &\leq CN^{3\beta/2+ \kappa-\alpha/2-1} \|\cV_N^{1/2}\xi\|  \bigg(N^{\kappa-1}\int_{\Lambda}dx\; \sum_{r\in P_H, w\in P_L}\Big\| \sum_{v\in P_L} e^{ivx} a_{v+r}a_{w-r}\xi\Big\|^2\bigg)^{1/2}\\
		 &\leq CN^{3\beta/2 + \kappa/2-1}\langle\xi, \cV_N\xi\rangle + CN^{3\beta/2 +\kappa/2} .
		 \end{split}\]
We proceed similarly as above for the terms $ \zeta_4$ and $\zeta_5$ which yields 
		\begin{equation}\label{eq:secondcommVND334}\begin{split}
		&|\langle\xi, \zeta_4\xi\rangle| +|\langle\xi, \zeta_5\xi\rangle|\\
		&\leq \frac{CN^\kappa}{N^{2}}\sum^*_{\substack{u\in \Lambda^*, r\in P_{H},\\ v,w,q\in P_L: v-u\in P_{L} }} |q|^{-1} |q-u| \| a_{v+r} (\cN_{\geq \frac12 N^\alpha}+1)^{-1/2} a_{w-r}a_{q-u}\xi\|\\
		&\hspace{5cm}\times|\eta_r| |q| |q-u|^{-1} \| a_w   (\cN_{\geq \frac12 N^\alpha}+1)^{1/2}a_{v-u} a_{q}\xi\|\\
		&\hspace{0.4cm} +  \frac{CN^\kappa}{N^{2}}\sum^*_{\substack{u\in \Lambda^*, r\in P_{H},\\ v,w,q\in P_L:\\ v+r+u\in P_{L} }} \Big( |q| |v|^{-1} \| a_{v+r+u}a_q a_{w-r}\xi\| \Big)\Big( |\eta_r|  |q|^{-1} |v|\| a_{q+u}   a_v a_w\xi\|\Big)\\
		&\leq CN^{5\beta/2+2\kappa-\alpha/2 -1} \langle\xi, \cK_{\leq 3N^\beta}(\cN_{\geq \frac12 N^\alpha}+1)\xi\rangle \\
		&\leq C N^{\beta+\kappa -1} \langle\xi, \cK_{\leq 3N^\beta}(\cN_{\geq \frac12 N^\alpha}+1)\xi\rangle,
		\end{split}\end{equation}
where, for $ \zeta_5$, we used that $ v+r+u\in P_L$ implies that $ |u|\geq \frac34 N^\alpha$, and thus $|q+u|\geq \frac12 N^\alpha $, whenever $ v,q\in P_L $, $r\in P_H$ and $N\in \NN$ sufficiently large (otherwise $ |v+r+u|\geq \frac14N^{\alpha}-N^\beta > N^\beta$ for large enough $N\in\NN$). Finally, $ \zeta_6$ can be controlled by
		\[\begin{split}
		&|\langle\xi, \zeta_6\xi\rangle|  \\
		& =  \bigg| \frac{1}{N }\sum^*_{\substack{r\in P_{H},\\ v,w,q\in P_{L} }}\int_{\Lambda^2} N^{2-2\kappa}V(N^{1-\kappa}(x-y)) e^{-ivx-iqy}  \eta_r \langle\xi, a^*_{v+r} a^*_{w-r}a^*_q a_w  \check{a}_{x} \check{a}_y\xi\rangle\bigg| \\
		&\leq CN^{\beta/2 + \kappa-\alpha/2-1/2} \| \cV_N^{1/2}\xi\| \bigg( N^{\kappa-1}\int_{\Lambda}dx\; \sum^*_{\substack{r\in P_{H},\\ w,q\in P_{L} }} |q| \Big\|\sum_{v\in P_L}e^{-ivx}a_{v+r} a_{w-r}a_q \xi\Big\|^2\bigg)^{1/2}\\
		&\leq CN^{\beta/2 +\kappa/2-1/2} \| \cV_N^{1/2}\xi\|\| \cK_L^{1/2}(\cN_{\geq \frac12 N^\alpha}+1)^{1/2}\xi\|
		\end{split}\]
In summary, the previous estimates show that 
		\begin{equation}\label{eq:secondcommVND3}\begin{split}
		\pm  \sum_{j=1}^6 \big( \zeta_{j}+\text{h.c.}\big)& \leq \delta \cV_N + CN^{3\beta/2 + \kappa/2-1}\cV_N + CN^{\alpha+\beta+2\kappa-1}  \cK_{\leq 2N^\beta} + CN^{2\beta+\kappa}  \\
		&\hspace{0.4cm}+ C(1+\delta^{-1})N^{\beta+\kappa -1}\cK_{\leq 3N^\beta}(\cN_{\geq \frac12 N^\alpha}+1) 
		\end{split}\end{equation}
for all $ \delta>0$. On the other hand, by Lemma \ref{lm:VFresgrowD}, we also know that 
		\begin{equation}\label{eq:secondcommVND4}
		\begin{split}
		&\pm \bigg[ \frac{1}{N} \sum^*_{\substack{ u\in \Lambda^*, v, w \in P_L:\\v+u, w-u\in P_L}} N^{\kappa} (\widehat{V}(./N^{1-\kappa})\ast  \eta/N)(u)\int_0^s dt\;  e^{-tD}a^*_{v+u} a^*_{w-u}a_va_w e^{tD} \\
		& \hspace{3cm}- \frac{s}{N} \sum^*_{\substack{ u\in \Lambda^*, v, w \in P_L:\\v+u, w-u\in P_L}} N^{\kappa} (\widehat{V}(./N^{1-\kappa})\ast  \eta/N)(u) a^*_{v+u} a^*_{w-u}a_va_w \bigg]\\
		&\leq  C N^{-3\beta-3\kappa}\cK  +  CN^{3\beta+\kappa-2} + CN^{\beta+\kappa-1}\cK_L(\cN_{\geq \frac12N^\alpha}+1)\, .
		\end{split}
		\end{equation}

Now, going back to \eqref{eq:secondcommVND0}, the bounds \eqref{eq:secondcommVND3} and \eqref{eq:secondcommVND4} imply that
		\begin{equation}\label{eq:secondcommVND5}\begin{split}
		e^{-sD} \text{W} e^{sD} & =  \text{W} + \frac{s}{N} \sum^*_{\substack{ u\in \Lambda^*, v, w \in P_L:\\v+u, w-u\in P_L}} N^{\kappa} (\widehat{V}(./N^{1-\kappa})\ast  \eta/N)(u) a^*_{v+u} a^*_{w-u}a_va_w\\
		&\hspace{0.4cm} + \cE_1(s)+\cE_2(s,\delta),
		\end{split}\end{equation}
where the self-adjoint operators $ \cE_1(s)$ and $\cE_2(s)$ are bounded by
		\[\begin{split}\pm \cE_1(s) &\leq C(N^{\alpha+\beta+2\kappa-1} +N^{-3\beta-3\kappa}) \cK + CN^{2\beta+\kappa},
		\end{split}\]	
as well as
		\[\begin{split}\pm \cE_2(s,\delta) &\leq  CN^{\beta+\kappa-1}\cK_L(\cN_{\geq \frac12N^\alpha}+1) + C(\delta + CN^{3\beta/2 + \kappa/2-1}) \int_0^s dt\;  e^{-tD}   \cV_N  e^{tD}\\
		&\hspace{0.4cm} + C(1+\delta^{-1})N^{\beta+\kappa -1}\int_0^s dt\;  e^{-tD}  \cK_{\leq 2N^\beta}(\cN_{\geq \frac12 N^\alpha}+1)e^{tD},
		\end{split}\]
for all $ \delta>0$ and uniformly in $ s\in [-1;1]$. Defining $  \cE_2(s) =  \cE_2(s, N^{-\beta-\kappa})$, this concludes the proof. 		
\end{proof}

Equipped with Proposition \ref{prop:secondcommVND}, we go back to \eqref{eq:MN42} and conclude that 
\begin{equation}\label{eq:MN44}
\begin{split}
\cM_N^{(4)}  &\geq  \cH_N - \frac{1}{2N}\sum_{\substack{ r\in \Lambda^*, v,w\in P_L: \\ v+r, w-r\neq 0}}  \widehat{V}(r/N^{1-\kappa})\big(a^*_{v+r}a^*_{w-r}a_va_w+\text{h.c.}\big)\\
		&\hspace{0.4cm} -\frac1{2N}\sum^*_{\substack{ u\in \Lambda^*, v, w \in P_L:\\v+u, w-u\in P_L}} N^{\kappa} (\widehat{V}(./N^{1-\kappa})\ast  \eta/N)(u) a^*_{v+u} a^*_{w-u}a_va_w \\
		&\hspace{0.4cm} - \frac18\cK - CN^{2\beta+\kappa} + \int_0^{1} ds \, \cE_2(s) + \int_0^1ds\; e^{-sD} \Big(  \cE_{[\cV_N, D]} + \cE_{ [\cK, D] }\Big) e^{sD},
		\end{split}
		\end{equation}
for all $ \alpha\geq 3\beta +2\kappa\geq 0$ with $ \alpha+\beta+2\kappa-1<0$, $0\leq \kappa<\beta$ and $N\in\NN$ large enough.

Next, let us analyse the error terms related to $ \cE_2(s)  $ and $ \cE_{[\cV_N, D]}$ further. The bounds \eqref{eq:propsecondcommVNDerrors} and \eqref{eq:commVNA23} (with $\delta = c N^{-\beta-\kappa}$ for a sufficiently small $c>0$; this choice guarantees that we can extract the term $\cV_{N,L}$ in (\ref{eq:MN45}), with an error that can be absorbed in $\cK$) imply,  together with Lemma \ref{lm:NresgrowD}, Lemma \ref{lm:KresgrowD}, Corollary \ref{cor:VNresgrowD} and Corollary \ref{cor:VNgrowD} and with the assumption (\ref{eq:condab}) on the exponents $\alpha, \beta$, that  
		\[\begin{split}
		&\int_0^{1} ds\; \Big(e^D  \cE_2(s)e^{-D}  + e^{(1-s)D}  \cE_{[\cV_N, D]} e^{-(1-s)D} \Big) \\
		&\hspace{1cm}\geq - CN^{2\beta+2\kappa-1} \cK_L(\cN_{\geq \frac12N^\alpha}+1) - \wt{C} N^{-\beta-\kappa} (\cV_N + \cV_{N,L}) - CN^{2\beta}(\cN_{\geq\frac12 N^\alpha}+1) \\
		&\hspace{1.4cm} - CN^{4\beta+2\kappa-1}(\cN_{\geq\frac12 N^\alpha}+1)^2 
		\end{split}\]
for all $N\in\NN$ large enough and for an arbitrarily small constant $\wt{C} > 0$. With Corollary \ref{cor:VNresgrowD} and \eqref{eq:MN44}, we conclude that 
		\begin{equation}\label{eq:MN45}
		\begin{split}
		\cM_N^{(4)}  &\geq  \cH_N - \frac{1}{2N}\sum_{\substack{ r\in \Lambda^*, v,w\in P_L: \\ v+r, w-r\neq 0}}  \widehat{V}(r/N^{1-\kappa})\big(a^*_{v+r}a^*_{w-r}a_va_w+\text{h.c.}\big)\\
		&\hspace{0.4cm} -\frac1{2N}\sum^*_{\substack{ u\in \Lambda^*, v, w \in P_L:\\v+u, w-u\in P_L}} N^{\kappa} (\widehat{V}(./N^{1-\kappa})\ast  \eta/N)(u) a^*_{v+u} a^*_{w-u}a_va_w \\
		&\hspace{0.4cm} - \frac14\cK - CN^{2\beta+\kappa}- \wt{C} N^{-\beta - \kappa}\cV_{N,L}  + \int_0^1ds\; e^{-sD}  \cE_{ [\cK, D] }  e^{sD} + \cE_{\cM_N}^{(41)},
		\end{split}
		\end{equation}
where the error $ \cE_{\cM_N}^{(41)}$ is such that 
		\begin{equation*}
		\begin{split}
		e^{D} \cE_{\cM_N}^{(41)} e^{-D}  &\geq - CN^{2\beta+2\kappa-1}\cK_L(\cN_{\geq \frac12N^\alpha}+1) - CN^{-\beta-\kappa} \cV_N \\
		&\hspace{0.4cm} - CN^{2\beta} \cN_{\geq\frac12 N^\alpha} - CN^{4\beta+2\kappa-1} \cN_{\geq\frac12 N^\alpha}^2   
		\end{split}
		\end{equation*}
Applying Lemma \ref{lm:NresgrowA}, \ref{lm:KresgrowA} and Corollary \ref{cor:VNgrowA}, we deduce with the operator inequality $ \cN_{\geq \frac12 N^\alpha}\leq 4N^{-2\alpha}\cK$ that 		\begin{equation}\label{eq:EMN41bnd}
		\begin{split}
		e^A e^{D} \cE_{\cM_N}^{(41)} e^{-D} e^{-A} &\geq - CN^{-\beta} \cK - CN^{-\beta-\kappa} \cV_N - C N^{2\beta+2\kappa-1} \\
		&\hspace{0.4cm} - CN^{2\beta+2\kappa-1}\cK \cN_{\geq \frac12N^\alpha}  
		\end{split}
		\end{equation}
for all $N\in\NN$ large enough. 

Now, we switch to the contribution containing the operator $  \cE_{ [\cK, D] }$ on the r.h.s. of the lower bound \eqref{eq:MN45}. We recall once again that 
		\[ \int_0^1ds\; e^{-sD} \cE_{[\cK,D]} e^{sD} = \int_0^1ds\;   \sum_{j=1}^3 e^{-sD}\big(\Sigma_j+\text{h.c.}\big)e^{sD},\]
where the operators $ \Sigma_1, \Sigma_2$ and $\Sigma_3$ were defined in \eqref{eq:commKD2}. It turns out that $ \Sigma_2$ and $\Sigma_3$ are negligible errors while $ \Sigma_1$ still contains an important contribution of leading order. We start with the analysis of the contribution related to $ \Sigma_1$.
\begin{prop}\label{prop:secondcommKD} Assume the exponents $\alpha, \beta$ satisfy (\ref{eq:condab}). Then we have that 
		\begin{equation}\label{eq:propsecondcommKD}
		\begin{split}
		&\frac{1}{2N}\sum_{\substack{ u\in P_H^c\cup\{0\}, p, q\in P_L: \\ p+u, q-u\neq 0}}  N^\kappa\big(\widehat{V}(/N^{1-\kappa}) \ast \widehat{f}_N\big)(u) e^{-sD}\big(a^*_{p+u}a^*_{q-u}a_p a_q+\emph{h.c.}\big)e^{sD} \\
		&= \frac{1}{2N}\sum_{\substack{ u\in P_H^c\cup \{0\}, p, q\in P_L: \\ p+u, q-u\neq 0}} N^\kappa \big(\widehat{V}(/N^{1-\kappa})\ast \widehat{f}_N\big)(u) \big(a^*_{p+u}a^*_{q-u}a_p a_q+\emph{h.c.}\big)  + \cE_3(s) 
		\end{split}
		\end{equation}
and there exists a constant $C>0$ such that 		
\begin{equation}\label{eq:propsecondcommKDerror}\begin{split}\pm &e^{A}e^D\cE_3(s) e^{-D}e^{-A} \\
		&\leq  C N^{\alpha + \beta+ 2 \kappa -1}  \cK + C N^{\alpha + \beta+ 2 \kappa -1}  \cK \cN_{\geq \frac12 N^\alpha} +CN^{4\beta +2\kappa} + C N^{\alpha + 3\beta+ 2 \kappa -1}
		\end{split}\end{equation}
for all $ s\in [-1;1]$ and for all $N\in\NN$ sufficiently large.
\end{prop}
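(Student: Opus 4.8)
The plan is to follow the same scheme as the proof of Proposition~\ref{prop:secondcommVND}; the only structural difference is that the quartic kernel $N^\kappa\widehat V(u/N^{1-\kappa})$ used there (summed over all $u\in\Lambda^*$) is here replaced by $N^\kappa(\widehat V(\cdot/N^{1-\kappa})\ast\widehat f_N)(u)$, with the $u$-sum restricted to $P_H^c\cup\{0\}$. For this kernel one still has the pointwise bound $|N^\kappa(\widehat V(\cdot/N^{1-\kappa})\ast\widehat f_N)(u)|\le CN^\kappa$ already used in the proof of Proposition~\ref{prop:commKD}. Writing $\text{W}'$ for the operator on the left-hand side of \eqref{eq:propsecondcommKD}, I would begin from the Duhamel identity
\[ e^{-sD}\,\text{W}'\,e^{sD}-\text{W}' = \int_0^s dt\; e^{-tD}[\text{W}',D_1]\,e^{tD} + \text{h.c.}, \]
so that it suffices to compute $[\text{W}',D_1]$ and then to bound it after conjugation by $e^{tD}$ and by $e^A$.

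Since $\text{W}'$ is a quartic operator of the same form $a^*_{p+u}a^*_{q-u}a_pa_q$ (and its adjoint) as the operator $\text{W}$ of Proposition~\ref{prop:secondcommVND}, the algebraic identities \eqref{eq:secondcommVND01}--\eqref{eq:secondcommVND2} apply verbatim; after normal ordering one obtains a main term of the form $\frac1N\sum^*_{v+u,w-u\in P_L}N^\kappa\big((\widehat V(\cdot/N^{1-\kappa})\ast\widehat f_N)\ast\eta/N\big)(u)\,a^*_{v+u}a^*_{w-u}a_va_w$ (in which the first convolution factor is restricted to $P_H^c\cup\{0\}$), together with six error operators $\zeta_1',\dots,\zeta_6'$ of exactly the same structure as $\zeta_1,\dots,\zeta_6$ in \eqref{eq:secondcommVND2}. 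The first new point, compared with Proposition~\ref{prop:secondcommVND}, is that this main term is now itself negligible: one convolution factor is supported on $|s|<N^\alpha$ and the other, $\eta_H$, on $|r|\ge N^\alpha$, while $v+u,w-u\in P_L$ forces $|u|\le 2N^\beta$, so the $r$-sum is confined to the thin shell $N^\alpha\le|r|\le N^\alpha+CN^\beta$, on which $|\eta_r|\le CN^{\kappa-2\alpha}$ by \eqref{eq:modetap}; a Cauchy--Schwarz estimate then bounds this contribution by $CN^{4\beta+2\kappa}$ after the conjugations, and it can be absorbed into $\cE_3(s)$. The second point is that, because the momenta entering $\text{W}'$ are all confined to a ball of radius $\lesssim N^\alpha$ and its kernel is merely $O(N^\kappa)$ rather than a genuine two-body interaction, no $\cV_N$ is needed to close the estimates of the $\zeta_j'$: estimating them exactly as $\zeta_1,\dots,\zeta_6$ in the proof of Proposition~\ref{prop:secondcommVND} --- switching to position space where convenient and using $\|\eta_H\|\le CN^{\kappa-\alpha/2}$, $|\eta_r|\le CN^\kappa/|r|^2$, $\|\check\eta\|_\infty\le CN$ and the above kernel bound --- one arrives at an operator inequality of the schematic form
\[ \pm\sum_{j=1}^6(\zeta_j'+\text{h.c.}) \le \delta\,\cK + \delta^{-1}C\,N^{\beta+\kappa-1}\cK_L(\cN_{\ge \frac12 N^\alpha}+1) + C\,N^{\alpha+\beta+2\kappa-1}\cK(\cN_{\ge \frac12 N^\alpha}+1) + C\,N^{4\beta+2\kappa}. \]

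Finally I would insert these bounds into the Duhamel integral, use Lemma~\ref{lm:NresgrowD}, Lemma~\ref{lm:KresgrowD}, Corollary~\ref{cor:VNresgrowD} and Corollary~\ref{cor:VNgrowD} (and Lemma~\ref{lm:VFresgrowD} for the convolution-type main term) to remove the $e^{tD}$-conjugations at the cost of the usual errors, integrate over $s\in[0,1]$, and then conjugate by $e^A$ using Lemma~\ref{lm:NresgrowA} and Lemma~\ref{lm:KresgrowA} together with the operator inequality $\cN_{\ge\frac12 N^\alpha}\le 4N^{-2\alpha}\cK$, so as to convert every surviving $\cK_L(\cN_{\ge\frac12N^\alpha}+1)$-, $\cK(\cN_{\ge\frac12N^\alpha}+1)$- and $(\cN_{\ge\frac12N^\alpha}+1)^2$-type term into one of the four quantities on the right-hand side of \eqref{eq:propsecondcommKDerror}. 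The main obstacle is the exponent bookkeeping: one must check, using only the constraints \eqref{eq:condab} (in particular $\alpha>3\beta+2\kappa$, $\alpha+\beta+2\kappa<1$, $\beta>3\kappa$ and $\beta<1/2$), that every error produced --- both before and after the $e^A$-conjugation of the $\cK\,\cN_{\ge\frac12N^\alpha}$-type terms, and after the substitution $\cN_{\ge\frac12N^\alpha}\mapsto N^{-2\alpha}\cK$ --- is dominated by $CN^{\alpha+\beta+2\kappa-1}\cK$, $CN^{\alpha+\beta+2\kappa-1}\cK\,\cN_{\ge\frac12N^\alpha}$, $CN^{4\beta+2\kappa}$ or $CN^{\alpha+3\beta+2\kappa-1}$, and in particular that all potential-energy contributions are genuinely reabsorbed, which is exactly where the restriction of the $u$-sum in $\text{W}'$ to $P_H^c\cup\{0\}$ is used.
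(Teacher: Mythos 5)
Your proposal follows essentially the same route as the paper's proof: Duhamel formula, the same commutator algebra as in Proposition \ref{prop:secondcommVND} producing a convolution-type "main" term plus six error operators, the observation that the main term is now itself negligible because the support constraints confine $r$ to a thin shell around $|r|=N^\alpha$ (this is exactly how the paper treats $\Gamma_1$, and your $CN^{4\beta+2\kappa}$ bound for it via $\cN_+\leq N$ is correct), the observation that no $\cV_N$ is needed because the $u$-sum is now finite (this is exactly why the paper bounds the $\zeta_3$-analogue $\Gamma_3$ purely in momentum space by $CN^{\alpha+3\beta+2\kappa-1}(\cN_{\geq \frac12 N^\alpha}+1)$), and the final conjugation/bookkeeping step via Lemmas \ref{lm:NresgrowA}, \ref{lm:KresgrowA}, \ref{lm:NresgrowD}, \ref{lm:KresgrowD} and $\cN_{\geq \frac12 N^\alpha}\leq 4N^{-2\alpha}\cK$.

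The one point that does not close as written is the $\delta\,\cK + \delta^{-1}CN^{\beta+\kappa-1}\cK_L(\cN_{\geq \frac12 N^\alpha}+1)$ piece of your schematic intermediate inequality. To absorb $\delta\,\cK$ into the target term $CN^{\alpha+\beta+2\kappa-1}\cK$ you need $\delta\leq N^{\alpha+\beta+2\kappa-1}$, while to absorb $\delta^{-1}N^{\beta+\kappa-1}\cK_L(\cN_{\geq \frac12 N^\alpha}+1)$ into $CN^{\alpha+\beta+2\kappa-1}\cK(\cN_{\geq \frac12 N^\alpha}+1)$ you need $\delta\geq N^{-\alpha-\kappa}$; these are compatible only if $2\alpha+\beta+3\kappa\geq 1$, which is false under (\ref{eq:condab}) (with the eventual choice $\alpha=14\kappa+4\eps$, $\beta=4\kappa+\eps$ one has $2\alpha+\beta+3\kappa=35\kappa+9\eps<1$). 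The resolution, and what the paper actually does, is that no free $\delta$ is needed: the $\zeta_6$-analogue $\Gamma_6$ is estimated in momentum space with \emph{symmetric} weights ($|q||w|^{-1}$ against $|w||q|^{-1}$) so that a factor $\cK_L^{1/2}(\cN_{\geq\frac12N^\alpha}+1)^{1/2}$ appears on both sides of the Cauchy--Schwarz, giving directly $CN^{\alpha+\beta+2\kappa-1}\cK_L(\cN_{\geq\frac12N^\alpha}+1)$, and the $\zeta_4,\zeta_5$-analogues give $CN^{\beta+\kappa-1}\cK_{\leq 2N^\beta}(\cN_{\geq\frac12N^\alpha}+1)$, both of which are already dominated by the admissible terms since $N^{\beta+\kappa-1}\leq N^{\alpha+\beta+2\kappa-1}$. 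With that replacement your argument coincides with the paper's.
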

\begin{proof} We proceed as in Proposition \ref{prop:secondcommVND} and recall $ \Sigma_1:\cF_+^{\leq N}\to \cF_+^{\leq N}$ to be
		\[\begin{split}
		&\Sigma_{1} =  \frac{1}{2N}\sum_{\substack{ u\in P_H^c\cup\{0\}, p, q\in P_L: \\ p+u, q-u\neq 0}} N^\kappa \big(\widehat{V}(/N^{1-\kappa}) \ast \widehat{f}_N\big)(u) \big(a^*_{p+u}a^*_{q-u}a_p a_q+\text{h.c.}\big).
		\end{split}\]
We then have 
		\begin{equation}\label{eq:secondcommKD1}\begin{split}
		&e^{-sD}\Sigma_{1} e^{sD} - \Sigma_{1} = \int_0^sdt \; e^{-tD} [ \Sigma_{1}, D_1] e^{tD} + \text{h.c.}\\ 
		\end{split}\end{equation}
Similarly as in \eqref{eq:secondcommVND1} and \eqref{eq:secondcommVND2}, we find that 
		\begin{equation} \label{eq:secondcommKD2}[\Sigma_{1}, D_1] +\text{h.c.} = \sum_{i=1}^8 (\Gamma_i +\text{h.c.}),\end{equation}
where
		\begin{equation*}
		\begin{split}
		\Gamma_{1} &= \frac{1}{N^2} \sum^*_{\substack{ u\in P_H^c\cup\{0\}, r\in P_H, v, w \in P_L:\\v+u+r, w-u-r\in P_L}} N^{\kappa}  \big(\widehat{V}( . /N^{1-\kappa}) \ast \widehat{f}_N\big)(u)  \eta_r a^*_{v+u+r} a^*_{w-u-r}a_va_w,\\
		\Gamma_{2} & = -\frac{1}{2N^{2}}\sum^*_{\substack{u\in P_H^c\cup\{0\}, r\in P_{H},\\ v,w\in P_{L}:\\ w-u, v+u\in P_L }}N^\kappa \big(\widehat{V}(. /N^{1-\kappa}) \ast \widehat{f}_N\big)(u) \eta_r a^*_{v+r} a^*_{w-r}   a_{w-u} a_{v+u} ,\\
		 \Gamma_{3} &= -\frac{1}{2N^{2}}\sum^*_{\substack{u\in P_H^c\cup\{0\}, r\in P_{H},\\ v,w\in P_{L} }}N^\kappa \big(\widehat{V}(. /N^{1-\kappa}) \ast \widehat{f}_N\big)(u)\eta_r  a^*_{v+r} a^*_{w-r}  a_{w-u} a_{v+u} ,\\
		 \Gamma_{4} &= -\frac{1}{N^{2}}\sum^*_{\substack{u\in P_H^c\cup\{0\}, r\in P_{H},\\ v,w,q\in P_L:\\ v-u\in P_{L} }}N^\kappa \big(\widehat{V}(. /N^{1-\kappa}) \ast \widehat{f}_N\big)(u) \eta_r a^*_{v+r} a^*_{w-r}a^*_{q-u} a_w  a_{v-u} a_{q},\\
		 \Gamma_{5} &= \frac{1}{N^{2}}\sum^*_{\substack{u\in P_H^c\cup\{0\}, r\in P_{H},\\ v,w,q \in P_L:\\ v+r+u\in P_{L} }}N^\kappa \big(\widehat{V}(. /N^{1-\kappa}) \ast \widehat{f}_N\big)(u) \eta_r a^*_{v+r+u}a^*_q a^*_{w-r}a_{q+u}   a_v a_w ,\\
		 \Gamma_{6} &= -\frac{1}{N^{2}}\sum^*_{\substack{u\in P_H^c\cup\{0\}, r\in P_{H},\\ v,w,q\in P_{L} }}N^\kappa \big(\widehat{V}(. /N^{1-\kappa}) \ast \widehat{f}_N\big)(u) \eta_r a^*_{v+r} a^*_{w-r}a^*_q a_w  a_{v-u} a_{q+u}.
		 \end{split}
		 \end{equation*}
The operators $ \Gamma_1$ to $\Gamma_6$ can be bounded similarly as in the proof of Proposition \ref{prop:secondcommVND}. Let us start with $ \Gamma_1$. Applying as usual Cauchy-Schwarz implies that
		\[\begin{split} 
		&|\langle\xi, \Gamma_1\xi\rangle| \leq  \frac{CN^\kappa}{N^2} \sum^*_{\substack{ u\in P_H^c\cup\{0\}, r\in P_H, v, w \in P_L:\\ v+u+r, w-u-r\in P_L}} \bigg( |v|^{-1}   \| a_{v+u+r} a_{w-u-r}\xi\|\bigg) \bigg(  |\eta_r||v| \| a_va_w\xi\|\bigg) \\
		&\leq  CN^{\alpha/2+ 5\beta/2+ 2\kappa-1/2} \| \xi \| \|\cK_L^{1/2}\xi\|\leq  CN^{\alpha+ \beta+ 2\kappa-1}\langle\xi, \cK_L\xi\rangle + CN^{4\beta +2\kappa} \| \xi \|^2 
		\end{split}\]
where we used that $v+u+r \in P_L $ implies $ |u|\geq N^{\alpha}-3N^\beta$ and $ |r|\leq N^{\alpha}+3N^\beta$ whenever $ u\in P_H^c, r\in P_H$ and $ v\in P_L$ (otherwise $ |u+r+v| \geq |r| - |u|-N^\beta\geq 2N^\beta > N^\beta$ if either $ |u|\leq N^{\alpha}-3N^\beta$ or $ |r|\geq N^{\alpha}+3N^\beta$, in contradiction to $ u+r+v\in P_L$) for $N\in\NN$ sufficiently large. Notice in addition that $ \sum_{ N^\alpha-3N^\beta \leq |u|\leq N^\alpha} \leq C N^{2\alpha+\beta}$.

The term $ \Gamma_2$ can be estimated exactly as the term $ \zeta_2$ in \eqref{eq:secondcommVND333}, that is
		\[ |\langle\xi, \Gamma_2\xi\rangle| \leq CN^{\alpha+\beta+2\kappa-1} \langle \xi,  \cK_{\leq 2N^\beta}\xi\rangle + CN^{2\beta+\kappa-1}\langle\xi, (\cN_{\geq \frac12N^\alpha}+1)\xi\rangle.\]
The contribution $\Gamma_3$ can be controlled by 
		\[\begin{split}
		|\langle\xi, \Gamma_3\xi\rangle| &\leq  \frac{CN^\kappa}{2N^{2}}\!\!\!\!\sum^*_{\substack{u\in P_H^c\cup\{0\}, r\in P_{H},\\ v,w\in P_{L} }}\!\!\!\!|\eta_r| \|   a_{v+r}(\cN_{\geq \frac12N^\alpha}+1)^{-1/2} a_{w-r}\xi\| \|   a_{w-u} (\cN_{\geq \frac12N^\alpha}+1)^{1/2}a_{v+u} \xi\| \\
		&\leq CN^{\alpha + 3\beta + 2\kappa-1}   \langle\xi, (\cN_{\geq \frac12N^\alpha}+1)\xi\rangle.
		\end{split}\]
The terms $ \Gamma_4$ and $\Gamma_5$ can be bounded exactly as in \eqref{eq:secondcommVND334}. We find
		\[|\langle\xi, \Gamma_4\xi\rangle| +|\langle\xi, \Gamma_5\xi\rangle|  \leq C N^{\beta+\kappa -1} \langle\xi, \cK_{\leq 2N^\beta}(\cN_{\geq \frac12 N^\alpha}+1)\xi\rangle,\]
Finally, the last contribution $ \Gamma_6$ is bounded by
		\[\begin{split}
		|\langle\xi, \Gamma_6\xi\rangle| &\leq \frac{CN^\kappa}{N^{2}}\sum^*_{\substack{u\in P_H^c\cup\{0\}, r\in P_{H},\\ v,w,q\in P_{L} }} \Big( |q||w|^{-1} \|  a_{v+r} (\cN_{\geq \frac12N^\alpha}+1)^{-1/2}a_{w-r}a _q  \xi\| \Big)\\
		&\hspace{4cm} \times \Big( |\eta_r||w| |q|^{-1}\| a_{v-u} (\cN_{\geq \frac12N^\alpha}+1)^{1/2}  a_wa_{q+u} \xi \|\Big)\\
		&\leq C N^{\alpha + \beta+ 2 \kappa -1}  \langle\xi, \cK_L(\cN_{\geq \frac12 N^\alpha}+1)\xi\rangle.
		\end{split}\]
In conclusion, the above estimates imply that
		\[\begin{split} 
		\pm  \sum_{i=1}^6 \big( \Gamma_i+\text{h.c.}\big)&\leq  C N^{\alpha + \beta+ 2 \kappa -1}  \cK_{\leq 2N^\beta}+ C N^{\alpha + \beta+ 2 \kappa -1}  \cK_{\leq 2N^\beta}(\cN_{\geq \frac12 N^\alpha}+1)\\
		& \hspace{0.4cm} + CN^{\alpha + 3\beta + 2\kappa-1}   (\cN_{\geq \frac12N^\alpha}+1)+CN^{4\beta +2\kappa}
		\end{split}\]
for all $ \alpha>3\beta +2\kappa\geq 0$ and for all $N\in\NN$ sufficiently large. Combining this estimate with the identites \eqref{eq:secondcommKD1} and  \eqref{eq:secondcommKD2}, and applying Lemma \ref{lm:NresgrowA}, \ref{lm:KresgrowA}, Lemma \ref{lm:NresgrowD} as well as Lemma \ref{lm:KresgrowD} together with the operator inequality $ \cN_{\geq \frac12 N^\alpha}\leq 4 N^{-2\alpha }\cK$ proves the proposition.
\end{proof}
Applying Proposition \ref{prop:secondcommKD} to the lower bound \eqref{eq:MN45} and defining $ \cE_{\cM_N}^{(42)} = \int_0^1 ds\; \cE_3(s)$ with $ \cE(s)$ from Proposition \ref{prop:secondcommKD}, we conclude that 
		\begin{equation}\label{eq:MN46}
		\begin{split}
		\cM_N^{(4)}  &\geq  \cH_N - \frac{1}{2N}\sum_{\substack{ r\in \Lambda^*, v,w\in P_L: \\ v+r, w-r\neq 0}}  \widehat{V}(r/N^{1-\kappa})\big(a^*_{v+r}a^*_{w-r}a_va_w+\text{h.c.}\big)\\
		&\hspace{0.4cm} -\frac1{2N}\sum^*_{\substack{ u\in \Lambda^*, v, w \in P_L:\\v+u, w-u\in P_L}} N^{\kappa} (\widehat{V}(./N^{1-\kappa})\ast  \eta/N)(u) a^*_{v+u} a^*_{w-u}a_va_w \\
		&\hspace{0.4cm} +\frac{1}{2N}\sum_{\substack{ u\in P_H^c\cup \{0\}, p, q\in P_L: \\ p+u, q-u\neq 0}} N^\kappa \big(\widehat{V}(/N^{1-\kappa})\ast \widehat{f}_N\big)(u) \big(a^*_{p+u}a^*_{q-u}a_p a_q+\emph{h.c.}\big)\\
		&\hspace{0.4cm} - \frac14\cK - C N^{-\beta-\kappa}\cV_{N,L}   + \cE_{\cM_N}^{(41)}+ \cE_{\cM_N}^{(42)} + \int_0^1ds\; e^{-sD}  \big( \Sigma_2+\Sigma_3+\text{h.c.}\big) e^{sD},
		\end{split}
		\end{equation}
where $ \cE_{\cM_N}^{(41)}$ satisfies the lower bound \eqref{eq:EMN41bnd}, $\cE_{\cM_N}^{(42)}$ satisfies the bound \eqref{eq:propsecondcommKDerror} and where the operators $\Sigma_2 $ and $\Sigma_3$ were defined in \eqref{eq:commKD2}. 

Let us finally estimate the size of the error in the last line of \eqref{eq:MN46}, involving the two operators $\Sigma_2 $ and $\Sigma_3$. Using the estimate \eqref{eq:commKD5} together with Lemma \ref{lm:NresgrowA}, \ref{lm:KresgrowA}, Lemma \ref{lm:NresgrowD} and Lemma \ref{lm:KresgrowD}, we find for $ \cE_{\cM_N}^{(43)} = \int_0^1ds\; e^{-sD}  \big( \Sigma_2+\text{h.c.}\big) e^{sD}$ 
		\begin{equation}\label{eq:EMN43bnd}
		\begin{split}
		e^{A}e^{D}\cE_{\cM_N}^{(43)}e^{-D}e^{-A} \geq  - CN^{-\beta -1} \cK \cN_{\geq \frac12 N^\alpha}-CN^{-5\beta-4\kappa}\cK  -CN^\beta.
		\end{split}
		\end{equation}
Finally, consider the operator $ \cE_{\cM_N}^{(44)} = \int_0^1ds\; e^{-sD}  \big( \Sigma_3+\text{h.c.}\big) e^{sD}$, with $\Sigma_3$ defined in \eqref{eq:commKD2}. Let $ m_0\in\mathbb{R}$ be such that $ m_0\beta =\alpha$ (in particular, $ \lfloor m_0 \rfloor \geq 3$). Here, we use the bound \eqref{eq:commKD6} to find first of all that 
		\[\begin{split}
		 \cE_{\cM_N}^{(44)} \geq - \int_0^1 ds\; \| \cK^{1/2}e^{sD} \xi\| \Big( N^{-1/2}\big \|\cK_L^{1/2}(\cN_{\geq \frac12 N^\alpha}+1)^{1/2}\xi \big\| +  N^{\beta-1}\big \| (\cN_{\geq \frac12 N^\alpha}+1)^{3/2}\xi \big\|\Big)
		\end{split}\]
for any $ \xi \in \cF_+^{\leq N}$ with $ \|\xi\|=1$. Notice that we applied once again Lemma \ref{lm:NresgrowD} and Lemma \ref{lm:KresgrowD} in the second factor. With Corollary \ref{cor:KgrowD}, the first factor is bounded by
		\[\begin{split}
		 &\cE_{\cM_N}^{(44)} \\
		 &\geq - C \bigg( \|\cK^{1/2}\xi\|  + \| \cV_N^{1/2}\xi\|  + \| \cV_{N,L}^{1/2}\xi\| + N^{5\beta/8+\kappa/2}\| \cK_{\leq N^{3\beta/2}}^{1/2} \xi\| \\
		 &\hspace{1.2cm}+N^{-1/2}\big \|\cK_L^{1/2}(\cN_{\geq \frac12 N^\alpha}+1)^{1/2}\xi \big\| + N^{3\beta/2+\kappa/2} \\
		 &\hspace{1.2cm}  +  \sum_{j=3}^{2\lfloor m_0\rfloor -1}N^{ j\beta/4 + 3\beta/4+\kappa-1/2}\Big[ \big \|\cK_L^{1/2}(\cN_{\geq \frac12 N^{j\beta/2}}+1)^{1/2}\xi\big \| \\
		 &\hspace{6.1cm} + N^\beta\big \|(\cN_{\geq \frac12 N^{\alpha}}+1)^{1/2}(\cN_{\geq \frac12 N^{j\beta/2}}+1)^{1/2}\xi\big \|\Big] \\
		  &\hspace{1.2cm}  +  N^{ \alpha/2 +\beta/2 +\kappa-1/2}\Big[ \big \|\cK_L^{1/2}(\cN_{\geq \frac12 N^{\lfloor m_0 \rfloor \beta}} +1)^{1/2}\xi\big \| \\
		  &\hspace{4.5cm} + N^\beta \big \|(\cN_{\geq \frac12 N^{\alpha}}+1)^{1/2}(\cN_{\geq \frac12 N^{\lfloor m_0 \rfloor \beta}} +1)^{1/2}\xi\big \| \Big] \bigg) \\
		 &\hspace{1cm}\times\bigg( N^{-1/2}\big \|\cK_L^{1/2}(\cN_{\geq \frac12 N^\alpha}+1)^{1/2}\xi \big\| +  N^{\beta-1}\big \| (\cN_{\geq \frac12 N^\alpha}+1)^{3/2}\xi \big\|\bigg)
		\end{split}\]
for all exponents $\alpha, \beta$ satisfying (\ref{eq:condab}) and $N\in\NN$ sufficiently large. It follows that 		
\begin{equation}\label{eq:MN47}\begin{split}
\cE_{\cM_N}^{(44)} \geq \cE_{\cM_N}^{(441)} +\cE_{\cM_N}^{(442)} +\cE_{\cM_N}^{(443)}, 
\end{split}\end{equation}
where
\begin{equation}\label{eq:EMN441442bnd}
\begin{split}
 \cE_{\cM_N}^{(441)} = - \frac18 \cK - \wt{C} N^{-\alpha} \cV_{N,L}-CN^{3\beta+\kappa}, \hspace{1cm} \cE_{\cM_N}^{(442)} = N^{-\alpha}\cV_N
\end{split}
\end{equation}
with an arbitrarily small constant $\wt{C} >0$ and where, after an additional application of Lemmas \ref{lm:NresgrowA}, \ref{lm:KresgrowA}, \ref{lm:NresgrowD} and \ref{lm:KresgrowD} together with the operator bound $\cN_{\geq \Theta}\leq \Theta^{-2}\cK $, the error $ \cE_{\cM_N}^{(443)}$ is such that 
		\begin{equation}\label{eq:EMN443bnd}
		\begin{split}
		&e^A e^{D} \cE_{\cM_N}^{(443)}e^{-D}e^{-A}\\
		&\geq -C N^{\alpha+\beta+2\kappa-1}\cK-C N^{\alpha-1}\cK \cN_{\geq \frac12N^\alpha} -  C N^{\alpha+3\beta+2\kappa-1}\\
		&\hspace{0.5cm}  - C \sum_{j=3}^{2\lfloor m_0\rfloor -1}N^{ j\beta/2 + \beta/2+2\kappa-1}\cK \cN_{\geq \frac12 N^{j\beta/2}}-C N^{ \alpha +\beta +2\kappa-1}\cK\cN_{\geq \frac12 N^{\lfloor m_0 \rfloor \beta}}
		\end{split}
		\end{equation}
for all exponents $\alpha, \beta$ satisfying (\ref{eq:condab}) and $N\in\NN$ sufficiently large.

Choosing $\widetilde C>0$ sufficiently large (but independently of $N\in\NN$) and arguing as right before \eqref{eq:MN45}, we deduce that 		
\begin{equation}\label{eq:EMN442bnd}
		\begin{split}
		&e^A \Big( \widetilde C N^{-\alpha} e^D\cV_{N,L } e^{-D}+ e^D \cE_{\cM_N}^{(442)} e^{-D}\Big)e^{-A} \\
		&\hspace{3cm}\geq- C N^{-\alpha}\cV_N - CN^{-3\beta-\kappa}\cN_+- CN^{-2\beta-\kappa-1}\cK\cN_{\geq\frac12N^\alpha}
		\end{split}
		\end{equation}
for all $\alpha, \beta$ satisfying (\ref{eq:condab}) and $N\in\NN$ sufficiently large. This follows through another application of Corollary \ref{cor:VNgrowA}, Corollary \ref{cor:VNresgrowD} and Corollary \ref{cor:VNgrowD}, together with Lemma \ref{lm:NresgrowA}, Lemma \ref{lm:KresgrowA}, Lemma \ref{lm:NresgrowD} and Lemma \ref{lm:KresgrowD}. We summarize these bounds in the following corollary.

\begin{cor}\label{cor:MN4} Let $ m_0\in\mathbb{R}$ be such that $ m_0\beta =\alpha$ and let $ \cM_N^{(4)}$ be defined as in \eqref{eq:MN0to4}. For every $\wt{C} > 0$, there exists a constant $C>0$ such that  
		\begin{equation}\label{eq:corMN4}
		\begin{split}
		\cM_N^{(4)}  &\geq \frac12\cK +\cV_N - \frac{1}{2N}\sum_{\substack{ r\in \Lambda^*, v,w\in P_L: \\ v+r, w-r\neq 0}}  \widehat{V}(r/N^{1-\kappa})\big(a^*_{v+r}a^*_{w-r}a_va_w+\text{h.c.}\big)\\
		&\hspace{0.4cm} -\frac1{2N}\sum^*_{\substack{ u\in \Lambda^*, v, w \in P_L:\\v+u, w-u\in P_L}} N^{\kappa} (\widehat{V}(./N^{1-\kappa})\ast  \eta/N)(u) a^*_{v+u} a^*_{w-u}a_va_w \\
		&\hspace{0.4cm} +\frac{1}{2N}\sum_{\substack{ u\in P_H^c\cup \{0\}, p, q\in P_L: \\ p+u, q-u\neq 0}} N^\kappa \big(\widehat{V}(/N^{1-\kappa})\ast \widehat{f}_N\big)(u) \big(a^*_{p+u}a^*_{q-u}a_p a_q+\emph{h.c.}\big)\\
		&\hspace{0.4cm} -\wt{C} N^{-\beta-\kappa} \cV_{N,L}   + \cE_{\cM_N}^{(4)}
		\end{split}
		\end{equation}
where  
		\begin{equation}\label{cor:error}
		\begin{split}
		&e^Ae^D  \cE_{\cM_N}^{(4)} e^{-D}e^{-A}\\
		& \geq - C N^{-\beta}\cK - C N^{-\beta-\kappa} \cV_N  -C N^{ \alpha +\beta +2\kappa-1}\cK\cN_{\geq \frac12 N^{\lfloor m_0 \rfloor \beta}} \\
		&\hspace{0.5cm}  - C \sum_{j=3}^{2\lfloor m_0\rfloor -1}N^{ j\beta/2 + \beta/2+2\kappa-1}\cK \cN_{\geq \frac12 N^{j\beta/2}}- CN^{2\beta+\kappa} 
		\end{split}
		\end{equation}
for all  exponents $\alpha,\beta$ satisfying (\ref{eq:condab}) and for all $N\in\NN$ sufficiently large. 
\end{cor}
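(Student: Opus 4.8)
The statement merely collects the estimates assembled in this subsection, so the plan is mainly careful bookkeeping. I would start from the lower bound \eqref{eq:MN46}, which is reached by unfolding $\cM_N^{(4)}=e^{-D}\cH_N e^{D}$ through the Duhamel formula \eqref{eq:MN41}, replacing $[\cK,D_1]$ and $[\cV_N,D_1]$ by the expansions of Prop.~\ref{prop:commKD} and Prop.~\ref{prop:commVND}, and then invoking Prop.~\ref{prop:secondcommVND} and Prop.~\ref{prop:secondcommKD} to strip the conjugations $e^{-sD}(\,\cdot\,)e^{sD}$ off the two quartic terms built from $\widehat V(\cdot/N^{1-\kappa})$ and $\widehat V(\cdot/N^{1-\kappa})\ast\widehat f_N$ (the $s$-integration of the leading remainder in Prop.~\ref{prop:secondcommVND} being exactly what produces the $\widehat V(\cdot/N^{1-\kappa})\ast\eta/N$ sum). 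In \eqref{eq:MN46} the three low-momentum quartic sums are already those displayed in \eqref{eq:corMN4}, so they are carried over verbatim; what remains are the error operators $\cE_{\cM_N}^{(41)}$, $\cE_{\cM_N}^{(42)}=\int_0^1\cE_3(s)\,ds$, $\cE_{\cM_N}^{(43)}=\int_0^1 e^{-sD}(\Sigma_2+\text{h.c.})e^{sD}$ and $\cE_{\cM_N}^{(44)}=\int_0^1 e^{-sD}(\Sigma_3+\text{h.c.})e^{sD}$, together with the kinetic loss $-\tfrac14\cK$ and a scalar term $-CN^{2\beta+\kappa}$.

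Next I would insert the decomposition $\cE_{\cM_N}^{(44)}\ge\cE_{\cM_N}^{(441)}+\cE_{\cM_N}^{(442)}+\cE_{\cM_N}^{(443)}$ from \eqref{eq:MN47}, \eqref{eq:EMN441442bnd}, \eqref{eq:EMN443bnd}, and substitute $\cH_N=\cK+\cV_N$. The only non-mechanical point is the accounting of kinetic energy: \eqref{eq:MN45}/\eqref{eq:MN46} has already spent $\tfrac14\cK$ and $\cE_{\cM_N}^{(441)}$ spends a further $\tfrac18\cK$, so that $\cH_N-\tfrac14\cK-\tfrac18\cK=\tfrac58\cK+\cV_N\ge\tfrac12\cK+\cV_N$, the surplus $\tfrac18\cK\ge0$ being simply discarded. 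I would then set $\cE_{\cM_N}^{(4)}$ equal to the sum of all genuinely small leftover terms — $\cE_{\cM_N}^{(41)}$, $\cE_{\cM_N}^{(42)}$, $\cE_{\cM_N}^{(43)}$, $\cE_{\cM_N}^{(443)}$, the $N^{-\alpha}\cV_{N,L}$-contribution from $\cE_{\cM_N}^{(441)}$, the positive $\cE_{\cM_N}^{(442)}=N^{-\alpha}\cV_N$, and the scalars $-CN^{2\beta+\kappa}-CN^{3\beta+\kappa}$ — while keeping only $-\wt C N^{-\beta-\kappa}\cV_{N,L}$ displayed separately as in \eqref{eq:corMN4}; its coefficient is arbitrarily small because, as in the derivation of \eqref{eq:MN45}, the parameter $\delta$ there may be chosen of size $N^{-\beta-\kappa}$. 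With these choices \eqref{eq:corMN4} holds by construction.

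It remains to verify \eqref{cor:error}, i.e. the lower bound on $e^Ae^D\cE_{\cM_N}^{(4)}e^{-D}e^{-A}$. For this I would add the post-conjugation estimates already proved — \eqref{eq:EMN41bnd} for $\cE_{\cM_N}^{(41)}$, \eqref{eq:propsecondcommKDerror} for $\cE_{\cM_N}^{(42)}$, \eqref{eq:EMN43bnd} for $\cE_{\cM_N}^{(43)}$, \eqref{eq:EMN443bnd} for $\cE_{\cM_N}^{(443)}$, and \eqref{eq:EMN442bnd} for the $N^{-\alpha}\cV_{N,L}$/$\cV_N$ combination — and compare exponents using \eqref{eq:condab}. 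Under those constraints, all plain-$\cK$ coefficients and the $\cN_+$-coefficient (via $\cN_+\le(2\pi)^{-2}\cK$) are negative powers of $N$ and collapse into $-CN^{-\beta}\cK$; the $\cV_N$ and $\cV_{N,L}$ remainders collapse into $-CN^{-\beta-\kappa}\cV_N$ (using $\alpha>\beta+\kappa$); and, since $N^{\alpha}=N^{m_0\beta}$ and $\lfloor m_0\rfloor\beta\le\alpha$, so that $\cN_{\ge\frac12N^\alpha}\le\cN_{\ge\frac12N^{\lfloor m_0\rfloor\beta}}$, the remaining kinetic-times-number terms $N^{\alpha-1}\cK\cN_{\ge\frac12N^\alpha}$, $N^{-\beta-1}\cK\cN_{\ge\frac12N^\alpha}$ and $N^{-2\beta-\kappa-1}\cK\cN_{\ge\frac12N^\alpha}$ are dominated by the two families $N^{\alpha+\beta+2\kappa-1}\cK\cN_{\ge\frac12N^{\lfloor m_0\rfloor\beta}}$ and $N^{j\beta/2+\beta/2+2\kappa-1}\cK\cN_{\ge\frac12N^{j\beta/2}}$, $j=3,\dots,2\lfloor m_0\rfloor-1$, that already occur in \eqref{eq:EMN443bnd}. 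All the analysis has been carried out in the preceding propositions; the one place needing care — and where the full strength of \eqref{eq:condab} is used — is checking that each exponent comparison goes the right way and that the total multiple of $\cK$ spent stays strictly below $\tfrac12$ for $N$ large.
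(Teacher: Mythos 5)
Your proposal is correct and follows exactly the route the paper takes: the paper's own proof of this corollary is a one-line instruction to combine \eqref{eq:MN46}, \eqref{eq:MN47} and the bounds \eqref{eq:EMN41bnd}, \eqref{eq:propsecondcommKDerror}, \eqref{eq:EMN43bnd}, \eqref{eq:EMN441442bnd}, \eqref{eq:EMN442bnd}, \eqref{eq:EMN443bnd} together with $\cN_+\leq(2\pi)^{-2}\cK$, which is precisely the assembly you carry out. Your explicit accounting of the kinetic energy ($\tfrac34\cK-\tfrac18\cK=\tfrac58\cK\geq\tfrac12\cK$), of the separately displayed $-\wt C N^{-\beta-\kappa}\cV_{N,L}$ term, and of the domination of the various $\cK\cN_{\geq\cdot}$ remainders by the two families in \eqref{cor:error} matches the intended argument.
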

\begin{proof} The proof follows from defining $ \cE_{\cM_N}^{(4)} = \sum_{j=1}^3\cE_{\cM_N}^{(4j)} + \sum_{j=1}^3\cE_{\cM_N}^{(44j)}$ and combining \eqref{eq:EMN41bnd}, \eqref{eq:MN46}, \eqref{eq:propsecondcommKDerror}, \eqref{eq:EMN43bnd}, \eqref{eq:MN47}, \eqref{eq:EMN441442bnd}, \eqref{eq:EMN442bnd}, \eqref{eq:EMN443bnd} and the operator bound $ \cN_+\leq (2\pi)^{-2}\cK$ in $ \cF_+^{\leq N}$.
\end{proof}


\subsection{Proof of Proposition \ref{prop:MN}}\label{sub:MNproof}

Recall from (\ref{eq:decoMN}) the decomposition  
		\[ \cM_N = 4\pi \mathfrak{a}_0N^{1+\kappa}-4\pi\mathfrak{a}_0N^{\kappa-1}\cN_+^{\,2}/N + \cM_N^{(2)}+ \cM_N^{(3)} + \cM_N^{(4)} \]
Collecting the results of Proposition \ref{prop:MN2}, Proposition \ref{prop:MN3} and Corollary \ref{cor:MN4}, we deduce that 
		\begin{equation}\label{eq:proofpropMN1}
		\begin{split}
		\cM_N&\geq 4\pi \mathfrak{a}_0N^{1+\kappa}-4\pi\mathfrak{a}_0N^{\kappa-1}\cN_+^{\,2} + 8\pi\mathfrak{a}_0N^\kappa\sum_{p\in P_H^c}\Big[ b^*_pb_p +\frac12 b^*_pb^*_{-p} + \frac12 b_pb_{-p}\Big]\\
		&\hspace{0.4cm} + \frac{8\pi\mathfrak{a}_0N^\kappa}{\sqrt N}\sum_{\substack{ p\in P_H^c,q\in P_L:\\p+q\neq 0 }} \big[ b^*_{-p}a^*_{p+q}a_q +\text{h.c.}\big]+\frac12\cK\\
		&\hspace{0.4cm}  +\cV_N - \frac{1}{2N}\sum_{\substack{ r\in \Lambda^*, v,w\in P_L: \\ v+r, w-r\neq 0}}  \widehat{V}(r/N^{1-\kappa})\big(a^*_{v+r}a^*_{w-r}a_va_w+\text{h.c.}\big)\\
		&\hspace{0.4cm} -\frac1{2N}\sum^*_{\substack{ r\in \Lambda^*, v, w \in P_L:\\v+r, w-r\in P_L}} N^{\kappa} (\widehat{V}(./N^{1-\kappa})\ast  \eta/N)(r) a^*_{v+r} a^*_{w-r}a_va_w \\
		&\hspace{0.4cm} +\frac{1}{2N}\sum_{\substack{ r\in P_H^c\cup \{0\}, v, w\in P_L: \\ v+r,w-r\neq 0}} N^\kappa \big(\widehat{V}(./N^{1-\kappa})\ast \widehat{f}_N\big)(r) \big(a^*_{v+r}a^*_{w-r}a_v a_w+\text{h.c.}\big)\\
		&\hspace{0.4cm}  - \wt{C} N^{-\beta-\kappa}\cV_{N,L}   +\cE'_{\cM_N},
		\end{split}
		\end{equation}
where $\cE'_{\cM_N}$ satisfies the lower bound 
		\begin{equation}\label{eq:proofpropMN2}
		\begin{split}
		e^Ae^D \cE'_{\cM_N} e^{-D}e^{-A}& \geq - C N^{-\beta} \cK - C N^{-\beta-\kappa}\cV_N  -C N^{ \alpha +\beta +2\kappa-1}\cK\cN_{\geq \frac12 N^{\lfloor m_0 \rfloor \beta}} \\
		&\hspace{0.5cm}  - C \sum_{j=3}^{2\lfloor m_0\rfloor -1}N^{ j\beta/2 + \beta/2+2\kappa-1}\cK \cN_{\geq \frac12 N^{j\beta/2}}- CN^{\alpha + \beta/2 +2\kappa} 
		\end{split}
		\end{equation}
for all $N\in\NN$ sufficiently large. 

We combine next the terms on the third, fourth and fifth lines in \eqref{eq:proofpropMN1}. We first notice that
		\begin{equation}\label{eq:proofpropMN4}
		\begin{split}
		&\frac{1}{2N}\sum_{\substack{ r\in \Lambda^*, v,w\in P_L: \\ v+r, w-r\neq 0}}  \widehat{V}(r/N^{1-\kappa})\big(a^*_{v+r}a^*_{w-r}a_va_w+a^*_{v}a^*_{w}a_{w-r}a_{v+r}\big)\\
		&= \frac{1}{2N}\sum_{\substack{ r\in \Lambda^*, v,w\in \Lambda_+^* :\\ v,w\in P_L, \\ v+r, w-r\neq 0}}\!\!\!  \widehat{V}(r/N^{1-\kappa})a^*_{v+r}a^*_{w-r}a_va_w + \frac{1}{2N}\!\!\sum_{\substack{ r\in \Lambda^*, v,w\in \Lambda_+^* : \\v+r,w-r\in P_L}} \!\!\!\!\! \widehat{V}(r/N^{1-\kappa})a^*_{v+r}a^*_{w-r}a_va_w \\
		& = \frac{1}{2N}\sum^*_{\substack{ r\in \Lambda^*, v,w\in \Lambda_+^* :\\ (v,w)\in P_L^2\text{ or }( v+r, w-r)\in P_L^2}}  \widehat{V}(r/N^{1-\kappa})a^*_{v+r}a^*_{w-r}a_va_w\\
		&\hspace{4cm} +  \frac{1}{2N}\sum^*_{\substack{ r\in \Lambda^*, v,w\in \Lambda_+^* :\\ (v,w, v+r, w-r)\in P_L^4}}  \widehat{V}(r/N^{1-\kappa})a^*_{v+r}a^*_{w-r}a_va_w
		\end{split}
		\end{equation}
Arguing in the same way for the contribution on the fifth line in \eqref{eq:proofpropMN1}, using that $ (\widehat{f}_N-\eta/N)(p)=\delta_{p,0}$ for all $p\in\Lambda_+^*$, and using that $ v\in P_L$ and $ v+r\in P_L $ implies in particular that $ r\in P_H^c$, we therefore obtain that
		\begin{equation}\label{eq:proofpropMN5}
		\begin{split}
		&\cV_N - \frac{1}{2N}\sum^*_{\substack{ r\in \Lambda^*, v,w\in \Lambda_+^*: \\ (v,w)\in P_L^2\text{ or }(v+r, w-r)\in P_L^2}}  \widehat{V}(r/N^{1-\kappa}) a^*_{v+r}a^*_{w-r}a_va_w\\
		&- \frac{1}{2N}\sum_{\substack{ r\in \Lambda^*, v,w\in \Lambda_+^*: \\ (v,w,v+r, w-r)\in P_L^4}}  \widehat{V}(r/N^{1-\kappa}) a^*_{v+r}a^*_{w-r}a_va_w \\
		& -\frac1{2N}\sum^*_{\substack{ r\in \Lambda^*, v, w \in P_L:\\v+r, w-r\in P_L}} N^{\kappa} (\widehat{V}(./N^{1-\kappa})\ast  \eta/N)(r) a^*_{v+r} a^*_{w-r}a_va_w \\
		& +\frac{1}{2N}\sum_{\substack{ r\in P_H^c\cup \{0\}, v, w\in P_L: \\ v+r, w-r\neq 0}} N^\kappa \big(\widehat{V}(/N^{1-\kappa})\ast \widehat{f}_N\big)(r) \big(a^*_{v+r}a^*_{w-r}a_v a_w+\text{h.c.}\big)\\
		& = \cV_N - \frac{1}{2N}\sum^*_{\substack{ r\in \Lambda^*, v,w\in \Lambda_+^*: \\ (v,w)\in P_L^2\text{ or }(v+r, w-r)\in P_L^2}}  \widehat{V}(r/N^{1-\kappa}) a^*_{v+r}a^*_{w-r}a_va_w \\
		&\hspace{0.4cm} +\frac{1}{2N}\sum_{\substack{ r\in P_H^c\cup \{0\}, v, w\in P_L: \\ (v,w)\in P_L^2\text{ or }(v+r, w-r)\in P_L^2}}^* N^\kappa \big(\widehat{V}(/N^{1-\kappa})\ast \widehat{f}_N\big)(r)  a^*_{v+r}a^*_{w-r}a_v a_w.
		\end{split}
		\end{equation}
Now, notice furthermore that
		\[\begin{split}
		&\cV_N - \frac{1}{2N}\sum^*_{\substack{ r\in \Lambda^*, v,w\in \Lambda_+^*: \\ (v,w)\in P_L^2\text{ or }(v+r, w-r)\in P_L^2}}  \widehat{V}(r/N^{1-\kappa}) a^*_{v+r}a^*_{w-r}a_va_w\\
		&\hspace{2cm}=\frac{1}{2N}\sum^*_{\substack{ r\in \Lambda^*, v,w\in \Lambda_+^*: \\ (v,w)\in (P_L^2)^c \text{ and  }\\(v+r, w-r)\in (P_L^2)^c}}  \widehat{V}(r/N^{1-\kappa}) a^*_{v+r}a^*_{w-r}a_va_w,
		\end{split}\]
such that, after switching to position space, the pointwise positivity $V\geq 0$ implies 
		\begin{equation}\label{eq:proofpropMN6}\begin{split}
		&\cV_N - \frac{1}{2N}\sum^*_{\substack{ r\in \Lambda^*, v,w\in \Lambda_+^*: \\ (v,w)\in P_L^2\text{ or }(v+r, w-r)\in P_L^2}}  \widehat{V}(r/N^{1-\kappa}) a^*_{v+r}a^*_{w-r}a_va_w\\
		&= \int_{\Lambda^2}dxdy\; N^{2-2\kappa}V (N^{1-\kappa}(x-y))\\
		&\hspace{0.5cm}\times\Big[ a^*\big( (\check{\chi}_{P_L^c})_x\big)a^*\big( (\check{\chi}_{P_L^c})_y\big) + a^*\big( (\check{\chi}_{P_L})_x\big) a^*\big( (\check{\chi}_{P_L^c})_y\big) + a^*\big( (\check{\chi}_{P_L^c})_x\big)a^*\big( (\check{\chi}_{P_L})_y\big)\Big] \\
		& \hspace{0.5cm}\times\Big[ a\big( (\check{\chi}_{P_L^c})_x\big)a\big( (\check{\chi}_{P_L^c})_y\big) + a\big( (\check{\chi}_{P_L})_x\big) a\big( (\check{\chi}_{P_L^c})_y\big) + a\big( (\check{\chi}_{P_L^c})_x\big)a\big( (\check{\chi}_{P_L})_y\big)\Big]\\
		& \geq 0.
		\end{split}\end{equation}
Here, we used that $ \Lambda_+^* = P_L \cup P_L^c  $ and we denote by $ \check{\chi}_{S}$ the distribution which has Fourier transform $ \chi_S$, the characteristic function of the set $S\subset \Lambda_+^*$.

Combining \eqref{eq:proofpropMN1}, \eqref{eq:proofpropMN4}, \eqref{eq:proofpropMN5} and \eqref{eq:proofpropMN6}, it follows that  
		\begin{equation}\label{eq:proofpropMN7}
		\begin{split}
		\cM_N&\geq 4\pi \mathfrak{a}_0N^{1+\kappa}-4\pi\mathfrak{a}_0N^{\kappa-1}\cN_+^{\,2} + 8\pi\mathfrak{a}_0N^\kappa\sum_{p\in P_H^c}\Big[ b^*_pb_p +\frac12 b^*_pb^*_{-p} + \frac12 b_pb_{-p}\Big]\\
		&\hspace{0.4cm} + \frac{8\pi\mathfrak{a}_0N^\kappa}{\sqrt N}\sum_{\substack{ p\in P_H^c,q\in P_L:\\p+q\neq 0 }} \big[ b^*_{-p}a^*_{p+q}a_q +\text{h.c.}\big]+\frac12\cK\\
		&\hspace{0.4cm}  +\frac{1}{2N}\sum_{\substack{ r\in P_H^c\cup \{0\}, v, w\in P_L: \\ (v,w)\in P_L^2\text{ or }(v+r, w-r)\in P_L^2}}^* N^\kappa \big(\widehat{V}(./N^{1-\kappa})\ast \widehat{f}_N\big)(r)  a^*_{v+r}a^*_{w-r}a_v a_w\\
		&\hspace{0.4cm} - \wt{C} N^{-\beta-\kappa}\cV_{N,L}   +\cE'_{\cM_N}
		\end{split}
		\end{equation}
Using Lemma \ref{sceqlemma}, part ii), we have $\big(\widehat{V}(./N^{1-\kappa})\ast \widehat{f}_N\big)(0) = 8\pi \mathfrak{a}_0 + \mathcal{O}(N^{\kappa-1}) $. This implies 		\begin{equation}\label{eq:proofpropMN8}
		\begin{split}
		\cM_N&\geq 4\pi \mathfrak{a}_0N^{1+\kappa} + 8\pi\mathfrak{a}_0N^\kappa\sum_{p\in P_H^c}\Big[ b^*_pb_p +\frac12 b^*_pb^*_{-p} + \frac12 b_pb_{-p}\Big]\\
		&\hspace{0.4cm} + \frac{8\pi\mathfrak{a}_0N^\kappa}{\sqrt N}\sum_{\substack{ p\in P_H^c,q\in P_L:\\p+q\neq 0 }} \big[ b^*_{-p}a^*_{p+q}a_q +\text{h.c.}\big]+\frac12\cK\\
		&\hspace{0.4cm}  +\frac{1}{2N}\sum_{\substack{ r\in P_H^c , v, w\in P_L: \\ (v,w)\in P_L^2\text{ or }(v+r, w-r)\in P_L^2}}^* N^\kappa \big(\widehat{V}(. /N^{1-\kappa})\ast \widehat{f}_N\big)(r)  a^*_{v+r}a^*_{w-r}a_v a_w\\ &\hspace{0.4cm} - \wt{C} N^{-\beta -\kappa} \cV_{N,L}   + \cE''_{\cM_N},
		\end{split}
		\end{equation}
where, by (\ref{eq:proofpropMN2}) and Lemmas \ref{lm:NresgrowA} and \ref{lm:NresgrowD}, 
\begin{equation}\label{eq:proofpropMN9}
\begin{split} 
e^{A}e^D \cE''_{\cM_N} e^{-D}e^{-A}& \geq - C N^{-\beta} \cK - C N^{-\beta-\kappa}\cV_N  -C N^{ \alpha +\beta +2\kappa-1}\cK\cN_{\geq \frac12 N^{\lfloor m_0 \rfloor \beta}} \\
		&\hspace{0.5cm}  - C \sum_{j=3}^{2\lfloor m_0\rfloor -1}N^{ j\beta/2 + \beta/2+2\kappa-1}\cK \cN_{\geq \frac12 N^{j\beta/2}}- CN^{\alpha + \beta/2 +2\kappa} 
\end{split} \end{equation}

Similarly, for $ r\in P_H^c$, we know that
		\[\big| \big(\widehat{V}(./N^{1-\kappa})\ast \widehat{f}_N\big)(r) - 8\pi \mathfrak{a}_0   \big| \leq CN^{\alpha+\kappa-1}. \]
Therefore, proceeding exactly as between \eqref{eq:commKD3} and \eqref{eq:commKD4}, with $\big(\widehat{V}(./N^{1-\kappa})\ast \widehat{f}_N\big)(r)$ replaced by $\big| \big(\widehat{V}(/N^{1-\kappa})\ast \widehat{f}_N\big)(r) - 8\pi \mathfrak{a}_0   \big| $, we deduce that 
		\begin{equation}\label{eq:proofpropMN10}
		\begin{split}
		\cM_N&\geq 4\pi \mathfrak{a}_0N^{1+\kappa} +\frac12\cK+ 8\pi\mathfrak{a}_0N^\kappa\sum_{p\in P_H^c}\Big[ b^*_pb_p +\frac12 b^*_pb^*_{-p} + \frac12 b_pb_{-p}\Big]\\
		&\hspace{0.4cm} + \frac{8\pi\mathfrak{a}_0N^\kappa}{\sqrt N}\sum_{\substack{ p\in P_H^c,q\in P_L:\\p+q\neq 0 }} \big[ b^*_{-p}a^*_{p+q}a_q +\text{h.c.}\big]\\
		&\hspace{0.4cm}  +\frac{4\pi\mathfrak{a}_0N^\kappa}{N}\sum_{\substack{ r\in P_H^c , v, w\in P_L: \\ (v,w)\in P_L^2\\\text{ or }(v+r, w-r)\in P_L^2}}^*  a^*_{v+r}a^*_{w-r}a_v a_w  - \wt{C} N^{-\beta-\kappa} \cV_{N,L}   +\cE'''_{\cM_N},
		\end{split}
		\end{equation}
with $\cE'''_{\cM_N}$ satisfying the same bound (\ref{eq:proofpropMN9}) as $\cE''_{\cM_N}$. Here we used Lemmas \ref{lm:NresgrowA}, \ref{lm:KresgrowA}, \ref{lm:NresgrowD} and \ref{lm:KresgrowD}, as well as the assumption (\ref{eq:condab}). 

Finally, recalling the definition \eqref{eq:defcrdr} and the identity \eqref{eq:cubicbcd}, we find 
		\begin{equation}\label{eq:proofpropMN11}
		\begin{split}
		\cM_N&\geq 4\pi \mathfrak{a}_0N^{1+\kappa} +\frac12\cK+ 8\pi\mathfrak{a}_0N^\kappa\sum_{p\in P_H^c}\Big[ b^*_pb_p +\frac12 b^*_pb^*_{-p} + \frac12 b_pb_{-p}\Big]\\
		&\hspace{0.4cm} + 8\pi\mathfrak{a}_0N^\kappa \sum_{p\in P_H^c } \Big[ b^*_{-p}e_{-p} + e^*_{-p}b_{-p} + b^*_{-p}e^*_{p} + e_{p}b_{-p} + b^*_{-p}c^*_{p} + c_p b_{-p} \Big]\\
		&\hspace{0.4cm}  +\frac{4\pi\mathfrak{a}_0N^\kappa}{N}\sum_{\substack{ r\in P_H^c , v, w\in P_L: \\ (v,w)\in P_L^2\\\text{ or }(v+r, w-r)\in P_L^2}}^*  a^*_{v+r}a^*_{w-r}a_v a_w - \wt{C} N^{-\beta-\kappa} \cV_{N,L} + \cE'''_{\cM_N}\, .		
		\end{split}
		\end{equation}
To express also the first term in the third line of \eqref{eq:proofpropMN11} in terms of the modified creation and annihilation fields defined in \eqref{eq:defcrdr}, we first observe that
		\[\begin{split}
		&\frac{4\pi\mathfrak{a}_0N^\kappa}{N}\sum_{\substack{ r\in P_H^c , v, w\in P_L: \\ (v,w)\in P_L^2\\\text{ or }(v+r, w-r)\in P_L^2}}^*  a^*_{v+r}a^*_{w-r}a_v a_w\\
		& = \frac{4\pi\mathfrak{a}_0N^\kappa}{N}\sum_{r\in P_H^c}\sum_{\substack{ v, w\in P_L: \\ (v,w)\in P_L^2\\\text{ or }(v+r, w-r)\in P_L^2}}^*  a^*_{v+r}a_v a^*_{w-r}a_{w} - \frac{4\pi\mathfrak{a}_0N^\kappa}{N}\sum_{\substack{ r\in P_H^c , v\in P_L: \\ (v,v+r)\in P_L^2}}^*  a^*_{v+r} a_{v+r}\\
		&\geq \frac{4\pi\mathfrak{a}_0N^\kappa}{N}\sum_{r\in P_H^c}\sum_{\substack{ v, w\in P_L: \\ (v,w)\in P_L^2\\\text{ or }(v+r, w-r)\in P_L^2}}^*  a^*_{v+r}a_v a^*_{w-r}a_{w}- CN^{3\beta+\kappa-1}\cN_+-C.
		\end{split}\]
Then, for a fixed $ r\in P_H^c$, we have that $$ \big\{ (v,w)\in \Lambda_+^*\times \Lambda_+^*:(v,w)\in P_L^2\text{ or }(v+r,w-r)\in P_L^2 \big \}=\bigcup_{j=1}^7S_j, $$ where
		\[\begin{split}
		S_1= \big\{ (v,w)\in \Lambda_+^*\times \Lambda_+^*: v\in P_L, w\in P_L, v+r\in P_L, w-r\in P_L \big \},\\
		S_2=\big\{ (v,w)\in \Lambda_+^*\times \Lambda_+^*: v\in P_L, w\in P_L, v+r\in P_L, w-r\in P_L^c \big \},\\
		S_3= \big\{ (v,w)\in \Lambda_+^*\times \Lambda_+^*: v\in P_L, w\in P_L, v+r\in P_L^c, w-r\in P_L \big \},\\
		S_4= \big\{ (v,w)\in \Lambda_+^*\times \Lambda_+^*: v\in P_L, w\in P_L, v+r\in P_L^c, w-r\in P_L^c \big \},\\
		S_5= \big\{ (v,w)\in \Lambda_+^*\times \Lambda_+^*: v\in P_L^c, w\in P_L, v+r\in P_L, w-r\in P_L \big \},\\
		S_6= \big\{ (v,w)\in \Lambda_+^*\times \Lambda_+^*: v\in P_L, w\in P_L^c, v+r\in P_L, w-r\in P_L \big \},\\
		S_7= \big\{ (v,w)\in \Lambda_+^*\times \Lambda_+^*: v\in P_L^c, w\in P_L^c, v+r\in P_L, w-r\in P_L \big \}.
		\end{split}\]
In particular, the union $  \bigcup_{j=1}^7S_j$ is a disjoint union. As a consequence, we find that 
		\[\begin{split}
		&\frac{4\pi\mathfrak{a}_0N^\kappa}{N}\sum_{r\in P_H^c}\sum_{\substack{ v, w\in P_L: \\ (v,w)\in P_L^2\\\text{ or }(v+r, w-r)\in P_L^2}}^*  a^*_{v+r}a_v a^*_{w-r}a_{w}\\
		&\hspace{1cm} = 8\pi \mathfrak{a}_0 N^\kappa \sum_{r\in P_H^c} \Big[ e^*_r c^*_{-r} + c_{-r}e_r+ \frac12d^*_{r}e^*_{-r}+ \frac12e_{-r}e_r +\frac12 c^*_{r}c^*_{-r} + \frac12 c_{-r}c_r  \Big] 	\\
		&\hspace{1.5cm} + 	8\pi \mathfrak{a}_0 N^\kappa \sum_{r\in P_H^c} \Big[ e^*_r e_r +c^*_r e_r + e^*_r c_r\Big]. 
		\end{split}\]
Inserting in \eqref{eq:proofpropMN10}, we obtain 
\begin{equation}\label{eq:proofpropMN11b}
		\begin{split}
		\cM_N&\geq 4\pi \frak{a}_0 N^{1+\kappa}+ \frac12 \cK  + 8\pi \mathfrak{a}_0N^\kappa \sum_{r\in P_H^c} \big( b^*_r + c^*_r +e^*_r  \big)\big( b_r + c_r +e_r  \big)\\
		& +4 \pi \mathfrak{a}_0N^\kappa \sum_{r\in P_H^c}\Big[ \big( b^*_r + c^*_r +e^*_r  \big)\big( b^*_{-r} + c^*_{-r} +e^*_{-r}  \big) + \text{h.c.}\Big] \\
		& - 8\pi \frak{a}_0 N^\kappa \sum_{r \in P_H^c} \left[ c_r^* c_r + b_r^* c_r + c_r^* b_r \right] 
		- \wt{C} N^{-\beta- \kappa} \cV_{N,L} + \cE'''_{\cM_N}
		\end{split}
		\end{equation}
		with 
\[\begin{split} 
e^{A}e^D \cE'''_{\cM_N} e^{-D}e^{-A}& \geq - C N^{-\beta} \cK - C N^{-\beta-\kappa}\cV_N  -C N^{ \alpha +\beta +2\kappa-1}\cK\cN_{\geq \frac12 N^{\lfloor m_0 \rfloor \beta}} \\
		&\hspace{0.5cm}  - C \sum_{j=3}^{2\lfloor m_0\rfloor -1}N^{ j\beta/2 + \beta/2+2\kappa-1}\cK \cN_{\geq \frac12 N^{j\beta/2}}- CN^{\alpha + \beta/2 +2\kappa} 
\end{split} \]
Let us now estimate the remaining terms on the last line of (\ref{eq:proofpropMN11b}). For $\xi \in \cF_+^{\leq N}$, we have 
		\begin{equation}\label{eq:KNbeta} \begin{split}
		\bigg| \,8\pi \mathfrak{a}_0N^\kappa \sum_{r\in P_H^c} \langle\xi , c^*_r c_r \xi\rangle\bigg| &\leq \frac{CN^\kappa}N \!\!\!\!\sum^*_{\substack{ r\in P_H^c, v,w\in P_L:\\ v\in P_L, r+v\in P_L^c,\\ w\in P_L, w+r\in P_L^c }} \Big ( |w||v|^{-1} \| a_{r+v}a _{w} \xi\| \Big) \Big(|v||w|^{-1}\|a_va_{w+r} \xi\| \Big) \\
		&  \leq CN^{\beta+ \kappa-1} \langle \xi, \cK_L(\cN_{\geq N^\beta}+1) \xi\rangle,
		\end{split}\end{equation}		
and 
		\begin{equation}\label{eq:KNbeta2}\begin{split}
		\bigg| \,8\pi \mathfrak{a}_0N^\kappa \sum_{r\in P_H^c} \langle\xi , (b^*_r c_r + c^*_r b_r)\xi\rangle\bigg|&\leq \frac14 \sum_{r\in P_H^c} \langle\xi , b^*_r b_r \xi\rangle +CN^{2\kappa} \sum_{r\in P_H^c} \langle\xi , c^*_r c_r \xi\rangle \\
		&  \leq \frac14 \cK + C N^{\beta+ 2\kappa-1} \langle \xi, \cK_L(\cN_{\geq N^\beta}+1) \xi\rangle,
		\end{split}\end{equation}
Similarly, we can bound 
\[ \begin{split} N^{-\beta- \kappa} \langle \xi,  \cV_{N,L} \xi \rangle \leq \; &C N^{-\beta-1}  \sum_{\substack{u\in \Lambda^*, p,q \in \Lambda_+^*: \\ p+u, q+u, p,q  \in P_L}} \|  a_{p+u} a_{q} \xi \| \| a_p a_{q+u} \xi \| \\ \leq \; & C N^{-\beta-1} \sum_{\substack{u\in \Lambda^*, p,q \in \Lambda_+^*: \\ p+u, q+u, p,q  \in P_L}}  \frac{|q|^2}{|p|^2}  \|  a_{p+u} a_{q} \xi \|^2 \\ \leq \; &C N^{-1} \| \cK^{1/2} \cN_+^{1/2} \xi \|^2 \leq C \| \cK^{1/2} \xi \|^2 
\end{split} \]   
Thus, choosing the constant $\wt{C} > 0$ small enough and applying Lemma \ref{lm:KresgrowD}, Lemma \ref{lm:KresgrowA} and Lemma \ref{lm:NresgrowA} to the r.h.s. of (\ref{eq:KNbeta}) and to the second term on the r.h.s. of (\ref{eq:KNbeta2}), we conclude that
		\begin{equation}\label{eq:proofpropMN12}
		\begin{split}
		\cM_N&\geq 4\pi \frak{a}_0 N^{1+\kappa}+ \frac14 \cK  + 8\pi \mathfrak{a}_0N^\kappa \sum_{r\in P_H^c} \big( b^*_r + c^*_r +e^*_r  \big)\big( b_r + c_r +e_r  \big)\\
		& +4 \pi \mathfrak{a}_0N^\kappa \sum_{r\in P_H^c}\Big[ \big( b^*_r + c^*_r +e^*_r  \big)\big( b^*_{-r} + c^*_{-r} +e^*_{-r}  \big) + \text{h.c.}\Big]  + \cE''''_{\cM_N}
		\end{split}
		\end{equation}
where $ \cE''''_{\cM_N}$ is such that 
		\begin{equation}\label{eq:proofpropMN13}
		\begin{split} 
		e^Ae^D \cE''''_{\cM_N} e^{-A} e^{-D} \geq 
		&- C N^{-\beta} \cK - C N^{-\beta-\kappa}\cV_N  \\ &- C N^{\beta + 2\kappa - 1} \cK \cN_{\geq N^\beta} - C N^{ \alpha +\beta +2\kappa-1}\cK\cN_{\geq \frac12 N^{\lfloor m_0 \rfloor \beta}} \\ & - C \sum_{j=3}^{2\lfloor m_0\rfloor -1}N^{ j\beta/2 + \beta/2+2\kappa-1}\cK \cN_{\geq \frac12 N^{j\beta/2}} - CN^{\alpha + \beta/2 +2\kappa} 
\end{split} \end{equation}

We introduce the operators 
\[ g^*_r = b^*_r + c^*_r + e^*_r , \qquad g_r = b_r+c_r+e_r.\]
With the algebraic identity
		\[\begin{split}
		 \sum_{r\in P_H^c }\Big[ g^*_rg_r + \frac12 g^*_r g^*_{-r} +\frac12 g_{-r} g_{r}\Big] &=  \frac12 \sum_{r\in P_H^c }\big( g^*_r +g_{-r}\big) \big( g_r +g^*_{-r}\big) - \frac12 \sum_{r\in P_H^c }  [g_{r} , g^*_{r}],
		\end{split}\]
we conclude that 
\[\begin{split}
		\cM_{N} &\geq 4\pi \frak{a}_0 N^{1+\kappa}+ \frac14 \cK  - 4\pi \mathfrak{a}_0N^\kappa\sum_{r\in P_H^c} [g_{r} , g^*_{r}] + \cE''''_{\cM_N}
		\end{split}\]
Since 
		\[\begin{split}
		 [b_{r} , c^*_{r}]  = [b_r, e^*_r] = [c_r, b^*_r] = [e_r, b^*_r] = [c_r, e^*_r] = [e_r,c^*_r] = 0,
		\end{split}\]
we obtain that
\[\begin{split}
		[g_r, g^*_r] &= \frac{N-\cN_+}{N} - \frac1Na^*_ra_r +  \frac1N \sum_{\substack{v\in\Lambda_+^*: v\in P_L, \\ v+r\in P_L^c } }a^*_va_v-  \frac1N \sum_{\substack{v\in\Lambda_+^*: v\in P_L, \\ v+r\in P_L^c } }a^*_{v+r}a_{v+r}\\
		&\hspace{0.4cm} + \frac1{4N} \sum_{\substack{v\in\Lambda_+^*: v\in P_L, \\ v+r\in P_L } }a^*_va_v - \frac1{4N} \sum_{\substack{v\in\Lambda_+^*: v\in P_L, \\ v+r\in P_L } }a^*_{v+r}a_{v+r}.
		\end{split} \]
A straightforward computation then shows that 
		\[-4\pi \mathfrak{a}_0N^\kappa\sum_{p\in P_H^c} [g_{r} , g^*_{r}]\geq  - C N^{3\alpha+\kappa} (1-\cN_+/N) - C N^{3\alpha+\kappa} \cN_+/N\geq - C N^{3\alpha+\kappa}.  \]	
Thus
\[ \cM_{N} \geq 4\pi \frak{a}_0 N^{1+\kappa}+ \frac14 \cK  + \cE_{\cM_N} \]
where $\cE_{\cM_N}$ satisfies 
\[ \begin{split} 
		e^Ae^D \cE_{\cM_N} e^{-A} e^{-D} \geq 
		&- C N^{-\beta} \cK - C N^{-\beta-\kappa}\cV_N  \\ &- C N^{\beta+2\kappa-1} \cK \cN_{\geq N^\beta} - C N^{ \alpha +\beta +2\kappa-1} \cK \cN_{\geq \frac12 N^{\lfloor m_0 \rfloor \beta}} \\
		&- C \sum_{j=3}^{2\lfloor m_0\rfloor -1}N^{ j\beta/2 + \beta/2+2\kappa-1}\cK \cN_{\geq \frac12 N^{j\beta/2}}- CN^{3\alpha + \kappa} 
\end{split}\] 
This concludes the proof of Proposition \ref{prop:MN}.\qed

\appendix

\section{Analysis of $ \cG_N$ }\label{sec:GN}

The goal of this section is to prove Prop. \ref{prop:GN}. To reach this goal, we need precise information about the action of the generalized Bogoliubov transformation $e^B$, with the antisymmmetric operator $B$ defined as in (\ref{eq:genBog}), beyond the bound 
(\ref{lm:Ngrow}) for the growth of the number of excitations. 

To describe the action of $e^B$  on the generalized creation and annihilation operators $b^*_p, b_p$ introduced in (\ref{eq:bp-de}), we expand, for any $p \in \Lambda^*_+$,  
\[\begin{split} e^{-B(\eta)} \, b_p \, e^{B(\eta)} &= b_p + \int_0^1 ds \, \frac{d}{ds}  e^{-sB(\eta)} b_p e^{sB(\eta)} \\ &= b_p - \int_0^1 ds \, e^{-sB(\eta)} [B(\eta), b_p] e^{s B(\eta)} \\ &= b_p - [B(\eta),b_p] + \int_0^1 ds_1 \int_0^{s_1} ds_2 \, e^{-s_2 B(\eta)} [B(\eta), [B(\eta),b_p]] e^{s_2 B(\eta)} \end{split} \]
Iterating $m$ times, we find 
\begin{equation}\label{eq:BCH} \begin{split} 
e^{-B(\eta)} b_p e^{B(\eta)} = &\sum_{n=1}^{m-1} (-1)^n \frac{\text{ad}^{(n)}_{B(\eta)} (b_p)}{n!} \\ &+ \int_0^{1} ds_1 \int_0^{s_1} ds_2 \dots \int_0^{s_{m-1}} ds_m \, e^{-s_m B(\eta)} \text{ad}^{(m)}_{B(\eta)} (b_p) e^{s_m B(\eta)} \end{split} \end{equation}
where we recursively defined \[ \text{ad}_{B(\eta)}^{(0)} (A) = A \quad \text{and } \quad \text{ad}^{(n)}_{B(\eta)} (A) = [B(\eta), \text{ad}^{(n-1)}_{B(\eta)} (A) ]  \]
We are going to expand the nested commutators $\text{ad}_{B(\eta)}^{(n)} (b_p)$ and   
$\text{ad}_{B(\eta)}^{(n)} (b^*_p)$. To this end, we need to introduce some additional notation. 
We follow here \cite{BS,BBCS0, BBCS1,BBCS2, BBCS}. For $f_1, \dots , f_n \in \ell_2 (\Lambda^*_+)$, $\sharp = (\sharp_1, \dots , \sharp_n), \flat = (\flat_0, \dots , \flat_{n-1}) \in \{ \cdot, * \}^n$, we set 
\begin{equation}\label{eq:Pi2}
\begin{split}  
\Pi^{(2)}_{\sharp, \flat} &(f_1, \dots , f_n) \\ &= \sum_{p_1, \dots , p_n \in \Lambda^*}  b^{\flat_0}_{\alpha_0 p_1} a_{\beta_1 p_1}^{\sharp_1} a_{\alpha_1 p_2}^{\flat_1} a_{\beta_2 p_2}^{\sharp_2} a_{\alpha_2 p_3}^{\flat_2} \dots  a_{\beta_{n-1} p_{n-1}}^{\sharp_{n-1}} a_{\alpha_{n-1} p_n}^{\flat_{n-1}} b^{\sharp_n}_{\beta_n p_n} \, \prod_{\ell=1}^n f_\ell (p_\ell)  \end{split} \end{equation}
where, for $\ell=0,1, \dots , n$, we define $\alpha_\ell = 1$ if $\flat_\ell = *$, $\alpha_\ell =    -1$ if $\flat_\ell = \cdot$, $\beta_\ell = 1$ if $\sharp_\ell = \cdot$ and $\beta_\ell = -1$ if $\sharp_\ell = *$. In (\ref{eq:Pi2}), we require that, for every $j=1,\dots, n-1$, we have either $\sharp_j = \cdot$ and $\flat_j = *$ or $\sharp_j = *$ and $\flat_j = \cdot$ (so that the product $a_{\beta_\ell p_\ell}^{\sharp_\ell} a_{\alpha_\ell p_{\ell+1}}^{\flat_\ell}$ always preserves {} the number of particles, for all $\ell =1, \dots , n-1$). With this assumption, we find that the operator $\Pi^{(2)}_{\sharp,\flat} (f_1, \dots , f_n)$ maps $\cF^{\leq N}_+$ into itself. If, for some $\ell=1, \dots , n$, $\flat_{\ell-1} = \cdot$ and $\sharp_\ell = *$ (i.e. if the product $a_{\alpha_{\ell-1} p_\ell}^{\flat_{\ell-1}} a_{\beta_\ell p_\ell}^{\sharp_\ell}$ for $\ell=2,\dots , n$, or the product $b_{\alpha_0 p_1}^{\flat_0} a_{\beta_1 p_1}^{\sharp_1}$ for $\ell=1$, is not normally ordered) we require additionally that $f_\ell  \in \ell^1 (\Lambda^*_+)$. In position space, the same operator can be written as 
\begin{equation}\label{eq:Pi2-pos} \Pi^{(2)}_{\sharp, \flat} (f_1, \dots , f_n) = \int   \check{b}^{\flat_0}_{x_1} \check{a}_{y_1}^{\sharp_1} \check{a}_{x_2}^{\flat_1} \check{a}_{y_2}^{\sharp_2} \check{a}_{x_3}^{\flat_2} \dots  \check{a}_{y_{n-1}}^{\sharp_{n-1}} \check{a}_{x_n}^{\flat_{n-1}} \check{b}^{\sharp_n}_{y_n} \, \prod_{\ell=1}^n \check{f}_\ell (x_\ell - y_\ell) \, dx_\ell dy_\ell \end{equation}
An operator of the form (\ref{eq:Pi2}), (\ref{eq:Pi2-pos}) with all the properties listed above, will be called a $\Pi^{(2)}$-operator of order $N\in\NN$.

For $g, f_1, \dots , f_n \in \ell_2 (\Lambda^*_+)$, $\sharp = (\sharp_1, \dots , \sharp_n)\in \{ \cdot, * \}^n$, $\flat = (\flat_0, \dots , \flat_{n}) \in \{ \cdot, * \}^{n+1}$, we also define the operator 
\begin{equation}\label{eq:Pi1}
\begin{split} \Pi^{(1)}_{\sharp,\flat} &(f_1, \dots , f_n;g) \\ &= \sum_{p_1, \dots , p_n \in \Lambda^*}  b^{\flat_0}_{\alpha_0, p_1} a_{\beta_1 p_1}^{\sharp_1} a_{\alpha_1 p_2}^{\flat_1} a_{\beta_2 p_2}^{\sharp_2} a_{\alpha_2 p_3}^{\flat_2} \dots a_{\beta_{n-1} p_{n-1}}^{\sharp_{n-1}} a_{\alpha_{n-1} p_n}^{\flat_{n-1}} a^{\sharp_n}_{\beta_n p_n} a^{\flat n} (g) \, \prod_{\ell=1}^n f_\ell (p_\ell) \end{split} \end{equation}
where $\alpha_\ell$ and $\beta_\ell$ are defined as above. Also here, we impose the condition that, for all $\ell = 1, \dots , n$, either $\sharp_\ell = \cdot$ and $\flat_\ell = *$ or $\sharp_\ell = *$ and $\flat_\ell = \cdot$. This implies that $\Pi^{(1)}_{\sharp,\flat} (f_1, \dots , f_n;g)$ maps $\cF^{\leq N}_+$ back into $\cF_+^{\leq N}$. Additionally, we assume that $f_\ell \in \ell^1 (\Lambda^*_+)$ if $\flat_{\ell-1} = \cdot$ and $\sharp_\ell = *$ for some $\ell = 1,\dots , n$ (i.e. if the pair $a_{\alpha_{\ell-1} p_\ell}^{\flat_{\ell-1}} a^{\sharp_\ell}_{\beta_\ell p_\ell}$ is not normally ordered). In position space, the same operator can be written as
\begin{equation}\label{eq:Pi1-pos} \Pi^{(1)}_{\sharp,\flat} (f_1, \dots ,f_n;g) = \int \check{b}^{\flat_0}_{x_1} \check{a}_{y_1}^{\sharp_1} \check{a}_{x_2}^{\flat_1} \check{a}_{y_2}^{\sharp_2} \check{a}_{x_3}^{\flat_2} \dots  \check{a}_{y_{n-1}}^{\sharp_{n-1}} \check{a}_{x_n}^{\flat_{n-1}} \check{a}^{\sharp_n}_{y_n} \check{a}^{\flat n} (g) \, \prod_{\ell=1}^n \check{f}_\ell (x_\ell - y_\ell) \, dx_\ell dy_\ell \end{equation}
An operator of the form (\ref{eq:Pi1}), (\ref{eq:Pi1-pos}) will be called a $\Pi^{(1)}$-operator of order $N\in\NN$. Operators of the form $b(f)$, $b^* (f)$, for a $f \in \ell^2 (\Lambda^*_+)$, will be called $\Pi^{(1)}$-operators of order zero. 

The next lemma gives a detailed analysis of the nested commutators $\text{ad}^{(n)}_{B(\eta)} (b_p)$ and $\text{ad}^{(n)}_{B(\eta)} (b^*_p)$ for $n \in \bN$; the proof can be found in \cite[Lemma 2.5]{BBCS1}(it is a translation to momentum space of \cite[Lemma 3.2]{BS}). 
\begin{lemma}\label{lm:indu}
Let $B$ be defined as in (\ref{eq:genBog}), with coefficients $\eta_p$ as in (\ref{eq:defetaH}) and with $\alpha > 2\kappa$ (so that $\| \eta \| \to 0$, as $N \to \infty$). Let $n \in \bN$ and $p \in \Lambda^*$. Then the nested commutator $\text{ad}^{(n)}_{B} (b_p)$ can be written as the sum of exactly $2^n n!$ terms, with the following properties. 
\begin{itemize}
\item[i)] Possibly up to a sign, each term has the form
\begin{equation}\label{eq:Lambdas} \Lambda_1 \Lambda_2 \dots \Lambda_i \, N^{-k} \Pi^{(1)}_{\sharp,\flat} (\eta^{j_1}, \dots , \eta^{j_k} ; \eta^{s}_p \ph_{\alpha p}) 
\end{equation}
for some $i,k,s \in \bN$, $j_1, \dots ,j_k \in \bN \backslash \{ 0 \}$, $\sharp \in \{ \cdot, * \}^k$, $ \flat \in \{ \cdot, * \}^{k+1}$ and $\alpha \in \{ \pm 1 \}$ chosen so that $\alpha = 1$ if $\flat_k = \cdot$ and $\alpha = -1$ if $\flat_k = *$ (recall here that $\ph_p (x) = e^{-ip \cdot x}$). In (\ref{eq:Lambdas}), each operator $\Lambda_w : \cF^{\leq N} \to \cF^{\leq N}$, $w=1, \dots , i$, is either a factor $(N-\cN_+ )/N$, a factor $(N-(\cN_+ -1))/N$ or an operator of the form
\begin{equation}\label{eq:Pi2-ind} N^{-h} \Pi^{(2)}_{\sharp',\flat'} (\eta^{z_1}, \eta^{z_2},\dots , \eta^{z_h}) \end{equation}
for some $h, z_1, \dots , z_h \in \bN \backslash \{ 0 \}$, $\sharp,\flat  \in \{ \cdot , *\}^h$. 
\item[ii)] If a term of the form (\ref{eq:Lambdas}) contains $m \in \bN$ factors $(N-\cN_+ )/N$ or $(N-(\cN_+ -1))/N$ and $j \in \bN$ factors of the form (\ref{eq:Pi2-ind}) with $\Pi^{(2)}$-operators of order $h_1, \dots , h_j \in \bN \backslash \{ 0 \}$, then 
we have
\begin{equation*}
 m + (h_1 + 1)+ \dots + (h_j+1) + (k+1) = n+1 \end{equation*}
\item[iii)] If a term of the form (\ref{eq:Lambdas}) contains (considering all $\Lambda$-operators and the $\Pi^{(1)}$-operator) the arguments $\eta^{i_1}, \dots , \eta^{i_m}$ and the factor $\eta^{s}_p$ for some $m, s \in \bN$, and $i_1, \dots , i_m \in \bN \backslash \{ 0 \}$, then \[ i_1 + \dots + i_m + s = n .\]
\item[iv)] There is exactly one term having of the form (\ref{eq:Lambdas}) with $k=0$ and such that all $\Lambda$-operators are factors of $(N-\cN_+ )/N$ or of $(N+1-\cN_+ )/N$. It is given by 
\begin{equation*}\label{eq:iv1} 
\left(\frac{N-\cN_+ }{N} \right)^{n/2} \left(\frac{N+1-\cN_+ }{N} \right)^{n/2} \eta^{n}_p b_p 
\end{equation*}
if $N\in\NN$ is even, and by 
\begin{equation*} \label{eq:iv2} 
- \left(\frac{N-\cN_+ }{N} \right)^{(n+1)/2} \left(\frac{N+1-\cN_+ }{N} \right)^{(n-1)/2} \eta^{n}_p b^*_{-p}  \end{equation*}
if $N\in\NN$ is odd.
\item[v)] If the $\Pi^{(1)}$-operator in (\ref{eq:Lambdas}) is of order $k \in \bN \backslash \{ 0 \}$, it has either the form  
\[ \sum_{p_1, \dots , p_k}  b^{\flat_0}_{\alpha_0 p_1} \prod_{i=1}^{k-1} a^{\sharp_i}_{\beta_i p_{i}} a^{\flat_i}_{\alpha_i p_{i+1}}  a^*_{-p_k} \eta^{2r}_p  a_p \prod_{i=1}^k \eta^{j_i}_{p_i}  \]
or the form 
\[\sum_{p_1, \dots , p_k} b^{\flat_0}_{\alpha_0 p_1} \prod_{i=1}^{k-1} a^{\sharp_i}_{\beta_i p_{i}} a^{\flat_i}_{\alpha_i p_{i+1}}  a_{p_k} \eta^{2r+1}_p a^*_p \prod_{i=1}^k \eta^{j_i}_{p_i}  \]
for some $r \in \bN$, $j_1, \dots , j_k \in \bN \backslash \{ 0 \}$. If it is of order $k=0$, then it is either given by $\eta^{2r}_p b_p$ or by $\eta^{2r+1}_p b_{-p}^*$, for some $r \in \bN$. 
\item[vi)] For every non-normally ordered term of the form 
\[ \begin{split} &\sum_{q \in \Lambda^*} \eta^{i}_q a_q a_q^* , \quad \sum_{q \in \Lambda^*} \, \eta^{i}_q b_q a_q^* \\  &\sum_{q \in \Lambda^*} \, \eta^{i}_q a_q b_q^*, \quad \text{or } \quad \sum_{q \in \Lambda^*} \, \eta^{i}_q b_q b_q^*  \end{split} \]
appearing either in the $\Lambda$-operators or in the $\Pi^{(1)}$-operator in (\ref{eq:Lambdas}), we have $i \geq 2$.
\end{itemize}
\end{lemma}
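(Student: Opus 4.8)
The plan is to prove Lemma~\ref{lm:indu} by induction on $n$, following the structure of \cite[Lemma~3.2]{BS} and of its momentum-space reformulation \cite[Lemma~2.5]{BBCS1}. The elaborate list of properties (i)--(vi) is exactly what is needed: it is strong enough to be reproduced after one further commutator with $B$, while at the same time recording all the structural information that is later used to bound the norms of the individual terms. The base case $n=1$ is a direct computation: from $B=\tfrac12\sum_{r\in P_H}(\eta_r b^*_r b^*_{-r}-\bar\eta_r b_{-r}b_r)$ and the commutation relations (\ref{eq:comm-bp}) one gets that $[B,b_p]$ equals $-\bar\eta_p\,\tfrac{N-\cN_+}{N}\,b^*_{-p}$ plus two $\Pi^{(1)}$-operators of order one of the two shapes listed in (v); this yields the required $2=2^1\,1!$ terms, each visibly obeying (ii)--(vi), with the first one being the distinguished $k=0$ term of (iv) for $n$ odd.

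For the inductive step one assumes the statement for $n-1$, so that $\text{ad}^{(n-1)}_B(b_p)$ is a sum of $2^{n-1}(n-1)!$ terms of the form (\ref{eq:Lambdas}), and computes $\text{ad}^{(n)}_B(b_p)=[B,\text{ad}^{(n-1)}_B(b_p)]$ by commuting $B$ with each such term and distributing the commutator, by the Leibniz rule, over the $i+1$ factors $\Lambda_1,\dots,\Lambda_i,N^{-k}\Pi^{(1)}_{\sharp,\flat}(\eta^{j_1},\dots,\eta^{j_k};\eta^s_p\ph_{\alpha p})$. The heart of the argument is therefore a finite case analysis of $[B,\Lambda_w]$ and of $[B,\Pi^{(1)}_{\sharp,\flat}(\cdots)]$, carried out with (\ref{eq:ccr}), (\ref{eq:comm-bp}), (\ref{eq:comm2}) and their position-space analogues: a factor $(N-\cN_+)/N$ or $(N+1-\cN_+)/N$ commutes with $B$ only through $[\cN_+,B]$ and produces a new $\Pi^{(2)}$-block of order one; an internal creation/annihilation pair $a^{\sharp_\ell}a^{\flat_\ell}$ sitting inside a $\Lambda_w$ or inside $\Pi^{(1)}$, when contracted with one half of $B$, either produces a new $\Pi^{(2)}$-block of order one or gets replaced, via the $\delta$-term in (\ref{eq:comm-bp}), by a number together with a factor $(N-\cN_+)/N$; and a contraction with the leftmost operator $b^{\flat_0}$ or with the trailing pair $a^{\sharp_n}a^{\flat_n}(g)$ of $\Pi^{(1)}$ produces a new $\Pi^{(1)}$- or $\Pi^{(2)}$-block or, again, a number-operator factor. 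Each branch carries exactly one additional power of $\eta$, coming from the coefficient $\eta_r$ (or $\bar\eta_r$) of $B$; combined with the bookkeeping identity ``one new unit of weight per contraction'' this reproduces (ii) and (iii), and the parity of the number of creation operators at the $p$-slot changes in precisely the way recorded in (v). A count of the available slots shows that each of the $2^{n-1}(n-1)!$ terms generates exactly $2n$ new terms, so the total is $2n\cdot 2^{n-1}(n-1)!=2^n n!$.

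Two properties still have to be tracked through the induction. For (iv), the unique term with $k=0$ all of whose $\Lambda$-factors are number operators is obtained by contracting, at each step, the distinguished $b$-operator at the relevant end of the chain with the matching half of $B$ using only the $\delta$-part of (\ref{eq:comm-bp}); keeping careful account of which of $(N-\cN_+)/N$ and $(N+1-\cN_+)/N$ is produced (this depends on the $\cN_+$-shift caused by the surrounding operators) and of the alternating sign coming from the $-\bar\eta_r b_{-r}b_r$ half of $B$ gives exactly the two displayed formulas, according to the parity of $n$. For (vi), a non-normally-ordered factor $\sum_q\eta^i_q\,a_q a^*_q$ (or $b_q a^*_q$, $a_q b^*_q$, $b_q b^*_q$) appears in a new term only when the creation operator supplied by $B$, which carries one power of $\eta$, is inserted immediately to the left of an annihilation operator of the old chain, which by the inductive hypothesis already carried at least one power of $\eta$; hence $i\geq 2$. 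That $B$ only sums over $r\in P_H$ is used here and throughout to see that every $\eta$ that appears is a high-momentum coefficient.

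The main obstacle is organizational rather than conceptual. One must fix a notation for the ``active slots'' of a $\Pi^{(1)}$- or $\Pi^{(2)}$-chain precise enough to make the $2n$-fold branching of the commutator manifestly exhaustive, with no omitted or double-counted cases, and then check that all of (ii)--(vi) survive the branching \emph{simultaneously} — they are mutually reinforcing (for instance (vi) is what later makes the non-normally-ordered pieces harmless, while (ii) and (iii) control the operator norms), so the induction must carry the whole package at once. Beyond that the computation is long but entirely routine, which is why for the detailed verification we refer to \cite[Lemma~2.5]{BBCS1} and \cite[Lemma~3.2]{BS}.
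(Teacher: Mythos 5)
Your strategy — induction on $n$, with a case analysis of the commutator of $B$ against each factor of a term of the form (\ref{eq:Lambdas}) — is exactly the approach of the proof this paper relies on: the paper does not prove Lemma \ref{lm:indu} itself but cites \cite[Lemma 2.5]{BBCS1} and \cite[Lemma 3.2]{BS}, and those proofs are precisely this induction. So in spirit your proposal matches the intended argument.

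Two points need fixing. First, your base case as written is internally inconsistent: you list the term $-\eta_p\,\tfrac{N-\cN_+}{N}\,b^*_{-p}$ \emph{plus} two $\Pi^{(1)}$-operators of order one, i.e.\ three terms, while the lemma (and your own sentence) requires exactly $2^1\,1!=2$. In fact property (iii) forces $j_1+s=1$ for $n=1$, so $s=0$ and only the first shape in (v) (the one ending in $\eta_p^{2r}a_p$ with $r=0$) can occur at order one; the two order-one pieces produced by the two halves of the contraction (each carrying the prefactor $\tfrac12$ from $B$, combined using the symmetry $r\leftrightarrow -r$ and $\eta_{-r}=\eta_r$) must be merged into a single term. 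Getting this bookkeeping of the $\tfrac12$ and of the operator ordering ($b^*_{-p}(1-\cN_+/N)$ versus $(1-\cN_+/N)b^*_{-p}$, which is where the two factors $(N-\cN_+)/N$ and $(N+1-\cN_+)/N$ in (iv) come from) right in the base case is not cosmetic, since the same mechanism recurs at every inductive step. Second, the identity ``each term generates exactly $2n$ new terms'' is the quantitative heart of the counting and you only assert it; it should be derived from property (ii), which fixes the total number of creation/annihilation slots in a term of weight $n$, together with the factor $2$ from the two halves of $B$. As it stands the proposal is a faithful roadmap of the cited proof rather than a self-contained argument, which is acceptable given that the paper itself defers the details, but the base-case count must be corrected before the induction can start.
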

With Lemma \ref{lm:indu}, it follows from (\ref{eq:BCH}) that, if $\| \eta \|$ is sufficiently small, the series
\begin{equation}\label{eq:conv-serie}
\begin{split} e^{-B(\eta)} b_p e^{B (\eta)} &= \sum_{n=0}^\infty \frac{(-1)^n}{n!} \text{ad}_{B(\eta)}^{(n)} (b_p) \\
e^{-B(\eta)} b^*_p e^{B (\eta)} &= \sum_{n=0}^\infty \frac{(-1)^n}{n!} \text{ad}_{B(\eta)}^{(n)} (b^*_p) \end{split} \end{equation}
converge absolutely (the proof is a translation to momentum space of \cite[Lemma 3.3]{BS}).

As explained after (\ref{eq:bp-de}), the generalized creation and annihilation operators $b^*_p, b_p$ are close to the standard creation and annihilation operators on states with few excitations, in the sense that with $\cN_+ \ll N$. In particular, on such states one can expect the action of the generalized Bogoliubov transformation $e^B$ to be close to the action of a standard Bogoliubov transformation. This can be made more precise through the remainder operators 
\begin{equation} \label{eq:defdq}
d_q =\sum_{m\geq 0}\frac{1}{m!} \Big [\text{ad}_{-B(\eta)}^{(m)}(b_q) - \eta_q^m b_{\alpha_m q}^{\sharp_m }  \Big],\hspace{0.4cm} d^*_q =\sum_{m\geq 0}\frac{1}{m!} \Big [\text{ad}_{-B(\eta)}^{(m)}(b^*_q) - \eta_q^m b_{\alpha_m q}^{\sharp_{m+1}}  \Big]\end{equation}
where $q \in \L^*_+$, $ (\sharp_m, \alpha_m) = (\cdot, +1)$ if $m$ is even and $(\sharp_m, \alpha_m) = (*, -1)$ if $m$ is odd. It follows from (\ref{eq:conv-serie}) that 
\begin{equation}\label{eq:ebe} 
e^{-B(\eta)} b_q e^{B(\eta)} = \gamma_q  b_q +\s_q b^*_{-q} + d_q, \hspace{1cm} e^{-B(\eta)} b^*_q e^{B(\eta)} = \g_q b^*_q +\s_q b_{-q} + d^*_q  \end{equation} 
where we set $\g_q = \cosh (\eta_q)$ and $\s_q = \sinh (\eta_q)$. Given $x \in \Lambda$, it is also useful to define the operator valued distributions $\check{d}_x, \check{d}^*_x$ through
\begin{equation}\label{eq:ebex} e^{-B(\eta)} \check{b}_x e^{B(\eta)} = b ( \check{\g}_x)  +  b^* (\check{\s}_x) + \check{d}_x, \qquad 
e^{-B(\eta)} \check{b}^*_x e^{B(\eta)} = b^* ( \check{\gamma}_x)  +  b (\check{\s}_x) + \check{d}^*_x
\end{equation}
where $\check{\gamma}_x (y) = \sum_{q \in \Lambda^*} \cosh (\eta_q) e^{iq \cdot (x-y)}$ and $\check{\s}_x (y) = \sum_{q \in \Lambda^*} \sinh (\eta_q) e^{iq \cdot (x-y)}$.  

The next lemma confirms the intuition that remainder operators are small on states with $\cN_+ \ll N$, and provides estimates that will be crucial for our analysis. Its proof can be found in \cite{BBCS}.
\begin{lemma} \label{lm:dp} 
Let $B$ be defined as in (\ref{eq:genBog}), with coefficients $\eta_p$ as in (\ref{eq:defetaH}) and with $\alpha > 2\kappa$. Let $n \in \bN$, $p \in \Lambda^*$ and let $d_p$ be defined as in (\ref{eq:defdq}). There exists $C > 0$ such that  
\begin{equation}\label{eq:d-bds}
\begin{split} 
\| (\cN_+ + 1)^{n/2} d_p \xi \| &\leq \frac{C}{N} \left[ |\eta_p| \| (\cN_+ + 1)^{(n+3)/2} \xi \| + \| \eta \| \| b_p (\cN_+ + 1)^{(n+2)/2} \xi \| \right], \\ 
\| (\cN_+ + 1)^{n/2} d_p^* \xi \| &\leq \frac{C}{N} \, \| \eta \| \,\| (\cN_+ +1)^{(n+3)/2} \xi \| \end{split}  \end{equation}
for all $\xi \in \cF_+^{\leq N}$ and $N$ large enough. With $\bar{\bar{d}}_p = d_p + N^{-1} \sum_{q \in \L_+^*} \eta_q b_q^* a_{-q}^* a_p$, we also have, for $p \not \in \text{supp } \eta$, the improved bound 
\begin{equation}\label{eq:off} \| (\cN_+ + 1)^{n/2} \bar{\bar{d}}_p \xi \| \leq \frac{C}{N} \| \eta \|^2 \| a_p (\cN_+ + 1)^{(n+2)/2} \xi \| \end{equation}
In position space, with $\check{d}_x$ defined as in (\ref{eq:ebex}), we find   
 \begin{equation}\label{eq:dxy-bds} 
 \| (\cN_+ + 1)^{n/2} \check{d}_x \xi \| \leq  \frac{C }{N}\, \| \eta \| \Big[ \,\| (\cN_+ + 1)^{(n+3)/2} \xi \| +  \| b_x (\cN_+ + 1) ^{(n+2)/2}\xi \| \Big] 
\end{equation}
Furthermore, letting $\check{\bar{d}}_x = \check{d}_x  + (\cN_+ / N) b^*(\check{\eta}_x)$, we find 
\be \begin{split} \label{eq:splitdbd}
\| (\cN_+ &+ 1)^{n/2} \check{a}_y \check{\bar{d}}_x \xi \| \\ &\leq \frac{C}{N} \, \Big[ \, \|\eta \|^2  \| (\cN_+ + 1)^{(n+2)/2} \xi \|  + \| \eta \| |\check{\eta} (x-y)|  \| (\cN +1)^{(n+2)/2}  \xi \| \\
& \hspace{1cm} + \| \eta \| \| \check{a}_x (\cN_++1)^{(n+1)/2} \xi \| +  \|\eta \|^2 \|\check{a}_y (\cN_+ + 1)^{(n+3)/2} \xi \|\\
& \hspace{1cm}  + \| \eta \| \| \check{a}_x \check{a}_y (\cN +1)^{(n+2)/2}  \xi \|   \, \Big]
\end{split}\ee
and, finally, 
\begin{equation}\label{eq:ddxy}
\begin{split} 
\| (\cN_+ &+ 1)^{n/2} \check{d}_x \check{d}_y \xi \|  \\ &\leq \frac C {N^2} \Big[ \; \|\eta\|^2  \| (\cN_++ 1)^{(n+6)/2} \xi \| + \| \eta \| |\check{\eta} (x-y)|  \| (\cN_+ + 1)^{(n+4)/2}  \xi \| \\ 
&\hspace{1cm} + \|\eta \|^2 \| {a}_x (\cN_+ + 1)^{(n+5)/2} \xi \|   + \| \eta \|^2 
\|{a}_y (\cN_+ + 1)^{(n+5)/2} \xi \| \\ &\hspace{1cm}  
+ \| \eta \|^2\, \|{a}_x {a}_y (\cN_+ +  1)^{(n+4)/2} \xi \| \; \Big] 
\end{split} \end{equation}
for all $\xi \in \cF^{\leq n}_+$. 
\end{lemma}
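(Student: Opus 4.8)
\medskip
\noindent\textbf{Proof strategy.} The plan is to read the estimates off the explicit term-by-term description of the nested commutators provided by Lemma~\ref{lm:indu}, together with the absolutely convergent expansions \eqref{eq:conv-serie}. By the definition \eqref{eq:defdq}, $d_q=\sum_{m\geq 1}\tfrac1{m!}R_m$, where $R_m$ is $\operatorname{ad}^{(m)}_{-B(\eta)}(b_q)$ with exactly the ``bare'' leading term $\eta_q^m b^{\sharp_m}_{\alpha_m q}$ of part~iv) of Lemma~\ref{lm:indu} removed. I would split $R_m$ into: (a) the \emph{density correction} $\big[\big(\tfrac{N-\cN_+}{N}\big)^{m/2}\big(\tfrac{N+1-\cN_+}{N}\big)^{m/2}-1\big]\eta_q^m b_q$ for $m$ even, with the analogous expression involving $b^*_{-q}$ for $m$ odd, which is what survives after subtracting the bare operator from the unique term of part~iv); and (b) the remaining $2^m m!-1$ terms of the form \eqref{eq:Lambdas}, each of which, by part~iv), contains \emph{at least one} $\Pi$-operator of strictly positive order --- a $\Pi^{(2)}$-operator \eqref{eq:Pi2-ind} of order $h\geq 1$, or the $\Pi^{(1)}$-operator of order $k\geq 1$. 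One then bounds (a) and (b) for fixed $m$ and sums; the factors $1/m!$ together with the smallness $\|\eta_H\|\to 0$ from \eqref{eq:etaHL2} guarantee convergence. The identical scheme applied to $\operatorname{ad}^{(m)}_{-B(\eta)}(b^*_q)$ treats $d^*_q$.

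\noindent\textbf{Building blocks.} The two ingredients I would set up first are: (i) operator bounds for the $\Pi^{(1)}$- and $\Pi^{(2)}$-operators, obtained from $\|b_p\xi\|,\|b^*_p\xi\|\leq\|(\cN_++1)^{1/2}\xi\|$, the a priori bounds on $a_p,a^*_p$, and the Hilbert--Schmidt estimate $\|\int\check f(x-y)\,\check a^\sharp_x\check a^\flat_y\,dx\,dy\ \xi\|\leq C\|f\|_{\ell^2}\|(\cN_++1)\xi\|$ (with $\|f\|_{\ell^1}$ replacing $\|f\|_{\ell^2}$ for a non-normally-ordered pair), which yield $\|N^{-h}\Pi^{(2)}_{\sharp',\flat'}(\eta^{z_1},\dots,\eta^{z_h})(\cN_++1)^{j/2}\xi\|\leq (C/N)^{h}\prod_i\|\eta^{z_i}\|\,\|(\cN_++1)^{(j+2h)/2}\xi\|$ and its $\Pi^{(1)}$-analogue, every density $\Lambda$-factor being bounded by $1$; and (ii) the elementary fact $\|\eta^{z}\|_{\ell^2}\leq\|\eta_H\|_\infty^{z-1}\|\eta_H\|_{\ell^2}\leq\|\eta_H\|^{z}$ for every integer $z\geq 1$ and $N$ large, valid because $\|\eta_H\|_\infty\leq\|\eta_H\|_{\ell^2}=\|\eta_H\|$. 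The crucial bookkeeping observation, which is exactly parts~ii) and iii) of Lemma~\ref{lm:indu}: in a term with $j$ factors \eqref{eq:Pi2-ind} of orders $h_1,\dots,h_j$ and a $\Pi^{(1)}$-operator of order $k$, the explicit prefactor is $N^{-(h_1+\cdots+h_j+k)}$, the $\eta$-factors carry a bound $\|\eta_H\|^{h_1+\cdots+h_j+k}$ by (ii), and the surplus power of $(\cN_++1)$ beyond the exponent on the right-hand side of \eqref{eq:d-bds} never exceeds $h_1+\cdots+h_j+k-1$, so converting that surplus into powers of $N$ via $\cN_+\leq N$ on $\cF_+^{\leq N}$ leaves precisely the single factor $N^{-1}$ while the $\|\eta_H\|$-powers beat the combinatorial count $2^m m!$ in the sum over $m$.

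\noindent\textbf{Assembling the bounds.} For the density correction (a) I would use the operator inequality $0\leq 1-\big(\tfrac{N-\cN_+}{N}\big)^{a}\big(\tfrac{N+1-\cN_+}{N}\big)^{b}\leq C(a+b)\,\cN_+/N$ on $\cF_+^{\leq N}$; multiplying by $\eta_q^m$, applying to $(\cN_++1)^{n/2}\xi$, bounding $b^{\sharp_m}_{\alpha_m q}$ by $b_q$ for $m$ even and by $b^*_{-q}$ (giving a further $(\cN_++1)^{1/2}$) for $m$ odd, and summing $\sum_{m\geq 1}\tfrac m{m!}|\eta_q|^{m}\leq C|\eta_q|$ shows that the $m$-odd density corrections feed the $|\eta_p|\,\|(\cN_++1)^{(n+3)/2}\xi\|$-term of \eqref{eq:d-bds} and the $m$-even ones the $\|\eta\|\,\|b_p(\cN_++1)^{(n+2)/2}\xi\|$-term (using $|\eta_q|\leq\|\eta_H\|$). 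For (b) the building-block estimates and the bookkeeping above deliver the remaining part of \eqref{eq:d-bds}, and the identical computation for $\operatorname{ad}^{(m)}_{-B(\eta)}(b^*_q)$ gives the (cleaner) bound for $d^*_q$. For the refined estimate \eqref{eq:off} I would observe that when $p\notin\operatorname{supp}\eta$ part~v) of Lemma~\ref{lm:indu} forces every surviving term to end in $a_p$ (all $b^*_{-p}$- and $a^*_p$-possibilities carry a vanishing power $\eta_p^{\geq 1}$), that the unique $m=1$ contribution to $d_p$ equals $\pm N^{-1}\sum_q\eta_q b^*_q a^*_{-q}a_p$ and is therefore cancelled in $\bar{\bar{d}}_p$, and that every remaining term has commutator order $m\geq 2$ and hence $\eta$-factors of total order at least $2$; with $\prod_i\|\eta^{z_i}\|\leq\|\eta_H\|^{2}$ from (ii) and the single surplus $N^{-1}$ this yields \eqref{eq:off}.

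\noindent\textbf{Position space and the main obstacle.} The position-space estimates \eqref{eq:dxy-bds}, \eqref{eq:splitdbd} and \eqref{eq:ddxy} follow by running the same argument for $\check d_x$ from \eqref{eq:ebex}, using the $\Pi$-operators in the position-space form \eqref{eq:Pi2-pos}, \eqref{eq:Pi1-pos}, replacing $\eta_q$ by the kernels $\check\gamma_x,\check\sigma_x,\check\eta_x$, and invoking in addition the pointwise bound $\|\check\eta\|_\infty\leq CN$ from \eqref{eq:etainfbnd} alongside the $L^2$-norm $\|\eta_H\|$; here part~vi) of Lemma~\ref{lm:indu} is used to guarantee that every non-normally-ordered pair carries $\eta^i$ with $i\geq 2$, so that the delta function produced when normal-ordering it is multiplied by an $\ell^1$-summable function and no divergence occurs. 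The bounds \eqref{eq:splitdbd} and \eqref{eq:ddxy} then follow by commuting $\check a_y$ (respectively the second distribution) to the right, which generates the additional $|\check\eta(x-y)|$-terms, and applying \eqref{eq:dxy-bds}. I expect the genuinely delicate step to be the bookkeeping in the second paragraph: verifying that, after summing the $2^m m!$ terms for each $m$ and then over $m$, the powers of $(\cN_++1)$, $|\eta_p|$, $\|\eta_H\|$ and the single factor $N^{-1}$ all come out exactly as claimed --- in particular that no term ever forces trading more powers of $N$ (via $\cN_+\leq N$) than the explicit $N^{-h}$, $N^{-k}$ prefactors supply. This is precisely where parts~ii) and iii) of Lemma~\ref{lm:indu} are indispensable; I would follow the argument of \cite{BBCS} closely at this point.
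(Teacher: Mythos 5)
Your proposal is correct and follows essentially the route the paper intends: the paper defers the proof of this lemma to \cite{BBCS}, and your strategy (expand $d_p$ via \eqref{eq:defdq}, isolate the density-correction term coming from part iv) of Lemma \ref{lm:indu}, bound the remaining $2^m m!-1$ terms through the $\Pi^{(1)}$/$\Pi^{(2)}$ estimates and the bookkeeping of parts ii)--iii), use part v) for \eqref{eq:off} and part vi) plus $\|\check{\eta}\|_\infty \leq CN$ in position space) is precisely the argument of that reference, and it is the same template the paper itself carries out in detail for the companion commutator bounds of Lemma \ref{lm:commdpNres}.
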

  
We will also need to control commutators of the remainder operators $d_p, d_p^*$ with restricted number of particles operators $ \cN_{\leq cN^\gamma}$, where $c\geq0$ and $ \gamma\geq 0$ (recall here the definitions \eqref{eq:defcN>}). 

\begin{lemma}\label{lm:commdpNres}
Let $B$ be defined as in (\ref{eq:genBog}), with coefficients $\eta_p$ as in (\ref{eq:defetaH}) and with $\alpha > 2\kappa$. Let $n \in \bN$, $p \in \Lambda^*$ and let $d_p$ be defined as in (\ref{eq:defdq})
Moreover, given $ c\geq 0$ and $\gamma\geq 0$, denote by $ \chi\in \ell^2(\Lambda^*_+)$ the characteristic function of the set $ \{p\in\Lambda^*_+: |p|\leq cN^\gamma\}$. Then there exists $C>0$ s.t. 
\begin{equation} \label{eq:ngdp}
		\begin{split}
		&\big\| (\mathcal{N}_+ +1)^{n/2}[\mathcal{N}_{\leq cN^{\gamma}},d_p] \xi\big\| \\
		& \hspace{1cm}\leq \frac{C}N\left[ \, |\eta_p|   \| (\cN_+ + 1)^{(n+3)/2} \xi \|  + \big[|\chi_p|\|\eta\|  + \|\chi\eta\| \big]\| a_p (\cN_+ + 1)^{(n+2)/2} \xi \| \right],\\
		&\big\| (\mathcal{N}_+ +1)^{n/2}[\mathcal{N}_{\leq cN^{\gamma}},d^*_p] \xi\big\| \\
		&\hspace{1cm} \leq\frac{C}N\big[ \, |\eta_p|  + |\chi_p|\|\eta\|  + \|\chi\eta\| \big] \| (\cN_+ + 1)^{(n+3)/2} \xi \| 
		\end{split}
		\end{equation}
for all $p \in \Lambda_+^*, \xi \in \mathcal{F}_+^{\leq N}$. With $\bar{\bar{d}}_p = d_p + N^{-1} \sum_{q \in \Lambda_+^*} \eta_q b_q^* a_{-q}^* a_p$, we also have, for $p \notin \text{supp } \eta$, the improved bound
\begin{equation}\label{eq:ngdpbar}
\big\| (\mathcal{N}_+ +1)^{n/2}[\mathcal{N}_{\leq cN^{\gamma}}, \bar{\bar{d}}_p] \xi\big\| \leq\frac{C}{N}\big [  |\chi_p| \|\eta\|^2+\|\eta \chi\|  \|\eta\|  \big]  \big\| a_p(\mathcal{N}_++1)^{(n+2)/2} \xi \big\|.
\end{equation}
In position space, with the operators $\check{d}_x$ defined as in \eqref{eq:ebex}, let $ \check{\chi}_x\in L^2(\Lambda)$ be defined by $ \check{\chi}_x (y) = \chi(y-x)$ (s.t. $ \check{\chi}_x$ has Fourier coefficients $ \chi_p e^{-ipx}$). Then
		\begin{equation}
		\begin{split}\label{eq:ngdxy-bds} 
 		&\| (\cN_+ + 1)^{n/2} [\mathcal{N}_{\leq cN^{\gamma}},\check{d}_x ]\xi \|\\
		&\hspace{1cm}\leq   \frac{C }{N}\, \| \chi\eta \|\Big[ \,  \| (\cN_+ + 1)^{(n+3)/2} \xi \| +  \| \check{b}_x (\cN_+ + 1) ^{(n+2)/2}\xi \|\Big] \\
		&\hspace{1.5cm}+\frac CN\| \eta \|  \| b(\check{\chi}_x) (\cN_+ + 1) ^{(n+2)/2}\xi \|
		\end{split}
		\end{equation}
for all $\xi \in \cF^{\leq n}_+$. Furthermore, setting $\check{\bar{d}}_x = \check{d}_x  + (\cN_+ / N) b^*(\check{\eta}_x)$, we obtain 
		\be \begin{split} \label{eq:ngsplitdbd}
		\| (\cN_+ &+ 1)^{n/2}[ \cN_{\leq c N^\gamma},  \check{b}_y \check{\bar{d}}_x] \xi \| \\ 
		&\leq \frac{C}{N} \, \Big[ \, \|\chi \eta \|\|\eta\|  \| (\cN_+ + 1)^{(n+2)/2} \xi \| + \| \chi \eta \| \| \check{a}_x (\cN_++1)^{(n+1)/2} \xi \|   \\
		&\hspace{1cm} + \| \eta \| \| a(\check{\chi}_x) (\cN_++1)^{(n+1)/2} \xi \| +  \|\eta \|^2 \| a(\check{\chi}_y) (\cN_+ + 1)^{(n+3)/2} \xi \| \\
		& \hspace{1cm}  + \|\chi\eta\| \|\eta \| \|\check{a}_y (\cN_+ + 1)^{(n+3)/2} \xi \|\, \Big]\\
		 &\hspace{0.5cm} +\frac CN\Big[\,  \| \chi\eta \| |\check{\eta} (x-y)| + \| \eta \| |(\check{\chi}\ast \check{\eta}) (x-y)| \,\Big] \| (\cN +1)^{(n+2)/2}  \xi \|\\
		&\hspace{0.5cm} +\frac CN\Big[\, \| \eta \| \| a(\check{\chi}_x)\check{a}_y  (\cN +1)^{(n+2)/2}  \xi \| + \| \eta \| \|a(\check{\chi}_y) \check{a}_x  (\cN +1)^{(n+2)/2}  \xi \|  \\
		& \hspace{1.5cm}  + \| \chi\eta \| \| \check{a}_x \check{a}_y (\cN +1)^{(n+2)/2}  \xi \|  \, \Big] 
		\end{split}\ee
as well as
		\begin{equation}\label{eq:ngddxy}
		\begin{split} 
		\| (\cN_+ &+ 1)^{n/2} [\cN_{\leq c N^\gamma}, \check{d}_x \check{d}_y] \xi \|  \\ 
		&\leq \frac C {N^2} \|\chi \eta \| \|\eta\|\Big[ \;   \| (\cN_++ 1)^{(n+6)/2} \xi \| +  \| \check{a}_x (\cN_+ + 1)^{(n+5)/2} \xi \|   \\ 
		&\hspace{2.5cm}  + \|\check{a}_y (\cN_+ + 1)^{(n+5)/2} \xi \|+\|\check{a}_x \check{a}_y (\cN_+ +  1)^{(n+4)/2} \xi \|\,\Big] \\ 
		&\hspace{0.5cm} +\frac C{N^2} \|\eta\|^2\Big[\,   \| a(\check{\chi}_x)  (\cN_++ 1)^{(n+5)/2} \xi \|  + \| a(\check{\chi}_y) (\cN_+ + 1) ^{(n+5)/2}\xi \|\,\Big] \\
		 &\hspace{0.5cm} +\frac C{N^2} \Big[\,  \|\chi\eta\| |\check{\eta} (x-y)|+ \| \eta \| |(\check{\chi}\ast \check{\eta}) (x-y)|  \,\Big] \| (\cN +1)^{(n+4)/2}  \xi \|\\
		&\hspace{0.5cm} +\frac C{N^2}\| \eta \|^2\Big[\,  \| a(\check{\chi}_y)\check{a}_x  (\cN +1)^{(n+4)/2}  \xi \| +\| a(\check{\chi}_x) \check{a}_y (\cN_++ 1)^{(n+4)/2} \xi \|\,\Big].
		\end{split} \end{equation} 
for all $\xi \in \cF^{\leq n}_+$.

\end{lemma}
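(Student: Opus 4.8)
The plan is to mimic, line by line, the proof of Lemma \ref{lm:dp} from \cite{BBCS}, inserting the commutator with $\cN_{\leq cN^\gamma}$ and tracking the extra $\chi$-factors that this commutator produces. The starting point is the expansion of $d_p = \sum_{m\geq 1} \frac{1}{m!}\big[\operatorname{ad}^{(m)}_{-B(\eta)}(b_p) - \eta_p^m b^{\sharp_m}_{\alpha_m p}\big]$ into $\Pi^{(1)}$- and $\Pi^{(2)}$-operators guaranteed by Lemma \ref{lm:indu}. The key observation is that $\cN_{\leq cN^\gamma} = \sum_{|q|\leq cN^\gamma} a_q^* a_q$ commutes with every $\Pi^{(2)}$-operator of the form (\ref{eq:Pi2-ind}) built out of the $\eta^{z_i}$ (since those operators are number-preserving and, more importantly, the internal momenta are summed against $\eta$ which is supported on $P_H$, so $|q|\geq N^\alpha \gg cN^\gamma$ and the momentum labels never fall in the cutoff region — except for the external momentum $p$), and it commutes with the factors $(N-\cN_+)/N$. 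Hence, when we compute $[\cN_{\leq cN^\gamma}, \text{(a term of the form (\ref{eq:Lambdas}))}]$ by the Leibniz rule, the only surviving contributions come from (a) the external operators carrying momentum $p$, producing a factor $\chi_p = \chi(|p|\leq cN^\gamma)$, and (b) — and this is the new phenomenon compared with a plain number operator — the \emph{internal} creation/annihilation pairs $a^{\sharp_i}_{\beta_i p_i} a^{\flat_i}_{\alpha_i p_{i+1}}$ where one of the two adjacent momenta equals $\pm p_i$ with $|p_i|\leq cN^\gamma$; but since each such momentum is weighted by $\eta_{p_i}$ and $\eta$ is supported on $P_H = \{|p|\geq N^\alpha\}$, for $N$ large we have $N^\alpha > cN^\gamma$ (as $\gamma \leq \alpha$), so these internal contributions in fact \emph{vanish}. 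The only place a genuine $\chi$ can be carried by an internal line is through the factor $\eta^s_p$ or $\eta^{2r}_p$ attached to the external momentum $p$ in part v) of Lemma \ref{lm:indu}, and through a pair involving $p$ itself. Collecting: $[\cN_{\leq cN^\gamma}, d_p]$ is a sum of the \emph{same} $\Pi$-operators as $d_p$, each multiplied by an integer coefficient bounded uniformly in $N$ (the number of $p$-labelled operators, at most $3$), \emph{times} $\chi_p$; plus possibly terms in which $\chi_p$ is replaced by $\chi_{p_k}$ for an internal summation index $p_k$, which then gets weighted by $\eta_{p_k}$ and contributes $\|\chi\eta\|$ after Cauchy–Schwarz. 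This is exactly the structure of the right-hand sides of (\ref{eq:ngdp})–(\ref{eq:ngddxy}): wherever the bound in Lemma \ref{lm:dp} has a $\|\eta\|$ or $|\eta_p|$ or $|\check\eta(x-y)|$, the commutator bound has the corresponding $\|\chi\eta\|$ or $|\chi_p|\|\eta\|$ or $\|\chi\eta\|\,\|\eta\|$, or a term with $\chi$ inserted inside a position-space factor ($b(\check\chi_x)$, $a(\check\chi_x)$, $(\check\chi\ast\check\eta)(x-y)$, etc.).

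Concretely I would proceed as follows. First, establish the momentum-space bounds (\ref{eq:ngdp}): write $[\cN_{\leq cN^\gamma}, d_p]$ using the convergent series (\ref{eq:defdq}) and the commutation rule $[\cN_{\leq cN^\gamma}, a^\sharp_q] = \pm\chi(|q|\leq cN^\gamma) a^\sharp_q$ together with $[\cN_{\leq cN^\gamma},(N-\cN_+)/N]=0$; apply the Leibniz rule to each term (\ref{eq:Lambdas}), discard (by $\gamma \leq \alpha$ and $N$ large) every contribution where the commutator hits an $\eta$-weighted internal line, and bound the remaining $\Pi$-operators exactly as in \cite[proof of Lemma 3.5 / Appendix]{BBCS}, using the standard estimates for $\Pi^{(1)}$- and $\Pi^{(2)}$-operators in terms of $\|\eta\|$, $\|\cN_+^{1/2}\cdot\|$, and $\|a_p\cdot\|$ — the only change being that one extra sum over an internal index now carries $\chi_{p_i}$, giving $\|\chi\eta\|$ in place of $\|\eta\|$ when that index is summed. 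The improved bound (\ref{eq:ngdpbar}) for $p\notin\operatorname{supp}\eta$ follows the same route applied to $\bar{\bar d}_p$, using that the leading term $N^{-1}\sum_q \eta_q b_q^* a_{-q}^* a_p$ has been subtracted so that the remaining series gains an extra power of $\|\eta\|$ and of $1/N$, exactly as in (\ref{eq:off}). Then the position-space estimates (\ref{eq:ngdxy-bds}), (\ref{eq:ngsplitdbd}), (\ref{eq:ngddxy}) are obtained by Fourier-transforming: write $\check d_x = \sum_p e^{ipx} d_p / \text{(normalization)}$, commute $\cN_{\leq cN^\gamma}$ inside, use the already-proven momentum bounds together with the elementary identities $\sum_p \chi_p e^{ipx} a^\sharp_p = a^\sharp(\check\chi_x)$ (resp. $b^\sharp(\check\chi_x)$) and $\sum_p \chi_p \eta_p e^{ipx} = (\check\chi\ast\check\eta)(x)$, and reproduce the chain of Cauchy–Schwarz steps from \cite{BBCS} that produced (\ref{eq:dxy-bds}), (\ref{eq:splitdbd}), (\ref{eq:ddxy}). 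For $\check{\bar d}_x = \check d_x + (\cN_+/N) b^*(\check\eta_x)$ one commutes $\cN_{\leq cN^\gamma}$ through the added term as well, noting $[\cN_{\leq cN^\gamma}, \cN_+] = 0$ and $[\cN_{\leq cN^\gamma}, b^*(\check\eta_x)] = b^*(\widecheck{\chi\eta}_x)$, which produces precisely the $\|\eta\|\,|(\check\chi\ast\check\eta)(x-y)|$ and $a(\check\chi_x)$ terms visible on the right-hand sides.

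The main obstacle is bookkeeping rather than conceptual: one must verify carefully, term by term in the expansion of Lemma \ref{lm:indu}, that the commutator $[\cN_{\leq cN^\gamma},\cdot]$ never lands on an internal $\eta$-weighted line in a way that survives — i.e. that the constraint $|p_i|\geq N^\alpha > cN^\gamma$ really does kill those contributions for all $N$ large (this uses $\gamma\leq\alpha$ crucially, and is the reason the hypothesis is stated that way), and that the \emph{non-normally-ordered} terms flagged in Lemma \ref{lm:indu} vi) (those with $i\geq2$) still behave after commutation — here the extra $\chi$-factor only helps. A secondary subtlety is that in the position-space estimates one has to be careful to distribute the single $\chi$ correctly: after Fourier transform the cutoff can attach to the $x$-variable, to the $y$-variable, or to an internal convolution, which is exactly why (\ref{eq:ngsplitdbd}) and (\ref{eq:ngddxy}) contain the three separate groups of terms with $\check\chi_x$, $\check\chi_y$, and $\check\chi\ast\check\eta$ respectively. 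Once these two points are checked, every individual estimate is a line-by-line copy of the corresponding estimate in \cite{BBCS} with one symbol replaced, and no new analytical input is needed.
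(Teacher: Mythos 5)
Your overall strategy coincides with the paper's: expand $d_p$ via Lemma \ref{lm:indu}, apply the Leibniz rule for $[\cN_{\leq cN^\gamma},\cdot]$ term by term, use $[\cN_{\leq cN^\gamma},(N-\cN_+)/N]=0$ and $[\cN_{\leq cN^\gamma},a_q^\sharp]=\pm\chi_q a_q^\sharp$, and then repeat the Cauchy--Schwarz estimates of Lemma \ref{lm:dp} with one factor of $\eta$ replaced by $\chi\eta$; the position-space statements are obtained the same way, with $\sum_p\chi_p\eta_p^m e^{-ipx}$ giving $\check\chi*\check\eta_x^{(m)}$ and hence the $a(\check\chi_x)$, $b(\check\chi_x)$ and $(\check\chi*\check\eta)(x-y)$ terms. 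This is exactly the paper's proof.

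There is, however, one genuine flaw in your plan: the claim that the commutator contributions from the \emph{internal} $\eta$-weighted lines vanish "by $\gamma\leq\alpha$ and $N$ large." The lemma's hypotheses are only $c\geq 0$ and $\gamma\geq 0$ --- there is no assumption that $cN^\gamma<N^\alpha$ (even with $\gamma=\alpha$ one may take $c\geq 1$, so the cutoff region can intersect $P_H=\operatorname{supp}\eta_H$). When it does, commuting $\cN_{\leq cN^\gamma}$ with an internal pair $a^{\flat_{t-1}}_{\alpha_{t-1}p_t}a^{\sharp_t}_{\beta_t p_t}$ does \emph{not} give zero: it replaces the argument $\eta^{z_t}$ of the $\Pi^{(1)}$- or $\Pi^{(2)}$-operator by $(F(\flat_{t-1})+F(\sharp_t))\,\chi\eta^{z_t}$, and since $\|\chi\eta^{z_t}\|\leq\|\chi\eta\|\,\|\eta\|^{z_t-1}$ this is precisely the origin of the $\|\chi\eta\|$ factors in \eqref{eq:ngdp}--\eqref{eq:ngddxy}; these factors are not decoration but are needed to state a bound valid for all admissible $c,\gamma$. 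You do describe this mechanism in passing ("terms in which $\chi_p$ is replaced by $\chi_{p_k}$... contributes $\|\chi\eta\|$"), but your explicit instruction to "discard every contribution where the commutator hits an $\eta$-weighted internal line" contradicts it and, if followed, would produce a bound without the $\|\chi\eta\|$ terms --- a statement that is false whenever $\chi\eta\neq 0$. The fix is simply to keep those terms, as in the paper's identities \eqref{eq:commNlow2}--\eqref{eq:commNlow3}. A minor further imprecision: the improved bound \eqref{eq:ngdpbar} for $\bar{\bar d}_p$ gains an extra power of $\|\eta\|$ (because the $m=1$ terms and all $\ell_1>0$ terms vanish when $\eta_p=0$), not an extra power of $1/N$.
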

\begin{proof}
For simplicity, we focus on the case $n=0$; the cases where $0\neq n\in \mathbb{Z}$ can be treated similarly, using that powers of $\cN_+ $ can be commuted easily with $ d_p ,d^*_p $ and $\check{d}_x, \check{d}_y $. 

Let us start with the first bound in \eqref{eq:ngdp}. By \eqref{eq:defdq}, linearity of the commutator with $\mathcal{N}_{\leq cN^{\gamma}}$ and by the triangle inequality, it is enough to estimate the r.h.s. of 
		\begin{equation}\label{eq:commNlow0} \| [\mathcal{N}_{\leq cN^{\gamma}},d_p] \xi\big\| \leq \sum_{m\geq 0}\frac1{m!} \Big\| \big[ \mathcal{N}_{\leq cN^{\gamma}}, \operatorname{ad}_{-B(\eta)}^{(m)}(b_p) - \eta_p^mb_{\alpha_m p }^{\sharp m} \big] \xi\Big\|.\end{equation}
Using Lemma \ref{lm:indu} and the fact that $ \cN_{\leq cN^\gamma}$ trivially commutes with the number of particles operator $\cN_+$, we can bound $\| [ \mathcal{N}_{\leq cN^{\gamma}}, \operatorname{ad}_{-B(\eta)}^{(m)}(b_q) - \eta_q^mb_{\alpha_m q }^{\sharp m} ] \xi\|$ by the sum of 
		\begin{equation}\label{eq:N-term} 
\left\| \left[ \left( \frac{N- \cN_+}{N} \right)^{\frac{m+ (1-\alpha_m)/2}{2}} \left( \frac{N+1-\cN_+}{N} \right)^{\frac{m-(1-\alpha_m)/2}{2}} - 1 \right] \eta_p^m \big[\cN_{\leq cN^\gamma}, b^{\sharp_m}_{\alpha_m p}\big] \xi \right\| \end{equation}
and exactly $2^m m! - 1$ terms of the form
\begin{equation}\label{eq:L-term} \left\| \big[ \cN_{\geq cN^\gamma} , \Lambda_1 \dots \Lambda_{i_1} N^{-k_1} \Pi^{(1)}_{\sharp,\flat} (\eta^{j_1} , \dots , \eta^{j_{k_1}} ; \eta^{\ell_1}_p \ph_{\alpha_{\ell_1} p})\big]\xi \right\| \end{equation}
where $i_1, k_1, \ell_1 \in \bN$, $j_1, \dots , j_{k_1} \in \bN \backslash \{ 0 \}$ and where each $\Lambda_r$-operator is either a factor $(N-\cN_+ )/N$, a factor $(N+1-\cN_+ )/N$ or a $\Pi^{(2)}$-operator of the form 
\begin{equation}\label{eq:Pi2-ex}
N^{-h} \Pi^{(2)}_{\underline{\sharp}, \underline{\flat}} (\eta^{z_1} , \dots, \eta^{z_h}) 
\end{equation}
with $h, z_1, \dots , z_h \in \bN \backslash \{ 0 \}$. Since we are considering the term (\ref{eq:N-term}) separately, each term of the form (\ref{eq:L-term}) must have either $k_1 > 0$ or it must contain at least one $\Lambda$-operator having the form (\ref{eq:Pi2-ex}). For $m=0$, (\ref{eq:N-term}) vanishes and for $m > 0$, it follows from 
		\[ \big[\cN_{\leq cN^\gamma}, b^{\sharp_m}_{\alpha_m p}\big]  = F(\sharp_m) \chi_p b^{\sharp_m}_{\alpha_m p}, \]
where set $ F(\sharp) = 1 $ if $ \sharp = *$ and $ F(\sharp)=-1$ if $ \sharp = \cdot$, that
\[ \begin{split} &\left\| \left[ \left( \frac{N- \cN_+}{N} \right)^{\frac{m+ (1-\alpha_m)/2}{2}} \left( \frac{N+1-\cN_+}{N} \right)^{\frac{m-(1-\alpha_m)/2}{2}} - 1 \right] \eta_p^m \big[\cN_{\leq cN^\gamma}, b^{\sharp_m}_{\alpha_m p}\big] \xi \right\|\\ & \hspace{8cm} \leq C^m |\eta_p\chi_p|^{m} N^{-1} \| (\cN_+ + 1)^{3/2} \xi \|. \end{split} \]
Hence, let's focus on terms of the form \eqref{eq:L-term} and let's write
		\begin{equation}\label{eq:commNlow1}\begin{split}
		&[\mathcal{N}_{\leq cN^{\gamma}},\Lambda_1\ldots \Lambda_{i_1} N^{-k_1} \Pi^{(1)}_{\sharp,\flat} (\eta^{j_1} , \dots , \eta^{j_{k_1}} ; \eta^{\ell_1}_p \ph_{\alpha_{\ell_1} p})] \\
		&\hspace{0.5cm}= \sum_{u=1}^{i_1} \Lambda_1\ldots\Lambda_{u-1} [\mathcal{N}_{\leq cN^{\gamma}},\Lambda_u] \Lambda_{u+1}\ldots \Lambda_{i_1}N^{-k_1} \Pi^{(1)}_{\sharp,\flat} (\eta^{j_1} , \dots , \eta^{j_{k_1}} ; \eta^{\ell_1}_p \ph_{\alpha_{\ell_1} p}) \\
		&\hspace{1cm}+ \Lambda_1\ldots \Lambda_{i_1} [\mathcal{N}_{\leq cN^{\gamma}}, N^{-k_1} \Pi^{(1)}_{\sharp,\flat} (\eta^{j_1} , \dots , \eta^{j_{k_1}} ; \eta^{\ell_1}_p \ph_{\alpha_{\ell_1} p})].
		\end{split}\end{equation}
It is clear that $ [\mathcal{N}_{\leq cN^{\gamma}},\Lambda_u] = 0 $ if $ \Lambda_u$ is of the form $ (N-\cN_+)/N$ or $(N-\cN_+-1)/N$. On the other hand, if $ \Lambda_u =N^{-h} \Pi^{(2)}_{\sharp',\flat'}(\eta^{z_1},\ldots,\eta^{z_h}) $ is of the form \eqref{eq:Pi2-ind} with
		\[\begin{split} &N^{-h} \Pi^{(2)}_{\sharp',\flat'}(\eta^{z_1},\ldots,\eta^{z_h}) \\
		&\hspace{0.5cm}= N^{-h}\!\!\!\!\!\sum_{p_1, \dots , p_h \in \Lambda^*}  b^{\flat'_0}_{\alpha_0 p_1} a_{\beta_1 p_1}^{\sharp'_1} a_{\alpha_1 p_2}^{\flat'_1} a_{\beta_2 p_2}^{\sharp'_2} a_{\alpha_2 p_3}^{\flat'_2} \dots  a_{\beta_{h-1} p_{h-1}}^{\sharp'_{h-1}} a_{\alpha_{h-1} p_h}^{\flat'_{h-1}} b^{\sharp'_h}_{\beta_h p_h} \, \prod_{\ell=1}^h \eta^{z_l}_{p_l},
		\end{split}\]
we use the identity
		\[
		[\mathcal{N}_{\leq cN^{\gamma}}, a_{\alpha p}^{ \flat} a_{\beta p}^{\sharp}] = (F(\flat) + F(\sharp)) \chi_p a_{\alpha p}^{ \flat} a_{\beta p}^{\sharp}\]
with which we obtain that
		\begin{equation}\label{eq:commNlow2}[\mathcal{N}_{\leq cN^{\gamma}}, N^{-h} \Pi^{(2)}_{\sharp',\flat'}(\eta^{z_1},\ldots,\eta^{z_h})] = \sum_{t=1}^h N^{-h} \Pi^{(2)}_{\sharp',\flat'}(\eta^{z_1},\ldots, (F(\flat'_{t-1})+ F(\sharp'_t)) \chi \eta^{z_t},\ldots,  \eta^{z_h}).\end{equation}
Similarly, if $N^{-k_1} \Pi^{(1)}_{\sharp,\flat} (\eta^{j_1} , \dots , \eta^{j_{k_1}} ; \eta^{\ell_1}_p \ph_{\alpha_{\ell_1} p}) $ is of the form
		\[ \begin{split} 
		&N^{-k_1} \Pi^{(1)}_{\sharp,\flat} (\eta^{j_1} , \dots , \eta^{j_{k_1}} ; \eta^{\ell_1}_p \ph_{\alpha_{\ell_1} p})\\
		 & \hspace{0.5cm}= N^{-k_1}\!\!\!\!\!\sum_{p_1, \dots , p_{k_1} \in \Lambda^*}  b^{\flat_0}_{\alpha_0, p_1} a_{\beta_1 p_1}^{\sharp_1} a_{\alpha_1 p_2}^{\flat_1}   \dots a_{\beta_{k_1-1} p_{k_1-1}}^{\sharp_{k_1-1}} a_{\alpha_{k_1-1} p_{k_1}}^{\flat_{k_1-1}} a^{\sharp_{k_1}}_{\beta_{k_1} p_{k_1}} \eta^{\ell_1}_pa^{\flat_{k_1}}_{\alpha_{\ell_1}p} \, \prod_{\ell=1}^{k_1}\eta_{p_\ell}^{j_\ell},
		\end{split}\]
we have that 
		\begin{equation}\label{eq:commNlow3}\begin{split}
		&[\mathcal{N}_{\leq cN^{\gamma}}, N^{-k_1} \Pi^{(1)}_{\sharp,\flat} (\eta^{j_1} , \dots , \eta^{j_{k_1}} ; \eta^{\ell_1}_p \ph_{\alpha_{\ell_1} p})] \\
		&\,=  \sum_{t=1}^{k_1} N^{-k_1} \Pi^{(1)}_{\sharp,\flat} (\eta^{j_1} , \ldots, (F(\flat_{t-1})+ F(\sharp_t)) \chi \eta^{j_t},\ldots , \eta^{j_{k_1}} ; \eta^{\ell_1}_p \ph_{\alpha_{\ell_1} p})\\
		&\,\hspace{0.5cm}  + N^{-k_1} \Pi^{(1)}_{\sharp,\flat} (\eta^{j_1} , \dots , \eta^{j_{k_1}} ; F(\flat_{k_1})\chi_p\eta^{\ell_1}_p \ph_{\alpha_{\ell_1} p}).
		\end{split}\end{equation}
Recalling that each term of the form (\ref{eq:L-term}) must have either $k_1 > 0$ or it must contain at least one $\Lambda$-operator having the form (\ref{eq:Pi2-ex}), the identities \eqref{eq:commNlow1}, \eqref{eq:commNlow2} and \eqref{eq:commNlow3} imply together with Lemma \ref{lm:indu} $ii), iii)$ that
		\begin{equation}\label{eq:ell1spl}\begin{split}
		& \left\| \big[ \cN_{\geq cN^\gamma} , \Lambda_1 \dots \Lambda_{i_1} N^{-k_1} \Pi^{(1)}_{\sharp,\flat} (\eta^{j_1} , \dots , \eta^{j_{k_1}} ; \eta^{\ell_1}_p \ph_{\alpha_{\ell_1} p})\big]\xi \right\| \\
		&\hspace{0.5cm} \leq C^m N^{-1} \left[ \|\chi\eta\| \|\eta\|^{m-\ell_1-1} \, |\eta_p|^{\ell_1}  \d_{\ell_1>0} \| (\cN_+ + 1)^{3/2} \xi \|  +\|\chi\eta\| \|\eta\|^{m-1} \| b_p (\cN_+ + 1) \xi \| \right] \\
		&\hspace{1cm}  + C^m N^{-1} \left[ \|\eta\|^{m-\ell_1} \, |\chi_p\eta_p|^{\ell_1}  \d_{\ell_1>0} \| (\cN_+ + 1)^{3/2} \xi \|  + \|\eta\|^m |\chi_p| \| b_p (\cN_+ + 1) \xi \| \right] \\
		& \hspace{0.5cm} \leq C^m  \|\eta\|^{m-1} N^{-1} \left[ \, |\eta_p|  \d_{m>0} \| (\cN_+ + 1)^{3/2} \xi \|  + \big[|\chi_p|\|\eta\|  + \|\chi\eta\|\big]\| b_p (\cN_+ + 1) \xi \| \right].
		\end{split}\end{equation}
Notice that we distinguished the cases $ \ell_1>0$ and $ \ell_1=0$ in the previous bound. Substituting the last bound into \eqref{eq:commNlow0} and summing over $m \geq 1$, we conclude the first bound in \eqref{eq:ngdp}. The second bound in \eqref{eq:ngdp} follows in the same way with the only difference that we bound $ \| b^*_p(\cN_++1)\xi\| \leq \| (\cN_++1)^{3/2}\xi\|$ in the cases where $ \ell_1=0$. The bound \eqref{eq:ngdpbar} follows from the fact that for $p\not \in \operatorname{supp}\eta$, the operator $ \bar{\bar{d}}_p$ is defined in such a way that all terms in \eqref{eq:commNlow0} for $m=1$ vanish. Moreover, all terms in \eqref{eq:ell1spl} for which $\ell_1>0$ vanish as well, since $\eta_p=0$.
		
Let us continue with the estimate \eqref{eq:ngdxy-bds}. 
In position space, it suffices to bound 
		\begin{equation}\label{eq:commNlow0x} \| [\mathcal{N}_{\leq cN^{\gamma}},\check{d}_x] \xi\big\| \leq \sum_{m\geq 0}\frac1{m!} \Big\| \big[ \mathcal{N}_{\leq cN^{\gamma}}, \operatorname{ad}_{-B(\eta)}^{(m)}(\check{b}_x) - b^{\sharp m}(\check{\eta}_x^{(m)}) \big] \xi\Big\|,\end{equation}
where $ \check{\eta}^{(m)}_x \in L^2(\Lambda)$ is defined by its Fourier coefficients $ \eta_p^m e^{-ipx}$, for $p\in \Lambda_+^*$. Proceeding similarly as in momentum space, we first observe that
		\[\big[\cN_{\leq cN^\gamma}, b^{\sharp_m}(\check{\eta}_x^{(m)})\big] = F(\sharp_m)b^{\sharp_m}\big(\check{\chi}\ast \check{\eta}_x^{(m)}\big),  \]
where we set $ \check{\chi} =(\check{\chi}_x)_{|x=0}\in L^2(\Lambda)$ s.t. $ \check{\chi}\ast \eta_x^{(m)}\in L^2(\Lambda)$ has Fourier coefficients $ \chi_p\eta_p^m e^{-ipx}$. In particular, $ \| \check{\chi}\ast \check{\eta}_x^{(m)}\|\leq \|\chi\eta\|^m$, uniformly in $x\in\Lambda$. We then bound 
		\[
		\begin{split} 
		&\left\| \left[ \left( \frac{N- \cN_+}{N} \right)^{\frac{m+ (1-\alpha_m)/2}{2}} \left( \frac{N+1-\cN_+}{N} \right)^{\frac{m-(1-\alpha_m)/2}{2}} - 1 \right]  \big[\cN_{\leq cN^\gamma}, 	
		b^{\sharp_m}(\check{\eta}_x^{(m)})\big] \xi \right\|\\
		&\hspace{9cm} \leq C^m \| \chi \eta\|^mN^{-1}\| (\cN_++1)^{3/2}\xi\|
		\end{split}
		\]
for $m>0$ (recalling that this term vanishes for $m=0$). By Lemma \ref{lm:indu}, it then only remains to bound $2^m m! - 1$ terms of the form
\begin{equation}\label{eq:L-termx} \left\| \big[ \cN_{\geq cN^\gamma} , \Lambda_1 \dots \Lambda_{i_1} N^{-k_1} \Pi^{(1)}_{\sharp,\flat} (\eta^{j_1} , \dots , \eta^{j_{k_1}} ; \check{\eta}_x^{\ell_1})\big]\xi \right\| \end{equation}
where $i_1, k_1, \ell_1 \in \bN$, $j_1, \dots , j_{k_1} \in \bN \backslash \{ 0 \}$, where each $\Lambda_r$-operator is either a factor $(N-\cN_+ )/N$, a factor $(N+1-\cN_+ )/N$ or a $\Pi^{(2)}$-operator of the form \eqref{eq:Pi2-ex}. If we use the fact that $\big[\cN_{\leq cN^\gamma}, a^{\flat}(\check{\eta}_x^{(m)})\big] = F(\flat)a^{\flat}\big(\check{\chi}\ast \check{\eta}_x^{(m)}\big) $ and proceed then as in \eqref{eq:commNlow2}, \eqref{eq:commNlow3}, distinguishing the cases $ \ell_1>0$ and $ \ell_1=0$, we find that
		\begin{equation}\label{eq:ell1splx}\begin{split}
		& \left\| \big[ \cN_{\geq cN^\gamma} , \Lambda_1 \dots \Lambda_{i_1} N^{-k_1} \Pi^{(1)}_{\sharp,\flat} (\eta^{j_1} , \dots , \eta^{j_{k_1}} ; \check{\eta}_x^{\ell_1})\big]\xi \right\| \\
		&\hspace{0.5cm} \leq C^m N^{-1}  \|\chi\eta\| \|\eta\|^{m-1}\left[  \d_{\ell_1>0} \| (\cN_+ + 1)^{3/2} \xi \|  + \| \check{b}_x (\cN_+ + 1) \xi \| \right] \\
		&\hspace{1cm}  + C^m N^{-1} \left[ \|\eta\|^{m-\ell_1} \, \|\chi\eta\|^{\ell_1}  \d_{\ell_1>0} \| (\cN_+ + 1)^{3/2} \xi \|  + \|\eta\|^m \| b(\check{\chi}_x) (\cN_+ + 1) \xi \| \right] \\
		& \hspace{0.5cm} \leq C^mN^{-1}  \Big[\|\chi\eta\| \|\eta\|^{m-1} \| (\cN_+ + 1)^{3/2} \xi \|  +\|\chi\eta\| \|\eta\|^{m-1} \| \check{b}_x (\cN_+ + 1) \xi \| \Big]\\
		&\hspace{1cm} + C^mN^{-1} \|\eta\|^m \| b(\check{\chi}_x) (\cN_+ + 1) \xi \|.
		\end{split}\end{equation}
Summing over $m\geq 1$ in \eqref{eq:commNlow0x} proves \eqref{eq:ngdxy-bds}. Finally, the bounds \eqref{eq:ngsplitdbd} and \eqref{eq:ngddxy} can be proved with similar arguments. 		
\end{proof}

\subsection{Action of Quadratic Renormalization on Excitation Hamiltonian}

From (\ref{eq:cLN}) and (\ref{eq:GN}), we can decompose
\[ \cG_{N} = e^{-B(\eta_H)} \cL_N e^{B(\eta_H)} = \cG^{(0)}_{N} + \cG_{N}^{(2)} + \cG_{N}^{(3)} + \cG_{N}^{(4)} \]
with $ \cG_{N}^{(j)} = e^{-B(\eta_H)} \cL_N^{(j)} e^{B(\eta_H)}.$
In the following sections, we analyse the operators $\cG_{N}^{(j)}$, $j=0,2,3,4$, separately. Most of the analysis follows closely that of \cite[Section 7]{BBCS}, \cite[Section 7]{BBCS2} and we therefore focus on explaining the main steps only. Apart from the different scaling of the interaction, the only important difference consists in deriving additional commutator bounds of $\cG_{N}^{(j)}$ with restricted number of particle operators of the form $\cN_{\leq cN^\gamma}$. These bounds are based on Lemma \ref{lm:commdpNres} and will be explained in more detail in the following Subsections \ref{sub:G0} to \ref{sub:G4}. The usual commutator bounds in \eqref{eq:adjGN}, on the other hand, can be proved with the same arguments as in \cite[Section 7]{BBCS2} and we do not comment on them further. Finally, in Subsection \ref{sub:proofGN}, we prove Prop. \ref{prop:GN}. We assume throughout this section that $V \in L^3 (\bR^3)$ is compactly supported, pointwise non-negative and radial. 

\subsubsection{Analysis of $ \cG_{N}^{(0)}$}  \label{sub:G0}

We recall from (\ref{eq:cLNj}) that
\begin{equation}\label{eq:cLN02} \cL_{N}^{(0)} =\; \frac{(N-1)}{2N} N^{\kappa} \widehat{V} (0) (N-\cN_+ ) + \frac{N^{\kappa}\widehat{V} (0)}{2N} \cN_+  (N-\cN_+ ) \end{equation}
and we define the error operator $\cE_{N}^{(0)}$ through the identity
\begin{equation}\label{eq:G0C}
 \begin{split}
\cG^{(0)}_{N} &= \frac{(N-1)}{2N} N^{\kappa}\widehat{V} (0) (N-\cN_+)+\frac{N^{\kappa}\widehat{V} (0)}{2N} \cN_+(N-\cN_+) + \cE_{N}^{(0)}.
 \end{split}
\end{equation}
\begin{prop}\label{prop:G0}
There exists a constant $C > 0$ such that
		\begin{align}\label{eq:E0C}
		\pm \cE_{N}^{(0)} &\leq C N^{2\kappa - \alpha/2} (\cN_+ +1), \\
		\label{eq:E0Ccomm} \pm \left[\mathcal{N}_{\leq cN^{\gamma}},\cE_{N}^{(0)}\right] &\leq C N^{2\kappa - \alpha/2} (\cN_+ +1),
		\end{align}
for all $\a > 2 \kappa$, $ \gamma\geq 0$, $c\geq 0$, $f$ smooth and bounded, $M \in \bN$ and $N \in \bN$ large enough.
\end{prop}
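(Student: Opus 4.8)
\textbf{Proof plan for Proposition \ref{prop:G0}.}
The plan is to expand $\cG_N^{(0)} = e^{-B} \cL_N^{(0)} e^B$ using the fact that $\cL_N^{(0)}$, as given in \eqref{eq:cLN02}, is a polynomial in $\cN_+$, together with the basic identity $e^{-B} f(\cN_+) e^B = f(\cN_+) + \int_0^1 ds\, e^{-sB}[f(\cN_+),B] e^{sB}$. Since $B$ from \eqref{eq:genBog} changes the number of excitations by exactly $\pm 2$, the commutator $[\cN_+, B]$ is explicitly computable: $[\cN_+, B] = \sum_{p \in P_H} (\eta_p b_p^* b_{-p}^* + \bar\eta_p b_{-p} b_p)$, and more generally $[f(\cN_+), B]$ can be written using the mean value theorem as a sum over $p \in P_H$ of terms of the form $\eta_p b_p^* b_{-p}^* g(\cN_+)$ with $g$ bounded by $\|f'\|_\infty$ on the relevant range. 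The prefactor $N^\kappa \widehat V(0)/N$ appearing in \eqref{eq:cLN02} (note $\widehat V(0) = \int V \le CN^{0}$, and $N^{\kappa-1} \cN_+(N-\cN_+)/N$ has coefficient $N^{\kappa-1}$ in front of an $\cN_+$-quadratic expression, while the first term $\tfrac{N-1}{2N}N^\kappa \widehat V(0)(N-\cN_+)$ has coefficient $N^\kappa$ in front of $\cN_+$) combine so that the commutator terms carry at worst a factor $N^\kappa$ and a single $\sum_{p \in P_H} |\eta_p|$.

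First I would write $\cE_N^{(0)} = \int_0^1 ds\, e^{-sB}[\cL_N^{(0)}, B] e^{sB}$ and bound $[\cL_N^{(0)}, B]$ directly. Using Cauchy–Schwarz and the bounds $\|b_p \xi\| \le \|\cN_+^{1/2}\xi\|$, $\|b_p^* \xi\| \le \|(\cN_++1)^{1/2}\xi\|$ together with $\|\eta_H\| \le C N^{\kappa-\alpha/2}$ from \eqref{eq:etaHL2}, one gets
\[
\pm [\cL_N^{(0)}, B] \le C N^\kappa \Big( \sum_{p \in P_H} |\eta_p| \Big) N^{-1} \cdots
\]
— but a cleaner route is to pair the two $b$'s in each monomial and estimate $\|\sum_{p\in P_H}\eta_p b_p^* b_{-p}^* \xi\| \le \|\eta_H\| \|(\cN_++1)\xi\|$, producing the gain $\|\eta_H\| \le C N^{\kappa - \alpha/2}$; combined with the overall $N^\kappa$ from the coefficients this gives $\pm[\cL_N^{(0)},B] \le C N^{2\kappa-\alpha/2}(\cN_++1)$. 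Then, since conjugation by $e^{sB}$ preserves the bound up to a constant by Lemma \ref{lm:Ngrow} (which controls $e^{-sB}(\cN_++1)e^{sB} \le C(\cN_++1)$), integrating in $s$ yields \eqref{eq:E0C}. This is routine and follows \cite[Section 7]{BBCS} verbatim up to the scaling change $N \mapsto N^{1-\kappa}$, which only affects the power of $N$ tracked above.

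For the commutator estimate \eqref{eq:E0Ccomm}, I would write $[\cN_{\le cN^\gamma}, \cE_N^{(0)}] = \int_0^1 ds\, [\cN_{\le cN^\gamma}, e^{-sB}[\cL_N^{(0)}, B] e^{sB}]$ and distribute the commutator: one term where $\cN_{\le cN^\gamma}$ hits $[\cL_N^{(0)},B]$ directly, and two terms where it hits the conjugations $e^{\pm sB}$. For the first, note $\cN_{\le cN^\gamma}$ commutes with $\cN_+$, so only its commutator with the $b_p^* b_{-p}^*$ / $b_{-p}b_p$ pieces survives; since $P_H$ consists of momenta $|p| \ge N^\alpha > cN^\gamma$ (as $\gamma \le \alpha$ and, if $\gamma = \alpha$, $c < 1$, analogous to the reasoning in Lemma \ref{lm:NresgrowA}), we get $[\cN_{\le cN^\gamma}, b_p^*] = 0$ for $p \in P_H$, so this term vanishes entirely. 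For the terms where $\cN_{\le cN^\gamma}$ hits $e^{\pm sB}$, I would use that $[\cN_{\le cN^\gamma}, B] = -\frac12\sum_{p \in P_H}(\eta_p b_p^* b_{-p}^* - \bar\eta_p b_{-p}b_p) \cdot 0$ — wait, more carefully: $[\cN_{\le cN^\gamma}, B]$ again vanishes because every $b_p^{\#}$ in $B$ has $|p| \ge N^\alpha$, so $\cN_{\le cN^\gamma}$ commutes with $B$ itself, hence with $e^{sB}$. Therefore $[\cN_{\le cN^\gamma}, \cE_N^{(0)}] = \int_0^1 ds\, e^{-sB}[\cN_{\le cN^\gamma}, [\cL_N^{(0)},B]] e^{sB}$, and since $[\cN_{\le cN^\gamma}, [\cL_N^{(0)},B]]$ has the same structure as $[\cL_N^{(0)},B]$ but with the $b$'s unchanged (again $|p| \ge N^\alpha$ kills the commutator with $\cN_{\le cN^\gamma}$), this is in fact zero — but to be safe I would instead bound it by the same Cauchy–Schwarz argument as above, again getting $C N^{2\kappa-\alpha/2}(\cN_++1)$, and then apply Lemma \ref{lm:Ngrow} to control the conjugation. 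The main (minor) obstacle is keeping careful track of the exact power of $N$ in the coefficients of $\cL_N^{(0)}$ after the scaling change and verifying that the momentum restriction $|p| \ge N^\alpha$ in $P_H$ genuinely dominates $cN^\gamma$ in all the relevant commutators; both are bookkeeping rather than conceptual difficulties.
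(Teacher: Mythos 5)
Your strategy is genuinely different from the paper's and, at its core, viable: you use that $\cL_N^{(0)}$ is a quadratic polynomial $f(\cN_+)$, so that $\cE_N^{(0)}=e^{-B}\cL_N^{(0)}e^{B}-\cL_N^{(0)}=\int_0^1 ds\, e^{-sB}[f(\cN_+),B]e^{sB}$ with the commutator computed exactly, $[f(\cN_+),B]=\tfrac12\sum_{p\in P_H}\eta_p b_p^*b_{-p}^*\big(f(\cN_++2)-f(\cN_+)\big)+\hc$, where $|f(\cN_++2)-f(\cN_+)|\leq CN^{\kappa}$ on $\cF_+^{\leq N}$; then Lemma \ref{lm:Ngrow} closes the conjugation. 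The paper instead rewrites $\cL_N^{(0)}=\tfrac{N-1}{2}N^{\kappa}\widehat V(0)+\tfrac{N^{\kappa}\widehat V(0)}{2}\big[\sum_q b_q^*b_q-\cN_+\big]$, expands $e^{-B}b_q^{\sharp}e^{B}$ via $\gamma_q,\sigma_q,d_q$, and invokes Lemma \ref{lm:dp} (and Lemma \ref{lm:commdpNres} for \eqref{eq:E0Ccomm}). Your route avoids the $d_q$ machinery entirely here, which is a real simplification.

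Two steps need repair. First, neither of your two estimates of the key quadratic form actually yields \eqref{eq:E0C}: the route with a prefactor $\sum_{p\in P_H}|\eta_p|$ fails because $\eta_H\notin\ell^1$ (the sum diverges), and the ``cleaner route'' $\|\sum_p\eta_p b_p^*b_{-p}^*\,g(\cN_+)\xi\|\leq CN^{\kappa}\|\eta_H\|\|(\cN_++1)\xi\|$ only gives $|\langle\xi,\cE_N^{(0)}\xi\rangle|\leq CN^{2\kappa-\alpha/2}\|(\cN_++1)\xi\|\,\|\xi\|$, which is strictly weaker than the operator inequality against $(\cN_++1)$ and would not suffice downstream (e.g.\ in \eqref{eq:cN+bd} the error must be absorbed into $\cN_+$). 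The correct estimate splits the two $b$'s across the inner product: $|\langle\xi,b_p^*b_{-p}^*g(\cN_+)\xi\rangle|=|\langle b_p\xi,\,b_{-p}^*g(\cN_+)\xi\rangle|\leq CN^{\kappa}\|b_p\xi\|\,\|(\cN_++1)^{1/2}\xi\|$, then Cauchy--Schwarz in $p$ using $\sum_p\|b_p\xi\|^2\leq\langle\xi,\cN_+\xi\rangle$, giving $CN^{\kappa}\|\eta_H\|\,\|(\cN_++1)^{1/2}\xi\|^2$ as required; this is exactly how the paper treats the analogous terms. Second, your vanishing argument $[\cN_{\leq cN^{\gamma}},B]=0$ requires $cN^{\gamma}<N^{\alpha}$, but the proposition is stated for arbitrary $c,\gamma\geq 0$ (the paper's proof, via Lemma \ref{lm:commdpNres} whose bounds only use $\|\chi\eta\|\leq\|\eta\|$, covers the full range). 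Your fallback Cauchy--Schwarz handles a nonvanishing inner commutator $[\cN_{\leq cN^{\gamma}},[\cL_N^{(0)},B]]$, but if $[\cN_{\leq cN^{\gamma}},e^{\pm sB}]\neq 0$ you also pick up commutators with the conjugation, which your plan does not address; you should either impose $\gamma\leq\alpha$ (with $c<1$ when $\gamma=\alpha$, which suffices for every application in the paper) or control $[\cN_{\leq cN^{\gamma}},B]$ explicitly and run a Gronwall argument as in Lemma \ref{lm:NresgrowA}.
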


\begin{proof}
As shown in \cite[Section 7.1]{BBCS}, $\cL_N^{(0)}$ can be written as
\[ \cL_N^{(0)} = \frac{(N-1)}{2} N^{\kappa}\widehat{V} (0) + \frac{N^{\kappa}\widehat{V} (0)}{2} \left[ \sum_{q \in \L^*_+} b_q^* b_q - \cN_+ \right] \]
and it follows from \eqref{eq:G0C} that
\begin{equation}\label{eq:e01}
 \begin{split}
\cE_{N}^{(0)} &= \frac{N^{\kappa} \widehat{V} (0)}{2} \sum_{q \in \L_+^*} \left[ e^{-B(\eta_H)} b_q^* b_q e^{B(\eta_H)} - b_q^* b_q \right] -  \frac{N^{\kappa} \widehat{V} (0)}{2}  \left[ e^{-B(\eta_H)} \cN_+ e^{B(\eta_H)} - \cN_+ \right]. \end{split} \end{equation}
Setting $\g_q = \cosh \eta_H (q)$, $\s_q = \sinh \eta_H (q)$ and recalling the definition of $d_q, d^*_q$ in (\ref{eq:defdq}), with $\eta$ replaced by $\eta_H (q) = \eta_q \chi (q \in P_H)$, we obtain that
\[ \sum_{q \in \L_+^*}  e^{-B(\eta_H)} b_q^* b_q e^{B(\eta_H)} = \sum_{q \in \L^*_+} \left[ \g_q b_q^* + \s_q b_{-q} + d^*_q \right] \left[ \g_q b_q + \s_q b_{-q}^* + d_q \right]  \]
Since $|\g_q^2 - 1| \leq C \eta_H (q)^2$, $|\s_q| \leq C |\eta_H (q)|$, we can use the first bound in (\ref{eq:d-bds}), Cauchy-Schwarz and the estimate $\| \eta_H \| \leq C N^{\kappa -\a/2}$ from (\ref{eq:etaHL2}) to deduce that
 \[ \frac{N^{\kappa}\widehat{V}(0)}{2} \Big| \sum_{q \in \L_+^*}  \langle \xi , \big[ e^{-B(\eta_H)} b_q^* b_q e^{B(\eta_H)} - b_q^* b_q \big] \xi \rangle \Big| \leq C N^{2\kappa -\a/2} \| (\cN_+ + 1)^{1/2} \xi \|^2. \]
Similarly, setting $\g_p^{(s)} = \cosh (s \eta_H (p))$ and $\s_p^{(s)} = \sinh (s \eta_H (p))$, and defining $d_p^{(s)}$ as in (\ref{eq:defD}) with $\eta$ replaced by $s \eta_H$, we expand the second term on the r.h.s. of (\ref{eq:e01}) as
\begin{equation*}
\begin{split}
e^{-B(\eta_H)} &\cN_+  e^{B(\eta_H)} - \cN_+ \\ 
= \; &\int_0^1 ds \sum_{p \in P_{H}}  \eta_p \, \left[ (\g_p^{(s)} b_p + \s_p^{(s)} b_{-p}^* + d_p^{(s)}) ( \g_p^{(s)} b_{-p} + \s_p^{(s)} b_{-p}^* + d_{-p}^{(s)}) + \hc \right].
\end{split}
\end{equation*}
We use $|\g^{(s)}_p| \leq C$ and $|\s_p^{(s)}| \leq C |\eta_p|$ as well as (\ref{eq:d-bds}) and (\ref{eq:etaHL2}) to deduce that
\[ \begin{split}\frac{N^{\kappa} \widehat{V}(0)}{2} \Big| \langle \xi, \big[& e^{-B(\eta_H)} \cN_+  e^{B(\eta_H)} - \cN_+ \big] \xi \rangle \Big| \\ &\leq C N^{\kappa} \| (\cN_+ + 1)^{1/2} \xi \|  \sum_{p \in P_H} |\eta_p| \left[ |\eta_p| \| (\cN_+ + 1)^{1/2} \xi \| + \| b_{p} \xi \| \right] \\ &\leq C N^{2\kappa -\a/2} \| (\cN_+ + 1)^{1/2} \xi \|^2. \end{split} \]
This proves the first bound (\ref{eq:E0C}). 

Let us continue with the commutator bound \eqref{eq:E0Ccomm}. Again, we consider the two contributions on the r.h.s. of (\ref{eq:e01}) separately. Let us notice first that $ [\cN_{\leq cN^\gamma}, b^*_qb_q] = 0$ and $ [\cN_{\leq cN^\gamma}, b^\sharp_q] = F(\sharp) \chi_q b^\sharp_q $ for every $q\in \Lambda_+^*$, where $ \chi\in \ell^2(\Lambda^*_+)$ denotes the characteristic function of $ \{q\in\Lambda^*_+: |q|\leq cN^\gamma \}$ and where  $ F(\sharp)=-1$ if $ \sharp=\cdot$ and $ F(\sharp) = 1$ if $ \sharp = *$. With this observation, we find with Cauchy-Schwarz that 
		\[\begin{split}
		&\frac{N^{\kappa}\widehat{V}(0)}{2} \Big| \sum_{q \in \L_+^*}  \langle \xi , \big[\cN_{\leq cN^\gamma},   e^{-B(\eta_H)} b_q^* b_q e^{B(\eta_H)} - b_q^* b_q \big] \xi \rangle \Big| \\
		&\hspace{0.5cm}\leq CN^{\kappa}\!\! \sum_{q \in \L_+^*}\!\! \big(|\chi_q \g_q| \|  b_q\xi\| +|\chi_q \s_q| \|  b_{-q}^*\xi\| + \| [\cN_{\leq cN^\gamma},d_q] \xi\| \big)\big(   |\s_q|\| b_{-q}^*\xi\|  + \| d_q\xi\|\big)\\
		&\hspace{1cm}+ CN^{\kappa}\!\! \sum_{q \in \L_+^*}\!\! \big(|  \g_q| \|  b_q\xi\| +|\s_q| \|  b_{-q}^*\xi\| + \| d_q \xi\| \big)\big(   |\chi_q  \s_q|\| b_{-q}^*\xi\|  + \| [\cN_{\leq cN^\gamma},d_q]\xi\|\big).
		\end{split}\]
Using once more that $ |\gamma_q|\leq C, |\sigma_q| \leq C |\eta_H(q)|$,  $\|\eta_H\|\leq CN^{\kappa-\alpha/2}$ as well as the bounds \eqref{eq:d-bds} and \eqref{eq:ngdp}, we obtain that
		\[\frac{N^{\kappa}\widehat{V}(0)}{2} \Big| \sum_{q \in \L_+^*}  \langle \xi , \big[\cN_{\leq cN^\gamma},  e^{-B(\eta_H)} b_q^* b_q e^{B(\eta_H)} - b_q^* b_q \big] \xi \rangle \Big| \leq C N^{2\kappa-\alpha/2}\| (\cN_++1)^{1/2}\xi\|^2.\]
Proceeding similarly for the second term on the r.h.s. of \eqref{eq:e01}, we find that
		\[ \begin{split} &\frac{N^{\kappa} \widehat{V}(0)}{2}  \Big| \langle \xi, [ \cN_{\leq cN^\gamma},  e^{-B(\eta_H)} \cN_+  e^{B(\eta_H)}  \big] \xi \rangle \Big| \\ 
		&\hspace{0.5cm} \leq CN^\kappa \int_0^1 ds \sum_{p \in P_{H}}  |\eta_p|  \big( \| (\cN_++1)^{1/2}\xi\| + |\eta_p|\| b_{-p}\xi\|  + \|[\cN_{\leq cN^\gamma}, (d_p^{(s)})^* ]\xi\| \big)\\
		&\hspace{4cm} \times \big( \|b_{-p}\xi\| + | \eta_p|\| (\cN_++1)^{1/2}\xi\| \big)  \\
		&\hspace{1cm} + CN^\kappa \int_0^1 ds \sum_{p \in P_{H}}  |\eta_p|  \big( \| (\cN_++1)^{1/2}\xi\| + |\eta_p|\| b_{-p}\xi\|  + \| (d_p^{(s)})^* \xi\| \big)\\
		&\hspace{4.5cm} \times \big( \|b_{-p}\xi\| + | \eta_p|\| (\cN_++1)^{1/2}\xi\| + \|[\cN_{\leq cN^\gamma}, (d_p^{(s)}) ]\xi\| \big)  \\
		&\hspace{0.5cm}\leq C N^{2\kappa-\alpha/2}\| (\cN_++1)^{1/2}\xi\|.
		\end{split}\]
Here, we used in the last step once more \eqref{eq:ngdp} and this proves the bound \eqref{eq:E0Ccomm}.
\end{proof}

\subsubsection{Analysis of $\cG_{N}^{(2)}=e^{-B(\eta_H)}\cL^{(2)}_N e^{B(\eta_H)}$}
\label{sub:G2}

We decompose $\cL_N^{(2)} = \cK + \cL_N^{(2,V)}$, setting $\cK = \sum_{p \in \Lambda_+^*} p^2 a_p^* a_p$ and 
\begin{equation}\label{eq:L2VN}
\cL^{(2,V)}_N =  \sum_{p \in \L^*_+} N^{\kappa}\widehat{V} (p/N^{1-\kappa}) a^*_pa_p \frac{N-\cN_+}{N} + \frac{1}{2} \sum_{p \in \L^*_+} N^{\kappa}\widehat{V} (p/N^{1-\kappa}) \left[ b_p^* b_{-p}^* + b_p b_{-p} \right].
\end{equation}
Hence, we can split $\cG_{N}^{(2)} $ into
\begin{equation}\label{eq:dec-G2} \cG_{N}^{(2)} = e^{-B(\eta_H)} \cK   e^{B(\eta_H)} + e^{-B(\eta_H)} \cL_N^{(2,V)} e^{B(\eta_H)} \end{equation}
and analyse the two contributions on the r.h.s. of the last equation separately.
\begin{prop}\label{prop:K}
There exists a constant $C > 0$ such that
\begin{equation} \label{eq:K-dec} \begin{split} e^{-B(\eta_H)}\cK e^{B(\eta_H)} = \; &\cK + \sum_{p \in P_{H}} p^2 \eta_p ( b_p b_{-p} + b^*_p b^*_{-p} ) \\&+ \sum_{p \in  P_{H}} p^2 \eta_p^2 \Big(\frac{N-\cN_+}{N}\Big) \Big(\frac{N-\cN_+ -1}{N}\Big) +\cE^{(K)}_{N}
 \end{split}
\end{equation}
where the self-adjoint operator $\cE^{(K)}_{N}$ satisfies
		\begin{align}\label{eq:errorKc}
		\pm \cE^{(K)}_{N} &\leq  C N^{3\kappa - \a/2} (\cH_N +1),\\
 		\label{eq:errorKccomm} \pm i\left [\mathcal{N}_{\leq cN^{\gamma}}, \cE^{(K)}_N \right] &\leq C \big(  N^{3\kappa -\alpha/2 } + N^{2\kappa - \a/2 + \gamma/2}\big)(\cH_N+1),
		\end{align}
for all $\a \geq 2 \kappa$ with $ \alpha+\kappa\leq 1$, and for all $ \gamma \in [0;\alpha]$, $c\geq 0$, $f$ smooth and bounded, $M \in \bN$ and $N \in \bN$ large enough.
\end{prop}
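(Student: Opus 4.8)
The plan is to compute $e^{-B(\eta_H)} \cK e^{B(\eta_H)}$ via the expansion \eqref{eq:BCH}, writing each $e^{-B(\eta_H)} b_p e^{B(\eta_H)} = \gamma_p b_p + \sigma_p b^*_{-p} + d_p$ (and the adjoint identity) using \eqref{eq:ebe}, and then substituting into $\cK = \sum_{p} p^2 a_p^* a_p$ after first replacing $a_p^* a_p$ by $b_p^* b_p$ up to controllable errors (the correction involves $\cN_+/N$ and is harmless on the domain of $\cH_N$, since $\cN_+ \leq \cK$). Expanding the product $(\gamma_p b_p^* + \sigma_p b_{-p} + d_p^*)(\gamma_p b_p + \sigma_p b^*_{-p} + d_p)$ and summing against $p^2$, the diagonal $\gamma_p^2 b_p^* b_p$ term reconstructs $\cK$ up to $\sum_p p^2 (\gamma_p^2 - 1) b_p^* b_p$, which is an error of size $N^{-\alpha} \sum p^2 \eta_p^2 b_p^* b_p \lesssim N^{3\kappa - \alpha/2}$ times $\cH_N$ using \eqref{eq:H1eta} and $\|\eta_H\| \leq C N^{\kappa - \alpha/2}$ from \eqref{eq:etaHL2}; the cross terms $\gamma_p \sigma_p (b_p^* b^*_{-p} + b_p b_{-p})$ produce the quadratic term in \eqref{eq:K-dec} (again up to $N^{-\alpha}$-errors since $\gamma_p \sigma_p = \eta_p + O(\eta_p^3)$); the $\sigma_p^2 b_{-p} b^*_{-p}$ term produces, after normal ordering, the $\sum_p p^2 \eta_p^2 (N-\cN_+)(N-\cN_+-1)/N^2$ term plus a commutator contribution; and all terms involving at least one $d_p$ or $d_p^*$ are collected into $\cE^{(K)}_N$ and estimated via \eqref{eq:d-bds} together with \eqref{eq:H1eta}, \eqref{eq:etaHL2}, Cauchy-Schwarz, and Lemma \ref{lm:Ngrow}. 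This is exactly the strategy of \cite[Section 7]{BBCS}, with the scaling $\widehat{V}(p/N) \rightsquigarrow N^\kappa \widehat{V}(p/N^{1-\kappa})$ and $N^{-1} \rightsquigarrow N^{\kappa-1}$ tracked through all bounds; I would state that the proof of \eqref{eq:K-dec}--\eqref{eq:errorKc} is obtained by this substitution and refer to \cite{BBCS} for the routine details.

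The genuinely new ingredient is the commutator bound \eqref{eq:errorKccomm}. The plan here is to commute $\cN_{\leq cN^\gamma}$ through the entire expansion term by term. Since $[\cN_{\leq cN^\gamma}, \cK] = 0$ and $[\cN_{\leq cN^\gamma}, a_p^* a_p] = 0$, only the $B(\eta_H)$-conjugation generates nontrivial commutators; concretely, $i[\cN_{\leq cN^\gamma}, \cE^{(K)}_N]$ is a sum of terms identical in structure to those forming $\cE^{(K)}_N$ itself, except that in each one of the remainder operators $d_p$, $d_p^*$ is replaced by $[\cN_{\leq cN^\gamma}, d_p]$, $[\cN_{\leq cN^\gamma}, d_p^*]$, or a $b_p^\sharp$-operator is hit by $[\cN_{\leq cN^\gamma}, b_p^\sharp] = \pm \chi_p b_p^\sharp$ (with $\chi$ the characteristic function of $\{|p| \leq cN^\gamma\}$). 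For the first type one invokes Lemma \ref{lm:commdpNres}: the bounds \eqref{eq:ngdp} have the same form as \eqref{eq:d-bds} but with an extra factor $\|\chi \eta\| + |\chi_p| \|\eta\|$; since $\|\chi \eta_H\|^2 = \sum_{N^\alpha \leq |p|, |p| \leq cN^\gamma} |\eta_p|^2 = 0$ when $\gamma < \alpha$ (and is $\leq \|\eta_H\|^2$ when $\gamma = \alpha$), while $\sum_p |\chi_p|^2 |\eta_p|^2 \cdot (\text{something})$ must be handled using $\sum_{|p| \leq cN^\gamma} 1 \lesssim N^{3\gamma}$ or, better, pairing the $\chi_p$ against a factor of $p^2$ from $\cK$, which costs $N^{\gamma/2}$ rather than the naive power. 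Carefully bookkeeping these, the worst new term is of order $N^{2\kappa - \alpha/2 + \gamma/2}(\cH_N + 1)$, matching the claimed bound; terms of the second type (where $b_p^\sharp$ is hit by $\chi_p$) are similar but typically better since $\chi_p$ again pairs with $p^2$.

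The main obstacle I anticipate is precisely the bookkeeping in \eqref{eq:errorKccomm}: one must verify that every single structural term in the (already lengthy) expansion of $\cE^{(K)}_N$ contributes, after commutation with $\cN_{\leq cN^\gamma}$, an error no worse than $N^{2\kappa - \alpha/2 + \gamma/2}(\cH_N+1)$, using only the kinetic energy $\cK$ and the number operator $\cN_+$ as controlling observables. The delicate point is that a crude estimate $\cN_{\leq cN^\gamma} \leq N^{\gamma}$ would be far too lossy; instead one must use $\cN_{\leq cN^\gamma} \leq N^{-2\gamma}$... no — rather one must exploit that the momenta carrying the $\chi_p$ cutoff also carry kinetic-energy weights $p^2 \leq C N^{2\gamma}$, so that summing $|\chi_p|^2$ against a $p^2$-weighted quantity costs only $N^{\gamma}$ in the square, i.e.\ $N^{\gamma/2}$ after Cauchy-Schwarz — and that the high-momentum cutoff in $\eta_H$ makes $\|\chi \eta_H\|$ vanish for $\gamma < \alpha$. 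I would present the proof of \eqref{eq:K-dec}, \eqref{eq:errorKc} as a direct transcription of \cite[Prop. 7.x]{BBCS}, and give the commutator bound \eqref{eq:errorKccomm} in detail for one or two representative terms (say the $\sigma_p^2$ term after normal ordering and a term containing $d_p^* d_p$), indicating that the remaining terms are handled identically; the residual quadratic and cubic contributions $e^{-B} \cL^{(2,V)}_N e^{B}$, $e^{-B} \cL^{(3)}_N e^{B}$, $e^{-B} \cL^{(4)}_N e^{B}$ are treated in the subsequent propositions of this appendix.
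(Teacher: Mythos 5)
There is a genuine gap in your derivation of the identity \eqref{eq:K-dec} and of the bound \eqref{eq:errorKc}. You propose to first replace $a_p^*a_p$ by $b_p^*b_p$ inside $\cK$ and then conjugate using \eqref{eq:ebe}. Since $b_p^*b_p = a_p^*a_p\,(N-\cN_++1)/N$, this replacement costs exactly $\cK(\cN_+-1)/N$, an operator which on $\cF_+^{\leq N}$ is only bounded by $\cK$ itself, not by $N^{3\kappa-\alpha/2}\cH_N$ (recall $\alpha>6\kappa$, so $N^{3\kappa-\alpha/2}\to 0$); the justification ``$\cN_+\leq\cK$'' gives no smallness and the correction cannot be absorbed into $\cE_N^{(K)}$. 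Worse, this correction must itself be conjugated: one would need $e^{-B}\big(\cK(\cN_+-1)/N\big)e^{B}$ to cancel against the $-\cK(\cN_+-1)/N$ hidden in the leading term $\sum_p p^2\gamma_p^2\,b_p^*b_p$, and controlling $e^{-B}\cK\cN_+e^{B}-\cK\cN_+$ is a problem of the same difficulty as the one you started from. The paper sidesteps this entirely by writing $e^{-B}\cK e^{B}=\cK+\int_0^1 ds\, e^{-sB}[\cK,B]e^{sB}$ and computing the commutator exactly, $[\cK,B]=\sum_{p\in P_H}p^2\eta_p\,(b_p^*b_{-p}^*+b_pb_{-p})$, so that every subsequent term carries an explicit factor $\eta_p$ and no $O(\cK)$ remainder ever appears; the $\gamma^{(s)},\sigma^{(s)},d^{(s)}$ decomposition is applied only to the $b$-operators inside this integral. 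A related bookkeeping issue: the factor $(N-\cN_+)(N-\cN_+-1)/N^2$ in \eqref{eq:K-dec} does not come from normal-ordering $\sigma_p^2\,b_{-p}b_{-p}^*$ alone (that yields only $(N-\cN_+)/N$); the second factor is extracted from a specific piece of the $d$-terms (the contribution $\text{G}_{21}$ in the paper's proof), so your identification of the explicit terms would also need repair.

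By contrast, your plan for the commutator bound \eqref{eq:errorKccomm} is methodologically the paper's: one commutes $\cN_{\leq cN^{\gamma}}$ through each term of the (correctly derived) expansion, uses $[\cN_{\leq cN^{\gamma}},b_p^{\sharp}]=\pm\chi_p b_p^{\sharp}$ and Lemma \ref{lm:commdpNres} in place of Lemma \ref{lm:dp}, exploits that $\|\chi\eta_H\|$ vanishes for $\gamma<\alpha$ (and is $\leq\|\eta_H\|$ otherwise), and the extra $N^{\gamma/2}$ indeed arises from pairing the cutoff against kinetic-energy weights as in \eqref{eq:VNchixaybnd}. That part would go through once the decomposition itself is established along the Duhamel route.
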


\begin{proof}
As shown in \cite[Section 7.2]{BBCS}, we can use the relations \eqref{eq:ebe} and a first order Taylor expansion yields 
\begin{equation} \label{eq:cKterms}
\begin{split}
&e^{-B(\eta_H)} \cK e^{B(\eta_H)}-\cK \\ &= \int_0^1ds \sum_{p \in P_{H}} p^2 \eta_p \Big [\big(\gamma_p^{(s)} b_p+ \s^{(s)}_p b^*_{-p}\big) \big(\gamma_p^{(s)} b_{-p} + \s^{(s)}_p b^*_{p}\big)\, +\hc\Big ]\\
&\hspace{.3cm} + \int_0^1ds \sum_{p \in P_{H}} p^2 \eta_p \big[\big(\gamma_p^{(s)} b_p+ \s^{(s)}_p b^*_{-p}\big)
d_{-p}^{(s)}+ d_p^{(s)} \big( \gamma_p^{(s)} b_{-p}+ \s^{(s)}_p b^*_{p}\big)+\hc\big]\\
&\hspace{.3cm}+ \int_0^1ds \sum_{p \in P_{H}} p^2 \eta_p\big[ d_p^{(s)} d_{-p}^{(s)} + \hc \big]\\
&=: \text{G}_1+\text{G}_2+\text{G}_3.
\end{split}
\end{equation}
We recall that $\gamma_p^{(s)} = \cosh(s \eta_H (p))$, $\s^{(s)}_p =\sinh(s \eta_H (p))$ and that $d^{(s)}_p$ is defined as in (\ref{eq:defD}), with $\eta_p$ replaced by $s \eta_H (p)$. We consider the different contributions $ \text{G}_1$, $\text{G}_2$ and $\text{G}_3$, defined on the r.h.s. of the last equation \eqref{eq:cKterms}, separately.

Let us start with $ \text{G}_1$. By expanding the product, it was proved in \cite[Eq. (7.14)]{BBCS} that 
		\begin{equation} \label{eq:first2cK}
		\begin{split}
		\text{G}_1
		%
		&= \sum_{p \in P_{H}} p^2 \eta_p \big(b_p b_{-p} + b^*_{-p} b^*_{p} \big)+ \sum_{p \in P_{H}} p^2 \eta_p^2 \left(1-\frac{\cN_+}{N}\right)+\cE^K_{1},
		\end{split}
		\end{equation}
where the error operator $ \cE^K_{1}$ is given by
\begin{equation} \nonumber
\begin{split}
\cE^K_{1}=&\,\int_0^1ds \sum_{p \in P_{H}} p^2 \eta_p \big[\big(( \gamma_p^{(s)})^2-1\big)+(\s^{(s)}_p)^2\big]\big(b_p b_{-p} + b^*_{-p} b^*_{p} \big)\\
&+\int_0^1ds \sum_{p \in P_{H}} p^2 \eta_p \gamma_p^{(s)} \s^{(s)}_p (4b_p^*b_{p}-2N^{-1}a^*_pa_p)\big)\\
&+2\int_0^1ds \sum_{p \in P_{H}} p^2 \eta_p  \left[ (\gamma_p^{(s)}-1) \s^{(s)}_p + (\s^{(s)}_p-s \eta_p) \right] \Big(1-\frac{\cN_+}{N}\Big).
\end{split}
\end{equation}
Using that $|\big((\gamma_p^{(s)})^2-1\big)|\leq C \eta_p^2$, $(\s^{(s)}_p)^2\leq C \eta_p^2$ and $p^2 \eta_p^2 \leq C N^{2\kappa-2\alpha}$, we obtain
		\begin{equation} \label{eq:cE0K}
		\begin{split}
		|\langle\xi, \cE^K_{1} \xi\rangle|  \leq \;  &C \sum_{p \in P_{H}} \Big[p^2 |\eta_p|^3 \|b_p\xi\|\|(\cN_++1)^{1/2} \xi\|+ p^2 \eta_p^2 \|a_p\xi\|^2+ p^2 \eta_p^4\Big]\\
		\leq \; &C N^{2\kappa - 2\alpha}\|(\cN_++1)^{1/2}\xi\|^2.
		\end{split}
		\end{equation}
Similarly, if $\chi\in \ell^2(\Lambda^*_+)$ denotes the characteristic function of $ \{q\in\Lambda^*_+: |q|\leq cN^\gamma \}$, we observe that 
		\[ \left[ \cN_{\leq cN^\gamma}, \cE^K_{1}\right]  = 2\int_0^1ds \sum_{p \in P_{H}} p^2 \eta_p \chi_p\big[\big(( \gamma_p^{(s)})^2-1\big)+(\s^{(s)}_p)^2\big]\big(b^*_{-p} b^*_{p}- b_p b_{-p} \big). \]
This term can be bounded as in \eqref{eq:cE0K} and we find that
		\begin{equation} \label{eq:cE0Kcomm}
		\begin{split}
		|\langle\xi, &[\cN_{\leq cN^\gamma}, \cE^K_{1} ]\xi\rangle| 
		\leq \; C N^{2\kappa - 2\alpha}\|(\cN_++1)^{1/2}\xi\|^2.
		\end{split}
		\end{equation}

Let us switch to the analysis of the contribution $\text{G}_2$, defined in \eqref{eq:cKterms}. As in \cite[Eq. (7.16)]{BBCS}, it is useful to further expand this into $\text{G}_2 = \text{G}_{21} + \text{G}_{22} + \text{G}_{23} + \text{G}_{24}$, where
\begin{equation} \label{eq:cKtermsG2}
\begin{split}
\text{G}_{21} &= \int_0^1ds \sum_{p \in P_{H}} p^2 \eta_p \left( \gamma_p^{(s)} b_p d_{-p}^{(s)} + \hc \right), \hspace{.23cm} \text{G}_{22} =  \int_0^1ds \sum_{p \in P_{H}} p^2 \eta_p \left( \s^{(s)}_p b^*_{-p} d_{-p}^{(s)} + \hc \right) \\
\text{G}_{23} &= \int_0^1ds \sum_{p \in P_{H}} p^2 \eta_p \left( \gamma_p^{(s)} d_p^{(s)} b_{-p}+ \hc \right), \hspace{.05cm} \text{G}_{24} =  \int_0^1ds \sum_{p \in P_{H}} p^2 \eta_p \left( \s^{(s)}_p d_p^{(s)} b^*_{p}+ \hc \right),
\end{split} \end{equation}
and to analyse the operators $\text{G}_{21}$, $\text{G}_{22} $, $\text{G}_{23} $ and $\text{G}_{24} $ separately. 

We start with $ \text{G}_{21}$. By, \cite[Eq. (7.17) \& Eq. (7.18)]{BBCS}, we have that
		\begin{equation}
		\label{eq:G21f} \text{G}_{21} = - \sum_{p \in P_H} p^2 \eta_p^2 \frac{\cN_+ + 1}{N} \frac{N-\cN_+}{N} + \left[  \cE_{2}^K + \hc \right], \end{equation}
where $\cE_{2}^K = \sum_{j=1}^5 \cE_{2j}^K$, with
		\begin{equation}\label{eq:cE2} \begin{split}
		\cE_{21}^K = \; &\frac{1}{2N} \sum_{p \in P_H} p^2 \eta_p^2 (\cN_++1) \big( b^*_p b_p - \frac 1 N a_p^* a_p\big) , \hspace{.3cm}
		\cE_{22}^K = \; \int_0^1 ds \sum_{p \in P_H} p^2 \eta_p (\gamma_p^{(s)} - 1)  b_p d_{-p}^{(s)} \\
		\cE_{23}^K = \; & \int_0^1 ds \sum_{p \in \Lambda_+^*} p^2 \eta_p b_p \bar{d}_{-p}^{(s)} , \hspace{3.2cm} \cE_{24}^K = \; - \int_0^1 ds \sum_{p \in P_H^c} p^2 \eta_p b_p \bar{\bar{d}}_{-p}^{(s)}, \\
		\cE_{25}^K =\; & \frac{1}{2N} \sum_{p \in P_H^c , q \in P_H} p^2 \eta_p \eta_q  a_q^* a_{-q}^* a_p a_{-p}(1-\cN_+/N).  \end{split} \end{equation}
Here, we set
		\begin{equation}\label{eq:barbarnot} \bar{d}^{(s)}_{-p} = d_{-p}^{(s)} +s \eta_{H} (p) \frac{\cN_+}{N} b_p^* \quad \text{ and } \quad \bar{\bar{d}}^{(s)}_{-p} = d_{-p}^{(s)} + \frac{1}{N} \sum_{q \in P_H} s \eta_q b_q^* a_{-q}^* a_{-p}.\end{equation} 
We easily find that
		\begin{equation}\label{eq:cEK21} |\langle \xi , \cE_{21}^K \xi \rangle | \leq C \sum_{p \in P_H} p^2 \eta_p^2 \| a_p \xi \|^2 \leq C N^{2\kappa - 2\a}  \|  \cN_+^{1/2} \xi \|^2
 		\end{equation}
and by applying (\ref{eq:d-bds}) in Lemma \ref{lm:dp}, we also find that
\begin{equation}\label{eq:cEK22} \begin{split} |\langle \xi , \cE_{22}^K \xi \rangle |  
&\leq \sum_{p \in P_H} p^2 |\eta_p|^3 \| \cN_+^{1/2} \xi \| \Big[ |\eta_p| \| \cN^{1/2}_+ \xi \|  + \| \eta_H \|  \| a_p \xi \| \Big] \\
&\leq  C N^{4 \kappa - 3\a} \| (\cN_+ + 1)^{1/2} \xi \|^2. \end{split} \end{equation}
Applying the bound \eqref{eq:off} for the operator $ \cE_{24}^K$, we obtain moreover that
		\begin{equation}\label{eq:cEK24}
		\begin{split} |\langle \xi , \cE_{24}^K \xi \rangle | 
		&\leq C \| \eta_H \|^2 \| (\cN_+ + 1)^{1/2} \xi \| \sum_{p \in P_H^c} p^2 |\eta_p | \| a_p \xi \|\\ 
		&\leq C N^{3\kappa - \a/2} \| (\cN_+ + 1)^{1/2} \xi \| \| \cK^{1/2} \xi \|
 		\end{split}
  		\end{equation}
by Cauchy-Schwarz and, similarly, that
		\begin{equation}\label{eq:cEK25}
		\begin{split} |\langle  \xi , \cE_{25}^K \xi \rangle | &\leq C N^{-1}  \sum_{p \in P_H^c, q \in P_H} p^2 |\eta_p|  |\eta_q| \| a_q a_{-q} \xi \| \| a_p a_{-p} \xi \| 	
		\leq C N^{2\kappa-\a} \| \cK^{1/2} \xi \|^2.
   		\end{split}
   		\end{equation}
Before we switch to the analysis of $ \cE_{23}$, let's quickly comment on the commutator of $ \cN_{\leq cN^\gamma}$ with $ \cE_{21}^K$, $\cE_{22}^K$, $\cE_{24}^K$ and $\cE_{25}^K$. We have that $ [\cN_{\leq cN^\gamma},\cE_{21}] = 0$ and referring to Lemma \ref{lm:commdpNres}, in particular to the bounds \eqref{eq:ngdp} and \eqref{eq:ngdpbar}, we obtain as above that 
		\[\begin{split}
		&|\langle  \xi , [\cN_{\leq cN^\gamma} ,  \cE_{22}^K] \xi \rangle | + |\langle  \xi , [\cN_{\leq cN^\gamma} ,  \cE_{24}^K] \xi \rangle | \\
		&\hspace{1.5cm} \leq C N^{4 \kappa - 3\a} \| (\cN_+ + 1)^{1/2} \xi \|^2+ C N^{3\kappa - \a/2} \| (\cN_+ + 1)^{1/2} \xi \| \| \cK^{1/2} \xi \|.
		\end{split}\]
Finally, if we denote by $ \chi\in\ell^2(\Lambda_+^*)$ the characteristic function of $ \{p\in\Lambda_+^*: |p|\leq cN^\gamma\}$, we observe that $[\mathcal{N}_{\leq cN^{\gamma}}, a_q^*a_{-q}^* a_pa_{-p}] =(-2 \chi(p) + 2 \chi(q))a_q^*a_{-q}^* a_pa_{-p}$ and obtain
		\[\begin{split} |\langle  \xi , [\cN_{\leq cN^\gamma} ,  \cE_{25}^K] \xi \rangle | &\leq C N^{-1} \!\!\!\! \sum_{p \in P_H^c, q \in P_H} p^2 |\eta_p|  |\eta_q| \| a_q a_{-q} \xi \| \| a_p a_{-p} \xi \| 	
		\leq C N^{2\kappa-\a} \| \cK^{1/2} \xi \|^2.
   		\end{split}\]

Next, consider the remaining term $ \cE_{23}^K$. As in \cite{BBCS}, we use the scattering equation (\ref{eq:eta-scat0}) and rewrite $ \cE_{23}^K$ in position space as
		\begin{equation}\label{eq:bdbarpspaceterm1} \begin{split} \cE_{23}^K = \;& -\frac{1}2 N^\kappa\int_0^1 ds \int_{\Lambda^2}  dx dy\; N^{3-3\kappa} V(N^{1-\kappa}(x-y)) f_{N} (x-y) \check{b}_x \check{\bar{d}}^{(s)}_y \\ 
		&+ N^\kappa \lambda_\ell \int_0^1 ds   \int_{\Lambda^2} dx dy\; \chi_\ell (x-y) N^{3-3\kappa}f_{N} (x-y) \check{b}_x \check{\bar{d}}^{(s)}_y. \end{split} \end{equation}
With Lemma \ref{sceqlemma}, the bound (\ref{eq:splitdbd}) in Lemma \ref{lm:dp}, the upper bound \eqref{eq:etax} as well as the assumption $ \alpha+\kappa\leq 1$, we obtain that
		\begin{equation}\label{eq:bdbarpspaceterm2} \begin{split} 
		|\langle \xi , \cE_{23}^K \xi \rangle | \leq \; & N^\kappa \int_0^1 ds \int_{\Lambda^2} dx dy \left[ N^{3-3\kappa} V (N^{1-\kappa} (x-y)) + C \right]  \\ &\hspace{2.8cm}  \times \| (\cN_+ + 1)^{1/2} \xi \| \| (\cN_+ + 1)^{-1/2}  \check{a}_x \check{\bar{d}}^{(s)}_y \xi \| \\
		 \leq \; &C N^{\kappa-1} \| \eta_H \| 
 		\int_{\Lambda^2} dx dy \left[ N^{3-3\kappa} V (N^{1-\kappa} (x-y)) + C\right] \\ &\times \| (\cN_+ + 1)^{1/2} \xi \|  \Big[ N \| (\cN_++1)^{1/2} \xi \|  +  \| \check{a}_y \cN_+ \xi \| + \| \check{a}_x \check{a}_y \cN_+^{1/2} \xi \| \Big] \\
		\leq \; &C N^{2\kappa-\alpha/2} \| (\cN_+ + 1)^{1/2} \xi \|^2 + C N^{3\kappa/2-\alpha/2} \|(\cN_+ + 1)^{1/2} \xi \| \| \cV_N^{1/2} \xi \|. \end{split} \end{equation}
Similarly, if we use the commutator bound (\ref{eq:ngsplitdbd}) in Lemma \ref{lm:commdpNres}, we find that
		\begin{equation}\label{eq:bdbarpspaceterm3} \begin{split} 
		|\langle \xi , [\cN_{\leq cN^\gamma}, \cE_{23}^K] \xi \rangle | 
		&\leq \; C N^{2\kappa-\alpha/2-1}  
 \int_{\Lambda^2} \!\!dx dy \left[ N^{3-3\kappa} V (N^{1-\kappa} (x-y)) +C \right]\| (\cN_+ + 1)^{1/2} \xi \|  \\ 
 &\hspace{1cm}\times \Big[ (N +| (\check{\chi}\ast \check{\eta}_H)(x-y)| ) \| (\cN_++1)^{1/2} \xi \|   +  \| a(\check{\chi}_x )  (\cN_++1) \xi \| \\
 &\hspace{1.5cm} +  \| \check{a}_y \cN_+ \xi \| +  \| a(\check{\chi}_x ) \check{a}_y (\cN_++1)^{1/2} \xi \| + \| \check{a}_x \check{a}_y \cN_+^{1/2} \xi \| \Big]. 
 		\end{split} \end{equation}
Here, $ \check\chi_x\in L^2(\Lambda)$ has values $ \check{\chi}_x(y) = \check{\chi}(y-x)$, where $ \chi \in\ell^2(\Lambda_+^*)$ denotes the characteristic function of $ \{p\in\Lambda_+^*: |p|\leq cN^\gamma\}$. Notice that $  \check{\chi}\ast \check{\eta}_H$ has Fourier transform $ \chi \eta_H \in \ell^2(\Lambda_+^*)$ s.t. the assumptions $ \alpha+\kappa\leq 1$, $\gamma\leq \alpha$ and the bound \eqref{eq:etax} imply
		\begin{equation} \label{eq:xbndchigastetaH} | \check{\chi}\ast\eta_H (x)| \leq |\eta_H(x)|  + \sum_{\substack{ p\in\Lambda_+^*:|p| \leq cN^\gamma}} |\eta_p |\leq C N. 
		\end{equation}
Furthermore, switching to momentum space and observing that
		\[\begin{split}
		 &\int_{\Lambda^2} dx dy\; N^{3-3\kappa} V (N^{1-\kappa} (x-y))  a^*(\check{\chi}_x)\check{a}^*_y \check{a}_y a(\check{\chi}_x) \\
		 &\hspace{5cm}= \sum_{\substack{r\in\Lambda^*; p,q\in\Lambda_+^*: \\ p+r,q+r\neq 0}} \widehat{V}(r/N^{1-\kappa})\chi_{p+r}\chi_p a^*_{p+r}a^*_q a_pa_{q+r}, 
		\end{split}\]
we find as a consequence of Cauchy-Schwarz that
		\begin{equation}\label{eq:VNchixaybnd}\begin{split}
		&\int_{\Lambda^2} dx dy\; N^{3-3\kappa} V (N^{1-\kappa} (x-y))  \| \check{a}_y a(\check{\chi}_x)\xi\|^2\\
		&\leq  C  \bigg(\sum_{\substack{r\in\Lambda^*; p,q\in\Lambda_+^*: \\ |p+r|, |p|\leq cN^{\gamma}  }} |p|^{-2} |p+r|^2\| a_{p+r}a_q\xi \|^2\bigg)^{1/2} \bigg( \sum_{\substack{r\in\Lambda^*; p,q\in\Lambda_+^*: \\ |p+r|, |p|\leq cN^{\gamma} }} |p|^{2} |p+r|^{-2}\| a_{q+r}a_p\xi \|^2\bigg)^{1/2} \\
		&\leq C N^{\gamma} \| \cK^{1/2}(\cN_++1)^{1/2}\xi\|^2.
		\end{split}\end{equation}
Observing also that $ \int_{\Lambda} dx \; a^*(\check{\chi}_x ) a(\check{\chi}_x ) = \sum_{\substack{p\in\Lambda_+^*:  |p|\leq cN^\gamma}} a^*_pa_p $, 
we conclude that
		\[\begin{split} |\langle \xi , [\cN_{\leq cN^\gamma}, \cE_{23}^K] \xi \rangle |  \leq &\;C N^{2\kappa-\alpha/2} \| (\cN_+ + 1)^{1/2} \xi \|^2 + C N^{2\kappa-\alpha/2 +\gamma/2} \| (\cK + 1)^{1/2} \xi \|^2\\
		& + C N^{3\kappa/2-\alpha/2} \|(\cN_+ + 1)^{1/2} \xi \| \| \cV_N^{1/2} \xi \|.  \end{split}\]

Collecting all the previous bounds from (\ref{eq:cEK21}) to \eqref{eq:cEK25} as well as their associated commutator bounds, we deduce that we have for all $\xi\in \cF_+^{\leq N}$ that
		\begin{equation}\label{eq:cE2f} 
		\begin{split}
		| \langle\xi, \cE_2^K  \xi\rangle|  &\leq C N^{3\kappa-\alpha/2} \langle\xi, (\cH_N + 1)\xi\rangle, \\
		|\langle\xi, \left[ \cN_{\leq cN^\gamma}, \cE_2^K \right] \xi\rangle | &\leq C (N^{3\kappa-\alpha/2} + N^{2\kappa+\gamma/2-\alpha/2} ) \langle\xi, (\cH_N + 1)\xi\rangle.
		\end{split} 
		\end{equation}
		
Let us now switch to the contribution $\text{G}_{22}$, defined in \eqref{eq:cKtermsG2}. Lemma \ref{lm:dp} implies
		\begin{equation}\label{eq:G22}
		\begin{split}
		|\langle\xi,\text{G}_{22}\xi\rangle|&\leq C \sum_{p \in P_{H}} p^2 \eta_p^2\|b_{-p}\xi\|\| d_{-p} \xi\|
		\leq CN^{3\kappa - 5\a/2}\|(\cN_++1)^{1/2}\xi\|^2
		\end{split}
		\end{equation}
and, using Lemma \ref{lm:commdpNres}, we get similarly
		\begin{equation}\label{eq:G222}
		\begin{split}
		|\langle\xi,[\mathcal{N}_{\leq cN^{\gamma}},\text{G}_{22}]\xi\rangle|&\leq C \sum_{p \in P_{H}} p^2 \eta_p^2\|b_{-p}\xi\| \big (\chi(-p)\| d_{-p} \xi\|+ \| [\mathcal{N}_{\leq c N^{\gamma}},d_{-p}]\xi\|\big)\\
		&\leq CN^{3\kappa - 5\a/2}\|(\cN_++1)^{1/2}\xi\|^2.
		\end{split}
		\end{equation}

Consider next the term $\text{G}_{23}$, defined in \eqref{eq:cKtermsG2}. Recalling the notation  $\bar{\bar{d}}^{(s)}_p$, introduced in (\ref{eq:barbarnot}), it was shown in \cite{BBCS} that $\text{G}_{23}$ can be written as $\text{G}_{23} = \sum_{j=1}^4  \cE_{3j}^K + \hc$, where
		\begin{equation*} 
		\begin{split}
		\cE_{31}^K = \; &\int_0^1ds \sum_{p \in P_{H}} p^2 \eta_p \big(\gamma_p^{(s)}-1\big) d_p^{(s)}b_{-p} \, , \hspace{1cm} \cE_{32}^K =  \int_0^1ds \sum_{p \in \L_+^*} p^2 \eta_p d_p^{(s)} b_{-p} \\  \cE_{33}^K =  \; & \frac{1}{2N} \sum_{p \in P_H^c, q \in P_H} p^2 \eta_p \eta_q b_q^* a_{-q}^* a_p b_{-p} \, , \hspace{1.1cm}
		\cE_{34}^K = -\int_0^1 ds \sum_{p\in P_H^c} p^2 \eta_p \bar{\bar{d}}^{(s)}_p b_{-p}
		\end{split}
		\end{equation*}
The contribution $\cE_{33}^K $ can be controlled exactly as $ \cE_{25}^K$, defined in \eqref{eq:cE2}. The errors $\cE_{31}^K $ and $\cE_{34}^K$ as well as their commutators with $\cN_{\leq cN^\gamma}$ can be controlled as above, using Lemma \ref{lm:dp} and Lemma \ref{lm:commdpNres}, respectively. We find that 
		\[ \begin{split}
		|\langle \xi, \cE_{31}^K \xi \rangle | + |\langle \xi, \cE_{34}^K \xi \rangle | & \leq C N^{4\kappa -3\a} \| (\cN_+ + 1)^{1/2} \xi \|^2 + C N^{3\kappa -\alpha/2}   \|  (\cN_+ + 1)^{1/2} \xi \| \| \cK^{1/2} \xi \|
		\end{split} \]
as well as
		\[\begin{split}
		&|\langle \xi, [\cN_{\leq cN^\gamma}, \cE_{31}^K] \xi \rangle | + |\langle \xi, [\cN_{\leq cN^\gamma}, \cE_{34}^K] \xi \rangle | \\
		&\hspace{1.5cm}\leq  C N^{4\kappa -3\a} \| (\cN_+ + 1)^{1/2} \xi \|^2 + CN^{3\kappa- \alpha/2}   \|  (\cN_+ + 1)^{1/2} \xi \| \| \cK^{1/2} \xi \|.
		\end{split}\]
Finally, the term $ \cE_{32}^K$ can be controlled similarly as the term $ \cE_{23}^K$, defined in \eqref{eq:cE2}. We switch to position space and apply Lemma \ref{lm:dp} so that
		\begin{equation}\label{eq:dbarbpspaceterm1}\begin{split} |\langle \xi , \cE_{32}^K \xi \rangle | &\leq CN^\kappa \int_0^1 ds \int_{\Lambda^2} dx dy \left[ N^{3-3\kappa} V(N^{1-\kappa}(x-y)) +  C \right]\| (\cN_+ + 1)^{1/2} \xi \| \\ 
		&\hspace{3cm} \times  \| (\cN_+ + 1)^{-1/2} \check{d}_x^{(s)} \check{a}_y \xi \| \\
		&\leq C N^{\kappa-1}  \| \eta_H \| \int_{\Lambda^2} dx dy \left[ N^{3-3\kappa} V(N^{1-\kappa}(x-y)) + C \right] \| (\cN_+ + 1)^{1/2} \xi \| \\ &\hspace{4cm} \times \left[ \| \check{a}_y (\cN_+ + 1) \xi \| + \| \check{a}_x \check{a}_y (\cN_+ + 1)^{1/2} \xi \| \right]  \\
		&\leq C N^{2\kappa - \alpha/2} \| (\cN_+ + 1)^{1/2} \xi \|^2 + C N^{3\kappa/2 -\a/2} \| (\cN_+ + 1)^{1/2} \xi \| \| \cV_N^{1/2} \xi \|.
 		\end{split} \end{equation}
For the commutator with $ \cN_{\leq cN^\gamma}$, we use Lemma \ref{lm:commdpNres} and obtain
		\begin{equation}\label{eq:dbarbpspaceterm2}\begin{split} |\langle \xi , [\cN_{\leq cN^\gamma}, \cE_{32}^K] \xi \rangle | 
		&\leq C N^{2\kappa-\alpha/2-1} \!\!\int_{\Lambda^2} dx dy \left[ N^{3-3\kappa} V(N^{1-\kappa}(x-y)) + C \right] \| (\cN_+ + 1)^{1/2} \xi \| \\ &\hspace{3cm} \times \Big[ \| \check{a}_y (\cN_+ + 1) \xi \| + \| a(\check{\chi}_y) \check{b}_x(\cN_+ + 1)^{1/2} \xi \|\\
		&\hspace{3.5cm} +\| a(\check{\chi}_y) (\cN_+ + 1) \xi \|  + \| \check{a}_x \check{a}_y (\cN_+ + 1)^{1/2} \xi \|  \Big]. 
 		\end{split} \end{equation}
Proceeding as in \eqref{eq:VNchixaybnd} and thereafter, we deduce that
		\[\begin{split} |\langle \xi , [\cN_{\leq cN^\gamma}, \cE_{32}^K] \xi \rangle |  \leq &\;C N^{2\kappa-\alpha/2} \| (\cN_+ + 1)^{1/2} \xi \|^2 + C N^{2\kappa-\alpha/2 +\gamma/2} \| (\cK + 1)^{1/2} \xi \|^2\\
		& + C N^{3\kappa/2-\alpha/2} \|(\cN_+ + 1)^{1/2} \xi \| \| \cV_N^{1/2} \xi \|.  \end{split}\]
Summarizing the last bounds, we have shown that for all $\xi\in \cF_+^{\leq N}$, we have that
		\begin{equation}\label{eq:G23f}
		\begin{split}
		| \langle\xi, \text{G}_{23}  \xi\rangle|  &\leq C N^{3\kappa-\alpha/2} \langle\xi, (\cH_N + 1)\xi\rangle, \\
		|\langle\xi, \left[ \cN_{\leq cN^\gamma}, \text{G}_{23} \right] \xi\rangle | &\leq C (N^{3\kappa-\alpha/2} + N^{2\kappa+\gamma/2-\alpha/2} ) \langle\xi, (\cH_N + 1)\xi\rangle.
		\end{split} 
		\end{equation}

It remains to control the term $\text{G}_{24}$ in \eqref{eq:cKtermsG2}. This follows as before, using (\ref{eq:d-bds}) in Lemma \ref{lm:dp} and the bound (\ref{eq:H1eta}). We find that
		\begin{equation*}
		\begin{split}
		|\langle &\xi,\text{G}_{24}\xi\rangle| \\ &\leq C \int_0^1 ds  \sum_{p \in P_{H}} p^2 \eta_p^2 \|(\cN_++1)^{1/2}\xi\|\|(\cN_++1)^{-1/2} d_p^{(s)}b^*_{p}\xi\| \\
		&\leq C N^{-1} \|(\cN_++1)^{1/2}\xi\|  \sum_{p \in P_{H}} p^2 \eta_p^2  \left[ |\eta_p| \| (\cN_+ + 1)^{3/2} \xi \| + \| \eta_H \| \| b_p b_p^* (\cN_+ +1)^{1/2} \xi \| \right] \\
		&\leq C N^{3\kappa -\a} \| (\cN_+ + 1)^{1/2} \xi \|^2 
		\end{split} 
		\end{equation*}
and, referring to (\ref{eq:ngdp}) in Lemma \ref{lm:commdpNres}, that
		\begin{equation*}
		\begin{split}
		|\langle &\xi,[\cN_{\leq cN^\gamma}, \text{G}_{24}]\xi\rangle| 
		\leq C N^{3\kappa -\a} \| (\cN_+ + 1)^{1/2} \xi \|^2.
		\end{split} 
		\end{equation*}

Altogether, (\ref{eq:G21f}), (\ref{eq:cE2f}), (\ref{eq:G22}), (\ref{eq:G23f}) and the last two bounds prove that
		\[ \text{G}_2 =  - \sum_{p \in P_H} p^2 \eta_p \frac{\cN_+ + 1}{N} \frac{N-\cN_+}{N} + \cE_4^K \]
where for all $\xi\in \cF_+^{\leq N}$, it holds true that
		\begin{equation}\label{eq:G2fi} 
		\begin{split}
		| \langle\xi, \cE_4^K  \xi\rangle|  &\leq C N^{3\kappa-\alpha/2} \langle\xi, (\cH_N + 1)\xi\rangle, \\
		|\langle\xi, \left[ \cN_{\leq cN^\gamma}, \cE_4^K \right] \xi\rangle | &\leq C (N^{3\kappa-\alpha/2} + N^{2\kappa+\gamma/2-\alpha/2} ) \langle\xi, (\cH_N + 1)\xi\rangle.
		\end{split} 
		\end{equation}

Finally, we analyse $\text{G}_3$, defined in \eqref{eq:cKterms}. We follow \cite{BBCS} and split it into two terms, $\text{G}_3 = \cE_{51}^K + \cE_{52}^K + \hc$, where
\[ \cE_{51}^K = \int_0^1 ds \sum_{p \in \L_+^*} p^2 \eta_p d^{(s)}_p d_{-p}^{(s)}, \qquad \cE_{52}^K= - \int_0^1 ds \sum_{p \in P_H^c} p^2 \eta_p d^{(s)}_p d^{(s)}_{-p}.  \]	
The error $ \cE_{52}^K$ can be controlled as above, using the bounds \eqref{eq:d-bds} from Lemma \ref{lm:dp} and the bounds \eqref{eq:ngdp} from Lemma \ref{lm:commdpNres}. As a result, one obtains that
		\[ |\langle \xi, \cE_{52}^K \xi \rangle | + |\langle \xi, [\mathcal{N}_{\leq cN^{\gamma}},\cE_{52}^K] \xi \rangle |  \leq C N^{3\kappa -\a/2} \| (\cN_+ + 1)^{1/2} \xi \| \| \cK^{1/2} \xi \|. \] 
To deal with the remaining contribution $\cE_{51}^K $, we switch as usual to position space. Proceeding similarly as for the terms $\cE_{23}^K $ and $\cE_{32}^K$ above, but now using the bound \eqref{eq:ddxy} in Lemma \ref{lm:dp}, we find that 
		\begin{equation}\label{eq:ddxspacebnd1} \begin{split}  | \langle \xi , \cE_{51}^K \xi \rangle | \leq \; &CN^\kappa \int_0^1 ds \int_{\L^2}  dx dy \left[ N^{3-3\kappa} V(N^{1-\kappa}(x-y)) + C \right] \\ 
		&\hspace{2.5cm} \times  \| (\cN_+ + 1)^{1/2} \xi \| \| (\cN_+ + 1)^{-1/2} \check{d}^{(s)}_x \check{d}^{(s)}_y \xi \| \\
		\leq\;& C  N^{2\kappa-\a /2}  \| (\cN_+ + 1)^{1/2} \xi \|^2 + C N^{5\kappa/2 - \alpha} \| (\cN_+ + 1)^{1/2} \xi \| \| \cV_N^{1/2} \xi \|.
		\end{split} \end{equation}
In the same way, referring now to \eqref{eq:ngddxy} in Lemma \ref{lm:commdpNres}, we see that
		\begin{equation}\label{eq:ddxspacebnd2} \begin{split}  | \langle \xi , [\cN_{\leq cN^\gamma}, \cE_{51}^K] \xi \rangle | &\leq CN^\kappa \int_0^1 ds \int_{\L^2}  dx dy \left[ N^{3-3\kappa} V(N^{1-\kappa}(x-y)) + C \right] \\ 
		&\hspace{2cm} \times  \| (\cN_+ + 1)^{1/2} \xi \| \| (\cN_+ + 1)^{-1/2} [\cN_{\leq cN^\gamma}, \check{d}^{(s)}_x \check{d}^{(s)}_y] \xi \| \\
		&\leq C  N^{2\kappa-\a /2+\gamma/2}  \| (\cK + 1)^{1/2} \xi \|^2 + C N^{5\kappa/2 - \alpha} \| (\cN_+ + 1)^{1/2} \xi \| \| \cV_N^{1/2} \xi \|.
		\end{split} \end{equation}
Hence, collecting the last two bounds together with (\ref{eq:first2cK}), (\ref{eq:cE0K}) and (\ref{eq:G2fi}), we have proved the identity (\ref{eq:K-dec} with the error bounds (\ref{eq:errorKc}) and \eqref{eq:errorKccomm}. 
\end{proof}
Having analysed the conjugated kinetic energy $e^{-B(\eta_H)}\cK e^{B(\eta_H)}$, let's switch to the analysis of the second term on the r.h.s. of (\ref{eq:dec-G2}).
\begin{prop}\label{prop:G2V}
There is a constant $C > 0$ such that
		\begin{equation}\label{eq:cEV-defc}
		\begin{split}
		 e^{-B (\eta_H )} \cL^{(2,V)}_N  e^{B(\eta_H)}  = &\; \sum_{p \in P_{H}} N^{\kappa}\widehat{V} (p/N^{1-\kappa}) \eta_p\Big(\frac{N-\cN_+}{N}\Big)\Big(\frac{N-\cN_+-1}{N}\Big) \\
 		&\; +\sum_{p \in \Lambda^*_+} N^{\kappa} \widehat{V} (p/N^{1-\kappa})  a^*_pa_p \frac{N-\cN_+}{N} \\ 
		&\;+ \frac{1}{2}\sum_{p \in \Lambda^*_+} N^{\kappa}\widehat{V} (p/N^{1-\kappa}) \big( b_p b_{-p}+ b_{-p}^* b_p^*\big) +\cE_{N}^{(V)}
		\end{split}
		\end{equation}
where
		\begin{align}\label{eq:errorVc} \pm \cE_{N}^{(V)} &\leq  C N^{2\kappa- \a/2}(\cH_N +1),\\
		\label{eq:errorVccomm} \pm i[\mathcal{N}_{\leq cN^{\gamma}},\cE_{N}^{(V)}] & \leq  C N^{2\kappa- \a/2+\gamma/2} (\cH_N +1),
		\end{align}
for all $\alpha \geq  2\kappa$ with $ \alpha+\kappa\leq 1$, and for all $ 0\leq \gamma\leq \alpha$, $c\geq 0$, $f$ smooth and bounded, $M \in \bN$ and $N \in \bN$ large enough.
\end{prop}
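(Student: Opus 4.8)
The plan is to follow closely the strategy used for the analogous statement in \cite[Section 7.2]{BBCS}, adapting it to the present scaling of the interaction and, more importantly, adding the commutator bounds with $\cN_{\leq cN^\gamma}$. Recall from (\ref{eq:L2VN}) that $\cL^{(2,V)}_N$ consists of a diagonal part $\sum_p N^\kappa \widehat V(p/N^{1-\kappa}) a_p^* a_p (N-\cN_+)/N$ and an off-diagonal part $\tfrac12 \sum_p N^\kappa \widehat V(p/N^{1-\kappa})(b_p^* b_{-p}^* + b_p b_{-p})$. First I would conjugate each part with $e^{B(\eta_H)}$ using the relations (\ref{eq:ebe}), i.e. $e^{-B}b_q e^B = \gamma_q b_q + \sigma_q b_{-q}^* + d_q$ with $\gamma_q = \cosh\eta_H(q)$, $\sigma_q = \sinh\eta_H(q)$, and the Taylor-expansion identity $e^{-B} X e^B = X + \int_0^1 ds\, e^{-sB}[X,B]e^{sB}$. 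The leading terms reproduce the three main operators in (\ref{eq:cEV-defc}); in particular the first line arises from the vacuum-expectation contribution $\sum_p N^\kappa\widehat V(p/N^{1-\kappa})\sigma_p\gamma_p \approx \sum_{p\in P_H} N^\kappa \widehat V(p/N^{1-\kappa})\eta_p$ times the number-operator factor, after replacing $\sigma_p\gamma_p$ by $\eta_p$ at the cost of an error of size $|\eta_p|^3$, which is summable and negligible since $\sum_{p\in P_H}|\eta_p|^3 \lesssim N^{3\kappa-\alpha}$ by (\ref{eq:modetap}).

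The bulk of the work is to collect the error terms into $\cE_N^{(V)}$ and bound them. These fall into the usual families: (a) terms in which one or two $b$-operators are replaced by a remainder $d$-operator, controlled by the estimates (\ref{eq:d-bds}), (\ref{eq:off}), (\ref{eq:splitdbd}), (\ref{eq:ddxy}) of Lemma \ref{lm:dp}; (b) terms where $\gamma_p^{(s)}-1$ or $\sigma_p^{(s)} - s\eta_p$ appears, each producing an extra factor $\eta_p^2$; (c) genuinely quartic pieces of the form $N^{-1}\sum_{p,q} N^\kappa\widehat V(p/N^{1-\kappa})\eta_q\, a_q^* a_{-q}^* a_p a_{-p}$ coming from the commutator of the off-diagonal part with $B$; (d) for the diagonal part, the commutator with $B$ produces cubic terms that, together with the scattering equation (\ref{eq:eta-scat0}), need to be rewritten in position space before applying Lemma \ref{lm:dp} (this is where the restriction to high momenta $P_H$ and the bound (\ref{eq:etax}) on $\check\eta_H$ enter). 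Using Cauchy--Schwarz, $\sum_{p\in P_H} p^2 |\eta_p|^2 \lesssim N^{1+\kappa}$ from (\ref{eq:H1eta}), $\|\eta_H\| \lesssim N^{\kappa-\alpha/2}$ from (\ref{eq:etaHL2}), and $\sup_p N^\kappa|\widehat V(p/N^{1-\kappa})| \leq C N^\kappa$, each of these contributions is bounded by $C N^{2\kappa - \alpha/2}(\cH_N+1)$ — note it is $\cH_N = \cK + \cV_N$ and not just $\cK$ that appears, precisely because some position-space error terms are controlled by $\|\cV_N^{1/2}\xi\|\,\|(\cN_++1)^{1/2}\xi\|$ — which gives (\ref{eq:errorVc}).

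For the commutator bound (\ref{eq:errorVccomm}), I would commute $\cN_{\leq cN^\gamma}$ through each error term term-by-term, exactly as in the proof of Proposition \ref{prop:K}. The key inputs are: $[\cN_{\leq cN^\gamma}, b_p^\sharp] = F(\sharp)\chi_p b_p^\sharp$, the identity $[\cN_{\leq cN^\gamma}, a_q^* a_{-q}^* a_p a_{-p}] = 2(\chi_q - \chi_p)a_q^* a_{-q}^* a_p a_{-p}$, and — crucially — the commutator estimates (\ref{eq:ngdp}), (\ref{eq:ngdpbar}), (\ref{eq:ngsplitdbd}), (\ref{eq:ngddxy}) of Lemma \ref{lm:commdpNres} for the remainder operators. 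The only place where a genuinely larger bound appears is in the position-space error terms involving $a(\check\chi_x)$: there one picks up, via a computation like (\ref{eq:VNchixaybnd}), an extra factor $N^{\gamma/2}$ from $\|\cK^{1/2}(\cN_++1)^{1/2}\|$, together with the bound $|(\check\chi * \check\eta_H)(x)| \leq CN$ analogous to (\ref{eq:xbndchigastetaH}), which uses $\gamma\leq\alpha$ and $\alpha+\kappa\leq 1$. This yields the stated $C N^{2\kappa-\alpha/2+\gamma/2}(\cH_N+1)$. The main obstacle — as in Proposition \ref{prop:K} — is precisely this bookkeeping: keeping track of which error terms survive commutation with $\cN_{\leq cN^\gamma}$ with only the cost of $\chi$-localizing one factor of $\eta_H$ (hence a harmless constant) versus which ones produce the genuine $N^{\gamma/2}$ loss through an $a(\check\chi_x)$-type contribution, and checking that in all cases the $\alpha$- and $\gamma$-dependent exponents close under the hypotheses $\alpha\geq 2\kappa$, $\alpha+\kappa\leq 1$, $\gamma\leq\alpha$. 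Since all the relevant structural lemmas (Lemma \ref{lm:dp}, Lemma \ref{lm:commdpNres}) are already available, this is routine but lengthy, and I would simply cite \cite[Section 7.2]{BBCS} for the parts identical to the Gross--Pitaevskii case and spell out only the scaling-dependent and $\cN_{\leq cN^\gamma}$-dependent modifications.
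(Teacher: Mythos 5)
Your proposal is correct and follows essentially the same route as the paper: the same decomposition into diagonal and off-diagonal parts, the same Taylor/Bogoliubov expansion via (\ref{eq:ebe}), the same reliance on Lemmas \ref{lm:dp} and \ref{lm:commdpNres} for the $d$-operator remainders, and the same identification of the position-space terms (controlled by $\|\cV_N^{1/2}\xi\|$ and producing the $N^{\gamma/2}$ loss through $a(\check\chi_x)$-contributions as in (\ref{eq:VNchixaybnd})) as the only genuinely delicate pieces. One small inaccuracy: the scattering equation (\ref{eq:eta-scat0}) is not actually needed here (only in Proposition \ref{prop:K}, where $p^2\eta_p$ is not summable), since $\sum_p|\widehat V(p/N^{1-\kappa})\eta_p|\leq CN$ suffices, and the position-space work in the paper arises from the off-diagonal part $F_3$ rather than the diagonal one — but neither point affects the validity of your argument.
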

\begin{proof} We follow closely the analysis in \cite[Prop. 7.3]{BBCS} and briefly sketch the main steps to prove the bounds \eqref{eq:errorVc} and \eqref{eq:errorVccomm}. 

As in \cite{BBCS}, it is useful to decompose $e^{-B (\eta_H )} \cL^{(2,V)}_N  e^{B(\eta_H)} $ into the sum 
		\begin{equation}\label{eq:G2-deco} \begin{split}
		e^{-B(\eta_H)} \cL^{(2,V)}_{N}  e^{B(\eta_H)}  =\; & \sum_{p \in \Lambda^*_+} N^{\kappa} \widehat{V} (p/N^{1-\kappa}) e^{-B(\eta_H)} b_p^* b_p e^{B(\eta_H)} \\ &- \frac{1}{N} \sum_{p \in 		\Lambda^*_+} N^{\kappa}\widehat{V} (p/N^{1-\kappa}) e^{B(\eta_H)} a_p^* a_p e^{-B(\eta_H)} \\
		&+ \frac{1}{2} \sum_{p \in \Lambda^*_+} N^{\kappa}\widehat{V} (p/N^{1-\kappa}) e^{-B(\eta_H)} \big[ b_p b_{-p} + b_p^* b_{-p}^* \big] e^{B(\eta_H)} \\ =: \; &\text{F}_1 + \text{F}_2 +\text{F}_3
		\end{split} \end{equation}
and to analyse the contributions $ \text{F}_1$, $\text{F}_2$ and $\text{F}_3$ separately. Following \cite[Eq. (7.32)]{BBCS} and thereafter, we split $ \text{F}_1$ into
		\begin{equation*} 
		\begin{split}
		\text{F}_1 =\; & \sum_{ p\in  \L_+^*} N^{\kappa}\widehat{V} (p/N^{1-\kappa})  a_p^*  a_p \frac{N-\cN_+}{N}
		 + \cE_{1}^V,
		\end{split} 
		\end{equation*}
where $ \cE_1^V = \sum_{i=j}^4 \text{F}_{1j}$ is defined through
		\[\begin{split}
		&\text{F}_{11} = \frac{1}{N} \sum_{p \in \L_+^*} N^{\kappa}\widehat{V} (p/N^{1-\kappa}) a_p^* a_p, \\
		& \text{F}_{12} = \sum_{ p\in P_H} N^{\kappa}\widehat{V} (p/N^{1-\kappa}) \Big[ (\g_p^2-1) b_p^*b_p + \gamma_p \sigma_p (b_{-p}b_p+b_p^*b_{-p}^*)\Big] , \\
		& \text{F}_{13} =   \sum_{ p\in P_H} N^{\kappa}\widehat{V} (p/N^{1-\kappa})  \Big[\sigma_p^2 ( b_{p}^*b_p-N^{-1}a_{p}^*a_p)+ \sigma_p^2 \Big(\frac{N-\cN_+}{N}\Big) \Big], \\
		&\text{F}_{14} = \sum_{ p\in \L_+^*} N^{\kappa}\widehat{V} (p/N^{1-\kappa}) \big[ (\gamma_p b_p^* + \sigma_p b_{-p})  d_p+  d_p^*(\gamma_p b_p + \sigma_p b_{-p}^*)+ d_p^* d_p\big].
		\end{split}\]
Using Cauchy-Schwarz, the pointwise bounds $|\g_p^2-1|\leq C |\eta_p|^2$, $|\s_p|\leq C|\eta_p|$ for all $p \in P_H$, the bound $\| \eta_H \| \leq C N^{\kappa - \a/2}$ and the fact that $ [\cN_{\leq cN^\gamma}, a^{\sharp}_p] = F(\sharp)\chi_p a^{\sharp}_p $, where $ F(*)= 1$, $ F(\cdot)=-1$ and where $ \chi$ denotes the characteristic function of the set \linebreak $ \{ p\in\Lambda_+^*: |p|\leq cN^\gamma\}$, it is straight-forward to verify that 
		\[\begin{split}
		 \pm ( \text{F}_{11} + \text{F}_{12}+\text{F}_{13}) \leq C N^{2\kappa-\a/2} (\cN_++1) ,\\
		 \pm i[\cN_{\leq cN^\gamma}, \text{F}_{11} + \text{F}_{12}+\text{F}_{13} ] \leq C N^{2\kappa-\a/2} (\cN_++1). 
		 \end{split}\]
To control the remaining error term $\text{F}_{14}$, we refer to the bounds \eqref{eq:d-bds} in Lemma \ref{lm:dp} and \eqref{eq:ngdp} in Lemma \ref{lm:commdpNres}. With Cauchy-Schwarz, they imply that
		\[\begin{split}
		 \pm ( \text{F}_{14} ) \leq C N^{2\kappa-\a/2} (\cN_++1) ,\hspace{0.5cm} \pm i[\cN_{\leq cN^\gamma}, \text{F}_{14}  ] \leq C N^{2\kappa-\a/2} (\cN_++1), 
		 \end{split}\]
so that altogether
		\[\begin{split}
		 \pm \cE_{1}^V &\leq  C N^{2\kappa-\a/2} (\cN_++1),\hspace{0.5cm}\pm i[\cN_{\leq cN^\gamma}, \cE_{1}^V ] \leq C N^{2\kappa-\a/2} (\cN_++1).
		 \end{split}\]

The analysis of the term $\text{F}_2$, defined in \eqref{eq:G2-deco}, is quite similar once we notice that 
		\[\begin{split} &\frac{1}{N} \sum_{p \in P_H} N^{\kappa}\widehat{V} (p/N^{1-\kappa})  e^{B(\eta_H)} a_p^* a_p e^{-B(\eta_H)} - \frac{1}{N} \sum_{p \in \Lambda^*_+} N^{\kappa}\widehat{V} (p/N^{1-\kappa})   a_p^* a_p   \\
		&=\frac{N^{\kappa}}{N}  \int_{0}^1ds   \sum_{p \in P_H} \widehat{V} (p/N^{1-\kappa})\eta_p \Big[ \big( \gamma_p^{(s)} b_p + \sigma_p^{(s)} b^*_{-p}\big) \big( \gamma_p^{(s)} b_{-p} + \sigma_p^{(s)} b^*_{p} \big)+\text{h.c.} \Big] \\
		&\hspace{0.5cm} + \frac{N^{\kappa}}{N}  \int_{0}^1ds  \!\!\sum_{p \in P_H} \widehat{V} (p/N^{1-\kappa})\eta_p \Big[ \big( \gamma_p^{(s)} b_p + \sigma_p^{(s)} b^*_{-p}\big) d_{-p}^{(s)}+ d_p^{(s)}\big( \gamma_p^{(s)} b_{-p} + \sigma_p^{(s)} b^*_{p} \big)+\text{h.c.} \Big] \\ 
		&\hspace{0.5cm} + \frac{N^{\kappa}}{N}  \int_{0}^1ds \sum_{p \in \Lambda^*_+} \widehat{V} (p/N^{1-\kappa})\eta_p \big[ d_p^{(s)}d_{-p}^{(s)} +\text{h.c.} \big].
		\end{split}\]
Here, we use the same notation as in \eqref{eq:cKterms}. Proceeding as above then results in
		\[\begin{split}
		 \pm \text{F}_2 &\leq  C N^{2\kappa-\a/2-1} (\cN_++1),\hspace{0.5cm}\pm i[\cN_{\leq cN^\gamma}, \text{F}_2 ] \leq C N^{2\kappa-\a/2-1} (\cN_++1).
		 \end{split}\]
We omit the details. 

Finally, let's consider the contribution $\text{F}_3$, defined in \eqref{eq:G2-deco}. As in \cite{BBCS}, we split it into
		\begin{equation}\label{eq:G23split}
		\begin{split}
		\text{F}_3 =\; &\frac{1}{2} \sum_{p \in \Lambda^*_+} N^{\kappa} \widehat{V} (p/N^{1-\kappa}) \Big[ \big( \gamma_p b_p + \sigma_p b_{-p}^* \big) \big( \gamma_p b_{-p} + \sigma_p b_p^* \big) +\text{h.c.}\Big]  \\
		&+ \frac{1}{2}  \sum_{p \in \Lambda^*_+} N^{\kappa}\widehat{V} (p/N^{1-\kappa}) \, \Big[ ( \g_p b_p+ \s_p b^*_{-p}) \, d_{-p} +
		d_p\, (\g_p b_{-p} + \s_p b^*_{p}) +\text{h.c.}\Big] \\
		&+\frac{1}{2}  \sum_{p \in \Lambda^*_+}N^{\kappa} \widehat{V} (p/N^{1-\kappa}) \big[ d_p d_{-p} + \text{h.c.}\big] \\
		=:  &\,\text{F}_{31}  + \text{F}_{32}+\text{F}_{33} 
\end{split} \end{equation}
and we start with the analysis of $ \text{F}_{31}$. The latter can be written as
		\begin{equation}\label{eq:fin-G23}
		\begin{split}
		\text{F}_{31}
		= \; & \frac{1}{2}\sum_{p \in \Lambda^*_+} N^{\kappa}\widehat{V} (p/N^{1-\kappa}) (b_p b_{-p}+b^*_p b^*_{-p}) +\sum_{p \in P_H} N^{\kappa}\widehat{V} (p/N^{1-\kappa}) \eta_p \frac{N-\cN_+}{N} + \cE^V_2\\
		\end{split} \end{equation}
with
		\begin{equation*}
		\begin{split}
		\cE_2^V= \; &\frac{1}{2}\sum_{p \in P_H} N^{\kappa}\widehat{V} (p/N^{1-\kappa}) \Big[ (\gamma_p^2-1) b_p b_{-p}+ \sigma_p^2b_{-p}^* b_p^*+2 \sigma_p \gamma_p b_{p}^* b_{p}  \\
 		&\hspace{4cm} -N^{-1} \gamma_p \sigma_p a^*_p a_p+ (\gamma_p \sigma_p - \eta_p) \frac{N-\cN_+}{N} \Big] +\text{h.c.}
		\end{split} \end{equation*}
Let us recall here that $\gamma_p =1$ and $\sigma_p = 0$ for $p \in P_H^c$. To control $ \cE_{2}^V$ and its commutator with $ \cN_{\leq cN^\gamma}$, we use once more the estimates $|\g_p^2-1|\leq C\eta_p^2$ and $|\s_p|\leq C |\eta_p|$ for all $p \in P_H$, and apply Lemmas \ref{lm:dp} and \ref{lm:commdpNres} to deduce with Cauchy-Schwarz as above that
		\[\begin{split}
		 \pm \cE_{2}^V &\leq  C N^{2\kappa-\a/2} (\cN_++1),\hspace{0.5cm}\pm i[\cN_{\leq cN^\gamma}, \cE_{2}^V ] \leq C N^{2\kappa-\a/2} (\cN_++1).
		 \end{split}\]

The analysis of the contributions $ \text{F}_{32}$ and $\text{F}_{33}$ in \eqref{eq:G23split} is slightly more tedious. We start with the term $ \text{F}_{32}$ and rewrite it similarly as in \cite[Eq. (7.39)]{BBCS} as
		\begin{equation}\label{eq:splitF32}
		\begin{split}
		\text{F}_{32} = \; & \frac{1}{2}  \sum_{p \in \Lambda^*_+} N^{\kappa}\widehat{V} (p/N^{1-\kappa}) \, \left[  \g_p b_pd_{-p}+\g_pd_p b_{-p}+ \s_p b^*_{-p} d_{-p}  +\s_pd_p  b^*_{p} +\text{h.c.} \right]  \\
		=: \; & \sum_{j=1}^4 \big(\text{F}_{32j} +\text{h.c.}\big). 
		\end{split} \end{equation}
As explained in \cite[Eq. (7.39) \& (7.40)]{BBCS}, massaging a bit the first term $\text{F}_{321}$ yields
		\begin{equation*} 
		\text{F}_{321} = - \sum_{p \in P_H} N^{\kappa}\widehat{V} (p/N^{1-\kappa}) \eta_p \, \left(\frac{N-\cN_+}{N}\right)\left(\frac{\cN_+ +1}{N}\right) +\cE_4^V \end{equation*}
where $\cE_4^V = \cE_{41}^V + \cE_{42}^V + \cE_{43}^V +\text{h.c.} $ is defined through
		\begin{equation} \label{eq:F321error}
		\begin{split}
		\cE^V_{41} = \; &\frac{1}{2}  \sum_{p \in \Lambda^*_+} N^{\kappa}\widehat{V} (p/N^{1-\kappa}) \, (\g_p - 1) b_p d_{-p} \, , \qquad
		\cE_{42}^V = \frac{1}{2} \sum_{p \in \L_+^*} N^{\kappa}\widehat{V} (p/N^{1-\kappa}) b_p \bar{d}_{-p} \\
		\cE_{43}^V = \; &- \frac{1}{2} \sum_{p \in P_H} N^{\kappa}\widehat{V} (p/N^{1-\kappa}) \eta_p \frac{\cN_+ + 1}{N} (b_p^* b_p - N^{-1} a_p^* a_p ).
		\end{split} \end{equation}
Here, we set $\bar{d}_{-p} = d_{-p} +   \eta_H (p)  (\cN_+/N) b_p^*$. The error terms $ \cE^V_{41}$ and $ \cE^V_{43}$ are easily controlled with the same arguments as above, using the pointwise bounds $|\g_p^2-1|\leq C\eta_p^2$ and $|\s_p|\leq C |\eta_p|$ for all $p \in P_H$, the bound $ \|\eta_H\|\leq CN^{\kappa-\alpha/2}$ and by applying \eqref{eq:d-bds} in Lemma \ref{lm:dp} as well as \eqref{eq:ngdp} in Lemma \ref{lm:commdpNres}. This results in 
		\[\begin{split}
		\pm  ( \cE^V_{41} +  \cE^V_{43}) \leq CN^{2\kappa-\alpha/2} (\cN_++1), \hspace{0.5cm} \pm  i [\cN_{\leq cN^\gamma},  \cE^V_{41} +  \cE^V_{43}] \leq CN^{2\kappa-\alpha/2} (\cN_++1).
		\end{split}\]
The remaining error $\cE^V_{42}$ reads in position space 
		\[\cE^V_{42} = \frac{1}{2} N^\kappa \int_{\Lambda^2}  dxdy\; N^{3-3\kappa}V(N^{1-\kappa}(x-y)) \check{b}_x \check{\bar{d}}_y. \]
We can compare this to the position space representation of the error term $ \cE_{23}^K$ in \eqref{eq:bdbarpspaceterm1}. Notice that $ \cE_{42}^V$ is, up to the uniformly bounded factor $ f_N(x-y)$, equal to the first term on the r.h.s. of \eqref{eq:bdbarpspaceterm1}. Thus, if we proceed exactly as in \eqref{eq:bdbarpspaceterm2} and \eqref{eq:bdbarpspaceterm3}, we find that
		\[\begin{split} 
		\pm \cE_{42}^V  \leq &\;C N^{2\kappa-\alpha/2} (\cH_N + 1), \hspace{0.5cm}\pm i [\cN_{\leq cN^\gamma}, \cE_{42}^V]  \leq \; C N^{2\kappa-\alpha/2 +\gamma/2} (\cH_N+1). 
		\end{split}\]
		
Going back to \eqref{eq:splitF32}, consider now the second term $ \text{F}_{322}$. We decompose $ \gamma_p = 1+ (\gamma_p-1)$ and control the resulting term that contains $ (\gamma_p-1)$ by Cauchy-Schwarz, using that $ |\gamma_p-1|\leq C\eta_p^2$. The other term can be estimated by switching to position space. Indeed, in position space this term can be bounded exactly as the error term $ \cE_{32}^K$ in \eqref{eq:dbarbpspaceterm1} and \eqref{eq:dbarbpspaceterm2} in the proof of the last proposition. Altogether, one finds that
		\[\begin{split} 
		\pm  \text{F}_{322} \leq &\;C N^{2\kappa-\alpha/2} (\cH_N + 1),\hspace{0.5cm} \pm i [\cN_{\leq cN^\gamma}, \text{F}_{322}]  \leq \; C N^{2\kappa-\alpha/2 +\gamma/2} (\cH_N+1). 
		\end{split}\]

Finally, the two remaining contributions $ \text{F}_{323}$ and $ \text{F}_{324}$, defined in \eqref{eq:splitF32}, can be controlled using Lemma \ref{lm:dp} and Lemma \ref{lm:commdpNres}. By \eqref{eq:d-bds}, we have for instance that
		\begin{equation*} 
		\begin{split}
		|\langle\xi,\text{F}_{323}\xi\rangle| 
		&\leq C  N^{\kappa} \sum_{p \in P_H} |\eta_p| \| b_{-p} \xi\| \left[ |\eta_p| \| (\cN_+ + 1)^{1/2} \xi \| + \| \eta_H \| \| b_{-p} \xi \| \right] \\ 
		&\leq C N^{3\kappa - 5\a/2} \| (\cN_+ + 1)^{1/2} \xi \|^2  
		\end{split} \end{equation*}
and, similarly, by \eqref{eq:ngdp} that
		\begin{equation*} 
		\begin{split}
		|\langle\xi,[\mathcal{N}_{\leq cN^{\gamma}},\text{F}_{323}]\xi\rangle| 
		&\leq C  N^{\kappa} \sum_{p \in P_H} |\eta_p|  \| b_{-p} \xi\| \left[ |\eta_p| \| (\cN_+ + 1)^{1/2} \xi \| + \| \eta_H \| \| b_{-p} 		\xi \| \right] \\ 
		&\leq C N^{3\kappa - 5\a/2} \| (\cN_+ + 1)^{1/2} \xi \|^2.  \end{split} \end{equation*}
To control the remaining term $ \text{F}_{324}$, on the other hand, a simple analysis (using the same arguments as above) shows that it is enough to bound 
		\[ \cE_{5}^V :=  \frac{1}{2}  \sum_{p \in \Lambda^*_+} N^{\kappa}\widehat{V} (p/N^{1-\kappa})\eta_p \big[  d_p  b^*_{p} +\text{h.c.}\big]. \]	
To apply Lemma \ref{lm:dp} and Lemma \ref{lm:commdpNres} in the usual way, we observe first of all
		\[\sum_{p \in \Lambda^*_+} \big|\widehat{V} (p/N^{1-\kappa})  \eta_p \big| \leq \bigg(\sum_{p \in \Lambda^*_+} |p|^{-2}\big|\widehat{V} (p/N^{1-\kappa}) \big|^2\bigg)^{1/2} \bigg(\sum_{p \in \Lambda^*_+} |p|^2 \eta_p^2\bigg)^{1/2} \]
and, by Plancherel, that 
		\[\sum_{p \in \Lambda^*_+} p^{-2} \big|\widehat{V}(p/N^{1-\kappa})\big|^2 \leq C \sum_{\substack{ p\in \Lambda_+^*:\\|p| \leq N^{1-\kappa}}}| p|^{-2} + N^{2\kappa -2} \sum_{\substack{ p\in \Lambda_+^*:\\|p| \geq N^{1-\kappa}}} \big|\widehat{V}(p/N^{1-\kappa})\big|^2 \leq C N^{1-\kappa}.\]	
Together with \eqref{eq:H1eta}, this shows that 
		\begin{equation} \label{eq:Vetapspacebnd}
		\sum_{p \in \Lambda^*_+} \big|\widehat{V} (p/N^{1-\kappa})  \eta_p \big| \leq CN.
		\end{equation}	
Now, applying \eqref{eq:d-bds} in Lemma \ref{lm:dp} proves that
		\[\begin{split}
		|\langle\xi, \cE_{5}^V \xi \rangle|  &\leq \;  CN^{\kappa-1}   \sum_{p \in P_H} \big|\widehat{V} (p/N^{1-\kappa}) \big||\eta_p|\|(\cN_++1)^{1/2}\xi \|  \\ 
		&\hspace{0.5cm} \times\left[ |\eta_p| \| (\cN_+ + 1)^{3/2} \xi \| + \| \eta_H \| \| (\cN_+ + 1)^{1/2} \xi \| + \| \eta_H \| \| a_p (\cN_+ + 1) \xi \| \right] \\ &\leq \; C N^{2\kappa-\a/2} \|(\cN_++1)^{1/2}\xi\|^{2}
		\end{split}\]
and, referring to \eqref{eq:ngdp} in Lemma \ref{lm:commdpNres}, also that 
		\[\begin{split}
		|\langle\xi, [\cN_{\leq cN^\gamma}, \cE_{5}^V] \xi \rangle|  &\leq \; C N^{2\kappa-\a/2} \|(\cN_++1)^{1/2}\xi\|^{2}.
		\end{split}\]
Altogether, we arrive at 
		\[\begin{split} 
		\pm  \text{F}_{32} \leq &\;C N^{2\kappa-\alpha/2} (\cH_N + 1), \hspace{0.5cm}\pm i [\cN_{\leq cN^\gamma}, \text{F}_{32}]  \leq \; C N^{2\kappa-\alpha/2 +\gamma/2} (\cH_N+1). 
		\end{split}\]
		
To complete the proof of the proposition, we still need to analyse the last term $\text{F}_{33}$ in \eqref{eq:G23split}. This term, however, can be analysed exactly as the error $\cE_{51}^K$ in \eqref{eq:ddxspacebnd1} and \eqref{eq:ddxspacebnd2}, after switching to position space. One finds 
		\[\begin{split} 
		\pm  \text{F}_{33} \leq &\;C N^{2\kappa-\alpha/2} (\cH_N + 1), \hspace{0.5cm}\pm i [\cN_{\leq cN^\gamma}, \text{F}_{33}]  \leq \; C N^{2\kappa-\alpha/2 +\gamma/2} (\cH_N+1), 
		\end{split}\]
and collecting all the bounds on $ \text{F}_1, \text{F}_2 $ and $\text{F}_3$, defined in \eqref{eq:G2-deco}, proves the bounds \eqref{eq:errorVc} and \eqref{eq:errorVccomm}. 	
\end{proof}

Let us finish this section and summarize the results of Prop. \ref{prop:K} and Prop. \ref{prop:G2V}.
 \begin{prop}\label{prop:G2}
 There exists a constant $C > 0$ such that
 		\begin{equation*} 
		\begin{split}
		\cG_{N}^{(2)} =&\; \cK + \sum_{p \in P_{H}} \Big[p^2 \eta_p^2 + N^{\kappa}\widehat{V} (p/N^{1-\kappa}) \eta_p\Big]\Big(\frac{N-\cN_+}{N}\Big) \Big(\frac{N-\cN_+ -1}{N}\Big)\\
		&+ \sum_{p \in P_{H}} p^2 \eta_p \big( b^*_p b^*_{-p} + b_p b_{-p} \big) +\sum_{p \in \Lambda^*_+}  N^{\kappa}\widehat{V} (p/N^{1-\kappa})  a^*_pa_p \frac{N-\cN_+}{N} \\ &+ \frac{1}		{2}\sum_{p \in \Lambda^*_+}N^{\kappa} \widehat{V} (p/N^{1-\kappa}) \big( b_p b_{-p}+ b_{-p}^* b_p^*\big) + \cE_{N}^{(2)}
		\end{split}
		\end{equation*}
where the self-adjoint operator $ \cE_{N}^{(2)} $ satisfies
		\begin{align}
		\nonumber\pm \cE^{(2)}_{N} & \leq C N^{3\kappa -\alpha/2} (\cH_N + 1),\\
		\nonumber\pm i[\mathcal{N}_{\leq cN^{\gamma}},\cE^{(2)}_{N}] &\leq C \big(N^{3\kappa -\alpha/2} + N^{2\kappa - \a/2+\gamma/2}\big) (\cH_N + 1), 
		\end{align}
for all $\alpha \geq  2\kappa$ with $ \alpha+\kappa\leq 1$, and for all $ 0\leq \gamma\leq \alpha$, $c\geq 0$, $f$ smooth and bounded, $M \in \bN$ and $N \in \bN$ large enough.
\end{prop}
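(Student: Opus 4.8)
The plan is to combine the two previous propositions---Prop.~\ref{prop:K} (the analysis of $e^{-B(\eta_H)} \cK e^{B(\eta_H)}$) and Prop.~\ref{prop:G2V} (the analysis of $e^{-B(\eta_H)} \cL^{(2,V)}_N e^{B(\eta_H)}$)---using the decomposition (\ref{eq:dec-G2}), namely
\[ \cG_{N}^{(2)} = e^{-B(\eta_H)} \cK   e^{B(\eta_H)} + e^{-B(\eta_H)} \cL_N^{(2,V)} e^{B(\eta_H)}. \]
First I would write $\cL_N^{(2)} = \cK + \cL_N^{(2,V)}$ as in the paragraph preceding (\ref{eq:L2VN}), so that $\cG_N^{(2)} = e^{-B(\eta_H)}\cL_N^{(2)}e^{B(\eta_H)}$ splits exactly into the two pieces treated above. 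Then I would insert the explicit identity (\ref{eq:K-dec}) from Prop.~\ref{prop:K} and the explicit identity (\ref{eq:cEV-defc}) from Prop.~\ref{prop:G2V} and simply add them term by term. The diagonal terms $\sum_{p\in P_H} p^2\eta_p^2 (\tfrac{N-\cN_+}{N})(\tfrac{N-\cN_+-1}{N})$ and $\sum_{p\in P_H} N^\kappa \widehat V(p/N^{1-\kappa})\eta_p (\tfrac{N-\cN_+}{N})(\tfrac{N-\cN_+-1}{N})$ combine into the single coefficient $[p^2\eta_p^2 + N^\kappa\widehat V(p/N^{1-\kappa})\eta_p]$ appearing in the statement; the off-diagonal pairings $\sum_{p\in P_H} p^2\eta_p(b_p^*b_{-p}^* + b_pb_{-p})$ from (\ref{eq:K-dec}) and the terms $\sum_{p\in\Lambda_+^*}N^\kappa\widehat V(p/N^{1-\kappa})a_p^*a_p\tfrac{N-\cN_+}{N}$ and $\tfrac12\sum_{p\in\Lambda_+^*}N^\kappa\widehat V(p/N^{1-\kappa})(b_pb_{-p}+b_{-p}^*b_p^*)$ from (\ref{eq:cEV-defc}) are copied verbatim.

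Next I would set $\cE_N^{(2)} = \cE_N^{(K)} + \cE_N^{(V)}$, where $\cE_N^{(K)}$ is the error operator from Prop.~\ref{prop:K} and $\cE_N^{(V)}$ is the error operator from Prop.~\ref{prop:G2V}. The bound $\pm\cE_N^{(2)}\leq CN^{3\kappa-\alpha/2}(\cH_N+1)$ then follows immediately by the triangle inequality from (\ref{eq:errorKc}), which gives $\pm\cE_N^{(K)}\leq CN^{3\kappa-\alpha/2}(\cH_N+1)$, and (\ref{eq:errorVc}), which gives $\pm\cE_N^{(V)}\leq CN^{2\kappa-\alpha/2}(\cH_N+1)$; since $2\kappa-\alpha/2 \le 3\kappa-\alpha/2$, the worse of the two exponents dominates. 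Similarly, the commutator bound $\pm i[\cN_{\leq cN^\gamma},\cE_N^{(2)}] \leq C(N^{3\kappa-\alpha/2}+N^{2\kappa-\alpha/2+\gamma/2})(\cH_N+1)$ follows by adding (\ref{eq:errorKccomm}) and (\ref{eq:errorVccomm}): the first contributes $C(N^{3\kappa-\alpha/2}+N^{2\kappa-\alpha/2+\gamma/2})(\cH_N+1)$ and the second contributes $CN^{2\kappa-\alpha/2+\gamma/2}(\cH_N+1)$, and absorbing the second into the first gives the claimed bound. The hypotheses needed ($\alpha\geq 2\kappa$ with $\alpha+\kappa\leq 1$, $0\leq\gamma\leq\alpha$) are exactly those already required by both Prop.~\ref{prop:K} and Prop.~\ref{prop:G2V}, so no new restrictions arise.

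In short, this proposition is purely a matter of bookkeeping: decompose $\cL_N^{(2)}$, apply the two preceding propositions to the two summands of $\cG_N^{(2)}$, add the main terms and add the errors. There is no genuine analytic obstacle here---all the hard work (the Taylor expansions in the generalized Bogoliubov transformation, the remainder estimates from Lemma~\ref{lm:dp} and Lemma~\ref{lm:commdpNres}, and the scattering equation (\ref{eq:eta-scat0})) has already been carried out inside the proofs of Prop.~\ref{prop:K} and Prop.~\ref{prop:G2V}. The only point requiring any care is to double-check that the two representations (\ref{eq:K-dec}) and (\ref{eq:cEV-defc}) are stated with respect to the same normal ordering conventions and with the same restriction $p\in P_H$ versus $p\in\Lambda_+^*$ on each family of terms, so that the combined coefficients are correctly collected; once that is confirmed, the proof is complete.

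\begin{proof} The statement follows by combining Prop.~\ref{prop:K} and Prop.~\ref{prop:G2V}. Recall from the paragraph preceding (\ref{eq:L2VN}) that $\cL_N^{(2)} = \cK + \cL_N^{(2,V)}$, so that by (\ref{eq:dec-G2})
\[ \cG_{N}^{(2)} = e^{-B(\eta_H)} \cK   e^{B(\eta_H)} + e^{-B(\eta_H)} \cL_N^{(2,V)} e^{B(\eta_H)}. \]
Inserting the identity (\ref{eq:K-dec}) for the first term on the right-hand side and the identity (\ref{eq:cEV-defc}) for the second term, and collecting the two diagonal contributions $\sum_{p\in P_H} p^2\eta_p^2 (\tfrac{N-\cN_+}{N})(\tfrac{N-\cN_+-1}{N})$ and $\sum_{p\in P_H} N^\kappa\widehat V(p/N^{1-\kappa})\eta_p(\tfrac{N-\cN_+}{N})(\tfrac{N-\cN_+-1}{N})$ into a single sum with coefficient $[p^2\eta_p^2 + N^\kappa\widehat V(p/N^{1-\kappa})\eta_p]$, we obtain the claimed identity for $\cG_N^{(2)}$, with
\[ \cE_N^{(2)} = \cE_N^{(K)} + \cE_N^{(V)}, \]
where $\cE_N^{(K)}$ is the error operator of Prop.~\ref{prop:K} and $\cE_N^{(V)}$ is the error operator of Prop.~\ref{prop:G2V}. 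By the triangle inequality, (\ref{eq:errorKc}) and (\ref{eq:errorVc}) give
\[ \pm \cE_N^{(2)} \leq \pm\cE_N^{(K)} + \pm\cE_N^{(V)} \leq C N^{3\kappa-\alpha/2}(\cH_N+1) + C N^{2\kappa-\alpha/2}(\cH_N+1) \leq C N^{3\kappa-\alpha/2}(\cH_N+1), \]
since $2\kappa-\alpha/2 \leq 3\kappa-\alpha/2$. Similarly, (\ref{eq:errorKccomm}) and (\ref{eq:errorVccomm}) give
\[ \pm i[\cN_{\leq cN^\gamma},\cE_N^{(2)}] \leq C\big(N^{3\kappa-\alpha/2} + N^{2\kappa-\alpha/2+\gamma/2}\big)(\cH_N+1) + C N^{2\kappa-\alpha/2+\gamma/2}(\cH_N+1) \leq C\big(N^{3\kappa-\alpha/2} + N^{2\kappa-\alpha/2+\gamma/2}\big)(\cH_N+1). \]
The assumptions $\alpha\geq 2\kappa$ with $\alpha+\kappa\leq 1$ and $0\leq\gamma\leq\alpha$ are precisely those under which Prop.~\ref{prop:K} and Prop.~\ref{prop:G2V} hold, and no further restriction is needed. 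This proves the proposition.
\end{proof}
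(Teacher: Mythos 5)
Your proof is correct and is exactly the paper's approach: the paper introduces Prop.~\ref{prop:G2} with the phrase ``Let us finish this section and summarize the results of Prop.~\ref{prop:K} and Prop.~\ref{prop:G2V}'' and gives no further argument, so the content is precisely the decomposition (\ref{eq:dec-G2}) plus term-by-term addition of (\ref{eq:K-dec}) and (\ref{eq:cEV-defc}) with $\cE_N^{(2)}=\cE_N^{(K)}+\cE_N^{(V)}$, as you do. The bookkeeping of the main terms and the absorption of the weaker error exponents into the stronger ones are both carried out correctly.
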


\subsubsection{Analysis of $ \cG_{N}^{(3)}$} \label{sub:G3}

In this section, we analyse the operator $\cG_{N}^{(3)} $ which, by (\ref{eq:cLNj}), is equal to
\be \label{eq:G3N2}
 \cG_{N}^{(3)} = \frac{1}{\sqrt{N}} \sum_{p,q \in \L^*_+ : p + q \not = 0} N^{\kappa}\widehat{V} (p/N^{1-\kappa}) e^{-B(\eta_H)} b^*_{p+q} a^*_{-p} a_q e^{B(\eta_H)} + \hc  \ee

\begin{prop}\label{prop:GN-3}
There exists a constant $C > 0$ such that
\begin{equation}\label{eq:def-E3}
\cG_{N}^{(3)} = \frac{1}{\sqrt{N}} \sum_{p,q \in \L^*_+ : p + q \not = 0} N^{\kappa}\widehat{V} (p/N^{1-\kappa}) \left[ b_{p+q}^* a_{-p}^* a_q  + \emph{h.c.} \right] + \cE^{(3)}_{N} \end{equation}
where the self-adjoint operator $ \cE_N^{(3)}$ satisfies
		\begin{align}
		\label{eq:lm-GN31} \pm \cE_{N}^{(3)} &\leq C N^{2\kappa-\a/2} \big(\cH_N + 1\big), \\
		\label{eq:lm-GN31comm} \pm i[\mathcal{N}_{\leq cN^{\gamma}},\cE_{N}^{(3)}] &\leq C N^{2\kappa-\a/2+\gamma/2} \big(\cH_N + 1\big),
		\end{align}
for all $\alpha \geq  2\kappa$ with $ \alpha+\kappa\leq 1$, and for all $ 0\leq \gamma\leq \alpha$, $c\geq 0$, $f$ smooth and bounded, $M \in \bN$ and $N \in \bN$ large enough.
\end{prop}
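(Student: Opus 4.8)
The plan is to expand the conjugation $e^{-B(\eta_H)} b^*_{p+q} a^*_{-p} a_q e^{B(\eta_H)}$ term by term, using the relations \eqref{eq:ebe} for the conjugated modified operators $b^\sharp$ and the fact (see \eqref{eq:comm2} and the surrounding discussion) that $e^{-B(\eta_H)} a_{-p}^* a_q e^{B(\eta_H)}$ is close to $a_{-p}^* a_q$ up to controllable error operators. Concretely, I would insert
\[ e^{-B(\eta_H)} b^*_{p+q} e^{B(\eta_H)} = \gamma_{p+q} b^*_{p+q} + \sigma_{p+q} b_{-p-q} + d^*_{p+q} \]
(with $\gamma_r = \cosh\eta_H(r)$, $\sigma_r = \sinh\eta_H(r)$, noting $\gamma_r = 1$, $\sigma_r = 0$ for $r \in P_H^c$), and expand the product $e^{-B(\eta_H)} b^*_{p+q} a^*_{-p} a_q e^{B(\eta_H)}$ by a telescoping argument, writing the conjugation of the whole triple product as the sum of the main term $b^*_{p+q} a^*_{-p} a_q$ plus a Duhamel-type integral $\int_0^1 ds\, e^{-sB(\eta_H)} [b^*_{p+q} a^*_{-p} a_q, B(\eta_H)] e^{sB(\eta_H)}$, then iterating and bounding the nested commutators. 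The terms produced are of three types: (i) those where $b^*_{p+q}$ picks up a $\sigma_{p+q} b_{-p-q}$ contribution or a remainder $d^*_{p+q}$; (ii) those where $B(\eta_H)$ contracts with $a^*_{-p}$ or $a_q$, producing operators with one extra $\eta$-factor; (iii) higher-order iterated commutators. All of these carry at least one factor of $\|\eta_H\| \le C N^{\kappa - \alpha/2}$ (from \eqref{eq:etaHL2}) or of $\sum_p |\widehat V(p/N^{1-\kappa})\,\eta_p| \le CN$ (the bound \eqref{eq:Vetapspacebnd} established in the proof of Prop.~\ref{prop:G2V}), and combining with the Cauchy--Schwarz estimates for the remainder operators from Lemma~\ref{lm:dp} and the kinetic/potential energy bounds $\cN_+ \le \cK \le \cH_N$, one gets \eqref{eq:lm-GN31}.

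The detailed bookkeeping is essentially identical to \cite[Section 7.4]{BBCS} once the scaling $N \mapsto N^{1-\kappa}$ of the interaction is accounted for, so I would not reproduce it; I would instead cite that reference for the identity \eqref{eq:def-E3} and the bound \eqref{eq:lm-GN31}, and then focus on the genuinely new part, namely the commutator bound \eqref{eq:lm-GN31comm}. The key observation is that $\cN_{\le cN^\gamma}$ commutes with $\cN_+$ and acts diagonally on momentum-indexed creation/annihilation operators, i.e. $[\cN_{\le cN^\gamma}, a_p^\sharp] = \pm \chi(|p| \le cN^\gamma) a_p^\sharp$, so that commuting $\cN_{\le cN^\gamma}$ through every term in the expansion of $\cE_N^{(3)}$ either (a) hits a creation/annihilation operator and reproduces a term of the same structure (with a characteristic-function restriction), or (b) hits a remainder operator $d^\sharp$, in which case Lemma~\ref{lm:commdpNres} provides the needed estimates. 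In case (a) the only new feature is that, when the restricted momentum is among the "free" summation indices of the triple product, estimating the resulting term requires bounds of the form \eqref{eq:VNchixaybnd} (which was established in the proof of Prop.~\ref{prop:K}), producing the extra factor $N^{\gamma/2}$ visible in \eqref{eq:lm-GN31comm}. So the strategy is: for each term $\Xi_j$ in $\cE_N^{(3)} = \sum_j \Xi_j + \mathrm{h.c.}$, write $[\cN_{\le cN^\gamma}, \Xi_j]$ as a sum of (finitely many) terms, each of which is either structurally like $\Xi_j$ with an inserted characteristic function, or involves $[\cN_{\le cN^\gamma}, d^\sharp]$; then re-run the same Cauchy--Schwarz estimates used for \eqref{eq:lm-GN31}, replacing $\|\eta_H\|$, $\|d^\sharp \xi\|$ by $\|\chi\eta_H\|$, $\|[\cN_{\le cN^\gamma}, d^\sharp]\xi\|$ where appropriate, and absorbing the loss of a factor $N^{\gamma/2}$ into $\cK \le \cH_N$ via the pull-through formula \eqref{eq:Ntheta-bds}.

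The main obstacle I anticipate is the careful tracking of which terms in the commutator expansion acquire the extra $N^{\gamma/2}$ and which do not — in particular, verifying that the worst term genuinely scales as $N^{2\kappa-\alpha/2+\gamma/2}(\cH_N+1)$ and not worse. This requires checking that whenever the restricted-momentum characteristic function lands on a summation index that is contracted against the slowly-decaying $\widehat V(p/N^{1-\kappa})$ rather than against an $\eta$-factor, one still has enough decay to control the sum; here the bound \eqref{eq:Vetapspacebnd} together with the Plancherel estimate $\sum_{p} p^{-2}|\widehat V(p/N^{1-\kappa})|^2 \le CN^{1-\kappa}$ (from the proof of Prop.~\ref{prop:G2V}) is what saves the argument, and one must also use $|\eta_p| \le CN^\kappa |p|^{-2}$ from \eqref{eq:modetap} to close the high-momentum sums. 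The secondary technical point is handling the non-normally-ordered contributions and the terms where $B(\eta_H)$ contracts $a_{-p}^*$ with $a_q$ (these arise because $p+q$ ranges over $\L^*_+$ while $p,q$ do too), but these are precisely the terms treated in \cite[Section 7.4]{BBCS} and the commutator with $\cN_{\le cN^\gamma}$ introduces no new difficulty beyond the characteristic-function insertion. Once all contributions are collected, \eqref{eq:lm-GN31comm} follows; the positivity and self-adjointness of $\cE_N^{(3)}$ are immediate from the construction.
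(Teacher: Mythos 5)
Your proposal follows essentially the same route as the paper's proof: the same decomposition of $\cE_N^{(3)}$ into the contribution from conjugating $b^*_{p+q}$ (producing $(\gamma_{p+q}-1)$, $\sigma_{p+q}$ and $d^*_{p+q}$ terms) plus the Duhamel/contraction terms carrying $\eta_H(p)$ and $\eta_H(q)$, with the first bound obtained by adapting \cite[Section 7.4]{BBCS} and the commutator bound obtained from the diagonal action of $\cN_{\leq cN^\gamma}$ on $a_p^\sharp$, Lemma~\ref{lm:commdpNres} for the remainder operators, and \eqref{eq:VNchixaybnd}-type estimates to absorb the $N^{\gamma/2}$ loss. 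This matches the paper's argument in both structure and the key lemmas invoked.
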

\begin{proof} Let us indicate the main steps to prove \eqref{eq:lm-GN31} and \eqref{eq:lm-GN31comm}. 
To this end, we follow the proof of \cite[Prop. 7.5]{BBCS} which shows that $\cE_{N}^{(3)} $ in \eqref{eq:def-E3} takes the form
		\begin{equation} \label{eq:deco-cE3} \begin{split}
		\cE^{(3)}_{N} &=  \frac{N^{\kappa}}{\sqrt{N}} \sum_{p,q \in \Lambda^*_+ : p+q \not = 0} \widehat{V} (p/N^{1-\kappa}) \big((\g_{p+q}-1) b^*_{p+q} + \s_{p+q} b_{-p-q} + d_{p+q}^* \big) \, a_{-p}^* a_q  \\
		&\hspace{.3cm}+ \frac{N^{\kappa}}{\sqrt{N}} \sum_{\substack{ p,q \in \Lambda_+^* ,\\ p+q \not = 0}} \widehat{V} (p/N^{1-\kappa}) \eta_H (p) \, e^{-B(\eta_H)}b^*_{p+q}e^{B(\eta_H)} \int_0^1 ds\, e^{-sB(\eta_H)} b_{p} b_{q} e^{sB(\eta_H)}\\
		&\hspace{.3cm}+ \frac{N^{\kappa}}{\sqrt{N}} \sum_{\substack{p,q \in \Lambda_+^* , \\p+q \not = 0}} \widehat{V} (p/N^{1-\kappa}) \eta_H (q) \, e^{-B(\eta_H)} b^*_{p+q}e^{B(\eta_H)} \int_0^1 ds\,  e^{-sB(\eta_H)}b_{-p}^*b^*_{-q} e^{sB(\eta_H)}  \\
		&\hspace{.3cm}+ \hc \\ &=:  \; \cE^{(3)}_1 + \cE_2^{(3)} + \cE_3^{(3)} + \hc
		\end{split} \end{equation}
Let us consider the three terms $\cE_1^{(3)}, \cE_2^{(3)}, \cE_3^{(3)}$ separately and explain why they all satisfy  (\ref{eq:lm-GN31}) and \eqref{eq:lm-GN31comm}. Starting with $\cE_1^{(3)}$, it is useful to split it into
		\[ \begin{split}
		\cE^{(3)}_1 =\; &  \frac{N^{\kappa} }{\sqrt{N}} \sum_{p,q \in \Lambda^*_+ : p+q \not = 0}\widehat{V} (p/N^{1-\kappa}) \Big[(\g_{p+q}-1) b^*_{p+q} + \s_{p+q} b_{-p-q} + d_{p+q}^* \Big]  a_{-p}^* a_q \\
		=: & \; \cE^{(3)}_{11} + \cE^{(3)}_{12} +\cE^{(3)}_{13}.
		\end{split}\]
The first two terms $ \cE_{11}^{(3)}$ and $ \cE_{12}^{(3)}$ can be bounded using Cauchy-Schwarz, the fact that $|\g_{p+q}-1|\leq C\eta_{p+q}^2$ and $\| \eta_H \| \leq C N^{\kappa -\a/2}$. One proceeds similarly as in the proof of Proposition \ref{prop:G2} and obtains for instance that
		\[\begin{split}
		|\langle \xi, \cE^{(3)}_{11} \xi \rangle|
		& \leq  \frac{CN^{\kappa}}{\sqrt{N}} \sum_{p,q \in \Lambda^*_+ : p+q \not = 0} |\widehat{V} (p/N^{1-\kappa})| | \eta_H (p+q)|^2 \, \| b_{p+q} a_{-p} \xi \| \| a_q \xi \| \\
		& \leq   \frac{C N^{\kappa}}{\sqrt{N}} \bigg(\sum_{p,q \in \Lambda^*_+ : p+q \not = 0}   |\eta_H (p+q)|^2 \, \| a_{-p}(\cN_++1)^{1/2} \xi \|^2 \bigg)^{1/2}\\ &\hspace{4cm} \times  \bigg(\sum_{p,q \in \Lambda^*_+ : p+q \not = 0} |\eta_H (p+q)|^2   \| a_q \xi \|^2 \bigg)^{1/2}\\
		& \leq C N^{\kappa}\| \eta_H \|^2 \| (\cN_++1)^{1/2} \xi \|^2 \leq C N^{3\kappa-\a} \| (\cN_+ + 1)^{1/2} \xi \|^2
		\end{split} \]
Arguing similarly for $ \cE_{12}^{(3)}$ as well as the commutator of $\cN_{\leq cN^\gamma}$ with $ \cE_{11}^{(3)}$ and $ \cE_{12}^{(3)}$ (recall that $ [\cN_{\leq cN^\gamma}, b^{\sharp}_p] = F(\sharp) \chi_p b^\sharp_p$ with $ F(*)=1, F(\cdot)=-1$ and where $\chi\in\ell^{2}(\Lambda_+^*)$ denotes the characteristic function of $\{p\in\Lambda_+^*: |p|\leq cN^\gamma\}$), we find that 
		\[\pm ( \cE_{11}^{(3)} +  \cE_{12}^{(3)}) \leq CN^{2\kappa-\alpha/2} (\cN_++1), \hspace{0.5cm} \pm i[\cN_{\leq cN^\gamma},  \cE_{11}^{(3)} +  \cE_{12}^{(3)}] \leq CN^{2\kappa-\alpha/2}(\cN_++1).\]
As for $ \cE_{13}^{(3)}$, we follow \cite{BBCS} and rewrite $d^*_{p+q}= \bar{d}^*_{p+q} - \frac{(\cN_++1)}{N} \eta_H (p+q) b_{-p-q}$. A simple bound as above then shows that it is enough to control the term involving $\bar{d}^*_{p+q} $, i.e.
		\[ \begin{split} 
		\widetilde{\cE}^{(3)}_{13}  & = \frac{N^\kappa}{\sqrt N}\sum_{p,q \in \Lambda^*_+ : p+q \not = 0} \widehat{V} (p/N^{1-\kappa}) \, \bar d^*_{p+q}   a^*_{-p} a_q \\
		& = \frac{N^\kappa}{\sqrt N}\int_{\L^2}  dx dy\; N^{3-3\kappa} V(N^{1-\kappa}(x-y)) \check{\bar d}^*_x \check{a}^*_y \check{a}_x. 
		\end{split}\]
We bound this term using \eqref{eq:splitdbd} in Lemma \ref{lm:dp} and find that
		\begin{equation*}\begin{split} 
		|\langle \xi, \widetilde{\cE}^{(3)}_{13} \xi \rangle|  & \leq \frac{N^\kappa}{\sqrt N} \int_{\L^2}  dx dy N^{3-3\kappa} V(N^{1-\kappa}(x-y))  \| \check{a}_x \xi \| \|\check{a}_y \check{\bar d}_x  \xi \|  \\
 		& \leq  C\frac{N^\kappa}{\sqrt N} \| \eta_H \| \int_{\L^2}  dx dy N^{3-3\kappa} V(N^{1-\kappa}(x-y))\| \check{a}_x \xi \|\\ 		
		& \hspace{2.5cm} \times  \Big[N^{-1}|\check{\eta}_H(x-y)|\|(\cN_++1)\xi \| + \| \check{a}_x (\cN_+ + 1)^{1/2} \xi \|  \\
		&\hspace{3cm}+ \| \check{a}_y (\cN_+ + 1)^{1/2} \xi \| + \| \check{a}_x \check{a}_y \xi \| \Big]  \\
 		& \leq C N^{2\kappa-\a/2} \| (\cN_+ + 1)^{1/2} \xi \|^2 +N^{3\kappa/2-\a/2} \| (\cN_++1)^{1/2}\xi\| \|\cV_N^{1/2} \xi \|.
 		\end{split}\end{equation*}
The commutator with $ \cN_{\leq cN^\gamma}$ is controlled similarly. With $ \check{\chi}_x \in L^2(\Lambda)$ taking values $\check{\chi}_x(y) =\check{\chi}(y-x) $, and $ \chi\in \ell^2(\Lambda_+^*)$ denoting the characteristic function of $ \{ p\in\Lambda_+^*: |p|\leq cN^\gamma \}$, we recall in particular the bounds \eqref{eq:xbndchigastetaH}, \eqref{eq:VNchixaybnd} as well as the identity $$\int_\Lambda dx\, a^*(\check{\chi}_x)a(\check{\chi}_x) = \sum_{p\in\Lambda_+^*: |p|\leq cN^\gamma} a^*_pa_p. $$
Together with Cauchy-Schwarz and the bound \eqref{eq:ngsplitdbd} from Lemma \ref{lm:commdpNres}, we find that
		\begin{equation}\begin{split} \label{eq:GN3-P13tl}
		 |\langle \xi, [ \cN_{\leq cN^\gamma} ,\widetilde{\cE}^{(3)}_{13} ] \xi \rangle| 
 		& \leq  C N^{2\kappa-\alpha/2} \int_{\L^2}  dx dy N^{3-3\kappa} V(N^{1-\kappa}(x-y))\Big[ \| \check{a}_x \xi \| + \| a(\check{\chi}_x)\xi\|\Big]\\ 		
		& \hspace{0.8cm} \times  \Big[ C \|(\cN_++1)^{1/2}\xi \| + \| \check{a}_x  \xi \|  + \| \check{a}_y  \xi \| +\| a(\check{\chi}_x)  \xi \| +\| a(\check{\chi}_y) \xi \|\\
		&\hspace{1.3cm} +N^{-1/2}\| a(\check{\chi}_y) \check{a}_x \xi \| +N^{-1/2} \| a(\check{\chi}_x) \check{a}_y \xi \|+ N^{-1/2}\| \check{a}_x \check{a}_y \xi \| \Big]  \\
 		& \leq C N^{2\kappa-\a/2} \| (\cN_+ + 1)^{1/2} \xi \|^2 + CN^{2\kappa-\a/2+\gamma/2} \| (\cK+1)^{1/2}\xi\| ^2 \\
		&\hspace{0.5cm} +CN^{3\kappa/2-\a/2} \| (\cN_++1)^{1/2}\xi\| \| \cV_N^{1/2} \xi \|.
 		\end{split}\end{equation}
Collecting the previous bounds on $ \cE_{13}^{(3)}, \cE_{23}^{(3)} $ and $\cE_{33}^{(3)}$, we summarize that 
		\begin{equation} \label{eq:G3N-P1end}
		\begin{split}
		\pm \cE^{(3)}_1  &\leq C N^{2\kappa -\a/2} (\cH_N + 1), \hspace{0.5cm}\pm i[\mathcal{N}_{\leq cN^{\gamma}},\cE^{(3)}_1]  \leq C N^{2\kappa - \a/2 + \gamma/2} (\cH_N+1).
		\end{split}
		\end{equation}

We continue with the analysis of the second error term $\cE^{(3)}_2$, defined in (\ref{eq:deco-cE3}). Following \cite[Eq. (7.50)]{BBCS}, we rewrite this term as 
\begin{equation}\label{eq:cE32-deco} \begin{split}
\cE^{(3)}_2 =\; & \frac{1}{\sqrt{N}} \sum_{p,q \in \Lambda_+^* , p+q \not = 0} N^{\kappa}\widehat{V} (p/N^{1-\kappa}) \eta_H (p) \, e^{-B( \eta_H )}b^*_{p+q}e^{B( \eta_H)}\\
& \hskip 1cm \times \int_0^1 ds\, \big( \g^{(s)}_p  \g^{(s)}_{q} b_{p} b_q + \s^{(s)}_p  \s^{(s)}_{q}b^*_{-p} b^*_{-q} +\g^{(s)}_p   \s^{(s)}_{q} b^*_{-q} b_{p} +  \s^{(s)}_p  \g^{(s)}_{q}  b^*_{-p} b_{q}  \big)  \\
& +\frac{1}{\sqrt{N}} \sum_{p,q \in \Lambda_+^* , p+q \not = 0}N^{\kappa} \widehat{V} (p/N^{1-\kappa}) \eta_H (p) \, e^{-B( \eta_H )}b^*_{p+q}e^{B( \eta_H)} \int_0^1 ds\,  \g^{(s)}_p  \s^{(s)}_{q} [b_{p},  b^*_{-q}]  \\
& + \frac{1}{\sqrt{N}} \sum_{p,q \in \Lambda_+^* , p+q \not = 0} N^{\kappa}\widehat{V} (p/N^{1-\kappa})  \eta_H (p) \, e^{-B( \eta_H )}b^*_{p+q}e^{B( \eta_H )}\\
& \hskip 1cm\times \int_0^1 ds\, \Big[ d^{(s)}_p \big( \g^{(s)}_{q} b_{q} + \s^{(s)}_{q} b^*_{-q} \big) + \big( \g^{(s)}_p b_{p} + \s^{(s)}_p b^*_{-p} \big)   d^{(s)}_{q} +   d^{(s)}_{p}  d^{(s)}_{q}   \Big]\\
=: &\; \cE^{(3)}_{21} + \cE^{(3)}_{22} + \cE^{(3)}_{23}
\end{split}\end{equation}
Let us recall here that for any $s \in [0;1]$ and $p \in \L_+^*$, we write $\gamma^{(s)}_p = \cosh (s \eta_H (p))$, $\s^{(s)}_p = \sinh (s \eta_H (p))$ and $d^{(s)}_p$ is defined as in (\ref{eq:defD}) (with $\eta$ replaced by $s \eta_H$).

The operators $\cE^{(3)}_{21} $ and $\cE^{(3)}_{22}$ as well as their commutators with $ \cN_{\leq cN^\gamma}$ can be controlled by applying Cauchy-Schwarz and using the bounds \eqref{eq:modetap}, \eqref{eq:etaHL2} on $\eta_H$ together with Lemmas \ref{lm:dp} and \ref{lm:commdpNres}. We omit the details and summarize that this results in
		\[\begin{split}
		\pm \big(\cE^{(3)}_{21} + \cE^{(3)}_{22}\big)  &\leq C N^{2\kappa -\a/2} (\cN_+ + 1), \\
		\pm i\big[\mathcal{N}_{\leq cN^{\gamma}},\cE^{(3)}_{21} + \cE^{(3)}_{22}\big] & \leq C N^{2\kappa - \a/2 } (\cN_++1).
		\end{split}\]

Hence, let's switch to the last term on the r.h.s. of (\ref{eq:cE32-deco}). Since $ |\gamma_p^{(s)}-1|\leq C\eta_p^2$ and $ |\sigma_p^{(s)}|\leq C\eta_p$, uniformly in $s\in [0;1]$, the usual Cauchy-Schwarz bounds, together with the Lemmas \ref{lm:dp} and \ref{lm:commdpNres}, imply that it indeed suffices to consider $ \widetilde{\cE}_{23}^{(3)}$, defined by
		\[\widetilde{\cE}_{23}^{(3)} := \frac{N^{\kappa}}{\sqrt{N}}\int_0^1 ds\, \sum_{\substack{ p,q \in \Lambda_+^* ,\\ p+q \not = 0}} \widehat{V} (p/N^{1-\kappa})  \eta_H (p) \, e^{-B(\eta_H)}b^*_{p+q}e^{B(\eta_H)} \big[   b_{p}   \bar{d}^{(s)}_{q} +   d^{(s)}_{p}  d^{(s)}_{q}   \big], \] 
while 
		\[\begin{split}
		\pm \big(\cE^{(3)}_{23} -\widetilde{\cE}_{23}^{(3)} \big)  &\leq C N^{2\kappa -\a/2} (\cN_+ + 1), \\
		\pm i\big[\mathcal{N}_{\leq cN^{\gamma}},\cE^{(3)}_{23} -\widetilde{\cE}_{23}^{(3)}\big] & \leq C N^{2\kappa - \a/2 } (\cN_++1).
		\end{split}\]
Recall here the notation that $ \bar{d}^{(s)}_{q} = d^{(s)}_{q} + (\cN_+/N)s \eta_H(q)b^*_{-q}$. To control the term $ \widetilde{\cE}_{23}^{(3)}$, on the other hand, we switch to position space where $ \widetilde{\cE}_{23}^{(3)}$ takes the form
		\begin{equation}\label{eq:wtE233xspace} \begin{split}
		\widetilde{\cE}^{(3)}_{23} =\;& \frac{N^{\kappa}}{\sqrt N}\int_0^1 ds\, \int_{\L^3} dx dy dz\,  N^{3-3\kappa} V(N^{1-\kappa}(x-z))  \check{\eta}_H (z-y)    \\
		& \hskip 2.5cm\times  e^{-B(\eta_H)} \check{b}^*_xe^{B(\eta_H)} \Big[   \check{b}_x \check{ \bar{d}}^{(s)}_x    +  \check{ d}^{(s)}_y \check{ d}^{(s)}_x  \Big].
		\end{split}\end{equation}
By Cauchy-Schwarz and Lemma \ref{lm:dp}, we find that
		\[\begin{split} 
		|\langle \xi , \widetilde{\cE}^{(3)}_{23} \xi \rangle | \leq \; & C N^\kappa\| \eta_H \| \int_{\Lambda^3} dx dy dz \, N^{3-3\kappa} V(N^{1-\kappa}(x-z))   | \check{\eta}_H (y-z)|\, \| \check{b}_x e^{B( \eta_H)}  \xi \| \\ &\hspace{2cm} \times \, \Big[ N^{-1/2}\| \check{a}_{x} \check{a}_{y} \xi \| + \| (\cN_++1)^{1/2} \xi \| + \| \check{a}_x  \xi \| + \| \check{a}_y  \xi \| \Big] \\
		\leq \; &C N^{3\kappa -\alpha} \| (\cN_+ + 1)^{1/2} \xi \|^2.
		\end{split} \]
To control $[ \cN_{\leq cN^\gamma},\widetilde{\cE}_{23}^{(3)}]$, we expand $ e^{-B(\eta_H)} \check{b}^*_x e^{B(\eta_H)} = b^*(\check{\gamma}_x) + b(\check{\sigma}_x) + \check{d}^*_x $ s.t.		\begin{equation}\label{eq:commNresbxconj}
		 \big[ \cN_{\leq cN^\gamma}, b^*(\check{\gamma}_x) + b(\check{\sigma}_x) + \check{d}^*_x\big] = b^*(\check{\chi}_x)  + b^*(\text{p}_{x}) + b(\text{r}_{x}) + [\cN_{\leq cN^\gamma},\check{d}^*_x].  
		 \end{equation}
Here, $ \check{\chi}_x\in L^2(\Lambda)$ takes values $ \check{\chi}_x(y) = \check{\chi}(y-x)$, with $\chi\in\ell^2(\Lambda_+^*)$ denoting the characteristic function of the set $ \{ p\in\Lambda_+^*: |p|\leq cN^\gamma \}$. Moreover, $ \text{p}_{x} \in L^2(\Lambda)$ denotes the inverse Fourier transform of $ ( (\gamma_p-1)\chi_pe^{-ipx})_{p\in \Lambda_+^*}\in\ell^2(\Lambda_+^*)$ and $ \text{r}_{x} \in L^2(\Lambda)$ denotes the inverse Fourier transform of $ (\sigma_p\chi_pe^{-ipx})_{p\in \Lambda_+^*}\in\ell^2(\Lambda_+^*)$. In particular, we have that
		\[ \sup_{x\in \Lambda}\| \text{p}_x\| \leq C\|\eta_H\|^2\leq CN^{2\kappa-\alpha},\hspace{0.5cm}  \sup_{x\in \Lambda}\| \text{r}_x\|\leq C\|\eta_H\| \leq CN^{\kappa-\alpha/2} \]
by Plancherel's theorem. If we then use the estimates \eqref{eq:ngdxy-bds}, \eqref{eq:ngsplitdbd} and \eqref{eq:ngddxy} from Lemma \ref{lm:commdpNres}, we obtain similarly to the previous bound that
		\begin{equation}\begin{split} \label{eq:E233-commNres}
		 |\langle \xi, [ \cN_{\leq cN^\gamma} , \widetilde{\cE}_{23}^{(3)} ] \xi \rangle| & \leq  C N^{2\kappa-\alpha/2} \int_{\L^2}  dx dydz \; N^{3-3\kappa} V(N^{1-\kappa}(x-z))| \check{\eta}_H (y-z)|\\
		 &\hspace{4.5cm}\times \Big[ \| \check{a}_x \xi \| + \| a(\check{\chi_x})\xi\| + \| (\cN_++1)^{1/2}\xi\| \Big]\\	
		& \hspace{1cm} \times  \Big[ C \|(\cN_++1)^{1/2}\xi \| + \| \check{a}_x  \xi \|  + \| \check{a}_y  \xi \| +\| a(\check{\chi}_x)  \xi \| +\| a(\check{\chi}_y) \xi \|\\
		&\hspace{1.6cm} +N^{-1/2}\| a(\check{\chi}_y) \check{a}_x \xi \| +N^{-1/2} \| a(\check{\chi}_x) \check{a}_y \xi \|+ N^{-1/2}\| \check{a}_x \check{a}_y \xi \| \Big]  \\
 		& \leq CN^{3\kappa-\alpha} \| (\cN_++1)^{1/2}\xi\| ^2.
 		\end{split}\end{equation}
Now, let's collect the bounds on $ \cE_{21}^{(3)}$, $ \cE_{22}^{(3)}$ and $ \cE_{23}^{(3)}$, defining $ \cE_2^{(3)}$ in Eq. \eqref{eq:cE32-deco}, so that 
		\begin{equation}\label{eq:cE32f}\begin{split}
		\pm \cE^{(3)}_{2}   &\leq C N^{2\kappa -\a/2} (\cN_+ + 1), \hspace{0.5cm} \pm i\big[\mathcal{N}_{\leq cN^{\gamma}},\cE^{(3)}_{2} \big]  \leq C N^{2\kappa - \a/2 } (\cN_++1).
		\end{split}\end{equation}
		
Finally, going back to (\ref{eq:deco-cE3}), it remains to consider the error term $ \cE_3^{(3)}$ or, equivalently, its adjoint. Similarly as in \cite{BBCS}, we write the adjoint as 
		\begin{equation*} \begin{split} 
\cE^{(3)*}_3 
%
 =\; &\frac{N^{\kappa}}{\sqrt{N}} \sum_{\substack{ p,q \in \Lambda_+^*,\\ p+q \not = 0 }} \widehat{V} (p/N^{1-\kappa})  \eta_H (q) \, \int_0^1 ds\,e^{-sB(\eta_H)} b_{-q}e^{sB( \eta_H)}\\
&  \times \Big[ \, \g^{(s)}_p  \g_{p+q} b_{-p} b_{p+q} + \s^{(s)}_p  \s_{p+q} b^*_{p} b^*_{-p-q}  + \g^{(s)}_p  \s_{p+q} b^*_{-p-q} b_{-p}+  \g_{p+q}  \s^{(s)}_p b^*_{p} b_{p+q} \\
& \hskip 1.5cm +  d^{(s)}_{-p} \big( \g_{p+q} b_{p+q} + \s_{p+q} b^*_{-p-q}\big) +  \big( \g^{(s)}_p b_{-p} + \s^{(s)}_p b^*_{p}\big)  \bar{d}_{p+q} +   d^{(s)}_{-p}  d_{p+q}\Big]  \\
& +\frac{N^{\kappa}}{\sqrt{N}} \sum_{\substack{ p,q \in \Lambda_+^* ,\\ p+q \not = 0}} \widehat{V} (p/N^{1-\kappa})  \eta_H (q) \, \int_0^1 ds\,e^{-sB( \eta_H )} b_{-q}e^{sB( \eta_H)}  \\
&\hspace{1.3cm}\times \Big[  \g^{(s)}_p  \s_{p+q} [b_{-p},b^*_{-p-q}] - \big( \g^{(s)}_p b_{-p} + \s^{(s)}_p b^*_{p}\big) (\cN_+/N)\eta_H(p+q)b^*_{-p-q} \Big]
\\ =: & \, \cE_{31}^{(3)} + \cE_{32}^{(3)} 
\end{split} \end{equation*}
The operator $\cE^{(3)}_{32} $ and its commutator with $ \cN_{\leq cN^\gamma}$ can be controlled by Cauchy-Schwarz in momentum space, using the bounds \eqref{eq:modetap}, \eqref{eq:etaHL2} on $\eta_H$ together with Lemmas \ref{lm:dp} and \ref{lm:commdpNres}. The bounds are analogous to, for instance, \cite[Eq. (7.54)]{BBCS} and we obtain
		\[\begin{split}
		\pm \cE^{(3)}_{32}  &\leq C N^{3\kappa -\alpha } (\cN_+ + 1), \hspace{0.5cm}\pm i\big[\mathcal{N}_{\leq cN^{\gamma}},\cE^{(3)}_{32} \big]  \leq C N^{3\kappa - \alpha} (\cN_++1).
		\end{split}\]
The error term $\cE_{31}^{(3)}$, on the other hand, reads in position space
		\[\begin{split}
 		\cE_{31}^{(3)} &= \frac{N^\kappa}{\sqrt N}\int_0^1 ds \int_{\Lambda^2} dx dy \, N^{3-3\kappa} V(N^{1-\kappa}(x-y)) \, e^{-sB( \eta_H)} b(\check{\eta}_{H,x}) e^{sB( \eta_H)} \\
		& \hskip 2cm \times \, \Big[  b(\check{ \g}^{(s)}_x) b(\check{ \g}_y) + b^*(\check{ \s}^{(s)}_x) b^*(\check{ \s}_y) +b^*(\check{ \s}_y) b(\check{ \g}^{(s)}_x)+ b^*(\check{ \s}^{(s)}_x) b(\check{ \g}_y)  \\
		& \hskip 3.3 cm + \check{ d}_x^{(s)} \big(  b(\check{ \g}_y) + b^*(\check{ \s}_y) \big) + \big(  b(\check{ \g}^{(s)}_x) + b^*(\check{ \s}^{(s)}_x) \big) \check{ \bar{d}}_y +\check{ d}_x^{(s)}\check{ d}_y\Big]
		\end{split} \]
and its analysis is quite similar to that of the error term $ \widetilde{\cE}_{23}^{(3)}$, defined in position space in \eqref{eq:wtE233xspace}. Together with Lemma \ref{lm:dp}, Cauchy-Schwarz implies
		\begin{equation*} \begin{split}  \label{eq:GN3-P2} 
		|\langle \xi , \cE^{(3)}_{31} \xi \rangle | \leq \; & N^{2\kappa-\alpha/2} \int_0^1  ds\,  \int_{\Lambda^2} dx dy \, N^{3-3\kappa} V(N^{1-\kappa}(x-y)) \|(\cN_++1)^{1/2}  \xi \| \\ 
		&\hspace{3cm} \times  \Big[  N^{-1/2}\| \check{b}_{x}\check{b}_{y} \xi \| +  \| \check{b}_x  \xi \| + \| \check{b}_y  \xi \| + \| (\cN_++1)^{1/2} \xi \| \Big]\\
		\leq \;& N^{2\kappa-\alpha/2}\| (\cN_+ +1)^{1/2} \xi \|^2  + N^{3\kappa/2-\a/2} \| (\cN_+ +1)^{1/2} \xi \|\| \cV_N^{1/2}\xi \|.
		\end{split} \end{equation*}
To control $ [\cN_{\leq cN^\gamma},\cE^{(3)}_{31}]$, we use the identity \eqref{eq:commNresbxconj}, the decomposition 
	\[ e^{-sB( \eta_H)} b(\check{\eta}_{H,x}) e^{sB( \eta_H)} = \sum_{p\in\Lambda_+^*} \big[ \eta_H(p)\gamma_p e^{-ipx} b_p + \eta_H(p)\sigma_p e^{-ipx} b_{-p}^* +\eta_H(p) e^{-ipx}d_p\big]  \] 
and, as a consequence of \eqref{eq:ngdp} in Lemma \ref{lm:commdpNres}, the upper bound
		\[ \sup_{x\in \Lambda} \| [ \cN_{\leq cN^\gamma}, e^{-sB( \eta_H)} b(\check{\eta}_{H,x}) e^{sB( \eta_H)}] \xi\| \leq C \|\eta_H\| \|(\cN_++1)^{1/2}  \xi \|\leq CN^{2\kappa-\alpha/2}.\]
Proceeding then similarly to \eqref{eq:GN3-P13tl}, we omit further details and summarize that
		\[|\langle \xi , [\cN_{\leq cN^\gamma},\cE^{(3)}_{31}] \xi \rangle | \leq C N^{2\kappa -\a/2 + \gamma/2}\|(\cH_N+1)^{1/2}\xi \|^2. \]
In conclusion, the bounds on $ \cE_{31}^{(3)} $ and $ \cE_{32}^{(3)}$ show that
		\begin{equation}\label{eq:G3N-e3f} \pm \cE^{(3)}_3 \leq C N^{2\kappa - \a/2} (\cH_N + 1), \hspace{0.5cm} \pm i[\mathcal{N}_{\leq cN^{\gamma}}, \cE^{(3)}_3] \leq C N^{2\kappa -\a/2 + \gamma/2}(\cH_N+1)
		\end{equation}
Combining \eqref{eq:G3N-P1end}, \eqref{eq:cE32f}, \eqref{eq:G3N-e3f} with \eqref{eq:deco-cE3} concludes the proof.
\end{proof}

\subsubsection{Analysis of $ \cG_{N}^{(4)}$} \label{sub:G4}
In this section, we analyse $ \cG_{N}^{(4)} = e^{-B(\eta_H)} \cL^{(4)}_{N} e^{B(\eta_H)} $, with $\cL^{(4)}_N$ as defined in (\ref{eq:cLNj}).
\begin{prop}\label{prop:GN-4}
There exists a constant $C > 0$ such that
		\begin{equation*} \begin{split} 
		\cG_{N}^{(4)} 
		= \; &\cV_N + \frac{1}{2N} \sum_{\substack{q \in \Lambda^*_+, r\in \Lambda^* \\ q,\, q+ r \in P_H}} N^{\kappa}\widehat{V} (r/N^{1-\kappa}) \eta_{q+r} \eta_q \left( 1-\frac{\cN_+ }{N} \right) \left( 1 - \frac{\cN_+ +1}{N} \right)  \\ 
		&+ \frac{1}{2N} \sum_{\substack{q \in \Lambda^*_+, r \in \L^*: \\ q +r \in P_H}} N^{\kappa}\widehat{V} (r/N^{1-\kappa}) \, \eta_{q+r} \left(  b_q b_{-q} + b^*_q b^*_{-q} \right)  + \cE^{(4)}_{N},		\end{split} \end{equation*}
where the self-adjoint operator $ \cE_N^{(4)}$ satisfies
		\begin{align}\label{eq:E4bound2}
  		\pm \cE_{N}^{(4)} &\leq C N^{2\kappa- \a/2} \big( \cH_N +1 \big),\\
		\label{eq:E4boundcomm}
		 \pm i[\mathcal{N}_{\leq cN^{\gamma}},\cE_{N}^{(4)}] &\leq C N^{2\kappa- \a/2+\gamma/2} \big( \cH_N +1 \big),
		\end{align}
for all $\alpha \geq  2\kappa$ with $ \alpha+\kappa\leq 1$, and for all $ 0\leq \gamma\leq \alpha$, $c\geq 0$, $f$ smooth and bounded, $M \in \bN$ and $N \in \bN$ large enough.
\end{prop}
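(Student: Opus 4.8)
The plan is to conjugate $\cL_N^{(4)} = \cV_N$ with the generalized Bogoliubov transformation $e^{B(\eta_H)}$ and track which contributions survive to leading order. First I would write, for $x \in \Lambda$,
\[
e^{-B(\eta_H)} \check b_x \check b_y \check a_x \check a_y e^{B(\eta_H)}
\]
using the identities \eqref{eq:ebex}: $e^{-B(\eta_H)} \check b_x e^{B(\eta_H)} = b(\check\gamma_x) + b^*(\check\sigma_x) + \check d_x$, and similarly $e^{-B(\eta_H)} \check a_x e^{B(\eta_H)} = a(\check\gamma_x) + a^*(\check\sigma_x) + \check d_x^a$ where $\check d_x^a$ is the corresponding remainder for the standard operators (equivalently, commute $\cN_+$ through first). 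Expanding the product of four conjugated operators gives many terms; the strategy, following \cite[Section 7.4]{BBCS}, is to isolate the three explicit terms in the statement — namely $\cV_N$ itself (coming from the ``all $\gamma$, no correction'' piece, since $\gamma_p = 1$ off $\mathrm{supp}\,\eta$), the constant/number contraction term $\tfrac{1}{2N}\sum N^\kappa \widehat V(r/N^{1-\kappa})\eta_{q+r}\eta_q(1-\cN_+/N)(1-(\cN_++1)/N)$ coming from contracting two $\sigma$-factors against each other, and the quadratic term $\tfrac{1}{2N}\sum N^\kappa \widehat V(r/N^{1-\kappa})\eta_{q+r}(b_qb_{-q} + b_q^*b_{-q}^*)$ coming from one $\sigma$-factor paired with a $\gamma$-factor plus a contraction. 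All remaining terms get absorbed into $\cE_N^{(4)}$.

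The key estimates are: (i) the pointwise bounds on $\sigma_p, \gamma_p - 1$ (each $O(\eta_p^2)$ or $O(|\eta_p|)$), (ii) $\|\eta_H\| \le C N^{\kappa - \alpha/2}$ from \eqref{eq:etaHL2}, (iii) the remainder bounds \eqref{eq:d-bds}, \eqref{eq:splitdbd}, \eqref{eq:ddxy} from Lemma \ref{lm:dp}, and (iv) for the commutator bound \eqref{eq:E4boundcomm}, the corresponding commutator estimates \eqref{eq:ngdxy-bds}, \eqref{eq:ngsplitdbd}, \eqref{eq:ngddxy} from Lemma \ref{lm:commdpNres}. For each error term I would switch to position space, write $\cV_N$ as an integral against $N^{3-3\kappa}V(N^{1-\kappa}(x-y))$, apply Cauchy-Schwarz splitting the potential kernel as $(N^{3-3\kappa}V)^{1/2}\cdot(N^{3-3\kappa}V)^{1/2}$ so that one factor produces $\cV_N^{1/2}$ and the other is bounded using $\|N^{3-3\kappa}V(N^{1-\kappa}\cdot)\|_1 \le C N^\kappa$ or paired with a kinetic/number factor. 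Terms linear in a remainder $\check d$ produce a gain $N^{-1}\|\eta_H\| \sim N^{\kappa/2-\alpha/2-1}$ times at most $N$ (from $\|\eta_H\|_\infty$-type losses), yielding the stated power $N^{2\kappa - \alpha/2}$ (respectively $N^{2\kappa-\alpha/2+\gamma/2}$ for the commutator, where one loses an extra $N^{\gamma/2}$ from terms like $\|\check a_y a(\check\chi_x)\xi\|$ controlled by $N^{\gamma/2}\|\cK^{1/2}(\cN_++1)^{1/2}\xi\|$, exactly as in \eqref{eq:VNchixaybnd}).

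The main obstacle is the bookkeeping: there are on the order of $3^4 = 81$ terms in the expansion of the conjugated quartic operator (each $\check b$ and $\check a$ contributing three possibilities), many of which are non-normally-ordered and require care with the commutators \eqref{eq:comm-bp}, and one must verify that every one of them either reproduces one of the three explicit leading terms or is genuinely of size $N^{2\kappa-\alpha/2}(\cH_N+1)$. The condition $\alpha + \kappa \le 1$ (equivalently $\alpha \le 1-\kappa$) enters precisely to control the terms involving $\check\eta_H$, via the $L^\infty$-bound $\|\check\eta_H\|_\infty \le C(N + N^{\alpha+\kappa}) \le CN$ from \eqref{eq:etax}, which prevents these terms from dominating. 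The structure of the argument is identical to \cite[Prop. 7.6]{BBCS} up to the replacement $N \mapsto N^{1-\kappa}$ in the scaling of the interaction, so I would present the decomposition and the estimate for one or two representative error terms in detail and indicate that the rest follow by the same scheme; for the commutator bounds \eqref{eq:E4boundcomm} I would emphasize only the new inputs from Lemma \ref{lm:commdpNres}, namely the identity $[\cN_{\le cN^\gamma}, b^\sharp(\check f_x)] = F(\sharp)\, b^\sharp(\check\chi * \check f_x)$ and the position-space commutator bounds, which produce the extra factor $N^{\gamma/2}$ and nothing worse.
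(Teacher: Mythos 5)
Your identification of the three leading contributions, of the toolbox (the pointwise bounds on $\sigma_p$ and $\gamma_p-1$, the norm $\|\eta_H\|\le CN^{\kappa-\alpha/2}$, Lemmas \ref{lm:dp} and \ref{lm:commdpNres}, the Cauchy--Schwarz splitting of the potential kernel, and the source of the extra $N^{\gamma/2}$ in the commutator bound via \eqref{eq:VNchixaybnd}), and of the role of $\alpha+\kappa\le 1$ through $\|\check\eta_H\|_\infty\le CN$ are all accurate. However, the expansion scheme you propose has a genuine gap: $\cL_N^{(4)}=\cV_N$ is quartic in the \emph{standard} operators $\check a_x^*,\check a_y^*,\check a_y,\check a_x$, and the paper provides no analogue of \eqref{eq:ebex} for $e^{-B(\eta_H)}\check a_x e^{B(\eta_H)}$ --- Lemma \ref{lm:indu}, Lemma \ref{lm:dp} and Lemma \ref{lm:commdpNres} are all formulated for the modified operators $b_p,\check b_x$ only. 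Writing ``$e^{-B}\check a_x e^{B}=a(\check\gamma_x)+a^*(\check\sigma_x)+\check d^{\,a}_x$, equivalently commute $\cN_+$ through first'' is not justified: $a_p=((N-\cN_+)/N)^{-1/2}b_p$ and $e^{B}$ does not commute with $\cN_+$, so the putative remainder $\check d^{\,a}_x$ is a new object whose bounds, and whose commutators with $\cN_{\le cN^\gamma}$, would have to be established from scratch. Your $3^4$-term expansion therefore cannot be run with the lemmas at hand.

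The paper's route is designed precisely to avoid this. One applies the Duhamel formula $e^{-B}\cV_N e^{B}=\cV_N+\int_0^1 ds\, e^{-sB}[\cV_N,B]e^{sB}$, observes that $[\cV_N,B]$ consists of a quadratic piece proportional to $\eta_H(q+r)(b_qb_{-q}+\hc)$ and a quartic piece of the form $(b^*b^*)(a^*a)$, and iterates the Duhamel formula once more on the latter, using that $[a^*_q a_p,B]$ is again quadratic in the $b$'s. This yields the decomposition $\cG_N^{(4)}=\cV_N+\text{W}_1+\dots+\text{W}_4$ of \eqref{eq:defW}, in which only pairs of $b$-operators are ever conjugated (the $a^*a$ factors in $\text{W}_2$ remain untouched), so that the $\gamma/\sigma/d$ expansion, Lemma \ref{lm:prel4} and Lemma \ref{lm:commdpNres} apply directly; both the constant term and the $b_qb_{-q}+b_q^*b_{-q}^*$ term in the statement are then extracted from $\text{W}_1$ alone, while $\text{W}_2,\text{W}_3,\text{W}_4$ are pure error terms. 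If you wish to keep your direct-expansion strategy you must first prove an analogue of Lemma \ref{lm:dp} (and of Lemma \ref{lm:commdpNres}) for $e^{-B}\check a_x e^{B}$; otherwise you should switch to the iterated commutator expansion.
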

For the proof of Prop. \ref{prop:GN-4} we need a slight extension of \cite[Lemma 7.7]{BBCS} to our setting.

\begin{lemma}\label{lm:prel4}
Let $\eta_H \in \ell^2 (\L^*_+)$ be defined as in (\ref{eq:defetaH}) and assume that $\alpha \geq  2\kappa$ with $ \alpha+\kappa\leq 1$, $ 0\leq \gamma\leq \alpha$ as well as $c\geq 0$. Moreover, let $ \chi \in \ell^{2}(\Lambda_+^*)$ denote the characteristic function of the set $ \{p\in\Lambda_+^*: |p|\leq cN^\gamma\}$ and define $ \check{\chi}_x\in L^2(\Lambda)$ s.t. $ \check{\chi}_x(y) = \check{\chi}(y-x)$, for all $x,y\in\Lambda$. Then, there exists a constant $C > 0$ such that
		\be \begin{split} \label{eq:prel4-2}
		&\| (\cN_++1)^{n/2} e^{-B(\eta_H)} \check{b}_x \check{b}_y  e^{B( \eta_H)}\xi \| \\
		&\hspace{1cm}\leq  C  \Big [\;  N \| (\cN_+ +1)^{n/2} \xi \| + \| \check{a}_y (\cN_+ +1)^{(n+1)/2} \xi \| \\
		&\hspace{2cm}  + \| \check{a}_x (\cN_+ +1)^{(n+1)/2} \xi \|   +  \| \check{a}_x \check{a}_y (\cN_++1)^{n/2} \xi \| \Big ]
		\end{split}\ee
and such that 
		\be \begin{split} \label{eq:prel4-3}
		&\| (\cN_++1)^{n/2} [\cN_{\leq cN^\gamma}, e^{-B(\eta_H)} \check{b}_x \check{b}_y  e^{B( \eta_H)}]\xi \| \\
		&\hskip0.5cm\leq C  \Big [\;  N \| (\cN_+ +1)^{n/2} \xi \| + \| \check{a}_y (\cN_+ +1)^{(n+1)/2} \xi \| + \| \check{a}_x (\cN_+ +1)^{(n+1)/2} \xi \| \\
		&\hskip 1.5cm  +  \| a(\check{\chi}_y)  (\cN_++1)^{(n+1)/2} \xi \| +  \| a(\check{\chi}_x)  (\cN_++1)^{(n+1)/2} \xi \|\\
		&\hskip1.5cm + \| a(\check{\chi}_y) \check{a}_x (\cN_+ +1)^{n/2} \xi \| + \| a(\check{\chi}_x) \check{a}_y (\cN_+ +1)^{n/2} \xi \| \\
		&\hskip1.5cm+  \| \check{a}_x \check{a}_y (\cN_++1)^{n/2} \xi \| \Big ]
		\end{split}\ee
for all $\xi \in \cF_+^{\leq N}$ and $n \in \bZ$.  

More generally, given any $f\in L^2(\Lambda)$ and $x\in\Lambda$, denote by $f_x \in L^2(\Lambda)$ the function with values $f_x(y) = f(y-x)$, for all $y\in\Lambda$. Then, for $f,g\in L^2(\Lambda)$, we have that
		\begin{equation}\label{eq:prel4-4}
		\begin{split}
		&\| (\cN_++1)^{n/2} e^{-B(\eta_H)} b^\sharp(f_x) b^\flat(g_y)  e^{B( \eta_H)}\xi \|  \\
		&\hspace{2cm}\leq  C  \|f\|\|g\|  \| (\cN_++1)^{(n+2)/2}\xi\|, \\
		&\| (\cN_++1)^{n/2} e^{-B(\eta_H)} b(f_x) \check{b}_y e^{B( \eta_H)} \xi \| \\
		&\hspace{2cm}\leq C  \|f\|   \| (\cN_++1)^{(n+2)/2}\xi\|+ C  \|f\|   \| \check{a}_y(\cN_++1)^{(n+1)/2}\xi\|,
		\end{split}
		\end{equation}
where $ (\sharp, \flat)\in \{*,\cdot\}^2$. Similarly, for the commutator with $\cN_{\leq cN^\gamma}$, we have that 
		\begin{equation}\label{eq:prel4-5}
		\begin{split}
		&\| (\cN_++1)^{n/2} [\cN_{\leq cN^\gamma }, e^{-B(\eta_H)} b^*(f_x) b^*(g_y)  e^{B( \eta_H)}]\xi \| \\
		&\hspace{1.5cm} \leq  C  \|f\|\|g\|  \| (\cN_++1)^{(n+2)/2}\xi\|, \\
		&\| (\cN_++1)^{n/2} [\cN_{\leq cN^\gamma }, e^{-B(\eta_H)} b (f_x) \check{b}_y e^{B( \eta_H)}]\xi \| \\
		&\hspace{1.5cm}\leq  C  \|f\|  \| (\cN_++1)^{(n+2)/2}\xi\| + C  \|f\|   \| a(\check{\chi}_y)(\cN_++1)^{(n+1)/2}\xi\| \\
		&\hspace{2cm} +C  \|f\|   \| \check{a}_y(\cN_++1)^{(n+1)/2}\xi\|.
		\end{split}
		\end{equation}
\end{lemma}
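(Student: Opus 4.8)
The plan is to prove all six estimates by the same mechanism used throughout Appendix \ref{sec:GN}: expand the conjugated operators $e^{-B(\eta_H)} \check b_x e^{B(\eta_H)}$ and $e^{-B(\eta_H)}\check b^\sharp_x e^{B(\eta_H)}$ via the identities \eqref{eq:ebex} (and their smeared versions), so that each factor $e^{-B}\check b_x e^{B}$ becomes $b(\check\gamma_x) + b^*(\check\sigma_x) + \check d_x$, then multiply out the two factors, normal-order where necessary, and bound each resulting term using the $\Pi^{(1)},\Pi^{(2)}$ bookkeeping and, crucially, Lemma \ref{lm:dp} for the remainder operators $\check d_x, \check d_x\check d_y$, $\check a_y \check{\bar d}_x$. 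Recall $\check\gamma_x$ has Fourier coefficients $\cosh(\eta_H(q))e^{-iq\cdot x}$ and $\check\sigma_x$ has $\sinh(\eta_H(q))e^{-iq\cdot x}$, so $\|\check\sigma\| \le C\|\eta_H\|$ and $\check\gamma_x = \delta_x + \check p_x$ with $\|\check p\| \le C\|\eta_H\|^2$; thus the ``leading'' piece of $e^{-B}\check b_x\check b_y e^B$ is $\check b_x\check b_y \approx \check a_x\check a_y$ plus a scalar commutator contribution of size $N$ (coming from $[\check b_x,\check b^*_y]$-type terms and the factor $(N-\cN_+)/N \le 1$ with a $\delta(x-y)$), which is exactly what produces the $N\|(\cN_++1)^{n/2}\xi\|$ term in \eqref{eq:prel4-2}. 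All the genuinely off-diagonal pieces carry at least one factor $\check\sigma$ or $\check d$, hence a factor $\|\eta_H\|\le CN^{\kappa-\alpha/2}$ or better, and are therefore controlled by the subleading terms on the right-hand sides; since the statement only asks for $C$ times the displayed expression (with no smallness claimed), these estimates are in fact quite robust. For \eqref{eq:prel4-4} the argument is the same but easier, since $f,g\in L^2$ are already square-integrable so no $\delta$-function appears and one simply gets $\|f\|\|g\|\|(\cN_++1)^{(n+2)/2}\xi\|$; the mixed case $b(f_x)\check b_y$ leaves one genuinely local leg $\check b_y$, producing the extra $\|f\|\|\check a_y(\cN_++1)^{(n+1)/2}\xi\|$ term.

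For the commutator estimates \eqref{eq:prel4-3} and \eqref{eq:prel4-5} the plan is to commute $\cN_{\leq cN^\gamma}$ into the expansion term by term, using that $[\cN_{\leq cN^\gamma}, \check b_x] = -b(\check\chi_x)$ and $[\cN_{\leq cN^\gamma}, \check b^*_x] = b^*(\check\chi_x)$ (with $\check\chi_x$ as defined in the statement, $\|\check\chi_x\| = \|\chi\| \le CN^{3\gamma/2}$), together with $[\cN_{\leq cN^\gamma}, b(\check\gamma_x)] = -b(\check\chi * \check\gamma_x)$, $[\cN_{\leq cN^\gamma}, b^*(\check\sigma_x)] = b^*(\check\chi*\check\sigma_x)$, and the commutator bounds for the remainder operators from Lemma \ref{lm:commdpNres} (specifically \eqref{eq:ngdxy-bds}, \eqref{eq:ngsplitdbd}, \eqref{eq:ngddxy}). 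The new feature compared to \eqref{eq:prel4-2} is that each leg, upon commuting, can be replaced by an $a(\check\chi_x)$ or $a(\check\chi_y)$ version; this is the source of the extra terms $\|a(\check\chi_x)(\cN_++1)^{(n+1)/2}\xi\|$, $\|a(\check\chi_y)\check a_x(\cN_++1)^{n/2}\xi\|$, etc., on the right-hand side of \eqref{eq:prel4-3}. One must track carefully that the scalar $N$-sized contribution (from the $\delta$-function / number-operator factors) is \emph{not} affected by the commutator in a way that would produce an $N\cdot N^{3\gamma/2}$ term — the point is that $\cN_{\leq cN^\gamma}$ commutes with $(N-\cN_+)/N$ and with $\check a_x\check a^*_x$-type number operators, so the $N\|(\cN_++1)^{n/2}\xi\|$ term survives with its coefficient unchanged.

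The main obstacle will be organizing the normal-ordering bookkeeping for the product of two full expansions $\big(b(\check\gamma_x)+b^*(\check\sigma_x)+\check d_x\big)\big(b(\check\gamma_y)+b^*(\check\sigma_y)+\check d_y\big)$ while keeping the local operator-valued distributions $\check a_x,\check a_y$ (as opposed to smeared operators) exposed on the right where needed, and matching each of the roughly nine term-types against the correct summand on the right-hand side — in particular the term $b^*(\check\sigma_x)\check d_y$ and $\check d_x\check d_y$ must be handled using the refined bounds \eqref{eq:splitdbd}, \eqref{eq:ddxy} (and \eqref{eq:ngsplitdbd}, \eqref{eq:ngddxy} for the commutator) rather than the crude \eqref{eq:dxy-bds}, otherwise one loses the $\|\check a_x\check a_y(\cN_++1)^{n/2}\xi\|$-type structure that is needed for later applications (e.g. in Prop. \ref{prop:GN-4} where these are contracted against $N^{3-3\kappa}V(N^{1-\kappa}(x-y))$ to recover $\cV_N$). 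Everything else — the powers of $(\cN_++1)$, commuting powers of $\cN_+$ past the $\check d$'s, absorbing $\|\eta_H\|$-factors — is routine given the lemmas already established in this appendix.
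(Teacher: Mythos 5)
Your proposal follows essentially the same route as the paper's proof: expand each conjugated field via \eqref{eq:ebex} (splitting $b(\check{\gamma}_x)$ into $\check{b}_x$ plus a smeared piece of norm $O(\|\eta_H\|^2)$), multiply out, and bound the resulting terms with Lemma \ref{lm:dp} and, for the commutators, Lemma \ref{lm:commdpNres} together with $[\cN_{\leq cN^\gamma},\check{b}_x]=-b(\check{\chi}_x)$; the estimates \eqref{eq:prel4-4} are likewise obtained from Lemma \ref{lm:Ngrow} and \eqref{eq:dxy-bds}. The details you flag (the $N$-sized scalar contraction, the need for the refined bounds \eqref{eq:splitdbd}, \eqref{eq:ddxy} rather than \eqref{eq:dxy-bds} on the $\check{d}$-terms) are exactly the points the paper's terse proof relies on, so the plan is correct.
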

\begin{proof}
For simplicity, consider the case $n=0$; the general case follows along the same lines. The proof of \eqref{eq:prel4-2} and \eqref{eq:prel4-3} follows as in \cite[Lemma 7.7]{BBCS}. We simply expand
		\[ e^{-B(\eta)} \check{b}_x \check{b}_y  e^{B( \eta)} = \big( \check{b}_x + b (\text{p} _x) + b^*(\check{\s}_x) + \check{d}_x \big) \big( \check{b}_y + b(\text{p}_y) + b^*(\check{\s}_y) + \check{d}_y\big)  \]
and consider different cases. Here, $ \text{p}_x \in L^2(\Lambda)$ denotes the inverse Fourier transform of $((\gamma_p-1)\chi_p e^{-ipx})_{p\in \Lambda_+^*}\in\ell^2(\Lambda_+^*)$ whose norm satisfies $\sup_{x\in\Lambda}\| \text{p}_x\|\leq C\|\eta_H\|^2\leq C $. Using this and the results of Lemmas \ref{lm:dp} and \ref{lm:commdpNres} proves the bounds \eqref{eq:prel4-2} and \eqref{eq:prel4-3}. 

The first bound in \eqref{eq:prel4-4} is a direct consequence of Lemma \ref{lm:Ngrow} and the second bound in \eqref{eq:prel4-4} follows from Lemma \ref{lm:Ngrow} and \eqref{eq:dxy-bds}, after expanding $e^{-B(\eta_H)}\check{b}_y  e^{B( \eta)}  $ as above. 

Finally, let's consider the two commutator bounds in \eqref{eq:prel4-5} and let's start with the second bound. Here, it is useful to expand 
		\[ e^{-B(\eta_H)} b (f_x)e^{B( \eta_H)} = \sum_{p\in\Lambda_+^*} \big( \widehat{f}_p \gamma_p e^{ipx} b_p + \widehat{f}_p \sigma_p e^{ipx} b^*_{-p}+\widehat{f}_p  e^{ipx} d_p\big)  \] 
so that 
		\[ \big[ \cN_{\leq cN^\gamma}, e^{-B(\eta_H)} b (f_x)e^{B( \eta_H)} \big] = \sum_{p\in\Lambda_+^*} \big( - \widehat{f}_p \chi_p \gamma_p e^{ipx} b_p + \widehat{f}_p \chi_p \sigma_p e^{ipx} b^*_{-p}+\widehat{f}_p  e^{ipx} [ \cN_{\leq cN^\gamma}, d_p]\big). \]
In particular, using that $ f\in L^2(\Lambda)$ and the bounds \eqref{eq:d-bds}, \eqref{eq:ngdp}, we have that
		\[ \big\|  \big[ \cN_{\leq cN^\gamma}, e^{-B(\eta_H)} b (f_x)e^{B( \eta_H)} \big] \xi\big\| \leq C\|f\| \| (\cN_++1)^{1/2}\xi \| \]
for any $\xi\in \cF_+^{\leq N}$. Using this bound, expanding the factor $ e^{-B(\eta_H)}\check{b}_y e^{B( \eta_H)}$ in position space as in the first step and using \eqref{eq:ngdxy-bds} then proves the second bound in \eqref{eq:prel4-5}. For the first bound in \eqref{eq:prel4-5}, we expand $e^{-B(\eta_H)} b^*(f_x) b^*(g_y)  e^{B( \eta_H)}$ into
		\[\begin{split}
		 &e^{-B(\eta_H)} b^*(f_x) b^*(g_y)  e^{B( \eta_H)}\\
		 &=  \Big( b^*( \text{ch}(f)_x) +b^*( \text{sh}(f)_x)+ \sum_{p \in \Lambda^*_+}  \widehat{f}_p e^{ipx} d^*_p\Big) \Big( b^*( \text{ch}(g)_y)  +b^*( \text{sh}(g)_y)+ \sum_{q \in \Lambda^*_+}  \widehat{g}_q e^{ipy} d^*_q\Big).
		  \end{split}\] 
Here, we define $\text{ch}(f) \in L^2(\Lambda)$ and $ \text{sh}(f) \in L^2(\Lambda)$ through their Fourier coefficients $ \widehat{\text{ch}(f)}(p) = \widehat{f}_p\gamma_p$ and $ \widehat{\text{ch}(f)}(p) = \widehat{f}_p\gamma_p$, for all $p\in\Lambda_+^*$. In particular, for any $f\in L^2(\Lambda)$,
		\[\sup_{x\in \Lambda} \| \text{ch}(f)_x \| \leq C\|f\|, \hspace{0.5cm}\sup_{x\in \Lambda} \| \text{sh}(f)_x \| \leq C\|f\|. \]
To derive the first bound in \eqref{eq:prel4-5}, we then proceed as in the first step with the only difference that, if the commutator $[\cN_{\leq cN^\gamma}, \cdot] $ hits one of the $d^*_p $ or $d^*_q$ operators, we need to use the commutator expansion from Lemma \ref{lm:indu}, similarly as in the proof of Lemma \ref{lm:commdpNres}, and control each term of the expansion. Since by assumption $ f,g\in L^2(\Lambda)$, this can be done as above and we omit further details.	 
\end{proof}

\begin{proof}[Proof of Prop. \ref{prop:GN-4}] We proceed as in \cite[Eq. (7.58) \& (7.59)]{BBCS} and decompose the operator $ \cG_N^{(4)}$ into $  \cG^{(4)}_{N}  =\cV_N+  \text{W}_1 + \text{W}_2 + \text{W}_3 + \text{W}_4, $ where 
		\begin{equation}\label{eq:defW}
		\begin{split}
		\text{W}_1 = \; & \frac{N^{\kappa}}{2N} \sum_{q \in \Lambda_+^* , r \in \Lambda^* : r \not = -q} \widehat{V} (r/N^{1-\kappa}) \eta_H ({q+r}) \int_0^1 ds \Big [e^{-s B(\eta_H)} b_{q} b_{-q}\,e^{sB(\eta_H)}   + \text{h.c.} \Big] \\
  		\text{W}_2 = \; & \frac{N^{\kappa}}{N} \sum_{\substack{ p,q \in \Lambda_+^* ,\\ r \in \Lambda^* : r \not = p,-q}} \widehat{V} (r/N^{1-\kappa})\,  \eta_H ({q+r}) \int_0^1 ds\,  \Big[  e^{-s B(\eta_H)} b^*_{p+r} b^*_{q} e^{s B( \eta_H )}  a^*_{-q-r} a_p + \text{h.c.} \Big]\\
		\text{W}_3 = \; &   \frac{N^{\kappa}}{N} \sum_{p,q\in \Lambda^*_+, r \in \Lambda^* : r \not = -p -q} \widehat{V} (r/N^{1-\kappa}) \eta_H ({q+r}) \eta_H (p) \, \\ &\hspace{1.5cm} \times  \int_0^1 ds\,\int_0^s d\t \, \Big[e^{-sB( \eta_H)} b^*_{p+r} b^*_q e^{sB( \eta_H)} e^{-\t B( \eta_H)} b^*_{-p} b^*_{-q-r} e^{\t B( \eta_H )}+ \hc \Big] \\
		\text{W}_4 = \; & \frac{N^{\kappa}}{N} \sum_{p,q\in \Lambda^*_+, r \in \Lambda^* : r \not = -p -q} \widehat{V} (r/N^{1-\kappa}) \, \eta^2_H ({q+r}) \\ &\hspace{1.5cm} \times \int_0^1 ds\,\int_0^s d\t \,  \Big[ e^{-sB( \eta_H)} b^*_{p+r} b^*_q e^{sB( \eta_H)} e^{-\t B( \eta_H)} b_{p} b_{q+r} e^{\t B( \eta_H)} + \hc  \Big].
		\end{split} \end{equation}
We analyse $ \text{W}_1$ to $\text{W}_4$ separately and start with $\text{W}_1$. Setting $\gamma^{(s)}_q = \cosh (s \eta_H (q))$, $\s_q^{(s)} = \sinh (s \eta_H (q))$ and recalling that $d_q^{(s)}$ is defined as in (\ref{eq:defD}), with $\eta$ replaced by $s \eta_H$, we may proceed as in \cite[(7.59) \& (7.61)]{BBCS} and find that
		\begin{equation}\label{eq:W1-dec}
		\begin{split}
		\text{W}_1 = \; & \frac{N^{\kappa}}{2N} \sum_{q \in \Lambda_+^* , r \in \Lambda^* : r \not = -q}  \widehat{V} (r/N^{1-\kappa}) \eta_H (q+r) \int_0^1 ds (\g^{(s)}_q)^2  (b_q b_{-q} + \hc ) \\ 
		&+ \frac{1}{2N} \sum_{q \in \Lambda_+^* , r \in \Lambda^* : r \not = -q} N^{\kappa} \widehat{V} (r/N^{1-\kappa}) \eta_H (q+r) \int_0^1 ds \, \g_q^{(s)} \s^{(s)}_q \big( [b_q , b_q^*] + \hc \big) \\
		&+ \frac{N^{\kappa}}{2N} \sum_{q \in \Lambda_+^* , r \in \Lambda^* : r \not = -q} \widehat{V} (r/N^{1-\kappa}) \eta_H (q+r) \int_0^1 ds  \, \g_q^{(s)} \big( b_q d_{-q}^{(s)}+\hc \big) + \cE_{10}^{(4)} \\
		=: & \; \text{W}_{11} + \text{W}_{12} + \text{W}_{13} + \cE^{(4)}_{10}.
		\end{split}
		\end{equation}
Here, the operator $ \cE_{10}^{(4)} = \sum_{j=1}^5\cE_{10j}^{(4)} $ is defined through
		\begin{equation}\label{eq:cE410} \begin{split} \cE^{(4)}_{101} &= \frac{N^{\kappa}}{2N} \sum_{\substack{ q \in \Lambda_+^* ,\\ r \in \Lambda^* : r \not = -q }}\!\!\!\! \widehat{V} (r/N^{1-\kappa}) \eta_H (q+r) \int_0^1 ds \Big[ 2 \gamma_q^{(s)} \s_q^{(s)} b_q^* b_q  + (\s_q^{(s)})^2 b_{-q}^* b_q^* +\hc \Big] \\
		 \cE^{(4)}_{102} &= \frac{N^{\kappa}}{2N} \sum_{q \in \Lambda_+^* , r \in \Lambda^* : r \not = -q} \widehat{V} (r/N^{1-\kappa}) \eta_H (q+r) \int_0^1 ds \, \sigma_q^{(s)} \big[ b_{-q}^* d_{-q}^{(s)} + \hc \big]  \\
  		\cE^{(4)}_{103} &= \frac{N^{\kappa}}{2N} \sum_{q \in \Lambda_+^* , r \in \Lambda^* : r \not = -q} \widehat{V} (r/N^{1-\kappa}) \eta_H (q+r) \int_0^1 ds \, \sigma_q^{(s)} \big[ d^{(s)}_q b_q^* + \hc \big] \\
 		\cE^{(4)}_{104} &= \frac{N^{\kappa}}{2N} \sum_{q \in \Lambda_+^* , r \in \Lambda^* : r \not = -q} \widehat{V} (r/N^{1-\kappa}) \eta_H (q+r) \int_0^1 ds \, \g_q^{(s)} \big[ d^{(s)}_q b_{-q} + \hc \big] \\
   		\cE^{(4)}_{105} &= \frac{N^{\kappa}}{2N} \sum_{q \in \Lambda_+^* , r \in \Lambda^* : r \not = -q}  \widehat{V} (r/N^{1-\kappa}) \eta_H (q+r) \int_0^1 ds \big[ d^{(s)}_q d^{(s)}_{-q} + \hc \big].
		\end{split} \end{equation}
Let us start with the analysis of the operators in \eqref{eq:cE410}. To control them and to control their commutators with $ \cN_{\leq cN^\gamma}$, we will use the two pointwise bounds
		\begin{equation}\label{eq:VetaN}
		\sup_{q \in \Lambda_+^*}  \sum_{r \in \L_+^*} |\widehat{V} (r/N^{1-\kappa}) \eta_{q+r}|  \leq CN, \sum_{\substack{ q \in \L_+^*,\\ r \in \L^*, r \not = -q}} \hspace{-0.2cm}|\widehat{V} (r/N^{1-\kappa}) \eta_H (q+r) \eta_H (q)| \leq CN^2.
		\end{equation}
Here, $C>0$ denotes a constant which is independent of $N\in\mathbb{N}$. The pointwise estimates in \eqref{eq:VetaN} can be proved, with minor modifications, like the pointwise bound \eqref{eq:Vetapspacebnd}. 

Applying \eqref{eq:d-bds} from Lemma \ref{lm:dp}, the terms $\cE^{(4)}_{101}, \cE^{(4)}_{102} $ and $\cE^{(4)}_{103}$ can all be bounded in the usual way by Cauchy-Schwarz. By \eqref{eq:VetaN}, we have for instance that 
		\[ \begin{split} 
		|\langle \xi, \cE_{103}^{(4)} \xi \rangle | 
		& \leq C N^{\kappa-1} \| (\cN_+ + 1)^{1/2} \xi \|  \sum_{q \in \Lambda_+^* , r \in \Lambda^* : r \not = -q} |\widehat{V} (r/N^{1-\kappa})| |\eta_H (q+r)| |\eta_H (q)| \\ 
		&\hspace{4cm} \times \left[ \big(|\eta_q | + N^{-1} \| \eta_H \|\big)\| (\cN_++1)^{1/2} \xi \| + \| \eta_H \| \| b_q  \xi \| \right] \\ 
		&\leq C N^{2\kappa - \a/2} \| (\cN_++1)^{1/2} \xi \|^2.
		\end{split} \]
Proceeding similarly for $\cE^{(4)}_{101}$ and $ \cE^{(4)}_{102} $, we find that 
		\[ \pm \big( \cE^{(4)}_{101} + \cE^{(4)}_{102}+\cE^{(4)}_{103}\big) \leq C N^{2\kappa - \a/2} (\cN_++1).  \]
Similarly, if we use that $ [\cN_{\leq cN^\gamma}, b^\sharp(f)] = F(\sharp) b^\sharp(f)$ for any $f\in L^2(\Lambda)$ and with $ F(*)=1, F(\cdot)=-1$, and if we use the bound \eqref{eq:ngdp} to commute $\cN_{\leq cN^\gamma} $ with $d^{(s)}_q$, we find that 
		\[ \pm  i\big[\cN_{\leq cN^\gamma}, \cE^{(4)}_{101} + \cE^{(4)}_{102}+\cE^{(4)}_{103}\big] \leq C N^{2\kappa - \a/2} (\cN_++1).  \]

As for $ \cE^{(4)}_{104} $ and $ \cE^{(4)}_{105}$, it is useful to switch to position space. Following \cite{BBCS}, we first split $ \cE^{(4)}_{104}$ into $\cE_{104}^{(4)} = \cE_{1041}^{(4)} + \cE_{1042}^{(4)} +\text{h.c.}$, where
		\[ \begin{split} \cE_{1041}^{(4)} &= \frac{N^{\kappa}}{2N} \sum_{q \in \Lambda_+^* , r \in \Lambda^* : r \not = -q} \widehat{V} (r/N^{1-\kappa}) \eta_H (q+r) \int_0^1 ds \, (\g_q^{(s)} -1) d^{(s)}_q b_{-q} \\
		\cE_{1042}^{(4)} &= \frac{N^{\kappa}}{2N} \sum_{q \in \Lambda_+^* , r \in \Lambda^* : r \not = -q} \widehat{V} (r/N^{1-\kappa}) \eta_H (q+r) \int_0^1 ds  \, d^{(s)}_q b_{-q}.
		\end{split} \]
Using that $|(\g_q^{(s)} -1)|\leq C\eta_H(q)^2 $ for all $q\in \Lambda_+^*$ and arguing as for the error terms $\cE^{(4)}_{101}, \cE^{(4)}_{102} $ and $\cE^{(4)}_{103}$, a straight forward computation shows that
		\[ \pm   \cE^{(4)}_{1041} \leq C N^{4\kappa - 3\a} (\cN_++1),\hspace{0.5cm} \pm  i\big[\cN_{\leq cN^\gamma}, \cE^{(4)}_{1041}\big] \leq C N^{4\kappa -3\a } (\cN_++1). \] 
To deal with $\cE^{(4)}_{1042} $, on the other hand, we go to position space and apply \eqref{eq:dxy-bds} s.t.
		\[\begin{split}
		| \langle \xi , \cE_{1042}^{(4)} \xi \rangle | &= \Big|\frac{1}{2} \int_0^1 ds \int_{\Lambda^2} dx dy N^{2-2\kappa} V(N^{1-\kappa}(x-y)) \check{\eta}_H (x-y) \langle \xi, \check{d}^{(s)}_x \check{b}_y \xi \rangle \Big| \\ 
		&\leq C N^\kappa \| (\cN_+ + 1)^{1/2} \xi \| \int_0^1ds\, \int_{\Lambda^2} dx dy\, N^{3-3\kappa}  V(N^{1-\kappa}(x-y)) \\
		&\hspace{6cm}\times  \| (\cN_+ + 1)^{-1/2} \check{d}^{(s)}_x \check{b}_y \xi \|
 \\ 
		&\leq  C N^{2\kappa-\alpha/2} \| (\cN_+ + 1)^{1/2} \xi \| \int_{\Lambda^2} dx dy\,  N^{3-3\kappa} V(N^{1-\kappa}(x-y))  \\ 
		&\hspace{6cm} \times \left[ \| \check{a}_y \xi \| +N^{-1/2} \| \check{a}_x \check{a}_y \xi \| \right] \\
		&\leq C N^{2\kappa-\alpha/2}\| (\cN_+ + 1)^{1/2} \xi \| ^2 + CN^{3\kappa/2-\alpha/2}\| (\cN_+ + 1)^{1/2} \xi \| \| \cV_N^{1/2}\xi\|.
		\end{split}\]
If we use \eqref{eq:ngdxy-bds} from Lemma \ref{lm:commdpNres} to control the commutator with $\cN_{\leq cN^\gamma}$ and if we recall the estimate \eqref{eq:VNchixaybnd}, we conclude altogether that $\cE^{(4)}_{1042}$ satisfies
		\[ \pm   \cE^{(4)}_{1042} \leq C N^{2\kappa - \a/2} (\cN_++1),\hspace{0.5cm} \pm  i\big[\cN_{\leq cN^\gamma}, \cE^{(4)}_{1042}\big] \leq C N^{2\kappa -\a/2+\gamma/2 } (\cH_N+ 1). \] 
Finally, for $\cE^{(4)}_{1045} $ we proceed very similarly. We switch to position space and find that
		\[ \begin{split} |\langle \xi ,\big[\cN_{\leq cN^\gamma}, \cE_{105}^{(4)}] \xi \rangle | &\leq CN^\kappa \int_{\L^2} dx dy \, N^{3-3\kappa} V(N^{1-\kappa}(x-y))  |\langle \xi, \check{d}_x \check{d}_y \xi \rangle|\\ 
		&\leq  C N^{2\kappa-\alpha/2} \| (\cN_+ + 1)^{1/2} \xi \| \int_{\Lambda^2} dx dy\,  N^{3-3\kappa} V(N^{1-\kappa}(x-y))  \\ 
		&\hspace{5cm} \times \left[ \| \check{a}_x \xi \| + \| \check{a}_y \xi \| +N^{-1/2} \| \check{a}_x \check{a}_y \xi \| \right] \\
		&\leq C N^{2\kappa-\alpha/2}\| (\cN_+ + 1)^{1/2} \xi \| ^2 + CN^{3\kappa/2-\alpha/2}\| (\cN_+ + 1)^{1/2} \xi \| \| \cV_N^{1/2}\xi\|
		\end{split}\]
as well as 
		$ |\langle \xi ,\big[\cN_{\leq cN^\gamma}, \cE_{105}^{(4)}] \xi \rangle | \leq  C N^{2\kappa -\a/2+\gamma/2 } (\cH_N+ 1). $
In fact, to prove this latter commutator bound, we use the identity $ \int_{\Lambda}dx\;a^*(\check{\chi}_x)a(\check{\chi}_x) = \sum_{p\in\Lambda_+^*:|p|\leq cN^\gamma}a^*_pa_p$, the bound \eqref{eq:ngddxy} in Lemma \ref{lm:commdpNres} and the estimate \eqref{eq:VNchixaybnd}. 

Collecting all the previous bounds on $\cE^{(4)}_{10k}$, $k\in \{ 1,\dots,5\}$, we arrive at
		\begin{equation}\label{eq:E104} \pm \cE_{10}^{(4)}\leq C N^{2\kappa -\a/2} (\cH_N + 1), \hspace{0.5cm} \pm i[\mathcal{N}_{\leq cN^{\gamma}}, \cE_{10}^{(4)}] \leq C N^{2\kappa -\a/2 + \gamma/2} (\cH_N+1). \end{equation}

Next, let's go back to \eqref{eq:W1-dec} and analyse the operators $\text{W}_{11}, \text{W}_{12} $ and $\text{W}_{13}$. We follow \cite[Eq. (7.64) \& (7.65)]{BBCS} and write $ \text{W}_{11}$ and $\text{W}_{12}$ as 
		\begin{equation}\label{eq:W11W12}
		\begin{split}
		 \text{W}_{11} & =  \frac{N^{\kappa}}{2N} \sum_{q \in \Lambda_+^* , r \in \Lambda^* : r \not = -q} \widehat{V} (r/N^{1-\kappa}) \eta_H (q+r) (b_q b_{-q} + \hc ) + \cE_{11}^{(4)},\\
		 \text{W}_{12} &= \frac{N^{\kappa}}{2N} \sum_{q \in \Lambda_+^* , r \in \Lambda^* : r \not = -q} \widehat{V} (r/N^{1-\kappa}) \eta_H (q+r) \eta_H (q)  \Big(1- \frac{\cN_+}{N} \Big) + \cE_{12}^{(4)},
		 \end{split}
		 \end{equation}
with the error terms  $ \cE_{11}^{(4)}$ and $\cE_{12}^{(4)}$ defined by
		\[ \begin{split} 
		\cE_{11}^{(4)} = \;& \frac{N^{\kappa}}{2N} \sum_{q \in \Lambda_+^* , r \in \Lambda^* : r \not = -q}  \widehat{V} (r/N^{1-\kappa}) \eta_H (q+r) \int_0^1 ds \big[ (\g^{(s)}_q) ^2- 1\big]  (b_q b_{-q} + \hc )  ,\\
		\cE_{12}^{(4)}  = \; & -\frac{1}{2N^2} \sum_{q \in \Lambda_+^* , r \in \Lambda^* : r \not = -q} N^{\kappa}\widehat{V} (r/N^{1-\kappa}) \eta_H (q+r) \int_0^1 ds \g_q^{(s)} \s^{(s)}_q a_q^* a_q \\ &+ \frac{1}{2N} \sum_{q \in \Lambda_+^* , r \in \Lambda^* : r \not = -q}N^{\kappa} \widehat{V} (r/N^{1-\kappa}) \eta_H (q+r) \int_0^1 ds (\g_q^{(s)} \s^{(s)}_q -s \eta_H (q)) \Big( 1- \frac{\cN_+}{N} \Big). 
		\end{split} \]
Since $ |  (\g^{(s)}_q) ^2- 1|\leq C\eta_H(q)^2$ and $| \g_q^{(s)} \s^{(s)}_q -s \eta_H (q))| | \leq C |\eta_H (q)|^3$, we use the same arguments with which we controlled the error $ \cE_{10}^{(4)}$ to deduce that 
		\[ \pm ( \cE_{11}^{(4)} +\cE_{12}^{(4)})\leq C N^{3\kappa -5\a/2} (\cN_+ + 1), \hspace{0.5cm} \pm i[\mathcal{N}_{\leq cN^{\gamma}}, \cE_{11}^{(4)} +\cE_{12}^{(4)} ] \leq C N^{3\kappa -5\a/2} (\cN_++1). \]	
We omit the details. Similar arguments apply to the operator $ \text{W}_{13}$, defined in \eqref{eq:W1-dec}, but here we partly need to switch to position space again. We split $\text{W}_3$ into
		\begin{equation}\label{eq:W13} 
		\text{W}_{13} = - \frac{N^{\kappa}}{2N} \sum_{\substack{ q \in \Lambda_+^* ,\\ r \in \Lambda^* : r \not = -q}} \widehat{V} (r/N^{1-\kappa}) \eta_H (q+r) \eta_H (q) \left(1- \frac{\cN_+}{N} \right) \frac{\cN_+ +1}{N}  + \cE^{(4)}_{13},
		\end{equation}
where the error $\cE^{(4)}_{13} = \cE_{131}^{(4)} +  \cE_{132}^{(4)} + \cE_{133}^{(4)}$ is defined through
\begin{equation*}
\begin{split}
\cE^{(4)}_{131} = \; &\frac{N^{\kappa}}{2N} \sum_{q \in \Lambda_+^* , r \in \Lambda^* : r \not = -q} \widehat{V} (r/N^{1-\kappa}) \eta_H (q+r) \int_0^1 ds (\g_q^{(s)} -1) b_q d_{-q}^{(s)} +\hc  \\
\cE^{(4)}_{132} = \; &\frac{N^{\kappa}}{2N} \sum_{q \in \Lambda_+^* , r \in \Lambda^* : r \not = -q} \widehat{V} (r/N^{1-\kappa}) \eta_H (q+r) \int_0^1 ds \, b_q \left[ d_{-q}^{(s)} + s \eta_H (q) \frac{\cN_+}{N} b_{q}^* \right] + \hc \\
\cE^{(4)}_{133} = \;&- \frac{N^{\kappa}}{2N} \sum_{q \in \Lambda_+^* , r \in \Lambda^* : r \not = -q} \widehat{V} (r/N^{1-\kappa}) \eta_H (q+r) \eta_H (q) a^*_q  a_{q}\frac{(N-\cN_+)}{N} \frac{\cN_+ +1}{N}.
\end{split} \end{equation*}
The last term $\cE^{(4)}_{133}$ is easily seen to be bounded by $  \pm \cE_{133}^{(4)} \leq C N^{2\kappa-2\a} (\cN_+ + 1)  $ and we also notice that $[\mathcal{N}_{\leq cN^{\gamma}}, \cE^{(4)}_{133}] =0$. Hence, let's focus on the first two errors $\cE^{(4)}_{131}$ and $\cE^{(4)}_{132}$. Since $|\g_q^{(s)} -1| \leq C \eta_H (q)^2$, Lemma \ref{lm:dp}, Lemma \ref{lm:commdpNres} and (\ref{eq:VetaN}) imply that 
		\[  \begin{split} |\langle \xi , \cE_{131}^{(4)} \xi \rangle | &\leq C N^{\kappa-1} \sum_{q \in \Lambda_+^* , r \in \Lambda^* : r \not = -q} |\widehat{V} (r/N^{1-\kappa})| |\eta_H (q+r)| |\eta_H (q)|^2  \| (\cN_+ +1)^{1/2} \xi \| \\ 
		&\hspace{5cm} \times \left[ |\eta_H (q)| \| (\cN_+ +1)^{1/2} \xi \| + \| \eta_H \| \| b_q \xi \| \right] \\ 
		&\leq C N^{4\kappa - 3\a} \| (\cN_+ + 1)^{1/2} \xi \|^2 \end{split} \]
and that
		\[  \begin{split} |\langle \xi ,[\cN_{\leq cN^\gamma},  \cE_{131}^{(4)}] \xi \rangle | 
		&\leq C N^{4\kappa - 3\a} \| (\cN_+ + 1)^{1/2} \xi \|^2. \end{split} \]
As for the term $\cE_{132}^{(4)} $, we switch to position space where it reads
		\[\cE_{132}^{(4)}  = \int_0^1 ds \int_{\L^2} dx dy N^{2-2\kappa} V(N^{1-\kappa}(x-y)) \check{\eta}_H (x-y)  \check{b}_x \check{\bar{d}}_y. \]
Recall here the notation $\check{\bar{d}}^{(s)}_y = d^{(s)}_y + s (\cN_+ / N) b^* (\check{\eta}_{H,y})$. Due to the pointwise estimate $ \|\check{\eta}_H \|_\infty\leq CN$, we may proceed as in \eqref{eq:bdbarpspaceterm1}, \eqref{eq:bdbarpspaceterm2} and thereafter to conclude that 
		\[\begin{split}
		\pm \cE_{132}^{(4)} &\leq CN^{2\kappa-\alpha/2}(\cN_++1) + CN^{3\kappa/2-\alpha/2}(\cV_N+1),\\
		\pm \cE_{132}^{(4)} &\leq CN^{2\kappa-\alpha/2}(\cN_++1) + CN^{2\kappa+\gamma/2-\alpha/2}(\cH_N+1).
		\end{split}\]
		
Now, if we collect the bounds \eqref{eq:E104}, \eqref{eq:W11W12} and \eqref{eq:W13}, we see altogether that  
		\begin{equation}\label{eq:W1f} 
		\begin{split}  \text{W}_1 = \; &\frac{N^{\kappa}}{2N} \sum_{\substack{ q \in \Lambda_+^* , r \in \Lambda^* :\\ r \not = -q }} \hspace{-0.3cm}\widehat{V} (r/N^{1-\kappa}) \eta_H (q+r) \eta_H (q)  \Big(1- \frac{\cN_+}{N} \Big)\Big(1- \frac{\cN_+ + 1}{ N} \Big) \\ 
		&+\frac{N^{\kappa}}{2N} \sum_{\substack{q \in \L_+^*, r \in \L^* :\\ r \not = -q}} \widehat{V} (r/N^{1-\kappa}) \eta_H (q+r)  \Big(b_q b_{-q} + \hc \Big) + \cE^{(4)}_1, 
		\end{split} \end{equation}
where the error operator $\cE_{1}^{(4)}$ satisfies the estimates
		\[ \pm \cE_{1}^{(4)} \leq C N^{2\kappa-\a/2} (\cH_N + 1), \hspace{0.5cm} \pm i[\mathcal{N}_{\leq cN^{\gamma}},\cE_{1}^{(4)}] \leq C N^{2\kappa-\a/2+\gamma/2} (\cH_N + 1). \]
This concludes the analysis of the first contribution $\text{W}_1$ in Eq. \eqref{eq:defW}. It remains to analyse the contributions $ \text{W}_2, \text{W}_3$ and $ \text{W}_4$. To this end, it is useful to switch to position space and to use the results of Lemma \ref{lm:prel4}. Considering for instance $\text{W}_2$, we have that 
		\begin{equation*} \begin{split}
		\text{W}_2 = \int_{\Lambda^2} dx dy N^{2-2\kappa} V(N^{1-\kappa}(x-y))  \int_0^1  ds \Big[ e^{-sB( \eta_H)} \check{b}^*_x \check{b}^*_y  e^{s B( \eta_H)}
		a^* (\check{\eta}_{H,x}) \check{a}_y  + \text{h.c.}   \Big].
		\end{split}
		\end{equation*}
Using Cauchy-Schwarz, Lemma \ref{lm:prel4} and the bound
		\[ \| (\cN_+ +1)^{-1/2} a^* (\check{\eta}_{H,x}) \check{a}_y \xi \| \leq C \|  \eta_H \| \| \check{a}_y \xi \| \leq C N^{\kappa-\a/2} \| \check{a}_y \xi \|,  \]
we obtain that
		\begin{equation*} \begin{split} 
		|\langle \xi, \text{W}_2 \xi \rangle |&\leq  C N^{2\kappa-\a/2} \int_{\Lambda^2}  dx dy \, N^{3-3\kappa} V(N^{1-\kappa}(x-y))  \| \check{a}_y \xi \| \\ 
		&\hspace{2cm} \times \Big[   \| (\cN_+ +1)^{1/2} \xi \| +   \| \check{a}_x \xi \| +  \| \check{a}_y \xi \| + N^{-1/2}  \| \check{a}_x \check{a}_y \xi \| \Big] \\
		& \leq C  N^{2\kappa-\a/2} \|(\cN_++1)^{1/2}\xi\|^2 + CN^{3\kappa/2 - \a/2} \|(\cN_++1)^{1/2}\xi\| \cV_N^{1/2} \xi \|.
		\end{split} \end{equation*}
Similarly, to control the commutator with $ \cN_{\leq cN^\gamma}$, we use the estimate
		\[ \| (\cN_+ +1)^{-1/2} [\cN_{\leq cN^\gamma},a^* (\check{\eta}_{H,x}) \check{a}_y] \xi \|\leq C N^{\kappa-\a/2}\Big( \| \check{a}_y \xi \| +  \| a(\check{\chi}_y) \xi \|\Big) \]
and find with the help of Lemma \ref{lm:prel4} that
		\begin{equation*} \begin{split}
		|\langle \xi, [\cN_{\leq cN^\gamma},\text{W}_2] \xi \rangle |&\leq  C N^{2\kappa-\a/2} \int_{\Lambda^2}  dx dy \, N^{3-3\kappa} V(N^{1-\kappa}(x-y)) \Big[ \| \check{a}_y \xi \| +  \| a(\check{\chi}_y) \xi \|\Big] \\ 
		&\hspace{1.3cm} \times \Big[   \| (\cN_+ +1)^{1/2} \xi \| +   \| \check{a}_x \xi \| +  \| \check{a}_y \xi \| +  \| a(\check{\chi}_y) \xi \| + \| a(\check{\chi}_x) \xi \| \\
		&\hspace{1.8cm}  + N^{-1/2}  \| a(\check{\chi}_y) \check{a}_x \xi \| + N^{-1/2}  \| a(\check{\chi}_x) \check{a}_y \xi \| + N^{-1/2}  \| \check{a}_x \check{a}_y \xi \| \Big] \\
		& \leq C  N^{2\kappa-\a/2} \|(\cN_++1)^{1/2}\xi\|^2  + CN^{2\kappa-\a/2+\gamma/2} \|(\cH_N+1)^{1/2}\xi\|^2.
		\end{split} \end{equation*}
Here, the last inequality follows as in \eqref{eq:bdbarpspaceterm2} and thereafter.
 
Finally, controlling the remaining two contributions $\text{W}_3$ and $\text{W}_4$, defined in \eqref{eq:defW}, follows along the same lines. We skip the details and summarize that 
		\[\begin{split}
		\pm (\text{W}_3 +\text{W}_3) &\leq  CN^{3\kappa-\alpha}(\cN_++1) + CN^{5\kappa/2 - \a}(\cV_N+1) ,\\
		 \pm i[\cN_{\leq cN^\gamma}, \text{W}_3+\text{W}_4] &\leq CN^{3\kappa-\alpha}(\cN_++1) + CN^{3\kappa - \a +\gamma/2}(\cH_N+1).
		 \end{split}\] 
Altogether, we have thus shown that
	\[\begin{split} 
	\cG^{(4)}_{N} = \; & \frac{N^{\kappa}}{2N} \sum_{q \in \Lambda_+^* , r \in \Lambda^* : r \not = -q} \widehat{V} (r/N^{1-\kappa}) \eta_H (q+r) \eta_H (q)  \Big(1- \frac{\cN_+}{N} \Big) \Big(1- \frac{\cN_+ + 1}{ N} \Big) \\ 
	& + \frac{N^{\kappa}}{2N} \sum_{q \in \L_+^*, r \in \L^* : r \not = -q} \widehat{V} (r/N^{1-\kappa}) \eta_H (q+r)  \big(b_q b_{-q} + \hc \big) +\cV_N + \cE^{(4)}_{N} 
	\end{split} \]
with the error operator $\cE_{N}^{(4)}$ satisfying the bounds \eqref{eq:E4bound2} and \eqref{eq:E4boundcomm}. The proof of \eqref{eq:E4Cff} is similar to that of \eqref{eq:E4bound2} and we omit the details.
\end{proof}

\subsection{Proof of Prop. \ref{prop:GN}} \label{sub:proofGN}
The goal of this section is to prove Proposition \ref{prop:GN}. With the results of the previous sections \ref{sub:G0} - \ref{sub:G4}, the proof follows as in \cite[Section 7.5]{BBCS}, suitably adjusted to our setting. In addition to the arguments of \cite[Section 7.5]{BBCS}, however, we need to provide some further bounds to control commutators with $\cN_{\leq cN^\gamma}$. Let us sketch the main steps and let us focus for simplicity on proving \eqref{eq:GN-prel} to \eqref{eq:GeffE} as well as \eqref{eq:errComm} (the last bound \eqref{eq:adjGN} follows then as explained at the beginning of \cite[Section 7]{BBCS2}). First of all, Propositions \ref{prop:G0}, \ref{prop:G2}, \ref{prop:GN-3} and \ref{prop:GN-4} imply that the excitation Hamiltonian $\cG_{N}$, defined in \eqref{eq:GN}, has the form
\begin{equation} \begin{split} \label{eq:proofGNell-1}
\cG_{N} = \; &    \frac{N^{\kappa}\widehat{V} (0)}{2}\, (N +\cN_+ -1) \, (1-\cN_+/N) +\sum_{p \in \Lambda^*_+} N^{\kappa} \widehat{V} (p/N^{1-\kappa})  a^*_pa_p (1-\cN_+/N)\\
& + \sum_{p \in P_{H}} \eta_p \Big[p^2 \eta_p + N^{\kappa}\widehat{V} (p/N^{1-\kappa}) \\
&\hspace{2cm}+ \frac {N^{\kappa}} {2N} \sum_{\substack{r \in \L^*,\\ p+r \in P_H}} \widehat{V} (r/N^{1-\kappa})  \eta_{p+r}\Big]\frac{(N-\cN_+)}{N}\frac{(N-\cN_+-1)}{N} \\
&+ \sum_{p \in P_{H}} \Big[\; p^2 \eta_p + \frac 12 N^{\kappa}\widehat{V} (p/N^{1-\kappa}) + \frac {N^{\kappa}} {2N} \sum_{r \in \L^*:\; p+r \in P_H}\hskip -0.5cm \widehat{V} (r/N^{1-\kappa})  \eta_{p+r} \; \Big]  \big( b^*_p b^*_{-p} + b_p b_{-p} \big) \\
 & + \frac{1}{2}\sum_{p \in P_H^c} \Big[ N^{\kappa}\widehat{V} (p/N^{1-\kappa}) + \frac {N^{\kappa}} {N} \sum_{r \in \L^*:\; p+r \in P_H}\hskip -0.5cm \widehat{V} (r/N^{1-\kappa})  \eta_{p+r} \Big]\big( b_p b_{-p}+ b_{-p}^* b_p^*\big) \\
& + \frac{N^{\kappa}}{\sqrt{N}} \sum_{p,q \in \L^*_+ : p + q \not = 0} \widehat{V} (p/N^{1-\kappa}) \left[ b_{p+q}^* a_{-p}^* a_q  + \hc \right]  +\cK+\cV_N   + \cE_{1}
\end{split} \end{equation}
where the self-adjoint operator $ \cE_1$ satisfies
		\begin{equation*}
		\begin{split}
		\pm \cE_1 &\leq C  N^{3\kappa-\alpha/2} \big(\cH_N + 1 \big),\\
		 \pm i[\cN_{\leq cN^\gamma}, \cE_1] &\leq C(N^{3\kappa-\alpha/2} + N^{2\kappa+\gamma/2-\alpha/2}) \big(\cH_N + 1 \big), \\
		\pm i[\cN_{> cN^\gamma}, \cE_1] &\leq C(N^{3\kappa-\alpha/2} + N^{2\kappa+\gamma/2-\alpha/2}) \big(\cH_N + 1 \big).
		\end{split}
		\end{equation*}
Notice for the last line that $ \cN_{>cN^\gamma} = \cN_+ - \cN_{\leq cN^\gamma}$. Using the scattering equation \eqref{eq:eta-scat}, the bound \eqref{eq:etaHL2} and Lemma \ref{sceqlemma}, we deduce that (see also \cite[Eq. (7.71)]{BBCS})
		\[\begin{split}
		 &\sum_{p \in P_{H}} \eta_p \Big[p^2 \eta_p + N^{\kappa}\widehat{V} (p/N^{1-\kappa}) + \frac {N^{\kappa}} {2N} \sum_{\substack{r \in \L^*,\\ p+r \in P_H}} \widehat{V} (r/N^{1-\kappa})  \eta_{p+r}\Big]\frac{(N-\cN_+)}{N}\frac{(N-\cN_+-1)}{N}\\
		 & =  \left[ 4\pi \frak{a}_0 N^{1+\kappa} - \frac12 N^{1+\kappa} \widehat{V} (0) \right](1-\cN_+/N)^2   +\cE_2,
		\end{split}\]
where the error $ \cE_2$ satisfies $ \pm \cE_2\leq CN^{2\kappa + \alpha }$ and $ [\cN_{\leq cN^\gamma}, \cE_2] = [\cN_+, \cE_2]=0$. Similarly, 
		\[\begin{split} 
		\sum_{p \in P_{H}} \Big[\, p^2 \eta_p + \frac 12 &N^{\kappa}\widehat{V} (p/N^{1-\kappa}) + \frac 1 {2N} \sum_{r \in \L^*:\; p+r \in P_H}\hskip -0.5cm N^{\kappa}\widehat{V} (r/N^{1-\kappa})  \eta_{p+r} \, \Big]  \big( b^*_p b^*_{-p} + b_p b_{-p} \big) \\
		& = N^\kappa (N^{3-3\kappa} \lambda_\ell )\sum_{p \in P_H}  (\widehat{\chi}_\ell\ast \widehat{f}_N)(p) \big( b^*_p b^*_{-p} + b_p b_{-p} \big) \\
		& \hskip .3cm -  \frac {N^{\kappa}} {2N} \sum_{\substack{p,q \in \L^*: \\ p\in P_H,\, q \in P_H^c}}\hskip -0.5cm \widehat{V} ((p-q)/N^{1-\kappa}) \eta_{q} \big( b^*_p b^*_{-p} + b_p b_{-p} \big)
		\end{split} \]
so that, once more by Lemma \ref{sceqlemma}, the bounds $| N^{3-3\kappa}\lambda_\ell |\leq C$ and $\|(\widehat{\chi}_\ell\ast \widehat{f}_N)\|\leq C$ imply
		\[\begin{split} 
		& \pm N^\kappa (N^{3-3\kappa} \lambda_\ell )\sum_{p \in P_H}  (\widehat{\chi}_\ell\ast \widehat{f}_N)(p) \big( b^*_p b^*_{-p} + b_p b_{-p} \big)\leq CN^{\kappa-\alpha} (\cK+1), \\
		& \pm N^\kappa (N^{3-3\kappa} \lambda_\ell )\sum_{p \in P_H}  (\widehat{\chi}_\ell\ast \widehat{f}_N)(p) [\cN_{\leq cN^\gamma},  b^*_p b^*_{-p} + b_p b_{-p}  ]\leq CN^{\kappa-\alpha} (\cK+1).
		\end{split}\]
Analogously, we can write
		\[\begin{split}
		&\frac {N^{\kappa}} {2N} \sum_{\substack{p,q \in \L^*: \\ p\in P_H,\, q \in P_H^c}}\hskip -0.5cm \widehat{V} ((p-q)/N^{1-\kappa}) \eta_{q} \big( b_p b_{-p} +\text{h.c.} \big)\\
		& \hspace{0.5cm}= \frac12\sum_{q\in\Lambda^*:q \in P_H^c} \int_{\Lambda^2}dxdy\; N^{2-2\kappa}V(N^{1-\kappa}(x-y)) e^{iq(x-y)} \eta_{q}\big( \check{b}_x \check{b}_{y} +\text{h.c.} \big)\\
		&\hspace{1cm}-\frac {N^{\kappa}} {2N} \sum_{\substack{p,q \in \L^*: \\ p\in P_H^c,\, q \in P_H^c}}\hskip -0.5cm \widehat{V} ((p-q)/N^{1-\kappa}) \eta_{q} \big( b_p b_{-p} +\text{h.c.} \big).
		\end{split}\]
These terms can be controlled (with $\alpha+\kappa\leq 1$) by
		\[\begin{split} 
		 \pm \frac {N^{\kappa}} {2N} \sum_{\substack{p,q \in \L^*: \\ p\in P_H,\, q \in P_H^c}}\hskip -0.5cm \widehat{V} ((p-q)/N^{1-\kappa}) \eta_{q} \big( b^*_p b^*_{-p} + b_p b_{-p} \big)&\leq CN^{\alpha+3\kappa/2-1/2} (\cH_N +1) , \\
		 \pm \frac {N^{\kappa}} {2N}\hspace{-0.2cm} \sum_{\substack{p,q \in \L^*: \\ p\in P_H,\, q \in P_H^c}}\hskip -0.5cm \widehat{V} ((p-q)/N^{1-\kappa}) \eta_{q} \Big(i [\cN_{\leq cN^\gamma}, b^*_p b^*_{-p} + b_p b_{-p}]\Big)&\leq CN^{\alpha+3\kappa/2-1/2} (\cH_N +1), \\
		 \pm \frac {N^{\kappa}} {2N}\hspace{-0.2cm} \sum_{\substack{p,q \in \L^*: \\ p\in P_H,\, q \in P_H^c}}\hskip -0.5cm \widehat{V} ((p-q)/N^{1-\kappa}) \eta_{q} \Big(i [\cN_{> cN^\gamma}, b^*_p b^*_{-p} + b_p b_{-p}]\Big)&\leq CN^{\alpha+3\kappa/2-1/2} (\cH_N +1). \\
		\end{split}\]
Arguing similarly for the term in the fourth line of \eqref{eq:proofGNell-1}, we find  
		\[ \begin{split} 
 		&\frac{N^{\kappa}}{2} \sum_{p \in P_H^c} \Big[ \widehat{V} (p/N^{1-\kappa}) + \frac{1}{N} \sum_{r \in \L^*:\; p+r \in P_H}\hskip -0.5cm \widehat{V} (r/N^{1-\kappa})  \eta_{p+r} \Big]\big( b_p b_{-p}+ b_{-p}^* b_p^*\big) \\
 		& = \frac{N^{\kappa}}{2}\sum_{p \in P_H^c}  (\widehat{V}(\cdot/N^{1-\kappa}) * \widehat f_{N})_p \big( b_p b_{-p}+ b_{-p}^* b_p^*\big) + \cE_3,
		 \end{split}\]
where the error $\cE_3$ satisfies 
		\begin{equation*}
		\begin{split}
		\pm \cE_3 &\leq CN^{\alpha+3\kappa/2-1/2} (\cH_N+1),\; \pm i[\cN_{\leq cN^\gamma}, \cE_3]\leq CN^{\alpha+3\kappa/2-1/2} (\cH_N+1), \\
		\pm i[\cN_{> cN^\gamma}, \cE_3] &\leq CN^{\alpha+3\kappa/2-1/2} (\cH_N+1).
		\end{split}
		\end{equation*}
Collecting the error bounds from above, we summarize that 
		\begin{equation*} \begin{split} 
		\cG_{N} = \; & 4 \pi \frak{a}_0 N^{\kappa} (N-\cN_+) +N^{\kappa} \big[ \widehat{V}(0) - 4\pi \frak{a}_0 \big] \cN_+ (1-\cN_+/N) \\
		&  +N^{\kappa}\sum_{p \in \Lambda^*_+}  \widehat{V} (p/N^{1-\kappa})  a^*_pa_p (1-\cN_+/N)\\
		&+ \frac{N^{\kappa}}{2}\sum_{p \in P_H^c}  (\widehat{V}(\cdot/N^{1-\kappa}) * \widehat f_{N})_p \big( b_p b_{-p}+ b_{-p}^* b_p^*\big) \\
		& + \frac{N^{\kappa}}{\sqrt{N}} \sum_{p,q \in \L^*_+ : p + q \not = 0} \widehat{V} (p/N^{1-\kappa}) \left[ b_{p+q}^* a_{-p}^* a_q  + \hc \right]  +\cH_N   + \cE_4,
		\end{split} \end{equation*}
where $ \cE_4$ satisfies
		\[ \pm \cE_4 \leq C\big(N^{3\kappa-\alpha/2}+ N^{\alpha+3\kappa/2-1/2}\big)  (\cH_N+1) + CN^{\alpha+2\kappa} \]
as well as the commutator bounds
		\begin{equation}\label{eq:commcNE4GN}\pm i[\cN_{\leq cN^\gamma}, \cE_4], \,\pm i[\cN_{> cN^\gamma}, \cE_4]  \leq C\big(N^{3\kappa-\alpha/2}+N^{\gamma/2+ 2\kappa-\alpha/2}+ N^{\alpha+3\kappa/2-1/2}\big)  (\cH_N+1). \end{equation}
With a few more simplifications, using $|N^{\kappa}\widehat{V} (p/N^{1-\kappa}) -  N^{\kappa}\widehat{V} (0)| \leq C |p| N^{2\kappa-1}$ and
		\[ \begin{split} \big|(\widehat{V} (\cdot/N^{1-\kappa}) * \widehat f_{N})_p  - 8 \pi \frak{a}_0 \big| &\leq \int_\Lambda dx \, N^{3-3\kappa} V(N^{1-\kappa}x) f_\ell (N^{1-\kappa}x) \big| e^{ip \cdot x} - 1 \big| \\
		&\hspace{0.5cm}+ \big|(\widehat{V} (\cdot/N^{1-\kappa}) * \widehat f_{N})(0)  - 8 \pi \frak{a}_0 \big|  \leq C \big(|p| + 1\big) N^{\kappa-1} \end{split} \]
as well as the estimate (similar to \cite[eq. (8.38)]{BBCS})
		\[\begin{split}
		&\bigg| \frac{N^{\kappa}}{\sqrt{N}} \sum_{p,q \in \L^*_+ : |q| > N^\beta; p + q \not = 0} \widehat{V} (p/N^{1-\kappa}) \langle\xi, \left[ b_{p+q}^* a_{-p}^* a_q  + \hc \right] \xi\rangle \bigg| \\
		&\hspace{0.5cm} \leq C \sqrt{N} \int_{\Lambda^2}dxdy\; N^{2-2\kappa} V(N^{1-\kappa}(x-y)) \| \check{a}_x\check{a}_y\xi\| \Big\| \sum_{q\in\Lambda_+^*: |q| > N^{\beta}} e^{iqx} a_q  \xi\Big\|\\
		&\hspace{0.5cm}\leq C N^{\kappa/2-\beta } \langle \xi, \cH_N \xi\rangle,
		\end{split}\]
we arrive at the decomposition $ \cG_N = \cG_N^{\text{eff}}+\cE_{\cG_N}$, where the error $ \cE_{\cG_N}$ satisfies 
		\[ \pm \cE_{\cG_N} \leq C\big(N^{3\kappa-\alpha/2}+ N^{\alpha+3\kappa/2-1/2}+N^{\kappa/2-\beta}\big)  (\cH_N+1) + CN^{\alpha+2\kappa}. \]
This proves in particular the bound \eqref{eq:GeffE}. To prove the remaining bounds, i.e. \eqref{eq:Gbd0}, \eqref{eq:theta-err} and \eqref{eq:errComm}, we need to analyse further the operator $ \cG_N^{\text{eff}}$. 

 To show \eqref{eq:Gbd0}, we use \eqref{eq:GeffE} and Cauchy-Schwarz to see that 
 		\begin{equation}\label{eq:pairingterm} N^\kappa\sum_{p\in P_H^c} | \langle\xi, b^*_{p} b^*_{-p}\xi\rangle| \leq \delta \|(\cK+1)^{1/2}\xi\|^2 + C\delta^{-1}N^{\alpha+2\kappa}\| (\cN_++1)^{1/2}\xi\|^2
		\end{equation}
as well as
		\[\begin{split}
		& \frac{N^{\kappa}}{\sqrt{N}} \sum_{p,q \in \L^*_+ : p + q \not = 0} \widehat{V} (p/N^{1-\kappa})\big| \langle\xi, \left[ b_{p+q}^* a_{-p}^* a_q  + \hc \right] \xi\rangle \big| \\
		&\leq  C \sqrt{N}\!\int_{\Lambda^2}\!\!dxdy\,N^{2-2\kappa} V(N^{1-\kappa}(x-y)) \| \check{a}_x \check{a}_y\xi\| \|\check{a}_x\xi\|  \leq \delta \|\cV_N^{1/2}\xi\|^2 + C\delta^{-1}N^{\kappa}\| \cN_+^{1/2}\xi\|^2.
		\end{split}\]
Controlling the remaining terms in $ \cG_N^{\text{eff}}$ is similar and we arrive at \eqref{eq:Gbd0}. To get the improved lower bound \eqref{eq:theta-err}, we can complete the square in \eqref{eq:pairingterm} (see the arguments before \cite[Eq. (7.81)]{BBCS}, the adaption to our setting is straight-forward). Finally, using the commutator bound on $ \cE_4$ in \eqref{eq:commcNE4GN} and the assumptions that $ \alpha>6\kappa$ and $ 2\alpha+3\kappa<1$, the bounds \eqref{eq:errComm} follow similarly. The only additional ingredient is to use the bounds \eqref{eq:Ntheta-bds} when controlling the commutator of $ \cN_{>cN^\gamma}$ with the quadratic term \eqref{eq:pairingterm}. It is straight-forward to prove that this produces an error bounded by $ CN^{\alpha/2+\kappa-\gamma}(\cH_N+1)$. This concludes the proof of Prop. \ref{prop:GN}.


\section{Analysis of $ \cJ_N $ }\label{sec:JN}

This section is devoted to the proof of Proposition \ref{prop:JN}. An important role in the proof of Prop. \ref{prop:JN} is played by the estimates obtained in Lemma \ref{lm:NresgrowA}, Lemma \ref{lm:KresgrowA} and Corollary \ref{cor:VNgrowA}, to control the growth of the restricted number of particles, of the restricted kinetic energy and, respectively, of the potential energy, with respect to conjugation through the unitary operator $e^A$. We will also need control on the action of $e^A$ on the full kinetic energy operator $\cK$. To this end, we first consider the commutator of $\cK$ with $A = A_1 - A_1^*$, with $A_1$ defined as in (\ref{eq:defA}). 
\begin{prop}\label{prop:commKA} Assume the exponents $\alpha,\beta$ satisfy (\ref{eq:alphabeta}). Let $m_0  \in\mathbb{R}$ be such that $ m_0 \beta =\alpha$. Then there exists $C> 0$ such that 
		\begin{equation} \label{eq:propcommKA}
		\begin{split}
		[\cK, A] & = -\frac{1}{\sqrt N} \sum_{\substack{ u\in \Lambda_+^*, p\in P_L:\\ p+u\neq 0 }} N^{\kappa} (\widehat{V}(./N^{1-\kappa})\ast  \widehat{f}_{N})(u) \big( b^*_{p+u} a^*_{-u}a_p +\emph{h.c.}\big) + \cE_{[\cK,A]}
		\end{split}
		\end{equation}
where the self-adjoint operator $\cE_{[\cK,A]} $ satisfies 
		\begin{equation}\label{eq:propcommKAerror}
		\begin{split}
		\pm \cE_{[\cK,A]} \leq&\;   C N^{-3\beta/2}\cK + CN^{3\beta/4+\kappa}\cK_{\leq N^{3\beta/2}} +\delta  \cK + C \delta^{-1}\sum_{j=3}^{2\lfloor m_0\rfloor -1}N^{ j\beta/2 + \beta/2+2\kappa}\cN_{\geq \frac12 N^{j\beta/2}} \\
		&\; + C \delta^{-1}N^{ \alpha +2\kappa}\cN_{\geq \frac12 N^{\lfloor m_0 \rfloor \beta}} + C \delta^{-1} m_0 N^{\alpha +2\kappa}		
		\end{split}
		\end{equation}	
for all $\delta>0$ and $N\in\NN$ sufficiently large.
\end{prop}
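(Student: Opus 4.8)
The plan is to compute the commutator $[\cK, A_1]$ explicitly by bringing $\cK = \sum_{v \in \Lambda_+^*} v^2 a_v^* a_v$ through the cubic monomial $b_{r+p}^* a_{-r}^* a_p$ appearing in $A_1$ (see \eqref{eq:defA}). Using $[\cK, a_q^*] = q^2 a_q^*$ and $[\cK, a_q] = -q^2 a_q$, one finds $[\cK, b_{r+p}^* a_{-r}^* a_p] = ((r+p)^2 + r^2 - p^2)\, b_{r+p}^* a_{-r}^* a_p = (2r^2 + 2r\cdot p)\, b_{r+p}^* a_{-r}^* a_p$, up to the correction from $\cK$ acting on the $\cN_+$-dependence hidden in $b^*$, which is harmless since $[\cK, \cN_+] = 0$. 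Writing $2r^2\eta_r = -N^{\kappa}\widehat V(r/N^{1-\kappa}) - \tfrac{1}{N}\sum_q N^\kappa \widehat V((r-q)/N^{1-\kappa})\eta_q + 2 N^{3-2\kappa}\lambda_\ell \widehat\chi_\ell(r) + 2N^{2-2\kappa}\lambda_\ell\sum_q \widehat\chi_\ell(r-q)\eta_q$ via the scattering equation \eqref{eq:eta-scat}, the term $-N^\kappa\widehat V(r/N^{1-\kappa})$ combines with $-\tfrac1N\sum_q N^\kappa\widehat V((r-q)/N^{1-\kappa})\eta_q$ to produce $-N^\kappa(\widehat V(\cdot/N^{1-\kappa})*(\delta_0 + \eta/N))(u) = -N^\kappa(\widehat V(\cdot/N^{1-\kappa})*\widehat f_N)(u)$ after relabeling $u = r + p$ and summing, which is precisely the main term in \eqref{eq:propcommKA}. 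The remaining pieces — the $2r\cdot p$ contribution $\Sigma_3$, the $\widehat\chi_\ell$ contributions $\Sigma_2$, and (importantly) the part of the convolution where $r$ ranges over $P_H^c$ rather than all of $\Lambda^*$, call it $\Sigma_1$ — are the error operator $\cE_{[\cK, A]} = \sum_{i=1}^3(\Sigma_i + \text{h.c.})$. This mirrors exactly the structure of Proposition \ref{prop:commKD} (Eqs. \eqref{eq:commKD1}, \eqref{eq:commKD2}), with the quartic operator $D_1$ replaced by the cubic $A_1$; the only structural difference is that one index is now a $b$-operator moving a particle to/from the condensate instead of a fourth $a$-operator.

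Next I would estimate $\Sigma_1, \Sigma_2, \Sigma_3$. For $\Sigma_1$ (the low-momentum-$r$ piece, $r \in P_H^c \cup \{0\}$), I would follow the dyadic decomposition used in \eqref{eq:commKD3}: split the sum over $|r| \le N^{3\beta/2}$, then $N^{j\beta/2} \le |r| \le N^{(j+1)\beta/2}$ for $j = 3, \dots, 2\lfloor m_0\rfloor - 1$, then $N^{\lfloor m_0\rfloor\beta} \le |r| \le N^\alpha$. On the lowest block, Cauchy--Schwarz and $|(\widehat V(\cdot/N^{1-\kappa})*\widehat f_N)(r)| \le C$ give a bound by $C N^{3\beta/4 + \kappa}\cK_{\le N^{3\beta/2}}$; on the $j$-th block, inserting $(\cN_{\ge \frac12 N^{j\beta/2}}+1)^{\pm 1/2}$ as in \eqref{eq:commKD31} and using the pull-through formula \eqref{eq:Ntheta-bds} produces $C N^{j\beta/4 + \beta/4 + \kappa}\|(\cN_{\geq \frac12 N^{j\beta/2}}+1)^{1/2}\xi\|\,\|\cK^{1/2}\xi\|$, which Cauchy--Schwarz turns into $\delta\cK + C\delta^{-1} N^{j\beta/2 + \beta/2 + 2\kappa}\cN_{\ge \frac12 N^{j\beta/2}}$ after absorbing a power of $N^{-\beta/2}$ gained from $b^*$ being an $a^*a$ bookkeeping (here, unlike in $\cE_{[\cK,D]}$, there is one fewer low-momentum sum, so the $\cK_L$ factor in \eqref{eq:propcommKDerror} becomes a bare power of $N$ — this is the origin of the $\cN_{\ge\cdot}$ rather than $\cK_L \cN_{\ge\cdot}$ in \eqref{eq:propcommKAerror}). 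For $\Sigma_2$ I would use Lemma \ref{sceqlemma}(i) for $\lambda_\ell \sim N^{3\kappa - 3}$ and the standard bound $\|\widehat\chi_\ell * \widehat f_N\|_\infty \le C$, giving (after Cauchy--Schwarz, the $|r|^{-2}$ decay of the effective coefficient, and $\alpha > 3\beta + 2\kappa$) a bound by $C N^{-3\beta/2}\cK$ up to negligible remainders. For $\Sigma_3$ I would exploit $|r\cdot p|\,|\eta_r| \le C N^\kappa |p|/|r|$ and $|r|/|r+p| \le 2$ for $r \in P_H$, $p \in P_L$, yielding a bound by $C N^{-1/2}\|\cK^{1/2}\xi\|\,\|\cK_L^{1/2}(\cN_{\ge\cdot}+1)^{1/2}\xi\| \le C N^{-3\beta/2}\cK$ plus controllable terms, as in \eqref{eq:commKD6}. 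Summing and using $\cN_{\ge \frac12 N^\alpha} \le 4 N^{-2\alpha}\cK$ where needed gives \eqref{eq:propcommKAerror}; the $m_0 N^{\alpha + 2\kappa}$ term comes from the $\lfloor m_0\rfloor - 2$ normally-ordered commutator corrections $\sum_q a_q a_q^*$ type contributions that each cost $O(N^{\alpha+2\kappa})$.

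The main obstacle I anticipate is twofold. First, bookkeeping the normal-ordering: when I commute $\cK$ through $A_1$ and re-expand via the scattering equation, I must carefully track which summation ranges survive as ``$u \in \Lambda_+^*$ full'' versus which get cut to ``$r \in P_H$'' or ``$p \in P_L$''; the mismatch between the domain of the main term in \eqref{eq:propcommKA} ($u \in \Lambda_+^*$, $p \in P_L$) and the natural domain after commutation ($r \in P_H$, $p \in P_L$) is precisely what feeds the $\Sigma_1$ error, and getting the constants and cutoff endpoints right (e.g. that $|r+p| \ge N^\alpha - N^\beta$, that $P_H^c$-contributions to the convolution survive) requires care — this is where a wrong bound would silently break the exponent arithmetic in Prop.~\ref{prop:JN}. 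Second, verifying that every error term is genuinely dominated by the right-hand side of \eqref{eq:propcommKAerror} under the hypotheses \eqref{eq:alphabeta}: in particular one must check $3\beta/4 + \kappa$, the $j\beta/2 + \beta/2 + 2\kappa$ exponents, $\alpha + 2\kappa$, and $m_0 N^{\alpha+2\kappa}$ are all consistent with what Corollary-type applications downstream can absorb, and that the $\delta\cK$ term can be made small. None of the individual estimates is hard — they are all Cauchy--Schwarz with pull-through formulas, exactly parallel to Section~\ref{sec:MN} — but the volume of cases and the need to match \eqref{eq:propcommKAerror} term-by-term is the real work. I would organize the proof to mirror Proposition~\ref{prop:commKD} as closely as possible, reusing its dyadic bookkeeping verbatim wherever the cubic-versus-quartic distinction does not matter.
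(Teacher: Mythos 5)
Your proposal is correct and follows essentially the same route as the paper's proof: the same explicit computation $[\cK,b^*_{r+p}a^*_{-r}a_p]=(2r^2+2r\cdot p)b^*_{r+p}a^*_{-r}a_p$, the same substitution of $2r^2\eta_r$ via the scattering equation to produce the main term, and the same three error operators (the $r\in P_H^c$ convolution piece with the dyadic decomposition in $|r|$, the $\lambda_\ell\widehat\chi_\ell$ piece, and the $r\cdot p$ piece), each bounded by Cauchy--Schwarz exactly as in the paper's $\Pi_1,\Pi_2,\Pi_3$. Two small corrections to your write-up: the identification with the main term of \eqref{eq:propcommKA} is $u=r$ (not $u=r+p$), and your intermediate bound for the $r\cdot p$ term must carry the factor $\|\eta_H\|\leq CN^{\kappa-\alpha/2}$ coming from $\sum_{r\in P_H}\eta_r^2$, i.e. $CN^{\kappa-\alpha/2-1/2}\|\cK^{1/2}\xi\|\,\|\cK_L^{1/2}(\cN_{\geq\frac12 N^\alpha}+1)^{1/2}\xi\|$ as in the paper's bound for $\Pi_3$ --- with the bare $N^{-1/2}$ you wrote, the final step $\leq CN^{-3\beta/2}\cK$ would not close.
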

\begin{proof} We compute $[\cK, A]  = [\cK, A_1] +\text{h.c.}$ with $A_1$ defined in \eqref{eq:defA}, and we find
		\[\begin{split} 
		[\cK, A_1] + \text{h.c.}&= \frac{1}{\sqrt N } \sum_{r\in P_H, v\in P_L }2r^2\eta_r b^*_{r+v}a^*_{-r}a_v \\
		&\hspace{0.4cm} + \frac{2}{\sqrt N } \sum_{r\in P_H, v\in P_L } r\cdot v\,\eta_r b^*_{r+v}a^*_{-r}a_v +\text{h.c.}
		\end{split}\]
Using the scattering equation \eqref{eq:eta-scat0}, this implies that
		\begin{equation}\label{eq:commKA1}
		\begin{split}
		[\cK, A_1] + \text{h.c.}&= -\frac{1}{\sqrt N} \sum_{\substack{ r\in \Lambda_+^*, v\in P_L:\\ v+r\neq 0 }} N^{\kappa} (\widehat{V}(./N^{1-\kappa})\ast  \widehat{f}_{N})(r)  b^*_{v+r} a^*_{-r}a_v\\
		&\hspace{0.4cm} + \Pi_{1} +\Pi_{2} + \Pi_{3}  +\text{h.c.}, 
		\end{split}
		\end{equation}
where
		\begin{equation}\label{eq:commKA2}
		\begin{split}
		\Pi_{1} =&\; \frac{1}{\sqrt N} \sum_{\substack{ r\in P_H^c, v\in P_L:\\ v+r\neq 0 }} N^{\kappa} (\widehat{V}(./N^{1-\kappa})\ast  \widehat{f}_{N})(r)  b^*_{v+r} a^*_{-r}a_v  , \\ 
		\Pi_{2} =&\; \frac{1}{\sqrt N} \sum_{\substack{ r\in P_H, v\in P_L:\\ v+r\neq 0 }} N^{3-2\kappa}\lambda_\ell (\widehat{\chi}_\ell\ast \widehat{f}_N)(r)  b^*_{v+r} a^*_{-r}a_v  , \\ 
		\Pi_{3} =&\;  \frac{2}{\sqrt N } \sum_{r\in P_H, v\in P_L } r\cdot v\,\eta_r b^*_{r+v}a^*_{-r}a_v.
		\end{split}
		\end{equation}
Since the first term on the r.h.s. of \eqref{eq:commKA1} appears explicitly in \eqref{eq:propcommKA}, let us explain how to control the operators $ \Pi_{1}, \Pi_{2}$ and $\Pi_{3}$, defined in \eqref{eq:commKA2}.
		
To bound the operator $ \Pi_{1}$, we note first that Lemma \ref{sceqlemma} $ii)$ implies that 
		\[ \big|(\widehat{V}(./N^{1-\kappa})\ast  \widehat{f}_{N})(r)\big|\leq (\widehat{V}(./N^{1-\kappa})\ast  \widehat{f}_{N})(0) = \int dx \;V(x)f_\ell(x)\leq C. \]
Given $\xi\in\cF_+^{\leq N}$, we apply (\ref{eq:Ntheta-bds}) and estimate $ \Pi_1$ by
		\begin{equation}\begin{split}\label{eq:commKA3}
		&\bigg| \frac{1}{\sqrt N} \sum_{\substack{ r\in P_H^c, v\in P_L:\\ v+r\neq 0 }} N^{\kappa} (\widehat{V}(./N^{1-\kappa})\ast  \widehat{f}_{N})(r)  \langle\xi,  b^*_{v+r} a^*_{-r}a_v \xi\rangle\bigg|\\
		&\leq \frac{CN^\kappa}{\sqrt N}  \sum_{\substack{r\in\Lambda^*_+, v\in P_L: \\  |r|\leq N^{3\beta/2},  v+r\neq 0 }} \|a_{-r}  a_{v+r}\xi\| \|a_v \xi \|  \\
		&\hspace{0.4cm}+ \frac{CN^\kappa}{\sqrt N} \sum_{j=3}^{2\lfloor m_0\rfloor-1} \!\!\!\!\sum_{\substack{r\in P_H^c, v\in P_L: \\ N^{j\beta/2}\leq |r|\leq N^{(j+1)\beta/2}, \\ v+r\neq 0 }} \hspace{-0.75cm}\|a_{-r} (\cN_{\geq \frac12 N^{j\beta/2}}+1)^{-1/2} a_{v+r}\xi\| \|a_v(\cN_{\geq \frac12 N^{j\beta/2}}+1)^{1/2}\xi \| \\
		&\hspace{0.4cm}+ \frac{CN^\kappa}{\sqrt N} \sum_{\substack{r\in P_H^c, v\in P_L: \\ N^{\lfloor m_0 \rfloor \beta}\leq |r|\leq N^{\alpha}, \\ v+r\neq 0 }} \hspace{-0.75cm}\|a_{-r} (\cN_{\geq \frac12 N^{\lfloor m_0 \rfloor\beta}}+1)^{-1/2} a_{v+r}\xi\| \|a_v(\cN_{\geq \frac12 N^{\lfloor m_0 \rfloor\beta}}+1)^{1/2}\xi \| 
		\end{split}\end{equation}	
Cauchy-Schwarz implies that the first term on the r.h.s. of \eqref{eq:commKA3} is bounded by
		\[\frac{CN^\kappa}{\sqrt N}  \sum_{\substack{r\in\Lambda^*_+, v\in P_L: \\ 0\leq  |r|\leq N^{3\beta/2},  v+r\neq 0 }} \|a_{-r}  a_{v+r}\xi\| \|a_v \xi \|\leq CN^{3\beta/4+\kappa} \langle\xi, \cK_{\leq N^{3\beta/2} } \xi\rangle. \]
To bound the second contribution on the r.h.s. of \eqref{eq:commKA3}, we estimate
		\begin{equation}\label{eq:commKA31}\begin{split}
		&\frac{CN^\kappa}{\sqrt N} \sum_{j=3}^{2\lfloor m_0\rfloor-1} \!\!\!\!\sum_{\substack{r\in P_H^c, v\in P_L: \\ N^{j\beta/2}\leq |r|\leq N^{(j+1)\beta/2}, \\ v+r\neq 0 }} \hspace{-0.75cm}\|a_{-r} (\cN_{\geq \frac12 N^{j\beta/2}}+1)^{-1/2} a_{v+r}\xi\| \|a_v(\cN_{\geq \frac12 N^{j\beta/2}}+1)^{1/2}\xi \| \\
		&= \frac{CN^\kappa}{\sqrt N} \sum_{j=3}^{2\lfloor m_0\rfloor-1} \!\!\!\!\sum_{\substack{r\in P_H^c, v\in P_L: \\ N^{j\beta/2}\leq |r|\leq N^{(j+1)\beta/2}, \\ v+r\neq 0 }} \Big(|v+r|\|a_{-r} (\cN_{\geq \frac12 N^{j\beta/2}}+1)^{-1/2} a_{v+r}\xi\| \Big)\\
		&\hspace{6cm}\times\Big(|v+r|^{-1} \|a_v(\cN_{\geq \frac12 N^{j\beta/2}}+1)^{1/2}\xi \|\Big) \\
		&\;\leq  C \sum_{j=3}^{2\lfloor m_0\rfloor-1}N^{ j\beta/4 + \beta/4+\kappa} \| \cK^{1/2}\xi\|  \| (\cN_{\geq \frac12 N^{j\beta/2}}+1)^{1/2}\xi\|.
		\end{split}\end{equation}
Similarly, we find that 
		\begin{equation}\label{eq:commKA32}\begin{split}
		&\frac{CN^\kappa}{\sqrt N} \sum_{\substack{r\in P_H^c, v\in P_L: \\ N^{\lfloor m_0 \rfloor \beta}\leq |r|\leq N^{\alpha}, \\ v+r\neq 0 }} \hspace{-0.75cm}\|a_{-r} (\cN_{\geq \frac12 N^{\lfloor m_0 \rfloor\beta}}+1)^{-1/2} a_{v+r}\xi\| \|a_v(\cN_{\geq \frac12 N^{\lfloor m_0 \rfloor\beta}}+1)^{1/2}\xi \|  \\
		&\;\leq  C N^{ \alpha/2 +\kappa} \| \cK^{1/2}\xi\|  \| (\cN_{\geq \frac12 N^{\lfloor m_0 \rfloor \beta}}+1)^{1/2}\xi\|.
		\end{split}\end{equation}

Collecting the previous three bounds and using $ |ab|\leq |a|^2+|b|^2$ shows that
		\begin{equation}\label{eq:commKA4}
		\begin{split}
		\pm \Pi_1\leq &\; \delta  \cK   + CN^{3\beta/4+\kappa}\cK_{\leq N^{3\beta/2}} + C \delta^{-1}\sum_{j=3}^{2\lfloor m_0\rfloor -1}N^{ j\beta/2 + \beta/2+2\kappa}\cN_{\geq \frac12 N^{j\beta/2}} \\
		&\; + C \delta^{-1}N^{ \alpha +2\kappa}\cN_{\geq \frac12 N^{\lfloor m_0 \rfloor \beta}} + C \delta^{-1} m_0 N^{\alpha +2\kappa}
		\end{split}
		\end{equation}
for some constant $C>0$ and all $\delta>0$.

Next, let us switch to $ \Pi_2$ and $\Pi_3$, defined in \eqref{eq:commKA2}. From Lemma \ref{sceqlemma} $i)$, we recall that $ |N^{3-2\kappa}\lambda_\ell | \leq C N^\kappa $. Moreover, with $ (\widehat{\chi}_\ell\ast \widehat{f}_N)(r) =  \widehat{\chi_\ell}(r) + N^{-1}\eta_r$ and the representation
		\[\widehat{\chi_\ell}(r) = \frac{4\pi}{|r|^2} \Big(\frac{\sin(\ell|r|)}{|r|}-\ell\cos(\ell |r|) \Big),\]
we find for all $r\in\Lambda_+^*$ that
		\begin{equation}\label{eq:chiastfNbnd}
		|(\widehat{\chi}_\ell\ast \widehat{f}_N)(r)| \leq C (1+N^{\kappa-1})|r|^{-2}\leq C|r|^{-2}.
		\end{equation}
Consequently, Cauchy-Schwarz implies
		\begin{equation}\label{eq:commKA5}
		\begin{split}
		|\langle\xi, \Pi_2\xi\rangle| \leq &\; \frac{CN^{\kappa} }{\sqrt N} \sum_{\substack{ r\in P_H, v\in P_L:\\ v+r\neq 0 }} |r|^{-2} \|  a_{v+r} (\cN_{\geq \frac12 N^{\alpha} }+1)^{-1/2}a_{-r} \xi\| \|a_v (\cN_{\geq \frac12 N^{\alpha}}+1)^{1/2}\xi\|\\
		\leq &\;CN^{\kappa-\alpha/2} \langle\xi, (\cN_{\geq \frac12 N^{\alpha}}+1)\xi\rangle.
		\end{split}
		\end{equation}
Similarly, we obtain
		\begin{equation}\label{eq:commKA6}
		\begin{split}
		|\langle\xi, \Pi_3\xi\rangle| \leq \;&  \frac{C}{\sqrt N } \sum_{r\in P_H, v\in P_L } |r| | v|  |\eta_r| \|  a_{v+r} (\cN_{\geq \frac12 N^{\alpha} }+1)^{-1/2}a_{-r} \xi\| \|a_v (\cN_{\geq \frac12 N^{\alpha}}+1)^{1/2}\xi\|\\
		\leq &\; CN^{\kappa-\alpha/2-1/2} \| \cK^{1/2}\xi\|  \| \cK_L^{1/2}(\cN_{\geq \frac12 N^{\alpha}}+1)^{1/2} \xi\|. 
		\end{split}
		\end{equation}
Combining \eqref{eq:commKA4}, \eqref{eq:commKA5} and \eqref{eq:commKA6} and defining $ \cE_{[\cK,A]} = \sum_{i=1}^3(\Pi_i+\text{h.c.})$ proves the claim.
\end{proof}

With the help of Prop. \ref{prop:commKA}, we obtain a rough bound for the action of $e^A$ on $\cK$. 
\begin{cor}\label{cor:KgrowA}  Assume the exponents $\alpha,\beta$ satisfy (\ref{eq:alphabeta}). Let $ m_0\in\mathbb{R}$ be such that $m_0 \beta =\alpha$ ($3 < m_0 < 5$ from (\ref{eq:alphabeta})). Then, there exists a constant $C>0$ such that the operator inequality
		\begin{equation} \label{eq:corKgrowA}
		\begin{split}
		e^{-sA} \cK e^{sA} &\leq C N^{3\beta/4+\kappa}\cK+  C \cV_N +CN^{\kappa } \cN_+ + C m_0 N^{\alpha +2\kappa}.
		\end{split}\end{equation}	
for all $ s\in [-1;1]$ and for all $N\in\NN$ sufficiently large.
\end{cor}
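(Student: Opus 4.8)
The plan is to run Gronwall's lemma, exactly as in the proofs of Corollary \ref{cor:VNgrowA} and Corollary \ref{cor:KgrowD}. Fix $\xi\in\cF_+^{\leq N}$, put $\varphi_\xi(s)=\langle\xi, e^{-sA}\cK e^{sA}\xi\rangle$, and compute $\partial_s\varphi_\xi(s)=\langle e^{sA}\xi,[\cK,A]e^{sA}\xi\rangle$. I would then insert the identity \eqref{eq:propcommKA} of Proposition \ref{prop:commKA}, which splits $[\cK,A]$ into the main cubic operator
\[ -\frac{1}{\sqrt N}\sum_{\substack{u\in\Lambda_+^*,\,p\in P_L:\\ p+u\neq0}}N^{\kappa}(\widehat V(./N^{1-\kappa})\ast\widehat f_N)(u)\big(b^*_{p+u}a^*_{-u}a_p+\text{h.c.}\big) \]
plus the self-adjoint error $\cE_{[\cK,A]}$ controlled by \eqref{eq:propcommKAerror}. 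The main term I would bound by passing to position space and using $\sup_x|f_N(x)|\leq1$, along the same lines as the corresponding main term in the proof of Corollary \ref{cor:VNgrowA} (there with $\check\eta/N$, here with the uniformly bounded $f_N$), splitting the interaction kernel asymmetrically between $N^{2-2\kappa}V$ and $N^{3-3\kappa}V$ factors; this produces $\pm(\text{main term})\leq\delta\cK+C\delta^{-1}\cV_N+C\delta^{-1}N^{\kappa}\cN_+$ for every $\delta>0$.

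Next I would estimate $\partial_s\varphi_\xi(s)$ by distributing the resulting operator bounds over $\psi=e^{sA}\xi$ and invoking the already-established growth lemmas: Lemma \ref{lm:NresgrowA} for $\langle\psi,(\cN_++1)\psi\rangle$ and $\langle\psi,(\cN_{\geq\frac12 N^{\gamma}}+1)\psi\rangle$, Lemma \ref{lm:KresgrowA} for $\langle\psi,\cK_{\leq N^{3\beta/2}}\psi\rangle$, and Corollary \ref{cor:VNgrowA} for $\langle\psi,\cV_N\psi\rangle$. The essential bookkeeping point is to keep on the ``Gronwall side'' only the terms proportional to $\cK$ with an $s$-dependent, $N$-uniformly bounded coefficient — namely the $\delta\cK$ coming from the main term together with the $\delta\cK$ (and the negligible $N^{-3\beta/2}\cK$) in \eqref{eq:propcommKAerror} — while every other contribution is handled as an $s$-independent source term $\langle\xi,(\text{fixed operator})\xi\rangle$. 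Using $\cN_{\geq\frac12 N^{\gamma}}\leq4N^{-2\gamma}\cK$, $\cK_L\leq N^{2\beta}\cN_+\leq N^{2\beta+1}$, and the hypotheses \eqref{eq:alphabeta} (in particular $\alpha>3\beta+2\kappa$, $3\alpha/2+2\kappa<1$, and $\kappa<3\beta/4$, the latter a consequence of $\beta>3\kappa/2$, together with $\lfloor m_0\rfloor\geq 3$ so that $2\lfloor m_0\rfloor\beta\geq 6\beta>\alpha$), one checks routinely that all these source terms are dominated by $CN^{3\beta/4+\kappa}\langle\xi,\cK\xi\rangle+C\langle\xi,\cV_N\xi\rangle+CN^{\kappa}\langle\xi,\cN_+\xi\rangle+Cm_0N^{\alpha+2\kappa}$. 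The crucial instance is the factor $N^{3\beta/4+\kappa}$ multiplying $\cK_{\leq N^{3\beta/2}}$ in \eqref{eq:propcommKAerror}: by Lemma \ref{lm:KresgrowA} one has $e^{-sA}\cK_{\leq N^{3\beta/2}}e^{sA}\leq\cK_{\leq N^{3\beta/2}}+N^{2\beta+2\kappa-\alpha-1}(\cN_{\geq\frac12 N^{\alpha}}+1)^2\leq\cK+(\text{negligible})$, so this contributes the source term $CN^{3\beta/4+\kappa}\langle\xi,\cK\xi\rangle$ and not a multiplicative Gronwall exponent. Altogether this yields $\partial_s\varphi_\xi(s)\leq C\varphi_\xi(s)+C\langle\xi,(N^{3\beta/4+\kappa}\cK+\cV_N+N^{\kappa}\cN_++m_0N^{\alpha+2\kappa})\xi\rangle$ uniformly for $s\in[0,1]$, and Gronwall's lemma combined with $\langle\xi,\cK\xi\rangle\leq N^{3\beta/4+\kappa}\langle\xi,\cK\xi\rangle$ gives \eqref{eq:corKgrowA} for $s\in[0,1]$; the range $s\in[-1,0]$ follows by replacing $A$ with $-A$ throughout (Proposition \ref{prop:commKA} applies verbatim to $[\cK,-A]$).

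The main obstacle I anticipate is precisely the treatment of the $N^{3\beta/4+\kappa}\cK_{\leq N^{3\beta/2}}$ term: bounding it naively by $\cK_{\leq N^{3\beta/2}}\leq\cK$ would force a Gronwall exponent of order $N^{3\beta/4+\kappa}$ and hence a super-exponential constant, which is incomparably worse than the polynomial factor claimed in \eqref{eq:corKgrowA}. The point to exploit is that $\cK_{\leq N^{3\beta/2}}$ is essentially invariant under the cubic conjugation $e^{sA}$, which is exactly what Lemma \ref{lm:KresgrowA} provides, so that this term contributes only additively. A secondary, purely technical matter is verifying that the position-space Cauchy--Schwarz bound on the main cubic operator closes with the stated powers; this is routine once the structural analogy with the $[\cV_N,A]$ computation in Corollary \ref{cor:VNgrowA} is made explicit, the only difference being the harmless substitution of $\check\eta/N$ (bounded by a constant) by $f_N$ (bounded by $1$).
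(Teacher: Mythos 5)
Your proposal is correct and follows essentially the same route as the paper's proof: Gronwall's lemma applied to $\varphi_\xi(s)=\langle\xi,e^{-sA}\cK e^{sA}\xi\rangle$, the decomposition of $[\cK,A]$ from Proposition \ref{prop:commKA}, the position-space bound on the main cubic term using $\sup_x|f_N(x)|\leq 1$ (in exact analogy with \eqref{eq:corVN1}), the error bound \eqref{eq:propcommKAerror} with a fixed $\delta$, and the growth estimates of Lemmas \ref{lm:NresgrowA}, \ref{lm:KresgrowA} and Corollary \ref{cor:VNgrowA} together with $\cN_{\geq\Theta}\leq\Theta^{-2}\cK$ to turn the conjugated error terms into additive source terms. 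In particular you correctly identify the one genuinely delicate point — that $N^{3\beta/4+\kappa}\cK_{\leq N^{3\beta/2}}$ must be handled via Lemma \ref{lm:KresgrowA} as an additive contribution rather than absorbed into the Gronwall exponent — which is exactly how the paper proceeds.
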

\begin{proof}
We apply Gronwall's lemma to $\varphi_{\xi}(s) = \langle \xi, e^{-sA} \cK e^{sA}\xi\rangle$ for some $\xi \in \cF_+^{\leq N}$ with $ \|\xi\|=1$, which has derivative
		\[\partial_s \varphi_\xi(s) = \langle \xi, e^{-sA} [\cK, A] e^{sA}\xi\rangle.\] 
We use the identity \eqref{eq:propcommKA} and bound
		\[\begin{split}
		\pm \frac{1}{\sqrt N} \sum_{\substack{ u\in \Lambda_+^*, p\in P_L:\\ p+u\neq 0 }} N^{\kappa} (\widehat{V}(./N^{1-\kappa})\ast  \widehat{f}_{N})(u) \big( b^*_{p+u} a^*_{-u}a_p +\text{h.c.}\big) \leq C \cV_N + CN^\kappa  (\cN_++1).
		\end{split}\]
This estimate can be proved in the same way as \eqref{eq:corVN1} by observing $ \sup_{x\in\Lambda} |f_N(x)|\leq 1$, by Lemma \ref{sceqlemma} $ii)$. Together with \eqref{eq:propcommKA}, \eqref{eq:propcommKAerror} (choosing $ \delta=1$), Corollary \ref{cor:VNgrowA}, Lemma \ref{lm:NresgrowA} and Lemma \ref{lm:KresgrowA}, we obtain for sufficiently large N
		\[\begin{split}
		\varphi_{\xi}(s) \leq &\; C \varphi_{\xi}(s)  + C\langle\xi, \cV_N\xi\rangle + C N^{\kappa}\langle\xi, (\cN_++1)\xi\rangle + CN^{3\beta/4+\kappa}\langle \xi, \cK_{\leq N^{3\beta/2}} \xi\rangle\\
		&\; + C \sum_{j=3}^{2\lfloor m_0\rfloor -1}N^{ j\beta/2 + \beta/2+2\kappa} \langle\xi, \cN_{\geq \frac12 N^{j\beta/2}} \xi\rangle + C N^{ \alpha +2\kappa} \langle\xi, \cN_{\geq \frac12 N^{\lfloor m_0 \rfloor \beta}}\xi\rangle + C N^{\alpha +2\kappa}\\
		 \leq &\; C \varphi_{\xi}(s)  + C\langle\xi, \cV_N\xi\rangle + C N^{\kappa}\langle\xi, \cN_+\xi\rangle + CN^{3\beta/4+\kappa}\langle \xi, \cK_{\leq N^{3\beta/2}} \xi\rangle\\
		&\; + C N^{ 2\kappa -\beta } \langle\xi, \cK \xi\rangle + C N^{ 2\kappa-\beta} \langle\xi, \cK \xi\rangle + C N^{\alpha +2\kappa}\\
		 \leq &\; C \varphi_{\xi}(s) +C N^{3\beta/4+\kappa}\langle \xi, \cK  \xi\rangle + C\langle\xi, \cV_N\xi\rangle + C N^{\kappa}\langle\xi, \cN_+\xi\rangle  + C N^{\alpha +2\kappa},
		\end{split}\]
where we used as usual the operator inequality $ \cN_{\geq \Theta}\leq \Theta^{-2} \cK $ and, moreover, that $ \alpha -2\lfloor m_0\rfloor \beta \leq ( 1-\lfloor m_0 \rfloor)\beta \leq -\beta$ for $ m_0\beta=\alpha$ and $ \alpha>3\beta+2\kappa\geq 3\beta$ (from (\ref{eq:alphabeta})). The claim follows now from Gronwall's lemma.
\end{proof}

\subsection{Action of Cubic Renormalization on Excitation Hamiltonian} 

In this subsection, we are going to determine the main contributions to the excitation Hamiltonian $\cJ_N= e^{-A} \cG_{N}^\text{eff} e^{A}$. From (\ref{eq:GNeff}), and recalling the definition of the sets $P_H = \{ p \in \L^*_+ : |p| \geq N^\alpha \}, P_L = \{ p \in \L^*_+ : |p| \leq N^\beta \}$, we can decompose 
\[ \cJ_N = \cJ_N^{(0)} +\cJ_N^{(2)} +\cJ_N^{(3)} +\cJ_N^{(4)}   \]
where the self-adjoint operators $ \cJ_N^{(i)}, i=0, 2, 3, 4,$ are defined by
\begin{equation}\begin{split}\label{eq:JN0to4} 
\cJ_N^{(0)} =&\;4\pi \frak{a}_0 N^{\kappa} e^{-A}(N-\cN_+) e^A+ \big[\widehat V(0)-4\pi \frak{a}_0\big]N^{\kappa} e^{-A}\cN_+ (1-\cN_+/N) e^A, \\
 \cJ_N^{(2)}=&\; N^{\kappa}\widehat V(0)\sum_{p\in P_H^c}   e^{-A}b^*_pb_p e^{A} + 4\pi \frak{a}_0N^{\kappa} \sum_{p\in P^c_H}  e^{-A}\big[ b^*_p b^*_{-p} + b_p b_{-p} \big] e^A\\
\cJ_N^{(3)} =&\;   \frac{1}{\sqrt N}\sum_{\substack{p\in\Lambda_+^*, q\in P_L:\\ p+q\neq 0}} N^{\kappa}\widehat V(p/N^{1-\kappa}) e^{-A}\big[ b^*_{p+q}a^*_{-p}a_q+ \text{h.c.}\big]e^A,\\
\cJ_N^{(4)}=&\; e^{-A}\cH_N e^A   = e^{-A}\cK e^A +  e^{-A}\cV_N e^A.  
\end{split}\end{equation} 

\subsubsection{Analysis of $ \cJ_N^{(0)} $}\label{sec:JN0}
The goal of this section is to determine the main contributions to $ \cJ_N^{(0)}$, which was defined in equation \eqref{eq:JN0to4}. We recall that 
		\[\cJ_N^{(0)} = 4\pi \frak{a}_0 N^{\kappa} e^{-A}(N-\cN_+) e^A+ \big[\widehat V(0)-4\pi \frak{a}_0\big]N^{\kappa} e^{-A}\cN_+ (1-\cN_+/N) e^A. \]
In order to determine the main contributions to $ \cJ_N^{(0)}$, we first prove a slight generalization of \cite[Lemma 8.6]{BBCS}. The lemma will also be useful in the following Section \ref{sec:JN2}. \begin{lemma}\label{lm:FinftycNopA} Assume $\alpha,\beta$ satisfy (\ref{eq:alphabeta}). Let $ k\in\NN_0$ and let $ F = (F_p)_{p\in\Lambda_+^*} \in \ell^\infty(\Lambda_+^*) $. Then, there exists $C>0$ s.t.
		\begin{equation}\label{eq:FinftycNopA}
		\pm \bigg(\sum_{p\in\Lambda_+^*} F_p e^{-A} a^*_pa_p\,\cN_+^k e^{A}   - \sum_{p\in\Lambda_+^*} F_p a^*_pa_p \,\cN_+^k \bigg)\leq C N^{-3\beta/2}\|F\|_\infty (\cN_{\geq \frac12 N^\alpha}+1)(\cN_++1)^k
		\end{equation}
for all $N\in\NN$ sufficiently large.
\end{lemma}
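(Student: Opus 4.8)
The plan is to reduce the statement to a single commutator estimate. Write $\cN_F = \sum_{p \in \Lambda_+^*} F_p a_p^* a_p$, so that the claim concerns $e^{-A} \cN_F \cN_+^k e^A - \cN_F \cN_+^k$. Since $A = A_1 - A_1^*$ is antisymmetric and $\cN_F \cN_+^k$ is self-adjoint, the operator $[\cN_F \cN_+^k, A]$ is self-adjoint and
\[ e^{-A} \cN_F \cN_+^k e^A - \cN_F \cN_+^k = \int_0^1 e^{-sA} \big[ \cN_F \cN_+^k, A \big] e^{sA} \, ds . \]
I will establish the operator inequality
\[ \pm \big[ \cN_F \cN_+^k, A \big] \leq C \| F \|_\infty N^{\kappa - \alpha/2} (\cN_{\geq \frac12 N^\alpha} + 1)(\cN_+ + 1)^k \qquad \text{on } \cF_+^{\leq N}. \]
Granting this, I evaluate the associated quadratic form on $e^{sA} \xi$ instead of $\xi$, use $N^{\kappa - \alpha/2} \leq N^{-3\beta/2}$ (a consequence of $\alpha > 3\beta + 2\kappa$ in (\ref{eq:alphabeta})), and control the conjugated right-hand side by Lemma \ref{lm:NresgrowA} applied with $\gamma = \alpha$, $c = 1/2$ and $m = 1$, i.e. $e^{-sA} (\cN_{\geq \frac12 N^\alpha} + 1)(\cN_+ + 1)^k e^{sA} \leq C (\cN_{\geq \frac12 N^\alpha} + 1)(\cN_+ + 1)^k$ (the two commuting weights may be written in any order). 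Integrating in $s \in [0;1]$ then gives (\ref{eq:FinftycNopA}).

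It remains to prove the commutator bound. Splitting $[\cN_F \cN_+^k, A] = [\cN_F, A] \cN_+^k + \cN_F [\cN_+^k, A]$, which is legitimate because $[\cN_F, \cN_+] = 0$, I compute both pieces explicitly. Since $[\cN_F, b_q^\sharp] = \pm F_q b_q^\sharp$ and $[\cN_F, a_q^\sharp] = \pm F_q a_q^\sharp$, the commutator $[\cN_F, A_1]$ has exactly the structure of $A_1$ but with $\eta_r$ replaced by $\eta_r (F_{r+p} + F_{-r} - F_p)$, of modulus $\leq 3 \| F \|_\infty |\eta_r|$; likewise, as in (\ref{eq:NresgrowlemA3}), $[\cN_+^k, A_1] = \tfrac{k}{\sqrt N} \sum_{r \in P_H, p \in P_L} \eta_r b_{r+p}^* a_{-r}^* a_p (\cN_+ + \Theta(\cN_+) + 1)^{k-1}$ for some $\Theta : \NN \to (0;1)$. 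Each of these cubic operators is estimated exactly as in (\ref{eq:NresgrowlemA4})--(\ref{eq:NresgrowlemA7}): Cauchy--Schwarz pairing $a_{r+p} a_{-r}$ against $a_p$, the inclusions $|r|, |r+p| \geq N^\alpha - N^\beta \geq \tfrac12 N^\alpha$ together with the pull-through bounds (\ref{eq:Ntheta-bds}) to insert $(\cN_{\geq \frac12 N^\alpha} + 1)^{\pm 1/2}$, the estimate $\| \eta_H \| \leq C N^{\kappa - \alpha/2}$ from (\ref{eq:etaHL2}), and finally the prefactor $N^{-1/2}$ to absorb one surplus power of $\cN_+$ through $\cN_+ \leq N$ on $\cF_+^{\leq N}$. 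In $[\cN_F, A_1] \cN_+^k$ the factor $(\cN_+ + 1)^k$ comes from the explicit $\cN_+^k$; in $\cN_F [\cN_+^k, A_1]$ the left factor $\cN_F \leq \| F \|_\infty \cN_+$ supplies one power of $\cN_+$ and $(\cN_+ + \Theta + 1)^{k-1}$ the remaining $k-1$. The hermitian conjugates $[\cN_F, A_1^*]$ and $[\cN_+^k, A_1^*]$ are treated by taking adjoints, and the non-normally-ordered contributions produced when commuting through $a_p^\sharp$ are of lower order because $\eta \in \ell^1(\Lambda_+^*)$.

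The main difficulty here is combinatorial rather than conceptual: one must organise the Cauchy--Schwarz so that the output carries only a \emph{single} power of $\cN_{\geq \frac12 N^\alpha}$ and \emph{exactly} $(\cN_+ + 1)^k$ — this is what forces the particular pairing above and the use of $\cN_+ \leq N$, a trade affordable only because of the $N^{-1/2}$ in front of $A_1$, whereas the cruder pairing would leave a spurious half-power of $\cN_+$. Beyond this, the argument is a direct transcription of the estimates already carried out in the proof of Lemma \ref{lm:NresgrowA}, the only new feature being the harmless bounded weights $F_{r+p} + F_{-r} - F_p$ attached to the coefficients.
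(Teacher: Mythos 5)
Your proposal is correct and follows essentially the same route as the paper: a Duhamel expansion of the conjugation, the explicit computation of $\big[\sum_p F_p a_p^*a_p\,\cN_+^k, A_1\big]$ into cubic (and one quintic) pieces carrying the weights $F_{r+v}+F_{-r}-F_v$, Cauchy--Schwarz with the pull-through bounds (\ref{eq:Ntheta-bds}) and $\|\eta_H\|\leq CN^{\kappa-\alpha/2}\leq CN^{-3\beta/2}$, and an application of Lemma \ref{lm:NresgrowA} to the conjugated right-hand side before integrating in $s$. The one loose point is the claim that the left factor $\sum_p F_p a_p^*a_p\leq \|F\|_\infty\cN_+$ ``supplies one power of $\cN_+$'' in the term $\sum_p F_p a_p^*a_p\,[\cN_+^k,A_1]$: one cannot multiply an operator inequality for the cubic commutator by this non-positive, non-commuting factor from the left; instead, as in the paper's treatment of its quintic term, the extra $a_p$ must be distributed into both halves of the Cauchy--Schwarz so that $\sum_p\|a_p\,\cdot\,\|^2=\|\cN_+^{1/2}\,\cdot\,\|^2$ produces that power --- a routine repair that does not affect the conclusion.
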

\begin{proof}
We compute that
		\[\begin{split}
		 &\sum_{p\in\Lambda_+^*} F_p e^{-A} a^*_pa_p\,\cN_+^k e^{A}   - \sum_{p\in\Lambda_+^*} F_p a^*_pa_p \,\cN_+^k= \int_0^1ds\;\sum_{p\in\Lambda_+^*}    F_p e^{-sA} [a^*_pa_p\cN_+^k, A_1] e^{sA} +\text{h.c.}\\
		  &= \frac{1}{\sqrt N } \int_0^1ds\; \sum_{r\in P_H, v\in P_L }  (F_{r+v} + F_{-r}- F_v) \eta_r e^{-sA}  b^*_{r+v} a^*_{-r}a_v\, \cN_+^ke^{sA} \\
		  &\hspace{0.4cm} +  \frac{k}{\sqrt N } \int_0^1ds\; \sum_{r\in P_H, v\in P_L }  (F_{r+v} + F_{-r}) \eta_r e^{-sA}   b^*_{r+v} a^*_{-r}a_v (\cN_+ + \Theta(\cN_+))^{k-1} e^{sA} \\
		  &\hspace{0.4cm} + \frac{k}{\sqrt N }  \int_0^1ds\; \sum_{\substack{p\in\Lambda_+^*, r\in P_H,\\ v\in P_L }} F_p\eta_r e^{-sA}   b^*_{r+v} a^*_{-r}a^*_p a_pa_v (\cN_+ + \Theta(\cN_+))^{k-1} e^{sA}  +\text{h.c.}\\
		 \end{split}\]
for some function $ \Theta:\NN\to (0,1)$, by the mean value theorem. Applying (\ref{eq:Ntheta-bds}), Lemma \ref{lm:NresgrowA} and Cauchy-Schwarz, the first two contributions on the r.h.s. of the last equation can be controlled by
		\[\begin{split}
		&\bigg| \frac{1}{\sqrt N } \int_0^1ds\; \sum_{r\in P_H, v\in P_L }  (F_{r+v} + F_{-r}- F_v) \eta_r \langle \xi, e^{-sA}  b^*_{r+v} a^*_{-r}a_v \cN_+^ke^{sA} \xi\rangle\bigg| \\
		&\leq \frac{\|F\|_\infty}{\sqrt N } \int_0^1ds\; \sum_{r\in P_H, v\in P_L }  \|(\cN_{\geq \frac12N^\alpha}+1)^{-1/2} a_{r+v} a_{-r} (\cN_++1)^{k/2}e^{sA}\xi\| \\
		&\hspace{6.5cm}\times| \eta_r| \| (\cN_{\geq \frac12N^\alpha}+1)^{1/2}a_v \cN_+^{k/2}e^{sA} \xi\|\\
		&\leq C N^{\kappa-\alpha/2} \|F\|_\infty \int_0^1ds\; \langle \xi, e^{-sA} (\cN_{\geq \frac12N^\alpha}+1)(\cN_++1)^{k}e^{sA}\xi\rangle \\
		&\leq C N^{-3\beta/2} \|F\|_\infty \langle \xi,  (\cN_{\geq \frac12N^\alpha}+1)(\cN_++1)^{k}\xi\rangle
		\end{split}\] 
and, analogously, 
		\[\begin{split}
		&\bigg| \frac{k}{\sqrt N } \int_0^1ds\; \sum_{r\in P_H, v\in P_L }  (F_{r+v} + F_{-r}) \eta_r \langle \xi, e^{-sA}  b^*_{r+v} a^*_{-r}a_v (\cN_+ + \Theta(\cN_+))^{k-1} e^{sA} \xi\rangle\bigg| \\
		&\leq \frac{C\|F\|_\infty}{\sqrt N } \int_0^1ds\; \sum_{r\in P_H, v\in P_L }  \|(\cN_{\geq \frac12N^\alpha}+1)^{-1/2} a_{r+v} a_{-r} (\cN_++1)^{(k-1)/2}e^{sA}\xi\| \\
		&\hspace{6.5cm}\times| \eta_r| \| (\cN_{\geq \frac12N^\alpha}+1)^{1/2}a_v (\cN_++1)^{(k-1)/2}e^{sA} \xi\|\\
		& \leq C N^{-3\beta/2} \|F\|_\infty \langle \xi,  (\cN_{\geq \frac12N^\alpha}+1) (\cN_++1)^{k-1}\xi\rangle.
		\end{split}\] 
Finally, we also bound 
		\[\begin{split}
		&\bigg| \frac{k}{\sqrt N } \int_0^1ds\; \sum_{p\in\Lambda_+^* r\in P_H, v\in P_L }  F_p \eta_r \langle \xi, e^{-sA}  b^*_{r+v} a^*_{-r}a^*_pa_pa_v (\cN_+ + \Theta(\cN_+))^{k-1} e^{sA} \xi\rangle\bigg| \\
		&\leq \frac{\|F\|_\infty}{\sqrt N } \int_0^1ds\; \sum_{p\in\Lambda_+^*, r\in P_H, v\in P_L }  \|(\cN_{\geq \frac12N^\alpha}+1)^{-1/2} a_p a_{r+v} a_{-r} (\cN_++1)^{(k-1)/2}e^{sA}\xi\| \\
		&\hspace{6cm}\times| \eta_r| \| (\cN_{\geq \frac12N^\alpha}+1)^{1/2} a_pa_v (\cN_++1)^{(k-1)/2}e^{sA} \xi\|\\
		& \leq C N^{-3\beta/2} \|F\|_\infty \langle \xi,  (\cN_{\geq \frac12N^\alpha}+1) (\cN_++1)^{k}\xi\rangle.
		\end{split}\] 
Combining the last three estimates concludes the proof of \eqref{eq:FinftycNopA}.
\end{proof}

\begin{cor}\label{cor:JN0}
Assume $\alpha,\beta$ satisfy (\ref{eq:alphabeta}). Let $ \cJ_N^{(0)}$ be defined as in \eqref{eq:JN0to4}. Then 
		\[\cJ_N^{(0)} =4\pi \frak{a}_0 N^{\kappa} (N-\cN_+) + \big[\widehat V(0)-4\pi \frak{a}_0\big]N^{\kappa} \cN_+ (1-\cN_+/N)  + \cE_{\cJ_N}^{(0)} \]
where the self-adjoint operator $ \cE_{\cJ_N}^{(0)}$ satisfies
		\[ \pm e^A  \cE_{\cJ_N}^{(0)} e^{-A}\leq C N^{\kappa-3\beta/2} (\cN_{\geq \frac12 N^\alpha}+1)\]
for all $N\in\NN$ sufficiently large.
\end{cor}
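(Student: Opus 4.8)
The plan is to reduce the statement to the commutator estimate already isolated in Lemma \ref{lm:FinftycNopA}, applied with the constant weight $F_p\equiv1$. First I would make the error term explicit: subtracting the claimed main term from the definition of $\cJ_N^{(0)}$ in \eqref{eq:JN0to4} gives
\[ \cE_{\cJ_N}^{(0)} = 4\pi\frak{a}_0 N^\kappa\big[e^{-A}(N-\cN_+)e^A - (N-\cN_+)\big] + \big[\widehat V(0)-4\pi\frak{a}_0\big]N^\kappa\big[e^{-A}\cN_+(1-\cN_+/N)e^A - \cN_+(1-\cN_+/N)\big], \]
which is manifestly self-adjoint. Conjugating from the left by $e^A$ and from the right by $e^{-A}$, and using $e^{A}(N-\cN_+)e^{-A}=N-e^A\cN_+ e^{-A}$ together with $e^{A}\cN_+(1-\cN_+/N)e^{-A}=e^A\cN_+ e^{-A}-N^{-1}e^A\cN_+^2 e^{-A}$, the operator $e^A\cE_{\cJ_N}^{(0)}e^{-A}$ becomes a fixed linear combination, with $N$-independent coefficients, of $N^\kappa\big(e^A\cN_+ e^{-A}-\cN_+\big)$ and $N^{\kappa-1}\big(e^A\cN_+^2 e^{-A}-\cN_+^2\big)$.

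Next I would control these two operators. Since $\cN_+=\sum_{p\in\Lambda_+^*}a_p^*a_p$, we have $\cN_+^{k+1}=\sum_{p\in\Lambda_+^*}a_p^*a_p\,\cN_+^k$, so Lemma \ref{lm:FinftycNopA} with $F_p\equiv1$ gives, for $k=0$ and $k=1$,
\[ \pm\big(e^{-A}\cN_+^{k+1}e^A-\cN_+^{k+1}\big)\leq C N^{-3\beta/2}(\cN_{\geq\frac12 N^\alpha}+1)(\cN_++1)^k . \]
The proof of Lemma \ref{lm:FinftycNopA} is symmetric under $A\mapsto -A$ (its Gronwall-type integration in $s$ runs over the whole interval $[-1;1]$ through Lemma \ref{lm:NresgrowA}), so the same bound holds with $e^{-A}\cdot e^A$ replaced by $e^A\cdot e^{-A}$, which is the form needed here. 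Feeding the $k=0$ bound into the first term and the $k=1$ bound into the second, and using $\cN_{\geq\frac12 N^\alpha}\leq\cN_+\leq N$ on $\cF_+^{\leq N}$ (so that $(\cN_++1)/N\leq2$) to absorb the extra power of $\cN_+$ in the $k=1$ contribution, both pieces are bounded by $C N^{\kappa-3\beta/2}(\cN_{\geq\frac12 N^\alpha}+1)$, and summing yields the claimed inequality.

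I do not expect a genuine obstacle in this corollary: all the analytic work is already contained in Lemma \ref{lm:FinftycNopA} (and, through it, in Lemma \ref{lm:NresgrowA}). The only point that deserves a line of justification is the observation that the estimate of Lemma \ref{lm:FinftycNopA} is insensitive to replacing $A$ by $-A$, which is what allows us to bound $e^A\cE_{\cJ_N}^{(0)}e^{-A}$ rather than merely $\cE_{\cJ_N}^{(0)}$ itself; this outer-conjugation form is the one that combines cleanly with the other error bounds collected in Proposition \ref{prop:JN}.
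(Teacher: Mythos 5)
Your proposal is correct and follows essentially the paper's route: the paper's proof is exactly the one-line application of Lemma \ref{lm:FinftycNopA} (with $F_p\equiv 1$, $k=0,1$) together with Lemma \ref{lm:NresgrowA}, which you have simply written out in detail. The only cosmetic difference is that you handle the outer conjugation $e^A\,\cdot\,e^{-A}$ by noting the $A\mapsto -A$ symmetry of Lemma \ref{lm:FinftycNopA}, whereas one can equivalently bound $\cE_{\cJ_N}^{(0)}$ first and then conjugate the resulting positive bound via Lemma \ref{lm:NresgrowA}; both are valid.
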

\begin{proof} The claim follows immediately from Lemma \ref{lm:FinftycNopA} and Lemma \ref{lm:NresgrowA}. 
\end{proof}


\subsubsection{Analysis of $ \cJ_N^{(2)} $}\label{sec:JN2}

In this section, we determine the main contributions to $ \cJ_N^{(2)}$, defined in \eqref{eq:JN0to4}. We recall 
		\begin{equation}\label{eq:JN2} \cJ_N^{(2)}= N^{\kappa}\widehat V(0)\sum_{p\in P_H^c}   e^{-A}b^*_pb_p e^{A} + 4\pi \frak{a}_0N^{\kappa} \sum_{p\in P^c_H}  e^{-A}\big[ b^*_p b^*_{-p} + b_p b_{-p} \big] e^A. \end{equation}
\begin{prop}\label{prop:JN2}
Assume $\alpha,\beta$ satisfy (\ref{eq:alphabeta}). Let $ \cJ_N^{(2)}$ be defined as in \eqref{eq:JN0to4}. Then 
\[\cJ_N^{(2)} = N^{\kappa}\widehat V(0)\sum_{p\in P_H^c}  b^*_pb_p + 4\pi \frak{a}_0N^{\kappa} \sum_{p\in P^c_H}  \big[ b^*_p b^*_{-p} + b_p b_{-p} \big]  +\cE_{\cJ_N}^{(2)} \]
and there exists a constant $C>0$ such that  
\[ \pm e^A \cE_{\cJ_N}^{(2)} e^{-A}\leq CN^{-3\beta}\cK + CN^{\alpha+2\kappa}\]
for all $N\in\NN$ sufficiently large.
\end{prop}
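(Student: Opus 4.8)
The strategy is to write the difference $\cJ_N^{(2)} - [N^\kappa\widehat V(0)\sum_{p\in P_H^c}b_p^*b_p + 4\pi\frak a_0 N^\kappa\sum_{p\in P_H^c}(b_p^*b_{-p}^*+b_pb_{-p})]$ as an integral over $s\in[0,1]$ of conjugations $e^{-sA}[\,\cdot\,,A]e^{sA}$ of the two relevant operators, exactly as was done for $\cM_N^{(2)}$ in Proposition \ref{prop:MN2} and for $\cG_N^{(2)}$ in the appendix. Concretely, for the number-type piece I would note $\sum_{p\in P_H^c}b_p^*b_p = \sum_{p\in P_H^c}a_p^*a_p (1-\cN_+/N) - N^{-1}\sum_{p\in P_H^c}a_p^*a_p$ up to lower order and apply Lemma \ref{lm:FinftycNopA} (with $F_p = N^\kappa\widehat V(0)\chi(p\in P_H^c)$, $k=0$ and $k=1$): this already gives that the number-type part changes only by $C N^{\kappa-3\beta/2}(\cN_{\geq\frac12N^\alpha}+1)(\cN_++1)$ after conjugation with $e^A$, which after one more application of Lemma \ref{lm:NresgrowA} and the bound $\cN_{\geq\frac12N^\alpha}\leq 4N^{-2\alpha}\cK$ is absorbed into $CN^{-3\beta}\cK$ (using $\kappa-3\beta/2-2\alpha+\text{(one factor }N^{2\alpha}\text{ from }\cN_+\leq N)\leq -3\beta$, which holds since $\alpha>3\beta+2\kappa$).

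**Key steps.** First I would isolate the pairing terms $4\pi\frak a_0 N^\kappa\sum_{p\in P_H^c}(b_p^*b_{-p}^*+b_pb_{-p})$. Computing $[b_p^*b_{-p}^* + b_pb_{-p}, A_1]$ produces, after normal ordering, a collection of terms structurally identical to the $\text{V}_1,\dots,\text{V}_5$ appearing in the proof of Proposition \ref{prop:MN2} (with $D_1$ replaced by $A_1$, so that the operators carry one fewer $a$-factor): cubic terms $b_{v+r}^*a_{-r}^*a_v$ restricted by characteristic functions, pure pairing terms $b_{v+r}^*b_{-v-r}^*$ and $b_vb_{-v}$, and quartic remainders $b_{v+r}^*b_{w-r}^*a_{-v}^*a_w$. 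Each is controlled by Cauchy–Schwarz exactly as in Proposition \ref{prop:MN2}: using $\|\eta_H\|\leq CN^{\kappa-\alpha/2}$, the pull-through bounds (\ref{eq:Ntheta-bds}), and the fact that momenta $r\in P_H$, $v\in P_L$ are well separated. The resulting bound is of the form $\pm\sum(\text{terms})\leq C N^{2\kappa+3\beta/2-1}\cK_L + CN^\kappa(\cN_{\geq\frac12N^\alpha}+1)$ (the exponent $2\kappa+3\beta-\alpha/2-1<\kappa$ again from (\ref{eq:alphabeta}) guarantees the dominant term is the stated one). Then, conjugating back with $e^A$ via Lemmas \ref{lm:NresgrowA} and \ref{lm:KresgrowA} and using $\cN_{\geq\frac12N^\alpha}\leq 4N^{-2\alpha}\cK$ and $\cK_L\leq\cK$, everything collapses into $CN^{-3\beta}\cK + CN^{\alpha+2\kappa}$, where the additive constant $N^{\alpha+2\kappa}$ comes from the $CN^\kappa(\cN_{\geq\frac12N^\alpha}+1)$ piece after using $\cN_{\geq\frac12N^\alpha}\leq N$ crudely or, more sharply, from completing a square in the pairing term as is done around (\ref{eq:pairingterm}) in the proof of Proposition \ref{prop:GN}.

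**Main obstacle.** The computational bookkeeping of $[b_p^*b_{-p}^*+b_pb_{-p},A_1]$ and the careful normal ordering is the bulk of the work, but it is routine and parallels Proposition \ref{prop:MN2} closely. The genuine subtlety is tracking which terms, after conjugation with $e^A$, require the full kinetic energy $\cK$ (not just $\cK_L$) on the right-hand side — in particular the quartic remainder $\Psi$-type terms and the cubic terms whose momentum $v+r$ can be large — and checking that every exponent actually obeys the claimed budget $N^{-3\beta}$ for the $\cK$-coefficient and $N^{\alpha+2\kappa}$ for the additive constant. This requires using all five inequalities in (\ref{eq:alphabeta}): $\alpha>3\beta+2\kappa$ (separation of $P_H$, $P_L$ and smallness of $N^{-3\beta}$ vs. the raw estimates), $3\alpha/2+2\kappa<1$ (so the $1/\sqrt N$ prefactors in the cubic terms beat the momentum sums), and $\beta<1/2$, $\beta>3\kappa/2$ for the low-momentum sums. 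I expect the tightest check to be the cubic-remainder terms of the form $N^\kappa N^{-1/2}\sum_{r\in P_H,v\in P_L}\eta_r b_{v+r}^*a_{-r}^*a_v(\text{characteristic function})$, where one must split $b_{v+r}^*$ using $|v+r|\geq\frac12 N^\alpha$ and spend a factor $N^{-\alpha}$ to convert a stray $\cN_{\geq\frac12N^\alpha}$ into $\cK$; the resulting exponent $2\kappa+3\beta/2-\alpha/2-1/2-\alpha$ must still be $\leq -3\beta$, which is where $\alpha>3\beta+2\kappa$ is used most sharply.
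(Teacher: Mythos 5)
Your proposal follows essentially the same route as the paper: the $b_p^*b_p$ part is handled via $b_p^*b_p=a_p^*a_p(1-\cN_+/N+1/N)$ and Lemma \ref{lm:FinftycNopA} (this is exactly Corollary \ref{cor:JN0}), and the pairing part via a Duhamel expansion of $e^{-A}(\cdot)e^{A}$, an explicit normal-ordered commutator with $A_1$, Cauchy--Schwarz with the pull-through bounds and $\|\eta_H\|\leq CN^{\kappa-\alpha/2}$, and finally $\cN_{\geq\frac12N^\alpha}\leq 4N^{-2\alpha}\cK$ together with $\alpha>3\beta+2\kappa$. Two small inaccuracies in your description, neither fatal: the commutator of the quadratic pairing with the \emph{cubic} $A_1$ does not reproduce the quadratic/quartic menagerie $\mathrm{V}_1,\dots,\mathrm{V}_5$ of Prop.\ \ref{prop:MN2} but rather a triple-creation term $b^*_{v+r}b^*_{-v}b^*_{-r}$, a term $a^*_{-r}a_vb_{-r-v}\chi(|r+v|\leq N^\alpha)$, and quintic $1/N$-corrections coming from $[b_p,b_q^*]=(1-\cN_+/N)\delta_{pq}-N^{-1}a_q^*a_p$; and it is precisely this quintic correction, summed over $p\in P_H^c$, that produces the prefactor $N^{\alpha+\kappa}(\cN_{\geq\frac12N^\alpha}+1)$ and hence the additive constant $N^{\alpha+2\kappa}$ --- the crude bound $\cN_{\geq\frac12N^\alpha}\leq N$ you mention would instead give the much worse $N^{1+\kappa}$.
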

\begin{proof} From $ b^*_pb_p = a^*_pa_p(1-\cN_+/N+1/N)$ and Corollary \ref{cor:JN0}, we conclude that 
		\[N^{\kappa}\widehat V(0)\sum_{p\in P_H^c}  e^{-A} b^*_pb_p e^{A} = N^{\kappa}\widehat V(0)\sum_{p\in P_H^c} b^*_pb_p+  \cE_{\cJ_N}^{(21)},\]
where the the self-adjoint operator $ \cE_{\cJ_N}^{(21)}$ is such that
		\[ \pm e^A  \cE_{\cJ_N}^{(21)} e^{-A}\leq C N^{\kappa-3\beta/2} (\cN_{\geq \frac12 N^\alpha}+1)\leq C N^{\kappa-2\alpha-3\beta/2} \cK + C N^{\kappa-3\beta/2}.\]
Notice that we used here the operator inequality $ \cN_{\geq \frac12 N^\alpha}\leq 4N^{-2\alpha}\cK$. Since 
		\[C N^{\kappa-2\alpha-3\beta/2} \cK + C N^{\kappa-3\beta/2}\leq CN^{-3\beta}\cK + CN^{\alpha+2\kappa}\]
for all $\alpha>3\beta+2\kappa\geq 0$, it only remains to analyse the second contribution in \eqref{eq:JN2}. To this end, we compute
		\[\begin{split}
		 [b^*_p b^*_{-p}+b_p b_{-p}, b^*_{v+r}a^*_{-r}a_v ] &= - b^*_{v+r} b^*_{p }b^*_{-r}\delta_{-p,v}- b^*_{v+r} b^*_{-r}b^*_{-p }\delta_{p,v}\\
		 &\hspace{0.4cm}+a^*_{-r}a_v b_p (1-\cN_+/N)  \delta_{-p,r+v} - \frac1N  a^*_{r+v}a^*_{-r}a_va_{-p}b_p \\
		 &\hspace{0.4cm} + a^*_{-r}a_v b_{-p} (1-\cN_+/N) \delta_{p,r+v}- \frac1N  a^*_{r+v}a^*_{-r}a_va_{p}b_{-p}
		 \end{split}\]
for all $ p\in P_H^c$, $ r\in P_H$ and $v\in P_L$. As a consequence, we find that
		\begin{equation}\label{eq:JN21}
		\begin{split}
		&\sum_{p\in P^c_H}  e^{-A}\big[ b^*_p b^*_{-p} + b_p b_{-p} \big] e^A - \sum_{p\in P^c_H}  \big[ b^*_p b^*_{-p} + b_p b_{-p} \big]\\
		&=\frac2{\sqrt N}\int_0^1 ds \!\!\!\sum_{\substack{ r\in P_H, v\in P_L}} \!\!\!\eta_r e^{-sA} \big[  a^*_{-r}a_v b_{-r-v}\chi_{\{|r+v|\leq N^{\alpha}\}} (1-\cN_+/N) - b^*_{v+r} b^*_{-v }b^*_{-r}   \big]e^{sA}\\
		&\hspace{0.4cm} + \frac2{ N^{3/2}}\int_0^1 ds\; \sum_{\substack{p\in P_H^c, r\in P_H, \\ v\in P_L }} \eta_r  e^{-sA}a^*_{r+v}a^*_{-r}a_va_{p}b_{-p}e^{sA} +\text{h.c.},
		\end{split}\end{equation}
where $ \chi_{\{|\cdot|\leq N^{\alpha}\}}$ denotes the characteristic function for the set $\{p\in\Lambda_+^*:|p|\leq N^{\alpha}\}$. 

Let us now estimate the size of the different contributions on the r.h.s. in \eqref{eq:JN21}. Applying (\ref{eq:Ntheta-bds}), Lemma \ref{lm:NresgrowA} and Cauchy-Schwarz, we obtain on the one hand that
		\[\begin{split}
		&\bigg| \frac2{\sqrt N}\int_0^1 ds\!\!\!\sum_{\substack{ r\in P_H, v\in P_L}}\!\!\! \eta_r \langle e^{sA} \xi,  \big[  a^*_{-r}a_v b_{-r-v}\chi_{\{|r+v|\leq N^{\alpha}\}} (1-\cN_+/N) - b^*_{v+r} b^*_{-v }b^*_{-r}   \big]e^{sA}\xi\rangle\bigg|\\
		&\leq C\int_0^1 ds\!\!\!\sum_{\substack{ r\in P_H, v\in P_L}} \Big( |\eta_r| \|  a_{-r-v} e^{sA}\xi\| \| a_{-r} e^{sA}\xi\| \\
		&\hspace{3.5cm}+ |\eta_r| \| (\cN_{\geq \frac12 N^\alpha}+1)^{-1/2} a_{-r} a_{r+v} e^{sA}\xi\| \| (\cN_{\geq \frac12 N^\alpha}+1)^{1/2} e^{sA}\xi\|  \Big)\\
		&\leq CN^{3\beta/2+\kappa-\alpha/2}\int_0^1ds\; \langle \xi, e^{-sA} (\cN_{\geq \frac12N^\alpha}+1)e^{sA}\xi\rangle\leq C\langle \xi,  (\cN_{\geq \frac12N^\alpha}+1)\xi\rangle
		\end{split}\]
for all $ \alpha>3\beta +2\kappa\geq 0$ and $N\in\NN$ sufficiently large. On the other hand, proceeding in the same way, the last contribution on the r.h.s. in \eqref{eq:JN21} is bounded by
		\[\begin{split}
		&\bigg| \frac2{ N^{3/2}}\int_0^1 ds\; \sum_{\substack{p\in P_H^c, r\in P_H, \\ v\in P_L }} \eta_r  \langle e^{sA} \xi,a^*_{r+v}a^*_{-r}a_va_{p}b_{-p}e^{sA}\xi\rangle \bigg| \\
		&\leq \frac{C}{N^{3/2}} \int_0^1 ds\; \sum_{\substack{p\in P_H^c, r\in P_H, \\ v\in P_L }}  \| (\cN_{\geq \frac12 N^\alpha}+1)^{-1/2}a_{r+v}a_{-r} e^{sA}\xi\|\\
		&\hspace{5cm} \times |\eta_r|\| (\cN_{\geq \frac12 N^\alpha}+1)^{1/2}a_va_{p}b_{-p}e^{sA}\xi\| \\
		&\leq CN^{\alpha+\kappa}\langle \xi,  (\cN_{\geq \frac12N^\alpha}+1)\xi\rangle.
		\end{split}\]
Combining the last two estimates with the identity \eqref{eq:JN21}, we conclude that 
		\[4\pi \frak{a}_0N^{\kappa} \sum_{p\in P^c_H}  e^{-A}\big[ b^*_p b^*_{-p} + b_p b_{-p} \big] e^A = 4\pi \frak{a}_0N^{\kappa} \sum_{p\in P^c_H}  \big[ b^*_p b^*_{-p} + b_p b_{-p} \big] +  \cE_{\cJ_N}^{(22)},\]
where the the self-adjoint operator $ \cE_{\cJ_N}^{(22)}$ is such that
		\[ \pm e^A  \cE_{\cJ_N}^{(22)} e^{-A}\leq CN^{\alpha+2\kappa} (\cN_{\geq \frac12N^\alpha}+1) \leq CN^{-3\beta}\cK + CN^{\alpha+2\kappa}.\]
Setting $ \cE_{\cJ_N}^{(2)} = \cE_{\cJ_N}^{(21)}+ \cE_{\cJ_N}^{(22)}$, this concludes the proposition.
\end{proof}


\subsubsection{Analysis of $ \cJ_N^{(3)} $} \label{sec:JN3}

In this section, we determine the main contributions to $ e^{-A}\cC_N e^A$, where we recall that
		\begin{equation}\label{eq:defCNsecCN} \cC_{N} =   \frac{1}{\sqrt N}\sum_{p\in\Lambda_+^*, q\in P_L: p+q\neq 0} N^{\kappa}\widehat V(p/N^{1-\kappa})\big(b^*_{p+q}a^*_{-p}a_q+ \text{h.c.}\big). 
		\end{equation}
The following proposition summarizes important properties of $ [\cC_N, A]$.
\begin{prop}\label{prop:commCNA} Assume $\alpha,\beta$ satisfy (\ref{eq:alphabeta}). Then 
there exists $C>0$ such that 
		\begin{equation} \label{eq:propcommCNA}
		\begin{split}
		[\cC_N, A] & = \frac {2N^\kappa}N \sum_{r\in P_H, v\in P_L} \!\!\! \big[ \widehat{V}(r/N^{1-\kappa})\eta_r + \widehat{V}((v+r)/N^{1-\kappa}))\eta_r\big] a^*_va_v (1-\cN_+/N) \\ &+ \cE_{[\cC_N,A]}
\end{split}
\end{equation}
where 
\begin{equation}\label{eq:propcommCNAerror}\begin{split}
\pm \cE_{[\cC_N,A]} \leq C N^{\kappa/2-3\beta/2}\cV_N + CN^\kappa (\cN_{\geq \frac12 N^\alpha}+1)+ C  N^{\kappa/2-3\beta/2-1} (\cN_{\geq \frac12 N^\alpha}+1)^2 
\end{split}\end{equation}
for all $N\in\NN$ sufficiently large.
\end{prop}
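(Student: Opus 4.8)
The plan is to compute the commutator $[\cC_N, A] = [\cC_N, A_1] - [\cC_N, A_1^*]$ explicitly in momentum space, following the scheme already used in the paper for $[\cV_N, A]$ (Proposition~\ref{prop:comm1VN}) and $[\cK, D]$ (Proposition~\ref{prop:commKD}). Recalling $\cC_N = \cC_N^\flat + (\cC_N^\flat)^*$ with $\cC_N^\flat = N^{-1/2}\sum_{p\in\Lambda_+^*, q\in P_L, p+q\neq 0} N^\kappa \widehat V(p/N^{1-\kappa}) b^*_{p+q}a^*_{-p}a_q$ and $A_1 = N^{-1/2}\sum_{r\in P_H, s\in P_L}\eta_r b^*_{r+s}a^*_{-r}a_s$, I would first expand all four commutators $[\cC_N^\flat, A_1]$, $[\cC_N^\flat, A_1^*]$, $[(\cC_N^\flat)^*, A_1]$, $[(\cC_N^\flat)^*, A_1^*]$ using the CCR \eqref{eq:ccr}, the $b$-commutation relations \eqref{eq:comm-bp}, and the identities \eqref{eq:comm2}. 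The "diagonal" contractions—where the annihilation operators of $\cC_N$ pair with the creation operators of $A$ (and vice versa) so that two creation/annihilation pairs collapse—produce the number-type operator displayed on the right side of \eqref{eq:propcommCNA}: the Kronecker deltas force $p+q = r+s$ or $-p = -r$ etc., and after summing over the constrained momenta and using $b^*_q b_q = a^*_q a_q (1-\cN_+/N) + O(1/N)$ one gets exactly $\frac{2N^\kappa}{N}\sum_{r\in P_H, v\in P_L}[\widehat V(r/N^{1-\kappa}) + \widehat V((v+r)/N^{1-\kappa})]\eta_r\, a^*_v a_v(1-\cN_+/N)$, up to error terms of lower order that get absorbed into $\cE_{[\cC_N,A]}$.

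The remaining terms fall into two families: (i) genuinely higher-order terms, quintic in creation/annihilation operators, coming from the non-contracted parts of the commutators (these look like $N^{-3/2}\sum \widehat V(\cdot)\eta_r\, b^*\, a^*\, a^*\, a\, a$ with various momentum restrictions, analogous to $\Theta_2,\Theta_3,\Theta_4$ in \eqref{eq:commVNA12} or $\Phi_3,\Phi_4$ in \eqref{eq:commVND2}); and (ii) cubic "off-diagonal" leftovers coming from the $-N^{-1}a^*_qa_p$ piece of $[b_p,b_q^*]$ and from commutators where only one pairing occurs. For family (i), I would switch to position space, writing $N^\kappa\widehat V(p/N^{1-\kappa})$ as the kernel of $N^{2-2\kappa}V(N^{1-\kappa}(x-y))$ as in \eqref{eq:Theta11}, \eqref{eq:commVND3}, and bound each term by Cauchy–Schwarz, extracting a factor $\cV_N^{1/2}$ from the $V$-kernel, a factor $\|\eta_H\|\le CN^{\kappa-\alpha/2}$ from the $\eta_r$ sum (using \eqref{eq:modetap}), factors $\|\cK_L^{1/2}\cdot\|$ or $\|(\cN_{\ge N^\alpha/2}+1)^{1/2}\cdot\|$ from the low-momentum sums over $P_L$ (which contribute powers of $N^\beta$), and the pull-through estimates \eqref{eq:Ntheta-bds} to move $\cN_{\ge\frac12 N^\alpha}$ through $a_p$. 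Tracking the exponents and using $\alpha > 3\beta + 2\kappa$, $2\beta < 1$ from \eqref{eq:alphabeta}, each such term should be dominated by $N^{\kappa/2-3\beta/2}\cV_N + N^\kappa(\cN_{\ge\frac12 N^\alpha}+1) + N^{\kappa/2-3\beta/2-1}(\cN_{\ge\frac12 N^\alpha}+1)^2$. For family (ii), the cubic terms can be absorbed similarly; the $1/N$ and the low-momentum restriction on $q\in P_L$ give the needed decay, and since $|\widehat V(p/N^{1-\kappa})|\le CN^\kappa$ uniformly one gets a term bounded by $N^\kappa(\cN_{\ge\frac12 N^\alpha}+1)$.

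The main obstacle I anticipate is the bookkeeping: the commutator $[\cC_N, A]$ has substantially more terms than $[\cV_N, A]$ because $\cC_N$ itself is already cubic and $A$ is cubic, so the four raw commutators generate on the order of a dozen distinct momentum configurations, and one must carefully separate the \emph{leading} diagonal contribution (which must be kept, since it contributes to $\cJ_N^{\text{eff}}$ at order $N^\kappa$) from the genuinely negligible pieces, being careful not to double-count or to discard a term that is actually of the same size as the main term. A secondary subtlety is that, because the main term in \eqref{eq:propcommCNA} involves $\widehat V$ evaluated at \emph{two} different arguments ($r/N^{1-\kappa}$ and $(v+r)/N^{1-\kappa}$), one has to be attentive to which creation operator of $\cC_N$ contracts with which annihilation operator of $A$; this is exactly the origin of the symmetric pair of $\widehat V$ factors, and getting the combinatorial coefficient "$2$" right (one from $[\cC_N^\flat, A_1^*]$-type contractions, one from the hermitian conjugate) requires care. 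Once the algebra is organized, the analytic estimates are routine applications of the tools already developed (Cauchy–Schwarz in position space, \eqref{eq:modetap}, \eqref{eq:etaHL2}, \eqref{eq:Ntheta-bds}, and the bound $\sup_p|N^\kappa\widehat V(p/N^{1-\kappa})|\le CN^\kappa$), so I would not expect any essentially new difficulty there.
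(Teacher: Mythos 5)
Your proposal is correct and follows essentially the same route as the paper: an explicit momentum-space expansion of $[\cC_N,A_1]+\text{h.c.}$ (using that $a^*_{-p}a_q$ commutes with $b^*_{v+r}$ and $a^*_{-r}$ for $r\in P_H$, $q,v\in P_L$), extraction of the fully contracted terms as the leading quadratic contribution with the factor $2$ and the two $\widehat V$ arguments, and Cauchy--Schwarz bounds on the remaining operators (the paper's $\Xi_1,\dots,\Xi_5$), switching to position space to pull out $\cV_N^{1/2}$ exactly where you indicate. The only cosmetic discrepancy is your degree count of the leftover terms (they are quartic and sextic rather than quintic), which does not affect the argument.
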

\begin{proof} With \eqref{eq:defA} we have that
		\[[\cC_N, A]  = [\cC_N, A_1] +\text{h.c.}  \]
Now, let $ p\in\Lambda_+^*, q\in P_L, r\in P_H$ and $v\in P_L$. For $N\in\NN$ sufficiently large, we have $ |v+r|\geq N^\alpha-N^\beta>\frac12 N^\alpha>N^\beta$ so that 
		\[ [a^*_{-p}a_q, b^*_{v+r}] = [a^*_{-p}a_q, a^*_{-r}] = 0. \]
As a consequence, we obtain
		\[ [b^*_{p+q}a^*_{-p}a_q, b^*_{v+r}a^*_{-r}a_v] = -b^*_{v+r }b^*_{-r}a^*_{-p}a_q\delta_{p+q,v}  - b^*_{v+r}b^*_{p+q}a^*_{-r}a_q \delta_{-p,v} \]
as well as
		\[\begin{split}
		&[a^*_qa_{-p}b_{p+q}, b^*_{v+r}a^*_{-r}a_v]\\ 
		&= a^*_q a_v (1-\cN_+/N) ( \delta_{v+r,-p}\delta_{p+q,-r} + \delta_{p+q,v+r} \delta_{p,r})+ a^*_qa^*_{-r}a_{p+q}a_v (1-\cN_+/N)\delta_{v+r,-p}\\
		&\hspace{0.4cm}  + a^*_{v+r}a^*_{q}a_{p+q}a_v (1-\cN_+/N)\delta_{p,r}+ a^*_{v+r}a^*_{q}a_{-p}a_v (1-\cN_+/N)\delta_{p+q,-r} \\
		&\hspace{0.4cm} + a^*_qa^*_{-r}a_{-p}a_v (1-\cN_+/N)\delta_{p+q,v+r} - b^*_{v+r}a^*_{-r}a_{-p}b_{p+q}\delta_{v,q} -\frac1N  a^*_q a^*_{v+r}a^*_{-r}a_va_{-p}a_{p+q}. 
		\end{split}\]
Hence, we conclude that
		\begin{equation}\label{eq:commCNA1}
		\begin{split}
		[\cC_N, A_1]+\text{h.c.}& = \frac {2N^\kappa}N \sum_{r\in P_H, v\in P_L} \!\!\! \big[ \widehat{V}(r/N^{1-\kappa})\eta_r + \widehat{V}((v+r)/N^{1-\kappa}))\eta_r\big] a^*_va_v (1-\cN_+/N)\\
		&\hspace{0.4cm}+ (\Xi_{1}+ \Xi_{2}+\Xi_{3}+\Xi_{4}+\Xi_{5}+\text{h.c.}),
		\end{split}
		\end{equation}
where
		\begin{equation}\label{eq:defXi1to5}
		\begin{split}
		\Xi_{1} &= - \frac{1}{N} \sum_{r\in P_H, q, v\in P_L}^* N^\kappa\big[  \widehat{V}((v-q)/N^{1-\kappa})\eta_r + \widehat{V}(v/N^{1-\kappa})\eta_r\big] b^*_{v+r }b^*_{-r}a^*_{q-v}a_q , \\
		\Xi_{2} &=  \frac{1}{N} \sum_{r\in P_H, q, v\in P_L}^* N^\kappa \big[ \widehat{V}((v+r)/N^{1-\kappa})\eta_r+\widehat{V}((v+r+q)/N^{1-\kappa})\eta_r\big] \\
		&\hspace{7cm}\times a^*_qa^*_{r}a_{q+v+r}a_{-v} (1-\cN_+/N) , \\
		\Xi_{3} &=  \frac{1}{N} \sum_{r\in P_H, q, v\in P_L}^* N^\kappa \big[ \widehat{V}(r/N^{1-\kappa})\eta_r +\widehat{V}((r+q)/N^{1-\kappa})\eta_r \big]\\
		&\hspace{7cm}\times a^*_{v+r}a^*_{q}a_{r+q}a_v (1-\cN_+/N), \\
		\Xi_{4} &= - \frac{1}{N} \sum_{p\in\Lambda_+^*, r\in P_H, v\in P_L}^* N^\kappa \widehat{V}(p/N^{1-\kappa})\eta_r b^*_{v+r}a^*_{-r}a_{-p}b_{p+v} , \\
		\Xi_{5} &= - \frac{1}{N^2} \sum_{p\in\Lambda_+^*, r\in P_H, q, v\in P_L}^* N^\kappa \widehat{V}(p/N^{1-\kappa})\eta_r  a^*_q a^*_{v+r}a^*_{-r}a_va_{-p}a_{p+q}. \\		
		\end{split}
		\end{equation}
Let us next estimate the size of the operators $ \Xi_{1}$ to $\Xi_{5}$, defined in \eqref{eq:defXi1to5}. We find
		\[\begin{split}
		|\langle \xi, \Xi_{1}\xi \rangle| &\leq \frac{CN^\kappa}{N} \!\!\!\sum_{r\in P_H, q, v\in P_L} \!\!\! |\eta_r| \|  a_{-r }(\cN_{\geq \frac12 N^\alpha}+1)^{-1/2}a_{v+r}a_{q-v}\xi\| \| a_q (\cN_{\geq \frac12 N^\alpha}+1)^{1/2} \xi\|\\
		&\leq CN^{3\beta/2 + 2\kappa-\alpha/2} \langle\xi, (\cN_{\geq \frac12 N^\alpha}+1)\xi\rangle
 		\end{split} \]
for all $ \xi\in\cF_+^{\leq N}$ and $N\in\NN$ sufficiently large. Similarly, the operators $ \Xi_{2}$ and $\Xi_{3}$ can be controlled by
		\[\begin{split}
		|\langle \xi, \Xi_{2}\xi \rangle| + |\langle \xi, \Xi_3\xi \rangle| & \leq \frac{CN^\kappa}{N} \!\!\!\sum_{r\in P_H, q, v\in P_L} \!\!\! |\eta_r| \Big( \|  a_{q}a_{r}\xi\| \| a_{q+v+r} a_{-v} \xi\| +\|  a_{v+r }a_{q}\xi\| \| a_{q+r} a_{v} \xi\| \Big)\\
		&\leq CN^{3\beta/2 + 2\kappa-\alpha/2} \langle\xi, (\cN_{\geq \frac12 N^\alpha}+1)\xi\rangle.
		\end{split}\]
Switching to position space, the operator $\Xi_{4}$ can be bounded by
		\[\begin{split}
		|\langle \xi, \Xi_{4}\xi \rangle| &= \bigg|  \sum_{\substack{r\in P_H, v\in P_L: \\v+r\neq 0}}\int_{\Lambda^2}dxdy\;  N^{2-2\kappa} V(N^{1-\kappa}(x-y)) \eta_r e^{-ivy} \langle\xi,  b^*_{v+r}a^*_{-r}\check{a}_{x}\check{b}_{y}\xi\rangle \bigg|\\
		&\leq  C N^{\kappa-\alpha/2} \| \cV_N^{1/2}\xi\| \bigg(N^{\kappa-1}\sum_{\substack{r\in P_H}}\int_{\Lambda}dy\;  \Big\| \sum_{v\in P_L}e^{-ivy}a_{v+r}a_{-r}\xi\Big\|^2 \bigg)^{1/2}\\
		& = C N^{\kappa-\alpha/2} \| \cV_N^{1/2}\xi\| \bigg(N^{\kappa-1}\sum_{\substack{r\in P_H, v\in P_L}} \langle\xi, a^*_{v+r}a^*_{-r}a_{-r}a_{v+r}\xi\rangle \bigg)^{1/2}\\
		&\leq C N^{3\kappa/2-\alpha/2-1/2}\| \cV_N^{1/2}\xi\| \| (\cN_{\geq \frac12 N^\alpha}+1)\xi\|
		\end{split}\]
and, similarly, we control the operator $\Xi_{5}$ by 
		\[\begin{split}
		|\langle \xi, \Xi_{5}\xi\rangle| &=   \bigg| \frac{1}{N}\!\! \sum_{r\in P_H, q, v\in P_L} \!\int_{\Lambda^2}\!\!dxdy\, N^{2-2\kappa}V(N^{1-\kappa}(x-y)) e^{-iqy}\eta_r \langle\xi, a^*_q a^*_{v+r}a^*_{-r}a_v\check{a}_{x}\check{a}_{y} \xi\rangle\bigg| \\
		&\leq C N^{\kappa-\alpha/2-1/2}\|\cV_N^{1/2}\xi\| \bigg(N^{\kappa-1}\!\! \sum_{r\in P_H, v\in P_L} \int_{\Lambda}dy\;  \Big\|\sum_{q\in P_L}e^{-iqy} a_q a_{v+r}a_{-r}\xi\Big\|^2  \bigg)^{1/2}\\
		&\leq C N^{3\kappa/2-\alpha/2-1/2}\| \cV_N^{1/2}\xi\| \| (\cN_{\geq \frac12 N^\alpha}+1)\xi\|.
		\end{split}\]
Summarizing the previous estimates and using that $\alpha>3\beta+2\kappa $, we conclude that
		\begin{equation*}\begin{split}
		\pm  \sum_{i=1}^5(\Xi_{i} +\text{h.c.})& \leq C N^{\kappa/2-3\beta/2}\cV_N + CN^\kappa (\cN_{\geq \frac12 N^\alpha}+1)+ C  N^{\kappa/2-3\beta/2-1} (\cN_{\geq \frac12 N^\alpha}+1)^2.
		\end{split}\end{equation*}
Defining $ \cE_{[\cC_N,A]} = \sum_{i=1}^5(\Xi_{i}+ \text{h.c.})$, this concludes the proof.
\end{proof}
The following corollary describes the main contributions to $ e^{-sA} \cC_N e^{sA}$, for any fixed $s\in [0;1]$. In particular, for $s=1$, it determines $ \cJ_N^{(3)}$, up to small errors. The slightly more general result about $ e^{-sA} \cC_N e^{sA}$ for any $ s\in [0;1]$ will be useful in Section \ref{sec:JNHN}. 
\begin{cor} \label{cor:eACN} Assume $\alpha,\beta$ satisfy (\ref{eq:alphabeta}). Then there exists $C>0$ such that  
		\begin{equation*} 
		\begin{split}
		 &e^{-sA}\cC_N e^{sA} \\
		 &= \cC_N + 2sN^\kappa \!\!\!\sum_{r\in P_H, v\in P_L} \!\!\! \big[ \widehat{V}(r/N^{1-\kappa})\eta_r/N + \widehat{V}((v+r)/N^{1-\kappa}))\eta_r/N\big] a^*_va_v (1-\cN_+/N)\\
		 &\hspace{0.4cm}+  \cE_{\cJ_N}^{(3)}(s)
		\end{split}
		\end{equation*}
where the self-adjoint operator $ \cE_{\cJ_N}^{(3)}(s) $ is such that
		\begin{equation*}
		\begin{split}
		\pm e^{A} \cE_{\cJ_N}^{(3)}(s)e^{-A} \leq C N^{(\kappa-3\beta)/2}\cV_N + C N^{(3\kappa-7\beta)/2}\cK +C N^{(4\kappa-3\beta)/2}\cN_+ +CN^\kappa
		\end{split}\end{equation*}
for all $s\in [0;1]$ and for all $N\in\NN$ sufficiently large.
\end{cor}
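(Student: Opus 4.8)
The plan is to prove Corollary \ref{cor:eACN} by a Gronwall/Duhamel argument based on the commutator identity in Proposition \ref{prop:commCNA}, exactly in the same spirit as Corollary \ref{cor:VNgrowA} and Corollary \ref{cor:KgrowA}. First I would write, for fixed $s \in [0;1]$,
\[ e^{-sA}\cC_N e^{sA} - \cC_N = \int_0^s dt\; e^{-tA} [\cC_N, A] e^{tA}, \]
and insert the decomposition \eqref{eq:propcommCNA}. The main term on the right-hand side of \eqref{eq:propcommCNA} is the diagonal operator
\[ T := \frac{2N^\kappa}{N} \sum_{r\in P_H, v\in P_L} \big[ \widehat{V}(r/N^{1-\kappa})\eta_r + \widehat{V}((v+r)/N^{1-\kappa})\eta_r\big] a^*_v a_v (1-\cN_+/N), \]
which commutes with $\cN_+$; the key point is to show that $T$ is essentially invariant under conjugation with $e^{tA}$. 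Indeed $T$ is a linear combination of operators of the form $\sum_v F_v a^*_v a_v (1-\cN_+/N)$ with $\|F\|_\infty \leq C N^{\kappa}\sup_r|\eta_r| \leq C N^{2\kappa-2\alpha}$ (using $|\eta_r|\le C N^\kappa |r|^{-2} \le C N^{\kappa-2\alpha}$ on $P_H$), so Lemma \ref{lm:FinftycNopA} (applied with $k=0$ and $k=1$, the latter to handle the factor $\cN_+/N$) gives
\[ \pm\big(e^{-tA} T e^{tA} - T\big) \leq C N^{-3\beta/2}\|F\|_\infty (\cN_{\geq \frac12 N^\alpha}+1) \leq C N^{2\kappa-2\alpha-3\beta/2}(\cN_{\geq \frac12 N^\alpha}+1), \]
which, after using $\cN_{\geq \frac12 N^\alpha}\le 4N^{-2\alpha}\cK \le CN$ and $\alpha>3\beta+2\kappa$, is absorbed into the error terms claimed in the corollary. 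So after integrating in $t$ the diagonal term contributes $sT$ plus a negligible error, which is exactly the $2sN^\kappa(\dots)a^*_va_v(1-\cN_+/N)$ term in the statement.

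It then remains to control the error term $\cE_{[\cC_N,A]}$ after conjugation with $e^{tA}$. For this I would take expectation in a state $\xi$, apply the operator bound \eqref{eq:propcommCNAerror} to $e^{tA}\xi$, i.e.
\[ \pm\langle \xi, e^{-tA}\cE_{[\cC_N,A]}e^{tA}\xi\rangle \leq C N^{\kappa/2-3\beta/2}\langle\xi, e^{-tA}\cV_N e^{tA}\xi\rangle + CN^\kappa \langle\xi, e^{-tA}(\cN_{\geq \frac12 N^\alpha}+1)e^{tA}\xi\rangle + C N^{\kappa/2-3\beta/2-1}\langle\xi, e^{-tA}(\cN_{\geq \frac12 N^\alpha}+1)^2 e^{tA}\xi\rangle, \]
and then invoke Corollary \ref{cor:VNgrowA}, Lemma \ref{lm:NresgrowA} (with $m=1,2$, $\gamma=\alpha$, $c=\tfrac12$) and Lemma \ref{lm:KresgrowA} to bound the conjugated observables back in terms of $\cV_N$, $\cK$, $\cK_L$, $\cN_+$ and the restricted number operators. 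Using the operator inequalities $\cN_{\geq \frac12 N^\alpha}\leq 4N^{-2\alpha}\cK$, $\cK_L\leq \cK$, $\cN_+\leq (2\pi)^{-2}\cK$, and the parameter constraints in \eqref{eq:alphabeta} (in particular $\alpha>3\beta+2\kappa$ and $2\alpha + 3\kappa<1$), all the resulting contributions collapse into the four terms $N^{(\kappa-3\beta)/2}\cV_N$, $N^{(3\kappa-7\beta)/2}\cK$, $N^{(4\kappa-3\beta)/2}\cN_+$ and $N^\kappa$ appearing in the claimed bound for $e^A\cE_{\cJ_N}^{(3)}(s)e^{-A}$. I would set $\cE_{\cJ_N}^{(3)}(s)$ to be exactly the difference between $e^{-sA}\cC_N e^{sA}$ and the two explicit terms, and the estimate above proves the corollary; note that the extra conjugation $e^A(\cdot)e^{-A}$ in the statement (rather than the bare operator) is handled by yet another application of Corollary \ref{cor:VNgrowA}, Lemma \ref{lm:NresgrowA} and Lemma \ref{lm:KresgrowA}, since the bounds there are two-sided up to constants.

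The main obstacle I anticipate is bookkeeping rather than conceptual: one must verify that the diagonal term $T$ really does reproduce, after integrating the Duhamel formula, the $s$-dependent main term in the corollary (with the correct coefficient $2sN^\kappa(\dots)/N$ — note the $1/N$ is hidden inside the combination $\widehat V(r/N^{1-\kappa})\eta_r/N$), and that no ``boundary'' contribution is lost when swapping $e^{-tA}T e^{tA}$ for $T$. A secondary technical point is that Lemma \ref{lm:KresgrowA} controls only $\cK_L$ (low-momentum kinetic energy), not $\cK$, under conjugation with $e^A$; so whenever the error bound \eqref{eq:propcommCNAerror} feeds a $\cV_N$ (via Corollary \ref{cor:VNgrowA}, whose right-hand side contains $\cK_L\cN_{\geq \frac12 N^\alpha}$) one has to re-express $\cK_L\cN_{\geq\frac12N^\alpha}\le N^{-2\alpha}\cK\cN_{\geq\frac12N^\alpha}\le N^{1-2\alpha}\cK$ and check the exponent $\frac{\kappa-3\beta}{2}+1-2\alpha$ is $\le \frac{3\kappa-7\beta}{2}$, which follows from $\alpha>3\beta+2\kappa$. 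Once these exponent inequalities are checked against \eqref{eq:alphabeta}, Gronwall's lemma closes the argument.
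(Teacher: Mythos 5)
Your strategy coincides with the paper's: Duhamel expansion of $e^{-sA}\cC_N e^{sA}$, insertion of Proposition \ref{prop:commCNA}, Lemma \ref{lm:FinftycNopA} to freeze the diagonal term, and then \eqref{eq:propcommCNAerror} combined with Corollary \ref{cor:VNgrowA}, Lemmas \ref{lm:NresgrowA}, \ref{lm:KresgrowA} and $\cN_{\geq \frac12 N^\alpha}\leq 4N^{-2\alpha}\cK$ for the remainder. There is, however, a genuine gap in your treatment of the diagonal term $T$. To invoke Lemma \ref{lm:FinftycNopA} you need an $\ell^\infty$ bound on the \emph{full} coefficient
\[ F_v \,=\, 2N^\kappa\sum_{r\in P_H}\big[\widehat V(r/N^{1-\kappa})\eta_r/N+\widehat V((v+r)/N^{1-\kappa})\eta_r/N\big], \]
and your estimate $\|F\|_\infty\leq CN^\kappa\sup_r|\eta_r|\leq CN^{2\kappa-2\alpha}$ simply discards the sum over $r\in P_H$. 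That sum cannot be handled term by term: $\sum_{r\in P_H}|\eta_r|\sim N^\kappa\sum_{|r|\geq N^\alpha}|r|^{-2}$ diverges in three dimensions, so neither replacing the sum by its supremum nor splitting $T$ into one operator per $r$ (and resumming the resulting errors) is legitimate. The one nontrivial point of this proof — which the paper makes explicitly — is that the $r$-sum converges by Plancherel: $\frac1N\sum_{r\in\Lambda^*}\widehat V(r/N^{1-\kappa})\eta_r=-\int V(x)w_\ell(x)\,dx=O(1)$, and restricting to $P_H$ costs only $O(N^{\alpha+\kappa-1})$. This yields $\|F\|_\infty\leq CN^{\kappa}$ (much larger than your $CN^{2\kappa-2\alpha}$), which is nevertheless still small enough, using $\beta>3\kappa/2$ and $\alpha>3\beta+2\kappa$, for $CN^{-3\beta/2}\|F\|_\infty(\cN_{\geq\frac12N^\alpha}+1)$ to be absorbed into $CN^{(3\kappa-7\beta)/2}\cK+CN^\kappa$. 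So the conclusion survives, but the step as written is invalid and skips the key cancellation.

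A secondary slip, in your aside on $\cK_L\cN_{\geq\frac12N^\alpha}$: you omit the prefactor $N^{\kappa-2\beta-1}$ that Corollary \ref{cor:VNgrowA} places in front of $\cK_L(\cN_{\geq\frac12N^\alpha}+1)$, and the inequality $\tfrac{\kappa-3\beta}{2}+1-2\alpha\leq\tfrac{3\kappa-7\beta}{2}$ you assert is false under \eqref{eq:alphabeta} (it would force $\alpha>1/2$). With the prefactor restored, and using $\cK_L\cN_{\geq\frac12N^\alpha}\leq 4N^{2\beta+1-2\alpha}\cK$, the relevant exponent is $\tfrac{\kappa-3\beta}{2}+(\kappa-2\beta-1)+(2\beta+1-2\alpha)$, which is $\leq\tfrac{3\kappa-7\beta}{2}$ precisely because $\beta\leq\alpha$; so the bookkeeping does close, just not via the inequality you wrote.
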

\begin{proof}
With Prop. \ref{prop:commCNA}, we expand
		\[\begin{split} 
		&e^{-sA}\cC_N e^{sA}-\cC_N\\
		 &=  2N^\kappa \int_0^sdt \; \!\!\! \!\!\!\sum_{r\in P_H, v\in P_L} \!\!\!\!\! \big[ \widehat{V}(r/N^{1-\kappa})\eta_r/N + \widehat{V}((v+r)/N^{1-\kappa}))\eta_r/N\big] e^{-tA}a^*_va_v (1-\cN_+/N) e^{tA} \\
		 &\hspace{0.4cm} +\int_0^sdt \; e^{-tA}\cE_{[\cC_N, A]} e^{-tA}.
		\end{split}\]
Now, using Plancherel's theorem observe that 
		\[\begin{split}
		&\bigg|\sum_{r\in P_H}\big[ \widehat{V}(r/N^{1-\kappa})\eta_r/N + \widehat{V}((v+r)/N^{1-\kappa}))\eta_r/N\big]\bigg|\\
		&\hspace{2cm}\leq CN^{\alpha+2\kappa-1} + C\int_{\Lambda}N^{3-3\kappa}V(N^{1-\kappa}x)w_\ell(N^{1-\kappa}x)\leq CN^{\kappa}. 
		\end{split}\]
The claim is now an immediate consequence of Lemma \ref{lm:FinftycNopA}, the bound \eqref{eq:propcommCNAerror}, Corollary \ref{cor:VNgrowA}, Lemma \ref{lm:NresgrowA}, Lemma \ref{lm:KresgrowA} and using the operator inequality $\cN_{\geq \frac 12 N^\alpha} \leq 4 N^{-2\alpha}\cK $. 
\end{proof}


\subsubsection{Analysis of $ \cJ_N^{(4)}$} \label{sec:JNHN}

The goal of this section is to determine the main contributions to $ \cJ_N^{(4)} =  e^{-A}\cH_N e^A$. As in \cite{BBCS}, it turns out that conjugating $\cH_N$ with the cubic exponential $e^A$ leads to a renormalization of the cubic term $ \cC_N$ of the quadratically renormalized Hamiltonian $\cG_N^{\text{eff}}$, defined in \eqref{eq:GNeff}. To see this, let's recall \eqref{eq:propcommKA}, \eqref{eq:propcomm1VN} and compute
		\begin{equation}\label{eq:eAHN1}\begin{split}
		 \cJ_N^{(4)}&= \cH_N + \int_0^1ds\;e^{-sA}[\cK + \cV_N, A]e^{sA}\\
		&= \cH_N - \int_0^1ds\; \frac{1}{\sqrt N}\sum_{u\in\Lambda^*_+, p\in P_L}N^\kappa \big(\widehat{V}(\cdot/N^{1-\kappa})\ast (\widehat{f}_N-\eta/N)\big)(u) \\
		&\hspace{5cm}\times e^{-sA}\big(b^*_{p+u}a^*_{-u}a_p+\text{h.c.}\big)e^{sA}\\
		&\hspace{0.4cm} +  \int_0^1ds\; e^{-sA}\big(  \cE_{[\cK,A]} + \cE_{[\cV_N,A]} \big)e^{sA}.
		\end{split}\end{equation}
Here, let us recall the definitions \eqref{eq:commKA2}, \eqref{eq:commVNA12} and that the operators $ \cE_{[\cK,A]} $ and $\cE_{[\cV_N,A]}$ are explicitly given by
		\begin{equation}\label{eq:commKAVNAerrors}
		\begin{split}
		\cE_{[\cK,A]} &=  \sum_{j=1}^3 (\Pi_j+\text{h.c.}), \hspace{0.4cm}\cE_{[\cV_N,A]} = \sum_{j=1}^4 (\Theta_j+\text{h.c.}).
		\end{split}
		\end{equation}
With $ (\widehat{f}_N-\eta/N)(q) = \delta_{q,0}$ for all $q\in \Lambda_+^*$, we obtain from \eqref{eq:eAHN1} and Corollary \ref{cor:eACN} that
		\begin{equation}\label{eq:eAHN2}
		\begin{split}
		 \cJ_N^{(4)}&= \cH_N- \cC_N  -\int_0^1ds\; \cE_{\cJ_N}^{(3)}(s)+  \int_0^1ds\; e^{-sA}\big(  \cE_{[\cK,A]} + \cE_{[\cV_N,A]} \big)e^{sA}\\
		&\hspace{0.4cm} -  N^\kappa \!\!\!\sum_{r\in P_H, v\in P_L} \!\!\! \big[ \widehat{V}(r/N^{1-\kappa})\eta_r/N + \widehat{V}((v+r)/N^{1-\kappa}))\eta_r/N\big] a^*_va_v (1-\cN_+/N)
		\end{split}
		\end{equation}
and we observe that the contribution $ -\cC_N$ will cancel exactly the contribution $\cC_N$ in $  \cJ_N^{(3)}$, determined in Corollary \ref{cor:eACN} (for $s=1$). Moreover, the quadratic contribution in the last line of \eqref{eq:eAHN2} combines with the corresponding contribution to  $  \cJ_N^{(3)}$ as well. 

To finish this section, it remains to extract the remaining leading order contributions to  $  \cJ_N^{(4)}$ from the integral terms in \eqref{eq:eAHN2}. It turns out that all contributions, but the term $ (\Pi_1+\text{h.c.})$ contained in  $\cE_{[\cK,A]}$, are error terms which can be neglected. To make this more precise, Corollary \ref{cor:eACN} implies first of all that there exists a constant $C>0$ s.t.
		\begin{equation}\label{eq:eAHN3} \pm \int_0^1ds\;e^A  \cE_{\cJ_N}^{(3)}(s) e^{-A}\leq C N^{(\kappa-3\beta)/2}\cV_N + C N^{(3\kappa-7\beta)/2}\cK +C N^{(4\kappa-3\beta)/2}\cN_+ +CN^\kappa
		\end{equation}
for all $ \alpha>3\beta + 2 \kappa \geq 0$ with $ \alpha+\kappa\leq 1$, $ 2\kappa-3\beta\leq 0$ and for all $N\in\NN$ sufficiently large. 

Next, we use \eqref{eq:commKAVNAerrors} and recall the bounds \eqref{eq:commKA5} and \eqref{eq:commKA6}. They imply that
		\[\begin{split}
		& \pm \int_0^1ds\; e^{(1-s)A} (\Pi_2+\Pi_3+\text{h.c.}) e^{-(1-s)A} \\
		 &\hspace{0cm} \leq CN^{-3\beta/2}\int_0^1ds\;  e^{(1-s)A} (\cN_{\geq \frac12N^\alpha}+1)e^{(1-s)A} + C \delta N^{-3\beta/2}\int_0^1ds\; e^{(1-s)A} \cK e^{-(1-s)A}\\
		 &\hspace{0.4cm} +\delta^{-1} C N^{-3\beta/2} \int_0^1ds\;  e^{(1-s)A} \cK_L e^{-(1-s)A}
		 \end{split}\]
for all $ \delta>0$, $ \alpha> 3\beta+2\kappa\geq 0$ and $N\in\NN$ sufficiently large. With Lemma \ref{lm:NresgrowA}, Lemma \ref{lm:KresgrowA}, Corollary \ref{cor:KgrowA} and $ \cN_{\geq \frac12N^\alpha}\leq N$ in $\cF_+^{\leq N}$, we deduce from the previous bound that
		\[\begin{split}
		& \pm \int_0^1ds\; e^{(1-s)A} (\Pi_2+\Pi_3+\text{h.c.}) e^{-(1-s)A} \\
		 &\hspace{0.4cm} \leq CN^{-3\beta/2} (\cN_{\geq \frac12N^\alpha}+1) + \delta C  N^{\kappa-3\beta/4} \cK + \delta C N^{-3\beta/2}\cV_N+\delta C N^{\alpha+2\kappa-3\beta/2} \\
		 &\hspace{0.8cm}+\delta^{-1} C N^{-3\beta/2}\cK_L  + \delta^{-1} C N^{-5\beta/2}(\cN_{\geq \frac12 N^\alpha}+1)
		 \end{split}\]
for all $\delta>0$, $ \alpha> 3\beta+2\kappa\geq 0$ with $ \alpha+\kappa\leq 1$, $ \beta\geq \kappa$. Choosing $ \delta = N^{-\beta}$ in the last bound and using that $ \cN_{\geq \frac12 N^{\alpha}}\leq 4 N^{-2\alpha}\cK$, we find
		\begin{equation}\label{eq:eAHN4}\begin{split}
		& \pm \int_0^1ds\; e^{(1-s)A} (\Pi_2+\Pi_3+\text{h.c.}) e^{-(1-s)A} \\
		 &\hspace{3cm} \leq C N^{-\beta/2}\cK   +   C N^{-5\beta/2}\cV_N + C  N^{ \alpha+2\kappa-5\beta/2}
		 \end{split}\end{equation}
for all $ \alpha> 3\beta+2\kappa\geq 0$ with $ \alpha+\kappa\leq1$, $ \beta\geq \kappa$.

Going back to \eqref{eq:eAHN2} and using the estimate \eqref{eq:propcommVNAerror}, we obtain that
		\[\begin{split}
		&\pm \int_0^1ds\; e^{(1-s)A} \cE_{[\cV_N,A]} e^{-(1-s)A}\\
		&\hspace{0.4cm}\leq \int_0^1ds\; e^{(1-s)A} \Big[\wt\delta \cV_N   + \wt\delta^{-1}CN^{\kappa-2\beta -1} \cK_L (\cN_{\geq \frac12 N^{\alpha}}+1)+ \wt\delta^{-1}C N^{2\alpha  +3\kappa-2} \cN_+\\
		&\hspace{3.4cm} +  \wt\delta^{-1}CN^{\kappa -1} (\cN_{\geq \frac12 N^{\alpha}}+1)^2 \Big]e^{-(1-s)A} \\
		&\hspace{0.4cm}\leq \int_0^1ds\; e^{(1-s)A} \Big[\wt \delta \cV_N   + \wt\delta^{-1}CN^{\kappa-2\beta } \cK_L  \Big]e^{-(1-s)A} + \wt\delta^{-1}C N^{2\alpha  +3\kappa-2} \cN_+\\
		 &\hspace{0.8cm}+\wt\delta^{-1}C N^{2\alpha  +3\kappa-2}+ \wt\delta^{-1}CN^{\kappa -1}.
		\end{split}\]
Setting $ \widetilde \delta = N^{\mu} $ for $ \mu = \max (\alpha+3/2\kappa-1, \kappa/2-\beta)$, Corollary \ref{cor:VNgrowA} and Lemma \ref{lm:KresgrowA} imply 
		\begin{equation}\label{eq:eAHN5}\begin{split}
		&\pm \int_0^1ds\; e^{(1-s)A} \cE_{[\cV_N,A]} e^{-(1-s)A}\leq CN^{\mu+\kappa}\cH_N     + CN^{\mu } \cN_+  +CN^{\mu } +CN^{\mu+2\beta-1} 
		 \end{split}\end{equation}
for all $ \alpha> 3\beta+2\kappa\geq 0$ with $ \alpha+\kappa\leq 1$ and all $N\in\NN$ large enough.

Looking back at \eqref{eq:eAHN2} and collecting the bounds \eqref{eq:eAHN3} to \eqref{eq:eAHN5}, it only remains to analyse the operator $( \Pi_1+\text{h.c.})$ in the definition \eqref{eq:commKAVNAerrors} of the operator $\cE_{[\cK,A]} $. Using that $ \widehat{V}(./N^{1-\kappa})\ast  \widehat{f}_{N}= \widehat{Vf_\ell}(./N^{1-\kappa}) $, let us recall that $ \Pi_1$ is explicitly given by
		\[\Pi_1 =\frac{1}{\sqrt N} \sum_{\substack{ p\in P_H^c, q\in P_L:\\ p+q\neq 0 }} N^{\kappa} \widehat{Vf_\ell}(p/N^{1-\kappa}) b^*_{q+p} a^*_{-p}a_q.  \]
The following lemma analyses slightly more general operators than $ \Pi_1$, after conjugation with $e^{sA}$ for any $ s\in [-1;1]$. It will also be useful in the proof of Proposition \ref{prop:JN}.
\begin{lemma}\label{lm:cubicrenA}
Assume $\alpha,\beta$ satisfy (\ref{eq:alphabeta}). Let $ F= (F_p)_{p\in\Lambda_+^*)}\in \ell^\infty( \Lambda_+^* )$. Then, there exists a constant $ C>0$ s.t.
		\begin{equation*}
		\begin{split}
		\pm&\bigg[ \frac{1}{\sqrt N} \sum_{\substack{ p\in P_H^c, q\in P_L:\\ p+q\neq 0 }} F_p e^{-sA}\big(b^*_{q+p} a^*_{-p}a_q+\text{h.c.}\big)e^{sA} - \frac{1}{\sqrt N} \sum_{\substack{ p\in P_H^c, q\in P_L:\\ p+q\neq 0 }} F_p \big(b^*_{q+p} a^*_{-p}a_q+\emph{h.c.}\big) \bigg] \\
		&\hspace{0cm}\leq C\| F\|_\infty N^{  \kappa +\alpha   } \langle \xi, (\cN_{\geq \frac12 N^\alpha}+1)\xi \rangle
		\end{split}
		\end{equation*}
for all $ s\in [-1;1]$ and $N\in\NN$ sufficiently large.
\end{lemma}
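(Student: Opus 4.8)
The plan is to bound the difference
$$
\mathcal{E}(s) := \frac{1}{\sqrt N} \sum_{\substack{p\in P_H^c,\, q\in P_L:\\ p+q\neq 0}} F_p\, e^{-sA}\big(b^*_{q+p}a^*_{-p}a_q + \mathrm{h.c.}\big)e^{sA} - \frac{1}{\sqrt N} \sum_{\substack{p\in P_H^c,\, q\in P_L:\\ p+q\neq 0}} F_p\big(b^*_{q+p}a^*_{-p}a_q + \mathrm{h.c.}\big)
$$
by writing it as an integral of a commutator, $\mathcal{E}(s) = \int_0^s dt\; e^{-tA}\,[\,\mathcal{Y}, A_1\,]\,e^{tA} + \mathrm{h.c.}$, where $\mathcal{Y} = N^{-1/2}\sum F_p b^*_{q+p}a^*_{-p}a_q$ and $A_1$ is as in \eqref{eq:defA}. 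This is completely parallel to the structure used in Prop.~\ref{prop:MN3} (where $\mathrm{Y}$ plays the role of $\mathcal{Y}$) and in Cor.~\ref{cor:eACN}. First I would compute the commutator $[\mathcal{Y}, A_1]$ explicitly: since for $p\in P_H^c$, $q\in P_L$, $r\in P_H$, $v\in P_L$ one has $|v+r|\geq N^\alpha - N^\beta > N^\beta$ for $N$ large, the operators $a^*_{-p}a_q$ and $b^*_{v+r}a^*_{-r}$ commute, and the only nontrivial contractions come from pairing the $b_{q+p}$-part of the annihilation term (or the $b^*_{q+p}$-part of the creation term) against $A_1$, plus the contraction of $a_q$ against $a^*_{-r}$ or $a^*_{-p}$ against $b^*_{v+r}$. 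The result is a sum of a few normally-ordered terms, each of the schematic form $N^{-1}\sum \eta_r F_{(\cdot)}\, b^\sharp a^\sharp a^\sharp (\cdots)$ with one $b$-operator, one or two $a$-operators carrying the small factor $\eta_r$, and the momentum constraint that at least one annihilation operator has momentum $\geq \tfrac12 N^\alpha$.

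Next I would estimate each of these terms after conjugation by $e^{tA}$. The technique is exactly that of \eqref{eq:NresgrowlemA4}, of $\Psi_1,\dots,\Psi_6$ in Prop.~\ref{prop:MN3} and of $\Xi_1,\dots,\Xi_5$ in Prop.~\ref{prop:commCNA}: apply Cauchy--Schwarz splitting the $\eta_r$-weight, insert a factor $(\cN_{\geq \frac12 N^\alpha}+1)^{\pm 1/2}$ using the pull-through bounds \eqref{eq:Ntheta-bds}, use $\|F\|_\infty$ to bound the $F$-factors, and use $\|\eta_H\|\leq C N^{\kappa - \alpha/2}$ together with the $N^{-1/2}$ (or $N^{-1}$) prefactor and the fact that $|P_L|\leq C N^{3\beta}$. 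A typical term is controlled by
$$
\frac{C\|F\|_\infty}{N}\sum_{\substack{r\in P_H,\, v,q\in P_L}}|\eta_r|\,\| a_{v+r}(\cN_{\geq \frac12 N^\alpha}+1)^{-1/2} a_{-r}\,e^{tA}\xi\|\,\| a_v (\cN_{\geq \frac12 N^\alpha}+1)^{1/2} a_q\, e^{tA}\xi\|,
$$
which by Cauchy--Schwarz, $\sum_{q\in P_L}\|a_q \cdot\|^2 = \|\cN_{\leq N^\beta}^{1/2}\cdot\|^2 \leq \|(\cN_{\geq \frac12 N^\alpha}+1)^{1/2}\cdot\|^2$, and $\|\eta_H\|$ is bounded by $C\|F\|_\infty N^{3\beta/2 + \kappa - \alpha/2}\langle e^{tA}\xi, (\cN_{\geq \frac12 N^\alpha}+1)\,e^{tA}\xi\rangle$; terms with one extra $a$-operator and prefactor $N^{-3/2}$ give instead $C\|F\|_\infty N^{\alpha+\kappa}\langle e^{tA}\xi,(\cN_{\geq \frac12 N^\alpha}+1)\,e^{tA}\xi\rangle$ (as in the bound on $\Xi_4,\Xi_5$), using $\cN_{\geq\frac12 N^\alpha}\leq N$ and $|p+q|$ or $|v+r|\leq C N^\alpha$ to absorb the extra momentum sum. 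The exponent $\kappa+\alpha$ stated in the lemma is the worst of these, since $\alpha > 3\beta + 2\kappa$ makes $3\beta/2+\kappa-\alpha/2 \leq \alpha+\kappa$.

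Finally I would remove the conjugation on $(\cN_{\geq \frac12 N^\alpha}+1)$ by invoking Lemma~\ref{lm:NresgrowA} (with $m=1$, $\gamma=\alpha$, $c=\tfrac12$, $k=0$), which gives $e^{-tA}(\cN_{\geq \frac12 N^\alpha}+1)e^{tA}\leq C(\cN_{\geq \frac12 N^\alpha}+1)$ uniformly in $t\in[-1;1]$, and then integrate in $t$ over $[0,s]\subseteq[-1;1]$. Adding the hermitian conjugate doubles the constant. Since all the bounds are independent of $s\in[-1;1]$ and of $\xi\in\cF_+^{\leq N}$, this yields
$$
\pm\,\mathcal{E}(s)\leq C\|F\|_\infty N^{\kappa+\alpha}(\cN_{\geq \frac12 N^\alpha}+1)
$$
for $N$ large, which is the claim. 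I expect the main obstacle to be purely bookkeeping: carefully enumerating all the normally-ordered pieces of $[\mathcal{Y},A_1]$ (there are several, with delicate momentum restrictions coming from which momenta land in $P_H^c$, $P_L$ or $P_H$) and checking that each one genuinely carries an annihilation operator with momentum $\geq \tfrac12 N^\alpha$ so that the factor $\cN_{\geq \frac12 N^\alpha}$ can legitimately be extracted — the individual estimates themselves are routine repetitions of the Cauchy--Schwarz arguments already used repeatedly in Sections~\ref{sec:cubic} and~\ref{sec:MN}.
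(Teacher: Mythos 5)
Your proposal follows essentially the same route as the paper: Duhamel's formula for the difference, explicit normal ordering of the commutator of the cubic observable with $A_1$ (yielding the terms $\Upsilon_1,\dots,\Upsilon_7$ in the paper), term-by-term Cauchy--Schwarz with the pull-through bounds \eqref{eq:Ntheta-bds} and $\|\eta_H\|\leq CN^{\kappa-\alpha/2}$, and finally Lemma \ref{lm:NresgrowA} to remove the conjugation from $(\cN_{\geq \frac12 N^\alpha}+1)$. The only inaccuracy is in your informal description of which contractions survive (e.g.\ $a_q$ cannot pair with $a^*_{-r}$ since $q\in P_L$ while $-r\in P_H$; in the h.c.\ part the surviving pairings are $a_{-p}$ and $b_{p+q}$ against $b^*_{v+r},a^*_{-r}$, plus $a_v$ against $a^*_q$ and the $1/N$ correction from the $b$-commutators), but this is exactly the bookkeeping you flag, and the estimates and final exponent $N^{\kappa+\alpha}$ match the paper's.
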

\begin{proof}
Let us set 
		\[\text{X}:= \frac{1}{\sqrt N} \sum_{\substack{ p\in P_H^c, q\in P_L:\\ p+q\neq 0 }} F_p b^*_{q+p} a^*_{-p}a_q.\]
From the identity
		\begin{equation}\label{eq:cubicrenA1}
		\begin{split}
		&e^{-sA}\big( \text{X} +\text{h.c.}) e^{sA} - (\text{X}+\text{h.c.}) 
		 =  \int_{0}^{s}dt\; e^{-tA}[ ( \text{X} +\text{h.c.}), A_1] e^{tA} +\text{h.c.},
		\end{split}
		\end{equation}
we conclude that it suffices to control the commutator $[ (\text{X}+\text{h.c.}), A_1] $ after conjugation with $e^{tA}$, uniformly in $t\in [-1;1]$. From the proof of Proposition \ref{prop:commCNA}, we collect
		\[[b^*_{p+q}a^*_{-p}a_q, b^*_{v+r}a^*_{-r}a_v] = -b^*_{v+r }b^*_{-r}a^*_{-p}a_q\delta_{p+q,v}  - b^*_{v+r}b^*_{p+q}a^*_{-r}a_q \delta_{-p,v} \]
and 
		\[\begin{split}
		&[a^*_qa_{-p}b_{p+q}, b^*_{v+r}a^*_{-r}a_v]\\ 
		&= a^*_q a_v (1-\cN_+/N)  \delta_{v+r,-p}\delta_{p+q,-r} + a^*_qa^*_{-r}a_{p+q}a_v (1-\cN_+/N)\delta_{v+r,-p}\\
		&\hspace{0.4cm}  + a^*_{v+r}a^*_{q}a_{-p}a_v (1-\cN_+/N)\delta_{p+q,-r} + a^*_qa^*_{-r}a_{-p}a_v (1-\cN_+/N)\delta_{p+q,v+r}\\
		&\hspace{0.4cm}  - b^*_{v+r}a^*_{-r}a_{-p}b_{p+q}\delta_{v,q} -\frac1N  a^*_q a^*_{v+r}a^*_{-r}a_va_{-p}a_{p+q} \\
		\end{split}\]
for all $ p\in P_H^c, q\in P_L$, $ r\in P_H, v\in P_L$ and $N\in\NN$ large enough. Consequently, we find
		\[[ (\text{X}+\text{h.c.}), A_1]  = \sum_{j=1}^7 (\Upsilon_j+\text{h.c.}\big),\]
where the operators $ \Upsilon_j, j=1,\dots,5$, are defined by
		\begin{equation}\label{eq:cubicrenA2}
		\begin{split}
		\Upsilon_{1} &= \frac{1}{N} \sum_{\substack{ r\in P_H, v\in P_L:\\ r+v\in P_H^C  }}^*  F_{r+v}  \eta_r a^*_{v  }a_v(1-\cN_+/N) , \\
		\Upsilon_{2} &= - \frac{1}{N} \sum_{\substack{ r\in P_H, q, v\in P_L }}^* \big[ F_{v-q}\eta_r +F_v\eta_r\big] b^*_{v+r }b^*_{-r}a^*_{q-v}a_q , \\
		\Upsilon_{3} &=  \frac{1}{N} \sum_{\substack{r\in P_H, q, v\in P_L:\\ r+v\in P_H^C }}^*  F_{v+r}\eta_ra^*_qa^*_{r}a_{q+v+r}a_{-v} (1-\cN_+/N), \\		\Upsilon_{4} &=  \frac{1}{N} \sum_{\substack{r\in P_H, q, v\in P_L:\\ q+r+v\in P_H^C }}^*F_{v+r+q}\eta_r  a^*_qa^*_{r}a_{q+v+r}a_{-v} (1-\cN_+/N) , \\
		\Upsilon_{5} &=  \frac{1}{N} \sum_{\substack{r\in P_H, q, v\in P_L:\\ r+q\in P_H^C}}^*  F_{r+q}\eta_r a^*_{v+r}a^*_{q}a_{r+q}a_v (1-\cN_+/N), \\
		\Upsilon_{6}&= - \frac{1}{N} \sum_{\substack{ p\in P_H^c, r\in P_H, v\in P_L}}^* F_p\eta_r b^*_{v+r}a^*_{-r}a_{-p}b_{p+v} , \\
		\Upsilon_{7} &= - \frac{1}{N^2} \sum_{p\in P_H^c, r\in P_H, q, v\in P_L}^*  F_p\eta_r  a^*_q a^*_{v+r}a^*_{-r}a_va_{-p}a_{p+q}. \\	
		\end{split}
		\end{equation}
To control the different contributions in \eqref{eq:cubicrenA2}, we apply as usual Cauchy-Schwarz. Given any $\xi\in\cF_+^{\leq N}$, we bound the first two operators in \eqref{eq:cubicrenA2} by
		\[\begin{split}
		|\langle \xi,  \Upsilon_1 \xi\rangle| &\leq  C \|F\|_\infty N^{\kappa-1}\sum_{N^\alpha\leq |r|\leq N^\alpha+N^\beta} |r|^{-2} \langle \xi, (\cN_+ +1)\xi \rangle \\
		&\leq C\|F\|_\infty N^{\beta+\kappa-1}\langle \xi, (\cN_+ +1)\xi \rangle\leq C \|F\|_\infty N^{\beta+\kappa}
		\end{split}\]
and
		\[\begin{split}
		|\langle \xi,  \Upsilon_2 \xi\rangle| &\leq \frac{C\|F\|_\infty}{N} \!\!\!\!\!\sum_{\substack{ r\in P_H, q, v\in P_L }}^*\!\!\!\!\!\! |\eta_r| \| (\cN_{\geq \frac12 N^\alpha}+1)^{-1/2}a_{q-v} a_{v+r }a_{-r}\xi\| \| (\cN_{\geq \frac12 N^\alpha}+1)^{1/2} a_q \xi\| \\
		&\leq C\|F\|_\infty N^{3\beta/2 + \kappa -\alpha/2  } \langle \xi, (\cN_{\geq \frac12 N^\alpha}+1)\xi \rangle\leq C \|F\|_\infty \langle \xi, (\cN_{\geq \frac12 N^\alpha}+1)\xi \rangle
		\end{split}\]
for all $ \alpha>3\beta+2\kappa\geq 0$ and $N\in\NN$ sufficiently large. In the same way, we find that
		\[\begin{split}
		|\langle \xi,  (\Upsilon_3 +\Upsilon_4+\Upsilon_5) \xi\rangle| &\leq C \|F\|_\infty\langle \xi, (\cN_{\geq \frac12 N^\alpha}+1)\xi \rangle
		\end{split}\]
as well as
		\[\begin{split}
		|\langle \xi,  \Upsilon_6 \xi\rangle| &\leq \frac{C \|F\|_\infty}{N} \!\!\!\!\!\!\sum_{\substack{ p\in P_H^c, r\in P_H, \\v\in P_L}}^*\!\!\!\!\!\! |\eta_r | \| (\cN_{\geq \frac12 N^\alpha}+1)^{-1/2}a_{v+r}a_{-r} \xi\| \| (\cN_{\geq \frac12 N^\alpha}+1)^{1/2}a_{-p}a_{p+v}\xi\|\\
		&\leq C\|F\|_\infty N^{  \kappa +\alpha   } \langle \xi, (\cN_{\geq \frac12 N^\alpha}+1)\xi \rangle.
		\end{split}\]
Finally, we have that 
		\[\begin{split}
		|\langle \xi,  \Upsilon_7 \xi\rangle| &\leq \frac{C\|F\|_\infty}{N^2} \sum_{p\in P_H^c, r\in P_H, q, v\in P_L}^*  \|   (\cN_{\geq \frac12 N^\alpha}+1)^{-1/2}a_q a_{v+r}a_{-r}\xi \| \\
		&\hspace{4.5cm}\times |\eta_r|\|(\cN_{\geq \frac12 N^\alpha}+1)^{1/2}a_va_{-p}a_{p+q} \xi\| \\
		&\leq C \|F\|_\infty N^{  \kappa +\alpha   } \langle \xi, (\cN_{\geq \frac12 N^\alpha}+1)\xi \rangle.
		\end{split}\]
Combining the last five bounds with equation \eqref{eq:cubicrenA1}, equation \eqref{eq:cubicrenA2} and Lemma \ref{lm:NresgrowA}, we conclude the proof of Lemma \ref{lm:cubicrenA}.
\end{proof}
Since $\sup_{p\in\Lambda^*_+}|\widehat{Vf_\ell}(./N^{1-\kappa}) |\leq C$, we obtain immediately the following corollary.
\begin{cor} Assume $\alpha,\beta$ satisfy (\ref{eq:alphabeta}). Let $ \Pi_1$ be defined as in equation \eqref{eq:commKA2}. Then, there exists a constant $ C>0$ such that 
		\begin{equation}\label{eq:cubicrenAPi1}
		\begin{split}
		\pm&\big[ e^{-sA}(\Pi_1+\emph{h.c.})e^{sA} - (\Pi_1+\emph{h.c.}) \big] \leq CN^{  2\kappa +\alpha   } \langle \xi, (\cN_{\geq \frac12 N^\alpha}+1)\xi \rangle
		\end{split}
		\end{equation}
for all $ s\in [-1;1]$ and $N\in\NN$ sufficiently large.
\end{cor}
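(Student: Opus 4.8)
The plan is to deduce this corollary directly from Lemma~\ref{lm:cubicrenA}. First I would recall that, by the definition \eqref{eq:commKA2} together with the identity $\widehat{V}(\cdot/N^{1-\kappa})\ast\widehat{f}_N = \widehat{Vf_\ell}(\cdot/N^{1-\kappa})$ used in the text, the operator $\Pi_1$ can be written as
\begin{equation*}
\Pi_1 = \frac{1}{\sqrt N}\sum_{\substack{p\in P_H^c,\, q\in P_L:\\ p+q\neq 0}} N^\kappa\,\widehat{Vf_\ell}(p/N^{1-\kappa})\, b^*_{q+p}a^*_{-p}a_q .
\end{equation*}
This is exactly an operator of the form $\frac{1}{\sqrt N}\sum_{p\in P_H^c,\,q\in P_L,\,p+q\neq 0} F_p\, b^*_{q+p}a^*_{-p}a_q$ treated in Lemma~\ref{lm:cubicrenA}, with the choice $F_p = N^\kappa\,\widehat{Vf_\ell}(p/N^{1-\kappa})$ for $p\in P_H^c$ (and $F_p=0$ otherwise).

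The one input that must be verified is the uniform bound $\|F\|_\infty\le CN^\kappa$. This is immediate: by Lemma~\ref{sceqlemma}, part ii), one has $0\le f_\ell\le 1$, and since $V\in L^3(\bR^3)$ is pointwise non-negative with compact support it also lies in $L^1(\bR^3)$; hence $Vf_\ell\in L^1(\bR^3)$ and $\|\widehat{Vf_\ell}\|_\infty \le \|Vf_\ell\|_{L^1}\le \|V\|_{L^1}<\infty$, so that indeed $\sup_{p\in\Lambda_+^*}|N^\kappa\widehat{Vf_\ell}(p/N^{1-\kappa})|\le CN^\kappa$. Inserting $\|F\|_\infty\le CN^\kappa$ into the conclusion of Lemma~\ref{lm:cubicrenA} gives, uniformly in $s\in[-1;1]$ and for $N$ large,
\begin{equation*}
\pm\big[ e^{-sA}(\Pi_1+\text{h.c.})e^{sA} - (\Pi_1+\text{h.c.})\big] \le C\,N^\kappa\cdot N^{\kappa+\alpha}\,(\cN_{\geq \frac12 N^\alpha}+1) = C\,N^{2\kappa+\alpha}\,(\cN_{\geq \frac12 N^\alpha}+1),
\end{equation*}
which is the asserted bound \eqref{eq:cubicrenAPi1}.

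Since the statement reduces entirely to Lemma~\ref{lm:cubicrenA}, I do not expect any genuine difficulty; the only point deserving a moment of care is that the multiplier $\widehat{Vf_\ell}(\cdot/N^{1-\kappa})$ — unlike the original multiplier $\widehat{V}(\cdot/N^{1-\kappa})$ — is not the rescaling of a single fixed function of $p$, because $f_\ell$ depends on $N$ through the scale $N^{1-\kappa}\ell$; nonetheless its $\ell^\infty$-norm stays controlled uniformly in $N$ by the $N$-independent quantity $\|V\|_{L^1}$ combined with $0\le f_\ell\le1$, which is all that is needed to meet the hypothesis $F\in\ell^\infty(\Lambda_+^*)$ of Lemma~\ref{lm:cubicrenA} with the explicit constant $CN^\kappa$. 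If for some reason Lemma~\ref{lm:cubicrenA} were unavailable, the alternative would be to rerun its proof: expand $e^{-sA}(\Pi_1+\text{h.c.})e^{sA}-(\Pi_1+\text{h.c.}) = \int_0^s e^{-tA}[\Pi_1+\text{h.c.},A_1]e^{tA}\,dt+\text{h.c.}$, compute $[\Pi_1+\text{h.c.},A_1]$ via the canonical commutation relations \eqref{eq:ccr} and \eqref{eq:comm-bp} into the seven terms $\Upsilon_1,\dots,\Upsilon_7$ of \eqref{eq:cubicrenA2}, bound each by Cauchy--Schwarz using $|\eta_r|\le CN^\kappa|r|^{-2}$, the momentum separation $\alpha>3\beta+2\kappa$, and the pull-through estimates \eqref{eq:Ntheta-bds}, and finally absorb the conjugation $e^{tA}$ through the number-operator growth bound of Lemma~\ref{lm:NresgrowA}.
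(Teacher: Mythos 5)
Your proposal is correct and is exactly the paper's argument: the paper deduces the corollary in one line from Lemma \ref{lm:cubicrenA} together with the observation that $\sup_{p}|\widehat{Vf_\ell}(p/N^{1-\kappa})|\leq C$, which is precisely your choice $F_p=N^\kappa\widehat{Vf_\ell}(p/N^{1-\kappa})$ with $\|F\|_\infty\leq CN^\kappa$. Your verification of the sup bound via $0\leq f_\ell\leq 1$ and $V\in L^1$ is the right justification and adds useful detail the paper leaves implicit.
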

Another consequence is the following result describing $ \cJ_N^{(4)}$, up to small errors. 
\begin{cor}\label{cor:JN4} Assume $\alpha,\beta$ satisfy (\ref{eq:alphabeta}) and let $ \mu = \max(\alpha+3\kappa/2-1,\kappa/2-\beta)$. Let $ \cJ_N^{(4)}$ be defined as in \eqref{eq:JN0to4}. Then, we have  
		\[\begin{split} 
		\cJ_N^{(4)}&= \cH_N- \cC_N + \frac{1}{\sqrt N} \sum_{\substack{ p\in P_H^c, q\in P_L:\\ p+q\neq 0 }} N^{\kappa} \widehat{Vf_\ell}(p/N^{1-\kappa}) (b^*_{q+p} a^*_{-p}a_q+\emph{h.c.}) \\
		&\hspace{0.4cm} -  N^\kappa \!\!\!\sum_{r\in P_H, v\in P_L} \!\!\! \big[ \widehat{V}(r/N^{1-\kappa})\eta_r/N + \widehat{V}((v+r)/N^{1-\kappa}))\eta_r/N\big] a^*_va_v (1-\cN_+/N) \\
		&\hspace{0.4cm} +  \cE_{\cJ_N}^{(4)},
		\end{split}\]
where there exists a constant $C>0$ such that 
		\[\begin{split}
		\pm e^A \cE_{\cJ_N}^{(4)}e^{-A} & \leq C m_0 ( N^{-\beta/2} + N^{\kappa +\mu } )\cK + CN^{\kappa+\mu }\cV_N + CN^\mu \cN_+ + CN^{\alpha+2\kappa}+CN^{\mu}
		\end{split}\]
for all $N\in\NN$ sufficiently large.
\end{cor}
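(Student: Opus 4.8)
The statement is essentially a careful reorganization of the estimates collected in Section~\ref{sec:JNHN}, so the plan is an assembly argument. The starting point is the Duhamel identity \eqref{eq:eAHN1}, which writes $\cJ_N^{(4)} = \cH_N + \int_0^1 ds\, e^{-sA}[\cK+\cV_N,A]e^{sA}$. I would insert the explicit formulas for $[\cK,A]$ and $[\cV_N,A]$ from Proposition~\ref{prop:commKA} and Proposition~\ref{prop:comm1VN}. Since $\widehat V(\cdot/N^{1-\kappa})\ast(\widehat f_N-\eta/N)(u) = \widehat V(u/N^{1-\kappa})$ for $u\in\Lambda_+^*$, the leading cubic pieces of these two commutators combine into one cubic operator which, after conjugation with $e^{sA}$ and application of Corollary~\ref{cor:eACN}, produces exactly $-\cC_N$ together with the quadratic term $-N^\kappa\sum_{r\in P_H,\,v\in P_L}[\widehat V(r/N^{1-\kappa})\eta_r/N + \widehat V((v+r)/N^{1-\kappa})\eta_r/N]\,a^*_va_v(1-\cN_+/N)$ and a remainder $-\int_0^1 ds\,\cE_{\cJ_N}^{(3)}(s)$. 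This is precisely \eqref{eq:eAHN2}; I keep the term $-\cC_N$ as it stands (it will later cancel the $\cC_N$ appearing in $\cJ_N^{(3)}$ via Corollary~\ref{cor:eACN} at $s=1$).

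Next I would bound the remaining error operators one at a time. The contribution $\int_0^1 ds\,\cE_{\cJ_N}^{(3)}(s)$ is controlled, after conjugation with $e^A$, directly by Corollary~\ref{cor:eACN}, giving \eqref{eq:eAHN3}. The operator $\cE_{[\cK,A]}=\sum_{j=1}^3(\Pi_j+\mathrm{h.c.})$ splits into two fundamentally different parts: the piece $(\Pi_1+\mathrm{h.c.})$, whose kernel is exactly $N^\kappa\,\widehat{Vf_\ell}(p/N^{1-\kappa})$ on $P_H^c\times P_L$, is \emph{not} an error — by Lemma~\ref{lm:cubicrenA} its conjugation with $e^{sA}$ equals itself up to the small correction \eqref{eq:cubicrenAPi1}, so it reproduces the stated cubic renormalization term; the pieces $(\Pi_2+\Pi_3+\mathrm{h.c.})$ are genuine errors, controlled from the pointwise estimates \eqref{eq:commKA5}, \eqref{eq:commKA6} and then Lemma~\ref{lm:NresgrowA}, Lemma~\ref{lm:KresgrowA} and Corollary~\ref{cor:KgrowA} to absorb the conjugations $e^{\pm(1-s)A}$, with the free parameter chosen as $\delta=N^{-\beta}$, yielding \eqref{eq:eAHN4}. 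Finally, $\int_0^1 ds\, e^{(1-s)A}\cE_{[\cV_N,A]}e^{-(1-s)A}$ is estimated from \eqref{eq:propcommVNAerror} together with Corollary~\ref{cor:VNgrowA} and Lemma~\ref{lm:KresgrowA}, choosing the parameter in \eqref{eq:propcommVNAerror} equal to $N^\mu$, giving \eqref{eq:eAHN5}.

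To conclude, I would collect \eqref{eq:eAHN2}--\eqref{eq:eAHN5} and \eqref{eq:cubicrenAPi1}, set $\cE_{\cJ_N}^{(4)}$ equal to the sum of all the remainders, conjugate by $e^A$, and use the operator inequality $\cN_{\geq \frac12 N^\alpha}\le 4N^{-2\alpha}\cK$ to rewrite every term in terms of $\cK$, $\cV_N$ and $\cN_+$. What is then left is exponent bookkeeping: checking, under the hypotheses \eqref{eq:alphabeta} (in particular $\alpha>3\beta+2\kappa$, $3\alpha/2+2\kappa<1$, $\beta>3\kappa/2$, $\beta<1/2$), that each exponent produced — $\kappa-3\beta/2$, $-\beta/2$, $\alpha+2\kappa-5\beta/2$, $\mu+\kappa$, $\mu$, $\alpha+2\kappa$, and the ones coming from Corollary~\ref{cor:KgrowA} — is dominated by one of $m_0(N^{-\beta/2}+N^{\kappa+\mu})$ in front of $\cK$, $N^{\kappa+\mu}$ in front of $\cV_N$, $N^\mu$ in front of $\cN_+$, or the constant terms $N^{\alpha+2\kappa}+N^\mu$. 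I expect this exponent arithmetic — rather than any single analytic estimate — to be the main obstacle, together with tracking the $m_0$-dependent constant coming from the dyadic-shell decomposition in the $\Pi_1$-type bounds and in Corollary~\ref{cor:KgrowA}, and making sure each conjugation lemma (Lemma~\ref{lm:NresgrowA}, Lemma~\ref{lm:KresgrowA}, Corollary~\ref{cor:KgrowA}, Corollary~\ref{cor:VNgrowA}) is applied only with $\gamma\le\alpha$ and with the observables in the precise form those lemmas require.
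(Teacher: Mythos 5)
Your proposal is correct and follows essentially the same route as the paper: the paper's proof of Corollary~\ref{cor:JN4} is exactly the assembly of the identity \eqref{eq:eAHN2} with the bounds \eqref{eq:eAHN3}--\eqref{eq:eAHN5} and \eqref{eq:cubicrenAPi1}, including your identification of $(\Pi_1+\mathrm{h.c.})$ as the surviving cubic renormalization term and the parameter choices $\delta=N^{-\beta}$ and $\widetilde\delta=N^{\mu}$.
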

\begin{proof}This is an immediate consequence of the identity \eqref{eq:eAHN2} and the bounds \eqref{eq:eAHN3}, \eqref{eq:eAHN4}, \eqref{eq:eAHN5} as well as \eqref{eq:cubicrenAPi1}.
\end{proof}

\subsection{Proof of Proposition \ref{prop:JN}}\label{sec:JNprop}

Applying Corollary \ref{cor:JN0}, Proposition \ref{prop:JN2}, Corollary \ref{cor:eACN} and Corollary \ref{cor:JN4}, we obtain 
		\begin{equation}\label{eq:propJN1}\begin{split}
		 \cJ_N &=  4\pi \frak{a}_0 N^{\kappa} (N-\cN_+) + \big[\widehat V(0)-4\pi \frak{a}_0\big]N^{\kappa} \cN_+ (1-\cN_+/N)  \\
		 &\hspace{0.4cm} + N^{\kappa}\widehat V(0)\sum_{p\in P_H^c}  b^*_pb_p + 4\pi \frak{a}_0N^{\kappa} \sum_{p\in P^c_H}  \big[ b^*_p b^*_{-p} + b_p b_{-p} \big]\\
		 &\hspace{0.4cm} + N^\kappa \!\!\!\sum_{r\in P_H, v\in P_L} \!\!\! \big[ \widehat{V}(r/N^{1-\kappa})\eta_r/N + \widehat{V}((v+r)/N^{1-\kappa}))\eta_r/N\big] a^*_va_v (1-\cN_+/N)\\
		 &\hspace{0.4cm} + \frac{1}{\sqrt N} \sum_{\substack{ p\in P_H^c, q\in P_L:\\ p+q\neq 0 }} N^{\kappa} \widehat{Vf_\ell}(p/N^{1-\kappa}) (b^*_{q+p} a^*_{-p}a_q+\text{h.c.})+\cH_N  \\
		 &\hspace{0.4cm}+ \cE_{\cJ_N}^{(0)}+ \cE_{\cJ_N}^{(2)} +\cE_{\cJ_N}^{(3)}+\cE_{\cJ_N}^{(4)},
		 \end{split}\end{equation}
where we have set $ \cE_{\cJ_N}^{(3)} = \cE_{\cJ_N}^{(3)}(1)$. For $\mu = \max(3\alpha/2+\kappa-1, \kappa/2-\beta) $, we know that 
		\begin{equation}\label{eq:propJN2}\begin{split}
		\pm &e^A\Big( \cE_{\cJ_N}^{(0)}+ \cE_{\cJ_N}^{(2)} +\cE_{\cJ_N}^{(3)}+\cE_{\cJ_N}^{(4)}\Big)e^{-A}\\
		&\hspace{1cm}\leq C( N^{-\beta/2} + N^{\kappa+\mu} ) \cK + CN^{\kappa+\mu}\cV_N + CN^{\mu}\cN_+ + C( N^{\alpha+2\kappa} + N^{\mu}).
		\end{split}\end{equation}
Hence, let's evaluate the remaining contributions to $\cJ_N$. With \eqref{eq:modetap}, we use the bound
		\[ \sum_{r\in P_H^c\cup\{0\} }\big| \widehat{V}(r/N^{1-\kappa})\eta_r/N + \widehat{V}((v+r)/N^{1-\kappa}))\eta_r/N\big|\leq C N^{\alpha+\kappa-1} \]
to conclude that 
		\begin{equation}\label{eq:propJN3}\begin{split}
		&\pm \bigg(N^\kappa \!\!\!\sum_{r\in P_H, v\in P_L} \!\!\! \big[ \widehat{V}(r/N^{1-\kappa})\eta_r/N + \widehat{V}((v+r)/N^{1-\kappa}))\eta_r/N\big] a^*_va_v (1-\cN_+/N)\\
		&\hspace{0.7cm} + N^\kappa \int_{\Lambda}dx\;V(x)w_\ell(x) \sum_{ v\in P_L}a^*_va_v(1-\cN_+/N) + N^\kappa \sum_{ v\in P_L} \widehat{Vw_\ell}(v/N^{1-\kappa}) b^*_vb_v\bigg)\\
		&\hspace{0.4cm}\leq CN^{\alpha+2\kappa-1} (\cN_++1). 
		\end{split}\end{equation}
Now, by Lemma \ref{sceqlemma}, we have that 
		\[ \int_{\Lambda}dx\;V(x)w_\ell(x) = \widehat{V}(0) - 8\pi \mathfrak{a}_0 + \mathcal{O}(N^{\kappa-1}) \]
and, for $ v\in P_L$, we find similarly that
		\[\widehat{Vw_\ell}(v/N^{1-\kappa}) = \widehat{Vw_\ell}(0) + \mathcal{O}(|v|/N^{1-\kappa}) =\widehat{V}(0) - 8\pi \mathfrak{a}_0 + \mathcal{O}(N^{\beta+\kappa-1}).  \] 
As a consequence, we deduce
		\begin{equation}\label{eq:propJN4}\begin{split}
		&\pm \bigg(N^\kappa \!\!\!\sum_{r\in P_H, v\in P_L} \!\!\! \big[ \widehat{V}(r/N^{1-\kappa})\eta_r/N + \widehat{V}((v+r)/N^{1-\kappa}))\eta_r/N\big] a^*_va_v (1-\cN_+/N)\\
		&\hspace{0.7cm} -(8\pi\mathfrak{a}_0-\widehat{V}(0) )  N^\kappa  \sum_{ v\in P_L}a^*_va_v(1-\cN_+/N) + (8\pi\mathfrak{a}_0-\widehat{V}(0) ) N^\kappa \sum_{ v\in P_L}  b^*_vb_v\bigg)\\
		&\hspace{0.4cm}\leq CN^{\alpha+2\kappa-1} (\cN_++1). 
		\end{split}\end{equation}
Finally, we use the operator bounds
		\[ \pm \sum_{\substack{v\in\Lambda_+^*: v\in P_L^c}} a^*_va_v(1-\cN_+/N)\leq C \cN_{\geq N^\beta} \hspace{0.5cm}\text{and}\hspace{0.5cm} \pm \sum_{\substack{v\in\Lambda_+^*: \\ v\in P_L^c\cap P_H^c}} b^*_vb_v(1-\cN_+/N) \leq C\cN_{\geq N^\beta} \]
to conclude that 
		\begin{equation}\label{eq:propJN5}\begin{split}
		&\pm \bigg(N^\kappa \!\!\!\sum_{r\in P_H, v\in P_L} \!\!\! \big[ \widehat{V}(r/N^{1-\kappa})\eta_r/N + \widehat{V}((v+r)/N^{1-\kappa}))\eta_r/N\big] a^*_va_v (1-\cN_+/N)\\
		&\hspace{0.7cm} -(8\pi\mathfrak{a}_0-\widehat{V}(0) )  N^\kappa  \cN_+(1-\cN_+/N) - (8\pi\mathfrak{a}_0-\widehat{V}(0) ) N^\kappa \sum_{ p\in P_H^c}  b^*_pb_p\bigg)\\
		&\hspace{0.4cm}\leq CN^{\alpha+2\kappa-1} (\cN_++1) + CN^{\kappa} \cN_{\geq N^\beta}. 
		\end{split}\end{equation}
Collecting the estimates \eqref{eq:propJN3} to \eqref{eq:propJN5}, we summarize that 
		\begin{equation}\label{eq:propJN6}\begin{split}
		 \cJ_N &=  4\pi \frak{a}_0 N^{1+\kappa}  - 4\pi \frak{a}_0 N^{\kappa} \cN_+^{\hspace{0.03cm}2}/N +\cH_N  \\
		 &\hspace{0.4cm} + 8 \pi \frak{a}_0N^{\kappa}\sum_{p\in P_H^c}  b^*_pb_p + 4\pi \frak{a}_0N^{\kappa} \sum_{p\in P^c_H}  \big[ b^*_p b^*_{-p} + b_p b_{-p} \big]\\
		 &\hspace{0.4cm} + \frac{1}{\sqrt N} \sum_{\substack{ p\in P_H^c, q\in P_L:\\ p+q\neq 0 }} N^{\kappa} \widehat{Vf_\ell}(p/N^{1-\kappa}) (b^*_{q+p} a^*_{-p}a_q+\text{h.c.})  \\
		 &\hspace{0.4cm}+ \cE_{\cJ_N}^{(0)}+ \cE_{\cJ_N}^{(2)} +\cE_{\cJ_N}^{(3)}+\cE_{\cJ_N}^{(4)} + \cE_{\cJ_N}^{(5)},
		 \end{split}\end{equation}
where the self-adjoint operator $ \cE_{\cJ_N}^{(5)}$ satisfies the operator inequalities
		\begin{equation}\label{eq:propJN7}
		\begin{split}
		\pm e^A \cE_{\cJ_N}^{(5)}e^{-A}& \leq CN^{\kappa} \cN_{\geq N^\beta} + CN^{\alpha+2\kappa-1} \cN_+ +CN^\kappa\\
		&\leq  CN^{\kappa-2\beta }\cK + CN^{\alpha+2\kappa-1} \cN_+ +CN^\kappa.
		\end{split}
		\end{equation}
Notice that we applied Lemma \ref{lm:NresgrowA} here. Now, let us consider the cubic term on the r.h.s. of equation \eqref{eq:propJN1}. To this end, let's define the self-adjoint operator $ \cE_{\cJ_N}^{(6)}$ by
		\[\cE_{\cJ_N}^{(6)}  = \frac{1}{\sqrt N} \sum_{\substack{ p\in P_H^c, q\in P_L:\\ p+q\neq 0 }} N^{\kappa}\big ( \widehat{Vf_\ell}(p/N^{1-\kappa}) -8\pi \mathfrak{a}_0\big)(b^*_{q+p} a^*_{-p}a_q+\text{h.c.}).	\]
Since $\sup_{p\in P_H^c}|\widehat{Vf_\ell}(p/N^{1-\kappa}) -8\pi \mathfrak{a}_0 | \leq CN^{\alpha+\kappa-1} $, we conclude with Lemma \ref{lm:cubicrenA} that
		\[\begin{split}
		\pm \Big( e^{A}\cE_{\cJ_N}^{(6)} e^{-A} - \cE_{\cJ_N}^{(6)}\Big)\leq C N^{2\alpha+3\kappa-1} \cN_{\geq \frac12 N^\alpha} + C N^{2\alpha+3\kappa-1} \leq C N^{3\kappa-1}\cK + C N^{2\alpha+3\kappa-1}.
		\end{split}\]
Then, we recall that if $m_0\in\mathbb{R}$ is such that $  m_0\beta  =\alpha$, we know that $ m_0\leq 5$. Hence, using once again that $\sup_{p\in P_H^c}|\widehat{Vf_\ell}(p/N^{1-\kappa}) -8\pi \mathfrak{a}_0 | \leq CN^{\alpha+\kappa-1} $ and controlling $ \cE_{\cJ_N}^{(6)}$ as in \eqref{eq:commKA31} to \eqref{eq:commKA32} (using $ m_0\in [3;5]$ and that $ |ab|\leq \delta |a|^2+\delta^{-1}|b|^2$ with $ \delta = N^{\kappa-\beta/2}$), the previous estimate implies that
		\begin{equation}\label{eq:propJN8}
		\begin{split}
		\pm e^{A}\cE_{\cJ_N}^{(6)} e^{-A} & \leq C (N^{\alpha+ 2\kappa-\beta/2-1}+  N^{3\kappa-1})\cK + CN^{ 2\alpha+\beta/2+2\kappa-1} \\
		&\leq C N^{\mu +\kappa/2- \beta/2}\cK + CN^{2\alpha+\beta/2+2\kappa-1}.
		\end{split}
		\end{equation}
Combining the previous estimates and collecting \eqref{eq:propJN2}, \eqref{eq:propJN7} and \eqref{eq:propJN8}, we find that
		\begin{equation}\label{eq:propJN9}\begin{split}
		 \cJ_N &=  4\pi \frak{a}_0 N^{1+\kappa}  - 4\pi \frak{a}_0 N^{\kappa} \cN_+^{\hspace{0.03cm}2}/N + 8 \pi \frak{a}_0N^{\kappa}\sum_{p\in P_H^c} \Big[  b^*_pb_p +\frac12 b^*_p b^*_{-p} + \frac12 b_p b_{-p} \Big]  \\
		 &\hspace{0.4cm} + \frac{8\pi \mathfrak{a}_0N^\kappa}{\sqrt N} \sum_{\substack{ p\in P_H^c, q\in P_L:\\ p+q\neq 0 }} \big[b^*_{q+p} a^*_{-p}a_q+\text{h.c.}\big]  +\cH_N  + \cE_{\cJ_N} = \cJ_N^{\text{eff}}+\cE_{\cJ_N},
		 \end{split}\end{equation}
where the self-adjoint operator $ \cE_{\cJ_N} = \cE_{\cJ_N}^{(0)}+ \cE_{\cJ_N}^{(2)} +\cE_{\cJ_N}^{(3)}+\cE_{\cJ_N}^{(4)} + \cE_{\cJ_N}^{(5)}+ \cE_{\cJ_N}^{(6)}$ satisfies 
		\[\begin{split}
		\pm e^A\cE_{\cJ_N}e^{-A} & \leq  C(N^{-\beta/2}+ N^{\kappa+\mu} )\cK + CN^{\kappa+\mu}\cV_N  + CN^{\mu}\cN_+ +CN^{\alpha+2\kappa}(1+ N^{\alpha+\beta/2-1}). 
		\end{split}\]
This is precisely the bound \eqref{eq:propJNerrorbnd} and thus finishes the proof of the proposition.
\qed


\end{document}